\definecolor{darkblue}{rgb}{0.2,0.2,0.6}
\titlespacing{\section}{0pt}{*0}{*0}
\titlespacing{\subsection}{0pt}{*0}{*0}
\titlespacing{\subsubsection}{0pt}{*0}{*0}
\newcommand{\bit}{\begin{itemize}}
\newcommand{\eit}{\end{itemize}\par\noindent}
\newcommand{\ben}{\begin{enumerate}}
\newcommand{\een}{\end{enumerate}\par\noindent}
\newcommand{\beq}{\begin{equation}}
\newcommand{\eeq}{\end{equation}\par\noindent}
\newcommand{\beqa}{\begin{eqnarray}}
\newcommand{\eeqa}{\end{eqnarray}\par\noindent}
\newcommand{\beqn}{\begin{eqnarray}}
\newcommand{\eeqn}{\end{eqnarray}\par\noindent}
\newtheorem*{theorem*}{Theorem}
\newtheorem*{lemma*}{Lemma}
\newcommand{\Tr}{\text{Tr}}
\newcommand{\proofend}{\hfill\fbox\\\medskip }
\newtheorem{theorem}{Theorem}
\newtheorem{lemma}{Lemma}
\providecommand{\U}[1]{\protect\rule{.1in}{.1in}}
\newtheorem{proposition}[theorem]{Proposition}
\newcommand{\idn}{\mathbb{I}}
\def\hlinewd#1{%
  \noalign{\ifnum0=`}\fi\hrule \@height #1 \futurelet
   \reserved@a\@xhline}
\renewcommand{\@chapapp}{}%
\begin{document}
\doublespacing
%
\begin{center}
{\Large {\bf Contextuality beyond the Kochen-Specker theorem}}
\vskip 0.70cm
{\bf {\em By}} 
\vskip -0.2cm
{\bf {\large Ravi Kunjwal}}
\vskip 0.0cm
{\bf {\large PHYS10201005002}}
\vskip 0.5cm
{\bf {\large The Institute of Mathematical Sciences, Chennai}}
\vskip 2.6cm
{\bf {\em {\large A thesis submitted to the
\vskip 0.05cm
Board of Studies in Physical Sciences
\vskip 0.05cm
In partial fulfillment of requirements
\vskip 0.05cm
For the Degree of 
}}}
\vskip 0.05cm
{\bf {\large DOCTOR OF PHILOSOPHY}}
\vskip 0.1cm
{\bf {\em of}}
\vskip 0.1cm
{\bf {\large HOMI BHABHA NATIONAL INSTITUTE}}
\vfill
\includegraphics[height=3.5cm, width=3.5cm]{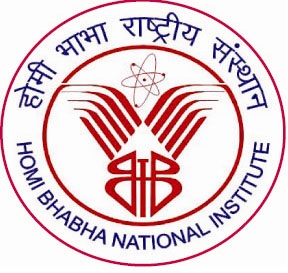}
\vfill
{\bf {\large April, 2016}}
\vfill
\end{center}

\newpage
\cleardoublepage
%
%
\centerline{{\bf{\LARGE Homi Bhabha National Institute}}}
\vskip 0.3cm
\centerline{{\bf {\large Recommendations of the Viva Voce Committee}}}
\vskip 0.3cm
As members of the Viva Voce Committee, we certify that we have read the
dissertation prepared by {\bf Ravi Kunjwal} entitled ``Contextuality 
beyond the Kochen-Specker theorem'' and
recommend that it may be accepted as fulfilling the thesis
requirement for the award of Degree of Doctor of Philosophy.
\vskip 0.5cm
\underline{\hspace{12.0cm}} Date:
\vskip -0.1cm 
Chair - Prof.~Rahul Sinha
\vskip 0.7cm
\underline{\hspace{12.0cm}} Date:
\vskip -0.1cm 
Guide/Convener - Prof.~Sibasish Ghosh
\vskip 0.7cm
\underline{\hspace{12.0cm}} Date:
\vskip -0.1cm 
Examiner - Prof.~Guruprasad Kar
\vskip 0.7cm

\underline{\hspace{12.0cm}} Date:
\vskip -0.1cm 
Member 1 - Prof.~R.~Shankar
\vskip 0.7cm
\underline{\hspace{12.0cm}} Date:
\vskip -0.1cm 
Member 2 - Prof.~Rajesh Ravindran
\vskip 2.0cm
\hspace{0.7cm} Final approval and acceptance of this thesis is
contingent upon the candidate's submission of the final copies of the
thesis to HBNI.
\vskip -0.2cm
\hspace{0.7cm} I hereby certify that I have read this thesis
prepared under my direction and recommend that it may be accepted as
fulfilling the thesis requirement.
\vskip 1.4cm 
{\bf Date:} 
\vskip 0.3cm 
{\bf Place:} IMSc, Chennai \hfill Guide: Prof.~Sibasish Ghosh \hspace{1.0cm}
\newpage
\cleardoublepage
\centerline{{\bf {\large STATEMENT BY AUTHOR}}}
\vskip 1.00cm
This dissertation has been submitted in partial fulfillment of
requirements for an advanced degree at Homi Bhabha National Institute
(HBNI) and is deposited in the Library to be made available to borrowers
under rules of the HBNI.
\vskip 0.6cm
Brief quotations from this dissertation are allowable without special
permission, provided that accurate acknowledgement of source is made.
Requests for permission for extended quotation from or reproduction of
this manuscript in whole or in part may be granted by the Competent
Authority of HBNI when in his or her judgement the proposed use of the
material is in the interests of scholarship. In all other instances,
however, permission must be obtained from the author.

\vskip 1.5cm


$~$\hspace{10.2cm} Ravi Kunjwal
\newpage
\cleardoublepage
~
\vskip 1.2cm
\centerline{{\bf{\large{DECLARATION}}}}
\vskip 1.2cm
I, hereby declare that the investigation presented in the thesis has been
carried out by me. The work is original and has not been submitted
earlier in whole or in part for a degree / diploma at this or any
other Institution / University.
\vskip 2.0cm
%
%
\rightline{Ravi Kunjwal \hspace{0.9cm}}
\newpage
\cleardoublepage
~
\vskip 1.0cm
\centerline{{\bf{\large List of Publications arising from the thesis}}}
\vskip 0.5cm
{\bf Journal:}
\begin{enumerate}
\item {\bf Kunjwal, R.} and Ghosh, S. (2014).
  {\em Minimal state-dependent proof of measurement contextuality for a qubit}. \\
  \href{http://journals.aps.org/pra/abstract/10.1103/PhysRevA.89.042118}{Physical Review A, {\bf 89}, 042118}.
  \href{http://arxiv.org/abs/1305.7009}{arXiv Preprint, 1305.7009}.
\item {\bf Kunjwal, R.}, Heunen, C., and Fritz, T. (2014).
  {\em Quantum realization of arbitrary joint measurability structures}. \\
  \href{http://journals.aps.org/pra/abstract/10.1103/PhysRevA.89.052126}{Physical
  Review A, {\bf 89}, 052126}.
  \href{http://arxiv.org/abs/1311.5948}{arXiv Preprint, 1311.5948}.

\item {\bf Kunjwal, R.} (2015).
{\em Fine's theorem, noncontextuality, and correlations in Specker's scenario}. \\
\href{http://journals.aps.org/pra/abstract/10.1103/PhysRevA.91.022108}
     {Physical Review A, {\bf91}, 022108}.
\href{http://arxiv.org/abs/1410.7760}{arXiv Preprint, 1410.7760}.

\item {\bf Kunjwal, R.} and Spekkens, R.W.
(2015).
{\em From the Kochen-Specker Theorem to Noncontextuality Inequalities without Assuming Determinism}. \\
\href{http://journals.aps.org/prl/abstract/10.1103/PhysRevLett.115.110403}
{Physical Review Letters, {\bf 115}, 110403}.
\href{http://arxiv.org/abs/1506.04150}{arXiv Preprint, 1506.04150}.

\item Mazurek, M.D., Pusey, M.F., {\bf Kunjwal, R.}, Resch, K.J., and Spekkens, R.W. (2015).
{\em An experimental test of noncontextuality without unphysical idealizations}.\\
\href{http://dx.doi.org/10.1038/ncomms11780}{Nat.~Commun.~7, 11780 doi: 10.1038/ncomms11780 (2016).}
\href{http://arxiv.org/abs/1505.06244}{arXiv Preprint, 1505.06244}.
\end{enumerate}

{\bf arXiv:}
\begin{enumerate}
 \item {\bf Kunjwal, R.} (2014).
 {\em A note on the joint measurability of POVMs and its implications for contextuality}.\\
 \href{http://arxiv.org/abs/1403.0470}{arXiv Preprint, 1403.0470}.
\end{enumerate}
\rightline{Ravi Kunjwal \hspace{0.9cm}}

\newpage
\cleardoublepage
\centerline{\bf{\large Contributed Talks \& Seminars}}
\vskip 0.5cm
{\bf Contributed talks at Conferences \& Workshops}
\begin{itemize}
 \item {\em A minimal state-dependent proof of measurement contextuality for a qubit} \\-- Ravi Kunjwal and Sibasish Ghosh.
 
 Contributed (short) talk (Parallel Session B) at the 13th Asian Quantum Information Science Conference \href{https://www.imsc.res.in/~aqis13/detailed_program.html}{(AQIS, 2013)}, IMSc Chennai, on August 27, 2013.
 
 (No conference proceedings were published.)

 \item {\em Fine's theorem, noncontextuality, and correlations in Specker's scenario} \\-- Ravi Kunjwal. 
 
 Contributed talk at \href{http://www.hri.res.in/~youqu15/youqu15/youqu15-prog.html}{Young Quantum 2015}, Harish-Chandra Research Institute (HRI), Allahabad, on February 24, 2015.
 
 (No conference proceedings were published.)

 \item {\em From the Kochen-Specker theorem to noncontextuality inequalities without assuming determinism} \\-- Ravi Kunjwal and Robert W.~Spekkens.
 
 Contributed talk at \href{http://www.cs.ox.ac.uk/qpl2015/schedule.pdf}{Quantum Physics and Logic 2015}, University of Oxford, on July 15, 2015. Video available at \url{https://goo.gl/nAvdW2}.
 
 (Not published in \href{http://eptcs.web.cse.unsw.edu.au/content.cgi?QPL2015}{Proceedings of the 12th International Workshop on Quantum Physics and Logic}, which only included 
 ``long original contributions''. My contributed talk was a ``short contribution'' to be published elsewhere, although a synopsis was included in the \href{http://www.cs.ox.ac.uk/qpl2015/preproceedings/50.pdf}{Preproceedings}.
 It was later published as a Letter in \href{http://journals.aps.org/prl/abstract/10.1103/PhysRevLett.115.110403}{Physical Review Letters, 115, 110403 (2015)}.)
 
 \item {\em Noncontextuality inequalities for Specker's compatibility scenario} \\-- Ravi Kunjwal and Robert W.~Spekkens
 
 Contributed Long Talk at \href{}{Quantum Physics and Logic 2016}, University of Strathclyde (Glasgow, Scotland), on June 8, 2016. Slides available at \url{http://qpl2016.cis.strath.ac.uk/pdfs/6ravi.pdf}.
 Video available at \url{https://goo.gl/WXdNHo}.
\end{itemize}

{\bf Seminars}
\begin{itemize}
 \item {\em Noncontextuality without determinism and admissible (in)compatibility relations: revisiting Specker's parable.}

Quantum Foundations Seminar at the Perimeter Institute for Theoretical Physics, Canada, on January 14, 2014. Video available at \url{http://pirsa.org/14010102}.

\end{itemize}
\vskip 2.0cm
\rightline{Ravi Kunjwal \hspace{0.9cm}}

\newpage
\cleardoublepage
\centerline{{\bf {\large DEDICATIONS}}}
\vskip 0.5cm
\centerline{\emph{To Ma, Papa, and Hina,}}
\centerline{\emph{for the years they have spent wondering what I have been up to all this while.}\footnote{And the years that they still may.}}
\centerline{\emph{This is it.}}

\newpage

\cleardoublepage
~
\vskip 1.0cm
\centerline{{\bf{\large ACKNOWLEDGEMENTS}}}
\vskip 0.5cm
\begin{quotation}
\centering
{\em The pursuit of science is at its best when
it is a part of a way of life.}

-- Alladi Ramakrishnan, January 3, 1962.
\end{quotation}

A lot of people and experiences and ideas go into the making of new ideas. This thesis is no exception and owes much to 
everyone who has -- by choice, chance, or circumstance -- played a role in bringing me towards its completion. 

Sibasish has been a friend and guide through all my PhD years at Matscience and I am grateful to him for taking out 
the time. The intellectual independence that I think I have been able to develop over these years has only been possible because 
as my PhD advisor, he trusted me enough to let me follow my instincts. My instincts led me to Rob, who in matters of my approach
to research has been my `spiritual' advisor. His passion for sorting out foundational issues in quantum theory 
is inspiring, so is his no-nonsense approach to the field. My visits to Perimeter Institute (PI) were possible only because he took a chance inviting an unknown grad student from India. 
This in turn would probably not have happened without Q+ Hangouts run by Matt Leifer and Daniel Burgarth,
where Rob gave a seminar which motivated me to write to him. Matt Leifer has been an invaluable mentor and I was glad to 
have interacted with him in PI. My thanks to all my co-authors who have played a key role in much of the 
work presented in this thesis: Sibasish Ghosh, Rob Spekkens, Tobias Fritz, Chris Heunen, Matt Pusey, Mike Mazurek,
and Kevin Resch. 

In Matscience, Prabha was always a constant source of support, academic and otherwise, particularly in the `tumultous'
starting years of my PhD when I did not know where I was headed in my research. Thank you for being the {\em sooper akka}
you have been, Madras first began to feel like home only because of you and Krishna. Rajarshi provoked me often enough into long
rants on academia in general and quantum foundations in particular and I can't say I haven't enjoyed being heard. Prof.~Simon
helped me ease into academia by having me on a joint project with Rajeev which led to my first published paper. The grilling he 
put one through in seminars or journal club talks did a lot to boost my confidence in seminars I have given elsewhere. 
Prof.~Mukunda -- 57 years my senior from St.~Stephen's, as I later discovered -- commended 
my very first journal club talk, which did much to motivate me as a fledgling graduate student. Such are his eclectic interests that 
conversations with him could range all the way from Heisenberg to Hemingway to Haydn. I am also grateful to Andreas Winter
for the encouragement and the conversations, academic and otherwise, during his visits to Matscience. Thanks are also due to 
Jonathan Barrett and Guruprasad Kar for their critical reading of the thesis and the resulting feedback.

I owe much to the wonderful people in the Matscience administrative office, whether it was Mr.~Vishnu Prasad's easy accessibility on 
matters of concern or Ms.~Indra's invaluable help in getting things in order for academic visits abroad. The general ambience of
goodwill and cooperation from the office made everything very smooth, whether it was the organization of ICA events, Open Day, or 
conferences. The facilities at IMSc, in particular the library and the new sports complex, 
have been fantastic. I spent many quiet hours in the library, poring over the wide collection we have, and I was also glad that
almost every book I suggested was procured by the library quite promptly. I also thank my friends and colleagues in IMSc, everyone from officemates to flatmates, for breaking the monotony now and then. Among others, let me thank
Subhadeep, Dibya, Archana, Arya, Rajeev, Somdeb, Sriluckshmy, KK, Madhusudhan, Prathik, Naveen, Prathamesh, and Madhushree. Let me also thank faculty members 
and visitors I spent many hours chatting with over food or coffee, including Vikram, Jam, Kamal, Simon (Kramer),
Sunder, Chandru, Vani, Murthy, Baskaran, Rahul Siddharthan, Ronojoy, Shankar, Hassan, and Gautam.
Outside of IMSc, I am grateful to many who made my time in Chennai fulfilling in many ways, such as the SSTCN folks I walked with along 
the beach over many nights: Arun, Akhila, John, Maya, Nishanth, Sowjanya, among others. Arun also invited me to visit him in Marudam and the 
few days I spent there were beautiful. I want to thank Sowjanya also for the cakes, kaleidoscopes, and walks in the Andamans.
Annual Chennai visits from `expats' like hummingbird-chaser Anusha and jazz aficionado Nitya were refreshing, and so were the 
endless hours spent talking physics and music with Vilasini or trekking under moonlight and loitering around Croc Bank
with her and Athira. For the many quiet hours I spent listening to its song, often calming down my chaotic mind,
let me also thank the sea that made Madras so special to me.

Much is also owed to the folks at PI, where I spent a few happy months in the fall semesters, gorging on the food in the Bistro as well as
the physics in the building. Thanks to Rob, Lucien, Tobias, Matt Pusey, Matt Leifer, Josh, Katja, Anirudh, Isaac, Huangjun, and Elie for 
the many interesting conversations over the lunch table; to Heidar, Yangang, Yasaman, Anton, Mansour, Marco, Scarlett, Gabriel, Natacha, Pablo,
Markus, Farbod, Dalimil, Damian, Vasudev, Ryszard, Jacob, and Richard, for the times sunny and snowy in Canada. Matt Pusey, in particular, 
has been great help in filtering out a lot of my misguided ideas through his probing questions and sharp interjections to 
careless claims I sometimes made. Tobias has been the go-to guy for any confusions mathematical and has helped me see a lot of things 
with more clarity. Special thanks to Chris Heunen who supported my
visit to Oxford during QPL 2015 as well as my visit to Edinburgh before QPL 2016. I do hope we get back to the projects we planned, now that my excuse of 
being busy writing a thesis is no longer valid.

The groundwork for my PhD was laid during my undergraduate years at St.~Stephen's College and I would like to acknowledge
the contribution of Bikram Phookun, Abhinav Gupta, and Vikram Vyas in motivating me to take up physics after college.
AG's quantum lectures, in particular, were my first introduction to the subject and I am glad that worked out well, because 
that first introduction -- if done badly -- could have turned me away from quantum matters (and possibly towards more 
astrophysical ones). Mathew Syriac, a year senior to me,
also played an important role in turning me towards matters of quantum foundations from a quantum information perspective,
although he now does cosmology!
Thanks to Prof.~Hari Dass who supported my visit to IISc for a summer project while I was still an undergraduate student. 
Let me also thank the College friends I did physics with (or not), and later met in places as far apart as Delhi, Cambridge,
Chennai, New York, and Waterloo during my PhD: Rahul, Raghav, Harshant, Rajita, Priyam, Ila, Rachel, Varun, Bahul, Aotula, 
Aruna, Philip. Doing physics in college, as opposed to engineering, was a decision I took because of JG, whether he intended 
it or not.
JG kindled in me a desire to follow in the footsteps of Galileo, his hero, so let me also thank him for motivating a
17-year-old me to take up physics. My school teachers such as Mr.~Raturi and Mr.~Kumra played a role in shaping my interests in physics 
and mathematics, and I am grateful to them for that.

Finally, my family has stood by me through all these years despite the ``unconventional'' career choice I made, becoming
a first-generation physicist. They had their apprehensions, of course, but I suppose I was just too stubborn to listen to their
warnings of how tough a PhD can be, how a lot of people start and don't finish and waste away years,
and how it doesn't pay much. 
This thesis is also a testimony to my mother's hardwork who brought up my sister and me with 
a tough love that is only so familiar to many Indian children. I am grateful to my father for his unstinting support 
in everything I have done, whether it was changing my mind to do physics instead of engineering after school or going on to do
a PhD after college instead of targeting the IAS. He trusted my judgment on all counts. Let me also thank my sister, who has a `real job'
but has refrained from dismissing my occupation, for her love and support.

If I have missed anyone, my apologies, but know that my gratitude is not chained to the words on these pages. Thank you.
\vskip 2.0cm
\rightline{Ravi Kunjwal \hspace{0.9cm}}

\newpage
\cleardoublepage
\centerline{\bf{\large List of changes suggested by the Thesis and Viva Voce Examiners}}
\vskip 0.5cm
\begin{enumerate}

 \item Chapter 1: On page 14, footnote 4 has been added to discuss the case of ontological models where the ontic state space is 
 not finite, noting Hardy's excess baggage theorem \cite{hardyexcess} and explaining why, for our purposes in
 this thesis, there is no loss of generality in presuming a finite ontic state space.

 \item Chapter 1: On page 23, the phrase `outcome determinism for projective (sharp) measurements in quantum theory' has been 
 replaced by `outcome determinism for projective (sharp) measurements in ontological models of quantum theory'
 to add clarity. Similarly, on page 28, the phrase `to prove ODSM in quantum theory' has been replaced by 
 `to prove ODSM in ontological models of quantum theory'.

 \item Chapter 2: On page 49, first paragraph, `Spekkens generalized notion of noncontextuality' replaced by 
 `Spekkens' generalized notion of noncontextuality'.

 \item Chapter 3: On page 70, second paragraph, the definition of a POVM is amended by correcting the 
 erroneously typed condition $\sum_{X_i\in\mathcal{F}_i}M_i(X_i)=I$.
 
 \item Updated the `List of Publications arising from the thesis', citing the journal version of:\\\\
Mazurek, M. D. et al. {\em An experimental test of noncontextuality without unphysical idealizations.}
\href{http://dx.doi.org/10.1038/ncomms11780}{Nat.~Commun.~7:11780 doi: 10.1038/ncomms11780 (2016).}
Same update to Ref.~[19] in the Synopsis and Ref.~\cite{exptl} in the Thesis.
 \item Updated the list of `Contributed Talks \& Seminars' with the recent contributed talk, `Noncontextuality
 inequalities for Specker's compatibility scenario', at QPL 2016.
\end{enumerate}
\vskip 0.3cm
\hspace{1.0cm}{\bf Date:} 05/08/2016 \hfill Candidate: Ravi Kunjwal \hspace{1.0cm}
\vskip 0.4cm 
\hspace{1.0cm}{\bf Place:} IMSc, Chennai \hfill Guide: Prof.~Sibasish Ghosh \hspace{0.5cm}
\newpage
\frontmatter
\pagestyle{plain}
\tableofcontents
\listoffigures
\listoftables

\mainmatter

\addcontentsline{toc}{chapter}{Synopsis}
\newpage
\chapter*{Synopsis}
The Kochen-Specker (KS) theorem \cite{SKS67} is a fundamental result in the foundations of quantum theory showing that it is impossible 
to accommodate the predictions of quantum theory within a framework in which outcomes of measurements are pre-determined
in a noncontextual manner. Failure of such a noncontextual model in accommodating quantum theory is often called \emph{contextuality}
in quantum information and quantum foundations. Along with Bell's theorem \cite{SBell64,SBell66,SBell76}, the Kochen-Specker theorem 
is one of the two major no-go theorems in quantum foundations. While Bell's theorem has proven to have wide-ranging implications for 
quantum information \cite{SBellreview}, the KS theorem has remained largely of foundational interest owing to implicit idealizations that make its
experimental testability a matter of controversy \cite{SMKC}. Recent work, though, has provided strong evidence that 
contextuality drives quantum-over-classical advantages in information processing and computation \cite{Smagic}.
This makes it all the more important to address problems with the experimental testability of the KS theorem.

Since we consider contextuality \emph{beyond} the KS theorem in this thesis, we will refer to the notion of noncontextuality due to Kochen
and Specker as \emph{KS-noncontextuality} and its failure demonstrated by the Kochen-Specker theorem as \emph{KS-contextuality}.
We adopt a generalized notion of (non)contextuality due to Spekkens \cite{SgenNC}, motivating noncontextuality as 
an expression of the Leibnizian idea of the \emph{identity of indiscernibles} \cite{SLeibniz} applicable to any operational theory. 
This notion of noncontextuality removes the unmotivated assumption of \emph{outcome determinism}
in the Kochen-Specker theorem, namely the assumption that the outcomes of measurements are fixed deterministically (and noncontextually)
for a physical system before a measurement is carried out and it is this value that the measurement reveals. Only the probability
of the measurement outcome is assumed to be fixed noncontextually for a physical system in the Spekkens framework.
This thesis thus considers contextuality beyond the Kochen-Specker theorem in two ways:
\begin{enumerate}
 
 \item The Kochen-Specker framework is applicable only to sharp (or projective) measurements in quantum theory. We
 consider questions of contextuality for unsharp (or nonprojective) measurements in quantum theory, as these are 
 the ones that are typically implemented in practice in any experiment because of inevitable noise in the implementation.
 This goes beyond the Kochen-Specker theorem in the sense of allowing nonprojective measurements, albeit still assuming 
 that the operational theory of interest is quantum theory. As we will show, these nonprojective (unsharp) measurements in quantum theory 
 exhibit (in)compatibility relations that are impossible for projective measurements, allowing for considerations of contextuality in
 scenarios not envisaged by the KS theorem.
  
 \item We then show how to extend the applicability of contextuality from quantum theory (for which Kochen-Specker theorem holds) to 
more general operational theories called generalized probabilistic theories (GPTs). This allows a treatment of contextuality that
does not presume a quantum model of the experiment and lays the groundwork necessary for applications of contextuality to 
device-independent quantum information processing. From a foundational viewpoint, this strengthens the Kochen-Specker theorem
by turning it into experimentally robust incarnations. This also allows tests of contextuality outside the ambit of the experimental
scenarios envisaged by the KS theorem.
\end{enumerate}

The following conclusions can be made on the basis of work presented in this thesis:
\begin{enumerate}
 \item Specker's scenario - the simplest one capable of admitting contextuality with respect to joint measurement contexts - allows a proof of contextuality \`a la Spekkens on a qubit with nonprojective
 measurements \cite{SKG}.
 \item Quantum theory allows arbitrary joint measurability structures when considering the most general quantum measurements, as opposed to the restricted possibilities offered by projective
 measurements \cite{SKHF}.
 \item Fine's theorem does not absolve one of the need to justify outcome determinism in noncontextual ontological models of quantum theory \cite{SFine}.
 \item It is possible to rule out noncontextuality for arbitrary operational theories rather than quantum theory alone, following our 
 operationalization of the KS theorem. We provide criteria for the same \cite{SKunjSpek,Sexptl}. 
 \item Specker's scenario also admits theory-independent criteria for deciding contextuality and leads to a generalization of such criteria to all $n$-cycle scenarios.
\end{enumerate}

A chapter-wise summary of the thesis follows:\\

{\bf Chapter 1} of the thesis is an introduction to concepts that will be used throughout the rest of the thesis.
We introduce the framework of operational theories and ontological models, followed by the definition of noncontextuality
due to Spekkens that will be used in this thesis. We then review the Kochen-Specker theorem and Bell's theorem and discuss the gap between
these two theorems from a foundational perspective as well as the perspective of applications in quantum information.

{\bf Chapter 2} takes a first look at a problem motivated by Ernst Specker's parable of the overprotective seer \cite{SSpe60,SLSW}. 
The question that we seek to answer is 
whether it is possible to exhibit contextuality for three quantum observables which are pairwise jointly measurable but not triplewise so.
This is the simplest admissible scenario that can exhibit contextuality of the KS-type, where the context is a joint measurement context.
Since it is impossible to realize the pairwise-but-not-triplewise joint measurability for three sharp (or projective) measurements, we are 
forced to consider unsharp measurements (POVMs or positive operator-valued measures) for this scenario. 
In their modern rendition of Specker's parable \cite{SLSW}, where Liang, Spekkens, and Wiseman pose this question, they conjectured 
that witnessing contextuality for this scenario would not be possible even if POVMs are considered. 
We take up this conjecture as a challenge and settle the question of witnessing contextuality in the affirmative, providing explicit constructions of the POVMs that 
achieve this. This is the first step in this thesis where we go beyond the KS theorem by considering contextuality for POVMs without assuming outcome determinism
for them. This chapter is based on work done with Sibasish Ghosh \cite{SKG}.

{\bf Chapter 3} examines the relationship between joint measurability of general quantum measurements and its implications for demonstrating contextuality with respect to 
joint measurement contexts. In particular,
a subtle issue regarding the type of joint measurability required in Specker's scenario is discussed and clarified in this chapter, paving the way for the results of Chapter 7.
This chapter is based on a note that has appeared on arXiv \cite{SRK}.

{\bf Chapter 4} considers a question that is raised in Chapter 2 regarding the admissibility of ``funny'' joint measurability relations in quantum theory -- those that are 
not achievable with projective measurements alone. Since the pairwise-but-not-triplewise joint measurability relation is admissible for POVMs in quantum theory,
it is natural to ask whether POVMs can also realize other, more complicated, joint measurability relations for more than three observables. We show that 
POVMs can, in fact, realize {\em any} joint measurability relation at all, providing a constructive proof of the same. This establishes the richness of joint measurability 
relations admissible in quantum theory and opens the door to asking whether these can be exploited for some information-theoretic tasks where POVMs have an edge over projective 
measurements. This chapter is based on work done in collaboration with Chris Heunen and Tobias Fritz \cite{SKHF}.

{\bf Chapter 5} engages with the results of Chapter 2 from the perspective of Fine's theorem \cite{SFine}, adapted to the case of KS-noncontextual models. We show that Fine's theorem
does not absolve one of the need to justify outcome determinism in considerations of noncontextuality. In particular, relaxing outcome determinism does not restrict the 
outcome indeterministic models to just the factorizable ones, unlike the case of Bell's theorem. This leads us to conclude that the problem of ruling out noncontextuality cannot be 
reduced to a marginal problem, unlike the problem of ruling out local causality. We seek to highlight this fundamental gap between local causality and noncontextuality in this chapter,
contrary to claims in the literature that seek to unify the mathematical treatment of the two hypotheses via a reduction to the marginal problem. This chapter is based on 
work published in Ref.~\cite{Sfinegen}.

{\bf Chapter 6} revisits the Kochen-Specker theorem and casts it in strictly operational terms (without requiring the validity of quantum theory), taking a further 
step beyond the Kochen-Specker theorem than Chapter 2 (which presumed the validity of quantum theory). We obtain a robust noncontextuality inequality that can be experimentally
tested to rule out noncontextual models of experiments. Our operational approach thus resolves the difficulty of experimentally testing contextuality by going beyond 
the Kochen-Specker paradigm. Indeed, the Kochen-Specker paradigm is recovered in an idealized limit of noiselessness in the experiment, one which is not achievable in practice.
We also outline an experimental test of noncontextuality that considers a simpler scenario than the one we envisage in this reformulation of the Kochen-Specker theorem.
The theoretical techniques involved in the realization of this experimental test will find use in any other experimental test of noncontextuality and we briefly mention this approach.
This chapter is based on two collaborations, one with Rob Spekkens \cite{SKunjSpek} and another with Mike Mazurek, Matt Pusey, Kevin Resch, and Rob Spekkens \cite{Sexptl}.

{\bf Chapter 7} returns to Specker's parable and obtains a robust generalization of the LSW inequality that was shown to be violated in Chapter 2. The key difference is that we relax the 
assumption that quantum theory correctly models the experiment in deriving our noncontextuality inequalities. This leads in a natural way to $n$-cycle noncontextuality inequalities 
that are robust to noise and generalize the known KS inequalities for these contextuality scenarios. While Chapter 6 can be seen as an operationalized version of the 
state-independent proofs of contextuality based on KS-uncolorability (such as the original one in Refs.~\cite{SKS67} and \cite{SCab1}), Chapter 7 provides an approach to operationalizing state-dependent proofs of contextuality
(such as the ones in Chapter 2 and in Ref.~\cite{SKCBS}). This chapter is based on unpublished joint work with Rob Spekkens, an earlier version of which can be found in a PIRSA seminar \cite{SRKtalk}.

{\bf Chapter 8} concludes with a discussion of open questions and problems with some existing claims in the literature on contextuality. We also indicate 
possible directions for future research.


\clearpage

\chapter{Introduction}
\setlength{\epigraphwidth}{\textwidth}
\epigraph{\it Because this position seems to arouse fierce controversy, let me stress our motivation: if quantum theory were 
not successful pragmatically, we would have no interest in its interpretation. It is precisely because of the enormous success 
of the QM mathematical formalism that it becomes crucially important to learn what that mathematics means. To find a rational 
physical interpretation of the QM formalism ought to be considered the top priority research problem of theoretical physics;
until this is accomplished, all other theoretical results can only be provisional and temporary[...] But our present QM formalism
is not purely epistemological; it is a peculiar mixture describing
in part realities of Nature, in part incomplete human information about Nature all scrambled up
by Heisenberg and Bohr into an omelette that nobody has seen how to unscramble. Yet we think
that the unscrambling is a prerequisite for any further advance in basic physical theory. For, if we
cannot separate the subjective and objective aspects of the formalism, we cannot know what we
are talking about; it is just that simple.
}{E.T. Jaynes, Probability in Quantum Theory (1996).}

Although this thesis doesn't unscramble the Jaynesian omelette, it does contribute to the project
by providing quantitative criteria for ruling out ``realities of Nature'' that are defined as ``noncontextual'' in a framework 
(dubbed the ``ontological models framework'') where probabilities represent ``incomplete human information'' just as they do, for example, 
in classical statistical mechanics. 
As Jaynes points out, even as fundamental a distinction as the 
one between \emph{reality} and one's \emph{knowledge of reality} is difficult to make without tying oneself up in knots over
what those two things correspond to in the quantum formalism. One could, of course, question whether this distinction is really fundamental,
but that already entails going beyond what we understand about probabilities without even talking about quantum theory.
Questioning the distinction between ``reality'' and our ``knowledge of reality''
is a project that is outside the scope of this thesis. I will take this distinction for granted in motivating the ideas here.\footnote{
Note to the philosophically inclined: I have, of course,
resisted the urge to define what one means by ``reality'' so far, philosophically speaking. For the purpose of this thesis, however, such a general definition is 
not required, as we will restrict our attention to ``reality'', or \emph{ontic states}, as defined in the ontological models
framework (which we will come to shortly). It is perhaps not the only way one could conceptualize reality and, indeed, I believe that there should be better
ways of doing it given the ``unnatural'' constraints (requiring a fine-tuning or conspiracy on Nature's part) various no-go theorems, particularly those of Bell and Kochen-Specker, 
put on the ontological models framework.}

Shorn of all interpretational baggage that various 
physicists may carry (and disagree about), the minimal facts that everyone 
agrees on about quantum theory are those concerning its operational predictions. These facts constitute 
\emph{operational quantum theory}, understood as a manual for how the three basic experimental procedures or \emph{operations} -- preparations,
transformations, and measurements -- on a system are to be performed in the laboratory and the probabilities with which various 
measurement outcomes may occur, specified by the Born rule. 

In order to make statements about ``reality'' and one's ``knowledge of reality'', we will make these ideas precise in the 
ontological models framework \cite{harriganspekkens}. We will then ask of the posited ``reality'' a particularly natural property:
noncontextuality. Roughly, noncontextuality is 
the idea that one should not posit distinctions in reality that can never make any observable difference to our experience of it: all such distinctions 
are superfluous, playing no explanatory role, and should therefore be eliminated.
Using the ontological models framework, we will then work out the constraints -- \emph{noncontextuality inequalities} -- that noncontextuality places 
on the observable statistics in an operational theory. We will find that it is impossible to make sense of certain 
quantum statistics in the framework of a noncontextual ontological model. Experimental criteria for ruling out noncontextual 
explanations of experimental data -- independent of any reference to quantum theory -- will then be explicated. 

We will see that maintaining noncontextuality in the ontological models framework while doing justice to experimental data is impossible, particularly if the data is in good agreement with
quantum predictions. This leaves us with two options:
either consider contextual ontological models as serious candidates for a viable foundation, or \footnote{And this is the option I am inclined towards.}
if one considers something akin to noncontextuality to be an essential feature of any putative foundation for quantum theory, 
one is led to reject the viability of the ontological models framework for this purpose. In the latter case, one is confronted with the challenge of providing an alternative 
to the ontological models framework where noncontextuality - appropriately formalized - is not in conflict with quantum theory.

Notwithstanding its unsuitability for the purpose of providing a foundation for quantum theory -- particularly if one insists on noncontextuality -- a purpose that the ontological models framework \emph{does} serve is to allow us to 
cleanly identify ways in which quantum theory may be deemed \emph{nonclassical}\footnote{By ``nonclassicality'', we roughly mean features not admissible in a pre-quantum 
or ``classical'' theory of physics and which may therefore be responsible for the advantages that quantum theory permits in quantum information 
and computation. One hopes to distill the essence of quantum theory by formalizing notions of nonclassicality and investigating
their role in quantum information applications.} and ways of testing 
this nonclassicality experimentally. These notions of nonclassicality (e.g. Bell nonlocality\cite{Bellreview}) play an essential role in many modern applications to quantum information theory.
Contextuality, in particular, is a strong form of such nonclassicality \cite{robsclassification} 
and there is mounting evidence of the crucial role it plays in quantum information and computation \cite{magic, magic2}.

The following sections in this chapter provide definitions of the concepts needed to carry the analysis forward in the rest of the thesis.

\section{Operational theories and Ontological models}

In this section, we recall the framework of operational theories and ontological models \cite{harriganspekkens,genNC} that will be essential to our discussion of 
noncontextuality.

{\bf Operational theory} --- An operational theory is specified by $(\mathcal{P},\mathcal{M},p)$, where $\mathcal{P}$ is the set of preparation procedures,
$\mathcal{M}$ is the set of measurement procedures, and $p(k|M,P)\in[0,1]$ denotes the probability that outcome $k\in\mathcal{K}_M$ occurs
on implementing measurement procedure $M\in\mathcal{M}$ following a preparation procedure $P\in\mathcal{P}$ on a system.

{\bf Ontological model} --- An ontological model $(\Lambda,\mu,\xi)$ of an operational theory $(\mathcal{P},\mathcal{M},p)$ posits an ontic state space $\Lambda$
such that a preparation procedure $P$ samples the ontic states $\lambda\in\Lambda$ according to a distribution over $\Lambda$, $\mu(\lambda|P)\in [0,1]$ ($\lambda\in\Lambda$) where 
$\sum_{\lambda\in\Lambda}\mu(\lambda|P)=1$, and the probability of occurrence of a measurement outcome $[k|M]$ ($M\in\mathcal{M}$ and $k\in\mathcal{K}_M$)
for any $\lambda\in\Lambda$ is specified by the \emph{response function}
$\xi(k|M,\lambda)\in[0,1]$, where $\sum_{k\in\mathcal{K}_M}\xi(k|M,\lambda)=1$.\footnote{For simplicity, 
we have pretended that the set of 
ontic states, $\Lambda$, is finite. However, to be completely general,
$\Lambda$ can be any measurable space with a $\sigma$-algebra $\Sigma$ and the ontological model then specifies a
probability measure $\mu$ over $\Lambda$, $\mu:\Sigma\rightarrow[0,1]$, a $\sigma$-additive
function such that $\mu(\Lambda)=1$. In this case, all summations over $\Lambda$ would become integrals. Indeed, if the operational theory is quantum theory, then a physically tenable ontological
model {\em necessarily} requires an ontic state space $\Lambda$ with infinitely many ontic states (cf.~Hardy's excess baggage theorem \cite{hardyexcess}).
A fully rigorous measure-theoretic approach to ontological models in the context of the
$\psi$-ontic \slash  $\psi$-epistemic debate is, for example, taken in Ref.~\cite{mattnumopus}. The results concerning 
contextuality in this thesis would survive any such generalization: this is 
because the operational predictions that will be of interest to us concern prepare-and-measure experiments with finite sets of 
preparations and finite sets of 
measurements having finite sets of outcomes. Under the assumption of measurement noncontextuality, 
it is always possible to imagine their measurement statistics
as arising from preparations that sample from a discrete set of ontic states, where each ontic state is simply an 
extremal assignment of probabilities to the various measurement outcomes. 
The total number of such extremal assignments is finite, hence a finite set of ontic
states would suffice to reproduce the operational predictions of interest. 
Even if one is working with a continuous measurable space $\Lambda$, 
the extremal assignments of probabilities to measurement outcomes can always be thought of as partitioning the 
space $\Lambda$ into a finite number of non-overlapping regions (except possibly measure zero overlaps). Each region in 
such a partition can be thought of as an ontic state in a new coarse-grained ontological model 
with a discrete ontic state space $\Lambda_{\rm discrete}$. The preparations can then
be presumed to sample from $\Lambda_{\rm discrete}$. See, for example, our operationalization of the KS theorem in 
Chapter 6 where 146 ontic states (cf.~page 142) suffice. The operational contradiction with noncontextuality arises only when
preparation noncontextuality, in addition to measurement noncontextuality, is assumed.
Therefore, in contrast to cases of interest to us: 1) Hardy's excess baggage theorem \cite{hardyexcess} requires an infinite number of preparations and measurements
to show that a $\Lambda$ that is countable and finite won't suffice,
and 2) Leifer's review \cite{mattnumopus} needs the most general $\Lambda$ to be able to accommodate ontological models of quantum theory
in which the wavefunction itself is regarded as an ontic state.}

The following condition of empirical adequacy prescribes how the operational theory and its ontological
model fit together:
\begin{equation}
 p(k|M,P)=\sum_{\lambda\in\Lambda}\xi(k|M,\lambda)\mu(\lambda|P).
\end{equation}
That is, the probability of a certain measurement outcome given an ontic state is averaged with respect to the distribution over ontic states sampled by the preparation procedure $P$
to yield the operational probabilities predicted by the operational theory. In other words, the causal account of a prepare-and-measure experiment is the following:
\begin{enumerate}
 \item Preparation procedure $P$ outputs a physical system in ontic state $\lambda$ with probability $\mu(\lambda|P)$.
 \item The physical system in ontic state $\lambda$ is then input to the measurement procedure $M$ which yields outcome $k$ with probability $\xi(k|M,\lambda)$. 
\end{enumerate}
Hence, $\lambda$ causally mediates between the preparation and measurement procedures. Coarse graining over it yields the operational statistics seen in an experiment.
Note that we have said nothing about the experimental accessibility of $\lambda$, i.e., $\lambda$ is not necessarily a ``hidden variable'', its status depends on the particular
ontological model that specifies $\Lambda$.

{\bf Operational equivalence:}\\
\emph{Preparation procedures} --- Two preparation procedures, $P$ and $P'$, are said to be operationally equivalent (denoted $P\simeq P'$)
if no subsequent measurement procedure $M\in\mathcal{M}$ (with outcome set $\mathcal{K}_M$) 
yields different statistics for them, i.e., 
\begin{equation}\label{prepopequiv}
 \forall [k|M]\quad(M\in\mathcal{M}, k\in\mathcal{K}_M): p(k|M,P)=p(k|M,P').
\end{equation}
That is, nothing in the predictions of the operational theory distinguishes the two preparation procedures and we say that they belong the same operational equivalence class 
of preparation procedures. We call such an equivalence class of preparation procedures a \emph{preparation}. 

Any parameters that can distinguish between preparation procedures 
in a given operational equivalence class (that is, procedures corresponding to the same preparation) constitute the \emph{preparation context}. As we will see, \emph{preparation noncontextuality}
then indicates the idea that the ontological representation of a preparation procedure should depend only on the operational equivalence class to which it belongs. In particular, this 
representation should be independent of any \emph{preparation context}.

{\it Tomographically complete sets of measurements} --- We assume that the operational theory specifies tomographically complete sets of measurements that can be used to identify any preparation, i.e.,
the statistics of measurements in a tomographically complete set, $\mathcal{M}_{\rm tomo}\subseteq\mathcal{M}$, on a preparation is enough to infer the statistics of any other measurement in $\mathcal{M}$
on that preparation. The minimum cardinality of a tomographically complete set of measurements is also specified by the operational theory. We will assume this cardinality to be finite
for a meaningful analysis of contextuality later on.

This property of tomographic completeness of a finite set of measurements allows one to verify the operational equivalence of preparation procedures by doing only measurements in $\mathcal{M}_{\rm tomo}$.
Operational equivalence of preparation procedures relative to $\mathcal{M}_{\rm tomo}$ implies their operational equivalence relative to the full set of measurements $\mathcal{M}$ in the operational theory.
That is, we can restate the operational equivalence $P\simeq P'$ as:
\begin{equation}
 \forall [k|M]\quad(M\in\mathcal{M}_{\rm tomo}, k\in\mathcal{K}_M): p(k|M,P)=p(k|M,P'),
\end{equation}
which implies Eq.(\ref{prepopequiv}).

Without this property of tomographic completeness, verification of such operational equivalence requires all (potentially infinite) possible measurements in $\mathcal{M}$ 
specified by the operational theory, rendering any experimental test of noncontextuality infeasible for reasons that will become clear when we define noncontextuality.

{\it Measurement procedures} --- Two measurement events, $[k|M]$ and $[k'|M']$ (where $M,M'\in\mathcal{M}$, $k\in\mathcal{K}_M$, $k'\in\mathcal{K}_{M'}$), are said to be operationally equivalent (denoted $[k|M]\simeq [k'|M']$)
if no preceding preparation procedure yields different statistics for them, i.e., 
\begin{equation}\label{effectopequiv}
\forall P\in\mathcal{P}: p(k|M,P)=p(k'|M',P).
\end{equation}
That is, nothing in the predictions of the operational theory distinguishes the two measurement events and we say that they belong to the same equivalence class of 
measurement events. We call such an equivalence class of measurement events an \emph{effect}. If $[k|M]\simeq[k'|M']$ for all measurement events in the two measurement procedures $M$ and $M'$,
we say that the two measurement procedures belong to the same equivalence class of measurement procedures ($M\simeq M'$). We call such an equivalence class of measurement procedures a \emph{measurement}.

Any parameters that can distinguish between measurement procedures (or measurement events) in a given operational equivalence class constitute the \emph{measurement context}. 
As we will see, \emph{measurement noncontextuality} then indicates the idea that the ontological representation of a measurement procedure should depend only on the operational equivalence class
to which it belongs. In particular, this representation should be independent of any \emph{measurement context}.

{\it Tomographically complete sets of preparations} --- We assume that the operational theory specifies tomographically complete sets of preparations that can be used to identify any measurement, i.e.,
the statistics of this measurement for any other preparation can be inferred from its statistics for this tomographically complete set of preparations, $\mathcal{P}_{\rm tomo}\subseteq\mathcal{P}$. 
The minimum cardinality of a tomographically complete set of preparations is also specified by the operational theory. We will assume this cardinality to be finite.

This property of tomographic completeness of a finite set of preparations allows one to verify the operational equivalence of measurement procedures by using only preparations in $\mathcal{P}_{\rm tomo}$.
Operational equivalence of measurement procedures relative to $\mathcal{P}_{\rm tomo}$ implies their operational equivalence relative to the full set of preparations $\mathcal{P}$ in the operational theory.
That is, we can restate the operational equivalence $[k|M]\simeq[k'|M']$ as
\begin{equation}
\forall P\in\mathcal{P}_{\rm tomo}: p(k|M,P)=p(k'|M',P),
\end{equation}
which implies Eq.(\ref{effectopequiv}).

Without this property of tomographic completeness, verification of such operational equivalence requires all (potentially infinite) possible preparations in $\mathcal{P}$ 
specified by the operational theory, 
again rendering any experimental test of noncontextuality infeasible for reasons that will become clear when we define noncontextuality.

\subsection{Operational quantum theory}

Operational quantum theory specifies $(\mathcal{P},\mathcal{M},p)$ as follows: $\mathcal{P}$ corresponds to the set of positive semidefinite density operators with unit 
trace, 
\begin{equation}
\mathcal{P}\equiv\left\{\rho_P\in\mathcal{B}(\mathcal{H}) \big| \rho_P\geq0, \Tr\rho_P=1\right\}, 
\end{equation}
where $\mathcal{B}(\mathcal{H})$ is the set of bounded linear operators on a complex separable Hilbert space $\mathcal{H}$.
$\mathcal{M}$ corresponds to the set of positive operator-valued measures (POVMs), each $M\in\mathcal{M}$ given by 
\begin{equation}
M\equiv\left\{E_{k|M}\in\mathcal{B}(\mathcal{H}) \Bigg| k\in\mathcal{K}_M, E_{k|M}\geq 0, \sum_kE_{k|M}=I\right\},
\end{equation}
where $I$ is the identity operator on the Hilbert space $\mathcal{H}$. 
Finally, we have  
\begin{equation}
 p(k|M,P)=\Tr(E_{k|M}\rho_P),
\end{equation}
the Born rule which associates outcome probabilities to the effects $E_{k|M}$ in a POVM $M\in\mathcal{M}$, given the density operator $\rho_P$.

{\bf Qubit example:} The simplest quantum system is defined on a two-dimensional Hilbert space, $\mathcal{H}\cong\mathbb{C}^2$, with quantum states and effects specified as
\begin{equation}
 \rho=\frac{1}{2}(I+\vec{\sigma}.\vec{n}),
\end{equation}
and
\begin{equation}
 E=\alpha I+\vec{\sigma}.\vec{a},
\end{equation}
respectively, where $\vec{\sigma}\equiv(\sigma_x,\sigma_y,\sigma_z)$ is the vector of qubit Pauli matrices $\sigma_x\equiv\left(\begin{smallmatrix} 0 & 1\\ 1 & 0\end{smallmatrix}\right)$,
$\sigma_y\equiv\left(\begin{smallmatrix} 0 & -i\\ i & 0\end{smallmatrix}\right)$, and $\sigma_z\equiv\left(\begin{smallmatrix} 1 & 0\\ 0 & -1\end{smallmatrix}\right)$, $\vec{n}\equiv(n_x,n_y,n_z)$
is a vector with $|\vec{n}|\equiv\sqrt{n^2_x+n^2_y+n^2_z}\leq 1$, $\alpha$ is a real number and $\vec{a}=(a_x,a_y,a_z)$ is a vector with real entries. 

\emph{Tomographically complete set of measurements} --- A tomographically complete set of measurements on a qubit has cardinality $3$, as is clear from the three parameters $n_x=\Tr (\sigma_x \rho)$,
$n_y=\Tr (\sigma_y \rho)$, and $n_z=\Tr (\sigma_z \rho)$ that fix $\rho$. Once $\rho$ is inferred from a tomographically complete set of measurements such as $\{\sigma_x,\sigma_y,\sigma_z\}$,
the outcome probabilities of any other measurement on $\rho$ can also be inferred.

\emph{Tomographically complete set of preparations} --- A tomographically complete set of preparations on a qubit has cardinality $4$, as is clear from the four parameters that fix 
qubit effect $E$: $\alpha=\Tr (\frac{I}{2} E)$,
$a_x=\Tr (\rho_x E) -\alpha$, $a_y=\Tr (\rho_y E) -\alpha$, $a_z=\Tr (\rho_z E) -\alpha$, where $\rho_x\equiv\frac{1}{2}(I+\sigma_x)$, $\rho_y\equiv\frac{1}{2}(I+\sigma_y)$, $\rho_z\equiv\frac{1}{2}(I+\sigma_z)$.
Once $E$ is inferred from a tomographically complete set of preparations such as $\{I/2,\rho_x,\rho_y,\rho_z\}$, its outcome probabilities for any other preparation on the qubit can also be inferred.

\section{Noncontextuality: an instance of Leibniz's identity of indiscernibles}

We now have the necessary tools to define noncontextuality \`a la Spekkens\cite{genNC}. But before we go into the definition itself, let us motivate the methodological principle that underlies the definition. Following Spekkens\cite{genNC},
we view noncontextuality as an instance of Leibniz's Principle of the Identity of Indiscernibles\cite{Leibniz}, that is, the \emph{physical} identity of \emph{operational} indiscernibles.
As a prescription for construction of a physical theory, this principle states: do not introduce distinctions between experimentally indistinguishable phenomena in your model of reality.
In other words, every distinction posited in one's physical theory should imply a difference in the operational predictions of the theory. Otherwise such distinctions 
are physically meaningless.

{\bf Preparation noncontextuality} is the following assumption on the ontological model of an operational theory:
\begin{equation}
P\simeq P'\Rightarrow \mu(\lambda|P)=\mu(\lambda|P')\quad\forall \lambda\in\Lambda.
\end{equation}

{\bf Measurement noncontextuality} is the assumption that
\begin{equation}
[k|M]\simeq [k'|M']\Rightarrow \xi(k|M,\lambda)=\xi(k'|M',\lambda)\quad\forall \lambda\in\Lambda.
\end{equation}
Note that this statement of measurement noncontextuality at the level of measurement events extends to the case of measurement procedures when we have one-to-one operational equivalence between their
constituent measurement events.

In both instances -- preparations and measurements -- the principle underlying noncontextuality is the same: no operational difference implies no ontological difference. That is, any \emph{context}
associated with the preparation or measurement procedure -- corresponding to differences within an operational equivalence class -- should not be relevant to the ontological representation of 
that procedure. Only the operational equivalence class should be relevant for the ontological representation, not differences of context, hence the term \emph{noncontextuality}. Indeed, one can 
also define
transformation noncontextuality \cite{genNC} which applies the same principle to transformations, but since we are only interested in prepare-and-measure experiments with no time evolution in 
between the preparation and measurement, we will not need transformation noncontextuality. Besides, a transformation can always be composed with a preparation to define a new effective preparation on
which the measurement is done, or the transformation can be composed with a measurement to define a new effective measurement which is done on the preparation. Hence, preparation and measurement
noncontextuality suffice for our considerations.

\section{The Kochen-Specker (KS) theorem}
\subsection{Traditional noncontextuality} 
Historically, Kochen and Specker \cite{KS67} first studied contextuality in quantum theory, leading to their eponymous theorem. We will refer to the notion of noncontextuality 
the KS theorem rules out as \emph{KS-noncontextuality}.
Following Ref.\cite{barrettkent} we first state the Kochen-Specker theorem in the following general terms before viewing it from the perspective
of the generalized notion of noncontextuality \cite{genNC}:
\begin{theorem}
Consider a map $V:\mathcal{K}\rightarrow\mathbb{R}$, where $\mathcal{K}$ is a set of Hermitian operators that act on an $n$-dimensional Hilbert 
space and $V(A)$ lies in the spectrum of $A$ for all $A\in\mathcal{K}$, satisfying the following conditions:
\begin{eqnarray}
V(A+B)&=&V(A)+V(B),\\
V(AB)&=&V(A)V(B),\quad\forall A,B\in\mathcal{K} \text{ such that } [A,B]=0.
\end{eqnarray}
Such a map $V$ is called a {\em KS-colouring} of $\mathcal{K}$. If $n>2$, then there exist {\em KS-uncolourable} sets, i.e., sets $\mathcal{K}$
for which no KS-colouring exists.
\end{theorem}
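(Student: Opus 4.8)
The plan is to reduce the functional conditions defining a KS-colouring to a purely combinatorial constraint on rank-one projectors, and then to exhibit an explicit finite set of such projectors in each dimension $n>2$ that cannot satisfy the constraint. First I would extract the consequences of the two functional equations for projectors. If $P\in\mathcal{K}$ is a projector then $P^2=P$, and since $P$ commutes with itself the multiplicative condition gives $V(P)=V(P)^2$, so $V(P)\in\{0,1\}$; this is consistent with $V(P)$ lying in the spectrum of $P$, which is contained in $\{0,1\}$. Likewise the spectrum of $I$ is $\{1\}$, so $V(I)=1$. Given any orthogonal resolution of the identity by mutually orthogonal rank-one projectors $P_1,\dots,P_n$ with $\sum_i P_i=I$ and $P_iP_j=0$ for $i\neq j$, iterating additivity yields $\sum_i V(P_i)=V(I)=1$ with each $V(P_i)\in\{0,1\}$. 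Hence a KS-colouring amounts to an assignment of $0$ or $1$ to every ray such that in every orthonormal basis exactly one ray receives the value $1$, and producing a KS-uncolourable set reduces to producing a finite set of rays admitting no such assignment.

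With this reduction in hand, the next step is the base case $n=3$. Here I would fix an explicit finite configuration of rays whose orthogonality graph is sufficiently interlocking that the ``exactly one $1$ per basis'' rule becomes self-contradictory; the original $117$-vector set of Kochen and Specker \cite{SKS67}, or a smaller known set, serves this purpose. The argument is combinatorial: assuming a valid $\{0,1\}$ assignment, one propagates the value of a ray labelled $1$ through chains of mutually orthogonal triples and shows that the constraints eventually force two orthogonal rays to both carry the value $1$, or force some complete basis to be entirely $0$, either of which is a contradiction. I expect this to be the genuine obstacle: unlike the algebraic reduction, there is no short conceptual argument here, and one must verify the contradiction over the specific finite geometry, either by the classical case analysis or by invoking an established construction.

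Finally I would lift the base case to arbitrary $n>3$. The cleanest route is a dimension-reduction argument on the whole ray space: given any $\{0,1\}$ assignment on the rays of $\mathbb{C}^n$ satisfying the basis rule, pick a ray $e$ with $V(e)=0$ (one exists since each basis has $n-1\geq 2$ rays valued $0$) and restrict attention to the subspace $e^{\perp}\cong\mathbb{C}^{n-1}$; appending $e$ to any orthonormal basis of $e^{\perp}$ gives a basis of $\mathbb{C}^n$, and since $V(e)=0$ the restricted assignment still places exactly one $1$ in each $(n-1)$-basis. Thus a valid assignment in dimension $n$ restricts to one in dimension $n-1$, so the non-colourability of $\mathbb{C}^3$ propagates upward to every $\mathbb{C}^n$ with $n>3$. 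To convert this statement about the infinite ray space into the existence of a \emph{finite} uncolourable set, I would either transport the explicit dimension-$3$ configuration into a fixed three-dimensional subspace of $\mathbb{C}^n$, taking care to adjoin enough rays to force the complementary basis vectors to the value $0$ (the case in which the embedded triples inherit exactly the dimension-$3$ constraints), or appeal to a compactness argument, noting that the space of $\{0,1\}$-assignments is compact and each basis constraint is closed, so non-colourability of the full space already entails non-colourability of some finite subset.
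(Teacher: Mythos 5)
Your proposal is correct in substance, but it takes a different route from the thesis for the general-dimension claim, and it is worth noting that the thesis itself does not give a self-contained proof of this theorem: after the same algebraic reduction you perform (idempotence forces $V(\Pi)\in\{0,1\}$, $V(I)=1$, additivity over a resolution of the identity forces exactly one projector per basis to take the value $1$), it simply observes that uncolourability of the set of \emph{all} Hermitian operators in $n>2$ is a corollary of Gleason's theorem, cites the original $117$-vector set of Kochen and Specker for a finite example in $n=3$, and then works through the Cabello $18$-ray set in $n=4$ with an explicit parity argument (nine bases force an odd number of value-$1$ assignments, while each ray appearing in exactly two bases forces an even number). Your alternative for lifting $n=3$ to general $n$ --- restricting a putative colouring of the rays of $\mathbb{C}^n$ to the hyperplane $e^{\perp}$ for some ray $e$ with $V(e)=0$, then extracting a finite witness by compactness of $\{0,1\}^{\text{rays}}$ with the basis constraints closed --- is sound and arguably more elementary than invoking Gleason, though like the thesis you still defer the irreducible combinatorial core (the $d=3$ base case) to the known construction. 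One caution: your first finiteness option, embedding the $d=3$ configuration into a $3$-dimensional subspace and "adjoining enough rays to force the complementary basis vectors to the value $0$," does not work as stated --- if the completing vectors $f_1,\dots,f_{n-3}$ are shared across all bases, an assignment giving some $f_i$ the value $1$ colours every embedded triple all-$0$ without contradiction, so additional bases genuinely forcing $V(f_i)=0$ are needed and are not supplied; rely on the compactness version instead, or on the parity-style argument the thesis gives for the explicit $18$-ray set.
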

That the set of {\em all} Hermitian operators in $n>2$ is KS-uncolourable is a corollary of Gleason's theorem \cite{Gleason, Bell66}.
Kochen and Specker provided the first example of a finite KS-uncolourable set \cite{KS67} of 117 vectors in $n=3$.

Let us see what KS-colouring means for a set of mutually orthogonal projectors forming a resolution of the identity: $\{\Pi_1,\Pi_2,\dots,\Pi_N\}$,
such that $\Pi_i\Pi_j=\delta_{ij}\Pi_i$ and $\sum_{i=1}^N\Pi_i=I$. A KS-colouring $V$ would then require $V(\Pi_i\Pi_j)=V(\Pi_i)V(\Pi_j)$
and $V(I)=\sum_{i=1}^NV(\Pi_i)$. Now $V(O)=0$ and $V(I)=1$ since both the null operator $O$ and the identity operator $I$
take values in their spectrum. This in turn means 
$V(\Pi_i)V(\Pi_j)=V(\Pi_i\Pi_j)=V(O)=0$ for all $i\neq j$ and $\sum_{i=1}^NV(\Pi_i)=V(I)=1$. That is, $\{\Pi_1,\Pi_2,\dots,\Pi_N\}$
are assigned values in the spectrum $\{0,1\}$ such that exactly one of them is assigned 1 and the rest are assigned 0.

\subsection{KS-noncontextuality \`a la Spekkens} 
We will now show how the notion of KS-noncontextuality (exemplified by KS-colourings above) fits in the Spekkens framework \cite{genNC} and how this notion is 
generalized in this framework. It is the generalized notion due to Spekkens \cite{genNC} (defined in the previous section) that we refer to when we use the term 
``noncontextuality'', unless otherwise specified.

KS-noncontextuality supplements the assumption of measurement noncontextuality with an \emph{additional} requirement, namely, \emph{outcome determinism}. {\bf Outcome determinism} is the condition
that all the response functions in the ontological model are deterministic over the ontic state space, i.e. $\xi(k|M,\lambda)\in\{0,1\}$ for all $M\in\mathcal{M}$, $k\in\mathcal{K}_M$, $\lambda\in\Lambda$.
Besides the assumption of outcome determinism, the KS theorem is restricted to a particular \emph{type} of measurement context, namely, those contexts which correspond to projective measurements commuting with a given 
projective measurement (we will call such measurement contexts, ``commutative contexts''): if $[A,B]=0$ and $[A,C]=0$, then $B$ and $C$ provide two different contexts for the measurement of $A$, where $A,B,C$ are 
Hermitian operators and $B,C$ do not necessarily commute.
On the other hand, noncontextuality in the Spekkens framework treats \emph{any} distinction between procedures in the same operational equivalence class as a \emph{context}, not restricting itself
only to commutative contexts characteristic of the KS theorem. To be clear:

\begin{enumerate}
 \item KS-noncontextuality concerns measurement noncontextuality and outcome determinism applied to measurement contexts corresponding to Hermitian operators (projective measurements) commuting with a given 
 Hermitian operator (projective measurement). When considered in terms of projectors, which are also Hermitian operators, 
 the contexts are other projectors in the various orthonormal bases in which a given projector might appear.  
 
 \item Noncontextuality \`a la Spekkens abandons (i) the assumption of outcome determinism, (ii) the restriction to projective 
 measurements, and (iii) restriction to commutative contexts (that is, a `context' defined by the Hermitian operators that commute
 with a given Hermitian operator). This means that for ontological models of quantum theory, besides abandoning outcome determinism, one can now meaningfully speak of noncontextuality for POVMs (positive operator-valued measures) with respect to their 
 various contexts. For example, joint measurability of POVMs is a broader notion than their commutativity, the latter implying the former but not 
 conversely.\footnote{By commutativity of two POVMs we mean that each element of one POVM commutes with every element of the other POVM.}
 Instead of ``commutative contexts'' for PVMs (projection-valued measures)
 or projective measurements, we can now consider ``compatible contexts'' for POVMs, where by ``compatibility'' we refer to the notion of joint measurability that we will discuss at length in Chapter 2.
 
 \item Besides, KS-noncontextuality makes {\em no} reference to a notion of preparation noncontextuality such as the one defined by Spekkens \cite{genNC}. 
On the other hand, KS-noncontextuality {\em arises} in the Spekkens framework {\em via} the assumption of preparation noncontextuality applied to ontological models of operational {\em quantum} theory, in particular
projective measurements in quantum theory. It is possible to justify outcome determinism for projective (sharp) measurements in 
ontological models of quantum theory by appealing to preparation noncontextuality
and the quantum mechanical fact that such measurements can be made perfectly predictable on preparations corresponding to their eigenstates. In this way, the conditions required for the 
KS theorem can be recovered starting from preparation and measurement noncontextuality, along with the operational fact of perfect predictability of projective measurements on their eigenstates.

 \item Noncontextuality \`a la Spekkens also applies outside of quantum theory: because no commitment is made as to the representation of the preparations and measurements in the operational theory,
 the generalized notion of noncontextuality can be applied to any operational theory. This permits an understanding of contextuality in theory-independent terms, something not possible within the 
 framework of the Kochen-Specker theorem, which is really a no-go theorem for quantum theory and does not seek to make theory-independent claims.
\end{enumerate}

Before we proceed further, let us look at a simple example of the KS theorem in action.
\begin{figure}
 \includegraphics[scale=0.5]{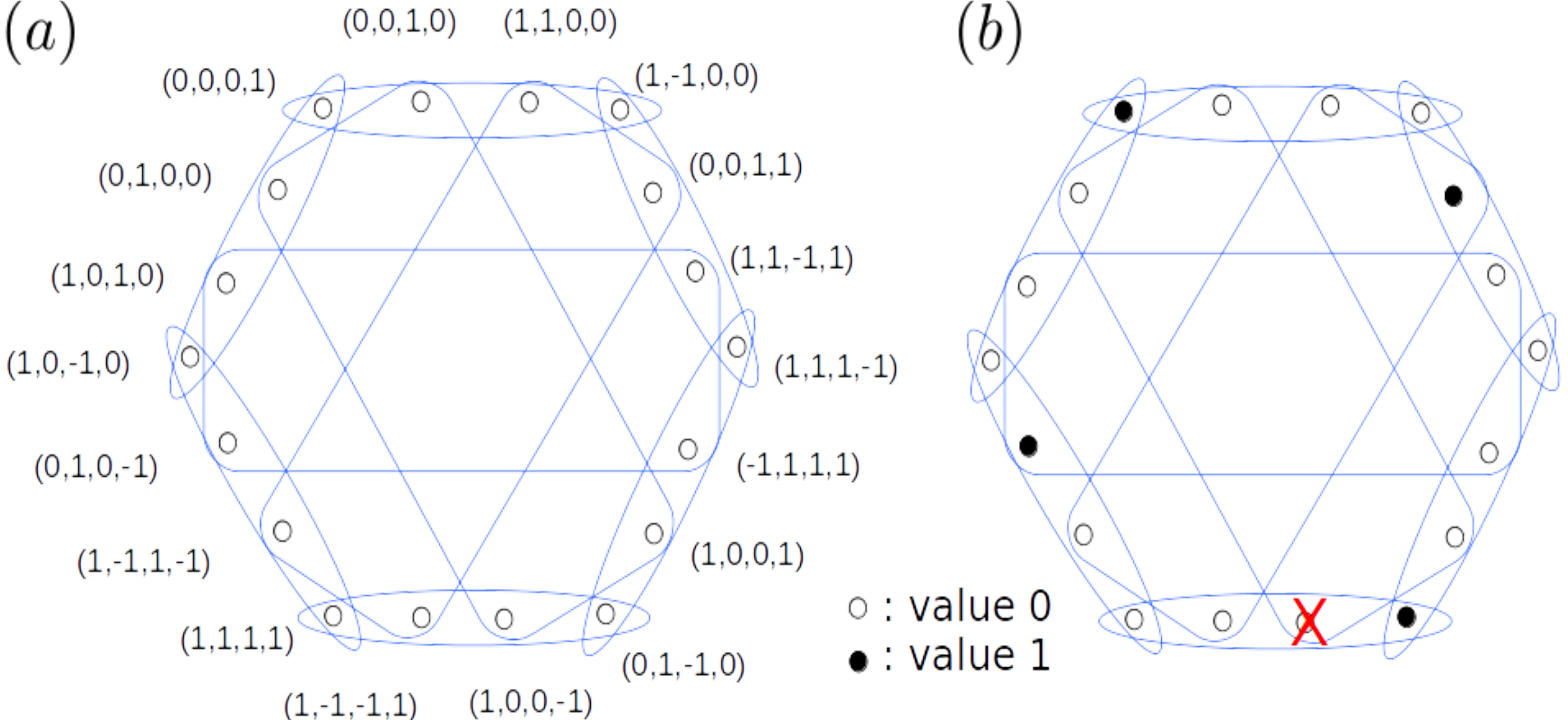}
 \caption{A simple example of the KS theorem in action, due to Cabello et al.~\cite{Cab1}. Figure (a) lists the vectors (and corresponding rank 1 projectors)
 associated with the nodes of the hypergraph, with the four nodes in each edge constituting an orthonormal basis. The 
 normalization factors are omitted to avoid clutter, but are presumed, so each vector should be appropriately normalized to a 
 unit vector.
 The black and white nodes in figure (b) denote value assignments $1$ and $0$ respectively.
 An attempt at such a value assignment is shown to fail. Indeed, \emph{any} such attempt will fail. }
\label{fig1}
\end{figure}
Figure \ref{fig1} shows such an example: the nodes in Fig.~\ref{fig1} denote measurement events in quantum theory, i.e. $18$ rays in a $4$-dimensional Hilbert space, each associated 
with a corresponding projector associated to a measurement event. The loops in Fig.~\ref{fig1} denote measurements, each consisting of four mutually exclusive and jointly exhaustive measurement events.
What this means is that the $4$ projectors in each loop correspond to an orthonormal basis in $\mathcal{H}\cong\mathbb{C}^4$. Thus, we have $18$ rays carved up into $9$ orthonormal bases of $4$ 
rays each. The (unnormalized) vectors associated with the rays are labelled in Fig.~\ref{fig1}(a). The normalization is omitted for clarity in the figure but we are imagining orthonormal bases in this example.

The operational equivalence between measurement events in Fig.~\ref{fig1} are implicit in the fact that every projector is shared between two orthonormal bases, each basis representing a \emph{context}
for the projector. The measurement event consists of a specification of the projector along with the orthonormal basis it's considered to be a part of, and we know that regardless of which 
orthonormal basis a projector appears in, the probability of occurrence of a measurement event associated with it is the same for any quantum state. 
This latter fact denotes the operational equivalence of the two measurement events corresponding to the same projector. 

Measurement noncontextuality now requires that in an ontological model
the physical state of the system $\lambda$ specifies the \emph{probabilities} of occurrence of measurement events independent of their context, i.e.,
for operationally equivalent measurement events, the same probabilities are assigned to them by $\lambda$. This condition of measurement noncontextuality on its own can be satisfied for 
the measurement events depicted in Fig.~\ref{fig1}. It just translates to being able to assign probabilities to the nodes in Fig.~\ref{fig1} in such a way that they add up to $1$ for each loop.
However, the Kochen-Specker contradiction arises when an additional requirement, besides measurement noncontextuality, is made: that the probabilities assigned by $\lambda$ be $\{0,1\}$-valued
or deterministic. Thus, instead of a measurement noncontextual assignment of {\em probabilities}, we now require a measurement noncontextual assignment of $\{0,1\}$ {\em values} by $\lambda$
to the measurement events in Fig.~\ref{fig1}. As depicted in Fig.~\ref{fig1}(b), any such attempt at a measurement noncontextual value-assignment -- a {\em KS-colouring} -- fails and therefore there cannot exist a 
$\lambda$ which makes such assignments of values, proving the Kochen-Specker theorem. To see why any such attempt would fail, it suffices to note the following: the fact that there are $9$ bases
means that the total number of projectors assigned value $1$ by $\lambda$ should be {\em odd}, since measurement in each basis must result in exactly one projector that occurs for a given $\lambda$; but since each projector
appears in two bases, each of those assigned value $1$ will appear in two bases, requiring an {\em even} number of projectors assigned value $1$, leading to the Kochen-Specker contradiction.
Such a proof of the KS theorem is called a {\em KS-uncolourability} proof.

The KS theorem therefore forces a choice between abandoning:
\begin{enumerate}
 \item measurement noncontextuality, or
 \item outcome determinism, or 
 \item both, measurement noncontextuality and outcome determinism.
\end{enumerate}

Traditionally, \emph{outcome determinism} has been taken for granted in ontological models of quantum theory and the conclusion then to be derived from the Kochen-Specker theorem 
is measurement contextuality (or, simply, ``contextuality'' in the Kochen-Specker framework). Alternatively, one may preserve measurement noncontextuality at the expense of outcome determinism
and conclude that any ontological model of quantum theory must admit ``intrinsic randomness'' or outcome indeterminism.

As an argument against the principle of noncontextuality due to Spekkens\cite{genNC}, however, the Kochen-Specker theorem fails: it's always possible to 
salvage measurement noncontextuality by abandoning outcome determinism. How then does the Spekkens approach recover or operationalize the Kochen-Specker theorem?
This is where preparation noncontextuality enters the picture: using the operational fact that projectors exhibit 
\emph{perfect predictability} when measured on the corresponding eigenstate, and the assumption of preparation noncontextuality, it is possible to infer outcome determinism for sharp (projective) measurements, or ``ODSM'',
in quantum theory. We sketch the argument below, based on Ref.~\cite{genNC}.

\begin{theorem}
Preparation noncontextuality (PNC) implies outcome determinism for sharp measurements (ODSM) in ontological models of quantum theory.
\end{theorem}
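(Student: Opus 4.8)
The plan is to leverage the \emph{perfect predictability} of projective measurements on their eigenstates, and then to use preparation noncontextuality (PNC) to propagate this determinism from the eigenstate preparations out to the support of \emph{every} preparation in the model. Throughout I would work with a sharp measurement $M$ whose outcomes correspond to a complete set of rank-one projectors $\{\Pi_i\}_{i=1}^d$ associated with an orthonormal basis $\{|\psi_i\rangle\}$; the general higher-rank case follows by refining to a rank-one measurement and coarse-graining the response functions $\xi(i|M,\lambda)=\sum_a\xi((i,a)|M',\lambda)$.

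First I would establish determinism on the supports of the eigenstate preparations. For the eigenstate preparation $P_{\psi_i}$ one has $p(i|M,P_{\psi_i})=\Tr(\Pi_i\Pi_{\psi_i})=1$, so by empirical adequacy $\sum_{\lambda}\xi(i|M,\lambda)\mu(\lambda|P_{\psi_i})=1$. Since $\xi\le 1$ and $\mu(\cdot|P_{\psi_i})$ is normalized, this forces $\xi(i|M,\lambda)=1$ for every $\lambda$ in $\mathrm{supp}\,\mu(\cdot|P_{\psi_i})$, and by normalization $\sum_j\xi(j|M,\lambda)=1$ of the response function together with non-negativity, $\xi(j|M,\lambda)=0$ for $j\neq i$ there. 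Hence the response functions of $M$ are already $\{0,1\}$-valued on $\bigcup_i\mathrm{supp}\,\mu(\cdot|P_{\psi_i})$, which is precisely the support of the distribution $\mu(\cdot|I/d)$ obtained by preparing the maximally mixed state as the uniform mixture of the $P_{\psi_i}$.

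The essential step, and the place where PNC does the real work, is to show that this support set already contains the support of every preparation. For an arbitrary state $\rho$, since $I/d$ is full rank it lies in the interior of the state space, so there exist $p\in(0,1]$ and a state $\rho'$ with $I/d=p\rho+(1-p)\rho'$. The coin-flip procedure preparing $\rho$ with probability $p$ and $\rho'$ with probability $1-p$ samples ontic states according to $p\,\mu(\lambda|\rho)+(1-p)\,\mu(\lambda|\rho')$, and since this is an operationally equivalent realization of $I/d$, PNC identifies this distribution with $\mu(\lambda|I/d)$. Non-negativity of all terms then yields $\mathrm{supp}\,\mu(\cdot|\rho)\subseteq\mathrm{supp}\,\mu(\cdot|I/d)$. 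Thus every $\lambda$ actually sampled by some preparation lies in the region where $\xi(\cdot|M,\lambda)$ has already been shown deterministic; the remaining $\lambda$ are never probed and their response functions may be fixed to $\{0,1\}$-values without altering any operational prediction. As the argument holds for every orthonormal basis, hence every sharp measurement, ODSM follows.

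Finally I would flag where the argument is intrinsically confined to \emph{sharp} measurements and identify the main obstacle. Determinism is extracted only because a projector admits an eigenstate on which it occurs with certainty; a generic POVM effect has no eigenvalue-one preparation, so the perfect-predictability input is simply unavailable and no analogous determinism can be forced — this is exactly why the conclusion is ODSM and not outcome determinism for all measurements. The chief subtlety is the support/measure-zero bookkeeping: the ``$\xi=1$ almost everywhere'' inferences and the support inclusions must be read modulo measure-zero sets when $\Lambda$ is continuous. By the reduction to a finite, discrete ontic state space justified in the footnote, however, these become plain set inclusions and the argument goes through cleanly.
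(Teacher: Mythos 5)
Your proposal is correct and follows essentially the same route as the paper: derive $\{0,1\}$-valued response functions on the eigenstate supports from perfect predictability, then use preparation noncontextuality applied to convex decompositions of $I/d$ (the uniform mixture of eigenstates on one hand, and $p\rho+(1-p)\rho'$ for arbitrary $\rho$ on the other) to show those supports exhaust the ontic states sampled by any preparation, finishing with coarse-graining for higher-rank projectors. The only difference is cosmetic: you obtain determinism directly from perfect predictability plus normalization of the response function, whereas the paper first proves a lemma that perfectly distinguishable preparations have non-overlapping ontic supports — a step your argument renders unnecessary.
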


\begin{proof}
Given a set of rank-1 projectors $M\equiv\{\Pi_i\}$ constituting a PVM, i.e. $\Pi_i\Pi_j=\delta_{ij}\Pi_i$ and $\sum_i\Pi_i=I$, we can write down the corresponding set of pure state density operators $\{\rho_i\}$,
where $\rho_i=\Pi_i$ for all $i$. In the ontological model, the measurement events $\Pi_i$ are represented by the corresponding response functions $\xi(\Pi_i|\lambda)\in[0,1]$ and 
the preparations $\rho_i$ are represented by the corresponding distributions $\mu_i(\lambda)\geq0$, where $\sum_i\xi(\Pi_i|\lambda)=1$ for all $\lambda\in\Lambda$ and $\sum_{\lambda\in\Lambda}\mu_i(\lambda)=1$ for all $i$.
We first prove a lemma that will be used later:

\begin{lemma}\label{nonoverlap}
 If two preparations $P$ and $P'$ are distinguishable with certainty in a single-shot measurement, then the distributions representing them in the ontological model should be non-overlapping, i.e.
 \begin{equation}
  \mu(\lambda|P)\mu(\lambda|P')=0 \quad \forall \lambda\in\Lambda.
 \end{equation}
\end{lemma}
\begin{proof}
For $P$ and $P'$ to be distinguishable with certainty in a single-shot measurement, there must exist a measurement event $E$ in the operational theory such that $p(E|P)=1$ and $p(E|P')=0$.
The occurrence of $E$ identifies preparation $P$ and its non-occurrence identifies preparation $P'$. In the ontological model, this means
\begin{equation}
 \sum_{\lambda\in\Lambda}\xi(E|\lambda)\mu(\lambda|P)=1 \text{ and } \sum_{\lambda\in\Lambda}\xi(E|\lambda)\mu(\lambda|P')=0,
\end{equation}
so that
\begin{equation}
 \xi(E|\lambda)=\begin{cases}
                 1, &\text{ for all }\lambda\in\Lambda_P\\
                 0, &\text{ for all }\lambda\in\Lambda_{P'},
                \end{cases}
\end{equation}
where $\Lambda_P\equiv\{\lambda\in\Lambda|\mu(\lambda|P)>0\}$ and $\Lambda_{P'}\equiv\{\lambda\in\Lambda|\mu(\lambda|P')>0\}$. This implies that $\Lambda_P\cap\Lambda_{P'}=\phi$, because otherwise
there exist $\lambda\in\Lambda_P\cap\Lambda_{P'}$ such that $\xi(E|\lambda)=0$ and $\xi(E|\lambda)=1$, a contradiction. $\Lambda_P\cap\Lambda_{P'}=\phi$ implies that 
$\mu(\lambda|P)=0$ for all $\lambda\in\Lambda\backslash\Lambda_P\supseteq \Lambda_{P'}$ and 
$\mu(\lambda|P')=0$ for all $\lambda\in\Lambda\backslash\Lambda_{P'}\supseteq \Lambda_P$. We then have
\begin{equation}
\mu(\lambda|P)\mu(\lambda|P')=0 \quad\forall \lambda\in\Lambda,
\end{equation}
which is what we set out to prove.
\end{proof}

For any two orthogonal rank-1 density operators $\rho_k$ and $\rho_{k'}$ taken from the set $\{\rho_i\}$,
it is possible to distinguish them with certainty in a single-shot measurement using either the PVM $\{\Pi_k,I-\Pi_k\}$ or the PVM $\{\Pi_{k'},I-\Pi_{k'}\}$, since we have
\begin{equation}
 \Tr\rho_k\rho_{k'}=\delta_{k,k'}.
\end{equation}

From Lemma \ref{nonoverlap}, we can therefore conclude that 
\begin{equation}
\mu_k(\lambda)\mu_{k'}(\lambda)=0 \quad\forall \lambda\in\Lambda.
\end{equation}
This implies that $\Lambda_k\cap\Lambda_{k'}=\phi$ for all $k\neq k'$, where $\Lambda_k\equiv\{\lambda\in\Lambda|\mu_k(\lambda)>0\}$ and $\Lambda_{k'}\equiv\{\lambda\in\Lambda|\mu_{k'}(\lambda)>0\}$.
Given the fact that $\Tr\Pi_k\rho_{k'}=\delta_{k,k'}$, we have in an ontological model
\begin{equation}
 \sum_{\lambda}\xi(\Pi_k|\lambda)\mu_{k'}(\lambda)=\delta_{k,k'},
\end{equation}
so that 
\begin{equation}
 \xi(\Pi_k|\lambda)=\begin{cases}
                     & 1 \text{ for all }\lambda\in\Lambda_k,\\
                     & 0 \text{ for all }\lambda\in\bigcup_{k'\neq k}\Lambda_{k'},
                    \end{cases}
\end{equation}
which is equivalent to 
\begin{equation}
 \xi(\Pi_k|\lambda)\xi(\Pi_{k'}|\lambda)=\delta_{k,k'} \quad \forall \lambda\in\cup_i\Lambda_i,
\end{equation}
proving outcome determinism for rank-1 projectors $\{\Pi_i\}$ over the ontic support of corresponding quantum states $\{\rho_i\}$.
In order to prove ODSM in ontological models of quantum theory, we now show that $\cup_i\Lambda_i=\Lambda$, which establishes outcome determinism over the full set of ontic states $\Lambda$. Here $\Lambda$ 
is the set of ontic states $\lambda$ which quantum states can sample from, i.e. $\lambda$ such that for every $\lambda\in\Lambda$, 
there exists a quantum state $\rho$ represented by $\mu_{\rho}(\lambda)>0$ in the ontological model, i.e.
\begin{equation}
 \Lambda=\{\lambda|\mu_{\rho}(\lambda)>0 \text{ for some }\rho\}.
\end{equation}

To show $\cup_i\Lambda_i=\Lambda$, note that 
\begin{equation}
 \frac{1}{d}\sum_i\rho_i=\frac{I}{d},
\end{equation}
an operational equivalence between a uniform mixture of states in $\{\rho_i\}$ and the maximally mixed state $\frac{I}{d}$ on a $d$-dimensional Hilbert space. Using the assumption of 
preparation noncontextuality applied to this operational equivalence, we have
\begin{equation}
 \frac{1}{d}\sum_i\mu_i(\lambda)=\mu_{\frac{I}{d}}(\lambda)\quad\forall \lambda\in\Lambda.
\end{equation}
This means that every ontic state in the support $\cup_i\Lambda_i$ also appears in the support $\Lambda_{I/d}$ and conversely, i.e. $\cup_i\Lambda_i=\Lambda_{I/d}$, where 
$\Lambda_{I/d}\equiv\{\lambda\in\Lambda|\mu_{I/d}(\lambda)>0\}$.

Now, since every quantum state $\rho$ appears in some convex decomposition of $\frac{I}{d}$, it follows that $\Lambda_{\rho}\subseteq\Lambda_{I/d}$ for all $\rho$, where 
$\Lambda_{\rho}\equiv\{\lambda\in\Lambda|\mu_{\rho}(\lambda)>0\}$. An example of such a convex decomposition of $I/d$ is the
following:
\begin{equation}
 \frac{1}{d}\rho+\left(1-\frac{1}{d}\right)\rho'=\frac{I}{d},
\end{equation}
where $\rho'=\frac{I-\rho}{d-1}$. Preparation noncontextuality applied to this operational equivalence requires that 
\begin{equation}
 \frac{1}{d}\mu_{\rho}(\lambda)+\left(1-\frac{1}{d}\right)\mu_{\rho'}(\lambda)=\mu_{\frac{I}{d}}(\lambda)\quad\forall \lambda\in\Lambda,
\end{equation}
so that $\Lambda_{\rho}\subseteq\Lambda_{I/d}$.

Since $\Lambda_{\rho}\subseteq\Lambda_{I/d}$ is true for any quantum state $\rho$, it follows that $\cup_{\rho}\Lambda_{\rho}=\Lambda_{I/d}$. But note that $\cup_{\rho}\Lambda_{\rho}$
is just the set of all ontic states which quantum states can sample from, i.e. $\cup_{\rho}\Lambda_{\rho}=\Lambda$. We therefore have $\Lambda_{I/d}=\Lambda$ and 
\begin{equation}
 \cup_i\Lambda_i=\Lambda,
\end{equation}
implying outcome determinism for rank-1 projectors in quantum theory,
\begin{equation}
 \xi(\Pi_k|\lambda)\xi(\Pi_{k'}|\lambda)=\delta_{k,k'} \quad \forall \lambda\in\Lambda,
\end{equation}
or $\xi(\Pi_k|\lambda)\in\{0,1\}$ for all $\lambda\in\Lambda$.

Any sharp measurement or PVM in quantum theory can be obtained by coarse graining rank-1 projectors and since the response functions of rank-1 projectors are deterministic,
any coarse-grainings corresponding to elements of a PVM will also be deterministic. This proves that PNC$\Rightarrow$ODSM in ontological models of quantum theory.
\end{proof}

We now prove a strengthening of the previous claim that, for ontological models of quantum theory, PNC$\Rightarrow$ODSM. It is, in fact, possible to show that PNC not only implies ODSM, it also
implies measurement noncontextuality (MNC), thus recovering the Kochen-Specker notion for noncontextuality for ontological models of quantum theory. 

\begin{theorem}
Preparation noncontextuality (PNC) implies KS-noncontextuality in ontological models of quantum theory.
\end{theorem}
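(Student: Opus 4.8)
The plan is to observe that KS-noncontextuality is by definition the conjunction of measurement noncontextuality (MNC) and outcome determinism for sharp measurements (ODSM), restricted to the projective (commutative) contexts relevant to the Kochen-Specker setting. The preceding theorem already delivers ODSM from PNC, so the only remaining work is to extract MNC for projective measurement events from PNC together with the ODSM we have just secured. Concretely, MNC demands that two operationally equivalent measurement events receive the same response function; since in quantum theory two sharp measurement events are operationally equivalent precisely when they correspond to the same projector $E$, the goal reduces to showing that $\xi(E|\lambda)$ is fixed by the operator $E$ alone, independently of which PVM (context) $E$ is embedded in.

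First I would fix an arbitrary projector $E$ of rank $r=\Tr E$ appearing in some PVM, and attach to it the canonical state $\rho_E\equiv E/r$, which depends on $E$ alone. Since $\Tr(E\rho_E)=\Tr(E)/r=1$, we have perfect predictability of the outcome $E$ on $\rho_E$; because $E$ is sharp, ODSM forces $\xi(E|\lambda)\in\{0,1\}$, and combining this with $\sum_{\lambda}\xi(E|\lambda)\mu_{\rho_E}(\lambda)=1$ pins $\xi(E|\lambda)=1$ for every $\lambda$ in the ontic support $\Lambda_{\rho_E}\equiv\{\lambda\,|\,\mu_{\rho_E}(\lambda)>0\}$.

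Next I would control $\xi(E|\lambda)$ off this support using the orthogonal-complement state $\rho'_E\equiv(I-E)/(d-r)$, for which $\Tr(E\rho'_E)=0$ gives $\xi(E|\lambda)=0$ for all $\lambda\in\Lambda_{\rho'_E}$. Since $\rho_E$ and $\rho'_E$ are perfectly distinguishable in a single shot, the non-overlap lemma yields $\Lambda_{\rho_E}\cap\Lambda_{\rho'_E}=\phi$. Applying PNC to the operational equivalence $\frac{r}{d}\rho_E+\frac{d-r}{d}\rho'_E=\frac{I}{d}$, together with the fact $\Lambda_{I/d}=\Lambda$ established inside the previous theorem, gives $\Lambda_{\rho_E}\cup\Lambda_{\rho'_E}=\Lambda$. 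Hence the two disjoint supports exhaust $\Lambda$, and $\xi(E|\lambda)$ is the indicator of $\Lambda_{\rho_E}$ on all of $\Lambda$. The upshot is that the response function of any projective measurement event is the indicator of a set $\Lambda_{\rho_E}$ determined by $E$ alone, so the same projector appearing in distinct commutative contexts is assigned identical response functions, which is exactly MNC; conjoined with ODSM this is KS-noncontextuality.

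The part to handle with care is the context-independence claim itself: I must ensure that every object entering the characterization of $\xi(E|\lambda)$ — namely $\rho_E$, $\rho'_E$, and the universal support identity $\Lambda_{I/d}=\Lambda$ — refers only to $E$ and never to the remaining elements of the particular PVM, for otherwise the argument would merely re-derive ODSM within a fixed basis rather than establish genuine noncontextuality across bases. A secondary check is well-definedness under coarse-graining: for a higher-rank $E$ one may verify that decomposing $E$ into orthogonal rank-one projectors and applying PNC to $\rho_E=\frac{1}{r}\sum_i\Pi_i$ reproduces $\xi(E|\lambda)=\mathbf{1}[\lambda\in\Lambda_{\rho_E}]$ regardless of the chosen decomposition, so that the assignment is consistent with the coarse-graining structure of PVMs.
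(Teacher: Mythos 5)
Your proposal is correct and follows essentially the same route as the paper's proof: both rest on ODSM from the preceding theorem, perfect predictability of $E$ on the canonical state $E/\Tr E$ and its vanishing on $(I-E)/(d-r)$, the non-overlap lemma, and PNC applied to the decomposition $\frac{r}{d}\rho_E+\frac{d-r}{d}\rho'_E=\frac{I}{d}$ together with $\Lambda_{I/d}=\Lambda$. The only difference is organizational — you characterize $\xi$ once and for all as the indicator of $\Lambda_{\rho_E}$, whereas the paper compares two operationally equivalent contexts $M\simeq M'$ pairwise by showing $\Lambda_{M,0}\cap\Lambda_{M',1}=\varnothing$ — but the underlying computation is identical.
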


\begin{proof}
Consider two PVMs, $\tilde{M}$ and $\tilde{M'}$ on a $d$-dimensional Hilbert space. Let $M$ be a coarse-graining of $\tilde{M}$ and $M'$ a coarse-graining of $\tilde{M'}$, such that 
both $M$ and $M'$ have binary outcomes labelled by $\{0,1\}$ and they are operationally equivalent, i.e. $M\simeq M'$ or
\begin{equation}
 p(k|M,\rho)=p(k|M',\rho)\quad\forall \rho,
\end{equation}
where $\rho$ is a $d$-dimensional quantum state and $k\in\{0,1\}$. MNC would then require that $\xi(k|M,\lambda)=\xi(k|M',\lambda)$ for all $\lambda\in\Lambda$ but we want to derive MNC starting
from PNC. Since the operational theory is quantum theory and since $M$ and $M'$ are PVMs (because they are coarse-grainings of PVMs), we have that for $M$ there exist two orthogonal preparations
$\rho_0$ and $\rho_1$ such that 
\begin{equation}
 p(k|M,\rho_k)=1,\quad k\in\{0,1\}.
\end{equation}
Since $M\simeq M'$, it's also the case that  $p(k|M',\rho_k)=1, k\in\{0,1\}.$ This is easy to see since, in general, $M$ can be represented by PVM elements of the form 
$[0|M]=\sum_{i=1}^m|\alpha_i\rangle\langle\alpha_i|$
and $[1|M]=\sum_{i=m+1}^d|\alpha_i\rangle\langle\alpha_i|$, where $1\leq m\leq d$ and $\{|\alpha_i\rangle\}_{i=1}^d$ is an orthonormal basis in the $d$-dimensional Hilbert space so that 
$\sum_{i=1}^d|\alpha_i\rangle\langle\alpha_i|=I$. Similarly, $M'$ can be represented through some other orthonormal basis $\{|\alpha'_i\rangle\}_{i=1}^d$ such that the presumed 
operational equivalences hold. Note that since $\tilde{M}$ and $\tilde{M'}$ are PVMs, they can either be the maximally fine-grained PVMs $\{|\alpha_i\rangle\langle\alpha_i|\}_{i=1}^d$ and 
$\{|\alpha'_i\rangle\langle\alpha'_i|\}_{i=1}^d$ respectively, or be obtainable from a coarse-graining of these maximally fine-grained PVMs. 

The preparation $\rho_0$ then has to lie in the subspace spanned by $\{|\alpha_i\rangle\langle\alpha_i|\}_{i=1}^m$ and the preparation $\rho_1$ in the subspace 
spanned by $\{|\alpha_i\rangle\langle\alpha_i|\}_{i=m+1}^d$ for $p(k|M,\rho_k)=1$, where $k\in\{0,1\}$, to hold. We will assume $\rho_0=\frac{1}{m}\sum_{i=1}^m|\alpha_i\rangle\langle\alpha_i|$
and $\rho_1=\frac{1}{d-m}\sum_{i=m+1}^d|\alpha_i\rangle\langle\alpha_i|$. Given the perfect predictability $p(k|M,\rho_k)=1$ and $p(k|M',\rho_k)=1$ for $k\in\{0,1\}$, it follows that 
\begin{eqnarray}\label{eq:perfpred}
&&\sum_{\lambda\in\Lambda}\xi(0|M,\lambda)\mu(\lambda|\rho_0)=1,\nonumber\\
&&\sum_{\lambda\in\Lambda}\xi(1|M,\lambda)\mu(\lambda|\rho_1)=1,\nonumber\\
&&\sum_{\lambda\in\Lambda}\xi(0|M',\lambda)\mu(\lambda|\rho_0)=1,\nonumber\\
&&\sum_{\lambda\in\Lambda}\xi(1|M',\lambda)\mu(\lambda|\rho_1)=1.
\end{eqnarray}
Since we have already established PNC$\Rightarrow$ODSM, we have $\xi(k|M,\lambda)$ and $\xi(k|M',\lambda)\in\{0,1\}$ for all $\lambda\in\Lambda$. We can now define 
\begin{eqnarray}
 \Lambda_{M,k}\equiv\{\lambda\in\Lambda|\xi(k|M,\lambda)=1\},\nonumber\\
 \Lambda_{M',k}\equiv\{\lambda\in\Lambda|\xi(k|M',\lambda)=1\}.
\end{eqnarray}
Since $\xi(1|M,\lambda)=1-\xi(0|M,\lambda)$ and similarly for $M'$, we have 
\begin{eqnarray}\label{eq:supportunion}
&\Lambda_{M,0}\cup\Lambda_{M,1}=\Lambda,\nonumber\\
&\Lambda_{M',0}\cup\Lambda_{M',1}=\Lambda.
\end{eqnarray}

Given ODSM, it's also the case that
\begin{eqnarray}\label{eq:supportintersection}
&\Lambda_{M,0}\cap\Lambda_{M,1}=\varnothing,\nonumber\\
&\Lambda_{M',0}\cap\Lambda_{M',1}=\varnothing.
\end{eqnarray}
It follows that $\xi(k|M,\lambda)=\xi(k|M',\lambda)\forall\lambda\in\Lambda$, where $k\in\{0,1\}$, if and only if $\Lambda_{M,0}=\Lambda_{M',0}$. We will now show that $\Lambda_{M,0}=\Lambda_{M',0}$.
From Eq.~\ref{eq:perfpred} we have 
\begin{eqnarray}\label{eq:overlaps}
&\sum_{\lambda\in\Lambda_{M,0}}\mu(\lambda|\rho_0)=1\nonumber\\
&\sum_{\lambda\in\Lambda_{M,1}}\mu(\lambda|\rho_1)=1\nonumber\\
&\sum_{\lambda\in\Lambda_{M',0}}\mu(\lambda|\rho_0)=1\nonumber\\
&\sum_{\lambda\in\Lambda_{M',1}}\mu(\lambda|\rho_1)=1.
\end{eqnarray}
From Eqs.~\ref{eq:supportunion}, \ref{eq:supportintersection}, and \ref{eq:overlaps}, we have
\begin{eqnarray}
 \sum_{\lambda\in\Lambda_{M,0}}\mu(\lambda|\rho_1)=0,\nonumber\\
 \sum_{\lambda\in\Lambda_{M',1}}\mu(\lambda|\rho_0)=0,
\end{eqnarray}
which implies that
\begin{equation}\label{eq:zerointegral}
 \sum_{\lambda\in\Lambda_{M,0}\cap\Lambda_{M',1}}\left(\frac{m}{d}\mu(\lambda|\rho_0)+\left(1-\frac{m}{d}\right)\mu(\lambda|\rho_1)\right)=0.
\end{equation}
Now note that 
\begin{equation}\
\frac{m}{d}\rho_0+\left(1-\frac{m}{d}\right)\rho_1=\frac{I}{d}.
\end{equation}
Since every quantum state $\rho$ appears in some convex decomposition of the maximally mixed state $\frac{I}{d}$, e.g. $\frac{I}{d}=p\rho+(1-p)\rho'$, where $\rho'=\frac{I-pd\rho}{d(1-p)}$,
the ontic support of $\rho$ is contained in the ontic support of $\frac{I}{d}$. Since $\Lambda$ is just the union of the ontic supports of all quantum states, it is the case that
it is equal to the ontic support of $\frac{I}{d}$. It then follows from preparation noncontextuality that 
\begin{equation}
 \frac{m}{d}\mu(\lambda|\rho_0)+\left(1-\frac{m}{d}\right)\mu(\lambda|\rho_1)=\mu(\lambda|I/d)\quad\forall\lambda\in\Lambda,
\end{equation}
so that the support of $\frac{m}{d}\mu(\lambda|\rho_0)+\left(1-\frac{m}{d}\right)\mu(\lambda|\rho_1)$ is $\Lambda$. This in turn means, following Eq.~\ref{eq:zerointegral}, that 
\begin{equation}
 \Lambda_{M,0}\cap\Lambda_{M',1}=\varnothing,
\end{equation}
which implies that
\begin{equation}
 \Lambda_{M,0}=\Lambda_{M',0}.
\end{equation}
We therefore have the conclusion we sought: given the operational equivalence $M\simeq M'$ and using preparation noncontextuality, we have shown that $\xi(k|M,\lambda)=\xi(k|M',\lambda)$ for 
all $\lambda\in\Lambda$, thus proving measurement noncontextuality (MNC). All in all, we have the conclusion
\begin{equation}
 \text{Preparation noncontextuality}\Rightarrow\text{KS-noncontextuality}.\footnote{The proof I have presented here is a generalization of an unpublished note 
due to R.~W.~Spekkens, which I thank him for sharing with me.
For an alternative proof that PNC$\Rightarrow$KS-noncontextuality for ontological models of quantum theory, see Leifer and Maroney\cite{leifermaroney}. We will not sketch this proof here since it makes use of
the notion of ``maximally psi-epistemic'' ontological models and we do not wish to introduce this aspect concerning the ``reality of the wavefunction'' 
here because the results presented in this thesis will not rely on these considerations. The curious reader may check Leifer and Maroney\cite{leifermaroney} and the references therein.}
\end{equation}\end{proof}

This result, then, should convince a skeptic of the relevance of the notion of preparation noncontextuality, which is not as \emph{ad hoc} as it may appear compared to KS-noncontextuality.\footnote{Particularly to a skeptic unconvinced 
by the argument that the motivation underlying preparation noncontextuality (the Leibnizian \emph{identity of indiscernibles}) is the same as that underlying measurement noncontextuality; if, on 
methodological grounds, one upholds measurement noncontextuality, then one must also uphold preparation noncontextuality.}

Every proof of the Kochen-Specker theorem is thus also a proof of preparation contextuality but not conversely.
Because of the strict implication PNC$\Rightarrow$KS-noncontextuality, preparation noncontextuality is a stronger notion of noncontextuality for ontological models of quantum theory than 
the traditional notion of KS-noncontextuality. It is indeed possible to rule out preparation noncontextuality without making any appeal to KS-contextuality, as shown in Ref.~\cite{genNC}.
\section{Bell's theorem}
Bell's theorem provides criteria (Bell inequalities) for ruling out a particular class of ontological models, traditionally called local hidden variable (LHV) models, for composite systems.
In particular, the predictions of quantum theory for composite systems rule out LHV ontological models of operational quantum theory. Further, violating Bell inequalities in an experiment rules out an LHV model of Nature itself,
rather than merely of the particular operational theory that we currently use to describe Nature. This means that any operational theory offering a 
putative replacement of quantum theory -- our current description of Nature -- would necessarily have to fail to admit an LHV model in order to account for experimental violations of 
Bell inequalities.

The theorem considers a bipartite system prepared by a source according to some distribution $\mu(\lambda|P)$ over the ontic states $\lambda\in\Lambda$, the preparation denoted by $P$. 
Each part of this bipartite system is sent to one of two parties\footnote{Bell's theorem is applicable beyond two parties as well, but it was first motivated by Einstein, Podolsky, and Rosen's
consideration of a two-party scenario\cite{EPR}. We illustrate it for this simplest two-party scenario, often also called
the Bell-CHSH scenario\cite{Bell64, chsh}},
Alice and Bob, who are spacelike separated from each other during each run of the experiment. In each run, the experiment involves 
each party performing a measurement on his/her part of the bipartite system, where the measurement performed is chosen uniformly randomly from two possibilities $\{0,1\}$, followed by recording the two-valued outcome $\{0,1\}$ of this measurement.
Alice and Bob implement several runs of the experiment to build up the statistics, $p(a,b|x,y;P)$. Here $x\in\{0,1\}$ denotes the two choices of measurements in Alice's lab,
$y\in\{0,1\}$ denotes the two choices available in Bob's lab, and $a,b\in\{0,1\}$ denote their respective outcomes for these measurement choices. $p(a,b|x,y;P)$ denotes the joint probability of Alice obtaining outcome 
$a$ when she performs measurement $x$ on her subsystem and Bob obtaining outcome $b$ when he performs measurement $y$ on his subsystem, where the bipartite system is prepared according to 
preparation $P$.

Note that since the parties are spacelike separated during the course of their measurements, there should be no signalling between them, i.e. it should not be possible for Alice (Bob) to infer 
the measurement setting of Bob (Alice) by just looking at her (his) local data $p(a|x;P)$ ($p(b|y;P)$). This no-signalling condition, implied by special relativity, 
can be expressed as 
\begin{eqnarray}
&p(a|x;P)=\sum_bp(a,b|x,y;P),\quad\forall y\in\{0,1\}, \forall a,x\in\{0,1\}\nonumber\\
&p(b|y;P)=\sum_ap(a,b|x,y;P),\quad\forall x\in\{0,1\}, \forall b,y\in\{0,1\}.
\end{eqnarray}

In the ontological models framework, we have
\begin{eqnarray}
p(a,b|x,y;P)&=&\sum_{\lambda\in\Lambda}\xi(a,b|x,y;\lambda)\mu(\lambda|P),\nonumber\\
&=&\sum_{\lambda\in\Lambda}\xi(a|x,y,b;\lambda)\xi(b|x,y;\lambda)\mu(\lambda|P),\nonumber\\
&=&\sum_{\lambda\in\Lambda}\xi(b|x,y,a;\lambda)\xi(a|x,y;\lambda)\mu(\lambda|P),\nonumber\\
\end{eqnarray}

Bell's assumption of \emph{local causality} which defines local hidden variable models then requires the following conditional independences given the ontic state $\lambda$ of the system
that is sampled from the source in a particular run of the experiment:

{\bf Parameter independence}, namely, the independence of one party's measurement outcome from the other party's measurement setting,
\begin{eqnarray}\label{eq:paramind}
\xi(b|x,y;\lambda)&=&\xi(b|y;\lambda),\nonumber\\
\xi(a|x,y;\lambda)&=&\xi(a|x;\lambda),\text{ and}
\end{eqnarray}
 
{\bf Outcome independence}, namely, the independence of one party's measurement outcome from the other party's measurement outcome,
\begin{eqnarray}\label{eq:outcomeind}
\xi(a|x,y,b;\lambda)&=&\xi(a|x,b;\lambda) \text{ (from parameter independence)}=\xi(a|x;\lambda),\nonumber\\
\xi(b|x,y,a;\lambda)&=&\xi(b|y,a;\lambda) \text{ (from parameter independence)}=\xi(b|y;\lambda).
\end{eqnarray}

These assumptions are motivated by the fact that spacelike separation should, in an ontological model, result in independence of the statistics in one lab from the 
statistics in the other lab if the full description of the system, its ontic state $\lambda$ sampled at the source, is specified. The correlations between the 
two labs arise purely due to ignorance of the exact $\lambda$ that is sampled from one run of the experiment to the other: we presume there exists some distribution
$\mu(\lambda|P)$ - characterizing our ignorance of $\lambda$ even when we know the preparation $P$ - according to which $\lambda\in\Lambda$ is sampled.\footnote{Viewed from
the perspective of causal explanations of correlations \cite{woodspekkens}, Bell's assumption of local causality envisages a causal structure where $\lambda$ 
provides a common cause explanation of the correlations between Alice and Bob's statistics since the other possibility of a causal explanation 
-- namely, a direct causal relation between variables in Alice and Bob's labs -- is not available on account of spacelike separation. Violation of a 
Bell inequality then rules out such a causal explanation without fine-tuning. See Wood and Spekkens \cite{woodspekkens} for this approach to Bell's theorem, which
takes its inspiration from Reichenbach's principle.}

The conjunction of parameter independence and outcome independence (called local causality) results in
\begin{equation}
 p(a,b|x,y;P)=\sum_{\lambda\in\Lambda}\xi(a|x;\lambda)\xi(b|y;\lambda)\mu(\lambda|P),
\end{equation}
a mathematical condition often called \emph{factorizability}, since it requires a factorization of the joint probability 
of measurement outcomes given measurement settings and the ontic state $\lambda$ of the system, i.e. $\xi(a,b|x,y;\lambda)=\xi(a|x;\lambda)\xi(b|y;\lambda)$.

{\bf Local causality $\Rightarrow$ no-signalling (but not conversely):} It is easy to see that local causality at the ontological level provides a natural account of no-signalling at 
the operational level, since 
\begin{eqnarray}
p(a|x;P)&\equiv&\sum_bp(a,b|x,y;P)\nonumber\\
&=&\sum_{\lambda\in\Lambda}\xi(a|x;\lambda)\sum_b\xi(b|y;\lambda)\mu(\lambda|P)\nonumber\\
&=&\sum_{\lambda\in\Lambda}\xi(a|x;\lambda)\mu(\lambda|P)
\end{eqnarray}
is obviously independent of the choice of $y$ on account of factorizability. Similarly, 
\begin{eqnarray}
p(b|y;P)&\equiv&\sum_ap(a,b|x,y;P)\nonumber\\
&=&\sum_{\lambda\in\Lambda}\sum_a\xi(a|x;\lambda)\xi(b|y;\lambda)\mu(\lambda|P)\nonumber\\
&=&\sum_{\lambda\in\Lambda}\xi(b|y;\lambda)\mu(\lambda|P)
\end{eqnarray}
is independent of $x$. However, notwithstanding the apparent plausibility of the \emph{local causality} hypothesis as the ontological counterpart of the special relativistic prohibition of 
faster-than-light propagation of physical influences, it is a matter of experiment whether the hypothesis holds up to scrutiny
when tested. These experiments, often called {\em Bell tests}, look for violations of constraints (Bell inequalities) on $p(a,b|x,y;P)$ arising from the assumption of local causality.
Quantum theory predicts violations of Bell inequalities. When experimentally verified, such violations can be said to be a property of the experiment/Nature without requiring that a quantum 
model of the experiment/Nature hold.\footnote{See Refs.~\cite{loopholefree,loopholefree2,loopholefree3} for recent ``loophole-free'' Bell tests and Ref.~\cite{Bellreview} for a fairly comprehensive
review of Bell nonlocality.}

The CHSH inequality provides the simplest example of a Bell inequality, that is,
\begin{equation}
\sum_{abxy}\frac{1}{4}p(a,b|x,y;P)\delta_{a\oplus b, x.y}\leq\frac{3}{4},
\end{equation}
where the sum is over those entries $p(a,b|x,y;P)$ for which the input/output correlation $a\oplus b=x.y$ holds. If we denote measurement events by $a_0\equiv(a|x=0)$,
$a_1\equiv(a|x=1)$, $b_0\equiv(b|y=0)$, and $b_1\equiv(b|y=1)$, then asking that this correlation be satisfied amounts to the following equations:
\begin{eqnarray}
a_0\oplus b_0&=&0,\\
a_0\oplus b_1&=&0,\\
a_1\oplus b_0&=&0,\\
a_1\oplus b_1&=&1.
\end{eqnarray}
Note that adding up the first three equations gives $a_1\oplus b_1=0$, contrary to the fourth equation $a_1\oplus b_1=1$. Thus, at most only three of the four equations can be simultaneously 
satisfied. The $3/4$ upper bound in the Bell-CHSH inequality arises from this fact.

That this Bell inequality admits a quantum violation can be seen by choosing quantum states and measurements as follows:

\subsubsection{Optimal Quantum State}
Consider a maximally entangled state of two qubits (each of which is sent to one of the two parties):
\begin{equation}
 |\Psi\rangle=\frac{1}{\sqrt{2}}\left(|0\rangle |0\rangle + |1\rangle |1\rangle\right)
\end{equation}
where we choose the computational basis:

$$|0\rangle=\begin{pmatrix}1 \\ 0\end{pmatrix} \text{ and } |1\rangle=\begin{pmatrix}0 \\ 1\end{pmatrix}$$
Note that $|\Psi\rangle$ lives in the tensor product Hilbert space $\mathcal{H_{A}} \otimes \mathcal{H_{B}}$, where $\mathcal{H_{A}}$ is the Hilbert space of Alice's subsystem (qubit) and $\mathcal{H_{B}}$
is the Hilbert space of Bob's subsystem (qubit). We have omitted the tensor product symbol `$\otimes$' in $|\Psi\rangle$ but the tensor product is presumed. One can now write the
corresponding density matrix:
\begin{equation}
 \rho=|\Psi\rangle\langle\Psi|=\frac{1}{2}\begin{pmatrix} 1 & 0 & 0 & 1 \\0 & 0 & 0 & 0\\0 & 0 & 0 & 0\\1 & 0 & 0 & 1\end{pmatrix}
\end{equation}
\subsubsection{Optimal Quantum Measurements}
Let us denote the optimal measurements that Alice and Bob perform on their subsystems by $\{A_{0},A_{1}\}$ and $\{B_{0},B_{1}\}$ respectively. They make spin measurements on
their qubits:

\begin{eqnarray}
 A_{0}=\sigma_{1}=\begin{pmatrix}0 & 1\\1 & 0 \end{pmatrix},\quad A_{1}=\sigma_{3}=\begin{pmatrix}1 & 0\\0 & -1 \end{pmatrix}\\\nonumber\\
 B_{0}=\frac{\sigma_{1}+\sigma_{3}}{\sqrt{2}}=\frac{1}{\sqrt{2}}\begin{pmatrix}1 & 1\\1 & -1\end{pmatrix},\quad B_{1}=\frac{\sigma_{1}-\sigma_{3}}{\sqrt{2}}=\frac{1}{\sqrt{2}}\begin{pmatrix}-1 & 1\\1 & 1\end{pmatrix}
\end{eqnarray}
Note that the outcomes for any spin measurement are $\{+1,-1\}$, where we label $+1$ by $`0$' and $-1$ by $`1$'. The winning probability for the CHSH game given this quantum strategy is: 
\begin{equation}
 p_{win}^{Q}=\frac{1}{4}\sum_{a,b,x,y \in \{0,1\}}\delta_{a\oplus b,x.y}p(a,b|x,y;\rho),
\end{equation}
where
$$p(a,b|x,y;\rho)=\Tr(A_{x}^{a} \otimes B_{y}^{b}\rho)=\langle\Psi| A_{x}^{a} \otimes B_{y}^{b} |\Psi\rangle \equiv \langle A_{x}^{a} \otimes B_{y}^{b} \rangle,$$
and
\begin{eqnarray}
A_{x}^{a}=\frac{I+(-1)^a A_{x}}{2}\\
B_{y}^{b}=\frac{I+(-1)^b B_{y}}{2}
\end{eqnarray}
Clearly,
\begin{equation}
 \langle A_{x}^{a} \otimes B_{y}^{b} \rangle=\frac{1}{4}\langle I \otimes I+(-1)^b I\otimes B_{y}+(-1)^a A_{x} \otimes I+(-1)^{a\oplus b}A_{x}\otimes B_{y}\rangle.
\end{equation}
We have:
\begin{eqnarray}
 p_{win}^{Q}&=&\frac{1}{4}\sum_{a,b,x,y \in \{0,1\}}\delta_{a\oplus b,x.y}\langle A_{x}^{a} \otimes B_{y}^{b} \rangle\nonumber\\
&=&\frac{1}{4}\sum_{a,b,x,y \in \{0,1\}}\delta_{a\oplus b,x.y}\frac{1}{4}\langle I \otimes I +(-1)^b I \otimes B_{y}+(-1)^a A_{x} \otimes I+(-1)^{a\oplus b}A_{x}\otimes B_{y}\rangle\nonumber\\
&=&\frac{1}{2}\left(1+\frac{\langle A_{0} \otimes B_{0}+A_{0} \otimes B_{1}+A_{1} \otimes B_{0}-A_{1} \otimes B_{1}\rangle}{4}\right)\nonumber\\
&=&\frac{1}{2}\left(1+\frac{\langle CHSH\rangle}{4}\right),
\end{eqnarray}
where
$$\langle CHSH\rangle\equiv\langle A_{0} \otimes B_{0}+A_{0} \otimes B_{1}+A_{1} \otimes B_{0}-A_{1} \otimes B_{1}\rangle.$$
Consider spin measurement on Alice's qubit along $\hat{\alpha}=(\alpha_{1},\alpha_{2},\alpha_{3})$ axis (i.e., measurement of $A^{\alpha}=\vec{\sigma}.\hat{\alpha}$), and measurement 
on Bob's qubit along $\hat{\beta}=(\beta_{1},\beta_{2},\beta_{3})$ axis (i.e., measurement of $B^{\beta}=\vec{\sigma}.\hat{\beta}$). Of course, $\vec{\sigma}=(\sigma_{1},\sigma_{2},\sigma_{3})$,
where $\sigma_1$, $\sigma_2$, and $\sigma_3$ are the three Pauli matrices:
\begin{eqnarray}
\sigma_1&=&\begin{pmatrix}0 & 1\\1 & 0 \end{pmatrix}\\
\sigma_2&=&\begin{pmatrix}0 & -i\\i & 0 \end{pmatrix}\\
\sigma_3&=&\begin{pmatrix}1 & 0\\0 & -1 \end{pmatrix}
\end{eqnarray}
corresponding to spin measurements along the $X$, $Y$, and $Z$ axis respectively.
Now,
\begin{eqnarray}
\langle A^{\alpha} \otimes B^{\beta}\rangle=\Tr(A^{\alpha} \otimes B^{\beta} \rho)=\alpha_{1}\beta_{1}-\alpha_{2}\beta_{2}+\alpha_{3}\beta_{3}.
\end{eqnarray}
Denoting any spin measurement $\vec{\sigma}.\hat{\gamma}$ by the corresponding unit vector $(\gamma_1,\gamma_2,\gamma_3)$ along which the measurement is made, we have
\begin{eqnarray}
 A_{0}=\sigma_{1}\rightarrow(1,0,0),\quad A_{1}=\sigma_{3} \rightarrow(0,0,1)\\\nonumber\\
 B_{0}=\frac{\sigma_{1}+\sigma_{3}}{\sqrt{2}}\rightarrow \left(\frac{1}{\sqrt{2}},0,\frac{1}{\sqrt{2}}\right),\quad B_{1}=\frac{\sigma_{1}-\sigma_{3}}{\sqrt{2}}=\left(\frac{1}{\sqrt{2}},0,-\frac{1}{\sqrt{2}}\right).
\end{eqnarray}
Then
\begin{eqnarray}\nonumber
\langle CHSH\rangle&=&\langle A_{0} \otimes B_{0}\rangle+\langle A_{0} \otimes B_{1}\rangle+\langle A_{1} \otimes B_{0}\rangle-\langle A_{1} \otimes B_{1}\rangle\\\nonumber
&=&(1,0,0).(\frac{1}{\sqrt{2}},0,\frac{1}{\sqrt{2}})+(1,0,0).(\frac{1}{\sqrt{2}},0,-\frac{1}{\sqrt{2}})\\\nonumber
&&+(0,0,1).(\frac{1}{\sqrt{2}},0,\frac{1}{\sqrt{2}})-(0,0,1).(\frac{1}{\sqrt{2}},0,-\frac{1}{\sqrt{2}})\\
&=&2\sqrt{2},
\end{eqnarray}
and
\begin{equation}
\therefore p_{win}^{Q}=\frac{1}{2}+\frac{1}{2\sqrt{2}}\approx 0.85,
\end{equation}
violating the Bell inequality bound of $0.75$.

\section{Bridging the gap between Bell's theorem and the Kochen-Specker theorem}
Along with its foundational implications, Bell's theorem\cite{Bell64,Bell66,Bell76} has also been the subject of a lot of research activity in quantum information theory.
It would not be an exaggeration to say that the seeds for the quantum information age were sown with John Bell's demonstration of the nontrivial implications of
entanglement for the locally causal worldview that quantum theory forces us to abandon. Bell was motivated by the questions Einstein, Podolsky and Rosen raised in 
their seminal paper\cite{EPR}
questioning the completeness of quantum theory as a description of reality. Today Bell's theorem underlies many quantum information protocols, ranging from cryptography
to randomness generation \cite{Bellreview}.

On the other hand, the Kochen-Specker theorem has not seen similar explosion of research activity when it comes to its applications to quantum information theory. 
This is {\em despite} the fact, as pointed out often in the recent literature\cite{CF, CSW, AFLS}, that mathematically speaking, both Bell's theorem
and the Kochen-Specker theorem lead to a marginal problem, i.e. determining whether a given set of variables, carved up into various subsets, can admit a joint probability distribution,
given joint probability distributions for the subsets.

The reasons for the relative lack of impact of the Kochen-Specker theorem (as opposed to Bell's theorem) on the development of quantum information  
have to do with the idealizations that make the KS theorem less suited to experimental testability and hence applications to quantum information processing. 
The reasons are foundational 
since there is a conceptual gap between Bell's theorem and the Kochen-Specker theorem 
that is not well-recognized when thinking of both \emph{merely} mathematically in terms of marginal problems. This gap refers to the following points of 
contrast between the two theorems, notwithstanding their mathematical 
similarities:
\begin{enumerate}

 \item Bell's theorem is not applicable to a single party. It {\em necessarily} requires at least two parties or laboratories for its assumption of local causality to be applied. On the other hand,
 Kochen-Specker theorem does not need a multipartite scenario for its assumptions to be applicable. One might expect this to count in favour of the Kochen-Specker theorem's 
 broader applicability but there are significant conceptual hurdles to that.

 \item Bell's theorem is theory-independent: an experimental violation of a Bell inequality rules out local causality 
 irrespective of the particular operational theory that may seek to model the experiment. In particular, one does not need 
 to presume a Hilbert space description of the system (as in quantum theory). On the other hand, the Kochen-Specker 
 theorem is very much a result specific to quantum theory: it presumes that measurement outcomes are represented by projectors on
 a Hilbert space. Bridging this particular gap between the two theorems requires an operationalization of 
 the Kochen-Specker theorem, a task we take up in Chapter 6.
 
 \item Nonsignalling between the parties during a Bell test is a prerequisite for any observed violations of Bell inequalities to be
 taken as evidence against local causality. Otherwise the violations can be easily attributed to signalling. Likewise, an operational 
 equivalence between measurement procedures is a prerequisite for any proof of the KS theorem to count as evidence against KS-noncontextuality.
 However, while nonsignalling is guaranteed by a physical principle independent of quantum theory -- namely, the special relativistic 
 prohibition of faster-than-light communication -- there is no obvious candidate principle, short of appealing to quantum 
 theory itself, that guarantees operational equivalence
 between two measurement procedures in a scenario where these measurement procedures may well be carried out on the same system.

 \item Bell's theorem does not require the assumption of outcome determinism while the Kochen-Specker theorem does.
 We take up this issue in Chapter 5.
 
 \item A fundamental conceptual hurdle to experimentally testing the KS theorem 
 is the existence of Meyer-Kent-Clifton (MKC) type KS-noncontextual models \cite{MKC} which can simulate the predictions of any set of measurements leading to the 
 Kochen-Specker theorem as long as the measurements are not infinitely precise. In other words, the finite precision of real-world measurements is a loophole that lets MKC-type models 
 reproduce the quantum statistics without giving up on KS-noncontextuality. Such a conceptual hurdle does not apply to 
 Bell's theorem because it makes no commitment, unlike the KS theorem, regarding the representation of measurements. In the 
 Spekkens' approach to noncontextuality, the MKC criticism is circumvented by avoiding any use of Hilbert spaces at all 
 -- properties of which are crucial to the MKC criticism -- akin to discussions of local causality. What remains to be confronted
 in the Spekkens' approach is the fact that {\em exact} operational equivalences between experimental procedures are an idealization
 never realized in practice. We take up this issue in the final section of Chapter 6. 
 
\end{enumerate}

\section{Overview of the thesis}
Following this introductory chapter, we will take up the issues highlighted in the last section in the rest of the thesis.
In Chapters 2 and 3, we will take the first steps beyond the Kochen-Specker theorem by considering nonprojective 
quantum measurements and their contextuality. Chapter 4 will show the 
realizability of arbitrary joint measurability structures when nonprojective measurements are allowed.
In Chapter 5, we will show how the assumption of outcome determinism 
is not as innocuous in the KS theorem as it is in Bell's theorem (where it's unnecessary). 
Chapters 6 and 7 will pursue a full-fledged operationalization of the Kochen-Specker theorem 
that does not presume quantum theory. Chapter 8 will conclude with some final remarks on 
the research directions initiated in this thesis and some speculations on what's next.

\chapter{A first look at Specker's scenario}
\setlength{\epigraphwidth}{\textwidth}
\epigraph{\it It is no exaggeration to say that this theorem -- the
Kochen-Specker theorem -- is one of the deepest facts about the foundations of quantum theory. The story of how Specker first started down the
road which led to this result is quite wonderful. It shows that even in an era where ``shut up and calculate'' is the mantra of many researchers,
deep philosophical questions can still lead to great advances in our understanding of the world. It is a story that will warm the heart of anyone
who believes that physics should be pursued in a romantic style...[Specker] was led to the critical question: could God know what outcome
would have occurred had a different quantum measurement been done to
the one that was actually done, without getting into contradiction? The
answer, he found, was that He could not.
}{R.W. Spekkens, Ernst Paul Specker (1920-2011), Mind \& Matter Vol. 9(2), pp. 121-128.}

In 1960, Ernst Specker wrote an article on the logic of quantum theory \cite{Spe60}. In it he uses a parable to illustrate the non-Boolean nature of quantum logic. Liang, Spekkens, and Wiseman \cite{LSW}
studied a modern rendition of Specker's parable with three unsharp (nonprojective) quantum measurements which are pairwise jointly measurable but not triplewise so. In this 
chapter, we consider the question of whether such measurements exhibit contextuality, something which was conjectured not to be the case in Ref.~\cite{LSW}.

This chapter is based on work reported in Ref.~\cite{KG}.

\section{Introduction}
Quantum theory does not admit Bell-local or KS-noncontextual ontological models. This is manifest in 
the Bell-nonlocality \cite{EPR, Bell64} and KS-contextuality \cite{KS67} of the theory. Both these features arise---at a
mathematical level---from the lack of a global joint probability distribution over measurement outcomes that can reproduce the
measurement statistics predicted by quantum theory. Traditionally, KS-contextuality has been shown with respect to projective measurements 
for Hilbert spaces of dimension three or greater \cite{KS67, peres, Clifton, Mermin2theorems,
Cab1,Cab2, KRK, KCBS, Oh}.

For projective measurements, KS-noncontextuality is the assumption that in an ontological model of quantum theory the outcome of a measurement $A$ is independent of 
whether it is performed together with a measurement $B$, where $[A,B]=0$, or with measurement $C$, where $[A,C]=0$ and $B$ 
and $C$ are not compatible, i.e. $[B,C]\neq0$. $B$ and $C$ provide contexts for measurement of $A$. A qubit cannot yield a
proof of KS-contextuality because, having only a maximum of two mutually orthogonal rank 1 projectors, it does not allow projective measurements $A,B,C$ such that $[A,B]=0, [A,C]=0$, and  $[B,C] \neq 0$.
The existence of a triple of such projective measurements is necessary for any proof of KS-contextuality. While a state-independent 
proof of KS-contextuality holds for any state preparation, a state-dependent proof requires a special choice of the prepared 
state. 
The minimal state-independent proof of KS-contextuality requires a qutrit and $13$ projectors \cite{Oh, cabmin}.
The minimal state-dependent proof \cite{KCBS, KRK}, first given by Klyachko et al., requires a qutrit and five projectors 
(Fig.~\ref{kcbs}).\footnote{Note, however, that the state-independent proof of Ref.\cite{Oh} is \emph{not} a traditional KS-uncolourability proof like that of Refs.~\cite{KS67,Cab1}. Indeed,
the KS-uncolourability proofs \cite{KS67,Cab1} are the strongest demonstrations of KS-contextuality because they need not rely on statistical inequalities: they concern scenarios where KS-noncontextual
value assignments are simply {\em impossible}. On the other hand, the proof in Ref.\cite{Oh} considers a scenario where KS-noncontextual value assignments {\em are possible}, but one can still identify 
a statistical inequality that {\em all quantum states} would violate for the given choice of projectors, hence the proof is state-independent in this sense. This state-independence, however, is weaker than for KS-uncolourability 
proofs since it cannot be guaranteed for operational theories other than quantum theory, while state-independence can be guaranteed for arbitrary operational theories when it comes to KS-uncolourability proofs.
This is easy to see: since there are no KS-noncontextual value assignments, any measurement noncontextual probability assignment at all will lead to KS-contextuality in KS-uncolourability proofs,
regardless of whether such an assignment arises from quantum states and measurements.
The state-dependent proofs of KS-contextuality such as the one in Ref.\cite{KCBS} rely on scenarios where KS-noncontextual value assignments {\em are possible} but they are constrained by 
statistical inequalities that can be violated by {\em a careful choice of quantum state}. Such state-dependent proofs are a weaker demonstration of KS-contextuality that both the KS-uncolourability
proofs and the state-independent proof in Ref.\cite{Oh}.}
Thus a qutrit is the simplest quantum system that allows a proof of KS-contextuality, both state-independent
and state-dependent. 

However, generalizations of KS-noncontextuality for a qubit have been considered earlier 
\cite{Cab3, busch, caves} in a manner that is conceptually distinct from the approach we adopt here. These generalizations typically apply the assumption of 
outcome determinism to unsharp (nonprojective) measurements. Such an application of this assumption cannot be justified from noncontextuality alone, as amply demonstrated in
Ref.~\cite{odum}. Our approach builds 
upon the work of Spekkens \cite{genNC} and Liang et. al. \cite{LSW}, and we consider generalized-noncontextuality 
proposed by Spekkens \cite{genNC} as the appropriate notion of noncontextuality for unsharp qubit measurements. 
This notion of noncontextuality allows one to consider outcome-indeterministic response functions for unsharp measurements in 
the ontological model, indeed it requires them\cite{odum}, while any naive application of KS-noncontextuality to POVMs would insist on outcome-deterministic response functions. 
Since we have already introduced the definitions of noncontextuality in Chapter 1, we will not repeat them here. 

We define a compatibility scenario as a collection of subsets, called `compatibility contexts'\footnote{Or, simply, `contexts' when it is clear that the type of context under consideration is a compatibility or joint measurability context.},
of the set of all measurements. A compatibility context refers to measurements that can be jointly implemented. 
\begin{figure}
\centering
\includegraphics[scale=0.3]{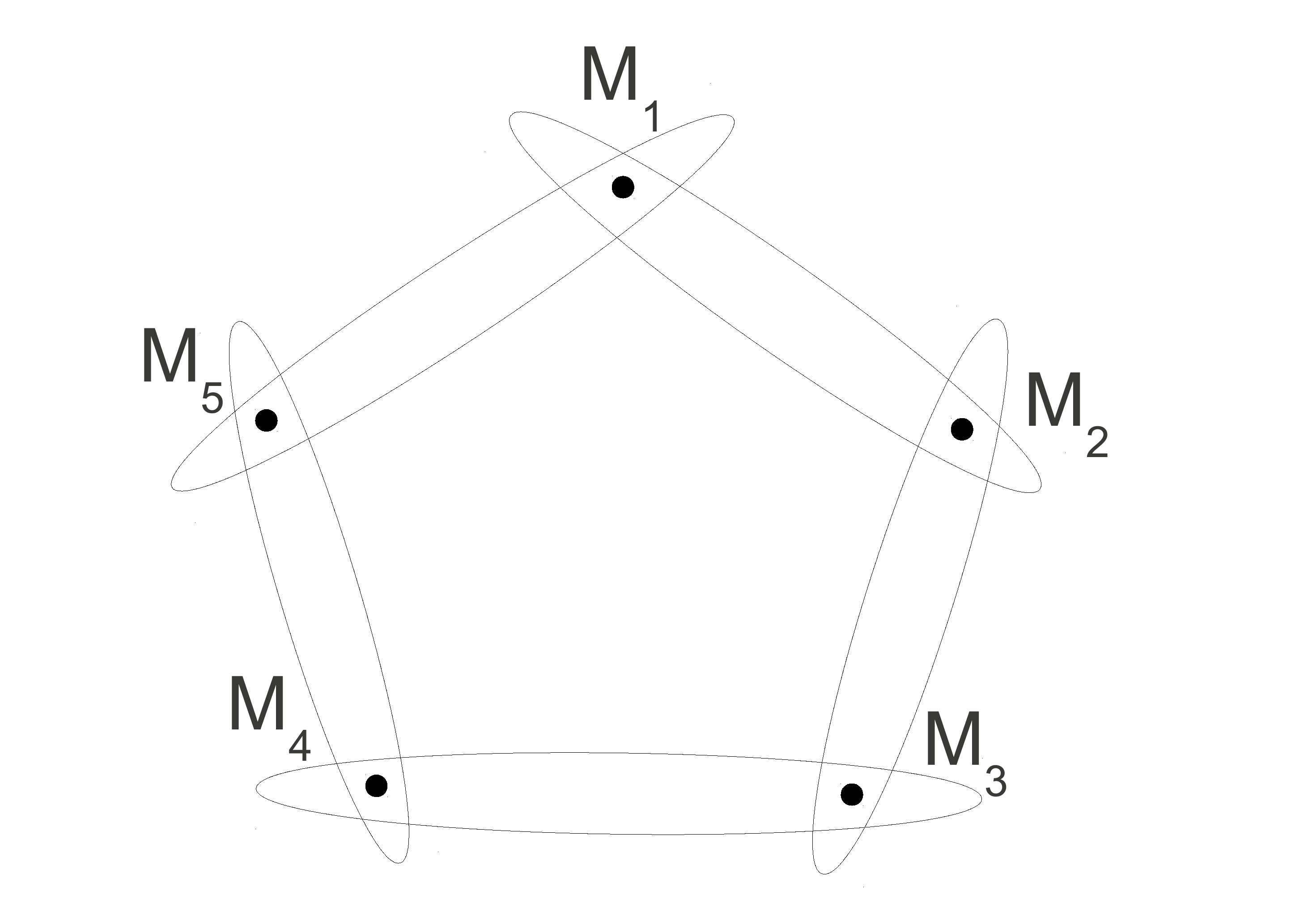}
\caption{The KCBS \cite{KCBS} compatibility scenario. The vertices represent the measurements and edges represent compatibility contexts (of jointly measurable observables).}
\label{kcbs}
\end{figure}
Conceptually, the simplest possible compatibility scenario capable of exhibiting KS-contextuality, first considered by Specker \cite{Spe60} (Fig.~\ref{specker}), requires three two-valued measurements, $\{M_1,M_2,M_3\}$, to allow for three non-trivial 
compatibility contexts: $\{\{M_1,M_2\},\{M_1,M_3\},\{M_2,M_3\}\}$. Any other choice of contexts will be trivially KS-noncontextual, e.g., $\{\{M_1,M_2\},\{M_1,M_3\}\}$ is KS-noncontextual because the joint probability distribution
$$p(M_1,M_2,M_3)\equiv p(M_1,M_2)p(M_1,M_3)/p(M_1)$$ reproduces the marginal statistics for $\{M_1,M_2\}$ and $\{M_1,M_3\}$.
In Specker's scenario, measurement statistics that always shows perfect anticorrelation between any two measurements sharing a context is necessarily KS-contextual. On assigning outcomes $\{+1,-1\}$ 
KS-noncontextually to the three measurements $\{M_1,M_2,M_3\}$,
it becomes obvious that the maximum number of anticorrelated contexts possible in a single assignment is two, e.g., for the assignment $\{M_1\rightarrow+1,M_2\rightarrow-1,M_3\rightarrow+1\}$,
$\{M_1,M_2\}$ and $\{M_2,M_3\}$ are anticorrelated but $\{M_1,M_3\}$ is not. This puts an upper bound of $\frac{2}{3}$ on the probability of anticorrelation when a context is chosen uniformly at random.
Specker's scenario precludes projective measurements because a set of three pairwise commuting projective measurements is trivially jointly measurable and cannot show any contextuality.
One may surmise that it represents a kind of contextuality that is not seen in quantum theory. However, as Liang et al. showed \cite{LSW}, Specker's scenario can be realized using noisy spin-1/2 observables. 
They showed that if one does not assume outcome-determinism for unsharp measurements and models them stochastically but noncontextually, then this noncontextual model 
for noisy spin-1/2 observables will obey a bound of $1-\frac{\eta}{3}$, where $\eta \in [0,1]$ is the sharpness associated with each observable.
Formally,
\begin{equation}
 R_3\equiv\frac{1}{3}\sum_{(ij)\in\{(12),(23),(13)\}}\textrm{Pr}(X_i\neq X_j|G_{ij})\leq 1-\frac{\eta}{3},
\end{equation}
where $\textrm{Pr}(X_i\neq X_j|G_{ij})$ is the probability of anticorrelation between the outcomes $X_i, X_j \in \{+1,-1\}$ of measurements $M_i$ and $M_j$,
respectively. $G_{ij}$ denotes the joint implementation of $M_i$ and $M_j$. We will refer to this noncontextuality inequality as the \emph{LSW (Liang-Spekkens-Wiseman) inequality}. Note that the LSW inequality
is \emph{not} a KS-noncontextuality inequality, for which the bound would be $\frac{2}{3}$. A violation of the LSW inequality will rule out Spekkens' 
generalized notion of noncontextuality and, 
by implication, KS-noncontextuality. 
After giving examples of orthogonal and trine spin-axes that did not seem to show a violation of this inequality, Liang et al.~left open the question of whether such a violation exists \cite{LSW}.
They conjectured that all such triples of POVMs will admit a noncontextual model \cite{genNC}, i.e. the LSW inequality will not be violated. We settle this conjecture by showing that, contrary to 
their expectation\cite{LSW}, the LSW inequality does admit a quantum violation.
\begin{figure}
\centering
\includegraphics[scale=0.3]{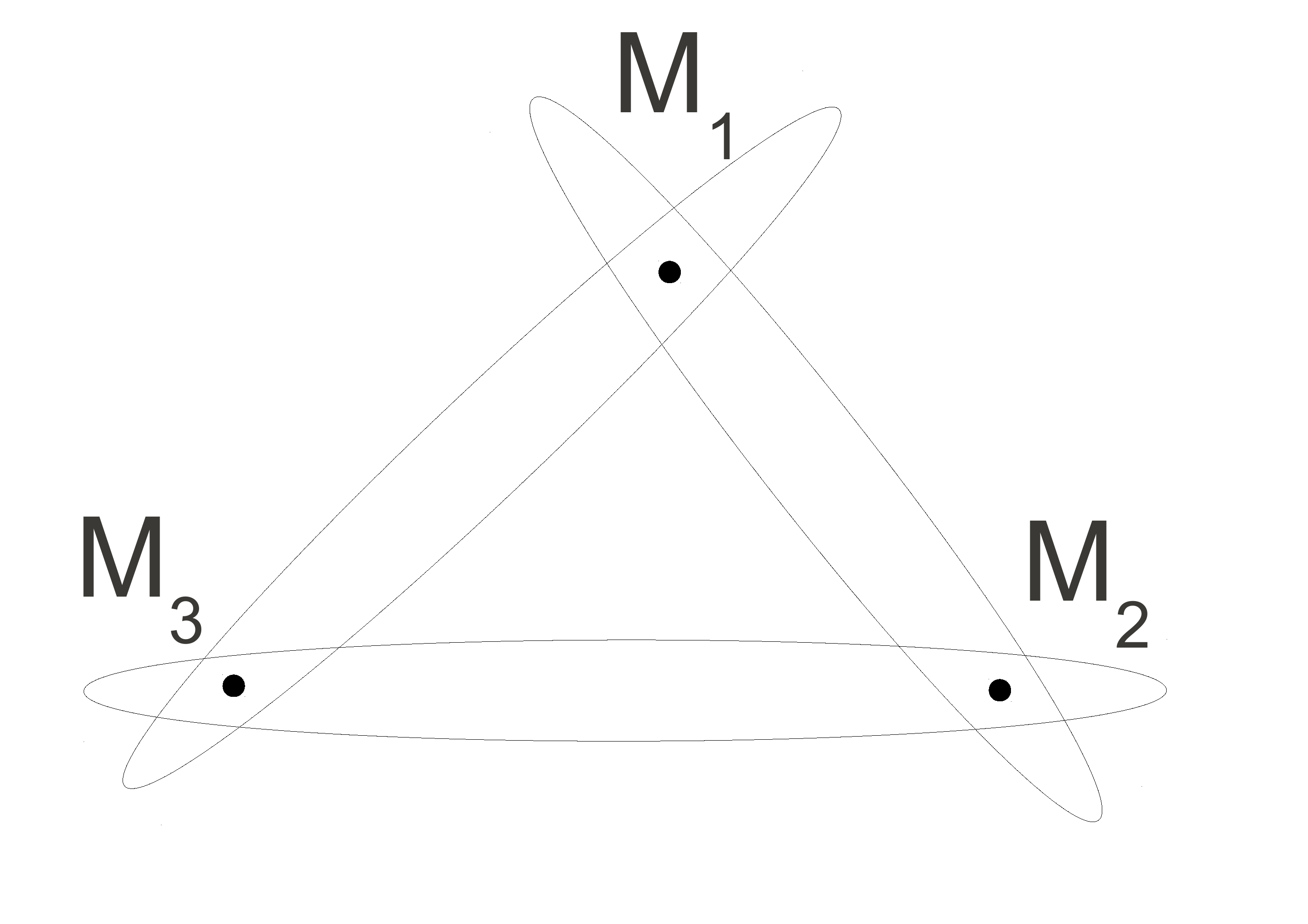}
\caption{Specker's scenario.}
\label{specker}
\end{figure}
\section{The LSW inequality}
The three POVMs considered, $M_k=\{E^k_+, E^k_-\}$, $k\in \{1,2,3\}$, are noisy spin-$\frac{1}{2}$ observables of the form 
\begin{equation}\label{qubitPOVMs}
 E^k_{\pm}\equiv \frac{1}{2}I\pm \frac{\eta}{2}\vec{\sigma}.\hat{n}_k, \quad 0\leq \eta \leq 1.
\end{equation}
That is,
\begin{equation}
 E^k_{\pm}=\frac{1-\eta}{2}I+\eta\Pi^k_{\pm},
\end{equation}
where $\Pi^k_{\pm}=\frac{1}{2}(I\pm\vec{\sigma}.\hat{n}_k)$ are the corresponding projectors. So $E^k_{\pm}$ are noisy versions of the projectors $\Pi^k_{\pm}$, and the observable $\{E^k_+,E^k_-\}$ is therefore a noisy 
(or unsharp) version of the projective measurement $P_k=\{\Pi^k_+,\Pi^k_-\}$ (for $k\in \{1,2,3\}$): $$\{E^k_+,E^k_-\}=\eta\{\Pi^k_+,\Pi^k_-\}+(1-\eta)\{I/2,I/2\}.$$
The LSW inequality concerns the following quantity:
\begin{equation}
R_3\equiv\frac{1}{3}\sum_{(ij)\in\{(12),(23),(13)\}}p(X_i\neq X_j|G_{ij}) 
\end{equation}
where $X_i, X_j \in \{+1,-1\}$ label measurement outcomes for $M_i$ and $M_j$, respectively. The joint measurement POVM for the context $\{M_i, M_j\}$ is denoted by $G_{ij}\equiv\{G_{++}^{ij},G_{+-}^{ij},G_{-+}^{ij},G_{--}^{ij}\}$.
$G^{ij}_{X_i,X_j}$ is the joint measurement effect corresponding to the effects $E^i_{X_i}$ and $E^j_{X_j}$, i.e.
$$\sum_{X_j} G^{ij}_{X_i,X_j}=E^i_{X_i}\text{ and }\sum_{X_i} G^{ij}_{X_i,X_j}=E^j_{X_j}.$$
$R_3$ is the average probability of anticorrelation when one of the three contexts is chosen uniformly at random.

Under the assumption of noncontextuality for these noisy spin-1/2 observables, the following bound on $R_3$ holds (cf. \cite{LSW}, Section 7.3)\footnote{We provide a separate derivation of this and other inequalities for Specker's
scenario in Chapter 5.}:
\begin{equation}\label{ncineq}
R_3\leq 1-\frac{\eta}{3} 
\end{equation}
The question is: Does there exist a triple of noisy spin-1/2 observables that will violate this inequality, perhaps for some specific state-preparation? 
\section{Joint measurability constraints}
Testing the LSW inequality for a quantum mechanical violation requires a special kind of joint measurability (or compatibility), denoted by compatibility contexts $\{\{M_1,M_2\}$, $\{M_2,M_3\}$, $\{M_1,M_3\}\}$,
i.e. pairwise joint measurability but no triplewise joint measurability. One can achieve this joint measurability criterion by adding noise to projective measurements along three different axes.
For a given choice of measurement directions $\{\hat{n}_1,\hat{n}_2,\hat{n}_3\}$ in Eq.~(\ref{qubitPOVMs}), denoting $\hat{n}_i.\hat{n}_j\equiv \cos \theta_{ij}$, a sufficient condition for this kind of joint measurability is
\begin{equation}\label{jointcriterion}
 \eta_l<\eta\leq \eta_u
\end{equation}
where
\begin{equation}\label{lower}
\eta_l=\frac{1}{3} \max_{X_1,X_2,X_3 \in \{\pm1\}} \left\{\sqrt{3+2\sum_{k,l \in \{1,2,3\}, k<l}X_kX_l\cos \theta_{kl}} \right\}
\end{equation}
and 
\begin{equation}\label{upper}
\eta_u=\min_{(ij)\in\{(12),(23),(13)\}}\left\{\frac{1}{\sqrt{1+|\sin \theta_{ij}|}}\right\}
\end{equation}
We note that this condition is necessary and sufficient for the special case of trine ($\hat{n}_i.\hat{n}_j=-1/2$) and orthogonal ($\hat{n}_i.\hat{n}_j=0$) spin axes.
These conditions are obtained as special cases of the more general joint measurability conditions we now prove, based on Refs.~\cite{LSW} and \cite{YuOh}.

{\bf Bounds on $\eta$ from joint measurability: }
In Appendix F of Ref.~\cite{LSW}, Theorem 13, the authors obtain necessary and sufficient conditions for joint measurability of noisy spin-1/2 observables. We note that the claimed necessary condition
in the aforementioned theorem is incorrect, while the sufficient condition holds. Hence we prove a necessary condition for joint measurability that 
we use for triplewise joint measurability, thereby correcting the argument for necessity made by Liang et al.:

\begin{theorem}
 Given a set of qubit POVMs, $\{\{E^k_{X_k}: X_k \in \{+1,-1\}\}|k \in \{1\dots N\}\}$, of the form 
\begin{equation}
 E^k_{X_k}=\frac{1}{2}I+\frac{1}{2}\vec{\sigma}.X_k\eta\hat{n}_k,
\end{equation}
and defining $2^N$ 3-vectors
\begin{equation}
 \vec{m}_{X_1\dots X_N}\equiv \sum_{k=1}^{N}X_k\hat{n}_k,
\end{equation}
a necessary condition for all the POVMs to be jointly measurable is that
\begin{equation}
 \eta \leq \frac{1}{N} \max_{X_1\dots X_N} \{|\vec{m}_{X_1\dots X_N}|\},
\end{equation}
and a sufficient condition is that 
\begin{equation}
 \eta \leq \frac{2^N}{\sum_{X_1\dots X_N}|\vec{m}_{X_1\dots X_N}|}.
\end{equation}
\end{theorem}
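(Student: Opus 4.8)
The plan is to work throughout in the Bloch representation, writing every qubit effect as $F=\beta I+\vec{b}\cdot\vec\sigma$ and using the standard positivity criterion that $F\geq 0$ iff $\beta\geq|\vec{b}|$ (and $F\leq I$ iff $\beta+|\vec{b}|\leq 1$). A joint measurement of the $N$ POVMs is, by definition, a single POVM $\{G_{\vec X}\}$ with outcomes $\vec X=(X_1,\dots,X_N)\in\{\pm1\}^N$, satisfying $G_{\vec X}\geq 0$, $\sum_{\vec X}G_{\vec X}=I$, and whose marginals reproduce the given effects: $\sum_{\vec X:\,k\text{-th}=X_k}G_{\vec X}=E^k_{X_k}$. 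Writing $G_{\vec X}=\beta_{\vec X}I+\vec b_{\vec X}\cdot\vec\sigma$, the normalization becomes $\sum_{\vec X}\beta_{\vec X}=1$ and $\sum_{\vec X}\vec b_{\vec X}=\vec 0$, so the whole problem reduces to a finite-dimensional feasibility question for the scalars $\beta_{\vec X}$ and the vectors $\vec b_{\vec X}$.

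For necessity, I would first take the signed marginal: subtracting the $X_k=-1$ marginal from the $X_k=+1$ one gives $\sum_{\vec X}X_k\,G_{\vec X}=E^k_+-E^k_-=\eta\,\hat n_k\cdot\vec\sigma$, hence at the level of Bloch vectors $\sum_{\vec X}X_k\,\vec b_{\vec X}=\eta\,\hat n_k$ for each $k$. The right linear combination to take next is to dot the $k$-th relation with $\hat n_k$ and sum over $k$: the left side collapses to $\sum_{\vec X}\big(\sum_k X_k\hat n_k\big)\cdot\vec b_{\vec X}=\sum_{\vec X}\vec m_{\vec X}\cdot\vec b_{\vec X}$, while the right side gives $\eta\sum_k|\hat n_k|^2=\eta N$. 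Cauchy--Schwarz together with positivity $|\vec b_{\vec X}|\leq\beta_{\vec X}$ and normalization $\sum_{\vec X}\beta_{\vec X}=1$ then yields $\eta N=\sum_{\vec X}\vec m_{\vec X}\cdot\vec b_{\vec X}\leq(\max_{\vec X}|\vec m_{\vec X}|)\sum_{\vec X}\beta_{\vec X}=\max_{\vec X}|\vec m_{\vec X}|$, which is exactly the claimed bound. The only delicate point is the choice of combination: weighting each marginal by its own $\hat n_k$ (rather than by a single fixed configuration) is what places the factor $N$ on the correct side.

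For sufficiency, I would exhibit an explicit joint POVM. The natural vector part is the symmetric one $\vec b_{\vec X}=\tfrac{\eta}{2^N}\vec m_{\vec X}$, which reproduces the vector marginals automatically, since summing over the free indices kills every $\hat n_l$ with $l\neq k$ and leaves $\tfrac{\eta}{2}X_k\hat n_k$. Positivity then demands $\beta_{\vec X}\geq\tfrac{\eta}{2^N}|\vec m_{\vec X}|$, so the remaining task is to choose identity weights $\beta_{\vec X}$ that dominate these norms while meeting the scalar marginal constraints $\sum_{X_k\text{ fixed}}\beta_{\vec X}=\tfrac12$ and $\sum_{\vec X}\beta_{\vec X}=1$. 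Here the antipodal symmetry $\vec m_{-\vec X}=-\vec m_{\vec X}$ is the key: it makes the required slack equal across the two halves of each marginal, so I can set $\beta_{\vec X}=\tfrac{\eta}{2^N}|\vec m_{\vec X}|+\tfrac{r}{2^{N-1}}$ with $r=\tfrac12\big(1-\tfrac{\eta}{2^N}\sum_{\vec Y}|\vec m_{\vec Y}|\big)$, which satisfies every constraint, sums to $I$, and is manifestly positive precisely when $r\geq 0$, i.e.\ when $\eta\leq 2^N/\sum_{\vec X}|\vec m_{\vec X}|$.

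I expect the sufficiency step to be the main obstacle, since it is where one must actually build a valid joint POVM rather than merely extract a scalar inequality: the difficulty is the feasibility of the identity weights $\beta_{\vec X}$, reconciling the lower bounds forced by positivity with the transportation-type marginal constraints. The two realizations that make it go through are (i) fixing the vector parts proportional to $\vec m_{\vec X}$ so that the vector marginals decouple entirely from the choice of $\beta_{\vec X}$, and (ii) exploiting $\vec m_{-\vec X}=-\vec m_{\vec X}$ so that a uniform slack distribution suffices. I would finish by verifying the complementary bound $\beta_{\vec X}+|\vec b_{\vec X}|\leq 1$, so that each $G_{\vec X}\leq I$; this holds comfortably because the weights are of order $2^{-N}$ while $|\vec m_{\vec X}|\leq N$, and is a routine check rather than a genuine difficulty.
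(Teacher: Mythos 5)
Your necessity argument is correct and is essentially the paper's own proof translated into Bloch-vector language: the paper extracts $\eta=\frac{1}{2N}\sum_{k,X_k}\Tr[(\vec\sigma\cdot X_k\hat n_k)E^k_{X_k}]$, substitutes the joint POVM, and bounds $\Tr[(\vec\sigma\cdot\hat m_{\vec X})E_{\vec X}]\le\Tr[E_{\vec X}]$ before invoking normalization and convexity --- which is word for word your chain $\vec m_{\vec X}\cdot\vec b_{\vec X}\le|\vec m_{\vec X}|\,\beta_{\vec X}\le(\max_{\vec X}|\vec m_{\vec X}|)\,\beta_{\vec X}$ followed by $\sum_{\vec X}\beta_{\vec X}=1$.

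The one genuine difference is the sufficiency part: the paper does not prove it at all, but simply cites Appendix F of the Liang--Spekkens--Wiseman paper. Your explicit construction fills that gap correctly. Setting $\vec b_{\vec X}=\tfrac{\eta}{2^N}\vec m_{\vec X}$ reproduces the vector marginals, the antipodal symmetry $|\vec m_{-\vec X}|=|\vec m_{\vec X}|$ makes $\sum_{\vec X:\,X_k\ \mathrm{fixed}}|\vec m_{\vec X}|=\tfrac12\sum_{\vec X}|\vec m_{\vec X}|$ independent of $k$ and of the sign of $X_k$, so your weights $\beta_{\vec X}=\tfrac{\eta}{2^N}|\vec m_{\vec X}|+\tfrac{r}{2^{N-1}}$ do satisfy all scalar marginal constraints, and positivity is equivalent to $r\ge0$, i.e.\ to the stated sufficient condition. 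It is worth noting that the choice of a nonuniform $\beta_{\vec X}$ is essential here: the uniform choice $\beta_{\vec X}=2^{-N}$ (which is what the analogous Clifford-algebra construction in Chapter 4 uses) would only yield the weaker sufficient condition $\eta\le1/\max_{\vec X}|\vec m_{\vec X}|$. Also, your final check that $G_{\vec X}\le I$ is automatic rather than merely routine, since $I-G_{\vec X}=\sum_{\vec Y\ne\vec X}G_{\vec Y}\ge0$ once positivity and normalization hold.
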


\textbf{Proof.} We will only prove the necessary condition and refer the reader to Ref.\cite{LSW}, Appendix F, for proof of the sufficient condition. Note that 
$\eta=\Tr\left[(\vec{\sigma}.X_k\hat{n}_k)E^k_{X_k}\right]$. Since this holds $\forall X_k,k$, we have
\begin{equation}
 \eta=\frac{1}{2N}\sum_{k=1}^{N}\sum_{X_k}\Tr\left[(\vec{\sigma}.X_k\hat{n}_k)E^k_{X_k}\right]
\end{equation}
If all the POVMs are jointly measurable, then we must necessarily have a joint POVM $\{E_{X_1\dots X_N}\}$ such that 
\begin{equation}
 E^k_{X_k}=\sum_{X_1\dots X_{k-1},X_{k+1}\dots X_N}E_{X_1\dots X_N}.
\end{equation}
Then,
\begin{equation}
 \eta=\frac{1}{2N}\sum_{X_1\dots X_N}\Tr\left[\left(\vec{\sigma}.\sum_{k=1}^{N}X_k\hat{n}_k\right)E_{X_1\dots X_N}\right],
\end{equation}
and using $\hat{m}_{X_1\dots X_N}\equiv\vec{m}_{X_1\dots X_N}/|\vec{m}_{X_1\dots X_N}|$, we have
\begin{equation}
 \eta=\frac{1}{2N}\sum_{X_1\dots X_N}|\vec{m}_{X_1\dots X_N}|\Tr\left[\left(\vec{\sigma}.\hat{m}_{X_1\dots X_N}\right)E_{X_1\dots X_N}\right].
\end{equation}
Further,
\begin{equation}
 \Tr\left[\left(\vec{\sigma}.\hat{m}_{X_1\dots X_N}\right)E_{X_1\dots X_N}\right]\leq \Tr\left[E_{X_1\dots X_N}\right],
\end{equation}
which yields the inequality
\begin{equation}
 \eta \leq \frac{1}{2N}\sum_{X_1\dots X_N}|\vec{m}_{X_1\dots X_N}|\Tr\left[E_{X_1\dots X_N}\right].
\end{equation}
Now, $\sum_{X_1\dots X_N} E_{X_1\dots X_N}=I$, and therefore, 
\begin{equation}
 \sum_{X_1\dots X_N} \frac{1}{2} \Tr\left[E_{X_1\dots X_N}\right]=1.
\end{equation}
Also, $0\leq\frac{1}{2} \Tr\left[E_{X_1\dots X_N}\right]\leq1$, so we have, by convexity of the set $\left\{\frac{1}{2} \Tr\left[E_{X_1\dots X_N}\right]\right\}_{X_1\dots X_N}$,
\begin{equation}
 \eta \leq \frac{1}{N}\max_{X_1\dots X_N}\left\{|\vec{m}_{X_1\dots X_N}|\right\},
\end{equation}
which is a necessary condition for joint measurability. For $N=3$ we obtain the necessary condition for triplewise joint measurability which we use for computing $\eta_l$.
The necessary and sufficient condition for pairwise joint measurability is given by
\begin{equation}
 1+\eta^4(\hat{n}_i.\hat{n}_j)^2-2\eta^2\geq 0, \quad \forall (ij)\in\{(12),(13),(23)\}.
\end{equation}
This is obtained as a special case, for the present problem, of the more general necessary and sufficient condition for joint measurability of pairs of unsharp qubit observables obtained in Ref. \cite{YuOh}.
Using $\hat{n}_i.\hat{n}_j=\cos \theta_{ij}$, this inequality becomes
\begin{equation}
 \left(\eta^2-\frac{1}{1-|\sin \theta_{ij}|}\right)\left(\eta^2-\frac{1}{1+|\sin \theta_{ij}|}\right)\geq 0.
\end{equation}
Since $0\leq \eta \leq1$, the necessary and sufficient condition for pairwise joint measurability becomes
\begin{equation}
 \eta \leq \min_{(ij)\in\{(12),(23),(13)\}} \left\{ \frac{1}{\sqrt{1+|\sin \theta_{ij}|}} \right\},
\end{equation}
which is used to compute $\eta_u$.

{\it Orthogonal spin axes:} $\hat{n}_i.\hat{n}_j=0$ for all $(ij) \in \{(12),(13),(23)\}$. The necessary and sufficient joint measurability condition is 
\begin{equation}
 \frac{1}{\sqrt{3}}<\eta\leq\frac{1}{\sqrt{2}}.
\end{equation}
\\
{\it Trine spin axes:} $\hat{n}_i.\hat{n}_j=-1/2$ for all $(ij) \in \{(12),(13),(23)\}$. The necessary and sufficient joint measurability condition is
\begin{equation}
 \frac{2}{3}<\eta\leq\sqrt{3}-1.
\end{equation}

{\bf Joint measurement effects: }
We construct the joint measurement POVM, $$G_{ij}=\{G^{ij}_{++},G^{ij}_{+-},G^{ij}_{-+},G^{ij}_{--}\},$$ such that the given POVMs, $M_i=\{E^i_+,E^i_-\}$ and $M_j=\{E^j_+,E^j_-\}$, are recovered 
as marginals, i.e., $\sum_{X_j} G^{ij}_{X_i,X_j}=E^i_{X_i}$, $\sum_{X_i} G^{ij}_{X_i,X_j}=E^j_{X_j}$, $0\leq G^{ij}_{X_i,X_j}\leq I$, and $\sum_{X_i, X_j} G^{ij}_{X_i,X_j}=I$, where $X_i, X_j \in \{+1,-1\}$.
The joint measurement POVM has the following general form:

\begin{eqnarray}\label{jointbeg}
 G^{ij}_{++}&=&\frac{1}{2}\left[\frac{\alpha_{ij}}{2}I+\vec{\sigma}.\frac{1}{2}\left(\eta(\hat{n}_i+\hat{n}_j)-\vec{a}_{ij}\right)\right],\\
 G^{ij}_{+-}&=&\frac{1}{2}\left[\left(1-\frac{\alpha_{ij}}{2}\right)I+\vec{\sigma}.\frac{1}{2}\left(\eta(\hat{n}_i-\hat{n}_j)+\vec{a}_{ij}\right)\right],\\
 G^{ij}_{-+}&=&\frac{1}{2}\left[\left(1-\frac{\alpha_{ij}}{2}\right)I+\vec{\sigma}.\frac{1}{2}\left(\eta(-\hat{n}_i+\hat{n}_j)+\vec{a}_{ij}\right)\right],\\
 G^{ij}_{--}&=&\frac{1}{2}\left[\frac{\alpha_{ij}}{2}I+\vec{\sigma}.\frac{1}{2}\left(\eta(-\hat{n}_i-\hat{n}_j)-\vec{a}_{ij}\right)\right],\label{jointend}
\end{eqnarray}
where $I$ is the $2\times2$ identity matrix, $\vec{\sigma}=(\sigma_x,\sigma_y,\sigma_z)$ are the $2\times2$ Pauli matrices, $\alpha_{ij} \in \mathbb{R}$, and $\vec{a}_{ij} \in \mathbb{R}^3$. The necessary and sufficient conditions for these to be valid 
qubit effects, $0\leq G^{ij}_{X_i,X_j}\leq I$, $\forall X_i,X_j \in \{+1,-1\}$, are equivalent to the following inequalities \cite{heinosaari},
\begin{equation}\label{valid1}
\sqrt{2\eta^2(1+\hat{n}_i.\hat{n}_j)+|\vec{a}_{ij}|^2+2\eta|(\hat{n}_i+\hat{n}_j).\vec{a}_{ij}|}\leq\alpha_{ij}
\end{equation}
\begin{equation}\label{valid2}
\alpha_{ij}\leq2-\sqrt{2\eta^2(1-\hat{n}_i.\hat{n}_j)+|\vec{a}_{ij}|^2+2\eta|(\hat{n}_i-\hat{n}_j).\vec{a}_{ij}|},
\end{equation}
where $\eta_l<\eta\leq \eta_u$.
The joint measurement effects corresponding to anticorrelation sum to
\begin{equation}\label{anticorr}
 G^{ij}_{+-}+G^{ij}_{-+}=\left(1-\frac{\alpha_{ij}}{2}\right)I+\frac{1}{2}\vec{\sigma}.\vec{a}_{ij}.
\end{equation}
We now come to the construction of the joint measurement POVM and derivation of the necessary and sufficient condition for its validity, 
Eqs.~(\ref{valid1})-(\ref{valid2}).

{\bf Constructing the joint measurement: }
The joint measurement POVM $G_{ij}$ for $\{M_i,M_j\}$ should satisfy the marginal condition:
\begin{eqnarray}\label{marg1}
G^{ij}_{++}+G^{ij}_{+-}=E^i_+,\quad
G^{ij}_{-+}+G^{ij}_{--}=E^i_-,\\
G^{ij}_{++}+G^{ij}_{-+}=E^j_+,\quad
G^{ij}_{+-}+G^{ij}_{--}=E^j_-.\label{marg2}
\end{eqnarray}
Also, the joint measurement should consist of valid effects:
\begin{equation}\label{valid}
0\leq G^{ij}_{++},G^{ij}_{+-},G^{ij}_{-+},G^{ij}_{--} \leq I,
\end{equation}
where $I$ is the $2\times2$ identity matrix. The general form of the joint measurement effects is:
\begin{eqnarray}
G^{ij}_{++}&=&\frac{1}{2}\left[\frac{\alpha_{ij}}{2}I+\vec{\sigma}.\vec{a}^{ij}_{++}\right],\\
G^{ij}_{+-}&=&\frac{1}{2}\left[\left(1-\frac{\alpha_{ij}}{2}\right)I+\vec{\sigma}.\vec{a}^{ij}_{+-}\right],\\
G^{ij}_{-+}&=&\frac{1}{2}\left[\left(1-\frac{\alpha_{ij}}{2}\right)I+\vec{\sigma}.\vec{a}^{ij}_{-+}\right],\\
G^{ij}_{--}&=&\frac{1}{2}\left[\frac{\alpha_{ij}}{2}I+\vec{\sigma}.\vec{a}^{ij}_{--}\right],
\end{eqnarray}
where each effect is parameterized by four real numbers. From the marginal condition, Eqs.~(\ref{marg1}-\ref{marg2}), it follows that:
\begin{eqnarray}\label{bega}
\vec{a}^{ij}_{++}+\vec{a}^{ij}_{+-}=\eta \hat{n}_i,\quad
\vec{a}^{ij}_{-+}+\vec{a}^{ij}_{--}=-\eta \hat{n}_i,\\
\vec{a}^{ij}_{-+}+\vec{a}^{ij}_{++}=\eta \hat{n}_j,\quad
\vec{a}^{ij}_{--}+\vec{a}^{ij}_{+-}=-\eta \hat{n}_j.\label{begb}
\end{eqnarray}
These can be rewritten as:
\begin{eqnarray}\label{beg1}
2\vec{a}^{ij}_{++}+\vec{a}^{ij}_{+-}+\vec{a}^{ij}_{-+}&=&\eta (\hat{n}_i+\hat{n}_j),\\
2\vec{a}^{ij}_{+-}+\vec{a}^{ij}_{++}+\vec{a}^{ij}_{--}&=&\eta (\hat{n}_i-\hat{n}_j),\\
2\vec{a}^{ij}_{-+}+\vec{a}^{ij}_{++}+\vec{a}^{ij}_{--}&=&\eta (-\hat{n}_i+\hat{n}_j),\\
2\vec{a}^{ij}_{--}+\vec{a}^{ij}_{+-}+\vec{a}^{ij}_{-+}&=&\eta (-\hat{n}_i-\hat{n}_j).\label{end1}
\end{eqnarray}
From Eqs.~(\ref{bega}-\ref{begb}) it follows that:
$$(\vec{a}^{ij}_{++}+\vec{a}^{ij}_{--})+(\vec{a}^{ij}_{-+}+\vec{a}^{ij}_{+-})=0.$$
So one can define:
$$\vec{a}_{ij}\equiv\vec{a}^{ij}_{+-}+\vec{a}^{ij}_{-+} \Rightarrow \vec{a}^{ij}_{++}+\vec{a}^{ij}_{--}=-\vec{a}_{ij}.$$
Now, from Eqs.~(\ref{beg1})-(\ref{end1}) the following are obvious:
\begin{eqnarray}
\vec{a}^{ij}_{++}&=&\frac{1}{2}\left[\eta\left(\hat{n}_i+\hat{n}_j\right)-\vec{a}_{ij}\right],\\
\vec{a}^{ij}_{+-}&=&\frac{1}{2}\left[\eta\left(\hat{n}_i-\hat{n}_j\right)+\vec{a}_{ij}\right],\\
\vec{a}^{ij}_{-+}&=&\frac{1}{2}\left[\eta\left(-\hat{n}_i+\hat{n}_j\right)+\vec{a}_{ij}\right],\\
\vec{a}^{ij}_{--}&=&\frac{1}{2}\left[\eta\left(-\hat{n}_i-\hat{n}_j\right)-\vec{a}_{ij}\right].
\end{eqnarray}
This gives the general form of the joint measurement POVMs. For qubit effects, $G^{ij}_{X_iX_j}$, where $X_i, X_j \in \{+1,-1\}$, the valid effect condition, Eq.~(\ref{valid}),
is equivalent to the following \cite{heinosaari}:
\begin{eqnarray}
 |\vec{a}^{ij}_{++}| &\leq& \frac{\alpha_{ij}}{2} \leq 2-|\vec{a}^{ij}_{++}|,\\
 |\vec{a}^{ij}_{+-}| &\leq& 1-\frac{\alpha_{ij}}{2} \leq 2-|\vec{a}^{ij}_{+-}|,\\
 |\vec{a}^{ij}_{-+}| &\leq& 1-\frac{\alpha_{ij}}{2} \leq 2-|\vec{a}^{ij}_{-+}|,\\
 |\vec{a}^{ij}_{--}| &\leq& \frac{\alpha_{ij}}{2} \leq 2-|\vec{a}^{ij}_{--}|.
\end{eqnarray}
These inequalities can be combined and rewritten as:
\begin{equation}
 2 \max \left\{|\vec{a}^{ij}_{++}|,|\vec{a}^{ij}_{--}|\right\}\leq\alpha_{ij}\leq 2-2 \max \left\{|\vec{a}^{ij}_{+-}|,|\vec{a}^{ij}_{-+}|\right\},
\end{equation}
where 
\begin{eqnarray}\nonumber
\max\left\{|\vec{a}^{ij}_{++}|,|\vec{a}^{ij}_{--}|\right\} \\ =\sqrt{\frac{\eta^2}{2}(1+\hat{n}_i.\hat{n}_j)+\frac{|\vec{a}_{ij}|^2}{4}+\frac{\eta}{2}|(\hat{n}_i+\hat{n}_j).\vec{a}_{ij}|}\nonumber
\end{eqnarray}
and
\begin{eqnarray}\nonumber
\max \left\{|\vec{a}^{ij}_{+-}|,|\vec{a}^{ij}_{-+}|\right\} \\ =\sqrt{\frac{\eta^2}{2}(1-\hat{n}_i.\hat{n}_j)+\frac{|\vec{a}_{ij}|^2}{4}+\frac{\eta}{2}|(\hat{n}_i-\hat{n}_j).\vec{a}_{ij}|}.\nonumber
\end{eqnarray}
This is the condition for a valid joint measurement used in inequalities of Eqs.~(\ref{valid1}-\ref{valid2}).

\section{No state-independent violation of LSW inequality}
We will now show that no state-independent violation of the LSW inequality with qubit POVMs is possible.
\begin{theorem}
There exists no state-independent violation of the LSW inequality $R_3 \leq 1-\frac{\eta}{3}$ using a triple of qubit POVMs, $M_k\equiv \{E^k_{\pm}\}, k\in\{1,2,3\}$,
that are pairwise jointly measurable (but not necessarily triplewise jointly measurable). 
\end{theorem}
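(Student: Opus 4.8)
The plan is to exhibit, for an arbitrary triple of pairwise jointly measurable POVMs of the form (\ref{qubitPOVMs}), a single quantum state that respects the LSW bound. This immediately rules out a \emph{state-independent} violation, since such a violation would by definition require $R_3 > 1-\frac{\eta}{3}$ for \emph{every} state. The natural candidate is the maximally mixed state $\rho=I/2$, so the first step is to evaluate $R_3$ on it. Using the anticorrelation effect in Eq.~(\ref{anticorr}), the anticorrelation probability for the context $\{M_i,M_j\}$ is $\Tr\!\left[(G^{ij}_{+-}+G^{ij}_{-+})\tfrac{I}{2}\right]=1-\tfrac{\alpha_{ij}}{2}$, the $\vec{\sigma}\cdot\vec{a}_{ij}$ term being traceless, so that
\begin{equation}
R_3 = 1-\frac{1}{6}\sum_{(ij)}\alpha_{ij}.
\end{equation}

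The second step is to bound each $\alpha_{ij}$ from below using only the validity of the joint measurement. Since the three observables are pairwise jointly measurable, each $G_{ij}$ consists of legitimate effects, so the lower bound in Eq.~(\ref{valid1}) applies. Dropping the nonnegative contributions $|\vec{a}_{ij}|^2$ and $2\eta|(\hat{n}_i+\hat{n}_j)\cdot\vec{a}_{ij}|$ from under the square root and using $2(1+\hat{n}_i\cdot\hat{n}_j)=|\hat{n}_i+\hat{n}_j|^2$ yields $\alpha_{ij}\ge \eta\,|\hat{n}_i+\hat{n}_j|$. Hence $R_3\le 1-\frac{\eta}{6}\sum_{(ij)}|\hat{n}_i+\hat{n}_j|$ on the maximally mixed state, and the whole theorem reduces to the purely geometric claim $\sum_{(ij)}|\hat{n}_i+\hat{n}_j|\ge 2$ for any three unit vectors $\hat n_1,\hat n_2,\hat n_3$.

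I expect this geometric inequality to be the main obstacle, since the pairwise sums vanish in precisely the antipodal configurations that would make a violation plausible. To prove it I would write $|\hat{n}_i+\hat{n}_j|=2\cos(\theta_{ij}/2)$ with $\theta_{ij}\in[0,\pi]$ the pairwise angle, reducing the claim to $\cos(\theta_{12}/2)+\cos(\theta_{23}/2)+\cos(\theta_{13}/2)\ge 1$. Setting $a=\theta_{12}/2$, $b=\theta_{23}/2$, $c=\theta_{13}/2\in[0,\pi/2]$, the angles of three unit vectors obey the spherical-triangle constraints (each at most the sum of the other two, together with $a+b+c\le\pi$), which carve out a convex region; since $\cos$ is concave on $[0,\pi/2]$, the sum $\cos a+\cos b+\cos c$ attains its minimum at an extreme point of this region, and a short case analysis over the vertices shows every one gives value $\ge 1$, with equality at $(a,b,c)=(\pi/2,0,\pi/2)$, i.e.\ two axes coincident and the third antipodal.

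Combining the three steps gives $R_3\le 1-\frac{\eta}{3}$ on the maximally mixed state for every admissible triple and every valid choice of pairwise joint measurements, so no configuration can violate the LSW inequality in a state-independent way. Note that the maximally mixed state is the clean choice here: one need not minimize $R_3$ over all states, and indeed the naive bound obtained from the global minimizer (which picks up an extra $|\sum_{(ij)}\vec a_{ij}|$ term) is weaker for nearly antipodal axes. A useful sanity check on the constants is the equality case, which saturates exactly where the geometric bound is tight and reproduces the orthogonal ($|\hat{n}_i+\hat{n}_j|=\sqrt2$, sum $3\sqrt2$) and trine ($|\hat{n}_i+\hat{n}_j|=1$, sum $3$) numbers quoted earlier, both comfortably above $2$.
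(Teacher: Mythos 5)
Your proof is correct, and its core coincides with the paper's: both arguments rest on the lower bound $\alpha_{ij}\geq\sqrt{2\eta^2(1+\hat{n}_i\cdot\hat{n}_j)}=\eta|\hat{n}_i+\hat{n}_j|$ extracted from the effect-validity condition of Eq.~(\ref{valid1}), and both reduce the theorem to the identical geometric inequality $\sum_{(ij)}|\cos(\theta_{ij}/2)|\geq 1$ (the paper states its necessary condition for violation as $\sum_{(ij)}\sqrt{1+\hat{n}_i\cdot\hat{n}_j}<\sqrt{2}$, which is the same statement after writing $\sqrt{1+\cos\theta_{ij}}=\sqrt{2}\,|\cos(\theta_{ij}/2)|$). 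Where you differ is in the packaging at both ends. At the front end, you simply evaluate $R_3$ on $\rho=I/2$, where the state-dependent term $\vec{\sigma}\cdot\vec{a}_{ij}$ is traceless, so the whole question collapses to $\sum_{(ij)}\alpha_{ij}\geq 2\eta$ in one line; the paper instead parametrizes a general $\rho$, isolates $\lambda_\rho=(1-2q)\vec{a}\cdot\hat{n}$, and runs a two-case analysis ($\vec{a}=0$ versus $\sum\alpha_{ij}+|\vec{a}|<2\eta$) that ultimately yields the same necessary condition $\sum_{(ij)}\alpha_{ij}<2\eta$. Your shortcut is cleaner and logically sufficient, since a state-independent violation must in particular occur at $I/2$. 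At the back end, you prove the geometric inequality by noting that the admissible half-angles form a convex polytope (triangle inequalities plus $a+b+c\leq\pi$) on which $\cos a+\cos b+\cos c$ is concave, so the minimum sits at a vertex and equals $1$ at permutations of $(\pi/2,\pi/2,0)$; the paper instead parametrizes the three directions explicitly, uses $\cos(\theta_{12}+\theta_{13})\leq\cos\theta_{23}$, and minimizes directly. Your vertex argument is more systematic and makes the (degenerate) equality configuration transparent, whereas the paper's strict inequality reflects its implicit exclusion of coincident or antipodal axes; either way the non-strict bound suffices to rule out a state-independent violation. One small caveat: your closing remark that the bound from the global minimizer of $R_3$ is ``weaker'' is backwards --- since the minimizing state only subtracts the nonnegative term $\tfrac{1}{6}|\vec{a}|$, the resulting condition $\sum_{(ij)}\alpha_{ij}+|\vec{a}|\geq 2\eta$ is implied by, not harder than, the one you prove --- but this aside does not affect the argument.
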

\begin{proof}
In quantum theory, the probability $R^Q_3$ for anticorrelation of measurement outcomes for pairwise joint measurements of $M_k\equiv\{E^k_+,E^k_-\}$ (where $k\in\{1,2,3\}$) has the following form
for a qubit state $\rho$:
\begin{equation}
R^Q_3\equiv\frac{1}{3}\sum_{(ij)\in\{(12),(23),(13)\}}\textrm{Tr}\left(\rho\left(G^{ij}_{+-}+G^{ij}_{-+}\right)\right),
\end{equation}
The condition for violation of noncontextual inequality, Eq.~(\ref{ncineq}), is $R_3^Q > 1-\frac{\eta}{3}$. Using Eq.~(\ref{anticorr}), this reduces to 
\begin{equation}
 \textrm{Tr}\left(\rho\sum_{(ij)}(\alpha_{ij}I-\vec{\sigma}.\vec{a}_{ij})\right)<2\eta
\end{equation}
Using the standard $2 \times 2$ Pauli matrices and $\rho$ parameterized by $0\leq q\leq 1$ and $\hat{n}=(\sin \theta \cos \phi, \sin \theta \sin \phi, \cos \theta)$:
\begin{eqnarray}\label{state1}
 \rho&=&q|\psi\rangle\langle\psi|+(1-q)(I-|\psi\rangle\langle\psi|),\\\label{state2}
 |\psi\rangle&=&\left( \begin{array}{c}
\cos \frac{\theta}{2}\\
e^{i\phi}\sin \frac{\theta}{2}
\end{array}\right)=\cos \frac{\theta}{2}|0\rangle+e^{i\phi}\sin \frac{\theta}{2}|1\rangle,
\end{eqnarray}
the condition for violation becomes
\begin{equation}\label{viol}
 \sum_{(ij)}\alpha_{ij}+\lambda_{\rho}<2\eta,
\end{equation}
where 
\begin{equation}\label{statedep}
\lambda_{\rho}\equiv(1-2q)\vec{a}.\hat{n} \in \left[-|\vec{a}|,|\vec{a}|\right]
\end{equation}
denotes the state-dependent term in the condition and $\vec{a}=(a_x,a_y,a_z)$ is given by
\begin{equation}
 a_x=\sum_{(ij)}(\vec{a}_{ij})_x, \quad a_y=\sum_{(ij)}(\vec{a}_{ij})_y, \quad a_z=\sum_{(ij)}(\vec{a}_{ij})_z.
\end{equation}
For a state-independent violation, either the state-dependent term in Eq.~(\ref{viol}),
$\lambda_{\rho}$, must vanish for all qubit states $\rho$, or  $\sum_{(ij)}\alpha_{ij}+\max_{\rho}\lambda_{\rho}<2\eta$ should hold. The first case, $\lambda_{\rho}=0 \quad \forall \rho$, 
requires $\vec{a}=0$, since $\vec{a}$ is the only term in $\lambda_{\rho}$ that depends on the joint measurement POVM. 
This means $a_x=a_y=a_z=0$, so that $\lambda_{\rho}=0$ for all $\rho$. The second case requires $\sum_{(ij)}\alpha_{ij}+|\vec{a}|<2\eta$. In both cases,
we have the following lower bound on $\alpha_{ij}$, from inequality, Eq.~(\ref{valid1}):
\begin{equation}
\alpha_{ij}>\sqrt{2}\eta\sqrt{1+\hat{n}_i.\hat{n}_j}  
\end{equation}
Taking the sum of $\alpha_{ij}$, $(ij) \in \{(12),(23),(13)\}$, we have
\begin{equation}
\sum_{(ij)}\alpha_{ij}>\sqrt{2}\eta\sum_{(ij)}\sqrt{1+\hat{n}_i.\hat{n}_j}
\end{equation}
For the first case, the condition for state-independent violation is, $\sum_{(ij)}\alpha_{ij}<2\eta$, while for the second case the condition for such a violation is $\sum_{(ij)}\alpha_{ij}+|\vec{a}|<2\eta$.
Given the lower bound on $\sum_{(ij)}\alpha_{ij}$, it follows that a necessary condition for state-independent violation of the LSW inequality
is:
\begin{equation}
 \sum_{(ij)}\sqrt{1+\hat{n}_i.\hat{n}_j}<\sqrt{2}.
\end{equation}
We will show that there exists no choice of measurement directions that will satisfy this necessary condition, thereby ruling out a state-independent violation of the LSW inequality.
The particular cases of orthogonal axes ($\hat{n}_i.\hat{n}_j=0$) or trine spin axes ($\hat{n}_i.\hat{n}_j=-1/2$), used in \cite{LSW}, are clearly ruled out by this necessary condition.
Denoting $\hat{n}_i.\hat{n}_j\equiv\cos\theta_{ij}$, the necessary condition for violation is
\begin{equation}\label{necc}
  \left|\cos \frac{\theta_{12}}{2}\right|+\left|\cos \frac{\theta_{13}}{2}\right|+\left|\cos \frac{\theta_{23}}{2}\right|<1
\end{equation}
Without loss of generality, the three directions can be parameterized as:
\begin{eqnarray}\label{mmts1}
 \hat{n}_1&\equiv&(0,0,1),\\
 \hat{n}_2&\equiv&(\sin \theta_{12},0,\cos \theta_{12}),\\
 \hat{n}_3&\equiv&(\sin \theta_{13} \cos \phi_3, \sin \theta_{13} \sin \phi_3, \cos \theta_{13}).\label{mmts2}
\end{eqnarray}
where
$$0<\frac{\theta_{ij}}{2}<\frac{\pi}{2} \quad \forall (ij) \in \{(12),(13),(23)\},\quad 0\leq \phi_3< 2\pi,$$
and $\cos \theta_{23}=\sin \theta_{12}\sin \theta_{13} \cos \phi_3+\cos \theta_{12}\cos \theta_{13}.$
This implies:
\begin{equation}\label{angles}
\cos(\theta_{12}+\theta_{13})\leq \cos(\theta_{23}) \leq \cos(\theta_{12}-\theta_{13}).
\end{equation}
Then 
\begin{eqnarray}\nonumber
&&\min_{\theta_{12}, \theta_{13},\theta_{23}}\left\{\left|\cos \frac{\theta_{12}}{2}\right|+\left|\cos \frac{\theta_{13}}{2}\right|+\left|\cos \frac{\theta_{23}}{2}\right|\right\}\geq\\\nonumber
&&\min_{\theta_{12},\theta_{13}}\left\{\left|\cos \frac{\theta_{12}}{2}\right|+\left|\cos \frac{\theta_{13}}{2}\right|+\sqrt{\frac{1+\cos(\theta_{12}+\theta_{13})}{2}}\right\}>1. 
\end{eqnarray}
This contradicts the necessary condition (\ref{necc}). Hence, there is no state-independent violation of the LSW inequality (\ref{ncineq}) allowed by noisy spin-1/2 observables.
\end{proof}

\section{Quantum violation of the LSW inequality}
{\it State-dependent violation of the LSW inequality. ---}
The LSW inequality can be violated in a state-dependent manner.
From the condition for violation, Eq.~(\ref{viol}), it follows that a necessary condition for state-dependent violation is $\sum_{(ij)}\alpha_{ij}-|\vec{a}|<2\eta$.
An optimal choice of $\rho$ that yields $\lambda_{\rho}=-|\vec{a}|$ corresponds to $q=1$ and $\vec{a}.\hat{n}=|\vec{a}|$, i.e.,
$$\cos \theta = \frac{a_z}{|\vec{a}|}, \quad \tan \phi=\frac{a_y}{a_x}.$$

With this choice of $\rho$ the question becomes: Does there exist a choice of $\{\hat{n}_1,\hat{n}_2,\hat{n}_3\}, \eta, \{\alpha_{ij}, \vec{a}_{ij}\}$ such that
$\sum_{(ij)}\alpha_{ij}-|\vec{a}|<2\eta$? We show that this is indeed the case. We define
\begin{equation}\label{C}
 C\equiv2\eta-\left(\sum_{(ij)}\alpha_{ij}-|\vec{a}|\right),
\end{equation}
so that $C>0$ indicates a state-dependent violation. Note that violation of the LSW inequality $R_3^Q\leq 1-\frac{\eta}{3}$ is characterized by
\begin{equation}
 S\equiv R_3^Q-\left(1-\frac{\eta}{3}\right)=\frac{C}{6}
\end{equation}
where $S>0$ for a state-dependent violation. Given a coplanar choice of $\{\hat{n}_1,\hat{n}_2,\hat{n}_3\}$, and $\eta$ satisfying $\eta_l<\eta\leq\eta_u$, the optimal value of $C$
---denoted as $C^{\{\hat{n}_i\},\eta}_{\max}$---is given by 
\begin{equation}\nonumber
C^{\{\hat{n}_i\},\eta}_{\max}=2\eta+\sum_{(ij)}\left(\sqrt{1+\eta^4(\hat{n}_i.\hat{n}_j)^2-2\eta^2}-(1+\eta^2 \hat{n}_i.\hat{n}_j)\right),
\end{equation}
as we will prove after stating the criteria for violating the LSW inequality.
We obtain a state-dependent violation of the LSW inequality for trine axes (Fig. \ref{plane}):
\begin{theorem}\label{thm2}
The optimal violation of the LSW inequality for measurements along trine spin axes, i.e., $\{\hat{n}_1,\hat{n}_2,\hat{n}_3\}$ such that $\hat{n}_1.\hat{n}_2=\hat{n}_2.\hat{n}_3=\hat{n}_1.\hat{n}_3=-1/2$, occurs for $|\psi\rangle=\frac{1}{\sqrt{2}}(|0\rangle+i|1\rangle)$ if the plane of
measurements is the ZX plane. The lower and upper bounds on $\eta$ are $\eta_l=\frac{2}{3}\approx0.667$ and $\eta_u=\sqrt{3}-1\approx 0.732$. The joint measurement POVM is given by $\alpha_{ij}=1+\eta^2\hat{n}_i.\hat{n}_j$ and $\vec{a}_{ij}=(0,\sqrt{1+\eta^4(\hat{n}_i.\hat{n}_j)^2-2\eta^2},0)$.
The optimal violation corresponds to $\eta \rightarrow \eta_l$, so that $\alpha_{12}=\alpha_{13}=\alpha_{23}\rightarrow 1-\frac{\eta_l^2}{2}=\frac{7}{9}$, $|\vec{a}_{ij}|\rightarrow \frac{\sqrt{13}}{9} \quad \forall (ij)$,
$C^{trine}_{\max}\rightarrow \frac{\sqrt{13}}{3}-1 \approx 0.20185$, and $S^{trine}_{\max}=\frac{C^{\textrm{trine}}_{\max}}{6} \rightarrow 0.03364 \textrm{ or } 3.36\%$.
\end{theorem}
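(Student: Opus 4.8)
The plan is to reduce the optimization defining the optimal violation to a one-parameter problem in $\eta$ and then locate its supremum on the joint-measurability window. Recall from Eq.~(\ref{viol}) that a violation is witnessed precisely when $\sum_{(ij)}\alpha_{ij}+\lambda_\rho<2\eta$, and that the state-dependent term $\lambda_\rho=(1-2q)\vec a\cdot\hat n$ is minimized at $-|\vec a|$ by taking $\rho$ pure ($q=1$) with Bloch vector $\hat n$ aligned along $\vec a=\sum_{(ij)}\vec a_{ij}$. Hence, for any fixed choice of directions $\{\hat n_i\}$, sharpness $\eta$, and joint-measurement parameters $\{\alpha_{ij},\vec a_{ij}\}$, the best attainable margin is $C=2\eta-\sum_{(ij)}\alpha_{ij}+|\vec a|$ as in Eq.~(\ref{C}). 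The remaining task is to maximize $C$ over the admissible POVM parameters and then over $\eta$.

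For the POVM optimization I would first align all three vectors $\vec a_{ij}$, since $|\vec a|=|\sum_{(ij)}\vec a_{ij}|\le\sum_{(ij)}|\vec a_{ij}|$ with equality iff they are parallel; making them parallel maximizes the gain $|\vec a|$ at no cost to the penalty $\sum_{(ij)}\alpha_{ij}$. I would then choose this common direction \emph{orthogonal} to the plane containing $\{\hat n_1,\hat n_2,\hat n_3\}$ --- the $\hat y$-axis when the measurements lie in the ZX plane --- so that $(\hat n_i\pm\hat n_j)\cdot\vec a_{ij}=0$ and the validity conditions Eqs.~(\ref{valid1})-(\ref{valid2}) collapse to $\sqrt{2\eta^2(1+\hat n_i\cdot\hat n_j)+|\vec a_{ij}|^2}\le\alpha_{ij}\le 2-\sqrt{2\eta^2(1-\hat n_i\cdot\hat n_j)+|\vec a_{ij}|^2}$. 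Writing $c=\hat n_i\cdot\hat n_j$ and fixing $\alpha_{ij}$ at its lower bound, one checks that $|\vec a_{ij}|-\sqrt{2\eta^2(1+c)+|\vec a_{ij}|^2}$ is strictly increasing in $|\vec a_{ij}|$, so $|\vec a_{ij}|$ should be pushed to the largest value for which the two bounds stay consistent. Equating them and using the elementary identity that both sides equal $1+\eta^2 c$ exactly when $|\vec a_{ij}|^2=1+\eta^4 c^2-2\eta^2$ pins down $\alpha_{ij}=1+\eta^2 c$ and $|\vec a_{ij}|=\sqrt{1+\eta^4 c^2-2\eta^2}$, the radicand being nonnegative precisely by pairwise joint measurability. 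Substituting back reproduces the stated $C^{\{\hat n_i\},\eta}_{\max}=2\eta+\sum_{(ij)}\bigl(\sqrt{1+\eta^4(\hat n_i\cdot\hat n_j)^2-2\eta^2}-(1+\eta^2\hat n_i\cdot\hat n_j)\bigr)$.

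Specializing to the trine, $c=-1/2$ for all three pairs gives $C^{\mathrm{trine}}_{\max}(\eta)=2\eta+3\sqrt{1+\tfrac{\eta^4}{4}-2\eta^2}-3\bigl(1-\tfrac{\eta^2}{2}\bigr)$, now a function of the single variable $\eta$ on the trine window $\tfrac{2}{3}<\eta\le\sqrt3-1$ established earlier. I would then show this function decreases monotonically on the window: one finds its derivative at $\eta=\tfrac23$ equals $4-\tfrac{32}{\sqrt{13}}<0$, while at the upper endpoint $\eta=\sqrt3-1$ the radicand vanishes (exactly where pairwise compatibility fails), driving $C^{\mathrm{trine}}_{\max}$ negative, so the supremum is approached as $\eta\to\eta_l=\tfrac23$. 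Evaluating the optimal parameters there yields $\alpha_{ij}=1-\tfrac{\eta^2}{2}=\tfrac79$ and $|\vec a_{ij}|=\sqrt{\tfrac{13}{81}}=\tfrac{\sqrt{13}}{9}$, whence $C^{\mathrm{trine}}_{\max}\to\tfrac{\sqrt{13}}{3}-1\approx0.202$ and, via $S=C/6$, the violation $S^{\mathrm{trine}}_{\max}\to0.0336$. Since $\vec a$ points along $+\hat y$, the optimal state is the one with Bloch vector $+\hat y$, namely $|\psi\rangle=\tfrac{1}{\sqrt2}(|0\rangle+i|1\rangle)$, matching every numerical claim.

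The step I expect to be most delicate is the POVM optimization --- rigorously justifying that orienting every $\vec a_{ij}$ orthogonal to the measurement plane is \emph{globally} optimal rather than merely convenient, and that saturating the upper validity bound is compatible across all three contexts simultaneously. Here the coplanarity of the trine does real work: it guarantees that the three gain vectors can be made mutually parallel while each independently saturates its own validity constraint, so the per-context optima are attained jointly. The optimization over $\eta$ is comparatively routine once monotonicity of $C^{\mathrm{trine}}_{\max}(\eta)$ is confirmed across the whole window, not just at the endpoints.
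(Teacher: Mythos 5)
Your proposal is correct and follows essentially the same route as the paper's proof: reduce the violation to maximizing $C=2\eta-\sum_{(ij)}\alpha_{ij}+|\vec a|$, align the $\vec a_{ij}$ parallel and perpendicular to the measurement plane so that the validity constraints saturate at $\alpha_{ij}=1+\eta^2\hat n_i\cdot\hat n_j$ and $|\vec a_{ij}|=\sqrt{1+\eta^4(\hat n_i\cdot\hat n_j)^2-2\eta^2}$, and read off the trine numbers as $\eta\to\eta_l=\tfrac23$. Your explicit check that $C^{\mathrm{trine}}_{\max}(\eta)$ has negative derivative at $\eta_l$ (namely $4-32/\sqrt{13}$) and turns negative at $\eta_u$ is actually more than the paper supplies for the final $\eta$-optimization, and the remaining monotonicity claim you flag is easily closed since the negative term in $C'$ only grows in magnitude across the window.
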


\begin{figure}\centering
\includegraphics[scale=0.3]{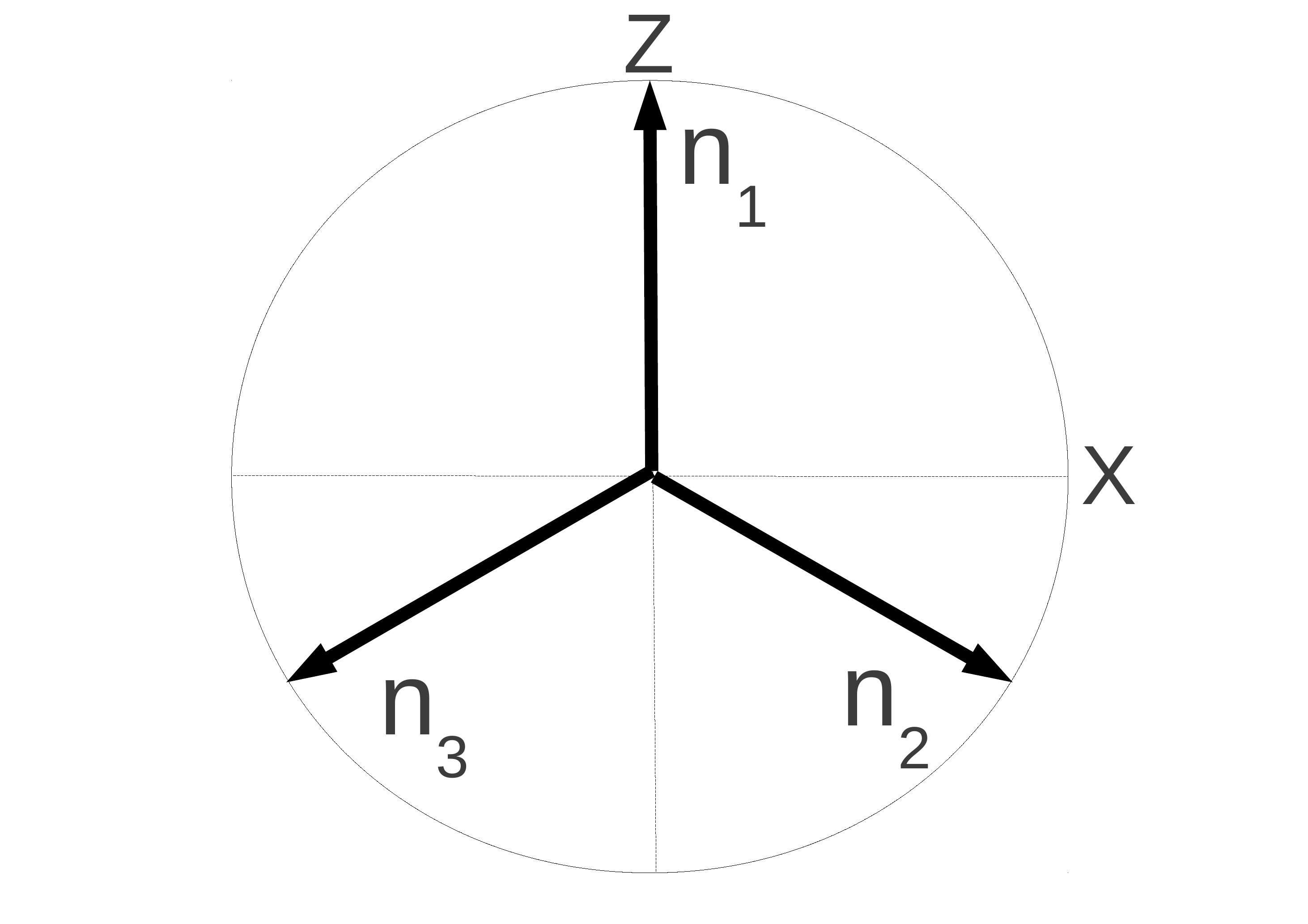}
\caption{Choice of measurement directions $\{\hat{n}_1,\hat{n}_2,\hat{n}_3\}$ along trine spin axes in the Z-X plane.}
\label{plane}
\end{figure}

Thus the quantum probability of anticorrelation can exceed the generalized-noncontextual bound by an amount arbitrarily close to $0.03364$ or about $3.36\%$ for trine spin axes. We conjecture that this is the optimal violation of the LSW inequality obtainable from qubit POVMs. 
The quantum degree of anti-correlation for this violation is $R_3^Q=S^{trine}_{\max}+\big(1-\frac{\eta}{3}\big)\rightarrow 0.8114$ and the generalized-noncontextual bound is $\big(1-\frac{\eta}{3}\big)\rightarrow\frac{7}{9}\approx 0.7778$.

The proof of Theorem \ref{thm2} follows:
\begin{proof}
{\bf Optimal state-dependent violation for measurements in a plane: }
We need to maximize $C\equiv 2\eta-\left(\sum_{(ij)}\alpha_{ij}-|\vec{a}|\right)$ to obtain the optimal violation of the LSW inequality. Subject to satisfaction of the joint measurability constraints 
in Eqs.~(\ref{valid1}-\ref{valid2}), we have
\begin{eqnarray}\nonumber
C_{max}&=&\max_{\{\hat{n}_1, \hat{n}_2, \hat{n}_3\},\{\vec{a}_{ij}\}, \eta} \left\{2\eta+|\vec{a}|-\sum_{(ij)}\alpha_{ij}\right\}\\\nonumber
&\leq& \max_{\{\hat{n}_1, \hat{n}_2, \hat{n}_3\},\{\vec{a}_{ij}\},\eta} \left\{2\eta+\sum_{(ij)}|\vec{a}_{ij}|-\sum_{(ij)}\sqrt{2\eta^2(1+\hat{n}_i.\hat{n}_j)+|\vec{a}_{ij}|^2}\right\}\\\nonumber
\end{eqnarray}
The inequality above follows from the fact that
\begin{equation}
 |\vec{a}|=\sqrt{\sum_{(ij)}|\vec{a}_{ij}|^2+2\left(\vec{a}_{12}.\vec{a}_{13}+\vec{a}_{12}.\vec{a}_{23}+\vec{a}_{13}.\vec{a}_{23}\right)},
\end{equation}
so that $|\vec{a}|\leq \sum_{(ij)}|\vec{a}_{ij}|$, and 
\begin{equation}
 \sum_{(ij)}\sqrt{2\eta^2(1+\hat{n}_i.\hat{n}_j)+|\vec{a}_{ij}|^2} \leq \sum_{(ij)}\sqrt{2\eta^2(1+\hat{n}_i.\hat{n}_j)+|\vec{a}_{ij}|^2+2\eta|(\hat{n}_i+\hat{n}_j).\vec{a}_{ij}|}
 \leq \sum_{(ij)}\alpha_{ij}.\nonumber
\end{equation}
Also, we have
\begin{equation}\nonumber
 \sum_{(ij)}\sqrt{2\eta^2(1+\hat{n}_i.\hat{n}_j)+|\vec{a}_{ij}|^2}\geq \sum_{(ij)}\sqrt{2\eta^2(1+\hat{n}_i.\hat{n}_j)+|\vec{a}_{ij}|^2} \big|_{\text{coplanar}, \phi_3=\pi}.
\end{equation}
That is, for a fixed $|\vec{a}_{ij}|$, $\sum_{(ij)}\sqrt{2\eta^2(1+\hat{n}_i.\hat{n}_j)+|\vec{a}_{ij}|^2}$ is smallest when the measurement directions
$\{\hat{n}_1, \hat{n}_2, \hat{n}_3\}$ are coplanar and $\phi_3=\pi$. From Eqs.~(\ref{mmts1}-\ref{mmts2}), we have $\hat{n}_2.\hat{n}_3=\cos \theta_{23}=\sin \theta_{12}\sin \theta_{13} \cos \phi_3+\cos \theta_{12}\cos \theta_{13}$.
When $\phi_3=0 \textrm{ or }\pi$, the three measurements are coplanar and there are only two free angles, $\hat{n}_1.\hat{n}_2=\cos \theta_{12}$
and $\hat{n}_1.\hat{n}_3=\cos \theta_{13}$, while the third angle is fixed by these two: $\hat{n}_2.\hat{n}_3=\cos \theta_{23}=\cos (\theta_{12}-\theta_{13}) \textrm{ or }\cos (\theta_{12}+\theta_{13})$. Since 
$\cos(\theta_{12}+\theta_{13})\leq \cos(\theta_{23}) \leq \cos(\theta_{12}-\theta_{13})$, for any given $\theta_{12}$ and $\theta_{13} \in (0,\pi)$, $\cos \theta_{23}$ is smallest when
$\phi_3=\pi$. Hence, we choose the three measurements to be coplanar such that $\phi_3=\pi$ and $\cos \theta_{23}=\cos (\theta_{12}+\theta_{13})$. Any other choice of $\{\hat{n}_1, \hat{n}_2, \hat{n}_3\}$
will give a larger value of $\cos \theta_{23}$, hence also $\sum_{(ij)}\sqrt{2\eta^2(1+\hat{n}_i.\hat{n}_j)+|\vec{a}_{ij}|^2}$. So,
\begin{equation}\nonumber
C_{max}\leq \max_{\hat{n}_1.\hat{n}_2, \hat{n}_1.\hat{n}_3, \{|\vec{a}_{ij}|\}, \eta} \left\{2\eta+\sum_{(ij)}|\vec{a}_{ij}|-\sum_{(ij)}\sqrt{2\eta^2(1+\hat{n}_i.\hat{n}_j)+|\vec{a}_{ij}|^2}\big|_{\text{coplanar}, \phi_3=\pi}\right\}.
\end{equation}
We will now argue that this inequality for $C_{max}$ can be replaced by an equality. Let us take coplanar measurement directions $\{\hat{n}_1, \hat{n}_2, \hat{n}_3\}$ such that $\phi_3=\pi$.
We also take all the $\vec{a}_{ij}$ parallel to each other, i.e., $\vec{a}_{12}.\vec{a}_{13}=|\vec{a}_{12}||\vec{a}_{13}|$, $\vec{a}_{12}.\vec{a}_{23}=|\vec{a}_{12}||\vec{a}_{23}|$,
and $\vec{a}_{13}.\vec{a}_{23}=|\vec{a}_{13}||\vec{a}_{23}|$, so that $|\vec{a}|=|\vec{a}_{12}|+|\vec{a}_{13}|+|\vec{a}_{23}|$. Besides, $|(\hat{n}_i+\hat{n}_j).\vec{a}_{ij}|=0$ $\forall (ij)\in \{(12),(13),(23)\}$.
From these conditions it follows that each $\vec{a}_{ij}$ is perpendicular to the plane and 
$\forall (ij) \in \{(12),(13),(23)\}$, $\vec{a}_{ij}.\hat{n}_i=\vec{a}_{ij}.\hat{n}_j=0$. This allows us to choose $\alpha_{ij}=\sqrt{2\eta^2(1+\hat{n}_i.\hat{n}_j)+|\vec{a}_{ij}|^2}$.
So, in our optimal configuration, the measurement directions are coplanar while the $\vec{a}_{ij}$'s are parallel to each other and perpendicular to the plane of measurements. Note that this also means $\vec{a}$ will be parallel to $\vec{a}_{ij}$ and therefore perpendicular to
the plane of measurements, and so will be the optimal state (which is parallel to $\vec{a}$). With these optimality conditions satisfied, the optimal violation can now be written as
\begin{equation}
 C_{max}=\max_{\hat{n}_1.\hat{n}_2, \hat{n}_1.\hat{n}_3, \{|\vec{a}_{ij}|\},\eta}\left\{2\eta+\sum_{(ij)}\left(|\vec{a}_{ij}|-\sqrt{2\eta^2(1+\hat{n}_i.\hat{n}_j)+|\vec{a}_{ij}|^2}\right)\right\}.
\end{equation}
The constraints from joint measurability, Eqs.~(\ref{valid1}-\ref{valid2}) become
\begin{equation}
|\vec{a}_{ij}|\leq \sqrt{1+\eta^4(\hat{n}_i.\hat{n}_j)^2-2\eta^2}.
\end{equation}
Now,
\begin{equation}\nonumber
C_{max}\leq\max_{\hat{n}_1.\hat{n}_2, \hat{n}_1.\hat{n}_3, \{|\vec{a}_{ij}|\},\eta}\left\{2\eta+\sum_{(ij)}\left(\sqrt{1+\eta^4(\hat{n}_i.\hat{n}_j)^2-2\eta^2}-(1+\eta^2 \hat{n}_i.\hat{n}_j)\right)\right\}.
\end{equation}
The upper bound follows from the fact that $f(x,y)=x-\sqrt{x^2+2\eta^2(1+y)}$, where $0\leq x \leq \sqrt{1+\eta^4y^2-2\eta^2}$ and $-1<y<1$, is an increasing function of 
$x$ for a fixed $y$, i.e., $(\frac{\partial{f}}{\partial{x}})_y>0$. Here $x\equiv|\vec{a}_{ij}|$ and $y\equiv \hat{n}_i.\hat{n}_j$. So, taking $|\vec{a}_{ij}|=\sqrt{1+\eta^4(\hat{n}_i.\hat{n}_j)^2-2\eta^2}$,
we have

\begin{equation}\nonumber
C_{max}^{\{\hat{n}_i\},\eta}\equiv 2\eta+\sum_{(ij)}\left(\sqrt{1+\eta^4(\hat{n}_i.\hat{n}_j)^2-2\eta^2}-(1+\eta^2 \hat{n}_i.\hat{n}_j)\right)
\end{equation}

Note that $\alpha_{ij}=1+\eta^2 \hat{n}_i.\hat{n}_j$ for $|\vec{a}_{ij}|=\sqrt{1+\eta^4(\hat{n}_i.\hat{n}_j)^2-2\eta^2}$. $C_{max}^{\{\hat{n}_i\},\eta}$ 
is the maximum value of $C$ for a given coplanar choice of measurement directions $\{\hat{n}_1,\hat{n}_2,\hat{n}_3\}$ and sharpness parameter $\eta$.
\end{proof}

\section{Chapter summary}
Having shown how to violate the LSW inequality, let us now examine why such a violation is interesting beyond the reasons already outlined. The LSW inequality 
takes into account, for example, the possibility that the measurement apparatus could introduce anticorrelations that have nothing
to do with ontic state(s) one could associate with the system's preparation.\footnote{As argued in the Appendix at the end of this chapter. See Ref.~\cite{LSW} for more discussion on this issue.}
This would allow violation of the KS-noncontextual
bound of $\frac{2}{3}$ when the measurement is unsharp ($\eta<1$) even though this violation could purely be a result of 
noise coming from elsewhere, such as the measurement apparatus, rather than a consequence of quantum theory. A violation of the LSW inequality bound, on
the other hand, rules out this possibility.

An interesting open question is whether such a violation is possible in higher dimensional systems and whether the amount of violation could be higher for these than for a qubit.
Our result also hints at the fact that perhaps all compatibility scenarios may be realizable and contextuality demonstrated for many of them if we consider the possibilities that general quantum measurements allow.
In particular, scenarios that involve pairwise compatibility between all measurements but no global compatibility may be realizable within quantum theory. Specker's scenario is the simplest such
example we have considered. Indeed, as we show in Chapter 4, quantum theory does admit all compatibility scenarios since it allows one to realize any conceivable set of (in)compatibility
relations between a set of observables.

Whether a state-independent violation of the LSW inequality is possible in higher dimensions also remains an open question.

In summary, the joint measurability allowed in a theory restricts the kind of compatibility scenarios that can arise in it. Quantum theory admits Specker's compatibility scenario with unsharp 
measurements \cite{genNC}.
Further, as we have shown, quantum theory allows violation of the LSW inequality in this scenario. Thus quantum theory is contextual even in the simplest compatibility scenario involving joint measurements
where contextuality is conceivable.
Whether, and to what extent, this is the case with more complicated compatibility scenarios realizable, for example, via the construction in Chapter 4 remains to be explored.

\section*{APPENDIX}

\subsection*{\bf LSW inequality: Noncontextuality vs. KS-noncontextuality}

The traditional assumption of KS-noncontextuality entails two things: measurement noncontextuality and outcome-determinism for sharp measurements \cite{genNC}.
Given a set of measurements $\{M_1,\dots,M_N\}$, measurement noncontextuality is the assumption that the response function for each measurement
is insensitive to contexts - jointly measurable subsets - that it may be a part of: $\forall M_i, p(X_i|M_i;\lambda) \in [0,1]$. Here $X_i$ is an 
outcome for measurement $M_i$ and $\lambda$ is the hidden variable associated with the system's preparation. Outcome-determinism
is the further assumption that $\forall M_i, p(X_i|M_i;\lambda) \in \{0,1\}$, i.e., response functions are outcome-deterministic.
A KS-noncontextual model is one that makes these two assumptions for sharp (projective) measurements. A KS-inequality is a constraint on 
measurement statistics obtained under these two assumptions. 
A noncontextual model \`a la Spekkens, on the other hand, derives outcome-determinism for sharp measurements as a consequence of preparation noncontextuality \cite{genNC},
as we also demonstrated in Chapter 1.
For unsharp measurements, however, outcome-determinism is not implied by noncontextuality and one needs to model these 
measurements by outcome-indeterministic response functions. In our case, the qubit effects we need to write the response functions for are of the form: $E^k_{\pm}=\eta \Pi^k_{\pm}+(1-\eta)\frac{I}{2}$.
We will relabel the outcomes according to $\{+1\rightarrow 0, -1\rightarrow 1\}$ so that $X_k \in \{0,1\}$ in what follows.
Liang, Spekkens and Wiseman (LSW) argued \cite{LSW} that the response function for these effects in a noncontextual model
should be $p(X_k|M_k;\lambda)=\eta[X_k(\lambda)]+(1-\eta)\left(\frac{1}{2}[0]+\frac{1}{2}[1]\right)$,
where $p(X)=[x]$ denotes the point distribution given by the Kronecker delta function $\delta_{X,x}$. For $\eta=1$ (sharp measurements) this would be the traditional KS-noncontextual model. When $\eta<1$ (unsharp measurements),
the second ``coin flip'' term in the response function, $(\frac{1}{2}[0]+\frac{1}{2}[1])$, begins to play a role. This term is not conditioned
by $\lambda$, the ontic state sampled by the system's preparation, but is instead the response function for tossing a fair
coin regardless of what measurement is being made. It characterizes the random noise introduced, for example, by the measuring apparatus. 
The important thing to note is that this noise is uncorrelated with the system's ontic state $\lambda$.

Given these single-measurement response functions, one needs to figure out pairwise response functions for pairwise
joint measurements of the three qubit POVMs. LSW \cite{LSW} argued that the pairwise response functions maximizing the average anti-correlation
$R_3$ and consistent with the single-measurement response functions are given by 
\begin{eqnarray}\nonumber
 p(X_i,X_j|M_{ij};\lambda)&=&\eta [X_i(\lambda)][X_j(\lambda)]\\
 &+&(1-\eta)\left(\frac{1}{2}[0][1]+\frac{1}{2}[1][0]\right),
\end{eqnarray}
for all pairs of measurements $(ij)\in \{(12),(13),(23)\}$. This noncontextual model for these measurements turns out to be 
KS-contextual in the sense that the three pairwise response functions do not admit a joint probability distribution over the three measurement
outcomes, $p(X_1,X_2,X_3|\lambda)$, that is consistent with all three of them. Indeed, this LSW-model maximizes the average anticorrelation
possible in Specker's scenario given the single-measurement response functions, thus allowing us to obtain the LSW inequality
$R_3\leq 1-\frac{\eta}{3}$. Let us note the two bounds separately:

\begin{eqnarray}
 R_3 &\leq& R_3^{KS}\equiv\frac{2}{3}\\
 R_3 &\leq& R_3^{LSW}\equiv1-\frac{\eta}{3}
\end{eqnarray}

We have shown that there exists a choice of measurement directions, $\{\hat{n}_1,\hat{n}_2,\hat{n}_3\}$, and
a choice of $\eta$ for some state $\rho$ such that the quantum probability of anticorrelation, $R_3^Q$, beats the generalized-noncontextual
bound $R_3^{LSW}$. This rules out the possibility of being able to generate these correlations by classical means, as in the LSW-model, for at least 
some values of sharpness parameter $\eta$. Of course, if $\eta=0$, then the noncontextual bound becomes trivial and the
question of violation does not arise - this situation corresponds to the case where for any of the three pairwise joint measurements,
the measuring apparatus outputs one of the two anticorrelated outcomes by flipping a fair coin and there is no functional dependence
of the response function on $\lambda$. In other words, one could generate perfect anti-correlation in a noncontextual
model if $\eta=0$. However, as long as one is performing a nontrivial measurement (where $\eta>0$) there is a constraint on
the degree of anticorrelation imposed by noncontextuality. What we establish is that noncontextuality
cannot account for the degree of anticorrelation observed in quantum theory. Clearly, quantum theory is nonclassical even given a more stringent 
benchmark than the one set by KS-noncontextuality. A violation of the KS-noncontextual bound, $R_3^{KS}$, is possible in a noncontextual model \`a la Spekkens,
so such a violation is not in itself a signature of nonclassicality. On the other hand, violation of the noncontextual bound, $R_3^{LSW}$, should 
be considered a signature of genuine nonclassicality in that it isn't attributable either to the system's ontic state or random noise (from the
measuring apparatus or elsewhere) in a noncontextual model.

\chapter{On the connection between joint measurability and contextuality}
In this chapter we take a critical look at a particular condition, namely triplewise incompatibility, that we required of the three measurements we considered in Chapter 2 for realizing 
Specker's scenario, following the prescription in the LSW paper\cite{LSW}. Following our demonstration of the violation of LSW inequality\cite{KG},
in Ref.~\cite{FYuOh} a peculiar feature of POVMs with respect to joint measurability was pointed out: that there exist three measurements which are pairwise 
jointly measurable and triplewise jointly measurable but for which there exist pairwise joint measurements which do not admit a triplewise joint measurement. 
We will focus on the logical relationship between joint measurability and the possibility of contextuality and in the process shed some light on 
the crucial differences between joint measurability of projective (sharp) and nonprojective (unsharp) measurements. We will see that the triplewise incompatibility 
of the three POVMs is, in fact, {\em not} necessary to see a violation of the LSW inequality. Throughout this chapter, `sharp measurement' will be synonymous with
projection-valued measures (PVMs) and `unsharp measurement' will be synonymous with POVMs that are not PVMs.

This chapter is based on work reported in Ref.~\cite{RK}.
\section{Uniqueness of joint measurement for sharp (or projective) measurements}
Since the peculiarity of positive-operator valued measures (POVMs) in cases of interest here arises from the nonuniqueness of joint measurements, I will first prove the uniqueness of joint measurements for projection-valued measures (PVMs).
This will help clarify how the distinction between sharp and unsharp measurements comes to play a role in Specker's scenario \cite{KG}. 

Consider a nonempty set $\Omega_i$ and a $\sigma$-algebra $\mathcal{F}_i$ of subsets of $\Omega_i$, for $i\in\{1,\dots,N\}$. The POVM $M_i$ is defined as the map $M_i: \mathcal{F}_i\rightarrow \mathcal{B}_+(\mathcal{H})$,
where $M_i(\bigcup X_i)=\sum_{X_i}M_i(X_i)=I$, $\bigcup X_i$ ($=\Omega_i$) being a union of disjoint subsets $X_i\in\mathcal{F}_i$, and $\mathcal{B}_+(\mathcal{H})$ denotes the set of positive semidefinite operators on a Hilbert space $\mathcal{H}$. $I$ is the identity operator
on $\mathcal{H}$. Therefore: $M_i\equiv\{M_i(X_i)|X_i\in\mathcal{F}_i\}$, where $X_i$ labels the elements of POVM $M_i$. $M_i$ becomes a projection-valued measure (PVM) under the 
additional constraint $M_i(X_i)^2=M_i(X_i)$ for all $X_i\in \mathcal{F}_i$.

\begin{theorem}\label{uniqueness}
Given a set of POVMs, $\{M_1,\dots,M_N\}$, all of which except at most one---say $M_N$---are PVMs, so that for $i\in\{1,\dots,N-1\}$
$$M_i\equiv\{M_i(X_i)|X_i\in\mathcal{F}_i, M_i(X_i)^2=M_i(X_i)\}$$ and $$M_N\equiv\{M_N(X_N)|X_N\in\mathcal{F}_N\},$$
the set of POVMs, $\{M_1,\dots,M_N\}$, is jointly measurable if and only if they commute pairwise, i.e.,
$$M_j(X_j)M_k(X_k)=M_k(X_k)M_j(X_j),$$
for all $j,k\in\{1,\dots,N\}$ and $X_j\in\mathcal{F}_j, X_k\in\mathcal{F}_k$. In this case, there exists a unique joint POVM $M$, 
defined as a map $$M:\mathcal{F}_1\times\mathcal{F}_2\times\dots\times\mathcal{F}_N \rightarrow \mathcal{B}_+(\mathcal{H}),$$ such that
$$M(X_1\times\dots\times X_N)=M_1(X_1)M_2(X_2)\dots M_N(X_N),$$
for all $X_1\times\dots\times X_N \in\mathcal{F}_1\times\dots\times \mathcal{F}_N.$
\end{theorem}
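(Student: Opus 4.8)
The plan is to prove both implications of the biconditional together with the uniqueness claim, with the commutativity-implies-joint-measurability direction amounting to a direct verification and the converse resting on a single operator lemma about projections. The lemma I would isolate first is: if $P$ is a projection and $E$ is an effect with $0 \le E \le P$, then $PE = EP = E$; equivalently, $E$ is supported in the range of $P$ and commutes with it. This follows because any $\psi$ with $P\psi = 0$ gives $\langle \psi, E\psi\rangle \le \langle\psi, P\psi\rangle = 0$, and positivity of $E$ then forces $E^{1/2}\psi = 0$, hence $E\psi = 0$; so $E$ annihilates $(\mathrm{ran}\,P)^{\perp}$, giving $EP = E$ and, taking adjoints, $PE = E$.

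For the direction that joint measurability implies pairwise commutativity, suppose a joint POVM $M$ exists with the $M_i$ as marginals. For any joint effect $G = M(X_1\times\dots\times X_N)$ and any index $i \le N-1$ (so $M_i$ is a PVM), the marginal relation exhibits $M_i(X_i)$ as the sum of $G$ with other positive operators, whence $G \le M_i(X_i)$ and, by the lemma, $M_i(X_i)G = G M_i(X_i) = G$; orthogonality of the projections of the PVM $M_i$ then gives $M_i(X_i')G = 0$ for $X_i' \ne X_i$. Thus every $G$ commutes with every element of every PVM among the $M_i$. Summing the effects $G$ that make up $M_k(X_k)$ for an arbitrary $k$, I conclude that $M_k(X_k)$ commutes with $M_i(X_i)$ whenever $M_i$ is a PVM. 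Since at most one of the $M_i$ fails to be a PVM, every pair $(j,k)$ contains at least one PVM, so pairwise commutativity follows for all pairs.

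For the converse, assuming pairwise commutativity I would define $M(X_1\times\dots\times X_N) := M_1(X_1)\cdots M_N(X_N)$. Pairwise commutativity of these self-adjoint operators makes the whole family a commuting family, so each such product is a product of commuting positive operators and is therefore positive; the normalization $\sum M(\cdots) = I$ and the marginal conditions then follow by repeatedly applying $\sum_{X_j}M_j(X_j) = I$ and factoring out the surviving term using commutativity. This establishes existence.

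Finally, for uniqueness I would exploit the block structure furnished by the commuting PVMs $M_1,\dots,M_{N-1}$: their products $Q_{X_1\dots X_{N-1}} := M_1(X_1)\cdots M_{N-1}(X_{N-1})$ are mutually orthogonal projections summing to $I$, decomposing $\mathcal{H}$ into orthogonal blocks. The lemma already shows that any joint effect with first indices $(X_1,\dots,X_{N-1})$ lives entirely in the corresponding block, equalling $Q_{X_1\dots X_{N-1}}\,G\,Q_{X_1\dots X_{N-1}}$ and being annihilated by every other block projection. Conjugating the marginal $M_N(X_N) = \sum_{X_1\dots X_{N-1}} M(X_1\times\dots\times X_N)$ by a fixed $Q := Q_{X_1^0\dots X_{N-1}^0}$, all cross terms vanish by block-orthogonality, and since $M_N(X_N)$ commutes with $Q$ I obtain $M(X_1^0\times\dots\times X_{N-1}^0\times X_N) = Q\,M_N(X_N) = M_1(X_1^0)\cdots M_{N-1}(X_{N-1}^0)M_N(X_N)$, which pins every joint effect to the product. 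I expect the main obstacle to be the bookkeeping in this last step: verifying that block-orthogonality really kills all but one term and that the commutativity of $M_N(X_N)$ with $Q$ (itself a consequence of the already-established pairwise commutativity) lets me identify the surviving term with the product — the lemma and the existence direction being comparatively routine.
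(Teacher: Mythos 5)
Your proof is correct, and it rests on the same basic operator fact as the paper's: an effect dominated by a projection is absorbed by it ($0\le E\le P$ with $P$ a projection implies $PE=EP=E$), which the paper phrases as the range of the joint effect being contained in the range of $M_1(X_1)$. Where you differ is in the organization. The paper peels off one PVM at a time: it introduces the partial marginals $M^{(i)}(X_{i+1}\times\dots\times X_N)=\sum_{X_1,\dots,X_i}M(X_1\times\dots\times X_N)$, proves recursively that $M^{(i-1)}=M_i(X_i)M^{(i)}=M^{(i)}M_i(X_i)$, unrolls this to obtain $M(X_1\times\dots\times X_N)=M_1(X_1)\cdots M_N(X_N)=M_N(X_N)\cdots M_1(X_1)$, and only then reads off pairwise commutativity from the agreement of the forward and reverse products. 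You instead extract pairwise commutativity up front (each joint effect commutes with every element of every PVM among the $M_i$, and the marginals are sums of joint effects) and then obtain uniqueness in a single step by decomposing $\mathcal{H}$ into the mutually orthogonal blocks $Q_{X_1\dots X_{N-1}}=M_1(X_1)\cdots M_{N-1}(X_{N-1})$ and projecting the marginal relation for $M_N$ onto each block. Your route makes the commutativity inference more transparent---the paper's closing remark that ``for the last equality to hold, the POVM elements must commute pairwise'' is terser than the argument really warrants, whereas your version states exactly which commutation relations are established and why they suffice---while the paper's recursion produces the product formula directly without introducing the block projections. The converse direction is handled identically in both. One minor point of care: the orthogonality you invoke ($M_i(X_i')G=0$ for $X_i'\ne X_i$) requires the outcome sets to be disjoint, which is automatic when the marginal is written as a sum over a partition of each $\Omega_i$ (the paper uses the binary partition $\{X_1,\Omega_1\setminus X_1\}$ for the same purpose); this is a phrasing issue, not a gap.
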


\begin{proof}
This proof is adapted from, and is a generalization of, the proof of Proposition 8 in the Appendix of Ref. \cite{heinosaari}. 

The first part of the proof is for the implication: joint measurability $\Rightarrow$ pairwise commutativity - A joint POVM for $\{M_1,\dots,M_N\}$ is defined as a map
$M:\mathcal{F}_1\times\mathcal{F}_2\times\dots\times\mathcal{F}_N \rightarrow \mathcal{B}_+(\mathcal{H})$, such that
\begin{equation}
 M_i(X_i)=\sum_{\{X_j\in\mathcal{F}_j|j\neq i\}}M(X_1\times\dots\times X_N)
\end{equation}
for all $X_i\in\mathcal{F}_i$, $i\in\{1\dots N\}$. Also, $M(X_1\times\dots\times X_N)\leq M_1(X_1)$, so the range of $M(X_1\times\dots\times X_N)$ is contained in the range of 
$M_1(X_1)$, and therefore:
\begin{equation}
M_1(X_1)M(X_1\times\dots\times X_N)=M(X_1\times\dots\times X_N).
\end{equation}
Using this relation for the complement $\Omega_1\backslash X_1 \in \mathcal{F}_1$:
\begin{eqnarray}
&&M_1(X_1)M(\Omega_1\backslash X_1\times\dots\times X_N)\nonumber\\
&&=(I-M_1(\Omega_1\backslash X_1))M(\Omega_1\backslash X_1\times\dots\times X_N)\nonumber\\
&&=0.
\end{eqnarray}
Taking the adjoints, it follows that
\begin{equation}
M(X_1\times\dots\times X_N)M_1(X_1)=M(X_1\times\dots\times X_N),
\end{equation}
and
\begin{equation}
M(\Omega_1\backslash X_1\times\dots\times X_N)M_1(X_1)=0.
\end{equation}
Denoting 
$$M^{(i)}(X_{i+1}\times\dots\times X_N)\equiv\sum_{\{X_j\in\mathcal{F}_j|j\leq i\}}M(X_1\times\dots\times X_N),$$
this implies:
\begin{eqnarray}
&&M_1(X_1)M^{(1)}(X_2\times\dots\times X_N)\nonumber\\
&=&M_1(X_1)M(X_1\times\dots\times X_N)\nonumber\\&&+M_1(X_1)M(\Omega_1\backslash X_1\times\dots\times X_N)\nonumber\\
&=&M_1(X_1)M(X_1\times\dots\times X_N)\nonumber\\
&=&M(X_1\times\dots\times X_N).
\end{eqnarray}
Taking the adjoint,
\begin{equation}
M^{(1)}(X_2\times\dots\times X_N)M_1(X_1)=M(X_1\times\dots\times X_N).
\end{equation}
Therefore:
\begin{eqnarray}
&&M_1(X_1)M^{(1)}(X_2\times\dots\times X_N)\nonumber\\
&=&M^{(1)}(X_2\times\dots\times X_N)M_1(X_1)\nonumber\\
&=&M(X_1\times\dots\times X_N).
\end{eqnarray}
Noting that $M^{(i-1)}(X_i\times\dots\times X_N)\leq M_i(X_i)$, one can repeat the above procedure for $M_i$, $i\in\{2,\dots,N-1\},$ to obtain:
\begin{eqnarray}
&&M^{(i-1)}(X_i\times\dots\times X_N)\nonumber\\
&=&M_i(X_i)M^{(i)}(X_{i+1}\times\dots\times X_N)\nonumber\\
&=&M^{(i)}(X_{i+1}\times\dots\times X_N)M_i(X_i).
\end{eqnarray}
Doing this recursively until $i=N-1$ and noting that $M^{(N-1)}(X_N)=M_N(X_N)$, it follows:
\begin{eqnarray}
&&M(X_1\times\dots\times X_N)\nonumber\\
&=&M_1(X_1)M^{(1)}(X_2\times\dots\times X_N)\nonumber\\
&=&M^{(1)}(X_2\times\dots\times X_N)M_1(X_1)\nonumber\\
&&\vdots\nonumber\\
&=&M_1(X_1)M_2(X_2)\dots M_{N-1}(X_{N-1})M_N(X_N)\nonumber\\
&=&M_N(X_N)M_{N-1}(X_{N-1})\dots M_2(X_2)M_1(X_1).\nonumber\\
\end{eqnarray}
For the last equality to hold, the POVM elements must commute pairwise, so that
\begin{equation}
M(X_1\times\dots\times X_N)=\prod_{i=1}^N M_i(X_i).
\end{equation}
This concludes the proof of the implication, joint measurability $\Rightarrow$ pairwise commutativity. The converse is easy to see since the joint POVM defined by taking the 
product of commuting POVM elements, $$\{M(X_1\times\dots\times X_N)=\prod_{i=1}^N M_i(X_i)|X_i\in\mathcal{F}_i\},$$ is indeed a valid POVM which coarse-grains to the given POVMs,
$\{M_1,\dots,M_N\}$.
\end{proof}

Indeed, pairwise commutativity $\Rightarrow$ joint measurability for any arbitrary set of POVMs, $\{M_1,\dots,M_N\}$, and it is only when all but (at most) one of these POVMs are PVMs that 
the converse---and the uniqueness of the joint POVM---holds.

\section{Specker's scenario}
Specker's scenario requires a set of three POVMs, $\{M_1,M_2,M_3\}$, that are pairwise jointly measurable, i.e., $\exists$ POVMs $M_{12}$, $M_{23}$, and $M_{31}$ which measure the respective pairs jointly.
An immediate consequence of the requirement of pairwise joint measurability of $\{M_1,M_2,M_3\}$ is that in quantum theory these three measurements cannot be realized as projective measurements 
(PVMs) and still be expected to show any contextuality. This is because for projective measurements or projection-valued measures (PVMs), a set of three measurements that are pairwise jointly measurable---and therefore admit \emph{unique} pairwise joint measurements---are 
also triplewise jointly measurable in the sense that there exists a \emph{unique} triplewise joint measurement which coarse-grains to each pairwise implementation of the three measurements and therefore also to the single measurements.

From Theorem \ref{uniqueness}, it follows that if $M_i$, $i\in\{1,2,3\}$, are PVMs then they admit unique pairwise and triplewise joint PVMs: 
\begin{eqnarray}
M_{ij}(X_i\times X_j)&=&M_i(X_i)M_j(X_j),\\
M(X_1\times X_2\times X_3)&=&M_1(X_1)M_2(X_2)M_3(X_3),
\end{eqnarray}

corresponding to the maps $M_{ij}:\mathcal{F}_i\times\mathcal{F}_j\rightarrow \mathcal{B}_+(\mathcal{H})$ and $M:\mathcal{F}_1\times\mathcal{F}_2\times\mathcal{F}_3\rightarrow \mathcal{B}_+(\mathcal{H})$,
respectively. Intuitively, this is  easy to see since joint measurability is equivalent to pairwise commutativity for a set of projective measurements and the joint measurement for each pair is unique \cite{heinosaari}.
The existence of a unique joint measurement implies that there exists a joint probability distribution realizable via this joint measurement for any given quantum state, thus explaining the pairwise statistics of the triple of measurements noncontextually in the traditional
Kochen-Specker sense.\footnote{KS-noncontextuality just means that there exists a joint probability distribution over the three measurement outcomes which marginalizes to the pairwise measurement statistics.
Violation of a KS inequality -- obtained under the assumption that a global joint distribution exists -- rules out KS-noncontextuality.}

Clearly, then, the three measurements $\{M_1, M_2, M_3\}$ must necessarily be unsharp for Specker's scenario to exhibit KS-contextuality. The uniqueness of joint measurements 
(pairwise or triplewise) need not hold in this case. I will refer to pairwise joint measurements as ``2-joints'' and triplewise joint measurements as ``3-joints''. Also,
I will use the phrases `joint measurability' and `compatibility' interchangeably since they will refer to the same notion. Consider 
the four propositions regarding the three measurements:

\begin{itemize}
 \item $\exists$ 2-joint: $\{M_1,M_2,M_3\}$ admit 2-joints, i.e. they are pairwise jointly measurable,
 \item $\nexists$ 2-joint: $\{M_1,M_2,M_3\}$ do not admit 2-joints, i.e. at least one pair is not jointly measurable,
 \item $\exists$ 3-joint: $\{M_1,M_2,M_3\}$ admit a 3-joint, i.e. they are triplewise jointly measurable.
 \item $\nexists$ 3-joint: $\{M_1,M_2,M_3\}$ do not admit a 3-joint, i.e. they are not triplewise jointly measurable.
\end{itemize}

The possible pairwise-triplewise propositions for the three measurements are: 
\begin{itemize}
 \item $(\exists \text{ 2-joint}, \exists \text{ 3-joint})$,
 \item $(\exists \text{ 2-joint}, \nexists \text{ 3-joint})$,
 \item $(\nexists \text{ 2-joint}, \nexists \text{ 3-joint})$.
\end{itemize}

Note that the proposition $(\nexists \text{ 2-joint}, \exists \text{ 3-joint})$ is trivially excluded because triplewise compatibility implies pairwise compatibility. Of the three remaining 
propositions, the ones of interest for contextuality are $(\exists \text{ 2-joint}, \exists \text{ 3-joint})$ and $(\exists \text{ 2-joint}, \nexists \text{ 3-joint})$,
since the remaining one is simply about observables that do not admit any joint measurement at all and hence no nontrivial compatibility contexts exist for this proposition.\footnote{
It is worth noting that, if $\{M_1,M_2,M_3\}$ were PVMs, then there are only two possibilities: $(\exists \text{ 2-joint}, \exists \text{ 3-joint})$ and  $(\nexists \text{ 2-joint}, \nexists \text{ 3-joint})$,
since for three PVMs, $\exists \text{ 2-joint} \Leftrightarrow \exists \text{ 3-joint}$, because pairwise commutativity is equivalent to joint measurability and 
the joint measurements are unique on account of Theorem \ref{uniqueness}. This is why KS-contextuality is impossible with PVMs in this scenario.}

It may seem that for purposes of contextuality even the proposition $(\exists \text{ 2-joint}, \exists \text{ 3-joint})$ is of no interest, but there is a subtlety involved here: one is only
considering whether 2-joints or a 3-joint exist for the set $\{M_1, M_2, M_3\}$. Since the statistics that is of relevance for Specker's scenario is the pairwise statistics \cite{LSW, KG}, 
one also needs to consider whether a given choice of 2-joints, $\{M_{12}, M_{23}, M_{31}\}$, admits a 3-joint, i.e., the proposition $(\exists \text{ 3-joint }\big|\text{ a choice of 2-joints})$ or its negation $(\nexists \text{ 3-joint }\big|\text{ a choice of 2-joints})$.
The four possible conjunctions are: 

\begin{itemize}
 \item $(\exists \text{ 2-joint}, \exists \text{ 3-joint})\bigwedge(\exists \text{ 3-joint }\big|\text{ a choice of 2-joints}),$
 \item $(\exists \text{ 2-joint}, \exists \text{ 3-joint})\bigwedge(\nexists \text{ 3-joint }\big|\text{ a choice of 2-joints}),$
 \item $(\exists \text{ 2-joint}, \nexists \text{ 3-joint})\bigwedge(\exists \text{ 3-joint }\big|\text{ a choice of 2-joints}),$
 \item $(\exists \text{ 2-joint}, \nexists \text{ 3-joint})\bigwedge(\nexists \text{ 3-joint }\big|\text{ a choice of 2-joints}).$
\end{itemize}

Of these, the first conjunction rules out the possibility of KS-contextuality, so it is not of interest for the present purpose. The third conjunction is false since the existence of a 3-joint for 
a given choice of 2-joints would also imply the existence of a 3-joint for the three single measurements, hence contradicting the fact that these admit no 3-joints. Thus the two remaining
conjunctions of interest are: 

\begin{itemize}
 \item \emph{Proposition 1}:\\$(\exists \text{ 2-joint}, \exists \text{ 3-joint})\bigwedge(\nexists \text{ 3-joint }\big|\text{ a choice of 2-joints})$,
 \item \emph{Proposition 2}:\\$(\exists \text{ 2-joint}, \nexists \text{ 3-joint})\bigwedge(\nexists \text{ 3-joint }\big|\text{ a choice of 2-joints})$\\
$\Leftrightarrow (\exists \text{ 2-joint}, \nexists \text{ 3-joint})$.
\end{itemize}

Note that for PVMs, both these propositions are false - the latter one especially so - and there is no KS-contextuality to be witnessed. These two possibilities lead to the following propositions:

\begin{itemize}
 \item \emph{Weak}: $(\exists \text{ 2-joint})\bigwedge(\nexists \text{ 3-joint}|\text{ a choice of 2-joints})$,
 \item \emph{Strong}:\\$(\exists \text{ 2-joint})\bigwedge(\nexists \text{ 3-joint}|\text{ for all choices of 2-joints})$\\
$\Leftrightarrow (\exists \text{ 2-joint}, \nexists \text{ 3-joint})$.
\end{itemize}
 
where \emph{Weak} $\Leftrightarrow$ \emph{Proposition 1} $\bigvee$ \emph{Proposition 2}, and \emph{Strong} $\Leftrightarrow$ \emph{Proposition 2}. The proposition \emph{Weak} relaxes 
the requirement of proposition \emph{Strong} that the three single measurements should themselves be incompatible (that is, not admit a 3-joint) to only the requirement that they admit a choice of 2-joints that in turn do not 
admit a 3-joint.
Obviously, under \emph{Strong}, there exists no 3-joint for all possible choices of 2-joints: \emph{Strong} $\Rightarrow$ \emph{Weak}.\footnote{
Note that for the case of PVMs, only the conjunction $(\exists \text{ 2-joint}, \exists \text{ 3-joint})\bigwedge(\exists \text{ 3-joint}|\text{ a choice of 2-joints})$
makes sense and that it is, in fact, equivalent to the proposition $(\exists \text{ 2-joint}, \exists \text{ 3-joint})$ since there is no ``choice of 2-joints'' available: the 2-joints,
if they exist, are unique and admit a unique 3-joint (cf. Theorem \ref{uniqueness}). Consequently, the propositions \emph{Weak} and \emph{Strong} are not admissible for PVMs.}

\subsection{A comment on Yu-Oh \cite{FYuOh} vis-\`a-vis Kunjwal-Ghosh \cite{KG}}

In Ref. \cite{KG}, contextuality---in the generalized sense of Spekkens \cite{genNC} and by implication in the Kochen-Specker sense---was shown keeping in mind the proposition \emph{Strong}, i.e., requiring that
the three measurements $\{M_1,M_2,M_3\}$ are pairwise jointly measurable but not triplewise jointly measurable. This was in keeping with the approach adopted in Ref. \cite{LSW}, where the construction 
used did not violate the LSW inequality \cite{LSW, KG}. Indeed, as shown in Theorem 1 of Ref. \cite{KG}, the construction used in Ref. \cite{LSW} could not have produced a violation because
it sought a state-independent violation. 

In Ref. \cite{FYuOh}, the authors -- under \emph{Proposition 1} -- use the construction first obtained in \cite{KG} (and demonstrated in Chapter 2) to show a higher violation of the LSW inequality than reported 
in Ref.~\cite{KG}. It is easy to check that the construction in Ref.~\cite{KG} recovers the violation reported in Ref.~\cite{FYuOh} when
the proposition \emph{Strong} is relaxed to the proposition \emph{Weak}: the expression for the quantum probability of anticorrelation in Ref.~\cite{KG} is given by

\begin{equation}\label{anti}
R_3^Q=\frac{C}{6}+\left(1-\frac{\eta}{3}\right)
\end{equation}
where $C>0$ for a state-dependent violation of the LSW inequality \cite{LSW,KG}. Given a coplanar choice of measurement directions $\{\hat{n}_1,\hat{n}_2,\hat{n}_3\}$, and $\eta$ satisfying $\eta_l<\eta\leq\eta_u$, the optimal value of $C$
---denoted as $C^{\{\hat{n}_i\},\eta}_{\max}$---is given by 
\begin{eqnarray}\label{Cmax}\nonumber
 &&C^{\{\hat{n}_i\},\eta}_{\max}=2\eta\\&+&\sum_{(ij)}\left(\sqrt{1+\eta^4(\hat{n}_i.\hat{n}_j)^2-2\eta^2}-(1+\eta^2 \hat{n}_i.\hat{n}_j)\right).
\end{eqnarray}

For trine measurements, $\hat{n}_i.\hat{n}_j=-\frac{1}{2}$ for each pair of measurement directions, $\{\hat{n}_i,\hat{n}_j\}$. Also, $\eta_l=\frac{2}{3}$
and $\eta_u=\sqrt{3}-1$. $\eta>\eta_l$ ensures that the three measurements corresponding to $\{\hat{n}_1,\hat{n}_2,\hat{n}_3\}$ do not admit a 3-joint while
$\eta\leq\eta_u$ is necessary and sufficient for 2-joints to exist: that is, $\eta_l<\eta\leq\eta_u$ corresponds to the proposition \emph{Strong}, $(\exists \text{ 2-joint}, \nexists \text{ 3-joint})$.
On relaxing the requirement $\eta_l<\eta$, we have $0\leq\eta\leq\eta_u$. This allows room for the proposition $(\exists \text{ 2-joint}, \exists \text{ 3-joint})$ when
$0\leq\eta\leq \eta_l$.

The quantity to be maximized is the quantum violation: $R_3^Q-(1-\frac{\eta}{3})=\frac{C}{6}$. Substituting the value $\hat{n}_i.\hat{n}_j=-\frac{1}{2}$ in Eq. (\ref{Cmax}), the quantum probability of anticorrelation from Eq. (\ref{anti}) for trine measurements is given by:
\begin{equation}
 R_3^Q=\frac{1}{2}+\frac{\eta^2}{4}+\frac{1}{2}\sqrt{1-2\eta^2+\frac{\eta^4}{4}},
\end{equation}
which is the same as the bound in Eq. (11) in Theorem 3 of Ref. \cite{FYuOh}. The quantum violation is given by:
\begin{equation}
 R_3^Q-(1-\frac{\eta}{3})=-\frac{1}{2}+\frac{\eta}{3}+\frac{\eta^2}{4}+\frac{1}{2}\sqrt{1-2\eta^2+\frac{\eta^4}{4}}.
\end{equation}

In Ref. \cite{KG}, this expression was maximized under the proposition \emph{Strong} ($\eta_l<\eta\leq \eta_u$)
and the quantum violation was seen to approach a maximum of $0.0336$ for $R_3^Q\rightarrow0.8114$ as $\eta\rightarrow \eta_l=\frac{2}{3}$. In Ref. \cite{FYuOh}, the same expression 
was maximized while relaxing proposition \emph{Strong} to proposition \emph{Weak} (allowing $\eta\leq\eta_l$) and the maximum quantum violation was seen to be $0.0896$
for $R_3^Q=0.9374$ and $\eta\approx 0.4566$.

Another comment in Ref. \cite{FYuOh} is the following: 

\emph{``Interestingly, there are three observables that are not triplewise jointly measurable but cannot violate LSW's inequality no matter
how each two observables are jointly measured.''} 

That is, \emph{Strong} $\nRightarrow$ Violation of LSW inequality. Equally, it is also the case that \emph{Weak} $\nRightarrow$ Violation of LSW inequality.
Neither of these is surprising given the discussion here. In particular, note the following implications ($0\leq\eta\leq 1$): 

\begin{enumerate}
 \item Violation of LSW inequality, i.e., $R_3^Q>1-\frac{\eta}{3}$ $\Rightarrow$ Violation of KS inequality, i.e., $R_3^Q>\frac{2}{3}$,
 \item Violation of KS inequality, i.e., $R_3^Q>\frac{2}{3}$ $\Rightarrow$ \emph{Weak}: $(\exists \text{ 2-joint})\bigwedge(\nexists \text{ 3-joint}|\text{ a choice of 2-joints})$,
 \item \emph{Strong} $\Rightarrow$ \emph{Weak}.
\end{enumerate}

Therefore, \emph{Weak} is a necessary condition for a violation of the LSW inequality. It can be satisfied either under \emph{Proposition 1} (as done in \cite{FYuOh}) or under 
\emph{Proposition 2} (or \emph{Strong}, as done in \cite{KG}).

\subsection{Joint measurability structures}

Let me end with a comment on the result proven in Ref.~\cite{KHF}, which is the basis of the next chapter in this thesis, where it was shown constructively that any conceivable joint measurability 
structure for a set of $N$ observables is realizable via
binary POVMs. With regard to contextuality, this result proves the admissibility in quantum theory of scenarios that are not realizable with PVMs alone. This should be easy to see, specifically, 
from the 
example of Specker's scenario, where PVMs do not suffice to demonstrate contextuality, primarily because they possess a very rigid joint measurability structure dictated by pairwise commutativity
and their joint measurements are unique (Theorem \ref{uniqueness}). If one can demonstrate contextuality given the scenarios obtained from more general joint measurability structures then a relaxation of
a sort similar to the case of Specker's scenario (from \emph{Strong} to \emph{Weak}) will also lead to contextuality. In this sense, an implication of the result of Ref. \cite{KHF} is that it allows 
one to consider the question of contextuality for joint measurability structures which admit no PVM realization in quantum theory on account of Theorem \ref{uniqueness}.

In particular, for PVMs, \emph{pairwise compatibility} $\Leftrightarrow$ \emph{global compatibility} because commutativity is a necessary and sufficient criterion for compatibility. On the other hand, POVMs allow for 
a failure of the implication \emph{pairwise compatibility} $\Rightarrow$ \emph{global compatibility} because pairwise compatibility is not equivalent to pairwise commutativity for POVMs:
\emph{pairwise commutativity} $\Rightarrow$ \emph{pairwise compatibility}, but not conversely.

\section{Chapter summary}
The main purpose of this chapter was to address any confusion that Ref.~\cite{FYuOh} might cause regarding the results presented in Chapter 2 and published in Ref.~\cite{KG}. 
With this out of the way, we can now turn to the question of admissible joint measurability structures in quantum theory, to be taken up in Chapter 4.

\chapter{All joint measurability structures are quantum realizable}
In many a traditional physics textbook, a quantum measurement is defined as a projective measurement represented by
a Hermitian operator. In quantum information theory, however, the concept of a measurement is dealt with in complete generality
and we are therefore forced to confront the more general notion of positive-operator valued measures (POVMs) which 
suffice to describe all measurements that can be implemented in quantum experiments. In this chapter, we study the (in)compatibility
of such POVMs and show that quantum theory realizes all possible (in)compatibility relations among sets 
of POVMs. This is in contrast to the restricted case of projective measurements for which commutativity is essentially equivalent to 
compatibility. We thus uncover a fundamental feature regarding the (in)compatibility of 
quantum observables that has no analog in the case of projective measurements.

This chapter is based on work reported in Ref.~\cite{KHF}.
\section{Introduction}
In the traditional textbook treatment of measurements in quantum theory one usually comes across projective measurements. 
For these measurements, commutativity of the associated Hermitian operators is necessary and sufficient for them to be compatible.
That is, commuting Hermitian operators represent quantum observables that can be jointly measured in a single experimental setup.
Furthermore, given a set of $N$ projective measurements, commutativity means pairwise commutativity and we have:
$$\text{pairwise compatibility}\Leftrightarrow\text{global compatibility}.$$

This equivalence is rather special since it reduces the problem of deciding whether a set of projective measurements is 
compatible to checking that every pair in the set commutes. Operationally, this also means that the measurement statistics 
obtained by performing these measurements sequentially on any preparation of a quantum system is independent of the sequence 
in which the measurements are performed, \textit{e.g.}, if $A$, $B$, $C$ are Hermitian operators that commute pairwise, then the
sequential measurements $ABC$, $ACB$, $BAC$, $BCA$, $CAB$ and $CBA$ are all physically equivalent.

However, once the projective property is relaxed and the resulting positive-operator valued measures (POVMs) are considered,
the implication ``pairwise compatibility $\Rightarrow$ global compatibility'' no longer holds. The converse implication is still 
true. Indeed, one can construct examples where a set of three POVMs is pairwise compatible but there is no global compatibility
between them~\cite{Kraus,LSW,KG,HRF}, a feature characteristic of the observables in Specker's scenario that we considered in Chapter 1. 
With this in mind, our purpose in this chapter is to explore whether there really is any 
constraint on the (in)compatibility relations that one could realize between quantum measurements (POVMs). If, for example, 
certain sets of (in)compatibility relations were not allowed in quantum theory then that would isolate conceivable 
joint measurability structures that are nevertheless forbidden in nature. 

\begin{figure}
\centering
\includegraphics[scale=1.8]{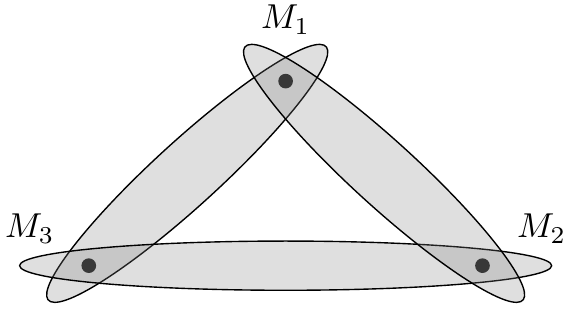}
\caption{Specker's scenario.}
\label{specker}
\end{figure}

It is worth noting that the impossibility of jointly implementing arbitrary sets of measurements is a key ingredient that 
enables a demonstration of the nonclassicality of quantum theory in proofs of Bell's theorem~\cite{Bell64} and the Kochen--Specker 
theorem~\cite{KS67}. A finite set of measurements is called \textit{jointly measurable} or \textit{compatible} if there exists a single
measurement whose various coarse-grainings recover the original measurements. The problem of characterizing the joint measurability 
of observables has been studied in the literature~\cite{heinosaari, SRH}, and at least the joint measurability of binary qubit observables has 
been completely characterized~\cite{BS, YuOh}. The connection between Bell inequality violations and the joint measurability of observables
has also been quantitatively studied~\cite{andersson, wolfetal}.

A natural question that arises when thinking about the (in)compatibility of observables is the following: given a set of 
(in)compatibility relations on a set of vertices representing observables, do they admit a quantum realization? That is, 
can one write down a positive-operator valued measure (POVM) for each vertex such that the (in)compatibility relations 
among the vertices are realized by the assigned POVMs? After formally defining these notions, we answer this question in the 
affirmative by providing an explicit construction of POVMs for any set of (in)compatibility relations.

We will use the terms `(not) jointly measurable' and `(in)compatible' interchangeably in this chapter. Part of our motivation in
studying this question comes from the simplest example of joint measurability relations realizable with POVMs but not with 
projective measurements. As we mentioned, this joint measurability scenario, referred to as Specker's scenario~\cite{Spe60, LSW, KG} in Chapter 1, involves
three binary measurements that can be jointly measured pairwise but not triplewise: that is, for the set of binary measurements
$\{M_1,M_2,M_3\}$, the (in)compatibility relations are given by the collection of compatible subsets $\{\{M_1,M_2\},\{M_2,M_3\},\{M_1,M_3\}\}$.
The remaining nontrivial subset (with at least two observables), namely $\{M_1,M_2,M_3\}$, is incompatible. This can be 
pictured as a hypergraph (Fig.~\ref{specker}).

In Chapter 1, we showed how Specker's scenario can be exploited to violate the LSW inequality using a set of three qubit POVMs 
realizing this scenario~\cite{genNC,LSW,KG}. This novel demonstration of contextuality in quantum theory
raises the question of whether there exist other scenarios---for example in an observable-based hypergraph approach as 
in~\cite{AB,CF}---that do not admit a proof of quantum contextuality using projective measurements, but do admit such a proof using POVMs.
A necessary first step towards answering this question is to figure out what compatibility scenarios are realizable in quantum
theory. One can then ask whether these scenarios allow nontrivial correlations that rule out noncontextuality~\cite{genNC}.
We take this first step by proving that, in principle, all joint measurability hypergraphs are realizable in quantum theory.
The realizability of all joint measurability graphs via projective measurements is known~\cite{HRF}. This prompted our 
question whether all joint measurability hypergraphs are realizable via POVMs. Our positive answer includes joint measurability
hypergraphs that do not admit a realization using projective measurements. For our construction, it suffices to consider binary observables on
finite-dimensional Hilbert spaces.

\section{Definitions}
\paragraph*{POVMs.} A positive-operator valued measure (POVM) on a Hilbert space $\mathcal{H}$ is a mapping $x\mapsto M(x)$ from an outcome set $X$ ($x\in X$) to the set of positive semidefinite operators
\[
M(x)\in\mathcal{B}(\mathcal{H}),\quad M(x)\geq 0,
\]
such that the POVM elements $M(x)$ sum 
to the identity operator,
\[
\sum_{x\in X}M(x)=I.
\]
If $M(x)^2=M(x)$ for all $x\in X$, then the POVM becomes a ``projection valued measure'', or simply, a projective measurement.

\paragraph*{Joint measurability of POVMs.} A finite set of POVMs 
\[
\{M_1,\dots,M_N\},
\]
where measurement $M_i$ has outcome set $X_i$,
is said to be \textit{jointly measurable} or \textit{compatible} if there exists a POVM $M$ with outcome set $X_1\times X_2 \times \dots \times X_N$ that marginalizes to each $M_i$ with outcome set $X_i$, meaning that
\[
M_i(x_i)=\sum_{x_1,\ldots,\cancel{x_i},\ldots,x_N}M(x_1,\dots,x_N) 
\]
for all outcomes $x_i\in X_i$.

\paragraph*{Joint measurability hypergraphs.} A \textit{hypergraph} consists of a set of vertices $V$, and 
a family $E \subseteq \{ e \mid e \subseteq V \}$ of subsets of $V$ called \textit{edges}.
We think of each vertex as representing a POVM, while an edge models joint measurability of the POVMs 
it links. Since every subset of a set of compatible measurements should also be compatible, a joint 
measurability hypergraph should have the property that any subset of an edge is also an edge,
\[
e\in E,\: e'\subseteq e \implies e'\in E.
\]
Additionally, we focus on the case where each edge $e$ is a finite subset of $V$.
This makes a joint measurability hypergraph into an abstract simplicial complex.

Every set of POVMs on $\mathcal{H}$ has such an associated joint measurability hypergraph. 
Hence characterizing joint measurability of quantum observables comes down to figuring out their joint measurability hypergraph.
Our main result solves the converse problem. Namely, every abstract simplicial complex arises from the joint measurability
relations of a set of quantum observables.

\section{Quantum realization of arbitrary joint measurability structures}

\begin{theorem}
 Every joint measurability hypergraph admits a quantum realization with POVMs.
\end{theorem}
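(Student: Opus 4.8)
The plan is to reduce the statement to two essentially independent sub-problems: (i) building a single \emph{gadget} realizing one minimal obstruction to joint measurability, and (ii) gluing such gadgets together so that the combined (in)compatibility pattern reproduces an arbitrary abstract simplicial complex. The reduction rests on the observation that a joint measurability hypergraph, being downward closed, is completely determined by its \emph{minimal non-edges}: a subset $S\subseteq V$ is an edge (jointly measurable) precisely when $S$ contains none of the minimal non-edges. So it suffices to realize, for each minimal non-edge $f$, a family of $|f|$ binary POVMs that are jointly measurable on every proper subset of $f$ but not on all of $f$, and then to combine the families for all minimal non-edges into a single binary POVM per vertex.

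For the gluing I would use a direct-sum construction rather than a tensor product. Index the blocks of $\mathcal{H}=\bigoplus_f \mathcal{H}_f$ by the minimal non-edges $f$, where $\mathcal{H}_f$ carries the gadget realizing $f$. To each vertex $i$ I assign the block-diagonal POVM $M_i=\bigoplus_f M_i^{(f)}$, where $M_i^{(f)}$ is the gadget POVM on $\mathcal{H}_f$ when $i\in f$ and the trivial ``fair-coin'' POVM $\{\tfrac12 I_{\mathcal{H}_f},\tfrac12 I_{\mathcal{H}_f}\}$ when $i\notin f$. Two facts make this work. First, a fair-coin POVM is jointly measurable with any set of POVMs (append a coin flip to any joint POVM), so the trivial blocks never obstruct compatibility. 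Second, joint measurability of block-diagonal POVMs decomposes over the blocks: if $G$ is a joint POVM on $\mathcal{H}$ then $P_f G(\cdot)P_f$ restricts to a joint POVM on each $\mathcal{H}_f$, and conversely block-wise joint POVMs direct-sum to a global one. Combining these, $\{M_i:i\in S\}$ is jointly measurable iff for every $f$ the gadget POVMs $\{M_i^{(f)}:i\in S\cap f\}$ are jointly measurable (the remaining vertices contribute only fair coins, which do not obstruct), iff $f\not\subseteq S$ for every minimal non-edge $f$, iff $S$ is an edge --- exactly the desired realization.

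The hard part will be constructing the gadget itself: for each $k$, a set of $k$ binary POVMs on some finite-dimensional Hilbert space that are $(k-1)$-wise jointly measurable but not $k$-wise. The case $k=2$ is immediate (two noncommuting sharp binary observables), and $k=3$ is exactly Specker's scenario realized in Chapter~2 via noisy spin-$\tfrac12$ observables. For general $k$ I would take suitably noisy sharp observables and tune the unsharpness so that the full set sits just past the threshold for global joint measurability while every $(k-1)$-element subset stays below its own threshold; downward closure of joint measurability then yields compatibility of all smaller subsets for free. The genuinely delicate points are (a) arranging the directions and the noise so that all $\binom{k}{k-1}$ maximal proper subsets are \emph{simultaneously} compatible, which I expect to force the dimension to grow with $k$, and (b) certifying the \emph{in}compatibility of the whole set --- most cleanly by showing that the unique candidate joint POVM consistent with the marginals has a non-positive element, i.e.\ by exhibiting an explicit semidefinite-positivity witness that no valid $k$-fold joint POVM exists.
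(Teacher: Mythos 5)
Your overall architecture coincides with the paper's: reduce to the minimal incompatible sets (your ``minimal non-edges''), realize each one by a gadget on its own Hilbert space, pad the remaining vertices with trivial POVMs, and glue by direct sum. Your gluing argument is correct and in fact spelled out more carefully than in the paper (the paper uses the trivial POVM $\{0,I\}$ where you use the fair coin $\{\tfrac12 I,\tfrac12 I\}$; both are harmless), and your observation that joint measurability of block-diagonal POVMs decomposes blockwise via $G\mapsto P_fG(\cdot)P_f$ is exactly what is needed.

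The gap is that the gadget --- which you yourself identify as ``the hard part'' and which is the entire technical content of the theorem --- is never constructed. Saying you ``would take suitably noisy sharp observables and tune the unsharpness'' does not resolve your own delicate points (a) and (b): for generic choices of directions the $\binom{k}{k-1}$ maximal proper subsets have \emph{different} compatibility thresholds, so there is no reason a single noise parameter can be tuned to sit below all of them while exceeding the global threshold. The paper's resolution is to take $E^k_\pm=\tfrac12(I\pm\eta\,\Gamma_k)$ with $\Gamma_1,\dots,\Gamma_N$ generators of a Clifford algebra, $\Gamma_j\Gamma_k+\Gamma_k\Gamma_j=2\delta_{jk}I$. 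Then $(\sum_k x_k\Gamma_k)^2=N\cdot I$ for all sign choices, and a short trace argument shows $\eta\le 1/\sqrt{N}$ is \emph{necessary and sufficient} for joint measurability of any $N$ of them; the bound depends only on the cardinality of the subset, so permutation symmetry makes all $(N-1)$-subsets compatible simultaneously at $\eta=1/\sqrt{N-1}$ while the full set is not. This is the missing idea. Separately, your ``most cleanly'' proposed certificate of incompatibility --- showing that ``the unique candidate joint POVM consistent with the marginals has a non-positive element'' --- would fail as stated: uniqueness of the joint POVM holds only when all but at most one of the observables are projective (Theorem 4 of Chapter 3), whereas the gadget observables are necessarily unsharp, so there is a continuum of candidate joint POVMs and checking one of them proves nothing. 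A valid certificate must rule out \emph{every} joint POVM at once, which is what the paper's trace/operator-norm inequality does.
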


\begin{proof}
We begin by proving a necessary and sufficient criterion for the joint measurability of $N$ binary POVMs $M_k:=\{E^k_{+},E^k_{-}\}$ of the form
\begin{equation}
\label{Ek}
 E^k_{\pm}:=\frac{1}{2}\left(I\pm\eta\Gamma_{k}\right),
\end{equation}
where the $\Gamma_{k}$ are generators of a Clifford algebra as in the Appendix. The variable $\eta\in[0,1]$ is a purity parameter.
Since $\Gamma_k^2=I$, the eigenvalues of $\Gamma_k$ are $\pm 1$, so that $E^k_\pm$ is indeed positive. The following derivation
of a joint measurability criterion is adapted from a proof first obtained in~\cite{LSW}, and subsequently revised in~\cite{KG},
for the joint measurability of a set of noisy qubit POVMs.
Because $\Gamma_k$ is traceless by~\eqref{traceless}, we can recover the purity parameter $\eta$ as
\[
 \Tr(\Gamma_k E^k_{\pm})=\pm\frac{\eta}{2}d,
\]
so that
\begin{equation}
\label{recovereta}
 \eta=\frac{1}{Nd}\sum_{k=1}^{N}\sum_{x_k\in X_k}\Tr(x_k\Gamma_k E^k_{x_k}),
\end{equation}
where we have introduced one separate outcome $x_k\in X_k:=\{+1,-1\}$ for each measurement $M_k$.

If all 
$M_k=\{E^k_+,E^k_-\}$ together are jointly measurable, then there exists
a joint POVM $M=\{E_{x_1\dots x_N}\}$ satisfying
\[
 E^k_{x_k}=\sum_{x_1,\ldots \cancel{x_k},\ldots x_N}E_{x_1\dots x_N}.
\]
Writing $\vec{x}:=(x_1,\ldots,x_N)$ and $\vec{\Gamma}:=(\Gamma_1,\ldots,\Gamma_N)$, this assumption together with~\eqref{recovereta} implies that
\begin{align*}
 \eta&=\frac{1}{Nd}\sum_{\vec{x}}\Tr\left[\left(\sum_{k=1}^{N}x_k\Gamma_k\right)E_{x_1\dots x_N}\right]\\[5pt]
 &\leq \frac{1}{Nd}\sum_{\vec{x}} \|\vec{x}\cdot\vec{\Gamma}\| \:\Tr\left[E_{\vec{x}}\right]\\[5pt]
 &= \frac{1}{N} \|\vec{x}\cdot\vec{\Gamma}\|,
\end{align*}
where the last step used the normalization $\sum_{\vec{x}} E_{\vec{x}}=I$. Since $(\vec{x}\cdot\vec{\Gamma})^2=\sum_k X_k^2=N\cdot I$ by~\eqref{cliffordprod}, we have $\|\vec{x}\cdot\vec{\Gamma}\|=\sqrt{N}$, and therefore
\[
\eta \leq  \frac{1}{\sqrt{N}} ,
\]
a necessary condition for joint measurability of $M_k$.
To show that this condition is also sufficient, we consider the joint POVM $M=\{E_{\vec{x}}\}$
given by
\begin{equation}
 E_{x_1\dots x_N}:=\frac{1}{2^N} \left(I+\eta\:\vec{x}\cdot\vec{\Gamma}\right).
\end{equation}
We start by showing that this indeed defines a POVM,
\[
E_{x_1\dots x_N}\geq 0, \quad \sum_{x_1,\dots, x_N} E_{x_1\dots x_N}=I.
\]
Positivity follows again from noting that the eigenvalues of $\vec{x}\cdot\vec{\Gamma}$ are $\pm\sqrt{N}$ by~\eqref{cliffordprod}, and normalization from $\sum_{\vec{x}}\vec{x}\cdot\vec{\Gamma}=0$.
Since
\[
 \sum_{x_1,\ldots,\cancel{x_k},\ldots,x_N} E_{x_1\dots x_N}=\frac{1}{2}\left(I+\eta x_k\Gamma_k\right)
\]
coincides with~\eqref{Ek}, we have indeed found a joint POVM marginalizing to the given $M_k$.

Thus $\eta\leq \frac{1}{\sqrt{N}}$ is a necessary and sufficient condition for the joint measurability of $M_1,\ldots,M_N$.

For arbitrary $N$, then, we can construct $N$ POVMs on a Hilbert space of appropriate dimension such that any 
$N-1$ of them are compatible, whereas all $N$ together are incompatible: 
simply take $M_1,\ldots, M_{N}$ from~\eqref{Ek} for any purity parameter $\eta$ satisyfing
\[
 \frac{1}{\sqrt{N}}<\eta\leq\frac{1}{\sqrt{N-1}}.
\]
For example, $\eta=1/\sqrt{N-1}$ will work. The above reasoning guarantees that all $N$ of them together are not compatible, and also that the $M_1,\ldots,M_{N-1}$ are compatible. By permuting the labels and 
observing that the above reasoning did not rely on any specific ordering of the $\Gamma_k$, we conclude that \textit{any} $N-1$ measurements among the $M_1,\ldots,M_N$ are compatible.

What we have established so far is that, 
if we are given any $N$-vertex joint measurability hypergraph 
where every subset of $N-1$ vertices is compatible (\textit{i.e.}\ belongs to a common edge), 
but the $N$-vertex set is incompatible, 
then the above construction provides us with a quantum realization of it.
These ``Specker-like'' hypergraphs are crucial to our construction. For example, for $N=3$, we
obtain a simple realization of Specker's scenario (Fig. \ref{specker}). For $N=2$, we simply obtain a pair of incompatible observables.
Given an arbitrary joint measurability hypergraph, the procedure to construct a quantum realization is now the following:

\begin{enumerate}
 \item Identify the minimal incompatible sets of vertices in the hypergraph. A minimal incompatible set is an incompatible
set of vertices such that any of its proper subsets \textit{is} compatible. In other words, it is a Specker-like hypergraph embedded
 in the given joint measurability hypergraph.
 \item For each minimal incompatible set, construct a quantum realization as above. Vertices that are outside this minimal incompatible set
 can be assigned a trivial POVM in which one outcome is deterministic, represented by the identity operator $I$. Let $\mathcal{H}_i$ denote
 the Hilbert space on which the minimal incompatible set is realized, where $i$ indexes the minimal incompatible sets.
 \item Having thus obtained a quantum representation of each minimal incompatible set, we simply ``stack'' these together in a direct
 sum over the Hilbert spaces on which each of the minimal incompatible sets are realized. On this larger direct sum Hilbert space
 $\mathcal{H}=\oplus_{i} \mathcal{H}_i$, we then have a quantum realization of the joint measurability hypergraph we started with.
\end{enumerate}
For any edge $e\in E$, the associated measurements are compatible on every $\mathcal{H}_i$, and therefore also on $\mathcal{H}$. On the other hand, every $e'\subseteq V$ that is not an edge is contained in some minimal incompatible set (or is itself already minimal), and therefore the associated POVMs are incompatible on some $\mathcal{H}_i$, and 
hence also on $\mathcal{H}$.
\end{proof}

\begin{figure}
\centering
\includegraphics[scale=1.8]{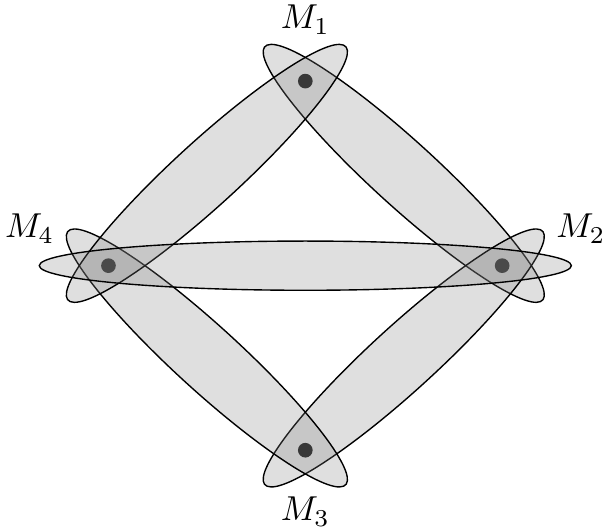}
\caption{A joint measurability hypergraph for $N=4$.}
\label{hyper}
\end{figure}

\begin{figure}
\centering
\includegraphics[scale=1.8]{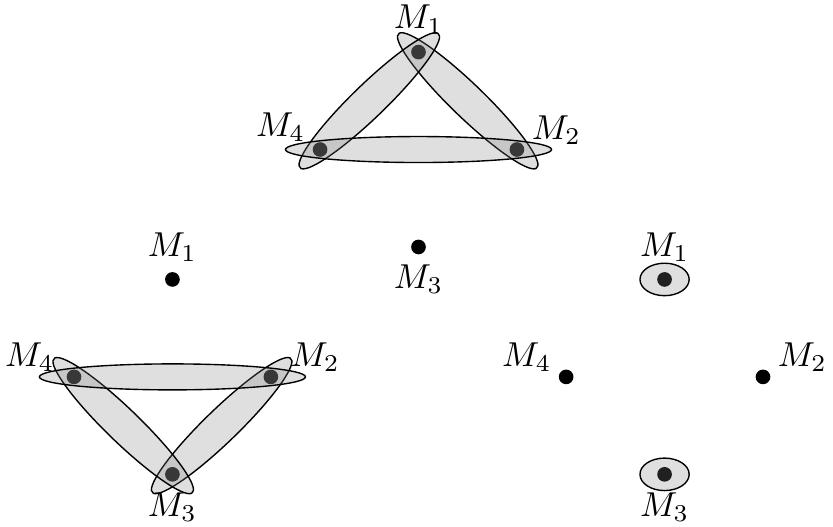}
\caption{Minimal incompatible sets for the joint measurability hypergraph in Fig.~\ref{hyper}.}
\label{incomp}
\end{figure}

\section{A simple example} 
To illustrate these ideas, we construct a POVM realization of a simple joint measurability hypergraph that does not admit a 
representation with projective measurements (Fig.~\ref{hyper}). This hypergraph can be decomposed into three minimal incompatible sets of vertices (Fig.~\ref{incomp}).
Two of these are Specker scenarios for $\{M_1,M_2,M_4\}$ and $\{M_2,M_3,M_4\}$, and the third one is a pair of incompatible vertices
$\{M_1,M_3\}$. For the minimal incompatible set $\{M_1,M_2,M_4\}$, we construct a set of three binary POVMs, $A_k\equiv\{A^k_+,A^k_-\}$ with $k\in \{1,2,4\}$ on a qubit Hilbert space $\mathcal{H}_1$ given by
\begin{equation}
A^k_{\pm}:=\frac{1}{2}\left(I\pm \frac{1}{\sqrt{2}}\Gamma_k\right),
\end{equation}
where the 
matrices $\{\Gamma_1,\Gamma_2,\Gamma_4\}$ can be taken to be the Pauli matrices,
\[
\Gamma_1 = \sigma_z,\quad \Gamma_2=\sigma_x,\quad \Gamma_4=\sigma_y,
\]
similar to~\eqref{pauli}.
The remaining vertex $M_3$ can be taken to be the trivial POVM $A_3=\{0,I\}$ on $\mathcal{H}_1$. A similar construction works for the second 
Specker scenario $\{M_2,M_3,M_4\}$ 
by setting
$B_k:=\{B^k_+,B^k_-\}$ with $k\in \{2,3,4\}$ to be
\begin{equation}
B^k_{\pm}:=\frac{1}{2}\left(I\pm \frac{1}{\sqrt{2}}\Gamma_k\right),
\end{equation}
where
\[
\Gamma_2=\sigma_z,\quad \Gamma_3=\sigma_x,\quad \Gamma_4=\sigma_y
\]
act on another qubit Hilbert space $\mathcal{H}_2$. The remaining vertex $M_1$
can be assigned the trivial POVM, $B_1=\{0,I\}$. The third minimal incompatible set $\{M_1,M_3\}$ can similarly 
be obtained on another qubit Hilbert space $\mathcal{H}_3$ as $C_k:=\{C^k_+,C^k_-\}$, with $k\in \{1,3\}$, given by
\begin{equation}
C^k_{\pm}:=\frac{1}{2}(I\pm \Gamma_k),
\end{equation}
where now \textit{e.g.}~$\Gamma_1=\sigma_z$ and $\Gamma_3=\sigma_x$. 
The remaining vertices $M_2$ and $M_4$ can both be 
assigned the trivial POVM $C_2=C_4:=\{0,I\}$ on $\mathcal{H}_3$.

In the direct sum Hilbert space
$\mathcal{H}:=\mathcal{H}_1\oplus \mathcal{H}_2\oplus \mathcal{H}_3$, we then have a POVM realization 
of the joint measurability hypergraph of Fig.~\ref{hyper}, given by
\[
 M^k_{\pm}:=A^k_{\pm}\oplus B^k_{\pm}\oplus C^k_{\pm}.
\]

\section{Chapter summary} We have shown, by construction, that any conceivable set of (in)compatibility relations for any number
of quantum measurements can be realized using a set of binary POVMs. Our result thus demonstrates that quantum theory is not 
constrained to admit only a restricted set of (in)compatibility relations, such as those where pairwise compatibility $\Leftrightarrow$
global compatibility, which is the case with projective measurements. Indeed, quantum theory admits all possible (in)compatibility
relations. With respect to (in)compatibility relations, therefore, quantum theory is as far away from classical theories 
(where there are no incompatibilities) as possible. By ``classical theories'' we mean those where all measurements can be jointly implemented.

Although our simple construction works for all joint measurability hypergraphs, it is probably not the most efficient one for
a given joint measurability hypergraph: for Fig.~\ref{hyper}, our representation lives on a six-dimensional Hilbert space. 
For a joint measurability hypergraph with a fixed number of vertices, the dimension of the Hilbert space $\mathcal{H}$ on which our construction
is realized depends on the number of minimal incompatible sets in the hypergraph: that is, $\dim \mathcal{H}=\sum_i \dim \mathcal{H}_i$,
where $\mathcal{H}_i$ is the Hilbert space on which the $i$th minimal incompatible set is realized. 
It remains open what the most efficient construction---requiring the smallest Hilbert space dimension---for a given joint 
measurability hypergraph is. 
Our result captures all conceivable (in)compatibility relations within the framework
of quantum theory, thus 
shedding light on the structure of quantum theory and what it really allows us to do.

\section*{Appendix: Clifford algebras}
A \textit{Clifford algebra} consists of a finite set of hermitian matrices $\Gamma_1,\ldots,\Gamma_N$ 
satisfying the relations\footnote{Strictly speaking, this is a \textit{representation} of a Clifford algebra, but the difference 
between algebras and their representations is not relevant here.}

\begin{equation}
\label{acrel}
\Gamma_j\Gamma_k + \Gamma_k\Gamma_j = 2\delta_{jk}I,
\end{equation}
Clifford algebras are the mathematical structure behind the definition of spinors and the Dirac equation~\cite{Lounesto}. 
They can be constructed recursively as follows~\cite[Sec.~16.3]{Lounesto}. Given $\Gamma_1,\dots,\Gamma_N$ living on a Hilbert
space $\mathcal{H}_{N}$, one obtains $\Gamma_1,\ldots,\Gamma_{N+2}$ on $\mathcal{H}_N\otimes \mathbb{C}^2$ by the following rules.
\begin{enumerate}
\item For each $i=1,\dots,N$, substitute
\[
\Gamma_i \rightarrow \Gamma_i\otimes\sigma_z.
\]
\item Further, define
\[
\Gamma_{N+1}:= I \otimes \sigma_x,\quad \Gamma_{N+2}:= I\otimes \sigma_y.
\]
\end{enumerate}
It is easy to show that if the original $\Gamma_i$ satisfy~\eqref{acrel}, then so do the new ones. One can simply start the 
recursion with $\Gamma_1=1$ on the one-dimensional Hilbert space $\mathcal{H}_1:=\mathbb{C}$, and then apply the construction as 
often as necessary to obtain any finite number of matrices satisfying~\eqref{acrel}. For example, 
a single iteration gives the Pauli matrices
\begin{equation}
\label{pauli}
\Gamma_1=\sigma_z,\quad \Gamma_2=\sigma_x,\quad \Gamma_3 = \sigma_y,
\end{equation}
while after two iterations one has 
\begin{eqnarray*}
&\Gamma_1 = \sigma_z\otimes\sigma_z, \quad \Gamma_2 = \sigma_x\otimes\sigma_z,\\[5pt]
&\Gamma_3 = \sigma_y\otimes\sigma_z, \quad \Gamma_4 = I\otimes\sigma_x,\quad \Gamma_5 = I\otimes\sigma_y.
\end{eqnarray*}
The Clifford algebra relations~\eqref{acrel} have many interesting consequences. For example for $N\geq 2$, one has for any $k$ and $j\neq k$,
\begin{eqnarray*}
\Tr(\Gamma_k) &= \Tr(\Gamma_k\Gamma_j\Gamma_j) = -\Tr(\Gamma_j\Gamma_k\Gamma_j) \\[5pt]
&= -\Tr(\Gamma_k\Gamma_j\Gamma_j) = -\Tr(\Gamma_k),
\end{eqnarray*}
so that
\begin{equation}
\label{traceless}
\Tr(\Gamma_k)=0. 
\end{equation}
Another consequence is that
\begin{equation}
\label{cliffordprod}
\left(\sum_k X_k \Gamma_k\right)^2 = \left(\sum_k X_k^2\right) \cdot I
\end{equation}
for arbitrary real coefficients $X_k$.

\chapter{Fine's theorem and its status in tests of noncontextuality}
In this chapter, we provide a characterization of noncontextual models which fall within the ambit of Fine's theorem\cite{Fine, Fine2}.
In particular, we make explicit the equivalence between the existence of three notions:
a joint probability distribution over the outcomes of all the measurements considered, a measurement-noncontextual and outcome-deterministic
(or KS-noncontextual) model for these measurements, and a measurement-noncontextual and factorizable 
model for them. A KS-inequality, therefore, is implied by each of these three notions. 
Following this characterization of noncontextual models that fall within the ambit of Fine's theorem,
non-factorizable noncontextual models which lie outside the domain of Fine's theorem are considered. While outcome determinism
for projective (sharp) measurements in quantum theory can be shown to follow from the assumption of preparation noncontextuality,
as we did in Chapter 1,
such a justification is not available for nonprojective (unsharp) measurements which ought to admit outcome-indeterministic response functions.
The Liang-Spekkens-Wiseman (LSW) inequality is an example of a noncontextuality inequality that should hold in any noncontextual model of quantum theory 
without assuming factorizability. Three other noncontextuality
inequalities, which turn out to be equivalent to the LSW inequality under relabellings of measurement outcomes,
are derived for Specker's scenario. We also characterize the polytope of correlations admissible in this scenario, given the 
operational equivalences between measurements (often called the ``no-disturbance'' condition in the literature on KS-contextuality).

This chapter is based on work reported in Ref.~\cite{finegen}.
\section{Introduction}

In attempts to provide a more complete description of reality than operational quantum theory in terms of a noncontextual ontological model, it is almost always assumed that 
whatever the ontic state $\lambda$ is, it must specify the outcomes of measurements exactly (an assumption called \emph{outcome determinism}) and any operational unpredictability in the measurement
outcomes is on account of coarse-graining over these ontic states $\lambda$. This chapter concerns itself with what can still be said about noncontextuality
if outcome determinism is not assumed: the ontic state is not always required to fix the outcomes of measurements but only their probabilities.
The physical motivation for this becomes clear once the following questions are asked:
\begin{enumerate}
 \item Do there exist noncontextual ontological models of quantum theory where the ontic state $\lambda$ fixes the outcomes of 
measurements?\\

The Kochen-Specker theorem \cite{KS67} rules out this possibility. Let us now remove the requirement of outcome determinism, namely, that $\lambda$ fix the outcomes of 
measurements, and ask the question:
\item Do there exist noncontextual ontological models of quantum theory where the ontic state $\lambda$ fixes the \emph{probabilities}
of outcomes of measurements?\\

The Kochen-Specker theorem \cite{KS67} is silent on this question since it presumes the ontic state $\lambda$ must fix the 
outcomes of (projective) measurements. This question is most naturally addressed in the framework of generalized noncontextuality
due to Spekkens \cite{genNC}.
\end{enumerate}

It is well-known that, in contrast to the Kochen-Specker theorem \cite{KS67}, Bell's theorem \cite{Bell76,Wiseman} does not require an assumption that
the ontic state $\lambda$ fixes the outcomes of the measurements. This becomes particularly clear in view of Fine's theorem \cite{Fine, Fine2} 
that, in a Bell scenario, a locally deterministic model \cite{Bell64} exists if and only if
a locally causal (or `Bell-local') model \cite{Bell76, Wiseman} exists, and how this is equivalent to requiring the existence of a joint probability distribution over
outcomes of all the measurements considered in a Bell scenario. Hence, even if the outcomes are only determined probabilistically by $\lambda$
in the local hidden variable model, Bell's theorem holds. The key issue in Bell scenarios is factorizability: the conditional independence of the outcomes of spacelike separated measurements given the ontic state
$\lambda$ of the system, 
\begin{eqnarray}
&&\xi(X_1,\dots,X_N|M_1,\dots,M_N,\lambda)\nonumber\\
&=&\xi(X_1|M_1,\lambda)\xi(X_2|M_2,\lambda)\dots\xi(X_N|M_N,\lambda),
\end{eqnarray}
where $X_i$ labels the outcome of measurement $M_i$ performed by the $i$th party, $i\in\{1,\dots,N\}$. All these response functions may be 
outcome-indeterministic, i.e., $\xi\in[0,1]$. Indeed, factorizability is a necessary consequence of any set of assumptions that may be used to derive Bell's theorem \cite{Wiseman}.

On the other hand, things are not as straightforward for contextuality \cite{Spe60,Bell66,KS67}. Mathematically, both Bell-local models and 
KS-noncontextual models rely on the existence of a joint probability distribution over all measurement outcomes
in a given scenario such that this distribution reproduces the observed statistics as marginals. Proofs of the KS theorem
that rely on uncolorability (such as the original proof in Ref.~\cite{KS67}) are such that there exists \emph{no} joint distribution
at all, given \emph{any} set of observed marginals. These proofs are often termed \emph{state-independent} since
they do not rely on preparing particular quantum states on which the measurement statistics are considered: any quantum state works.
Weaker proofs of the KS theorem (such as the one in Ref.~\cite{KCBS}) are such that
there exist joint distributions for \emph{some}, but \emph{not all}, sets of observed marginals: there exist sets of observable marginals
that admit no joint distribution and therefore violate some KS inequality arising from requiring the existence of a joint distribution.
These proofs are often termed \emph{state-dependent} since they rely on preparing particular quantum states that lead to 
marginal statistics violating a KS inequality. The state-dependent proofs of KS-contextuality are 
therefore analogous to proofs of Bell's theorem.

Given this correspondence between Bell's theorem
and the KS theorem, one may ask whether the assumption of outcome determinism is really required in the KS theorem and whether the KS theorem excludes
also all outcome-indeterministic noncontextual models
on account of Fine's theorem.\footnote{We will show that this is \emph{not} the case, i.e. the KS theorem does not rule out all outcome indeterministic noncontextual models.}

The outcome-indeterministic noncontextual models excluded by the KS theorem are precisely the ones where factorizability holds. However, in the absence of spacelike separation between 
measurements one does not have a compelling physical justification to assume that measurement outcomes are conditionally independent of each other 
(and the remote measurement settings) given the ontic state $\lambda$. The physical meaning 
of factorizability is this: that the measurement outcomes do not have any correlations that are not due to the ontic state $\lambda$ of the system. One could, on the other hand,
imagine
an adversarial situation where two measurement outcomes are correlated---which is physically possible if they are not spacelike separated---and this correlation is 
not mediated only by the ontic state $\lambda$ of the system but is perhaps encoded in the degrees of freedom of the measurement apparatus
by an adversary who wants to convince the experimenter that something nonclassical is going on (in the sense of KS-contextuality) but, really, 
it is correlated noise that's doing all the work of violating a KS inequality. The LSW inequality \cite{LSW,KG} is an example of a noncontextuality
inequality that takes this possibility into account and raises the bar for what correlations count as nonclassical. This is why
we need to consider noncontextual models which are not factorizable. Since all KS-noncontextual models are factorizable on account 
of Fine's theorem, noncontextual models which are not factorizable are exclusively taken into account \emph{only} in the generalized definition of noncontextuality \cite{genNC}.
This realization is a key conceptual insight of this chapter, pointing to the necessity of revising the traditional analyses of KS-noncontextuality to accommodate
the generalized notion of noncontextuality \cite{genNC}. To be clear, by \emph{outcome determinism} and \emph{factorizability}, we mean the following:

\paragraph*{Outcome determinism} is the assumption that every response function in the ontological model is deterministic, i.e., $\xi(k|M,\lambda)\in\{0,1\}$
for all measurement events $[k|M]$ and for all $\lambda\in\Lambda$.

Ontological models where outcome determinism doesn't hold are called outcome indeterministic. Of the class of outcome-indeterministic
ontological models, the ones that are related to outcome-deterministic models via Fine's theorem are those satisfying \emph{factorizability}:

\paragraph*{Factorizability} is the assumption that for every jointly measurable set of measurements $\{M_s^{(S)}|s\in S\}$, the response function for every 
outcome of a joint measurement $M_S$ is the product of the response functions of measurements in the jointly measurable set: 
$\xi(k_S|M_S,\lambda)=\prod_{s\in S}\xi(k_s|M_s^{(S)},\lambda)$. Note that $k_S\in\mathcal{K}_{M_S}$ and $k_s\in\mathcal{K}_{M_s^{(S)}}$, where $\mathcal{K}_{M_S}$ is 
the Cartesian product of the outcome sets $\mathcal{K}_{M_s^{(S)}}, s\in S$.

It should be clear that, for a given set of measurements $\{M_1,\dots,M_N\}$ with jointly measurable subsets $S\subset\{1,\dots,N\}$,
outcome determinism implies factorizability, but not conversely. Outcome determinism requires that $\xi(k_S|M_S,\lambda)\in\{0,1\}$
for all $S$, and $\xi(k_s|M_s^{(S)},\lambda)\in\{0,1\}$ for all $s\in S$. $\xi(k_S|M_S,\lambda)=1$ means that 
$\xi(k_s|M_s^{(S)},\lambda)=\sum_{k_{s'}: s'\neq s}\xi(k_S|M_S,\lambda)=1$ for all $s\in S$ and $\xi(k_S|M_S,\lambda)=0$
means that $\xi(k_s|M_s^{(S)},\lambda)=\sum_{k_{s'}: s'\neq s}\xi(k_S|M_S,\lambda)=0$ for at least one $s\in S$.
All in all, we have $\xi(k_S|M_S,\lambda)=\prod_{s\in S}\xi(k_s|M_s^{(S)},\lambda)$, which is what factorizability 
requires. That the converse is not true is easily seen by noting that one can have factorizability without requiring 
$\xi(k_S|M_S,\lambda)\in\{0,1\}$ or $\xi(k_s|M_s^{(S)},\lambda)\in\{0,1\}$.

Just as local causality does not require the assumption of outcome determinism, a good definition of noncontextuality should also not appeal to outcome determinism (or even 
factorizability). Experimental violations of Bell inequalities certify a kind of nonclassicality independent of the truth of quantum theory, a feature that makes Bell inequality violations an
invaluable resource in device-independent protocols \cite{nsqkd}.
In contrast, KS-noncontextuality has to refer to projective (sharp) measurements in quantum
theory and assume outcome-determinism for them in order to obtain a KS-inequality: neither of these is needed in a Bell-local model. The generalized notion of
noncontextuality offers the possibility of talking about noncontextuality without making the assumption that the operational theory is quantum theory.
The present chapter, however, restricts itself to generalized noncontextuality for operational quantum theory.

In the next section, we will see how factorizability - although it's a physically motivated assumption in locally causal models - is not well-motivated physically in the
general case of noncontextual models. Fine's theorem thus serves to delineate a mathematical boundary between KS-noncontextual models and 
measurement noncontextual models which are not factorizable.

\section{Fine's theorem for noncontextual models}

\begin{theorem}\label{genFine}
 Given a set of measurements $\{M_1,\dots,M_N\}$ with jointly measurable subsets $S \subset \{1,\dots,N\}$, where each measurement $M_s, s\in S$, takes 
 values labelled by $k_s\in\mathcal{K}_{M_s}$, the following propositions are equivalent:
\begin{enumerate}
\item For a given preparation $P\in\mathcal{P}$ of the system there exists a joint probability distribution $p(k_1,\dots,k_N|P)$ that recovers the marginal 
statistics for jointly measurable subsets predicted by the operational theory (such as quantum theory) under consideration, i.e., 
$\forall S \subset \{1,\dots,N\}$, $p(k_S|M_S;P)=\sum_{k_i: i \notin S} p(k_1,\dots,k_N|P)$,
where $k_S\in\mathcal{K}_{M_S}$. 

\item There exists a measurement-noncontextual and outcome-deterministic, i.e. KS-noncontextual, model for these measurements.

\item There exists a measurement-noncontextual and factorizable model for these measurements.
 \end{enumerate}
\end{theorem}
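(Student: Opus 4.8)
The plan is to establish the three-way equivalence by proving a cycle of implications, $(2)\Rightarrow(3)\Rightarrow(1)\Rightarrow(2)$, since each single implication is more transparent than attempting the biconditionals directly. The implication $(2)\Rightarrow(3)$ is essentially already in hand from the discussion immediately preceding the theorem: outcome determinism forces every response function into $\{0,1\}$, and the short argument given there shows that in this case $\xi(k_S|M_S,\lambda)=\prod_{s\in S}\xi(k_s|M_s^{(S)},\lambda)$, which is precisely factorizability. So the only thing to record for that step is that outcome determinism is strictly stronger than factorizability, every KS-noncontextual model being in particular a factorizable measurement-noncontextual one.

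For $(3)\Rightarrow(1)$, I would construct the candidate global distribution directly out of the factorizable model. Given a measurement-noncontextual, factorizable model $(\Lambda,\mu,\xi)$ reproducing the statistics for $P$, define
\[
p(k_1,\dots,k_N|P):=\sum_{\lambda\in\Lambda}\mu(\lambda|P)\prod_{i=1}^N\xi(k_i|M_i,\lambda),
\]
where measurement noncontextuality guarantees that each $\xi(k_i|M_i,\lambda)$ is well defined independently of the context in which $M_i$ appears. To check that this recovers the marginal for a jointly measurable subset $S$, I would marginalize over the outcomes $k_i$ with $i\notin S$ and use the normalization $\sum_{k_i}\xi(k_i|M_i,\lambda)=1$ to collapse those factors, leaving $\sum_\lambda\mu(\lambda|P)\prod_{s\in S}\xi(k_s|M_s,\lambda)$. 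Factorizability together with measurement noncontextuality rewrites this product as $\xi(k_S|M_S,\lambda)$, and empirical adequacy then identifies the sum with $p(k_S|M_S;P)$, as required.

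The heart of the theorem, and the step I expect to be the main obstacle, is $(1)\Rightarrow(2)$: manufacturing an explicit \emph{deterministic} and measurement-noncontextual model out of a mere joint distribution. Here I would take the ontic state space to be the set of global outcome assignments, $\Lambda:=\mathcal{K}_{M_1}\times\dots\times\mathcal{K}_{M_N}$, writing $\lambda=(l_1,\dots,l_N)$, set the ontic distribution equal to the joint distribution, $\mu(\lambda|P):=p(l_1,\dots,l_N|P)$, and assign the deterministic response functions $\xi(k_i|M_i,\lambda):=\delta_{k_i,l_i}$. These take values in $\{0,1\}$, so outcome determinism holds, and they depend on $\lambda$ only through the component $l_i$, hence are manifestly insensitive to the context in which $M_i$ is measured, so measurement noncontextuality holds. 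The remaining work is to verify empirical adequacy: for a jointly measurable subset $S$ the joint response function is $\prod_{s\in S}\delta_{k_s,l_s}$, so that $\sum_\lambda\xi(k_S|M_S,\lambda)\mu(\lambda|P)$ collapses to the marginal of $p(l_1,\dots,l_N|P)$ over the indices outside $S$, which by hypothesis $(1)$ equals $p(k_S|M_S;P)$.

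The subtle point to articulate carefully in this last step is \emph{why} the constructed model is genuinely measurement-noncontextual rather than contextual in disguise. The operational equivalences underwriting noncontextuality in this scenario are exactly that the single-measurement statistics of each $M_i$ agree across the different joint contexts $M_S$ it belongs to; these are precisely the marginal-consistency conditions guaranteed by the existence of the joint distribution. I would therefore emphasize that it is the marginal condition in $(1)$ that makes the context-independent assignment $\delta_{k_i,l_i}$ consistent with all contexts simultaneously, closing the cycle and, with it, the proof.
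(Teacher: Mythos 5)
Your proposal is correct and follows essentially the same route as the paper: the same cycle of three implications (the paper orders them $(3)\Rightarrow(1)\Rightarrow(2)\Rightarrow(3)$, which is the identical cycle), with the same product construction for $(3)\Rightarrow(1)$ and the same reduction of outcome determinism to factorizability for $(2)\Rightarrow(3)$. The only cosmetic difference is that in $(1)\Rightarrow(2)$ you make the decomposition into deterministic assignments fully explicit by taking $\Lambda=\mathcal{K}_{M_1}\times\dots\times\mathcal{K}_{M_N}$ with $\xi(k_i|M_i,\lambda)=\delta_{k_i,l_i}$, whereas the paper simply invokes the fact that any distribution is a convex mixture of deterministic ones.
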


\emph{Proof.}
The proof of equivalence of the three propositions proceeds as follows: Proposition 3 $\Rightarrow$ Proposition 1, Proposition 1 $\Rightarrow$ Proposition 2, Proposition 2 $\Rightarrow$ Proposition 3.\\\\
\textbf{Proposition 3 $\Rightarrow$ Proposition 1:}\\

By Proposition 3, the assumption of measurement noncontextuality requires that the single-measurement response functions in the model be of the form $\xi(k_i|M_i;\lambda) \in [0,1]$, so
that each response function is independent of the contexts---jointly measurable subsets $S$---that the corresponding measurement may be a part of. Of course,
the assumption of measurement noncontextuality only applies once it is verified that for any $P\in\mathcal{P}$ 
the operational statistics $p(k_i|M_i;P)$ of measurement $M_i$ is the same across all the jointly measurable subsets, $S$, 
in which it appears. The response function is therefore conditioned only by $M_i$ and the ontic state $\lambda$ associated with the system
(and not on the jointly measurable subset $S$ that $M_i$ may be a part of).
Proposition 3 requires, in addition, factorizability, i.e., for all jointly measurable subsets $S \subset \{1,\dots,N\}$, $$\xi(k_S|M_S;\lambda)=\prod_{s\in S}\xi(k_s|M_s;\lambda).$$
Factorizability amounts to the assumption that the correlations between measurement outcomes are established only via the 
ontic state of the system---the measurement outcomes do not ``talk'' to each other except via $\lambda$.
Now define 
\begin{equation}
 \xi(k_1,\dots,k_N|\lambda) \equiv \prod_{i=1}^{N}\xi(k_i|M_i;\lambda),
\end{equation}
so that marginalizing this distribution over $k_i$, $i\notin S$, yields $\xi(k_S|M_S;\lambda)$ for every jointly measurable subset $S\subset \{1,\dots,N\}$.

Assuming the ontological model reproduces the operational statistics, there must exist a probability distribution $\mu(\lambda|P)$ 
for any $P\in\mathcal{P}$, such that
\begin{equation}
 \sum_{\lambda\in\Lambda}\xi(k_S|M_S; \lambda)\mu(\lambda|P)=p(k_S|M_S;P).
\end{equation}
Then define
\begin{equation}
 p(k_1\dots k_N|P)\equiv \sum_{\lambda\in\Lambda}\xi(k_1\dots k_N|\lambda)\mu(\lambda|P),
\end{equation}
which marginalizes on $k_S$ to
\begin{eqnarray}
 p(k_S|P)&=&\sum_{k_i: i \notin S}p(k_1\dots k_N|P)\\
 &=& \sum_{\lambda\in\Lambda}\sum_{k_i:i \notin S}\xi(k_1\dots k_N|\lambda)\mu(\lambda|P)\\
 &=& \sum_{\lambda\in\Lambda}\xi(k_S|M_S;\lambda)\mu(\lambda|P)\\
 &=& p(k_S|M_S;P).
\end{eqnarray}
Thus, Proposition 3 $\Rightarrow$ Proposition 1.\\\\
\textbf{Proposition 1 $\Rightarrow$ Proposition 2:}\\

By Proposition 1, for a given $P\in\mathcal{P}$ there exists a $p(k_1\dots k_N|P)$ such that $p(k_S|M_S;P)=\sum_{k_i:i \notin S}p(k_1\dots k_N|P)$,
for all jointly measurable subsets $S \subset \{1,\dots,N\}$. Now, there exists a probability distribution
over the ontic state space, $\mu(\lambda|P)$, such that 
\begin{equation}
 p(k_1\dots k_N|P)=\sum_{\lambda\in\Lambda}\xi(k_1\dots k_N|\lambda)\mu(\lambda|P)
\end{equation}
where $\xi(k_1\dots k_N|\lambda) \in \{0,1\}$. This is possible because any probability distribution can be decomposed as
a convex sum over deterministic distributions. Also, $p(k_j|M_j;P)=\sum_{k_i:i\neq j}p(k_1,\dots,k_N|P)$, so

\begin{equation}
 p(k_j|M_j;P)=\sum_{\lambda\in\Lambda}\mu(\lambda|P) \sum_{k_i:i\neq j}\xi(k_1\dots k_N|\lambda),
\end{equation}
which allows the definition
\begin{equation}
 \xi(k_j|M_j;\lambda)\equiv \sum_{k_i:i\neq j}\xi(k_1\dots k_N|\lambda) \in \{0,1\}, \forall j \in \{1\dots N\}.
\end{equation}
Since these are deterministic distributions, 
\begin{equation}
 \xi(k_1\dots k_N|\lambda)=\prod_{j=1}^{N}\xi(k_j|M_j;\lambda).
\end{equation}
Finally,
\begin{equation}
 p(k_S|M_S;P)=\sum_{\lambda\in\Lambda}\mu(\lambda|P)\prod_{s \in S}\xi(k_s|M_s;\lambda), 
\end{equation}
so there exists a measurement-noncontextual and outcome-deterministic model, i.e., Proposition 1 $\Rightarrow$ Proposition 2.\\
\\
\textbf{Proposition 2 $\Rightarrow$ Proposition 3:}\\\\
By Proposition 2, $\xi(k_i|M_i;\lambda) \in \{0,1\}, \forall i \in \{1\dots N\}$, such that
\begin{equation}
 p(k_S|M_S;P)=\sum_{\lambda\in\Lambda}\mu(\lambda|P)\prod_{s \in S}\xi(k_s|M_s;\lambda),
\end{equation}
$\forall$ jointly measurable subsets $S \subset \{1\dots N\}$. Clearly, this model is also a measurement-noncontextual 
and factorizable model because the assumption of outcome-determinism implies factorizability:
\begin{equation}
\xi(k_S|M_S;\lambda)=\prod_{s \in S}\xi(k_s|M_s;\lambda).
\end{equation}
\proofend

Note that this theorem itself is not new, but this particular reading of it in the framework of generalized noncontextuality \emph{is} new. 
In particular, the purpose of this restatement is to highlight why outcome-determinism is not an assumption
that can be taken for granted in noncontextual ontological models. Versions of this theorem have appeared in the literature
following Fine's original insight \cite{Fine,Fine2}. 
The fact that Proposition 2 implies Proposition 1 has been shown earlier in Ref.~\cite{LSW}. A similar result in the language 
of sheaf theory can be found Ref.~\cite{AB}, where the authors point out factorizability as the underlying assumption
in Bell-local and KS-noncontextual models: in effect they show the equivalence of Proposition 1 and Proposition 3. The sense in which 
Ref.~\cite{AB} refers to `non-contextuality'
is the notion of KS-noncontextuality, and while it is possible to provide a unified account of Bell-locality and KS-noncontextuality 
at a mathematical level, the generalized notion of noncontextuality \cite{genNC} does not admit such an account. In particular,
their definition of `non-contextuality' is stronger than the Spekkens' definition of measurement noncontextuality.
Indeed, as we have amply demonstrated in Chapter 1, generalized noncontextuality subsumes KS-noncontextuality but is not equivalent to it.

\paragraph*{Fine's theorem for Bell scenarios.}
Translating the preceding notions from noncontextual models to Bell-local models amounts to replacing `measurement-noncontextual and outcome-deterministic' by `locally deterministic'
and `measurement-noncontextual and factorizable' by `locally causal'. 
Consider the case of two-party Bell scenarios for simplicity, although the same considerations extend to general multiparty Bell 
scenarios in a straightforward manner. A two-party Bell scenario consists of measurements $\{M_1,\dots,M_N\}$, where $\{M_1,\dots,M_n\}$, $n<N-1$, are the measurement settings 
available to one party, say Alice, and $\{M_{n+1},\dots,M_N\}$ are the measurement settings available to the other party, say Bob.
The outcomes are denoted by $k_i\in\mathcal{K}_{M_i}$ for the respective measurement settings $M_i$. The jointly measurable subsets are given 
by $S\in \{\{i,j\}|i\in\{1,\dots,n\},j\in\{n+1,\dots,N\}\}$. Bell's assumption of local causality captures the notion of
a measurement noncontextual and factorizable model:

\begin{eqnarray}\nonumber
&&p(k_S|M_S;P)\\ 
&=&p(k_i,k_j|M_i,M_j;P)\\
&=&\sum_{\lambda\in\Lambda}\xi(k_i|M_i,\lambda)\xi(k_j|M_j,\lambda)\mu(\lambda|P).
\end{eqnarray}

Once factorizability is justified from Bell's assumption of local causality in this manner, Fine's theorem ensures that -- so far
as the existence of hidden variable models is concerned -- it is irrelevant whether the response functions for the measurement 
outcomes are deterministic or indeterministic. One does not need to worry about whether outcome-determinism for measurements is 
justified in Bell scenarios precisely because factorizability along with Fine's theorem absolves one of the need to provide such
a justification.
The crucial point, then, is the validity of factorizability in the more general case of noncontextual models. 
In general, factorizability is not justified in noncontextual models and, following Spekkens, one must
distinguish between the issue of noncontextuality and that of outcome-determinism when considering ontological models of an 
operational theory \cite{genNC}. If the goal is -- as it should be -- to obtain an experimental test of noncontextual models independent of the truth of 
quantum theory, then one needs to derive noncontextuality inequalities that do not rely on outcome-determinism at all. This is because
Fine's theorem for noncontextual models is of limited applicability -- namely, outcome-indeterministic response functions which satisfy factorizability
are shown by it to achieve no more generality than is already captured by outome-deterministic response functions in a KS-noncontextual model.
Outcome-indeterministic response functions that do not satisfy factorizability are not taken into account in a KS-noncontextual model.

For ontological models of operational \emph{quantum} theory, outcome-determinism for sharp (projective) measurements can be shown to follow from 
the assumption of preparation noncontextuality \cite{genNC}. Such a justification is not available for unsharp (nonprojective) measurements, which should therefore
be represented by outcome-indeterministic response functions. This issue has been discussed at length by Spekkens and the reader is referred
to Ref.~\cite{odum} for why and how this must be so. Therefore, to consider noncontextuality for unsharp measurements in full generality the noncontextuality
inequalities of interest are those which do not assume factorizability. An example is the LSW inequality for Specker's scenario \cite{LSW} that does not rely
on factorizability, although it does use the assumption of outcome determinism for sharp (projective) measurements.
The LSW inequality has been shown to be violated by quantum predictions \cite{KG}, thus ruling out noncontextual models of quantum theory
without invoking factorizability.
Note that the distinction between sharp and unsharp measurements is not part of the definition of a Bell-local model and one never has to worry
about this distinction to derive Bell's theorem. This distinction, however, becomes relevant for noncontextual models of quantum theory,
where the words `sharp' and `unsharp' have a clear meaning, the former referring to projective measurements and the latter to nonprojective
measurements.

In the next section, the polytope of correlations admissible in Specker's scenario is characterized.
\section{Correlations in Specker's scenario}

In this section three noncontextuality inequalities relevant to the correlations in Specker's
scenario are derived. They are shown to be equivalent to the known LSW inequality under relabelling of measurement
outcomes. This scenario involves three binary measurements, $\{M_1,M_2,M_3\}$, which are pairwise jointly measurable with outcomes 
labelled by $X_i\in\{0,1\}$ for $i\in\{1,2,3\}$. The statistics involved in Specker's scenario for a given preparation $P\in\mathcal{P}$ can be understood as a set of $12$ probabilities, $4$ for each pairwise joint measurement $M_{ij}$, 

\begin{equation}
\mathcal{S}\equiv\{p(X_iX_j|M_{ij};P)|X_i,X_j\in\{0,1\}, i,j\in\{1,2,3\},i<j\},
\end{equation}
subject to the obvious constraints of
positivity,
\begin{equation}
p(X_iX_j|M_{ij};P)\geq0 \quad \forall X_i,X_j,M_{ij},
\end{equation}
and normalization,
\begin{equation}
\sum_{X_i,X_j}p(X_iX_j|M_{ij};P)=1 \quad \forall M_{ij}.
\end{equation}

In addition to positivity and normalization, the statistics is assumed to obey the following condition:

\begin{eqnarray}\nonumber
&&\sum_{X_j}p(X_iX_j|M_{ij};P)\\
&=&\sum_{X_k}p(X_iX_k|M_{ik};P)\\
&\equiv&p(X_i|M_i;P),
\end{eqnarray}
for all $i<j,k$ where $i,j,k\in\{1,2,3\}$. Denoting $$\sum_{X_j}p(X_iX_j|M_{ij};P)\equiv p(X_i|M_i^j;P),$$ and $$\sum_{X_k}p(X_iX_k|M_{ik};P)\equiv p(X_i|M_i^k;P),$$ the condition
becomes
\begin{equation}
 p(X_i|M_i^j;P)=p(X_i|M_i^k;P)\equiv p(X_i|M_i;P).
\end{equation}

That is, the statistics of $M_i^j$, which is obtained by marginalizing the statistics of joint measurement $M_{ij}$, is identical to the statistics of $M_i^k$, which is obtained by 
marginalizing the statistics of joint measurement $M_{ik}$. If what has been measured is indeed a unique observable $M_i$ then its statistics relative to any preparation $P\in\mathcal{P}$ should
remain the same across joint measurements with different observables $M_j$ and $M_k$. Failure to meet this condition implies a failure of joint measurability: then
one can distinguish between $M_i^j$ and $M_i^k$ from their statistics relative to some preparation and they would therefore correspond to two different marginal observables $M_i^j$ and $M_i^k$ rather
than a unique observable $M_i$. This condition, $M_i^j\simeq M_i^k\simeq M_i$, is often called the \emph{no-disturbance} condition in the literature on contextuality. 
Operational quantum theory obeys the no-disturbance condition for joint measurements of generalized observables (which need not be projective or sequential).\footnote{
We will rarely use the terminology of ``no-disturbance'', preferring instead the operational equivalence between measurement procedures,
of which no-disturbance is a special case.}

\subsection{Kochen-Specker (KS) inequalities for Specker's scenario}
The four necessary and sufficient inequalities characterizing correlations which admit a KS-noncontextual model in Specker's scenario are given by:
\begin{equation}\label{ineq1}
R_3\equiv p(X_1\neq X_2|M_{12},P)+p(X_2\neq X_3|M_{23},P)+p(X_1\neq X_3|M_{13},P)\leq 2,
\end{equation}and
\begin{equation}\label{ineq2}
R_0\equiv p(X_1\neq X_2|M_{12},P)-p(X_2\neq X_3|M_{23},P)-p(X_1\neq X_3|M_{13},P)\leq 0,
\end{equation}
\begin{equation}\label{ineq3}
R_1\equiv p(X_2\neq X_3|M_{23},P)-p(X_1\neq X_3|M_{13},P)-p(X_1\neq X_2|M_{12},P)\leq 0,
\end{equation}
\begin{equation}\label{ineq4}
R_2\equiv p(X_1\neq X_3|M_{13},P)-p(X_1\neq X_2|M_{12},P)-p(X_2\neq X_3|M_{23},P)\leq 0.
\end{equation}
These inequalities have earlier appeared in Ref.~\cite{cabelloncycle}. A derivation is provided in the Appendix at the end of this chapter. 
Further, these inequalities
exhibit a curious property that no two of them can be violated by the same set of experimental statistics:
\begin{lemma}\label{kslemma}
There exists no set of distributions $\{p(X_i,X_j|M_{ij},P)|(ij)\in\{(12),(23),(13)\}\}$
that can violate any two of the four KS inequalities simultaneously.
\end{lemma}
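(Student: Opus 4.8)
The plan is to reduce the four inequalities to elementary constraints on three numbers and then dispatch all pairs by a short case analysis. First I would introduce shorthand for the three anticorrelation probabilities, writing $a \equiv p(X_1 \neq X_2|M_{12},P)$, $b \equiv p(X_2 \neq X_3|M_{23},P)$, and $c \equiv p(X_1 \neq X_3|M_{13},P)$. By positivity and normalization of the pairwise distributions, each of these is a genuine probability, so $a,b,c \in [0,1]$. In this notation the four quantities read $R_3 = a+b+c$, $R_0 = a-b-c$, $R_1 = b-a-c$, and $R_2 = c-a-b$, so the four violations are $a+b+c>2$, $a>b+c$, $b>a+c$, and $c>a+b$ respectively. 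The lemma is then exactly the statement that no two of these four strict inequalities can hold at once, which amounts to checking six pairs.

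I would organize the six pairs into two structurally distinct families. The first family consists of the three pairs that involve $R_3$. The key observation is that each of the violations $R_0>0$, $R_1>0$, $R_2>0$ asserts that one probability strictly exceeds the sum of the other two; say $a>b+c$. Since $a \leq 1$, this gives $a+b+c = a + (b+c) < 2a \leq 2$, so $R_3 < 2$, contradicting $R_3>2$. The same argument with $a,b,c$ permuted handles the pairs $(R_3,R_1)$ and $(R_3,R_2)$, using the bounds $b \leq 1$ and $c \leq 1$ respectively.

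The second family consists of the three pairs drawn from $\{R_0,R_1,R_2\}$. Here I would simply add the two relevant violation inequalities. For instance, $a>b+c$ together with $b>a+c$ yields $a+b > a+b+2c$ upon summing, hence $c<0$, contradicting $c \geq 0$; likewise the pairs $(R_0,R_2)$ and $(R_1,R_2)$ force $b<0$ and $a<0$ respectively. This exhausts all six cases and establishes the lemma.

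There is no serious obstacle here: the content is entirely elementary once the inequalities are rewritten in the $a,b,c$ variables, and the only thing requiring care is being systematic — namely recognizing that the pairs split into exactly these two patterns, one exploiting the upper bound $\leq 1$ on a single probability and the other exploiting non-negativity. The cyclic symmetry of $R_0, R_1, R_2$ under relabelling of the measurements reduces the apparent six-case check to essentially two computations.
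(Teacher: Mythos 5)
Your proof is correct and follows essentially the same route as the paper's: both rewrite the four inequalities in terms of the three anticorrelation probabilities, rule out pairs involving $R_3$ via the upper bound of $1$ on a single probability, and rule out pairs among $R_0,R_1,R_2$ via non-negativity. Your case organization is slightly more systematic, but the substance is identical.
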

\emph{Proof.}
Denoting $w_{12}\equiv p(X_1\neq X_2|M_{12},P)$, $w_{23}\equiv p(X_2\neq X_3|M_{23},P)$, and $w_{13}\equiv p(X_1\neq X_3|M_{13},P)$, the four
KS inequalities can be rewritten as:
\begin{eqnarray}
&&R_3\equiv w_{12}+w_{23}+w_{13}\leq 2,\\
&&R_0\equiv w_{12}-w_{23}-w_{13}\leq 0,\\
&&R_1\equiv w_{23}-w_{13}-w_{12}\leq 0,\\
&&R_2\equiv w_{13}-w_{23}-w_{12}\leq 0.
\end{eqnarray}

Now, violation of each of these is equivalent to the following, since $0\leq w_{12},w_{23},w_{13}\leq 1$:
\begin{eqnarray}\nonumber
&&R_3>2\Leftrightarrow w_{12}+w_{23}+w_{13}>2,\\\nonumber
&&R_0>0\Leftrightarrow w_{12}>w_{23}+w_{13} \Rightarrow w_{12}+w_{23}+w_{13}<2,\\\nonumber
&&R_1>0\Leftrightarrow w_{23}>w_{12}+w_{13} \Rightarrow w_{12}+w_{23}+w_{13}<2\\\nonumber
&&\text{ and }w_{12}<w_{23}-w_{13},\\\nonumber
&&R_2>0\Leftrightarrow w_{13}>w_{12}+w_{23} \Rightarrow w_{12}+w_{23}+w_{13}<2\\\nonumber
&&\text{ and }w_{12}<w_{13}-w_{23}.
\end{eqnarray}
It follows that violation of each inequality above is in conflict with a violation of each of the other three inequalities. Hence, there exist 
no conceivable measurement statistics that violate any two of the four KS inequalities simultaneously.\proofend

\subsection{Noncontextuality (NC) inequalities for Specker's scenario}

Consider the \emph{predictability} of each measurement $M_k$ defined as: 
\begin{equation}
\eta_{M_k}\equiv\max_P\{2\max_{X_k}p(X_k|M_k,P)-1\}, 
\end{equation}
where $P\in\mathcal{P}$ is any preparation of the system. Assuming the three measurements in Specker's scenario have the same predictability $\eta_0\equiv \eta_{M_1}=\eta_{M_2}=\eta_{M_3}$,
the following noncontextuality inequalities hold:\\\\
{\bf LSW inequality}
\begin{equation}\label{ncineq1}
R_3=p(X_1\neq X_2|M_{12},P)+p(X_2\neq X_3|M_{23},P)+p(X_1\neq X_3|M_{13},P)\leq 3-\eta_0,
\end{equation}
{\bf Three more inequalities}
\begin{equation}\label{ncineq2}
R_0=p(X_1\neq X_2|M_{12},P)-p(X_2\neq X_3|M_{23},P)-p(X_1\neq X_3|M_{13},P)\leq 1-\eta_0,
\end{equation}
\begin{equation}\label{ncineq3}
R_1=p(X_2\neq X_3|M_{23},P)-p(X_1\neq X_3|M_{13},P)-p(X_1\neq X_2|M_{12},P)\leq 1-\eta_0,
\end{equation}
\begin{equation}\label{ncineq4}
R_2=p(X_1\neq X_3|M_{13},P)-p(X_1\neq X_2|M_{12},P)-p(X_2\neq X_3|M_{23},P)\leq 1-\eta_0.
\end{equation}
These inequalities are derived in the Appendix. Note that violation of each of these inequalities implies the violation of 
the corresponding KS inequalities (recovered for $\eta_0=1$), but not conversely. 
\begin{lemma}
There exists no set of distributions $\{p(X_i,X_j|M_{ij},P)|(ij)\in\{(12),(23),(13)\}\}$
that can violate any two of the four NC inequalities simultaneously.
\end{lemma}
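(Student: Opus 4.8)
The plan is to follow the structure of the proof of Lemma~\ref{kslemma} almost verbatim, now carrying the sharpness parameter $\eta_0$ through the bookkeeping. First I would adopt the same shorthand used there, writing $w_{12}\equiv p(X_1\neq X_2|M_{12},P)$, $w_{23}\equiv p(X_2\neq X_3|M_{23},P)$, and $w_{13}\equiv p(X_1\neq X_3|M_{13},P)$, so that the four NC inequalities read $R_3=w_{12}+w_{23}+w_{13}\leq 3-\eta_0$, $R_0=w_{12}-w_{23}-w_{13}\leq 1-\eta_0$, $R_1=w_{23}-w_{13}-w_{12}\leq 1-\eta_0$, and $R_2=w_{13}-w_{12}-w_{23}\leq 1-\eta_0$. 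I would then record the two standing facts that make the argument work: each $w_{ij}\in[0,1]$ (positivity and normalization of the pairwise distributions), and $\eta_0\in[0,1]$ (since $\eta_0$ is a predictability).

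The core of the proof is to show that each of the six unordered pairs of violation conditions is self-contradictory, and this splits into two cases by symmetry of the indices. In the first case I would pair the LSW inequality with any one of the other three, e.g.\ assume $R_3>3-\eta_0$ together with $R_0>1-\eta_0$; adding these two strict inequalities cancels $w_{23}$ and $w_{13}$ and yields $2w_{12}>4-2\eta_0$, i.e.\ $w_{12}>2-\eta_0$. Since $\eta_0\leq 1$ gives $2-\eta_0\geq 1$, this forces $w_{12}>1$, contradicting $w_{12}\leq 1$. The pairings of $R_3$ with $R_1$ and with $R_2$ follow identically after relabelling. In the second case I would pair two of $\{R_0,R_1,R_2\}$, e.g.\ assume $R_0>1-\eta_0$ and $R_1>1-\eta_0$; here the two variables $w_{12}$ and $w_{23}$ (appearing with opposite signs) cancel, leaving $-2w_{13}>2-2\eta_0$, i.e.\ $w_{13}<\eta_0-1$. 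Since $\eta_0\leq 1$ gives $\eta_0-1\leq 0$, this forces $w_{13}<0$, contradicting $w_{13}\geq 0$. The remaining pairs $(R_0,R_2)$ and $(R_1,R_2)$ are handled the same way: in each, the two variables appearing with opposite signs cancel, leaving a doubled negative copy of the third variable (the one negative in both inequalities), which is then forced below zero. Having exhausted all six pairs, the lemma follows.

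I do not expect a genuine obstacle here: the argument is a direct, noise-robust analogue of Lemma~\ref{kslemma}, and the only subtlety is keeping track of where the hypotheses $\eta_0\leq 1$ and $w_{ij}\geq 0$ are actually used---precisely to close the two families of contradictions above. It is worth remarking why relaxing the KS bounds does not spoil mutual exclusivity: every right-hand side is shifted upward by the same amount $1-\eta_0\geq 0$, so when two violation conditions are added the slack grows by $2(1-\eta_0)$, yet the contradiction margins in both cases ($w_{12}>2-\eta_0\geq 1$ and $w_{13}<\eta_0-1\leq 0$) remain on the forbidden side of the bounds $w_{ij}\in[0,1]$ for every admissible $\eta_0$. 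Thus the genuinely combinatorial content is identical to the KS case, and the proof is complete once the six pairings are checked.
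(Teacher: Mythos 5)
Your proof is correct, but it takes a different route from the paper's. The paper disposes of this lemma in one line: since $\eta_0\in[0,1]$, the noncontextual bounds satisfy $3-\eta_0\geq 2$ and $1-\eta_0\geq 0$, so a violation of any NC inequality immediately implies a violation of the corresponding KS inequality, and the claim then follows from the already-established Lemma~\ref{kslemma} on the mutual exclusivity of KS-inequality violations. You instead redo the pairwise-contradiction bookkeeping from scratch with $\eta_0$ carried through: adding the two violation conditions in each of the six pairings cancels two of the $w_{ij}$ and forces the third either above $2-\eta_0\geq 1$ or below $\eta_0-1\leq 0$, contradicting $w_{ij}\in[0,1]$. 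I checked all six cases and the cancellations and sign bookkeeping are right. What your version buys is a self-contained argument that does not lean on Lemma~\ref{kslemma}, and it makes transparent exactly where $\eta_0\leq 1$ and $w_{ij}\in[0,1]$ enter; what the paper's version buys is brevity and the conceptual point that NC violations are strictly stronger statements than KS violations, so mutual exclusivity is inherited for free. Your closing remark about the uniform upward shift of the bounds is essentially a restatement of that inheritance, so the two arguments are morally equivalent even though the paper's is a reduction and yours is a direct computation.
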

\emph{Proof.}
The proof trivially follows from Lemma \ref{kslemma}, since violation of any NC inequality implies violation of 
the corresponding KS inequality.\proofend

The predictability, $\eta_0$, quantifies how predictable a measurement
can be made in a variation over preparations: KS inequalities make sense only when $\eta_0=1$, i.e., it is possible to find a preparation which makes a 
given measurement perfectly predictable, a condition which is naturally satisfied by sharp (projective) measurements in quantum theory.
For the case of unsharp measurements, $\eta_0<1$, and the noncontextuality inequalities take this into account. When $\eta_0=0$, that is,
when the measurement outcomes are uniformly random (or completely unpredictable), the upper bounds in the noncontextuality inequalities become trivial
and a noncontextual model is always possible: simply ignore the system and toss a fair coin to decide whether to output $(X_i=0,X_j=1)$ or $(X_i=1,X_j=0)$ when
a pair of measurements $\{M_i,M_j\}$ is jointly implemented,
\begin{equation}
 p(X_i,X_j|M_{ij},P)=\frac{1}{2}\left(\delta_{X_i,0}\delta_{X_j,1}+\delta_{X_i,1}\delta_{X_j,0}\right).
\end{equation}
Clearly, $R_3=3$ for this, and $\eta_0=0$ since the marginal for each measurement $M_i$ is uniformly random independent of the preparation,
so the LSW inequality cannot be violated. This admits a noncontextual model since the response function for each measurement $M_i$ is a fair
coin flip independent of the system's ontic state and also of which other measurement it is jointly implemented with. The key feature that the
LSW inequality captures is this: that it is not possible to have a high degree of anticorrelation $R_3$ and a high degree of predictability
$\eta_0$ in a noncontextual model, and that there is a tradeoff between the two, given here by $R_3+\eta_0\leq 3$. Contextuality in this sense signifies
the ability to generate (anti)correlations which violate this tradeoff for values of $\eta_0<1$: the case $\eta_0=1$, as mentioned, is already covered
by the usual KS inequalities, and for $\eta_0=0$ there is no nontrivial tradeoff imposed by noncontextual models.
\subsection{Equivalence under relabelling of measurement outcomes}
The four NC inequalities (also the KS inequalities) are equivalent under relabelling measurement outcomes: To go from $R_3\leq 3-\eta_0$ to 
$R_0\leq 1-\eta_0$, simply relabel the measurement outcomes of $M_3$ as $X_3\rightarrow X'_3=1-X_3$, so that after the relabelling (denoted by primed quantities):
$w'_{12}=w_{12},w'_{23}=1-w_{23},w'_{13}=1-w_{13}$, and
$R'_3\equiv w'_{12}+w'_{23}+w'_{13}\leq 3-\eta_0$ becomes $w_{12}+(1-w_{23})+(1-w_{13})\leq 3-\eta_0$
which can be rewritten as $R'_3=R_0=w_{12}-w_{23}-w_{13}\leq 1-\eta_0$. Similarly, relabelling measurement outcomes of $M_2$ takes $R_3\leq 3-\eta_0$
to $R_2\leq 1-\eta_0$ and relabelling measurement outcomes of $M_1$ takes $R_3\leq 3-\eta_0$ to $R_1\leq 1-\eta_0$.
\subsection{Quantum violation of noncontextuality inequalities for Specker's scenario}
Quantum realization of Specker's scenario involves three unsharp qubit POVMs $M_k=\{E^k_0,E^k_1\}, k\in\{1,2,3\}$, where the effects are given by:
\begin{equation}
 E^k_{X_k}\equiv\frac{1}{2}I+(-1)^{X_k}\frac{\eta}{2}\vec{\sigma}.\hat{n}_k,\quad X_k\in\{0,1\}, 0\leq\eta\leq1.
\end{equation}
These can be rewritten as:
\begin{equation}
E^k_{X_k}=\eta\Pi^k_{X_k}+(1-\eta)\frac{I}{2}, 
\end{equation}
where $\Pi^k_{X_k}=\frac{1}{2}(I+(-1)^{X_k}\vec{\sigma}.\hat{n}_k)$ are the corresponding projectors. That is, $M_k$ is a noisy version of the projective measurement
of spin along the $\hat{n}_k$ direction, where the sharpness of the POVM is given by $\eta$. In this case, $p(X_k|M_k,P)=\Tr(\rho_P E^k_{X_k})$, where $\rho_P$ is the density matrix for preparation $P$ of the system and the predictability can be easily shown to be $\eta$: the preparation 
maximizing $\eta_{M_k}$ is a pure state along the $\hat{n}_k$ axis, i.e., $\rho_P=\Pi^k_{X_k}$.

Quantum violation of the LSW inequality has already been shown in Chapter 2 (based on Ref.~\cite{KG}). 
On account of the equivalence of the four NC inequalities
under relabelling of measurement outcomes, the violation of the other three NC inequalities besides LSW follows from appropriate 
relabellings of measurement outcomes in the quantum violation demonstrated in Chapter 2.\subsection{Specker polytope}
The statistics allowed in Specker's scenario, given that the no-disturbance condition holds, can be understood as a convex polytope in $\mathbb{R}^6$
with $12$ extreme points or vertices,
$8$ of which are deterministic and $4$ indeterministic. The measurement statistics are given by the vector of $12$ probabilities 
$\vec{v}(P)=(v^{ij}_{X_iX_j}(P)|X_i,X_j\in\{0,1\}, i,j\in\{1,2,3\},i<j)$, where $v^{ij}_{X_iX_j}(P)\equiv p(X_iX_j|M_{ij};P)$, constrained by the positivity, normalization 
and no-disturbance conditions which reduce the number of independent probabilities in $\vec{v}(P)$ from $12$ to $6$.

The deterministic vertices, which admit KS-noncontextual models, correspond to the $8$ possible tripartite joint distributions of the form,
$p(X_1,X_2,X_3|P)\equiv \delta_{X_1,X_1(P)}\delta_{X_2,X_2(P)},\delta_{X_3,X_3(P)}$, where $X_1(P),X_2(P),X_3(P)\in \{0,1\}$. The deterministic vertex $\vec{v}(P)$
can be obtained from this joint distribution as $v^{ij}_{X_iX_j}(P)=\sum_{X_k, k\neq i,j}p(X_1,X_2,X_3|P)=\delta_{X_i,X_i(P)}\delta_{X_j,X_j(P)}$.
These vertices are labelled lexicographically, $(X_1(P),X_2(P),X_3(P))$ as the decimal equivalent of binary number $X_1(P)X_2(P)X_3(P)$:

\begin{align}
\vec{v}_0(P): v^{12}_{00}(P)=v^{23}_{00}(P)=v^{13}_{00}(P)=1,\\
\vec{v}_1(P): v^{12}_{00}(P)=v^{23}_{01}(P)=v^{13}_{01}(P)=1,\\
\vec{v}_2(P): v^{12}_{01}(P)=v^{23}_{10}(P)=v^{13}_{00}(P)=1,\\
\vec{v}_3(P): v^{12}_{01}(P)=v^{23}_{11}(P)=v^{13}_{01}(P)=1,\\
\vec{v}_4(P): v^{12}_{10}(P)=v^{23}_{00}(P)=v^{13}_{10}(P)=1,\\
\vec{v}_5(P): v^{12}_{10}(P)=v^{23}_{01}(P)=v^{13}_{11}(P)=1,\\
\vec{v}_6(P): v^{12}_{11}(P)=v^{23}_{10}(P)=v^{13}_{10}(P)=1,\\
\vec{v}_7(P): v^{12}_{11}(P)=v^{23}_{11}(P)=v^{13}_{11}(P)=1.
\end{align}
Note that these deterministic vertices satisfy all the four KS inequalities, Eqs. (\ref{ineq1})-(\ref{ineq4}), and therefore also the four noncontextuality 
inequalities, Eqs. (\ref{ncineq1})-(\ref{ncineq4}). That is, they admit a KS-noncontextual model. Indeed, the convex set that these $8$ extreme points 
define is a KS-noncontextuality polytope, analogous to a Bell polytope in a Bell scenario. This polytope is a 
subset of the larger Specker polytope which in addition to these $8$ vertices includes the $4$ indeterministic vertices in Specker's scenario.

The indeterministic vertices, which do \emph{not} admit KS-noncontextual models, correspond to the $4$ sets of pairwise joint distributions given by:
\begin{eqnarray}
\vec{v}_8(P)&:&v^{12}_{01}(P)=v^{12}_{10}(P)=v^{23}_{00}(P)=v^{23}_{11}(P)=v^{13}_{00}(P)=v^{13}_{11}(P)=\frac{1}{2},\\
\vec{v}_9(P)&:&v^{12}_{00}(P)=v^{12}_{11}(P)=v^{23}_{01}(P)=v^{23}_{10}(P)=v^{13}_{00}(P)=v^{13}_{11}(P)=\frac{1}{2},\\
\vec{v}_{10}(P)&:&v^{12}_{00}(P)=v^{12}_{11}(P)=v^{23}_{00}(P)=v^{23}_{11}(P)=v^{13}_{01}(P)=v^{13}_{10}(P)=\frac{1}{2},\\
\vec{v}_{11}(P)&:&v^{12}_{01}(P)=v^{12}_{10}(P)=v^{23}_{01}(P)=v^{23}_{10}(P)=v^{13}_{01}(P)=v^{13}_{10}(P)=\frac{1}{2}.
\end{eqnarray}
The vertex $\vec{v}_8(P)$ violates inequalities (\ref{ineq2}) and (\ref{ncineq2}) ($\eta_0>0$), $\vec{v}_9(P)$ violates inequalities (\ref{ineq3})
and (\ref{ncineq3}) ($\eta_0>0$), $\vec{v}_{10}(P)$ violates inequalities (\ref{ineq4}) and (\ref{ncineq4}) ($\eta_0>0$),
and $\vec{v}_{11}(P)$ violates inequalities (\ref{ineq1}) and (\ref{ncineq1}) ($\eta_0>0$). 
Note that these vertices are equivalent under relabellings, that is, $\vec{v}_8(P)$ turns to $\vec{v}_{11}(P)$ on relabelling outcomes
of $M_3$, $\vec{v}_9(P)$ to $\vec{v}_{11}(P)$ on relabelling outcomes of $M_1$, and 
$\vec{v}_{10}(P)$ to $\vec{v}_{11}(P)$ on relabelling outcomes of $M_2$. Note that the vertex $\vec{v}_{11}(P)$ corresponds to the `overprotective seer' (OS)
correlations of Ref. \cite{LSW} which maximally violate the LSW inequality when $\eta_0<1$.

\subsection{Limitations of the joint probability distribution criterion for deciding contextuality}
All the Bell-Kochen-Specker type analyses of contextuality ultimately hinge on ruling out the existence of a joint probability
distribution that reproduces the operational statistics of various jointly measurable observables as marginals. Deciding whether
such a joint distribution exists is called a marginal problem \cite{CF}. That this is a limited criterion to 
decide the question of contextuality without also making the assumption of outcome determinism or factorizability is borne out by correlations 
in Specker's scenario that lie outside the polytope of correlations admissible in KS-noncontextual models but are realizable in noncontextual models.
Violation of the LSW inequality by unsharp measurements in quantum theory rules out such noncontextual models \cite{KG}. 

Once outcome determinism for unsharp measurements (ODUM, cf.\cite{odum}) is abandoned,
the existence of a joint distribution is no longer necessary to characterize noncontextual models.
Further, in the case of an arbitrary operational theory which isn't quantum theory it isn't obvious whether outcome-determinism for 
measurements can at all be justified from the assumption of preparation and measurement noncontextuality. An experimentally interesting
and robust noncontextuality inequality should not assume that the operational theory describing the experiment is quantum theory and 
instead derive from the assumption of noncontextuality alone, given some operational equivalences between preparation procedures or 
measurement procedures. Violation of the LSW inequality only indicates that \emph{quantum theory} does not admit a noncontextual ontological model.
The ideal to aspire for is something akin to Bell inequalities which are theory-independent. That such an ideal is achievable will be shown 
in the following chapters.

\section{Chapter summary}
To summarize, the chief takeaways from this chapter are the following:
\begin{enumerate}
 \item Fine's theorem for noncontextual models only applies in cases where the correlations between measurement outcomes are mediated exclusively by 
the ontic state $\lambda$ of the system. When this is not the case and factorizability fails, it's possible that the measurement outcomes share correlations
that are not on account of the measured system but an artifact of the measurement apparatus. Considering noncontextual models which are not factorizable allows
one to handle this situation.
 \item The no-disturbance polytope of Specker's scenario admits 4 indeterministic extremal points, related to each other by relabellings of measurement outcomes, that
are related to the `OS box' of Ref. \cite{LSW}. Corresponding to these 4 extremal points are 4 Kochen-Specker inequalities assuming outcome determinism, and
4 noncontextuality inequalities that do not assume outcome determinism.
\end{enumerate}
Hence, Fine's theorem, unlike its implications for Bell's theorem, does not absolve one of the need to justify outcome determinism 
in noncontextual ontological models. We therefore need to further investigate how a failure of outcome determinism or factorizability
in the case of more well-known KS inequalities should be handled.
Another open question is how to derive noncontextuality inequalities for arbitrary operational theories, rather than just quantum theory,
without any assumption of outcome determinism or factorizability. These questions will be taken up in forthcoming chapters.

\section*{Appendix}
\subsection*{Constraints on the operational statistics from positivity, normalization and no-disturbance}\label{opconstraints}
The notation here is simplified as follows: the measurements are denoted by $e\equiv M_1, f\equiv M_2, g\equiv M_3$, and their outcomes 
by $e_k\equiv (X_1=k)$, $f_k\equiv (X_2=k)$, $g_k\equiv (X_3=k)$, where $k\in\{0,1\}$. Thus there are three binary observables, $e, f, g$,
each taking values in $\{0,1\}$ and measured on a system prepared according to some preparation $P$. $e_0$ denotes the outcome $e=0$ and $e_1$ denotes $e=1$. 
Analogous notation applies for outcomes of $f$ and $g$ as well. The probability distributions on these observables associated with 
the preparation $P$ are denoted by $w_P(e)\equiv\{w_P(e_0),w_P(e_1)\},w_P(f)\equiv\{w_P(f_0),w_P(f_1)\},w_P(g)\equiv\{w_P(g_0),w_P(g_1)\}$.
The experimental statistics correspond to the joint measurement of every pair of observables: 
\begin{eqnarray}\nonumber
w_P(e,f)&\equiv&\{w_P(e_0,f_0),w_P(e_0,f_1),w_P(e_1,f_0),w_P(e_1,f_1)\},\\\nonumber
w_P(f,g)&\equiv&\{w_P(f_0,g_0),w_P(f_0,g_1),w_P(f_1,g_0),w_P(f_1,g_1)\},\\\nonumber
w_P(e,g)&\equiv&\{w_P(e_0,g_0),w_P(e_0,g_1),w_P(e_1,g_0),w_P(e_1,g_1)\}.
\end{eqnarray}

In addition to the usual positivity and normalization constraints for probability distributions,
the no-disturbance condition on the pairwise joint distributions yields:
\begin{eqnarray}\nonumber
w_P(e_0,f_0)+w_P(e_0,f_1)&=&w_P(e_0,g_0)+w_P(e_0,g_1)\\\nonumber
&\equiv& w_P(e_0),\\\nonumber
\Rightarrow w_P(e_1,f_0)+w_P(e_1,f_1)&=&w_P(e_1,g_0)+w_P(e_1,g_1)\\\nonumber
&\equiv& w_P(e_1),\\\nonumber
w_P(f_0,g_0)+w_P(f_0,g_1)&=&w_P(e_0,f_0)+w_P(e_1,f_0)\\\nonumber
&\equiv& w_P(f_0),\\\nonumber
\Rightarrow w_P(f_1,g_0)+w_P(f_1,g_1)&=&w_P(e_0,f_1)+w_P(e_1,f_1)\\\nonumber
&\equiv& w_P(f_1),\\\nonumber
w_P(e_0,g_0)+w_P(e_1,g_0)&=&w_P(f_0,g_0)+w_P(f_1,g_0)\\\nonumber
&\equiv& w_P(g_0),\\\nonumber
\Rightarrow w_P(e_0,g_1)+w_P(e_1,g_1)&=&w_P(f_0,g_1)+w_P(f_1,g_1)\\\nonumber
&\equiv& w_P(g_1).
\end{eqnarray}
Normalization gets rid of three parameters out of the twelve in the experimental statistics while no-disturbance eliminates
three more parameters. There are, therefore, six independent parameters describing the experimental statistics:
\begin{eqnarray}
 w_{12}=w_P(e_0,f_1)+w_P(e_1,f_0),\\
 w_{23}=w_P(f_0,g_1)+w_P(f_1,g_0),\\
 w_{13}=w_P(e_0,g_1)+w_P(e_1,g_0),\\
 p_1\equiv w_P(e_0),\\
 p_2\equiv w_P(f_0),\\
 p_3\equiv w_P(g_0),
\end{eqnarray}
subject to $0\leq w_{12},w_{23},w_{13},p_1,p_2,p_3\leq 1$.
Using the no-disturbance and normalization conditions:
\begin{eqnarray}\nonumber
w_P(e_0,f_1)=\frac{w_{12}+p_1-p_2}{2}&,& w_P(e_1,f_0)=\frac{w_{12}-p_1+p_2}{2},\\\nonumber
w_P(e_0,f_0)=\frac{p_1+p_2-w_{12}}{2}&,& w_P(e_1,f_1)=1-\frac{w_{12}+p_1+p_2}{2},\\\nonumber
w_P(f_0,g_1)=\frac{w_{23}+p_2-p_3}{2}&,& w_P(f_1,g_0)=\frac{w_{23}-p_2+p_3}{2},\\\nonumber
w_P(f_0,g_0)=\frac{p_2+p_3-w_{23}}{2}&,& w_P(f_1,g_1)=1-\frac{w_{23}+p_2+p_3}{2},\\\nonumber
w_P(e_0,g_1)=\frac{w_{13}+p_1-p_3}{2}&,& w_P(e_1,g_0)=\frac{w_{13}-p_1+p_3}{2},\\\nonumber
w_P(e_0,g_0)=\frac{p_1+p_3-w_{13}}{2}&,& w_P(e_1,g_1)=1-\frac{w_{13}+p_1+p_3}{2}.
\end{eqnarray}
The positivity requirements on these translate to the following inequalities:
\begin{eqnarray}
 |p_1-p_2|&\leq& w_{12}\leq p_1+p_2\leq 2-w_{12},\\
 |p_2-p_3|&\leq& w_{23}\leq p_2+p_3\leq 2-w_{23},\\
 |p_1-p_3|&\leq& w_{13}\leq p_1+p_3\leq 2-w_{13}.
\end{eqnarray}

\subsection*{Deriving the KS and NC inequalities}\label{deriv}

\subsubsection*{KS inequalities}
The KS inequalities derive from the existence of a joint probability distribution $p(X_1X_2X_3)$ such that
$p(X_iX_j|M_{ij},P)=\sum_{X_k}p(X_1X_2X_3)$, where $i,j,k$ are distinct indices in $\{1,2,3\}$. Therefore
the following must hold:
\begin{eqnarray}
p(001)&=&p(00|M_{12},P)-p(000),\nonumber\\
p(010)&=&p(00|M_{13},P)-p(000),\nonumber\\
p(100)&=&p(00|M_{23},P)-p(000),\nonumber\\
p(011)&=&p(01|M_{12},P)-p(010)\nonumber\\
&=&p(01|M_{12},P)-p(00|M_{13},P)+p(000),\nonumber\\
p(101)&=&p(10|M_{12},P)-p(100)\nonumber\\
&=&p(10|M_{12},P)-p(00|M_{23},P)+p(000),\nonumber\\
p(110)&=&p(10|M_{13},P)-p(100)\nonumber\\
&=&p(10|M_{13},P)-p(00|M_{23},P)+p(000),\nonumber\\
p(111)&=&1-p(00|M_{12},P)-p(01|M_{12},P)-p(10|M_{12},P)\nonumber\\
&-&p(10|M_{13},P)+p(00|M_{23},P)-p(000).\nonumber
\end{eqnarray}
Expressing the probabilities in terms of the six free parameters identified earlier, namely, 
the anticorrelation probabilities, $w_{12},w_{23},w_{13}$, and the marginals $p_1,p_2,p_3$, the positivity constraints, $0\leq p(X_1X_2X_3)\leq1$, require:
\begin{eqnarray}
&&0\leq p(000)\leq 1,\nonumber\\
&&0\leq p(001)\leq 1\nonumber\\
&&\Leftrightarrow p_1+p_2-2\leq w_{12}\leq p_1+p_2-2p(000)\nonumber\\
&&0\leq p(010)\leq 1\nonumber\\
&&\Leftrightarrow p_1+p_3-2\leq w_{13}\leq p_1+p_3-2p(000)\nonumber\\
&&0\leq p(100)\leq 1\nonumber\\
&&\Leftrightarrow p_2+p_3-2\leq w_{23}\leq p_2+p_3-2p(000)\nonumber\\
&&0\leq p(011)\leq 1\nonumber\\
&&\Leftrightarrow p_2+p_3-2p(000)\leq w_{12}+w_{13}\leq 2-2p(000)+p_2+p_3\nonumber\\
&&0\leq p(101)\leq 1\nonumber\\
&&\Leftrightarrow p_1+p_3-2p(000)\leq w_{12}+w_{23}\leq 2-2p(000)+p_1+p_3\nonumber\\
&&0\leq p(110)\leq 1\nonumber\\
&&\Leftrightarrow p_1+p_2-2p(000)\leq w_{13}+w_{23}\leq 2-2p(000)+p_1+p_2\nonumber\\
&&0\leq p(111)\leq 1\nonumber\\
&&\Leftrightarrow -2p(000)\leq w_{12}+w_{23}+w_{13}\leq 2-2p(000).\nonumber
\end{eqnarray}
Combining the inequalities to eliminate $p(000)$, and using the fact that $0\leq p(000)\leq1$:
\begin{eqnarray}
&&0\leq p(000)\leq1, 0\leq p(111)\leq 1\nonumber\\
&\Rightarrow&-2p(000)\leq 0\leq w_{12}+w_{23}+w_{13}\leq 2-2p(000)\leq 2,\nonumber\\
&&0\leq p(010)\leq1, 0\leq p(101)\leq1\nonumber\\
&\Rightarrow&0\leq w_{12}+w_{23}-w_{13}\leq 2 \leq 4-2p(000),\nonumber\\
&&0\leq p(110)\leq1, 0\leq p(001)\leq1\nonumber\\
&\Rightarrow&0\leq w_{23}+w_{13}-w_{12}\leq 2 \leq 4-2p(000),\nonumber\\
&&0\leq p(011)\leq1, 0\leq p(100)\leq1\nonumber\\
&\Rightarrow&0\leq w_{12}+w_{13}-w_{23}\leq 2 \leq 4-2p(000).\nonumber
\end{eqnarray}
Of these, the KS inequalities, which are not trivially true by normalization and positivity,
are the following:
\begin{eqnarray}
&&R_3\equiv w_{12}+w_{23}+w_{13}\leq 2,\\
&&R_0\equiv w_{12}-w_{23}-w_{13}\leq 0,\\
&&R_1\equiv w_{23}-w_{12}-w_{13}\leq 0,\\
&&R_2\equiv w_{13}-w_{12}-w_{23}\leq 0.
\end{eqnarray}
Note that $p(000)\leq p_1,p_2,p_3$, since $p_1=p(000)+p(001)+p(010)+p(011)$, etc. As long as the KS inequalities 
are satisfied, one can define a joint probability distribution by choosing a suitable 
$p(000)\leq\min\{p_1,p_2,p_3\}$.

To summarize, there are following constraints on the six parameters, $\{w_{12},w_{23},w_{13},p_1,p_2,p_3\}$, characterizing the
polytope of KS-noncontextual correlations:
\begin{eqnarray}
&&0\leq p_1,p_2,p_3,w_{12},w_{23},w_{13}\leq 1,\\
&&|p_1-p_2|\leq w_{12}\leq \min\{p_1+p_2, 2-p_1-p_2\},\\
&&|p_2-p_3|\leq w_{23}\leq \min\{p_2+p_3, 2-p_2-p_3\},\\
&&|p_1-p_3|\leq w_{13}\leq \min\{p_1+p_3, 2-p_1-p_3\},\\
&&w_{12}+w_{23}+w_{13}\leq 2,\\
&&w_{12}-w_{23}-w_{13}\leq 0,\\
&&w_{23}-w_{12}-w_{13}\leq 0,\\
&&w_{13}-w_{12}-w_{23}\leq 0.
\end{eqnarray}

\subsubsection*{NC inequalities}
In deriving the NC inequalities, I closely follow the derivation of the LSW inequality in Ref.~\cite{LSW}. For a more detailed explication of the 
principles underlying this derivation, the reader may consult Ref.~\cite{odum}. The assumptions used are: measurement noncontextuality
and preparation noncontextuality on account of the fact that in operational quantum theory, preparation noncontextuality implies outcome determinism
for sharp measurements \cite{genNC}.

Define
\begin{eqnarray}
&&R_3(\lambda)\equiv w_{12}(\lambda)+w_{23}(\lambda)+w_{13}(\lambda),\\
&&R_0(\lambda)\equiv w_{12}(\lambda)-w_{23}(\lambda)-w_{13}(\lambda),\\
&&R_1(\lambda)\equiv w_{23}(\lambda)-w_{12}(\lambda)-w_{13}(\lambda),\\
&&R_2(\lambda)\equiv w_{13}(\lambda)-w_{12}(\lambda)-w_{23}(\lambda),
\end{eqnarray}
where
$w_{ij}(\lambda)\equiv \xi(X_i\neq X_j|M_{ij};\lambda)$, for all $(ij)\in\{(12),(23),(13)\}$. Note that to any given preparation, the ontological model
associates a distribution $\mu(\lambda|P)\geq0$, where $\sum_{\lambda\in\Lambda}\mu(\lambda|P)=1$, and 
$p(X|M,P)=\sum_{\lambda\in\Lambda}\mu(\lambda|P)\xi(X|M;\lambda)$,
where $\xi(X|M;\lambda)\in[0,1]$ is the response function of outcome $X$ when measurement $M$ is performed and the system's ontic state is $\lambda$.
Therefore:
\begin{eqnarray}
R_3=\sum_{\lambda\in\Lambda} \mu(\lambda|P)R_3(\lambda)\leq\max_{\lambda}R_3(\lambda),\\
R_0=\sum_{\lambda\in\Lambda} \mu(\lambda|P)R_0(\lambda)\leq\max_{\lambda}R_0(\lambda),\\
R_1=\sum_{\lambda\in\Lambda} \mu(\lambda|P)R_1(\lambda)\leq\max_{\lambda}R_1(\lambda),\\
R_2=\sum_{\lambda\in\Lambda} \mu(\lambda|P)R_2(\lambda)\leq\max_{\lambda}R_2(\lambda).
\end{eqnarray}

To maximize $R_3(\lambda)$ in this noncontextual model, one needs to maximize each anticorrelation term $w_{12}(\lambda),w_{23}(\lambda),w_{13}(\lambda)$.
To maximize $R_0(\lambda)$, maximize $w_{12}(\lambda)$ and minimize $w_{23}(\lambda), w_{13}(\lambda)$. Similarly,
to maximize $R_1(\lambda)$, maximize $w_{23}(\lambda)$ and minimize $w_{12}(\lambda), w_{13}(\lambda)$, and 
to maximize $R_2(\lambda)$, maximize $w_{13}(\lambda)$ and minimize $w_{12}(\lambda), w_{23}(\lambda)$.

The single measurement response functions are given by
\begin{equation}
 \xi(X_i|M_i;\lambda)=\eta\delta_{X_i,X_i(\lambda)}+(1-\eta)\left(\frac{1}{2}\delta_{X_i,0}+\frac{1}{2}\delta_{X_i,1}\right),
\end{equation}
$i\in\{1,2,3\}$, in keeping with the assumption of outcome determinism for projectors but not so for nonprojective positive operators \cite{odum}. 
The general form the pairwise response function for measurements $\{M_i,M_j\}$ is given by:
\begin{eqnarray}
\xi(X_i,X_j|M_{ij};\lambda)&=&\alpha \delta_{X_i,X_i(\lambda)}\delta_{X_j,X_j(\lambda)}\\
&+&\beta \delta_{X_i,X_i(\lambda)}\left(\frac{1}{2}\delta_{X_j,0}+\frac{1}{2}\delta_{X_j,1}\right)\nonumber\\
&+&\gamma \left(\frac{1}{2}\delta_{X_i,0}+\frac{1}{2}\delta_{X_i,1}\right)\delta_{X_j,X_j(\lambda)}\nonumber\\
&+&\delta \left(\frac{1}{2}\delta_{X_i,0}\delta_{X_j,0}+\frac{1}{2}\delta_{X_i,1}\delta_{X_j,1}\right)\nonumber\\
&+&\epsilon \left(\frac{1}{2}\delta_{X_i,0}\delta_{X_j,1}+\frac{1}{2}\delta_{X_i,1}\delta_{X_j,0}\right).\nonumber
\end{eqnarray}
The marginals are
\begin{eqnarray}
\xi(X_i|M_{ij};\lambda)&=&(\alpha+\beta)\delta_{X_i,X_i(\lambda)}\\
&+&(\gamma+\delta+\epsilon)\left(\frac{1}{2}\delta_{X_i,0}+\frac{1}{2}\delta_{X_i,1}\right),\nonumber
\end{eqnarray}
and
\begin{eqnarray}
\xi(X_j|M_{ij};\lambda)&=&(\alpha+\gamma)\delta_{X_j,X_j(\lambda)}\nonumber\\
&+&(\beta+\delta+\epsilon)\left(\frac{1}{2}\delta_{X_j,0}+\frac{1}{2}\delta_{X_j,1}\right),
\end{eqnarray}
so that the following must hold on account of $\xi(X_i|M_{ij};\lambda)=\xi(X_i|M_i;\lambda)$ and $\xi(X_j|M_{ij};\lambda)=\xi(X_j|M_j;\lambda)$:
\begin{equation}
\alpha+\beta=\alpha+\gamma=\eta,
\end{equation}
\begin{equation}
\gamma+\delta+\epsilon=\beta+\delta+\epsilon=1-\eta.
\end{equation}

To maximize anticorrelation $w_{ij}(\lambda)$: the $\beta$ and $\gamma$ terms yield correlation as often as anticorrelation, 
so $\beta=\gamma=0$. The $\delta$ term always yields correlation, so $\delta=0$. Only $\alpha$ and $\epsilon$ terms 
allow for more anticorrelation than correlation. This means $\alpha=\eta$ and $\epsilon=1-\eta$. The pairwise response function
maximizing anticorrelation $w_{ij}(\lambda)$ is given by
\begin{eqnarray}
&&\xi(X_iX_j|M_{ij};\lambda)\nonumber\\
&=&\eta\delta_{X_i,X_i(\lambda)}\delta_{X_j,X_j(\lambda)}\nonumber\\
&+&(1-\eta)\left(\frac{1}{2}\delta_{X_i,0}\delta_{X_j,1}+\frac{1}{2}\delta_{X_i,1}\delta_{X_j,0}\right).\nonumber
\end{eqnarray}
This maximizing response function constrains the anticorrelation probability as:
\begin{equation}
1-\eta\leq w_{ij}(\lambda)\leq 1.
\end{equation}

To minimize anticorrelation $w_{ij}(\lambda)$: the $\beta$ and $\gamma$ terms yield correlation as often as anticorrelation, 
so $\beta=\gamma=0$. The $\epsilon$ term always yields anticorrelation, so $\epsilon=0$. Only $\alpha$ and $\delta$ terms 
allow for more correlation than anticorrelation. This means $\alpha=\eta$ and $\delta=1-\eta$. The pairwise response function
minimizing anticorrelation $w_{ij}(\lambda)$ is given by
\begin{eqnarray}
&&\xi(X_iX_j|M_{ij};\lambda)\nonumber\\
&=&\eta\delta_{X_i,X_i(\lambda)}\delta_{X_j,X_j(\lambda)}\nonumber\\
&+&(1-\eta)\left(\frac{1}{2}\delta_{X_i,0}\delta_{X_j,0}+\frac{1}{2}\delta_{X_i,1}\delta_{X_j,1}\right).\nonumber
\end{eqnarray}
This minimizing response function constrains the anticorrelation probability as:
\begin{equation}
0\leq w_{ij}(\lambda)\leq \eta.
\end{equation}

$R_3(\lambda)$ is maximized by considering the response function maximizing anticorrelation for each of $w_{ij}(\lambda)$,
and noting that of the eight possible assignments $\lambda\rightarrow (X_1(\lambda),X_2(\lambda),X_3(\lambda)) \in \{0,1\}^3$,
the assignments maximizing $R_3(\lambda)$ are $\{(001),(010),(011),(100),(101),(110)\}$, each of which has two anticorrelated pairs
and a third correlated pair such that the anticorrelation probability becomes
$$\max_{\lambda}R_3(\lambda)=2\eta+3(1-\eta)=3-\eta,$$ and therefore
\begin{equation}
R_3\leq 3-\eta,
\end{equation}
the LSW inequality.

$R_0(\lambda)$ is maximized by considering the response function maximizing anticorrelation for $w_{12}(\lambda)$ and
response functions minimizing anticorrelation for $w_{23}(\lambda)$ and $w_{13}(\lambda)$. Noting that of the eight possible
assignments $\lambda\rightarrow (X_1(\lambda),X_2(\lambda),X_3(\lambda)) \in \{0,1\}^3$,
the assignments maximizing $R_0(\lambda)$ are $\{(010),(011),(100),(101)\}$:
for $\{(010),(101)\}$, $w_{12}(\lambda)=1$, $w_{23}(\lambda)=\eta$, and $w_{13}(\lambda)=0$, and 
for $\{(011),(100)\}$, $w_{12}(\lambda)=1$, $w_{23}(\lambda)=0$, and $w_{13}(\lambda)=\eta$, so that
$$\max_{\lambda}R_0(\lambda)=1-\eta,$$ and therefore
\begin{equation}
R_0\leq 1-\eta.
\end{equation}
Similarly, the NC inequalities for $R_1$ and $R_2$ follow: $R_1\leq 1-\eta$ and $R_2\leq 1-\eta$.

\chapter{Beyond the Kochen-Specker theorem: its operationalization and experimental testability}
The Kochen-Specker theorem demonstrates that it is not possible to reproduce the predictions of quantum theory in terms of a 
hidden variable model where the hidden variables assign a value to every projector deterministically and noncontextually. A 
noncontextual value-assignment to a projector is one that does not depend on which other projectors -- the context -- are measured
together with it. In this chapter, using the generalized notion of noncontextuality that applies to both measurements and preparations \cite{genNC},
we propose a scheme -- inspired by the Kochen-Specker theorem -- for deriving inequalities that test whether a given set of experimental statistics is consistent with a noncontextual
model. Unlike previous inequalities inspired by the Kochen-Specker theorem, we do not assume that the value-assignments are
deterministic and therefore in the face of a violation of our inequality, the possibility of salvaging noncontextuality
by abandoning determinism is no longer an option. Our approach is operational in the sense that it does not presume quantum
theory: a violation of our inequality implies the impossibility of a noncontextual model for {\em any} operational theory that
can account for the experimental observations, including any successor to quantum theory. Our revision of 
the Kochen-Specker theorem thereby renders noncontextuality an experimentally testable hypothesis applicable to arbitrary operational theories.

In the last section of this chapter we discuss what a realistic experimental test of noncontextuality would entail,
based on the experiment reported in Ref.~\cite{exptl}. The experiment of Ref.~\cite{exptl} does not test a noncontextuality 
inequality directly inspired by the Kochen-Specker theorem but instead a simpler noncontextuality inequality that we will derive.
However, what we want to emphasize is not this experiment {\em per se} but 
the principles underlying it, so that the methodology of Ref.~\cite{exptl} can be generalized
for testing any noncontextuality inequality, including one derived from a Kochen-Specker construction. 
This chapter is based on 
work published in Refs.~\cite{KunjSpek} and \cite{exptl}.

\section{Introduction}
As we have seen in Chapter 1, an {\em ontological model} of operational quantum theory~\cite{harriganspekkens} associates to each quantum system a set of physical or {\em ontic} states, and imagines that each preparation
procedure is associated with a probability distribution over such ontic states and each measurement procedure is associated with a conditional probability distribution over its outcomes for 
each ontic state. 
This ontological models framework \cite{harriganspekkens} does not prejudice
the question of whether any of the variables remain unknown (i.e.
hidden) to one who knows the preparation procedure, hence we use the term ``ontic'' rather than ``hidden'' for 
the presumed physical states.
Contrary to na\"{i}ve impressions, such models have no difficulty reproducing quantum predictions \emph{unless} additional assumptions are made.
The Kochen-Specker theorem~\cite{KS67}  derives a contradiction
from an assumption  we term {\em KS-noncontextuality}.

Consider a set of 
measurements, each represented by an orthonormal basis, 
where some rays are common to more than one basis.  
Every ontic state
assigns a
definite value, 0 or 1, to each ray, {\em regardless of the basis (i.e. context) 
in which the ray appears}. If a ray is assigned 
value 1 (0) by an ontic
state $\lambda$, 
the measurement outcome associated with that ray 
occurs with
probability 1 (0) when any measurement 
including the ray is implemented on the system in ontic state $\lambda$. 
Hence, for every basis, precisely one
ray must be assigned the value 1 and the others 0.

Unlike the Kochen-Specker theorem, determinism\footnote{Note that, for simplicity, by ``determinism'' we mean the notion we have elsewhere called ``outcome determinism''. Unless otherwise 
specified, ``determinism'' will
mean ``outcome determinism'' in this thesis.} is not an assumption 
of Bell's theorem~\cite{Bell76, Wiseman}.
Even in Bell's 
1964 article~\cite{Bell64}, where deterministic assignments are used,
determinism is not assumed but
rather {\em derived} from local causality and the fact that quantum theory
predicts perfect correlations if the same observable is measured on two
parts of a maximally entangled state (an argument
from EPR~\cite{EPR} that Bell recycled\footnote{Indeed, in Ref.~\cite{Bellbook} (p. 157), Bell writes

\begin{quote}
My own first paper on [the subject of Bell's Theorem] ... starts with a summary
of the EPR argument from locality to deterministic hidden variables. But the
commentators have almost universally reported that it begins with deterministic
hidden variables.
\end{quote}
Although Wiseman has disputed Bell's account of the role of determinism in his
first paper~\cite{Wiseman}, see Norsen's response~\cite{Norsen}.}).
Similarly, rather than assuming determinism in
noncontextual models, one can derive it \cite{genNC} from a generalized notion of noncontextuality 
and from two facts about quantum theory: (i) the outcome of a 
measurement of some observable is perfectly predictable whenever the preceding
preparation is of an eigenstate of that observable, and (ii) the
indistinguishability, relative to all quantum
measurements, of different convex decompositions of the completely mixed state
into pure states.

Hence, in the Kochen-Specker theorem one can replace the assumption of determinism with
the generalized notion of noncontextuality and the quantum prediction of
perfect predictability. If perfect predictability is 
observed, then
in the face of the resulting contradiction, one must 
abandon
noncontextuality:
one 
cannot salvage 
it
by abandoning
determinism.

Of course, no real experiment ever yields {\em perfect} predictability, so this
manner of ruling out noncontextuality is not robust to experimental error. 
We will show how to contend with the lack of perfect predictability of
measurements and how to derive an experimentally-robust noncontextuality inequality for any KS-uncolourability proof of the Kochen-Specker theorem.

The original  proof of the KS theorem  required 117 rays in a 3-dimensional Hilbert space~\cite{KS67}.
We use the simpler proof in Ref.~\cite{Cab1}, requiring
a 4-dimensional Hilbert space and 18 rays that appear in 9 orthonormal bases,
each ray appearing in two bases
(Fig.~\ref{CEGAhypergraph}(a)). There is no
0-1 assignment to these rays that respects KS-noncontextuality: the
hypergraph is
{\em KS-uncolourable} (as shown in Fig.~\ref{CEGAhypergraph}(b)).
Of course, if the value assigned to a ray were allowed to be 0 in one basis and
1 in the other (KS-contextuality)
then there is no contradiction.
\begin{figure}
 \includegraphics[scale=0.55]{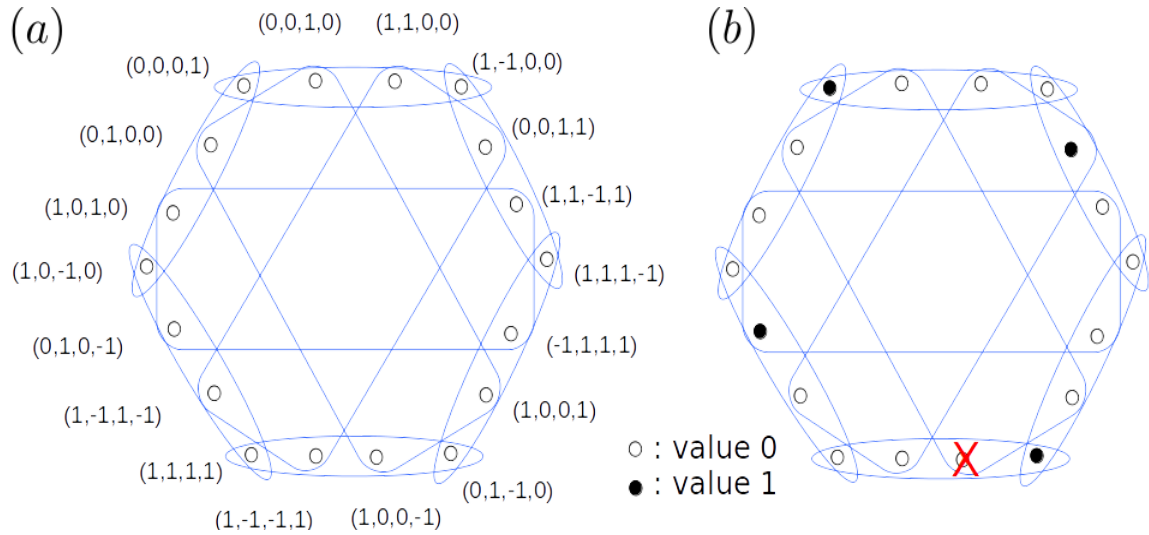}
 \caption{
Each of the $18$ rays is depicted by a node, and the $9$ orthonormal bases 
are depicted by $9$ edges, each 
enclosing $4$ nodes. There is no 
KS-noncontextual assignment 
to these nodes.
For instance, 
a noncontextual assignment of 0s and 1s to 17 of
the nodes cannot be completed to an assignment to all 18 
because neither 0 nor 1 can be assigned to the remaining node (marked X): 
one edge enclosing it requires it to be 0, the other 1.
}
\label{CEGAhypergraph}
\end{figure}

Can the possibility of a
KS-noncontextual ontological model be tested experimentally? One view is that it cannot, that
the KS theorem merely constrains 
the possibilities for {\em interpreting} the quantum formalism
\cite{MKC,merminquote}.
This answer, however, is inadequate.
One {\em can} and {\em should} ask: what is the minimal set of operational
predictions of quantum theory that need to be experimentally verified in order
to show that it does not admit of a noncontextual model?

We show that this minimal set is a far cry from 
the whole of quantum theory and
is therefore consistent with 
other possible operational theories such as the framework of generalized probabilistic theories \cite{hardy5axioms,barrettgpt}. As such,
our no-go result 
shows that none of these theories violating our noncontextuality inequality admits of a
noncontextual model.
Therefore, if corroborated by experiment, it implies that any future theory of physics that might replace quantum theory also
fails to admit of a noncontextual model.\footnote{The minimal necessary requirements such a theory should satisfy
in order to be able to decide whether it admits a
noncontextual ontological model or not is the existence of operationally equivalent experimental procedures 
(preparations and/or measurements) and a finite set of tomographically complete preparations and/or measurements to 
verify the operational equivalences in an experiment.}

Let us briefly recall some definitions: An {\em operational theory} is a triple
$(\mathcal{P},\mathcal{M},p)$ where $\mathcal{P}$ is a set of preparations,
$\mathcal{M}$ is a set of measurements, and $p$ specifies, for every pair of
preparation and measurement, the probability distribution over outcomes for that
measurement if it is implemented on that preparation.  
Denoting the set of outcomes of measurement $M$ by $\mathcal{K}_M$,
we have $\forall P \in \mathcal{P},\; \forall M\in \mathcal{M}$,
$p(\cdot|P,M):  \mathcal{K}_{M} \to [0,1]$.

An {\em ontological model} of an operational theory
$(\mathcal{P},\mathcal{M},p)$ is a triple $(\Lambda, \mu,\xi)$, where $\Lambda$
denotes a space of possible ontic states for the physical system,
$\mu$ specifies a probability distribution over the ontic
states for every preparation procedure, that is, $\forall P\in \mathcal{P}, \;
\mu(\cdot|P): \Lambda \rightarrow [0,1]$, such that
$\sum_{\lambda\in\Lambda}\mu(\lambda|P)=1$, and 
$\xi$ specifies, for every
measurement, the conditional probability of obtaining a given outcome if the
system is in a particular ontic state,  that is, $\forall M\in \mathcal{M},\;
\xi(k|M,\cdot):\Lambda \rightarrow [0,1]$, such that
$\sum_{k\in\mathcal{K}_M}\xi(k|M,\lambda)=1$.
The ontological model should 
reproduce the statistical predictions of
the operational theory:
\beq
p(k|P,M)=\sum_{\lambda\in\Lambda} \xi(k|M,\lambda)\mu(\lambda|P)
\label{empirical}
\eeq
for all $P \in \mathcal{P}$, and $M\in \mathcal{M}$.

We denote the event of obtaining outcome $k$ of measurement $M$ by $[k|M]$.
If $[k|M]$
is assigned a deterministic outcome by every ontic state
in the ontological model, i.e., if $\xi(k|M,\cdot):\Lambda \rightarrow
\{0,1\}$,
then it is said to be {\em outcome-deterministic} in that model.

\section{Upgrading the Kochen-Specker theorem}
We explain how to derive an experimental test of noncontextuality 
using a sequence of four refinements on the standard account of the KS theorem. In the first step, we identify the operational 
grounds that warrant applying an assumption of KS-noncontextuality to a measurement. In the second step, we explain why the 
assumption of outcome determinism for projective measurements -- which is part of KS-noncontextuality -- is unjustified, and
in the third step, we show how it can be justified for a perfectly predictable measurement given an assumption of noncontextuality
for preparations. In the fourth step, which is the central contribution of this chapter, we describe how to contend with the lack
of perfect predictability that is characteristic of any real experiment.

\subsection{Operationalizing the notion of KS-noncontextuality}
In a KS-noncontextual model of operational quantum theory,
the value (0 or 1) 
assigned to the event $[k|M]$ by 
$\lambda$ is the same as the value 
assigned to 
$[k'|M']$
whenever these two events 
correspond to the same ray of Hilbert space
(here, $M$ and $M'$ are 
assumed to be maximal projective measurements). 
We get to the crux of 
KS-noncontextuality, therefore, by
describing the {\em operational grounds} for associating the same ray to $[k|M]$
as is associated to $[k'|M']$.
Letting $\Pi_{k|M}$ and $\Pi_{k'|M'}$ represent the corresponding rank-1
projectors, the grounds for concluding that $\Pi_{k|M}=\Pi_{k'|M'}$
are that ${\rm tr}(\rho \Pi_{k|M})= {\rm tr}(\rho \Pi_{k'|M'})$ for an
appropriate set of density operators $\rho$.  It is clearly sufficient for the
equality to hold for the set of {\em all} density operators, but it is also
sufficient to have equality for certain smaller sets of density operators, namely, those
{\em complete for measurement tomography}.

What then should the operational grounds be for assigning the same value to
$[k|M]$ and $[k'|M']$ in a general operational theory, where preparations are
not represented by density operators? The answer, clearly, is that the event
$[k|M]$ {\em occurs with the same probability} as the event $[k'|M']$ for {\em
all} preparation procedures of the system,
\beq
p(k|M,P)=p(k'|M',P) {\rm\;\; for\; all\;\;  } P\in\mathcal{P},
\eeq
or equivalently, if this holds
for a subset of $\mathcal{P}$ that is tomographically complete.
In this case, we shall say that $[k|M]$ and $[k'|M']$ are {\em operationally
equivalent}, and denote this as $[k|M]$ $\simeq$ $[k'|M']$.  We can therefore
define a notion of KS-noncontextuality for any operational theory as follows: 
an ontological model $(\Lambda,\mu,\xi)$ of an operational theory
$(\mathcal{P},\mathcal{M},p)$ is KS-noncontextual if (i) operational equivalence
of events implies equivalent representations in the model, i.e., $[k|M]\simeq
[k'|M']\Rightarrow \xi(k|M,\lambda)=\xi(k'|M',\lambda)$  for all
$\lambda\in\Lambda$, and (ii) the model is outcome-deterministic,
$\xi(k|M,\cdot): \Lambda \to \{0,1\}.$

The operational equivalences 
among measurements 
required for the 
KS construction 
in Fig.~\ref{CEGAhypergraph}(a) are made
explicit in Fig.~\ref{mmtequivs}(a),
where 
every measurement event $[k|M]$ 
is represented 
by a distinct node, and a
novel type of edge between nodes 
specifies 
operational equivalence between events.
This affords a nice 
depiction of contextual value assignments,
as in Fig.~\ref{mmtequivs}(b).
It follows that 
{\em any} operational theory 
with nine four-outcome measurements 
satisfying the operational
equivalences
depicted in Fig.~\ref{mmtequivs}(a) fails to admit of a
KS-noncontextual model.
\begin{figure}
 \includegraphics[scale=0.58]{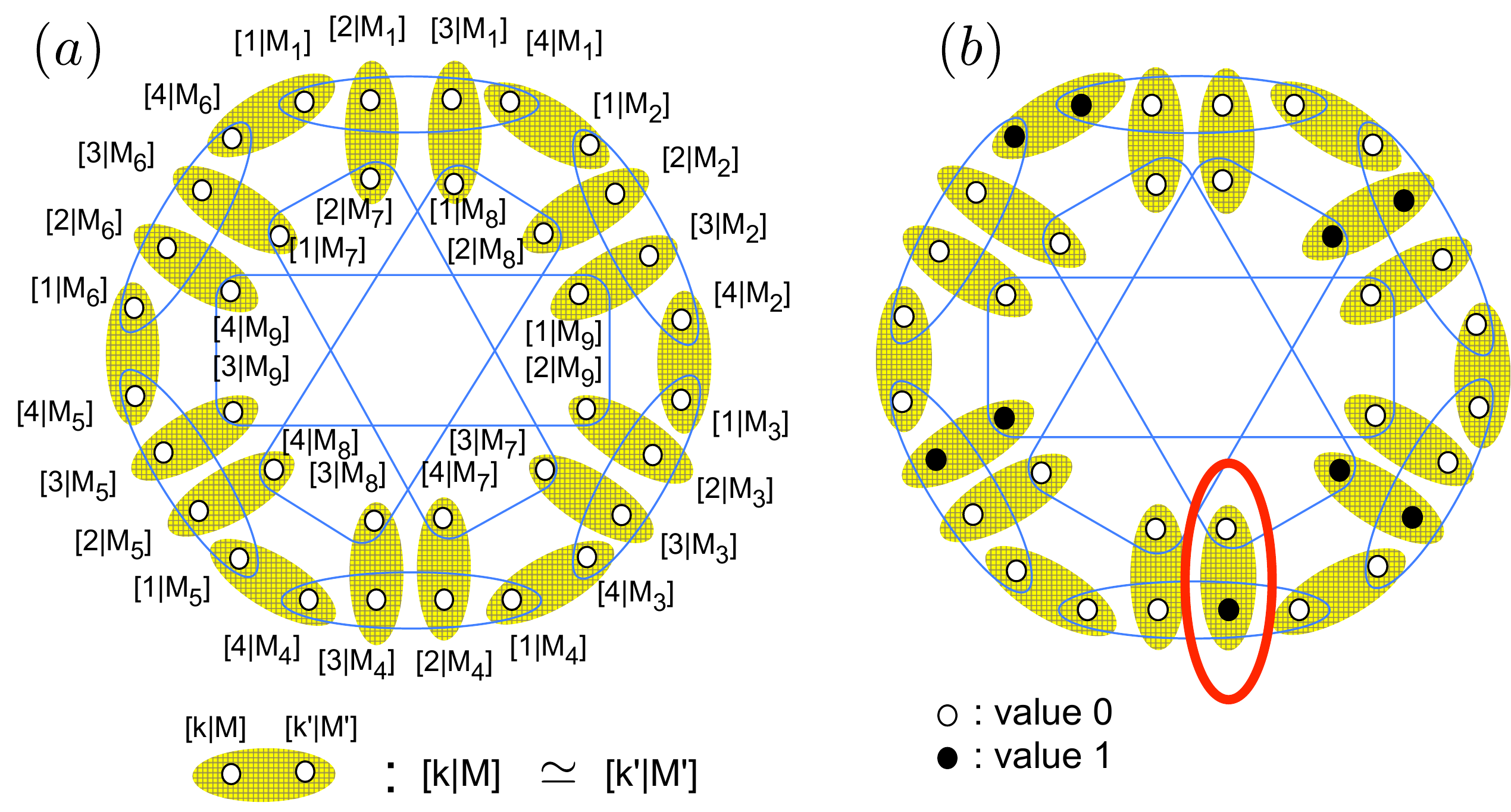}
 \caption{
(a) Nine four-outcome measurements,
each depicted by a set of four nodes encirled by a blue loop.
A yellow hashed region enclosing a set of nodes 
implies that the corresponding events are operationally equivalent. (b) 
An illustration of the fact that there is no outcome-deterministic noncontextual
assignment to the measurement events. 
The depicted outcome-deterministic assignment breaks the assumption of noncontextuality for the 
highlighted pair.
}
\label{mmtequivs}
\end{figure}

\subsection{Defining noncontextuality without outcome determinism}
The essence of 
noncontextuality is that context-independence at
the operational level should imply context-independence at the ontological
level. 
The operationalized version of KS-noncontextuality, however, makes an additional assumption about {\em what sort of thing} should be context-independent at the ontological level, namely, a 
deterministic assignment of an outcome.
However, one can equally well assume
that the ontic state merely assigns a probability distribution over outcomes,
and take {\em this distribution} to be the thing that is context-independent.  
In Ref.~\cite{genNC}, this 
revised notion of noncontextuality was
termed {\em measurement noncontextuality}:
\begin{quote}
\emph{Measurement noncontextuality} is satisfied by an ontological model
$(\Lambda,\mu,\xi)$ of an operational theory $(\mathcal{P},\mathcal{M},p)$ if
$[k|M]\simeq [k'|M']$ implies $\xi(k|M,\lambda)=\xi(k'|M',\lambda)$ for all
$\lambda\in \Lambda$.
\end{quote}
Here, 
$\xi(k|M,\cdot)\in[0,1]$,
so that outcome determinism is not assumed.

\subsection{Justifying outcome determinism for perfectly predictable measurements}
Outcome determinism can, however, be justified sometimes
if one
assumes a notion of noncontextuality for {\em preparations}~\cite{genNC}.
First, a definition: $P$ and $P'$ are said to be operationally equivalent,
denoted $P\simeq P'$, if for every measurement event $[k|M]$, $P$ assigns the
same probability to this event as $P'$ does, that is,
\beq
p(k|M,P)=p(k|M,P') {\rm\;\; for\; all\;\;  } k\in \mathcal{K}_M, {\rm\;\; for\;
all\;\;  } M\in\mathcal{M}.
\eeq
A preparation-noncontextual ontological model is then defined as
follows:
\begin{quote}
\emph{Preparation noncontextuality} is satisfied by an ontological model
$(\Lambda,\mu,\xi)$ of an operational theory $(\mathcal{P},\mathcal{M},p)$ if
$P\simeq P'$ implies $\mu(\lambda|P)=\mu(\lambda|P')$  for all
$\lambda\in\Lambda$.
\end{quote}
Insofar as both measurement and preparation noncontextuality 
are inferences from operational equivalence 
to ontological equivalence, it is most natural to
assume {\em both}, that is, to assume {\em universal noncontextuality}.

As we showed in Chapter 1 (based on Ref.~\cite{genNC}), in a preparation-noncontextual model of quantum theory,
all projective measurements must be represented outcome-deterministically. 
Here, we provide a version of this argument for the 18 ray construction.

Suppose that one has experimentally identified thirty-six preparation procedures
organized into nine ensembles of four each, $\{ P_{i,k}: i \in \{ 1,\dots, 9\},
k\in \{1,\dots,4\}\}$, such that for all $i$, measurement $M_i$ on preparation
$P_{i,k}$ yields the $k$th outcome with certainty, 
\beq
\forall i , \forall k : p(k|M_i, P_{i,k})= 1.
\label{eq:perfectpredictability}
\eeq
We call this property {\em perfect predictability}.
In quantum theory, it suffices to let $P_{i,k}$ be the preparation associated
with the pure state corresponding to the $k$th element of the $i$th measurement
basis. 

\begin{figure}
 \includegraphics[scale=0.7]{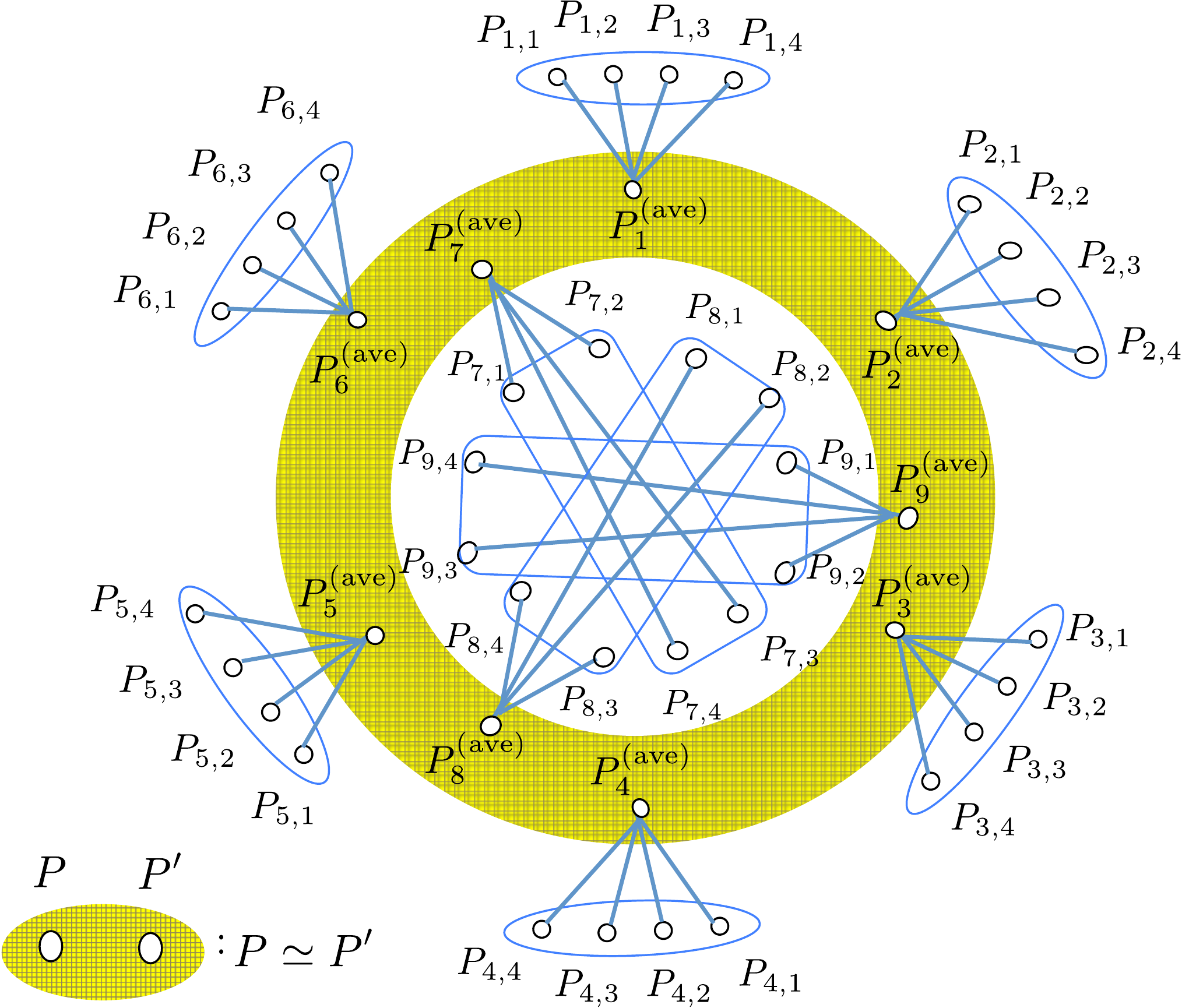}
 \caption{36 preparation procedures,
nine ensembles of four each.
A node 
connected to the elements of an
ensemble represents 
the effective preparation procedure 
achieved by sampling uniformly from
the ensemble. 
}
\label{prepequivs}
\end{figure}

Define the effective preparation $P_i^{\rm (ave)}$ as the procedure obtained by
sampling $k$ uniformly at random and then implementing $P_{i,k}$.
Suppose 
one has experimentally verified the operational 
equivalences (Fig.~\ref{prepequivs})
\beq
P_i^{\rm (ave)} \simeq P_{i'}^{\rm (ave)} {\rm \;\; for\; all \;}
i,i'\in\{1,\dots,9\}.
\label{eq:prepequivs}
\eeq
They hold 
in our quantum example because each
$P_i^{\rm (ave)}$
corresponds to the completely mixed state.

Given Eq.~\eqref{eq:prepequivs} and the assumption of preparation
noncontextuality, 
there is a single distribution over $\Lambda$,
denoted $\nu(\lambda)$, such that 
\beq
\mu(\lambda| P_i^{\rm (ave)})= \nu(\lambda)  {\rm \;\; for\; all \;} i
\in\{1,\dots,9\}.
\label{eq:nu}
\eeq
Given the definition of $P_i^{\rm (ave)}$, it follows that 
\beq
\frac{1}{4} \sum_k \mu(\lambda| P_{i,k})= \nu(\lambda)  {\rm \;\; for\; all \;}
i \in\{1,\dots,9\}.
\label{eq:nu2}
\eeq
Furthermore, recalling Eq.~\eqref{empirical}, for the ontological model to
reproduce Eq.~\eqref{eq:perfectpredictability}, we must have 
\beq
\forall i, \forall k: \sum_\lambda \xi(k|M_i,\lambda) \mu(\lambda| P_{i,k})= 1.
\label{eq:pp2}
\eeq
Because every $\lambda$ in the support of $\nu(\lambda)$ appears in the support
of $\mu(\lambda| P_{i,k})$ for some $k$, it follows that if $\xi(k|M_i,\lambda)$
had an indeterministic response
on any such $\lambda$, we would have a contradiction with Eq.~\eqref{eq:pp2}. 
Consequently, for all $i$ and $k$, 
the measurement event $[k|M_i]$ 
must be outcome-deterministic for all $\lambda$ in the support of
$\nu(\lambda)$.  

To summarize then, if one has experimentally verified the operational
equivalences depicted in Figs.~\ref{mmtequivs}(a) and \ref{prepequivs} and the
measurement statistics described in Eq.~\eqref{eq:perfectpredictability}, then
universal noncontextuality implies that the value assignments to measurement
events should be deterministic and noncontextual, hence KS-noncontextual, and we
obtain a contradiction\footnote{Here and elsewhere in this thesis, ``$\bigwedge$'' denotes a logical conjunction or ``AND''.}:
\beqa
&\textrm{universal  noncontextuality} \bigwedge \textrm{operational
equivalences}\nonumber\\
&\bigwedge 
\textrm{perfect predictability}
\Rightarrow \textrm{contradiction}.
\label{inference}
\eeqa

\subsection{Contending with imperfect predictability in real experiments}
In real experiments, the ideal of 
perfect predictability
described by
Eq.~\eqref{eq:perfectpredictability} is never achieved, so we cannot derive a
contradiction from it. However, Eq.~\eqref{inference} is logically equivalent
to the following inference:
\beqa
&\textrm{universal  noncontextuality} \bigwedge \textrm{operational
equivalences}\nonumber\\
&\Rightarrow
\textrm{failure of perfect predictability}.
\label{inference2}
\eeqa
That is, the degree of predictability,
averaged over all $i$ and $k$, will
necessarily be bounded away from 1.
It is this bound that defines
the operational noncontextuality inequality.
For the 18 ray example, we prove that
\begin{align}
A \equiv  \frac{1}{36} \sum_{i=1}^{9} \sum_{k=1}^{4} p(k|M_{i},P_{i,k}) \le
\frac{5}{6}.
\label{main}
\end{align}
We now outline how the bound in Eq.~\eqref{main} is obtained. First, we use
Eq.~\eqref{empirical} to express $A$ in terms of $\xi(k|M_{i},\lambda)$ and
$\mu(\lambda|P_{i,k})$.  
Defining the {\em max-predictability} of a measurement $M$ given an ontic state
$\lambda$ by
 \beq
\zeta(M,\lambda)\equiv \max_{k'\in\mathcal{K}_M} \xi(k'|M,\lambda),
\label{eq:eta}
\eeq
we deduce that
\begin{eqnarray}
A &\le&\sum_{\lambda} \left( \frac{1}{9} \sum_{i}\zeta(M_{i},\lambda)
\left[ \frac{1}{4} \sum_{k}  \mu(\lambda|P_{i,k}) \right] \right)\nonumber \\
&=&  \sum_{\lambda} \left( \frac{1}{9} \sum_{i}\zeta(M_{i},\lambda) \right)
\nu(\lambda)\nonumber \\
&\le& \max_{\lambda} \left( \frac{1}{9} \sum_{i} \zeta(M_{i},\lambda)  \right),
\end{eqnarray}
where we have used Eq.~\eqref{eq:nu2}.

The measurements can
have indeterministic responses,
$\xi(k|M,\cdot):\Lambda \rightarrow [0,1]$, but measurement noncontextuality
implies that
$\xi(k|M_i,\lambda)=\xi(k'|M_{i'},\lambda)$ for the
operationally equivalent pairs $\big\{[k|M_i], [k'|M_{i'}]\big\}$.
There are many such assignments.  Every unit-trace positive operator, for
instance, specifies an indeterministic noncontextual assignment via the Born
rule, and there are other, nonquantum\footnote{In the sense that it is impossible to associate projectors to the nodes of the hypergraph
such that a quantum state leads to the depicted probability assignments via Born rule. This can be seen by noting that 
the assignments in Fig.~\ref{highpredictability} would require three pairwise orthogonal projectors such that each of them is 
assigned a value $1/2$ by a quantum state. Since three pairwise orthogonal projectors add up to no more than the identity, there is no way
a quantum state can assign probabilities to them that add up to more than $1$, such as the $3/2$ required by the given assignment.} assignments as well, such as the one
depicted in Fig.~\ref{highpredictability}.

\begin{figure}
\includegraphics[scale=0.5]{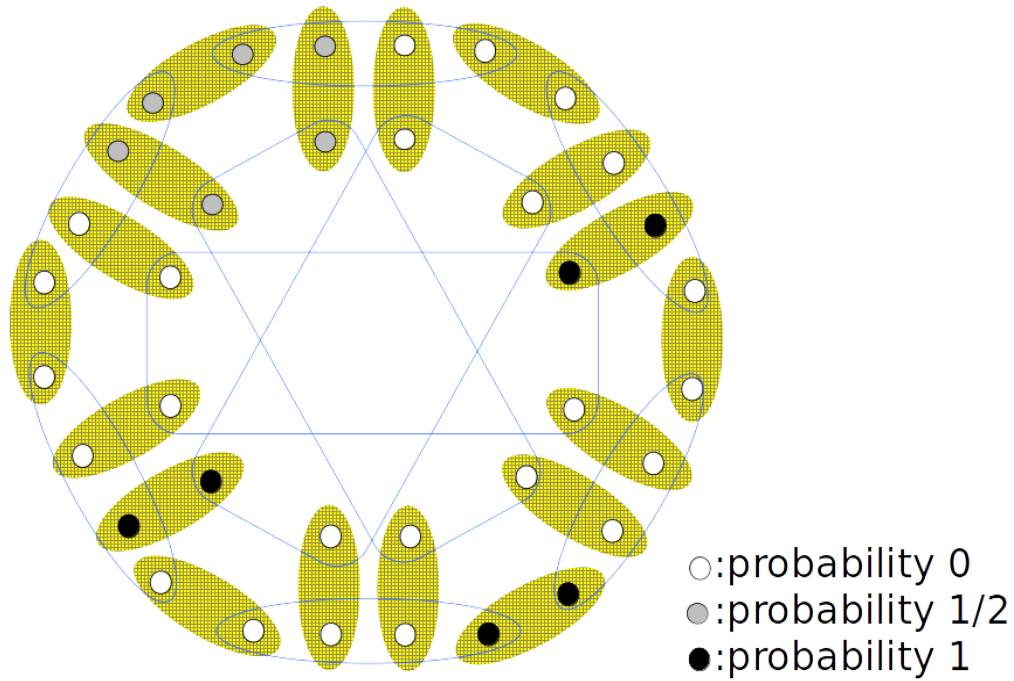}
\caption{A noncontextual outcome-indeterministic assignment}
\label{highpredictability}
\end{figure}

Consider the average max-predictability achieved by the assignment of
Fig~\ref{highpredictability}.  Here, six measurements have max-predictability
of 1, three of $\frac{1}{2}$, implying that
$\frac{1}{9} \sum_{i} \zeta(M_{i},\lambda) = \frac{1}{9} ( 6 \cdot 1 + 3 \cdot
\frac{1}{2}) = \frac{5}{6}$.
As we demonstrate in the formal proof of our noncontextuality inequality in the next section, no ontic state can do better,
so that
$\max_{\lambda} \left( \frac{1}{9} \sum_{i} \zeta(M_{i},\lambda)  \right) \le
\frac{5}{6}$, thereby establishing the noncontextual bound on $A$.
The logical limit for the value of $A$ is $1$, so the noncontextual bound of
$\frac{5}{6}$ is nontrivial. The quantum realization of the 18 ray
construction
achieves $A=1$.

Following our formal proof of the inequality in the next section, we discuss the noise tolerance of our inequality, 
and we criticise a previous proposal for a noncontextuality inequality \cite{Cabelloexpt} on two main grounds: 
(i) that logic alone rules out the possiblity of satisfying it, and (ii) that all operational theories supporting the measurement 
equivalences of Fig.~\ref{mmtequivs}(a) necessarily violate it, regardless of whether or not they admit of a noncontextual model.

Although we have used the  proof of Ref.~\cite{Cab1} to illustrate our approach to deriving robust 
noncontextuality inequalities, our scheme 
can turn any proof of the Kochen-Specker theorem based on a KS-uncolourable set into an experimentally testable
inequality, as we will show in the following section.


\section{Proof of the inequality}

Our noncontextuality inequality inspired by the proof of the Kochen-Specker theorem for the 18 ray KS-uncolourable set of 
Figure \ref{CEGAhypergraph} can be summmarized by the following theorem:
\begin{theorem}
Consider an operational theory $(\mathcal{P},\mathcal{M},p)$.  Let $\{ M_i \in \mathcal{M} :  i \in\{ 1,\dots,9\}\}$ be nine four-outcome measurements.
Let $[k|M_{i}]$ denote the $k$th outcome of the $i$th measurement, where $k \in \{1,\dots,4\}$. 
Let  $\{ P_{i,k} \in \mathcal{P}: i \in\{ 1,\dots,9\}$, $ k \in \{1,2,3,4\}\}$ be thirty-six preparation procedures, organized into nine sets of four. Let $P^{\rm (ave)}_i\in \mathcal{P}$ be 
the preparation procedure obtained by sampling $k \in \{1,2,3,4\}$ uniformly at random and implementing $P_{i,k}$.

Suppose that one has experimentally verified the operational preparation equivalences depicted in Fig.~\ref{prepequivs}, namely,
 \begin{align}\label{eq:optlequivP1to9}
 P^{\rm (ave)}_1 \simeq P^{\rm (ave)}_2 \simeq \dots \simeq P^{\rm (ave)}_9,
\end{align} 
and the operational equivalences depicted in Fig.~\ref{mmtequivs}(a), namely,
 \begin{align}\label{eq:optlequivM18ray}
[k| M_{i}] \simeq [k'|M_{i'}],
\end{align} 
for the eighteen pairs specifed therein. 

If one assumes that the operational theory admits of a universally noncontextual ontological model,
that is, one which is both measurement-noncontextual and preparation-noncontextual, then the following inequality on operational probabilities holds:
\begin{align}
A \equiv  \frac{1}{36} \sum_{i=1}^{9} \sum_{k=1}^{4} p(k|M_{i},P_{i,k}) \le \frac{5}{6}.
\end{align}
\end{theorem}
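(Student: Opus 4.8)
The plan is to feed the empirical-adequacy relation \eqref{empirical} into the definition of $A$ and then discharge the two noncontextuality hypotheses in sequence, turning the bound on operational probabilities into a combinatorial bound on response functions. First I would write
\begin{equation}
A=\frac{1}{36}\sum_{i=1}^{9}\sum_{k=1}^{4}\sum_{\lambda\in\Lambda}\xi(k|M_i,\lambda)\,\mu(\lambda|P_{i,k}),
\end{equation}
and bound each response function by the max-predictability $\zeta(M_i,\lambda)=\max_{k'}\xi(k'|M_i,\lambda)$ of \eqref{eq:eta}, which does not depend on $k$. Factoring $\frac{1}{36}=\frac{1}{9}\cdot\frac{1}{4}$ and carrying the $k$-sum onto the preparations gives
\begin{equation}
A\le\sum_{\lambda\in\Lambda}\frac{1}{9}\sum_{i}\zeta(M_i,\lambda)\left[\frac{1}{4}\sum_{k}\mu(\lambda|P_{i,k})\right].
\end{equation}

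Next I would invoke preparation noncontextuality. Because $P^{\rm (ave)}_i$ is the uniform mixture of the $P_{i,k}$, the model represents it by $\mu(\lambda|P^{\rm (ave)}_i)=\frac14\sum_k\mu(\lambda|P_{i,k})$; the verified equivalences \eqref{eq:optlequivP1to9} together with preparation noncontextuality then force this to be a single $i$-independent distribution $\nu(\lambda)$, exactly as in \eqref{eq:nu2}. Substituting collapses the bracket and yields
\begin{equation}
A\le\sum_{\lambda\in\Lambda}\nu(\lambda)\left[\frac{1}{9}\sum_i\zeta(M_i,\lambda)\right]\le\max_{\lambda}\frac{1}{9}\sum_i\zeta(M_i,\lambda),
\end{equation}
since $\nu$ is normalized.

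It then remains to prove that every ontic state satisfies $\frac{1}{9}\sum_i\zeta(M_i,\lambda)\le\frac56$, which is the crux. Here I would use measurement noncontextuality: the eighteen operational equivalences \eqref{eq:optlequivM18ray} identify the two events associated with each ray, so specifying $\lambda$ reduces to assigning a single value $v_r\in[0,1]$ to each of the eighteen rays (the common value of $\xi$ on the two events it labels), subject only to the nine normalization constraints $\sum_{r\in M_i}v_r=1$, with each ray lying in exactly two bases. The quantity to bound is $\frac19\sum_i\max_{r\in M_i}v_r$, a convex function of $v$, so its maximum over this assignment polytope is attained at an extreme point. The configuration of Fig.~\ref{highpredictability}—three rays set to $1$, forming a matching that makes six bases perfectly predictable, with the residual weight split as $\tfrac12,\tfrac12$ among the three leftover bases—achieves the value $\frac19\left(6\cdot 1+3\cdot\tfrac12\right)=\frac56$.

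The main obstacle is proving that no assignment exceeds $\frac56$, and this is precisely where the KS-uncolourability of the hypergraph of Fig.~\ref{CEGAhypergraph} enters in quantitative, noise-robust form. I would cast it as a finite linear program: for each designation of one ``maximal'' outcome per basis, maximize $\sum_i v_{f_i}$ over the normalization polytope augmented with the constraints $v_{f_i}\ge v_r$ for $r\in M_i$, and then maximize over the finitely many designations. The nonexistence of a $\{0,1\}$ KS-colouring guarantees that no designation can drive all nine maxima to $1$, and I would close the gap either by exhibiting a dual-feasible certificate of value $\frac56$ or by exploiting the symmetry of the eighteen-ray construction to prune the enumeration. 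Combining $\max_{\lambda}\frac19\sum_i\zeta(M_i,\lambda)\le\frac56$ with the chain above yields $A\le\frac56$; the quantum realization attaining $A=1$ then exhibits a genuine violation.
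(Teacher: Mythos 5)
Your proposal follows the paper's proof essentially step for step: empirical adequacy, the max-predictability bound $\zeta(M_i,\lambda)$, preparation noncontextuality collapsing $\tfrac14\sum_k\mu(\lambda|P_{i,k})$ to a single $\nu(\lambda)$, the reduction to $\max_\lambda\tfrac19\sum_i\zeta(M_i,\lambda)$, and then measurement noncontextuality reducing the problem to a convex function over the polytope of probabilistic assignments to the eighteen equivalence classes, maximized at a vertex. The only step you leave unexecuted -- actually certifying that no assignment exceeds $\tfrac56$ -- is precisely the step the paper also delegates to a finite computation (enumeration of the polytope's 146 vertices in Sage), and your argmax-designation LP reformulation is a valid equivalent finite computation, so the proposal is correct in approach provided that computation (or a dual certificate) is actually carried out.
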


\begin{figure}
 \includegraphics[scale=0.8]{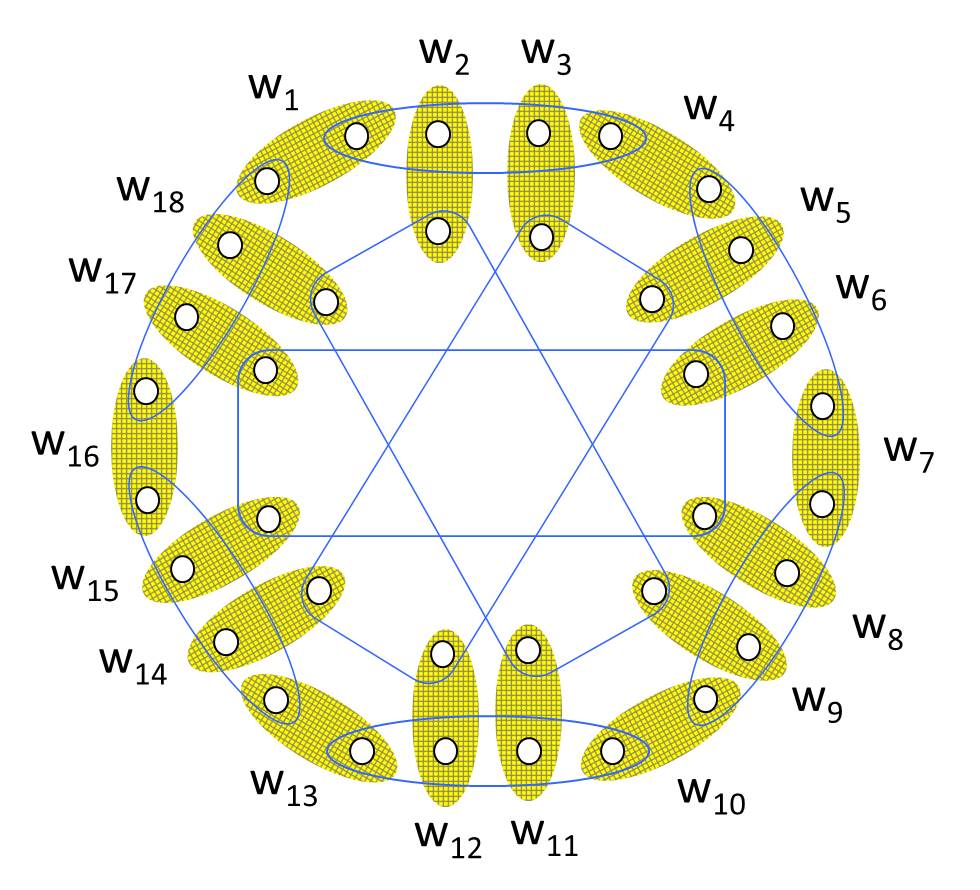}
 \caption{
 A choice of labelling of the eighteen equivalence classes of measurement events.  Here, $w_{\kappa}$ denotes the probability assigned to the equivalence class labelled by $\kappa$ in a 
 noncontextual outcome-indeterministic ontological model.}
\label{legend}
\end{figure}

We now provide the proof.  For clarity, we expand on some of the steps presented in previous section in our proof sketch. 

The quantity $A$ can be expressed in terms of the distributions and response functions of the ontological model, using Eq.~\eqref{empirical}, as
\begin{align}
A = \frac{1}{36}\sum_{i=1}^{9}  \sum_{k=1}^{4} \sum_{\lambda}\xi(k|M_{i},\lambda)\mu(\lambda|P_{i,k}).
\end{align}
Using the definition of the max-probability $\zeta(M_{i},\lambda)$, given in Eq.~\eqref{eq:eta}, we have
\begin{align}
A \le \frac{1}{9}\sum_{i=1}^{9}  \sum_{\lambda}\zeta(M_{i},\lambda) \left( \frac{1}{4}\sum_{k=1}^{4} \mu(\lambda|P_{i,k}) \right).
\end{align}

Assuming that one experimentally verifies the operational preparation equivalences of Eq.~\eqref{eq:optlequivP1to9}, the assumption of preparation noncontextuality implies that
\beq
\mu(\lambda| P_1^{\rm (ave)})= \mu(\lambda| P_{2}^{\rm (ave)}) = \cdots = \mu(\lambda| P_{9}^{\rm (ave)}).
\eeq
It follows that there exists a single distribution, which we denote $\nu(\lambda)$, such that
\beq
\mu(\lambda| P_i^{\rm (ave)})= \nu(\lambda)  {\rm \;\; for\; all \;} i \in\{1,\dots,9\}.
\label{muPave}
\eeq
Recall that $P_i^{\rm (ave)}$ is the preparation procedure that samples $k$ uniformly from $\{1,2,3,4\}$ and implements $P_{i,k}$.  Given that the probability of the system being in a given 
ontic state $\lambda$ given the preparation $P_{i,k}$ is $\mu(\lambda|P_{i,k})$, and given that the probability of $P_{i,k}$ being implemented is $\frac{1}{4}$ for each value of $k$, it follows
that the probability of the system being in a given ontic state $\lambda$ given the preparation $P_{i}^{\rm (ave)}$ is $\mu(\lambda|P_i^{\rm (ave)}) = \frac{1}{4} \sum_{\lambda} \mu(\lambda|P_{i,k})$.
Combining this with Eq.~\eqref{muPave}, we conclude that
\beq
\frac{1}{4} \sum_{\lambda} \mu(\lambda|P_{i,k}) = \nu(\lambda)  {\rm \;\; for\; all \;} i \in\{1,\dots,9\},
\eeq
and therefore that
\begin{align}
A \le \frac{1}{9} \sum_{\lambda}\sum_{i=1}^{9} \zeta(M_{i},\lambda)\nu(\lambda).
\end{align}
This in turn implies
\begin{align}\label{eq:aa}
A \le \max_{\lambda}\frac{1}{9} \sum_{i=1}^{9} \zeta(M_{i},\lambda).
\end{align}

Assuming that one experimentally verifies the operational measurement equivalences of Eq.~\eqref{eq:optlequivM18ray}, the assumption of measurement noncontextuality implies that
\begin{align}
\xi(k|M_i,\lambda)=\xi(k'|M_{i'},\lambda),
\end{align}
for the eighteen pairs of operationally equivalent measurement events $([k|M_i], [k'|M_{i'}])$
specifed in Fig.~\ref{mmtequivs}(a). 

It is useful to simplify the notation at this stage.  We introduce the variable $\kappa \in \{1,\dots, 18\}$ to range over
the eighteen operational equivalence classes of measurement events.
We introduce the shorthand notation
\begin{equation}
w_{\kappa}\equiv \xi(k|M_i,\lambda)=\xi(k'|M_{i'},\lambda),
\end{equation}
for the probability assigned to the $\kappa$th equivalence class, where the dependence on $\lambda$ is left implicit. The variable $\kappa$ enumerates the equivalence classes 
in Fig.~\ref{mmtequivs}(a), starting from 
$[1|M_1]$ and proceeding clockwise around the hypergraph, as depicted in Fig.~\ref{legend}.

In this notation, the constraint that each response function is probability-valued, $\xi(k|M_i.\lambda) \in [0,1]$, is simply
\beq
0 \le w_{\kappa} \le 1, \;\;\forall \kappa \in\{1,\dots,18\},
\label{eq:positivity}
\eeq
while the constraint that the set of response functions for each measurement sum to 1, $\sum_{k=1}^4 \xi(k|M_i,\lambda)=1$, can be captured by the matrix equality  
\beq\label{eq:bb}
Z \vec{w} =  \vec{u}
\eeq
where $\vec{w}\equiv (w_1,\dots,w_{18})^T$, $\vec{u}\equiv (1,1,1,1,1,1,1,1,1)^T$, and
\beq
Z \equiv \left(
\begin{array}{cccccccccccccccccc}
1&1&1&1&0&0&0&0&0&0&0&0&0&0&0&0&0&0\\
0&0&0&1&1&1&1&0&0&0&0&0&0&0&0&0&0&0\\
0&0&0&0&0&0&1&1&1&1&0&0&0&0&0&0&0&0\\
0&0&0&0&0&0&0&0&0&1&1&1&1&0&0&0&0&0\\
0&0&0&0&0&0&0&0&0&0&0&0&1&1&1&1&0&0\\
1&0&0&0&0&0&0&0&0&0&0&0&0&0&0&1&1&1\\
0&1&0&0&0&0&0&0&1&0&1&0&0&0&0&0&0&1\\
0&0&1&0&1&0&0&0&0&0&0&1&0&1&0&0&0&0\\
0&0&0&0&0&1&0&1&0&0&0&0&0&0&1&0&1&0
\end{array}
\right).
\eeq
Finally, we can express the quantity to be maximized as
\beq
\frac{1}{9} \sum_{i=1}^{9} \zeta(M_{i},\lambda) = \frac{1}{9} \sum_{i=1}^{9}  \max_{\kappa : Z_{i\kappa}=1} w_{\kappa},
\label{eq:cc}
\eeq
or, more explicitly, as
\begin{align}
&\frac{1}{9}\sum_{i=1}^{9} \zeta(M_{i},\lambda)\nonumber\\
&= \frac{1}{9} [ \max\{w_1,w_2,w_3,w_4\}
+ \max\{w_4,w_5,w_6,w_7\}\nonumber\\
&+\max\{w_7,w_8,w_9,w_{10}\}
+ \max\{w_{10},w_{11},w_{12},w_{13}\}\nonumber\\
&+ \max\{w_{13},w_{14},w_{15},w_{16}\}
+\max\{w_{16},w_{17},w_{18},w_1\}\nonumber\\
&+ \max\{w_{18},w_2,w_9,w_{11}\}
+ \max\{w_3,w_5,w_{12},w_{14}\}\nonumber\\
&+\max\{w_6,w_8,w_{15},w_{17}\} ].
\end{align}
The matrix equality of Eq.~\eqref{eq:bb} implies that there are only nine independent variables in the set $\{ w_1, w_2,\dots, w_{18} \}$ and that these satisfy linear inequalities.
The space of possibilities for the vector $\vec{w}$ therefore forms a nine-dimensional polytope in the hypercube described by Eq.~\eqref{eq:aa}.

The value of $\frac{1}{9} \sum_{i=1}^{9} \zeta(M_{i},\lambda)$ 
on any of the interior points of this polytope will be an average of its values at the vertices because it is a convex function of $\vec{w}$.  Therefore, to implement the maximization over 
$\lambda$, it suffices to maximize over the vertices of this polytope.

Using the numerical software Sage, in particular the Polyhedron class in SageMathCloud\cite{sagemath}, we were able to infer all 146 vertices of our 9-dimensional polytope from its 
characterization in terms of the linear inequalities obtained from Eqs.~\eqref{eq:positivity} and \eqref{eq:bb}. 
From this brute-force enumeration of all the vertices of the polytope, the maximum possible value of $\frac{1}{9} \sum_{i=1}^{9} \zeta(M_{i},\lambda)$ was found to be $\frac{5}{6}$. 
An example of a vertex achieving this value is $\vec{w}=(1,0,0,0,1,0,0,0,\tfrac{1}{2},\tfrac{1}{2},\tfrac{1}{2},0,0,0,1,0,0,0)^{\rm T}$, which is depicted in Fig.~\ref{highpredictability}. This concludes the proof.\proofend

Our proof technique can be adapted to derive a similar noncontextuality inequality correponding to any 
proof of the KS theorem based on the KS-uncolourability of a set of rays of Hilbert space. One begins by completing every set of orthogonal rays into a basis of the Hilbert space, and then 
forming the hypergraph depicting the orthogonality relations among these rays (the analogue of Fig.~\ref{CEGAhypergraph}).
One then forms the hypergraph depicting all of the measurements events, with one type of edge denoting which events correspond to the outcomes of a single measurement, and the other type of 
edge denoting when a set of measurement events are operationally equivalent (the analogue of Fig.~\ref{mmtequivs}(a)).
One then associates a set of preparations with every measurement in the hypergraph, one preparation for every outcome.  For each such set of preparations, we define the effective preparation 
that is the uniform mixture of the set's elements, and we presume that all of the effective preparations so defined are operationally equivalent (as is the case in quantum theory, where the 
effective 
preparation for every set corresponds to the completely mixed state). We consider the correlation between the measurement outcome and 
the choice of preparation in the set associated with that measurement, averaged over all measurements.  This average correlation is the quantity $A$ that appears on the left-hand side of the 
operational noncontextuality inequality.  

The KS-uncolourability of the hypergraph means that there are no noncontextual deterministic assignments
to the measurement events, hence the polytope of probabilistic assignments to the measurement events has no deterministic vertices either. Each vertex of this polytope, that is, each 
convexly-extremal probabilistic assignment, will necessarily yield an indeterministic assignment to some of the measurement events.  
Using the operational equivalences and the assumption of universal noncontextuality, one can infer from this 
that the average correlation $A$ is always bounded away from 1. 
For any KS-uncolourable hypergraph, a quantum realization would achieve the logical limit $A=1$ by construction, so the noncontextuality inequality we derive is necessarily violated by quantum 
theory in each case. The exact upper bound on $A$ will depend on the vertices of the polytope of measurement noncontextual
probability assignments possible on the KS-uncolourable hypergraph and will therefore vary from case to case (but will always be less than 1).

One can understand this violation as being due to the fact that assignments of density operators that are independent of the  preparation context can achieve higher predictability for the 
respective measurements than assignments of probability distributions over ontic states that are independent of the preparation context. This is the feature of quantum theory that allows it 
to maximally violate the 
noncontextual bound of $A\leq 5/6$.

\section{Vertices of the polytope}

In this section, we describe the vertices of the polytope of possible probabilistic assignments to the $18$ equivalence
classes of measurement events in the hypergraph of Fig.~\ref{legend}. These vertices correspond to the 
convexly-extremal probabilistic assignments. Since we use these vertices in proving our noncontextuality inequality, we 
discuss some of their characteristics below.

\subsection{Odd n-cycles in the hypergraph}
We begin by noting a 
property of the hypergraph of Fig.~\ref{legend}, namely, the presence
of odd $n$-cycles. An odd $n$-cycle of equivalence classes of measurement events is an ordered sequence of $n$ such classes wherein
adjacent elements in the sequence contain measurement events that are distinct outcomes of a single measurement. For instance, the
sequence of equivalence classes $(1,2,18)$ in Fig.~\ref{legend} forms a $3$-cycle because for the first adjacent pair in the 
sequence, $(1,2)$, there is a node within the class 1 and a node within the class 2 that appear together in the same edge, and 
similarly for the two other adjacent pairs, $(2,18)$, and $(18,1)$.  
 
We note the presence of the following odd
$n$-cycles in Fig.~\ref{legend} which will be of interest further on:
\begin{enumerate}
 \item $3$-cycle: $(1,2,18)$
 \item $5$-cycle: $(8,9,11,13,15)$
 \item $7$-cycle: $(1,3,5,7,10,11,18)$
 \item $9$-cycle: $(1,4,6,15,14,12,10,9,18)$
\end{enumerate}

\subsection{Quantum probabilistic assignments: the projective and the nonprojective cases}
A given probabilistic assignment $\{w_{\kappa}\}_{\kappa=1}^{18}$ is quantum-realizable 
if one can associate an effect $E_{\kappa}$---that is, a positive operator less than identity, $0\le E_{\kappa}\le I$---to each node $\kappa$,
such that each of the edges of the hypergraph correspond to positive operator-valued measures, 
$\sum_{\kappa \in {\rm edge}} E_{\kappa} = I$,  and one can find a unit-trace positive operator $\rho$ such that
$w_{\kappa} = {\rm tr} (E_{\kappa} \rho)$.  In other words a given probabilistic assignment to the measurements is 
quantum-realizable if there is a set of quantum measurements and a quantum state that yield this probabilistic assignment
via the Born rule. Such quantum realizability of probabilistic assignments when the POVMs are restricted to projective measurements
is the question of interest in the traditional Kochen-Specker type approach to contextuality, most recently exemplified in 
the work of Refs.~\cite{CSW} and \cite{AFLS}. In our approach, this question is not the one of interest, particularly because
even if we restrict ourselves to quantum theory\footnote{Which, of course, we do not {\em a priori} do: our treatment of contextuality 
does not presume the operational theory of interest is strictly quantum theory. It could be any generalized probabilistic
theory (GPT) \cite{barrettgpt}.}, we want to be able to deal with all POVMs -- projective or nonprojective -- on
an equal footing without having to artificially constrain ourselves to just projective measurements.
However, we will still study this question 
so that the distinction between the traditional approach \cite{AFLS,CSW} and our approach becomes clearer.

As we will show, none of the vertices of the polytope of probabilistic assignments (Fig.~\ref{legend}) 
on the KS-uncolourable hypergraph of 
Fig.~\ref{CEGAhypergraph}(a)
is quantum realizable via projectors. That is, considering every assignment
of projectors to nodes of the hypergraph (such that projectors in an edge of the hypergraph sum to identity) and every quantum state,
the probabilistic assignment that results is never a vertex of the polytope. Consider, for example, the extremal probabilistic 
assignment to the equivalence classes in Fig.~\ref{highpredictability}: this requires the assignment $w_1=w_2=w_{18}=1/2$ to the 
$3$-cycle $(1,2,18)$. If the experiment is modelled by the quantum formalism, then because joint measurability is represented by 
orthogonality of projectors, any odd $n$-cycle must be represented by a sequence of projectors that are orthogonal for contiguous 
pairs. Any odd $n$-cycle with such assignments ($w_{\kappa}=1/2$ for all $\kappa$ in the odd $n$-cycle) does not admit a realization with 
projectors as the following lemma shows:

\begin{lemma}\label{noquantumlemma}
Given a set of $n$ projectors (where $n$ is odd), $\{\Pi_1,\Pi_2,\dots,\Pi_n\}$, satisfying orthogonality between adjacent pairs, 
$\Pi_i \Pi_{i\oplus 1}=0$ for all $i\in\{1,2,\dots,n\}$
(we call this set of projectors an odd $n$-cycle),
there exists no quantum state $\rho$, $\rho\geq 0$ and ${\rm Tr} \rho=1$, such that
\begin{equation}\label{lemmaeqn}
\forall i\in\{1,2,\dots,n\}: \Tr(\rho\Pi_i)=\frac{1}{2}.
\end{equation}
\end{lemma}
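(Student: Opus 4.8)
The plan is to reduce the statement to a claim about individual pure states in the support of $\rho$ and then exploit the rigidity of the cyclic orthogonality relations. The starting observation is that for each adjacent pair the orthogonality $\Pi_i\Pi_{i\oplus 1}=0$ makes $Q_i\equiv\Pi_i+\Pi_{i\oplus 1}$ itself a projector (onto the direct sum of two mutually orthogonal ranges), so that $Q_i\leq I$. The hypothesised equalities $\Tr(\rho\Pi_i)=\tfrac12$ then immediately give $\Tr(\rho Q_i)=1$ for every $i$.

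The first key step is to convert each equality $\Tr(\rho Q_i)=1$ into a support constraint on $\rho$. Writing $\rho=\sum_j p_j|\phi_j\rangle\langle\phi_j|$ with $p_j>0$ and $\sum_j p_j=1$, the bound $\langle\phi_j|Q_i|\phi_j\rangle\leq 1$ together with $\Tr(\rho Q_i)=1$ forces $\langle\phi_j|Q_i|\phi_j\rangle=1$ for every $j$; since $Q_i$ is a projector, this is equivalent to $Q_i|\phi_j\rangle=|\phi_j\rangle$. Because this holds for all $i$ simultaneously, every unit vector $|\psi\rangle$ in the support of $\rho$ satisfies $(\Pi_i+\Pi_{i\oplus 1})|\psi\rangle=|\psi\rangle$ for all $i$.

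The second key step is a cyclic recursion. Subtracting the relation for index $i\oplus 1$ from that for index $i$ gives $\Pi_i|\psi\rangle=\Pi_{i\oplus 2}|\psi\rangle$. Since $n$ is odd, $\gcd(2,n)=1$, so stepping by two generates all of $\mathbb{Z}/n\mathbb{Z}$ and all the vectors $\Pi_i|\psi\rangle$ coincide with a single vector $|a\rangle$; then $|\psi\rangle=(\Pi_i+\Pi_{i\oplus 1})|\psi\rangle=2|a\rangle$ yields $\Pi_i|\psi\rangle=\tfrac12|\psi\rangle$ for every $i$. This makes $|\psi\rangle$ an eigenvector of the projector $\Pi_i$ with eigenvalue $\tfrac12$, which is impossible because a projector has spectrum $\{0,1\}$; equivalently, $\Pi_i|\psi\rangle=\Pi_i^2|\psi\rangle=\tfrac14|\psi\rangle$ would clash with $\Pi_i|\psi\rangle=\tfrac12|\psi\rangle$ unless $|\psi\rangle=0$. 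As $\rho$ is a state its support is nonempty, so such a unit vector exists and we obtain the desired contradiction.

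I expect the main obstacle to be the reduction from the mixed state $\rho$ to its pure-state support: one cannot infer $\langle\psi|\Pi_i|\psi\rangle=\tfrac12$ directly for individual support vectors from $\Tr(\rho\Pi_i)=\tfrac12$, so the argument must instead pass through the projector $Q_i$ and the extremality of the value $\Tr(\rho Q_i)=1$, which is precisely where the orthogonality of adjacent projectors does the essential work. Once this support constraint is secured, the oddness of $n$ and the eigenvalue obstruction for projectors close the argument with no further estimates.
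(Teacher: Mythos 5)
Your proof is correct, and it takes a somewhat different (and in fact more careful) route than the paper's. Both arguments begin from the observation that adjacent orthogonality makes $\Pi_i+\Pi_{i\oplus 1}$ a projector with $\Tr\left(\rho(\Pi_i+\Pi_{i\oplus 1})\right)=1$. The paper then argues at the level of operators: from $\Pi_{i\oplus 1}\leq \Pi_i^{\perp}$ and the equality of the traces $\Tr(\rho\Pi_{i\oplus1})=\Tr(\rho\Pi_i^{\perp})=\tfrac12$ it asserts the operator identity $\Pi_{i\oplus1}=\Pi_i^{\perp}$, and then chains these identities around the odd cycle to reach $\Pi_1=\Pi_1^{\perp}$. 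Strictly speaking that inference only yields $\Tr\left(\rho(\Pi_i^{\perp}-\Pi_{i\oplus1})\right)=0$, i.e.\ that the positive operator $\Pi_i^{\perp}-\Pi_{i\oplus1}$ annihilates the support of $\rho$, not that it vanishes identically; the paper's argument is really an argument on the support of $\rho$ written in operator language. You instead make this restriction explicit: the extremality of $\Tr(\rho Q_i)=1$ forces every vector in the support of $\rho$ to be fixed by each $Q_i$, the subtraction trick plus $\gcd(2,n)=1$ collapses all the $\Pi_i|\psi\rangle$ to a single vector, and the conclusion $\Pi_i|\psi\rangle=\tfrac12|\psi\rangle$ contradicts the $\{0,1\}$ spectrum of a projector. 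What your version buys is rigor at exactly the point where the paper is loose (the passage from trace equalities to operator equalities), at the cost of introducing the eigendecomposition of $\rho$; the contradiction you reach (a $\tfrac12$-eigenvector of a projector) is also cleaner than the paper's $\Pi_1=\Pi_1^{\perp}$. Both proofs use the oddness of $n$ in the same essential way, to close the cycle.
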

{\bf Proof.} This follows from noting that Eq.~(\ref{lemmaeqn}) implies $\forall i\in\{1,2,\dots,n\}$:
$\Tr \left(\rho(\Pi_i+\Pi_{i\oplus1})\right)=1$. 
Since $\Pi_i$ and $\Pi_{i\oplus1}$ are orthogonal, we have $\Pi_i+\Pi_{i\oplus1}\leq I$ for all $i$.
We also have $\Pi_i+\Pi_i^{\perp}=I$ for all $i$, where $\Pi_i^{\perp}$ is the projector onto the subspace orthogonal to
$\Pi_i$. Hence, we must have $\Pi_{i\oplus1}\leq \Pi_i^{\perp}$, but since we are given $\Tr(\rho\Pi_{i\oplus1})=\frac{1}{2}$
and can infer $\Tr(\rho\Pi_i^{\perp})=\frac{1}{2}$ (because $\Tr(\rho\Pi_i)=\frac{1}{2}$), we in fact have 
$\Pi_{i\oplus1}=\Pi_i^{\perp}$ for all $i$. When $n$ is odd, this means that the list of projectors reads as 
$\{\Pi_1=\Pi_n^{\perp},\Pi_2=\Pi_1^{\perp},\Pi_3=\Pi_2^{\perp}=\Pi_1,\Pi_4=\Pi_1^{\perp},\dots,\Pi_n=\Pi_1\}$,
leading to the contradiction $\Pi_1=\Pi_1^{\perp}$ (impossible because $\Pi_1$ is a projector). Hence such a valuation as in 
Eq.~\eqref{lemmaeqn} is impossible in quantum theory with projectors for any quantum state $\rho$.\footnote{Note that 
when $n$ is even, it is possible to have the list of projectors reading 
$\{\Pi_1=\Pi_n^{\perp},\Pi_2=\Pi_1^{\perp},\Pi_3=\Pi_2^{\perp}=\Pi_1,\Pi_4=\Pi_1^{\perp},\dots,\Pi_n=\Pi_1^{\perp}\}$
which does not lead to a contradiction. That even $n$ cannot lead to a contradiction is clear from a qubit example:
simply take $\Pi_1=|0\rangle\langle0|$ and $\Pi_1^{\perp}=|1\rangle\langle1|$ in the list of projectors with 
$\rho=I/2$.}\proofend

It follows from this lemma that $3$-cycle correlations of this form are not realizable via quantum projectors. These correspond
to the indeterministic extremal points of the polytope of correlations in Specker's scenario we discussed in Chapter 5 (see also Ref.~\cite{finegen}), 
in particular the ``overprotective seer'' correlations
of Ref.~\cite{LSW}. 

More generally, we can use Lemma \ref{noquantumlemma} to argue that none of the vertices admit a quantum realization with projectors.
This is because every vertex involves an odd $n$-cycle with probabilities $1/2$ assigned to nodes in that $n$-cycle.
On the other hand, any vertex with assignments $\{w_{\kappa}\}_{\kappa=1}^{18}$ can always be realized quantumly by assigning trivial effects of the form $w_{\kappa}I$ to each node labelled by $\kappa\in\{1,2,\dots,18\}$, where
$I$ denotes the identity operator. However, such trivial quantum realizations are indeed \emph{trivial}:
they do not admit contextuality in our approach because they are too noisy to ever achieve $A>5/6$, as we will show in the 
next section on the noise robustness of our inequality. 
In fact, something stronger can be said:
the trivial effect assignments, $\{w_{\kappa}I\}_{\kappa=1}^{18}$,
corresponding to \emph{any} point belonging to the polytope (including the vertices of the polytope)
achieve $A=1/4$, which is way below the noncontextual upper bound of $5/6$. This is easy to see: the statistics of $w_{\kappa}I$,
given the Born rule, is independent of what preparation (density operator) precedes such a ``measurement'' outcome, so we have
$p(k|M_i,P_{i,k})=\Tr (\rho_{i,k} w_{\kappa}I)=w_{\kappa}$, independent of the density operator $\rho_{i,k}$ that may be associated
with preparation $P_{i,k}$.
A trivial quantum realization, therefore, achieves $\sum_{k=1}^4p(k|M_i,P_{i,k})=1$ for each $i\in\{1,2,\dots,9\}$, and we have $A=\frac{1}{36}(9)=1/4$.
(Note that for a projective quantum realization we can have $\sum_{k=1}^4p(k|M_i,P_{i,k})=4$ for each $i$ and therefore $A=1$.)

Hence, \emph{all} such trivial quantum realizations (and not only those of the vertices) are on the same footing so far as our noncontextuality inequality is concerned.
The operational significance of this should be clear: in order to even approach the noncontextual upper bound of our inequality, 
one needs to have some projective component to the quantum measurements; they can't all be so noisy that they are rendered trivial.

{\em Thus our noncontextuality inequality clarifies the precise sense in which such assignments are trivial, something
which is never clear in the traditional Kochen-Specker approach to noncontextuality where the vertices of the polytope could be
naively deemed ``maximally (KS)contextual'', and the fact that they admit trivial quantum realizations of this sort 
is ignored by restricting attention to sharp (projective) measurements (see, for example, Refs.~\cite{CSW, AFLS}).}

\subsection{Classification of the vertices}
Classified in terms of their average predictability, $\frac{1}{9}\sum_{i=1}^9\zeta(M_i,\lambda)$, the $146$ vertices
of the $9$-dimensional polytope fall into four types, depicted in Table \ref{vertexdescription}.
Each row of the table corresponds to a particular type. The first column labels the type of vertex; the second column notes 
the average predictability of vertices of this type; the third column describes an example vertex of the given type; the fourth
column specifies, for the given example, the particular $n$-cycle that makes it impossible on account of Lemma \ref{noquantumlemma}
to build a (projective) quantum realization of that vertex, and notes the value of $n$ 
which makes a projective quantum realization impossible for all vertices of that type;
the last column notes the number of instances of the given type of vertex in the set of 146 vertices.
See Tables \ref{vertices1}, \ref{vertices2}, \ref{vertices3}, and \ref{vertices4} for the complete list of
all the four types of vertices.

\begin{table}[t]
\begin{center}
     \resizebox{\textwidth}{!}{\begin{tabular}{ | c || c | c | c !{\vrule width 1pt} c | c |}
     \hline
     Type of vertex&$\frac{1}{9}\sum_{i=1}^9\zeta(M_i,\lambda)$& Example vertex
     & $n$-cycle excluding projective realization (Lemma \ref{noquantumlemma}) & Number of vertices\\ \hline\hline
     $1$ & $\frac{1}{9}(6+3\frac{1}{2})=5/6$ & See figure \ref{type1} & $(1,2,18)$, $n=3$ & 24\\\hline
      $2$ & $\frac{1}{9}(4+5\frac{1}{2})=13/18$ & See figure \ref{type2}
      & $(8,9,11,13,15)$, $n=5$ & $36$ \\ \hline     

     $3$ & $\frac{1}{9}(2+7\frac{1}{2})=11/18$ & See figure \ref{type3}
     & $(1,3,5,7,10,11,18)$, $n=7$ & $36$ \\ \hline
     $4$ & $\frac{1}{9}(0+9\frac{1}{2})=1/2$ & See figure \ref{type4}
     & $(1,4,6,15,14,12,10,9,18)$, $n=9$ & $50$ \\ \hline
     \end{tabular}}
 \end{center}
 \caption{Classification of the $146$ vertices of the $9$-dimensional polytope of probabilistic assignments to nodes of the hypergraph of Fig.~\ref{legend}. 
}
 \label{vertexdescription}
 \end{table}
 
\begin{figure}[htb!]
\centering
\includegraphics[scale=0.4]{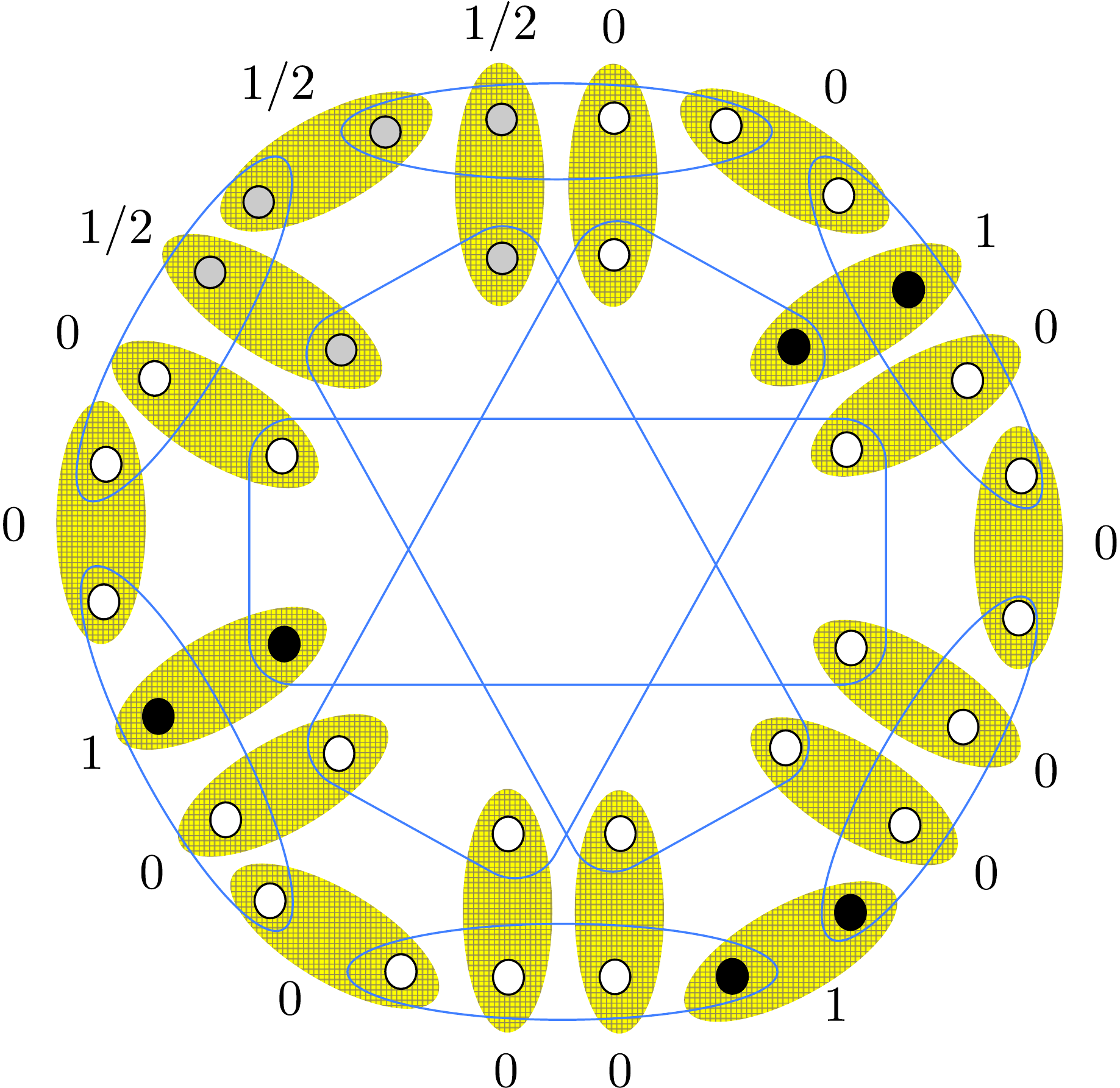}
 \caption{An example of a vertex of type $1$.}
\label{type1}
\end{figure}
\begin{figure}[htb!]
\centering
\includegraphics[scale=0.4]{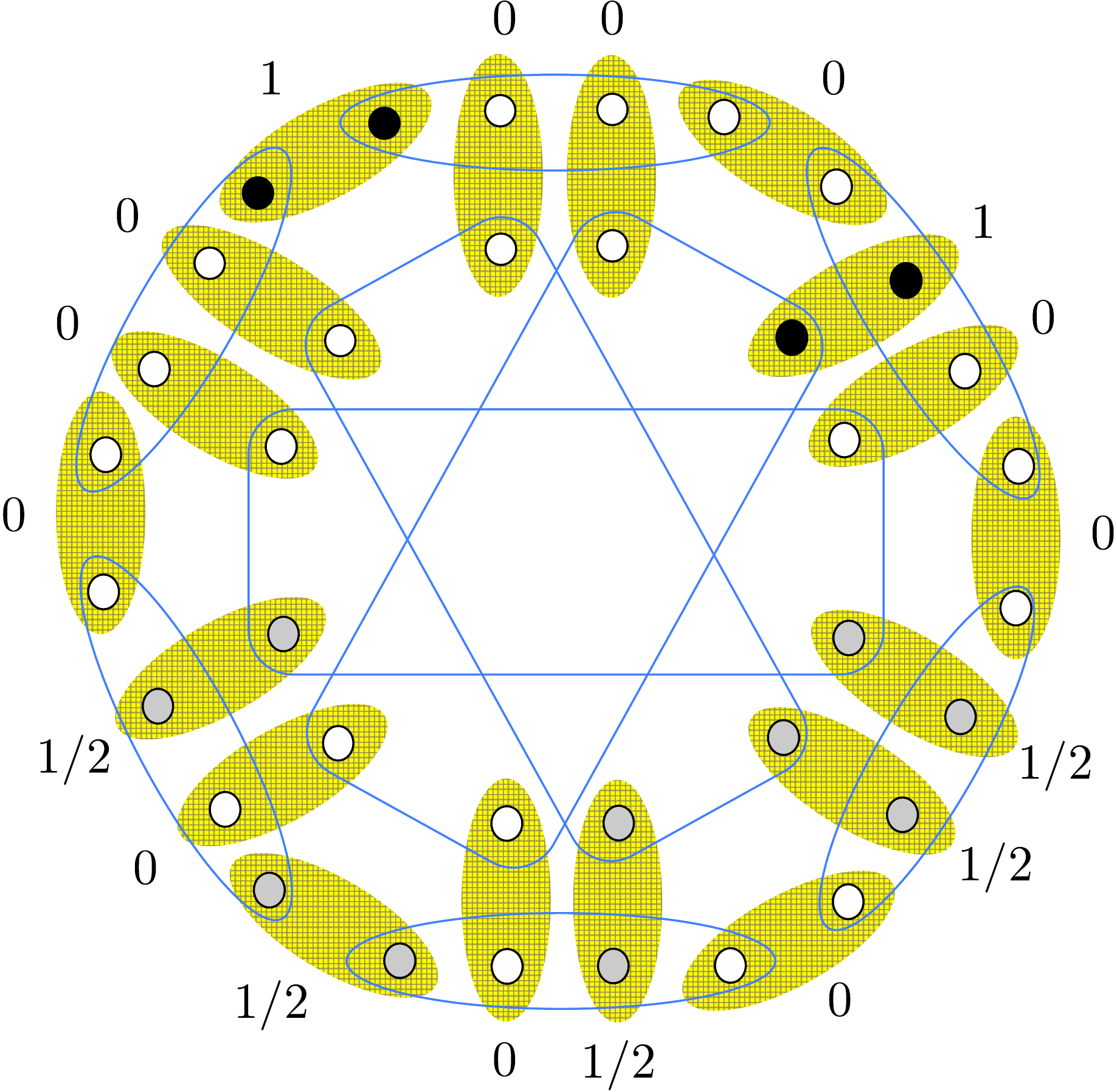}
 \caption{An example of a vertex of type $2$.}
\label{type2}
\end{figure}
\begin{figure}
 \centering
 \includegraphics[scale=0.4]{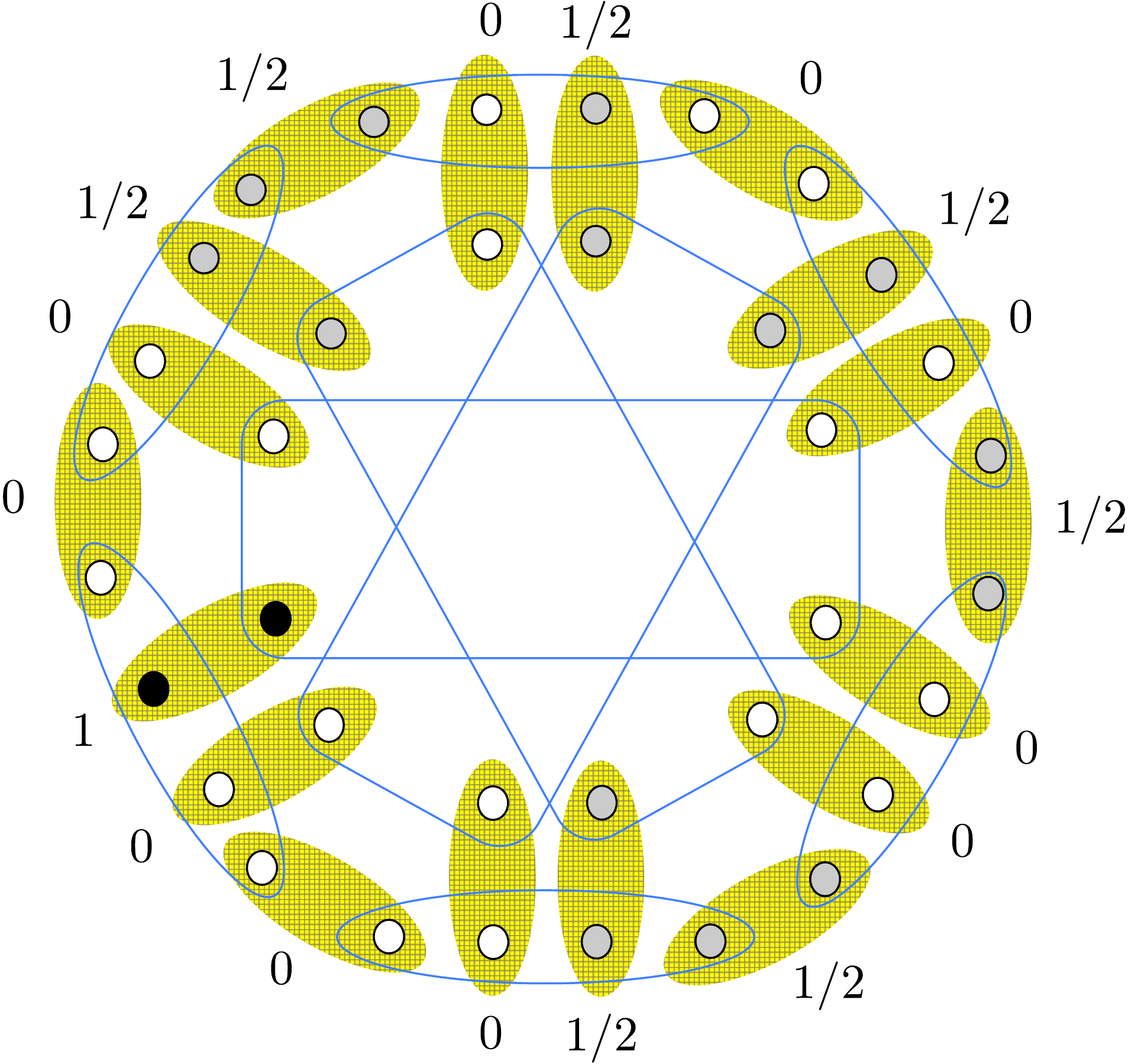}
 \caption{An example of a vertex of type $3$.}
\label{type3}
\end{figure}
\begin{figure}
 \centering
 \includegraphics[scale=0.4]{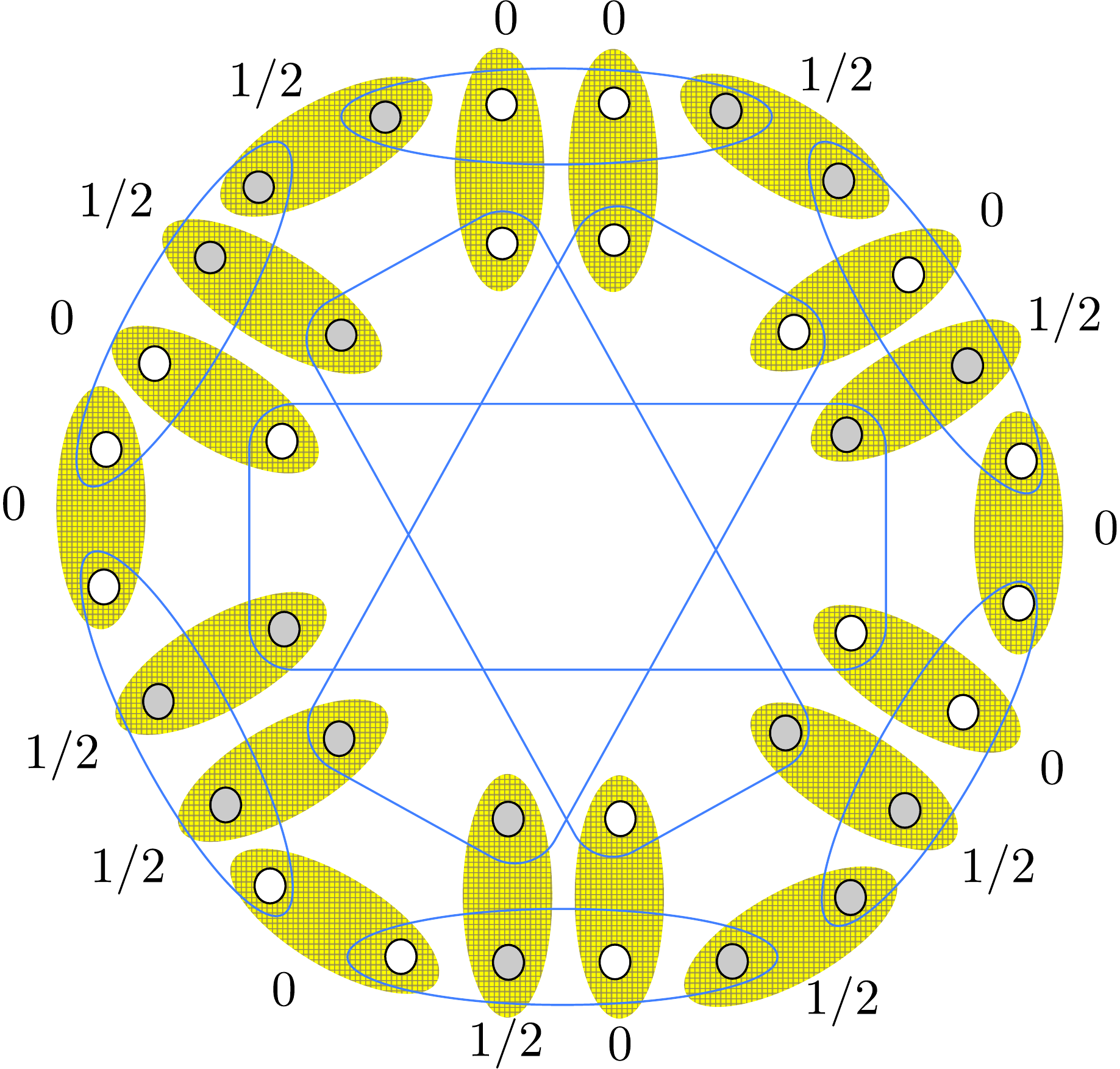}
 \caption{An example of a vertex of type $4$.}
\label{type4}
\end{figure}

\begin{table}
\begin{center}
\resizebox{\textwidth}{!}{\begin{tabular}{|c|c|c|c|c|c|c|c|c|c|c|c|c|c|c|c|c|c|}
\hline
$w_1$ & $w_2$ & $w_3$ & $w_4$ & $w_5$ & $w_6$ &$w_7$ & $w_8$ & $w_9$ &$w_{10}$ & $w_{11}$ & $w_{12}$ &$w_{13}$ & $w_{14}$ & $w_{15}$ &$w_{16}$ & $w_{17}$ & $w_{18}$\\ \hline
1/2 & 1/2 & 0 & 0 & 1 & 0 & 0 & 0 & 0 & 1 & 0 & 0 & 0 & 0 & 1 & 0 & 0 & 1/2\\ \hline
1/2 & 1/2 & 0 & 0 & 0 & 0 & 1 & 0 & 0 & 0 & 0 & 1 & 0 & 0 & 1 & 0 & 0 & 1/2\\ \hline
1/2 & 1/2 & 0 & 0 & 0 & 1 & 0 & 0 & 0 & 1 & 0 & 0 & 0 & 1 & 0 & 0 & 0 & 1/2\\ \hline
1/2 & 1/2 & 0 & 0 & 1 & 0 & 0 & 1 & 0 & 0 & 0 & 0 & 1 & 0 & 0 & 0 & 0 & 1/2\\ \hline
0 & 0 & 1/2 & 1/2 & 1/2 & 0 & 0 & 0 & 0 & 1 & 0 & 0 & 0 & 0 & 1 & 0 & 0 & 1\\ \hline
0 & 0 & 1/2 & 1/2 & 1/2 & 0 & 0 & 1 & 0 & 0 & 0 & 0 & 1 & 0 & 0 & 0 & 0 & 1\\ \hline
0 & 0 & 1/2 & 1/2 & 1/2 & 0 & 0 & 0 & 1 & 0 & 0 & 0 & 1 & 0 & 0 & 0 & 1 & 0\\ \hline
0 & 0 & 1/2 & 1/2 & 1/2 & 0 & 0 & 1 & 0 & 0 & 1 & 0 & 0 & 0 & 0 & 1 & 0 & 0\\ \hline
0 & 0 & 1 & 0 & 0 & 1/2 & 1/2 & 1/2 & 0 & 0 & 0 & 0 & 1 & 0 & 0 & 0 & 0 & 1\\ \hline
1 & 0 & 0 & 0 & 0 & 1/2 & 1/2 & 1/2 & 0 & 0 & 1 & 0 & 0 & 1 & 0 & 0 & 0 & 0\\ \hline
0 & 1 & 0 & 0 & 0 & 1/2 & 1/2 & 1/2 & 0 & 0 & 0 & 1 & 0 & 0 & 0 & 1 & 0 & 0\\ \hline
0 & 0 & 1 & 0 & 0 & 1/2 & 1/2 & 1/2 & 0 & 0 & 1 & 0 & 0 & 0 & 0 & 1 & 0 & 0\\ \hline
1 & 0 & 0 & 0 & 1 & 0 & 0 & 0 & 1/2 & 1/2 & 1/2 & 0 & 0 & 0 & 1 & 0 & 0 & 0\\ \hline
1 & 0 & 0 & 0 & 0 & 1 & 0 & 0 & 1/2 & 1/2 & 1/2 & 0 & 0 & 1 & 0 & 0 & 0 & 0\\ \hline
0 & 0 & 0 & 1 & 0 & 0 & 0 & 0 & 1/2 & 1/2 & 1/2 & 0 & 0 & 1 & 0 & 0 & 1 & 0\\ \hline
0 & 0 & 1 & 0 & 0 & 1 & 0 & 0 & 1/2 & 1/2 & 1/2 & 0 & 0 & 0 & 0 & 1 & 0 & 0\\ \hline
0 & 0 & 0 & 1 & 0 & 0 & 0 & 1 & 0 & 0 & 0 & 1/2 & 1/2 & 1/2 & 0 & 0 & 0 & 1\\ \hline
0 & 0 & 0 & 1 & 0 & 0 & 0 & 0 & 1 & 0 & 0 & 1/2 & 1/2 & 1/2 & 0 & 0 & 1 & 0\\ \hline
0 & 1 & 0 & 0 & 0 & 0 & 1 & 0 & 0 & 0 & 0 & 1/2 & 1/2 & 1/2 & 0 & 0 & 1 & 0\\ \hline
1 & 0 & 0 & 0 & 0 & 1 & 0 & 0 & 1 & 0 & 0 & 1/2 & 1/2 & 1/2 & 0 & 0 & 0 & 0\\ \hline
0 & 1 & 0 & 0 & 0 & 0 & 1 & 0 & 0 & 0 & 0 & 1 & 0 & 0 & 1/2 & 1/2 & 1/2 & 0\\ \hline
0 & 1 & 0 & 0 & 1 & 0 & 0 & 0 & 0 & 1 & 0 & 0 & 0 & 0 & 1/2 & 1/2 & 1/2 & 0\\ \hline
0 & 0 & 0 & 1 & 0 & 0 & 0 & 0 & 1 & 0 & 0 & 1 & 0 & 0 & 1/2 & 1/2 & 1/2 & 0\\ \hline
0 & 0 & 1 & 0 & 0 & 0 & 1 & 0 & 0 & 0 & 1 & 0 & 0 & 0 & 1/2 & 1/2 & 1/2 & 0\\ \hline
\end{tabular}}
\caption{The $24$ vertices of Type 1.}
\label{vertices1}
\end{center}
\end{table}

\begin{table}
\begin{center}
\resizebox{\textwidth}{!}{
\begin{tabular}{|c|c|c|c|c|c|c|c|c|c|c|c|c|c|c|c|c|c|}
 \hline
$w_1$ & $w_2$ & $w_3$ & $w_4$ & $w_5$ & $w_6$ &$w_7$ & $w_8$ & $w_9$ &$w_{10}$ & $w_{11}$ & $w_{12}$ &$w_{13}$ & $w_{14}$ & $w_{15}$ &$w_{16}$ & $w_{17}$ & $w_{18}$\\ \hline
1& 0& 0& 0& 1& 0& 0& 1/2& 1/2& 0& 1/2& 0& 1/2& 0& 1/2& 0& 0& 0\\ \hline
1& 0& 0& 0& 1/2& 0& 1/2& 0& 1/2& 0& 1/2& 1/2& 0& 0& 1& 0& 0& 0\\ \hline
1& 0& 0& 0& 1/2& 1/2& 0& 0& 1& 0& 0& 1/2& 1/2& 0& 1/2& 0& 0& 0\\ \hline
1& 0& 0& 0& 1/2& 0& 1/2& 1/2& 0& 0& 1& 0& 0& 1/2& 1/2& 0& 0& 0\\ \hline
1& 0& 0& 0& 1/2& 0& 1/2& 0& 1/2& 0& 1/2& 1/2& 0& 0& 1& 0& 0& 0\\ \hline

0& 1& 0& 0& 1/2& 0& 1/2& 0& 0& 1/2& 0& 0& 1/2& 1/2& 0& 0& 1& 0\\ \hline
0& 1& 0& 0& 1/2& 1/2& 0& 1/2& 0& 1/2& 0& 1/2& 0& 0& 0& 1& 0& 0\\ \hline
0& 1& 0& 0& 1/2& 1/2& 0& 0& 0& 1& 0& 0& 0& 1/2& 0& 1/2& 1/2& 0\\ \hline
0& 1& 0& 0& 1& 0& 0& 1/2& 0& 1/2& 0& 0& 1/2& 0& 0& 1/2& 1/2& 0\\ \hline

0& 0& 1& 0& 0& 1& 0& 0& 1/2& 1/2& 0& 0& 1/2& 0& 0& 1/2& 0& 1/2\\ \hline
0& 0& 1& 0& 0& 1/2& 1/2& 0& 1/2& 0& 0& 0& 1& 0& 0& 0& 1/2& 1/2\\ \hline
0& 0& 1& 0& 0& 1/2& 1/2& 0& 0& 1/2& 0& 0& 1/2& 0& 1/2& 0& 0& 1\\ \hline
0& 0& 1& 0& 0& 0& 1& 0& 0& 0& 1/2& 0& 1/2& 0& 1/2& 0& 1/2& 1/2\\ \hline

0& 0& 0& 1& 0& 0& 0& 1/2& 1/2& 0& 0& 1& 0& 0& 1/2& 1/2& 0& 1/2\\ \hline
0& 0& 0& 1& 0& 0& 0& 1/2& 0& 1/2& 1/2& 0& 0& 1& 0& 0& 1/2& 1/2\\ \hline
0& 0& 0& 1& 0& 0& 0& 1/2& 0& 1/2& 0& 1/2& 0& 1/2& 1/2& 0& 0& 1\\ \hline
0& 0& 0& 1& 0& 0& 0& 1& 0& 0& 1/2& 1/2& 0& 1/2& 0& 1/2& 0& 1/2\\ \hline

0& 0& 1/2& 1/2& 0& 0& 1/2& 0& 0& 1/2& 0& 1/2& 0& 0& 1& 0& 0& 1\\\hline
0& 0& 1/2& 1/2& 0& 1/2& 0& 0& 0& 1& 0& 0& 0& 1/2& 1/2& 0& 0& 1\\ \hline

0& 1/2& 0& 1/2& 0& 0& 1/2& 0& 0& 1/2& 1/2& 0& 0& 1& 0& 0& 1& 0\\ \hline
0& 1/2& 0& 1/2& 0& 1/2& 0& 0& 0& 1& 0& 0& 0& 1& 0& 0& 1/2& 1/2\\ \hline
0& 1/2& 0& 1/2& 0& 1/2& 0& 1/2& 1/2& 0& 0& 1& 0& 0& 0& 1& 0& 0\\ \hline
0& 1/2& 0& 1/2& 1/2& 0& 0& 1& 0& 0& 1/2& 1/2& 0& 0& 0& 1& 0& 0\\ \hline

0& 1/2& 1/2& 0& 0& 0& 1& 0& 0& 0& 1/2& 0& 1/2& 1/2& 0& 0& 1& 0\\ \hline
0& 1/2& 1/2& 0& 1/2& 0& 1/2& 0& 1/2& 0& 0& 0& 1& 0& 0& 0& 1& 0\\ \hline
0& 1/2& 1/2& 0& 0& 1& 0& 0& 0& 1& 0& 0& 0& 1/2& 0& 1/2& 0& 1/2\\ \hline
0& 1/2& 1/2& 0& 0& 1& 0& 0& 1/2& 1/2& 0& 1/2& 0& 0& 0& 1& 0& 0\\ \hline

1/2& 0& 0& 1/2& 0& 0& 1/2& 0& 1/2& 0& 0& 1& 0& 0& 1& 0& 0& 1/2\\ \hline
1/2& 0& 0& 1/2& 0& 0& 1/2& 1/2& 0& 0& 1& 0& 0& 1& 0& 0& 1/2& 0\\ \hline
1/2& 0& 0& 1/2& 0& 1/2& 0& 0& 1& 0& 0& 1& 0& 0& 1/2& 1/2& 0& 0\\ \hline
1/2& 0& 0& 1/2& 1/2& 0& 0& 1& 0& 0& 1& 0& 0& 1/2& 0& 1/2& 0& 0\\ \hline

1/2& 0& 1/2& 0& 1/2& 1/2& 0& 0& 1& 0& 0& 0& 1& 0& 0& 0& 1/2& 0\\ \hline
1/2& 0& 1/2& 0& 0& 1& 0& 0& 1& 0& 0& 1/2& 1/2& 0& 0& 1/2& 0& 0\\ \hline
1/2& 0& 1/2& 0& 0& 0& 1& 0& 0& 0& 1& 0& 0& 1/2& 1/2& 0& 1/2& 0\\ \hline
1/2& 0& 1/2& 0& 0& 0& 1& 0& 0& 0& 1/2& 1/2& 0& 0& 1& 0& 0& 1/2\\ \hline

1/2& 1/2& 0& 0& 1& 0& 0& 1/2& 1/2& 0& 0& 0& 1& 0& 0& 0& 1/2& 0\\ \hline
1/2& 1/2& 0& 0& 1& 0& 0& 1& 0& 0& 1/2& 0& 1/2& 0& 0& 1/2& 0& 0\\ \hline
\end{tabular}}
\caption{The $36$ vertices of Type 2.}
\label{vertices2}
\end{center}
\end{table}

\begin{table}
\begin{center}
\resizebox{\textwidth}{!}{\begin{tabular}{|c|c|c|c|c|c|c|c|c|c|c|c|c|c|c|c|c|c|}
\hline
$w_1$ & $w_2$ & $w_3$ & $w_4$ & $w_5$ & $w_6$ &$w_7$ & $w_8$ & $w_9$ &$w_{10}$ & $w_{11}$ & $w_{12}$ &$w_{13}$ & $w_{14}$ & $w_{15}$ &$w_{16}$ & $w_{17}$ & $w_{18}$\\ \hline
1/2& 0& 1/2& 0& 1/2& 0& 1/2& 0& 0& 1/2& 1/2& 0& 0& 0& 1& 0& 0& 1/2\\ \hline
1/2& 0& 1/2& 0& 1/2& 1/2& 0& 1/2& 1/2& 0& 0& 0& 1& 0& 0& 0& 0& 1/2\\ \hline
1/2& 0& 1/2& 0& 0& 1& 0& 0& 1/2& 1/2& 0& 0& 1/2& 1/2& 0& 0& 0& 1/2\\ \hline
1/2& 0& 1/2& 0& 1/2& 0& 1/2& 1/2& 0& 0& 1& 0& 0& 0& 1/2& 1/2& 0& 0\\ \hline

1/2& 0& 0& 1/2& 1/2& 0& 0& 0& 1/2& 1/2& 0& 1/2& 0& 0& 1& 0& 0& 1/2\\ \hline
1/2& 0& 0& 1/2& 0& 1/2& 0& 1/2& 0& 1/2& 1/2& 0& 0& 1& 0& 0& 0& 1/2\\ \hline
1/2& 0& 0& 1/2& 1/2& 0& 0& 1& 0& 0& 1/2& 0& 1/2& 1/2& 0& 0& 0& 1/2\\ \hline
1/2& 0& 0& 1/2& 1/2& 0& 0& 0& 1& 0& 0& 1/2& 1/2& 0& 1/2& 0& 1/2& 0\\ \hline

0& 0& 1/2& 1/2& 0& 1/2& 0& 1/2& 0& 1/2& 0& 0& 1/2& 1/2& 0& 0& 0& 1\\ \hline
0& 0& 1/2& 1/2& 0& 0& 1/2& 0& 1/2& 0& 1/2& 0& 1/2& 1/2& 0& 0& 1& 0\\ \hline
0& 0& 1/2& 1/2& 0& 0& 1/2& 1/2& 0& 0& 0& 1/2& 1/2& 0& 1/2& 0& 0& 1\\ \hline
0& 0& 1/2& 1/2& 0& 1/2& 0& 0& 1& 0& 0& 1/2& 1/2& 0& 0& 1/2& 1/2& 0\\ \hline
0& 0& 1/2& 1/2& 0& 0& 1/2& 1/2& 0& 0& 1& 0& 0& 1/2& 0& 1/2& 1/2& 0\\ \hline
0& 0& 1/2& 1/2& 0& 1/2& 0& 1/2& 1/2& 0& 1/2& 1/2& 0& 0& 0& 1& 0& 0\\ \hline

0& 1/2& 0& 1/2& 1/2& 0& 0& 0& 1/2& 1/2& 0& 0& 1/2& 1/2& 0& 0& 1& 0\\ \hline
0& 1/2& 0& 1/2& 1/2& 0& 0& 1& 0& 0& 0& 1/2& 1/2& 0& 0& 1/2& 0& 1/2\\ \hline
0& 1/2& 0& 1/2& 0& 0& 1/2& 1/2& 0& 0& 0& 1& 0& 0& 1/2& 1/2& 0& 1/2\\ \hline
0& 1/2& 0& 1/2& 1/2& 0& 0& 0& 0& 1& 0& 0& 0& 1/2& 1/2& 0& 1/2& 1/2\\ \hline

1/2& 1/2& 0& 0& 0& 1/2& 1/2& 0& 0& 1/2& 1/2& 0& 0& 1& 0& 0& 1/2& 0\\ \hline
1/2& 1/2& 0& 0& 0& 1& 0& 0& 1/2& 1/2& 0& 1/2& 0& 1/2& 0& 1/2& 0& 0\\ \hline
1/2& 1/2& 0& 0& 1& 0& 0& 1/2& 0& 1/2& 1/2& 0& 0& 0& 1/2& 1/2& 0& 0\\ \hline
1/2& 1/2& 0& 0& 1& 0& 0& 0& 1/2& 1/2& 0& 0& 1/2& 0& 1/2& 0& 1/2& 0\\ \hline
1/2& 1/2& 0& 0& 0& 0& 1& 0& 0& 0& 1/2& 1/2& 0& 1/2& 1/2& 0& 1/2& 0\\ \hline
1/2& 1/2& 0& 0& 0& 1/2& 1/2& 0& 1/2& 0& 0& 1& 0& 0& 1/2& 1/2& 0& 0\\ \hline

0& 1/2& 1/2& 0& 1/2& 0& 1/2& 1/2& 0& 0& 0& 0& 1& 0& 0& 0& 1/2& 1/2\\ \hline
0& 1/2& 1/2& 0& 1/2& 1/2& 0& 0& 0& 1& 0& 0& 0& 0& 1/2& 1/2& 0& 1/2\\ \hline
0& 1/2& 1/2& 0& 1/2& 1/2& 0& 1/2& 0& 1/2& 1/2& 0& 0& 0& 0& 1& 0& 0\\ \hline
0& 1/2& 1/2& 0& 0& 0& 1& 0& 0& 0& 0& 1/2& 1/2& 0& 1/2& 0& 1/2& 1/2\\ \hline

1& 0& 0& 0& 1/2& 1/2& 0& 1/2& 1/2& 0& 1/2& 0& 1/2& 1/2& 0& 0& 0& 0\\ \hline
1& 0& 0& 0& 0& 1/2& 1/2& 0& 1/2& 0& 1/2& 1/2& 0& 1/2& 1/2& 0& 0& 0\\ \hline

0& 1& 0& 0& 0& 1/2& 1/2& 0& 0& 1/2& 0& 1/2& 0& 1/2& 0& 1/2& 1/2& 0\\ \hline
0& 1& 0& 0& 1/2& 0& 1/2& 1/2& 0& 0& 0& 1/2& 1/2& 0& 0& 1/2& 1/2& 0\\ \hline

0& 0& 1& 0& 0& 1/2& 1/2& 0& 1/2& 0& 1/2& 0& 1/2& 0& 0& 1/2& 1/2& 0\\ \hline
0& 0& 1& 0& 0& 1/2& 1/2& 0& 0& 1/2& 1/2& 0& 0& 0& 1/2& 1/2& 0& 1/2\\ \hline

0& 0& 0& 1& 0& 0& 0& 0& 1/2& 1/2& 0& 1/2& 0& 1/2& 1/2& 0& 1/2& 1/2\\ \hline
0& 0& 0& 1& 0& 0& 0& 1/2& 1/2& 0& 1/2& 1/2& 0& 1/2& 0& 1/2& 1/2& 0\\ \hline
\end{tabular}}
\caption{The $36$ vertices of Type 3.}
\label{vertices3}
\end{center}
\end{table}

\begin{table}
\begin{center}
\resizebox{\textwidth}{!}{\begin{tabular}{|c|c|c|c|c|c|c|c|c|c|c|c|c|c|c|c|c|c|}
\hline
$w_1$ & $w_2$ & $w_3$ & $w_4$ & $w_5$ & $w_6$ &$w_7$ & $w_8$ & $w_9$ &$w_{10}$ & $w_{11}$ & $w_{12}$ &$w_{13}$ & $w_{14}$ & $w_{15}$ &$w_{16}$ & $w_{17}$ & $w_{18}$\\ \hline
1/2& 0& 0& 1/2& 0& 1/2& 0& 0& 1/2& 1/2& 0& 1/2& 0& 1/2& 1/2& 0& 0& 1/2\\ \hline
1/2& 0& 0& 1/2& 1/2& 0& 0& 1/2& 1/2& 0& 1/2& 1/2& 0& 0& 1/2& 1/2& 0& 0\\ \hline
1/2& 0& 0& 1/2& 0& 1/2& 0& 1/2& 1/2& 0& 1/2& 1/2& 0& 1/2& 0& 1/2& 0& 0\\ \hline
1/2& 0& 0& 1/2& 0& 0& 1/2& 0& 1/2& 0& 1/2& 1/2& 0& 1/2& 1/2& 0& 1/2& 0\\ \hline
1/2& 0& 0& 1/2& 0& 0& 1/2& 1/2& 0& 0& 1/2& 1/2& 0& 1/2& 1/2& 0& 0& 1/2\\ \hline
1/2& 0& 0& 1/2& 1/2& 0& 0& 1/2& 0& 1/2& 1/2& 0& 0& 1/2& 1/2& 0& 0& 1/2\\ \hline
1/2& 0& 0& 1/2& 1/2& 0& 0& 1/2& 1/2& 0& 1/2& 0& 1/2& 1/2& 0& 0& 1/2& 0\\ \hline
1/2& 0& 0& 1/2& 1/2& 0& 0& 1/2& 1/2& 0& 0& 1/2& 1/2& 0& 1/2& 0& 0& 1/2\\ \hline
0& 0& 1/2& 1/2& 0& 1/2& 0& 0& 1/2& 1/2& 0& 1/2& 0& 0& 1/2& 1/2& 0& 1/2\\ \hline
0& 0& 1/2& 1/2& 0& 1/2& 0& 1/2& 1/2& 0& 0& 1/2& 1/2& 0& 0& 1/2& 0& 1/2\\ \hline
0& 0& 1/2& 1/2& 1/2& 0& 0& 0& 1/2& 1/2& 1/2& 0& 0& 0& 1/2& 1/2& 1/2& 0\\ \hline
0& 0& 1/2& 1/2& 0& 1/2& 0& 1/2& 0& 1/2& 1/2& 0& 0& 1/2& 0& 1/2& 0& 1/2\\ \hline
0& 0& 1/2& 1/2& 0& 0& 1/2& 1/2& 0& 0& 1/2& 1/2& 0& 0& 1/2& 1/2& 0& 1/2\\ \hline
0& 0& 1/2& 1/2& 0& 1/2& 0& 0& 1/2& 1/2& 0& 0& 1/2& 1/2& 0& 0& 1/2& 1/2\\ \hline
0& 0& 1/2& 1/2& 0& 0& 1/2& 0& 0& 1/2& 1/2& 0& 0& 1/2& 1/2& 0& 1/2& 1/2\\ \hline
0& 0& 1/2& 1/2& 0& 0& 1/2& 0& 1/2& 0& 0& 1/2& 1/2& 0& 1/2& 0& 1/2& 1/2\\ \hline
0& 0& 1/2& 1/2& 0& 0& 1/2& 1/2& 0& 0& 1/2& 0& 1/2& 1/2& 0& 0& 1/2& 1/2\\ \hline
0& 1/2& 0& 1/2& 1/2& 0& 0& 1/2& 1/2& 0& 0& 1/2& 1/2& 0& 0& 1/2& 1/2& 0\\ \hline
0& 1/2& 0& 1/2& 0& 1/2& 0& 0& 1/2& 1/2& 0& 1/2& 0& 1/2& 0& 1/2& 1/2& 0\\ \hline
0& 1/2& 0& 1/2& 1/2& 0& 0& 1/2& 0& 1/2& 1/2& 0& 0& 1/2& 0& 1/2& 1/2& 0\\ \hline
0& 1/2& 0& 1/2& 1/2& 0& 0& 1/2& 0& 1/2& 0& 1/2& 0& 0& 1/2& 1/2& 0& 1/2\\ \hline
0& 1/2& 0& 1/2& 0& 0& 1/2& 1/2& 0& 0& 1/2& 1/2& 0& 1/2& 0& 1/2& 1/2& 0\\ \hline
0& 1/2& 0& 1/2& 0& 1/2& 0& 1/2& 0& 1/2& 0& 1/2& 0& 1/2& 0& 1/2& 0& 1/2\\ \hline
0& 1/2& 0& 1/2& 1/2& 0& 0& 1/2& 0& 1/2& 0& 0& 1/2& 1/2& 0& 0& 1/2& 1/2\\ \hline
0& 1/2& 0& 1/2& 0& 0& 1/2& 0& 0& 1/2& 0& 1/2& 0& 1/2& 1/2& 0& 1/2& 1/2\\ \hline

0& 1/2& 1/2& 0& 1/2& 1/2& 0& 0& 1/2& 1/2& 0& 0& 1/2& 0& 0& 1/2& 1/2& 0\\ \hline
0& 1/2& 1/2& 0& 0& 1/2& 1/2& 0& 0& 1/2& 0& 1/2& 0& 0& 1/2& 1/2& 0& 1/2\\ \hline
0& 1/2& 1/2& 0& 1/2& 0& 1/2& 1/2& 0& 0& 1/2& 0& 1/2& 0& 0& 1/2& 1/2& 0\\ \hline
0& 1/2& 1/2& 0& 0& 1/2& 1/2& 0& 0& 1/2& 1/2& 0& 0& 1/2& 0& 1/2& 1/2& 0\\ \hline
0& 1/2& 1/2& 0& 0& 1/2& 1/2& 0& 1/2& 0& 0& 1/2& 1/2& 0& 0& 1/2& 1/2& 0\\ \hline
0& 1/2& 1/2& 0& 1/2& 1/2& 0& 1/2& 0& 1/2& 0& 0& 1/2& 0& 0& 1/2& 0& 1/2\\ \hline
0& 1/2& 1/2& 0& 1/2& 0& 1/2& 0& 0& 1/2& 0& 0& 1/2& 0& 1/2& 0& 1/2& 1/2\\ \hline
0& 1/2& 1/2& 0& 0& 1/2& 1/2& 0& 0& 1/2& 0& 0& 1/2& 1/2& 0& 0& 1/2& 1/2\\ \hline

1/2& 1/2& 0& 0& 1/2& 0& 1/2& 1/2& 0& 0& 1/2& 1/2& 0& 0& 1/2& 1/2& 0& 0\\ \hline
1/2& 1/2& 0& 0& 1/2& 1/2& 0& 1/2& 0& 1/2& 1/2& 0& 0& 1/2& 0& 1/2& 0& 0\\ \hline
1/2& 1/2& 0& 0& 1/2& 1/2& 0& 1/2& 1/2& 0& 0& 1/2& 1/2& 0& 0& 1/2& 0& 0\\ \hline
1/2& 1/2& 0& 0& 1/2& 1/2& 0& 0& 1/2& 1/2& 0& 1/2& 0& 0& 1/2& 1/2& 0& 0\\ \hline
1/2& 1/2& 0& 0& 0& 1/2& 1/2& 1/2& 0& 0& 0& 1/2& 1/2& 1/2& 0& 0& 0& 1/2\\ \hline
1/2& 1/2& 0& 0& 1/2& 1/2& 0& 0& 1/2& 1/2& 0& 0& 1/2& 1/2& 0& 0& 1/2& 0\\ \hline
1/2& 1/2& 0& 0& 1/2& 0& 1/2& 0& 0& 1/2& 1/2& 0& 0& 1/2& 1/2& 0& 1/2& 0\\ \hline
1/2& 1/2& 0& 0& 1/2& 0& 1/2& 0& 1/2& 0& 0& 1/2& 1/2& 0& 1/2& 0& 1/2& 0\\ \hline
1/2& 1/2& 0& 0& 1/2& 0& 1/2& 1/2& 0& 0& 1/2& 0& 1/2& 1/2& 0& 0& 1/2& 0\\ \hline

1/2& 0& 1/2& 0& 1/2& 1/2& 0& 1/2& 1/2& 0& 1/2& 0& 1/2& 0& 0& 1/2& 0& 0\\ \hline
1/2& 0& 1/2& 0& 0& 1/2& 1/2& 0& 1/2& 0& 1/2& 1/2& 0& 0& 1/2& 1/2& 0& 0\\ \hline
1/2& 0& 1/2& 0& 1/2& 0& 1/2& 1/2& 0& 0& 1/2& 0& 1/2& 0& 1/2& 0& 0& 1/2\\ \hline
1/2& 0& 1/2& 0& 0& 1/2& 1/2& 0& 0& 1/2& 1/2& 0& 0& 1/2& 1/2& 0& 0& 1/2\\ \hline
1/2& 0& 1/2& 0& 0& 1/2& 1/2& 0& 1/2& 0& 1/2& 0& 1/2& 1/2& 0& 0& 1/2& 0\\ \hline
1/2& 0& 1/2& 0& 0& 1/2& 1/2& 0& 1/2& 0& 0& 1/2& 1/2& 0& 1/2& 0& 0& 1/2\\ \hline
1/2& 0& 1/2& 0& 1/2& 1/2& 0& 0& 1/2& 1/2& 0& 0& 1/2& 0& 1/2& 0& 0& 1/2\\ \hline
1/2& 0& 1/2& 0& 1/2& 0& 1/2& 0& 1/2& 0& 1/2& 0& 1/2& 0& 1/2& 0& 1/2& 0\\ \hline
\end{tabular}}
\caption{The $50$ vertices of Type 4.}
\label{vertices4}
\end{center}
\end{table}

\section{Noise robustness of the noncontextuality inequality}
How much noise can one add to the measurements and preparations while still violating our noncontextuality inequality? We answer this question here assuming that the experimental operations 
are well-modelled by quantum theory.
According to quantum theory,
\begin{equation}
 p(k|M_i,P_{i,k})=\Tr(E_{k|M_i}\rho_{i,k}),
\end{equation}
where $E_{k|M_i}$ denotes the positive operator representing the measurement event $[k|M_i]$ and $\rho_{i,k}$  denotes the  density operator representing the preparation $P_{i,k}$. To be precise,
for every $i$, the set $\{ E_{k|M_i}\}_k$ is a positive operator valued measure, so that 
$0\leq E_{k|M_i}\leq I$, and $\sum_k E_{k|M_i}=I$, and for every $i$ and $k$, $\rho_{i,k}$ is positive, $\rho_{i,k}\geq 0$, and has unit trace, $\Tr\rho_{i,k}=1$.

In quantum theory, a noiseless and maximally informative measurement is represented by a POVM whose elements are rank-1 projectors, that is, 
\begin{equation}
 E_{k|M_i}=\Pi_{i,k},
\end{equation}
where for each $k$, $\Pi_{i,k}$ is a projector, hence idempotent, $\Pi_{i,k}^2=\Pi_{i,k}$, and is rank $1$, so that $\Pi_{i,k}=|\psi_{i,k}\rangle \langle \psi_{i,k}|$, where for each $i$, the 
set $\{ |\psi_{i,k}\rangle \}_k$ is an orthonormal basis of the Hilbert space.
If we furthermore set 
\beq
\rho_{i,k}=\Pi_{i,k},
\eeq 
then we find  $p(k|M_i,P_{i,k})={\rm Tr}(E_{k|M_i} \rho_{i,k})=1$ for each $(i,k)$, and consequently $A=1$.  We see, therefore, that the maximum possible value of $A$ is attained when 
preparations and measurements satisfy the noiseless ideal. We can now consider the consequence of adding noise. 

We begin by considering a very simple noise model wherein the preparations and measurements both deviate from the noiseless ideal by the action of a depolarizing channel, that is, a channel of 
the form
\begin{equation}
\mathcal{D}_p(\cdot) = p I (\cdot) I + (1-p) \frac{1}{4}I \;{\rm Tr}(\cdot),
\end{equation}
which with probability $p$ implements the identity channel and with probability $1-p$ generates the completely mixed state. 
If the quantum states are the image of the ideal states under a depolarizing channel with parameter $p_1$, and the POVM is obtained by acting the depolarizing channel with parameter $p_2$ 
followed by the ideal projector-valued measure (such that the POVM elements are the images of the projectors under the {\em adjoint} of the channel), then
\begin{eqnarray}
\rho_{i,k}&=&\mathcal{D}_{p_1} (\Pi_{i,k}) = p_1 \Pi_{i,k}+(1-p_1)\frac{1}{4}I,\\
E_{k|M_i}&=& \mathcal{D}^{\dag}_{p_2} (\Pi_{i,k})=  p_2 \Pi_{i,k}+(1-p_2) \frac{1}{4} I,
\end{eqnarray}
Here, the POVM $\{ E_{k|M_i}\}_k$ is a mixture of $\{ \Pi_{i,k} \}_k$ and a POVM $\{ \frac{1}{4}I,\frac{1}{4}I,\frac{1}{4}I,\frac{1}{4}I \}$ which simply samples $k$ uniformly at random 
regardless of the input state.
It follows that for each $(i,k)$, if we consider $p(k|M_i,P_{i,k})= {\rm Tr}(E_{k|M_i} \rho_{i,k})$, we find perfect predictability for the term having weight $p_1 p_2$ while for the three 
other terms, we have a uniformly random outcome, so that in all
\begin{equation}
 p(k|M_i,P_{i,k})=  p_1p_2 + (1-p_1p_2) \frac{1}{4}.
 \end{equation}
It follows that
\begin{equation}
A\equiv \frac{1}{36}\sum_{i=1}^9\sum_{k=1}^{4} p(k|M_i,P_{i,k})=\frac{1}{4}+\frac{3}{4}p_1 p_2,
\end{equation}
Thus a violation of the noncontextuality inequality, i.e. $A>\frac{5}{6}$, occurs if and only if
\begin{equation}
p_1 p_2 >\frac{7}{9}.
\end{equation}

It turns out that one can derive similar bounds for more general noise models as well.   Suppose that instead of a depolarizing channel, we have one of the form 
\begin{equation}
\mathcal{N}_{p,\rho}(\cdot) = p I (\cdot) I + (1-p) \rho \;{\rm Tr}(\cdot).
\end{equation}
With probability $p$, this implements the identity channel and with probability $1-p$ it reprepares a state $\rho$  that need not be the completely mixed state, but which is independent of the 
input to the channel.  The analogous sort of noise acting on the measurement corresponds to acting on the POVM elements by the adjoint of this channel, that is,
\begin{equation}
\mathcal{N}^{\dag}_{p,\rho}(\cdot) = p I (\cdot) I + (1-p) I \;{\rm Tr}(\rho \; \cdot).
\end{equation}

Therefore, if this sort of noise is applied to the ideal states and measurements, with the parameters in each noise model allowed to depend on $i$, we obtain
\begin{eqnarray}
\rho_{i,k}&=&\mathcal{N}_{p^{(i)}_1,\rho^{(i)}} (\Pi_{i,k}) = p^{(i)}_1 \Pi_{i,k}+(1-p^{(i)}_1) \rho^{(i)},\label{bb1}\\
E_{k|M_i}&=& \mathcal{N}^{\dag}_{p^{(i)}_2,\rho^{(i)}} (\Pi_{i,k})=  p^{(i)}_2 \Pi_{i,k}+(1-p^{(i)}_2) s(k|i) I,\nonumber \\\label{bb2}
\end{eqnarray}
where $s(k|i)\equiv {\rm Tr}(\rho^{(i)} \Pi_{i,k})$ is a probability distribution over $k$ for each value of $i$.  
Here, the POVM $\{ E_{k|M_i}\}_k$ is a mixture of $\{ \Pi_{i,k} \}_k$ and a POVM $\{ s(k|i) I\}_k$ which simply samples $k$ at random from the distribution $s(k|i)$, regardless of the quantum 
state.  Compared to the simple model considered above, the innovation of this one is that for both preparations and measurements, the noise is allowed to be biased. 

For the case of $p_1^{(i)}=0$, which by Eq.~\eqref{bb1} implies that $\rho_{i,k}=\rho^{(i)}$, we find that, regardless of the measurement, $p(k|M_i,P_{i,k})$ is just a normalized probability 
distribution over $k$ (because there is no $k$ dependence in the state).  Hence, in this case, $\frac{1}{4}\sum_{k=1}^4p(k|M_i,P_{i,k})= \frac{1}{4}$.

Similarly, for the case of $p_2^{(i)}=0$, that is, when the POVM corresponds to a random number generator $E_{k|M_i}=s(k|i) I$, we find that, regardless of the preparation, $p(k|M_i,P_{i,k})$ 
is again just a normalized probability distribution over $k$.  Hence, in this case again, $\frac{1}{4}\sum_{k=1}^4p(k|M_i,P_{i,k})= \frac{1}{4}$.  

It follows that for generic values of $p_1^{(i)}$ and $p_2^{(i)}$, we have $\frac{1}{4}\sum_{k=1}^4p(k|M_i,P_{i,k})= p_1^{(i)} p_2^{(i)} + (1-p_1^{(i)} p_2^{(i)}) \frac{1}{4}$.  In all then, 
we have 
\begin{equation}
A\equiv \frac{1}{36}\sum_{i=1}^9\sum_{k=1}^{4} p(k|M_i,P_{i,k})=\frac{1}{4}+\frac{3}{4} \left( \frac{1}{9}\sum_{i=1}^9 p^{(i)}_1 p^{(i)}_2 \right).
\end{equation}
Consequently, a violation of the noncontextuality inequality, i.e. $A>\frac{5}{6}$, occurs if and only if the noise parameters satisfy
\begin{equation}
\frac{1}{9}\sum_{i=1}^9 p^{(i)}_1 p^{(i)}_2  >\frac{7}{9}.
\end{equation}
Because the parameters $p_1^{(i)}$ and $p_2^{(i)}$ decrease as one increases the amount of noise, this inequality specifies an upper bound on the amount of noise that can be tolerated if one 
seeks to violate the noncontextuality inequality. 

This analysis highlights how the approach to deriving noncontextuality inequalities described in this article has no trouble accommodating noisy POVMs.  This contrasts with previous proposals 
for experimental tests based on the traditional notion of noncontextuality, which can only be applied to projective measurements.  This is one way to see how previous proposals are not 
applicable to realistic experiments, where every measurement has some noise and consequently is necessarily {\em not} represented 
projectively.\footnote{For more general noise models, a similar analysis will go through because we know that the ideal quantum projectors 
lead to a maximal violation of the inequality $A\leq 5/6$, i.e., $A=1$. Any introduction of noise to these 
will bring $A$ below 1 and, for any given noise model, one can figure out the regime in which 
$A$ does not fall below $5/6$ so that it is possible to demonstrate contextuality even in the presence of some noise. This is the sense in
which our noncontextuality inequality is noise-tolerant.}

\section{Comparison to ``state-independent contextuality'' (SIC) inequalities}

We have proposed a technique for deriving noncontextuality inequalities from proofs of the Kochen-Specker theorem.  It is useful to compare our approach with one that has previously been 
proposed by Cabello~\cite{Cabelloexpt} to derive the so-called ``state-independent contextuality'' (SIC) inequalities
based on proofs of the Kochen-Specker theorem.
We do so by explicitly comparing the two proposals in the case of the 18 ray construction of Ref.~\cite{Cab1}.   Indeed, the fact that Ref.~\cite{Cabelloexpt} proposes an inequality for this 
construction 
is part of our motivation for choosing it as our illustrative example.

For each of the 18 operational equivalence classes of measurement events, labelled by $\kappa \in \{1,\dots,18\}$ as depicted in Fig.~\ref{legend}, we associate a $\{-1,+1\}$-valued 
variable, denoted $S_{\kappa} \in \{-1,+1\}$.
A given ontic state $\lambda$ is assumed to assign a value to each $S_{\kappa}$.
The fact that there is only a {\em single} variable associated to each equivalence class implies that any assignment of such values is necessarily noncontextual.

Ref.~\cite{Cabelloexpt} considers a particular linear combination of expectation values of products of these variables: 
\begin{align}
\alpha \equiv &-\langle S_1 S_2 S_3 S_4\rangle - \langle S_4 S_5 S_6 S_7\rangle- \langle S_7 S_8 S_9 S_{10}\rangle\nonumber\\
& - \langle S_{10} S_{11} S_{12} S_{13}\rangle -  \langle S_{13} S_{14} S_{15} S_{16}\rangle - \langle S_{16} S_{17} S_{18} S_{1}\rangle\nonumber\\
&- \langle S_{18} S_2 S_9 S_{11}\rangle- \langle S_3 S_5 S_{12} S_{14}\rangle\nonumber\\
&-\langle S_6 S_8 S_{15} S_{17}\rangle,
\end{align}
and derives the following inequality for it:
\begin{align}
\alpha \le 7
\label{Cabelloinequality}
\end{align}
(Note that Ref.~\cite{Cabelloexpt} used a labelling convention for the eighteen measurement events that is different from the one we use here; to translate between the two conventions, it 
suffices to compare Fig. 1 in that article with Fig.~\ref{legend} in ours.)  Each term in $\alpha$ refers to a quadruple of variables that can be measured together, that is, which can be 
computed from the outcome of a single measurement.   Different terms correspond to measurements that are incompatible.

In Ref.~\cite{Cabelloexpt}, the following justification is given for the inequality \eqref{Cabelloinequality}.  We are asked to consider 
the $2^{18}$ possible assignments to $(S_1,\dots, S_{18})$ that result from the two possible assignments to $S_{\kappa}$, namely $-1$ or $+1$, for each $\kappa \in \{1,\dots,18\}$.
It is then noted that among all such possibilities, the maximum value of $\alpha$ that can be achieved is 7.

Ref.~\cite{Cabelloexpt} states that a violation of this inequality should be considered evidence of a failure of noncontextuality.  We disagree with this conclusion, and the rest of this 
section seeks to explain why.

\subsection{The most natural interpretation}

It is useful to recast the inequality of Eq.~\eqref{Cabelloinequality} in terms of variables $v_{\kappa}$ with values in $\{0,1\}$ rather than $\{-1,+1\}$.  Specifically, we take
\beq
v_{\kappa} \equiv \frac{S_{\kappa} +1}{2}.
\eeq
Under this translation, products of the $S_{\kappa}$ correspond to sums (modulo 2) of the $v_{\kappa}$.  For instance,  an equation such as $S_{\kappa_1} S_{\kappa_2}=-1$ corresponds to the 
equation $v_{\kappa_1} \oplus v_{\kappa_2} =1$, where $\oplus$ denotes sum modulo 2, while $S_{\kappa_1} S_{\kappa_2}= +1$ corresponds to $v_{\kappa_1} \oplus v_{\kappa_2} =0$, so that 
$v_{\kappa_1} \oplus v_{\kappa_2} =  \frac{-S_{\kappa_1} S_{\kappa_2} +1}{2}$.  In particular, we also have
\beq
v_{\kappa_1} \oplus v_{\kappa_2} \oplus v_{\kappa_3}\oplus v_{\kappa_4} =  \frac{-S_{\kappa_1} S_{\kappa_2} S_{\kappa_3}S_{\kappa_4} +1}{2}
\eeq
or equivalently,
\beq
- S_{\kappa_1} S_{\kappa_2} S_{\kappa_3}S_{\kappa_4}= 2(v_{\kappa_1} \oplus v_{\kappa_2} \oplus v_{\kappa_3}\oplus v_{\kappa_4}) -1,
\label{jjj}
\eeq 
We can therefore consider a quantity $\alpha'$, defined as
\begin{align}
\alpha' \equiv &\langle v_1 \oplus  v_2 \oplus v_3 \oplus v_4\rangle + \langle v_4 \oplus v_5 \oplus v_6 \oplus v_7\rangle\nonumber\\
&+ \langle v_7 \oplus v_8 \oplus v_9 \oplus v_{10}\rangle
 + \langle v_{10} \oplus v_{11} \oplus v_{12} \oplus v_{13}\rangle\nonumber\\
 & +  \langle v_{13} \oplus v_{14} \oplus v_{15} \oplus v_{16}\rangle +\langle v_{16} \oplus v_{17} \oplus v_{18} \oplus v_{1}\rangle\nonumber\\
&+ \langle v_{18} \oplus v_2 \oplus v_9 \oplus v_{11}\rangle + \langle v_3 \oplus v_5 \oplus v_{12} \oplus v_{14}\rangle\nonumber\\
&+ \langle v_6 \oplus v_8 \oplus v_{15} \oplus v_{17}\rangle,
\label{alphaprime}
\end{align}
so that $\alpha = 2\alpha' - 9$, 
and we can re-express inequality \eqref{Cabelloinequality} as
\beq
\alpha' \le 8.
\label{newinequality}
\eeq
Of course, rather than using Eq.~\eqref{jjj} to translate  \eqref{Cabelloinequality} from $\{-1,+1\}$-valued variables into $\{ 0,1\}$-valued variables, 
one can also just derive the inequality \eqref{newinequality} directly: among the $2^{18}$ possible assignments of values in $\{0,1\}$ to each of the $v_{\kappa}$, the maximum value of 
$\alpha'$ is 8.  Two examples of such assignments are provided in Fig.~\ref{LogicDefyingAssignments}.

\begin{figure}[htb]
\centering
 \includegraphics[scale=0.5]{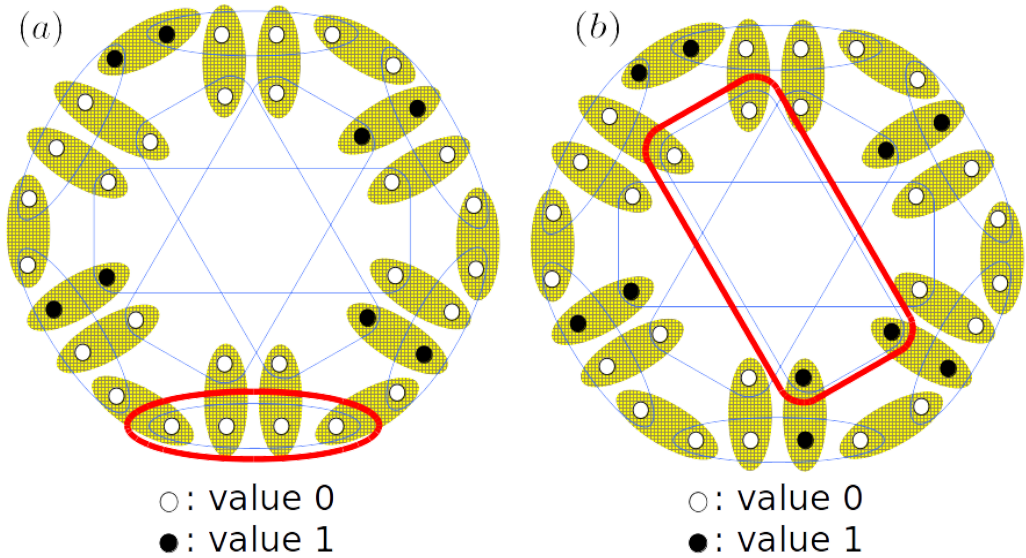}
 \caption{Examples of noncontextual assignments of $\{ 0,1\}$-values to the measurement events in the 18 ray construction
 where it is {\em not} required that every measurement has precisely one outcome that is assigned value 1 and three outcomes that are assigned the value 0. That is, normalization
 is not respected in these assignments. Example (a) depicts an assignment wherein
 there is a measurement all of whose outcomes receive probability 0, hence a \emph{subnormalized} assignment (the probabilities add up to less than 1 for the 
 highlighted measurement). 
 Example (b) depicts one wherein there is a measurement two of whose outcomes recieve probability 1, hence a \emph{supernormalized}
 assignment (the probabilities add up to more than 1 for the highlighted measurement). }
\label{LogicDefyingAssignments}
\end{figure}

It is useful to use a notation that specfies whether a given expectation value of some variable $X$ is relative to a preparation procedure $P$, in which case it is denoted $\langle X \rangle_P$,
or relative to an ontic state $\lambda$, in which case it is denoted $\langle X \rangle_{\lambda}$.  
We denote by $\alpha'(P)$ the quantity defined in \eqref{alphaprime} if the expectation values contained therein are relative to preparation $P$, and we denote by $\alpha'(\lambda)$ the case 
where the expectation values are relative to ontic state $\lambda$.
Under the assumption of an ontological model, each expectation value relative to a preparation $P$ can be expressed as a function of the expectation value relative to an ontic state $\lambda$,
via
\beq
\langle X \rangle_P = \sum_{\lambda} \langle X \rangle_{\lambda} \;\mu(\lambda|P),
\label{previous}
\eeq
where $\mu(\lambda|P)$ is the distribution over ontic states associated with preparation $P$. 
We can infer from Eq.~\eqref{previous} that
\beq
\alpha'(P) = \sum_{\lambda} \alpha'(\lambda) \mu(\lambda|P).
\label{nnn}
\eeq

With these notational conventions, we can summarize the argument of Ref.~\cite{Cabelloexpt} as follows.  In any noncontextual ontological model, every ontic state $\lambda$ satisfies
\beq
\alpha'(\lambda) \le 8.
\label{lll}
\eeq
But this in turn implies, through Eq.~\eqref{nnn}, that for all preparations $P$,
\beq
\alpha'(P) \le 8,
\label{ooo}
\eeq
which is an inequality constraining operational quantities. Cabello \cite{Cabelloexpt} then goes on to show that this inequality is violated for 
\emph{any} quantum state, hence violating this inequality proves ``state-independent contextuality''. We will argue that 
although the violation of this inequality is indeed state-independent, such a violation is not due to 
contextuality. Instead, this state-independent violation is on account of a fundamental  
inconsistency in the sort of ontological model that is implicitly assumed in Cabello's discussion.
We will show that the violation of this inequality is \emph{necessary} for any ontological model -- noncontextual or otherwise --
to even make sense and that this violation and its state-independence is therefore not a signature of contextuality.

Let us now describe the problem with the inequality \eqref{ooo}, or equivalently inequality \eqref{Cabelloinequality}, and thus with the claim of Ref.~\cite{Cabelloexpt}.
First, we highlight the physical interpretation of the variables $v_{\kappa}$.  If $v_{\kappa}$ is assigned value 1 by the ontic state $\lambda$, then this means that if the system is in the 
ontic state $\lambda$, 
and  a measurement that includes $\kappa$ as an outcome is implemented on it, then the outcome $\kappa$ is certain to occur, while if $v_{\kappa}$ is assigned value 0 by $\lambda$, then the 
outcome $\kappa$ is certain {\em not} to occur.
But each of the $2^{18}$ different assignments to $(v_1 , \dots, v_{18})$  is such that for at least one measurement either
{\em none} of the outcomes occur, as in the example of Fig.~\ref{LogicDefyingAssignments}(a), or {\em more than one} outcome occurs, as in the example of Fig.~\ref{LogicDefyingAssignments}(b).  (This is precisely what is implied by the fact that the 18 measurement events are Kochen-Specker {\em KS-uncolourable}.)  Such assignments involve a {\em logical contradiction} given that the four outcomes of each measurement are mutually excusive and jointly exhaustive possibilities.

It follows that the sort of model that  a violation of inequality \eqref{ooo} 
rules out 
can already be ruled out {\em by logic alone}; no experiment is required. 
To put it another way, discovering that quantum theory and nature violate inequality \eqref{ooo}
only allows one to conclude that neither quantum theory nor nature involve a logical contradiction, which one presumably already knew prior to noting the violation.  

We have already argued that the notion of KS-noncontextuality, insofar as it assumes outcome-determinism, is not suitable for devising experimentally robust inequalities given that 
every real measurement involves some noise.
The problem with inequality \eqref{ooo} can also be traced back to the use of the assumption of KS-noncontextuality.  Suppose we ask the following question: given the existence of nine 
four-outcome measurements satisfying the operational 
equivalences of Fig.~\ref{mmtequivs}(a), how are the operational probabilities that are assigned to these measurement events constrained if we presume that KS-noncontextual assignments 
underlie the operational statistics?
On the face of it, the question seems well-posed.  On further reflection, however, one sees that it is not.  There are simply {\em no} KS-noncontextual assignments to these measurement events,
so it is simply impossible to imagine that such assignments could underlie the operational statistics.  There is nothing to be tested experimentally, as the hypothesis under consideration is 
seen to be false as a matter of logic.  

Here is another way to see that the inequality \eqref{ooo} does not provide a test of noncontextuality.  Consider the expectation value $\langle v_{\kappa_1} \oplus v_{\kappa_2} \oplus v_{\kappa_3} \oplus v_{\kappa_4} \rangle_{P}$
for a preparation $P$, where $\kappa_1$, $\kappa_2$, $\kappa_3$ and $\kappa_4$ correspond to the four outcomes of some measurement.   Regardless of which of the four outcomes of the measurement
occurs in a given run where preparation $P$ is implemented---i.e. regardless of whether $(v_{\kappa_1} ,v_{\kappa_2} ,v_{\kappa_3}, v_{\kappa_4})$ comes out as (1,0,0,0) or (0,1,0,0) or (0,0,1,0) 
or (0,0,0,1) in that run---the variable $v_{\kappa_1} \oplus v_{\kappa_2} \oplus v_{\kappa_3} \oplus v_{\kappa_4} $ has the value 1.  We can think of it this way: the variable $v_{\kappa_1} \oplus v_{\kappa_2} \oplus v_{\kappa_3} \oplus v_{\kappa_4} $
is a trivial variable because it is a constant function of the measurement outcome. (This is analogous to how, in quantum theory, for a four-outcome 
measurement associated with four projectors, although each projector is a nontrivial observable, their sum is the identity operator, which has expectation value 1 for all quantum states, and 
therefore corresponds to a trivial observable.)  It follows that regardless of what distribution over the four outcomes is assigned by $P$, the expectation value $\langle v_{\kappa_1} \oplus v_{\kappa_2} \oplus v_{\kappa_3} \oplus v_{\kappa_4} \rangle_P$
will be 1.  Given that each of the nine terms in $\alpha'(P)$ is of this form, it follows that $\alpha'(P)=9$.  

So, for {\em any} operational theory that admits of nine four-outcome measurements with the operational equivalence relations depicted in Fig.~\ref{mmtequivs}(a), we will find that 
$\alpha'(P)=9$ for all $P$.
Therefore,  we can conclude that the inequality $\alpha'(P)\le 8$ is violated for all $P$. One can reach this conclusion without ever considering the question of whether the operational 
predictions can be explained by some underlying noncontextual model.
  
Another consequence  of the triviality of the variables of the form $v_{\kappa_1} \oplus v_{\kappa_2} \oplus v_{\kappa_3} \oplus v_{\kappa_4} $ is that the inequality \eqref{ooo} can be violated
regardless of how noisy the measurements are.  Suppose, for instance, that quantum theory describes our experiment, but that the nine four-outcome measurements are not the projective 
measurements
described in Fig.~\ref{CEGAhypergraph}, but rather noisy versions thereof.
For instance, one can imagine that each measurement 
is associated with a positive operator-valued measure that is the image under a depolarizing map of the projector valued measure associated with the ideal measurement. The amount of 
depolarization can be taken arbitrarily large and, as long as it is the {\em same} amount of depolarization for each of the measurements, the nine noisy 
measurements that result will still satisfy precisely the same operational equivalences as the original nine, namely, those depicted in Fig.~\ref{mmtequivs}(a).
For such noisy measurements, we can still identify variables $v_{\kappa}$ associated to the eighteen equivalence classes of measurement events, and we still find that regardless of which of 
the four outcomes of the measurement occurs, the variable $v_{\kappa_1} \oplus v_{\kappa_2} \oplus v_{\kappa_3} \oplus v_{\kappa_4} $ has the value 1, so that
regardless of what distribution over the four outcomes is assigned by $P$, the expectation value $\langle v_{\kappa_1} \oplus v_{\kappa_2} \oplus v_{\kappa_3} \oplus v_{\kappa_4} \rangle_P$  
will be 1 and therefore $\alpha'(P)=9$, which is a violation of the inequality \eqref{ooo}.

According to the generalized notion of noncontextuality proposed in Ref.~\cite{genNC}, if one adds enough noise to the preparations and measurements in an experiment, it always becomes possible
to represent the experimental statistics by a noncontextual model. One way to prove this is to note that: (i)  if all of the preparations and the measurements in an experiment admit of positive
Wigner representations, then, as demonstrated in Ref.~\cite{Spe08}, the Wigner representation defines a noncontextual model, and (ii) if one adds enough noise to the preparations and 
measurements, it is possible to ensure that they admit of positive Wigner representations.

This analysis of the effect of noise accords with intuition: noncontextuality is meant to represent a notion of classicality, so that a failure of noncontextuality is only expected to occur in 
a quantum experiment if one's experimental operations have a high degree of coherence.
It follows that  there should always exist a threshold of noise above which an experiment cannot be used to demonstrate the failure of noncontextuality.   
One can turn this observation into a minimal criterion that should be satisfied by any noncontextuality inequality, that there should exist a threshold of experimental noise above which it 
cannot be violated. 

As we have just noted, the inequality proposed in Ref.~\cite{Cabelloexpt} fails this minimal criterion. By contrast, the noncontextuality inequality proposed 
in this chapter identifies such a threshold for the 18 ray construction:
the noise must be kept low enough that the average of the measurement predictabilities is above $5/6$. We have even provided an
analysis of such noise thresholds for quantum theory in the section on noise robustness of our noncontextuality inequality.

\subsection{Alternative interpretation}

The inequality proposed in Ref.~\cite{Cabelloexpt} 
can be given a different interpretation to the one we have just provided.
This  interpretation is more charitable in some ways, but it still does not vindicate the proposed inequality as delimiting the boundary of noncontextual models. 

The idea is to imagine that for each of the nine measurements, there are in fact {\em five} rather than four outcomes that are mutually exclusive and jointly exhaustive. 
Thus, in this interpretation, it is assumed that the hypergraph describing compatibility relations and operational equivalences is {\em not} the one of Fig.~\ref{mmtequivs}(a), 
but rather a modification wherein there are nine additional nodes---one additional node appended to each of the nine measurements---as depicted in Fig.~\ref{Subnormalization}(a).

\begin{figure}[t]
 \centering
 \includegraphics[scale=0.5]{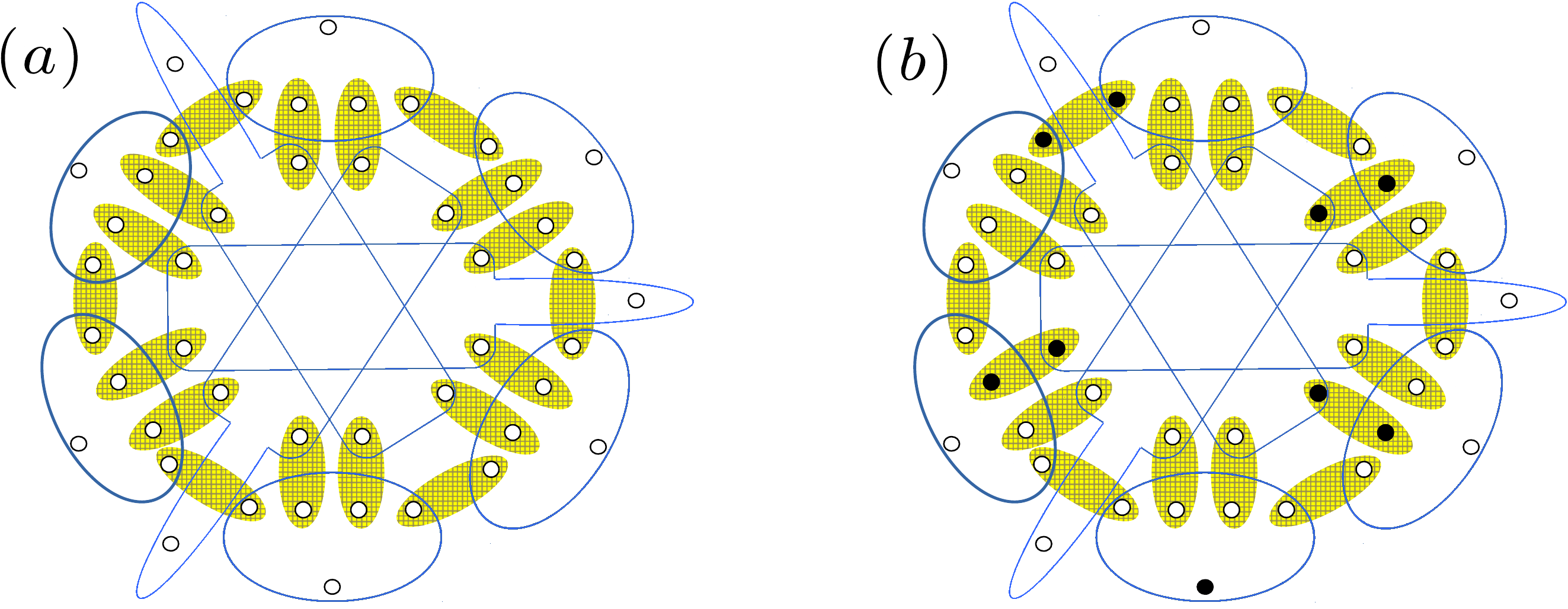}
 \caption{(a) The hypergraph wherein each measurement is assigned an additional fifth outcome.  (b) A normalized noncontextual deterministic assignment to the hypergraph of (a) that recovers 
 the subnormalized noncontextual deterministic assignment of Fig.~\ref{LogicDefyingAssignments}(a) on the appropriate subgraph.}
\label{Subnormalization}
\end{figure}

If $\{ \kappa_1, \kappa_2, \kappa_3, \kappa_4\}$  are the original four outcomes of a given measurement, then the variable $v_{\kappa_1} \oplus v_{\kappa_2} \oplus v_{\kappa_3} \oplus v_{\kappa_4} $ 
is no longer a constant function of the measurement outcome because its value varies depending on whether or not the fifth outcome occurs.  
If $\kappa_5$ denotes the fifth outcome of the measurement, 
then the trivial variable is $v_{\kappa_1} \oplus v_{\kappa_2} \oplus v_{\kappa_3} \oplus v_{\kappa_4} \oplus v_{\kappa_5}$,
taking the value 1 regardless of the outcome.

In this case, the assignments of the type depicted in Fig.~\ref{LogicDefyingAssignments}(a)---the noncontextual deterministic assignments that are {\em subnormalized}---can be embedded into 
noncontextual deterministic {\em normalized} assignments on the larger hypergraph, as depicted in Fig.~\ref{Subnormalization}(b).  (The possibility of such an embedding for the subnormalized 
noncontextual deterministic assignments considered in Cabello, Severini and Winter~\cite{CSW} was noted in Acin, Fritz, Leverrier, Sainz~\cite{AFLS}.)

Of course, such a move does not provide any way of understanding the deterministic noncontextual assignments of the type depicted in Fig.~\ref{LogicDefyingAssignments}(b), because the latter 
violate normalization by having the probabilities of the different outcomes of the measurement summing to greater than 1---they are {\em supernormalized}.  

So, while the supernormalized noncontextual deterministic assignments can be ruled out by logic alone, the subnormalized noncontextual deterministic assignments may be entertained without 
logical inconsistency if they are considered as reductions to a subgraph of a normalized noncontextual deterministic assignment on a larger hypergraph.

Because the justification given in Ref.~\cite{Cabelloexpt} for the inequality derived there asks one to consider {\em all} of the noncontextual deterministic assignments, including the 
supernormalized ones, the interpretation of this inequality as a constraint on subnormalized assignments is in tension with the manner in which the inequality is justified.
This interpretation is a better fit with Cabello's later work, such as Ref.~\cite{CSW}, wherein the restriction to subnormalized assignments is explicit. 
In any case,  if the inequality holds for {\em all} noncontextual deterministic assignments, regardless of normalization, then it holds for the special case of the subnormalized assignments, 
so the inequality can still be derived within this interpretation.

The fundamental problem with this subnormalization interpretation is the following:

While Cabello insists that the quantum model of the experiment involves complete sets of orthonormal rays, thereby implicitly assuming that the operational statistics has no nondetection 
events,\footnote{By ``nondetection'' events we mean measurement events where none of the outcomes of interest is detected.}
he then allows the response functions corresponding to each basis to be subnormalized in the ontological model, allowing nondetection events when none are seen operationally.
This inconsistency -- no nondetections in the operational theory, but nondetections in the ontological model -- is the reason Cabello's inequality and its violation don't signify a failure of 
KS-noncontextuality. Normalization (no nondetection) in the operational theory is equivalent to normalization (no nondetection) in
the ontological model: if the ontological model fails to take this into account, it is a priori ruled out, without even considering
the question of its KS-noncontextuality. To see this, note that $p(X|M,P)=\sum_{\lambda}\xi(X|M,\lambda)\mu(\lambda|P)$,
and $\sum_{X}p(X|M,P)=1$ if and only if $\sum_{X}\xi(X|M,\lambda)=1$ for all $\lambda$ such that $\mu(\lambda|P)>0$.

Cabello goes on to claim a state-independent violation of his inequality by all quantum states: this is trivially true because of the completeness of the bases chosen and because 
the scenario is restricted to a 4-dimensional Hilbert space. Any probability assignment induced by the Born rule will then obviously
violate Cabello's inequality: this says nothing about KS-noncontextuality.
As we have noted, in order for the subnormalized response functions to make sense, one has to allow the possibility that the 
operational theory admits nondetection events by, say, considering a 5-dimensional Hilbert space, where the fifth outcome is a 
nondetection event in each measurement (see Fig.~\ref{Subnormalization}). In this situation Cabello's inequality no longer admits a 
``state-independent'' violation: any quantum state that is orthogonal to the 4-dimensional subspace in which the 18 rays live will never violate the inequality, it will just give $\alpha'(P)=0$.
This means that in a 5-dimensional Hilbert space where subnormalization makes sense, 
violation of Cabello's inequality is not ``state-independent'', although the violation can be considered 
a signature of KS-contextuality for the extended hypergraph of Fig.~\ref{Subnormalization}(a).\footnote{The same cannot be said of the 
hypergraph of Fig.~\ref{mmtequivs}(a), which does not admit any KS-noncontextual assignments of values: Cabello's inequality is
inapplicable to this case. On the other hand, 
the extended hypergraph of Fig.~\ref{Subnormalization}(a) does admit KS-noncontextual assignments of values, which makes it possible
to think of Cabello's inequality as a test of KS-contextuality for {\em this} scenario.}

It is instructive to compare Cabello's inequality \cite{Cabelloexpt} with the KCBS inequality \cite{KCBS} 
to understand why the former is not a test of KS-noncontextuality for the 18 ray construction in 4 dimensions (Fig.~\ref{CEGAhypergraph}(a))
while the latter is 
a legitimate test of KS-noncontextuality in 3 dimensions. If one thinks of the five projectors in the Klyachko et al.~5-cycle \cite{KCBS}, 
it is the case that adjacent pairs do not add up to identity.\footnote{Contrast this with the fact that the four projectors in each measurement
add up to identity in the 18 ray construction.} Hence it is possible to have a nonzero 
nondetection probability operationally (corresponding to the projector orthogonal to the subspace defined by an adjacent pair of 
projectors), and it then makes sense to allow subnormalization in the ontological model
and write down the Kochen-Specker type inequality for this case in terms of the independence number \cite{CSW}.\footnote{The same approach does not make sense
in the 18 ray construction because nondetection events are {\em a priori} ruled out in a 4-dimensional Hilbert space.}

\section{Experimental testability}
Notwithstanding our operationalization of the Kochen-Specker theorem which lets us deal with the problem of noisy measurements, 
there is still a fundamental difficulty in experimentally testing operational noncontextuality inequalities: namely, the {\em problem of inexact operational equivalences}.
That is, in general, it will be the case that two experimental procedures (preparations or measurements) that are intended to be operationally equivalent turn out to be 
only approximately (or {\em inexactly}) so. Strictly speaking, we then have no justification for applying the assumption of noncontextuality in the absence of exact operational 
equivalence. Recall that noncontextuality is an inference from exact operational equivalence in the operational theory to exact ontological equivalence in the ontological model.
The equivalence in the ontological model is meant to be a natural explanation of why we see the operational equivalence we see in an experiment. But if such operational 
equivalences are not seen in the first place, then we have no reason to believe in any ontological equivalence.\footnote{
As regards the traditional account of the Kochen-Specker theorem,
the fact that one might in practice fail to measure the exact projector that a Kochen-Specker construction requires - by however small a margin - leads to a ``nullification'' of the theorem, 
meaning
that it's impossible to test the Kochen-Specker theorem in a real experiment with finite precision measurements.
This ``nullification'' is on account of the fact that failure to 
do infinitely precise measurements opens up the possibility of KS-noncontextual ontological models (called Meyer-Kent-Clifton (MKC) type models) of quantum statistics \cite{MKC}.
In our formulation of noncontextuality following Spekkens \cite{genNC}, we have circumvented the difficulty posed by MKC-type models by abandoning any reference to Hilbert spaces
(properties of which the MKC-type models exploit). 
What we are still left with, however, is the difficulty of achieving exact operational equivalences in real experiments.}

A second difficulty -- the {\em problem of universal quantifiers} -- with the experimental testability of operational 
noncontextuality 
inequalities is that the number of 
measurements (preparations) it might take to verify the operational equivalence of a pair of preparation (measurement) 
procedures
can be potentially infinite: this is because we judge operational equivalence of two preparation (measurement) procedures 
relative
to \emph{all} \footnote{Hence the reference to the \emph{universal quantifier} ``$\forall$'' in definitions of operational
equivalence.}
measurements (preparations) in the operational theory. Unless the operational theory admits of a \emph{finite} set of 
tomographically complete preparations and measurements, experimentally verifying the operational equivalence of two 
procedures would be practically impossible.

In this section we will show how the first difficulty -- the problem of inexact operational equivalences -- is solved by a 
convexity argument and the second 
difficulty -- the problem of universal quantifiers -- is tackled by assuming the tomographical completeness of a 
finite number of preparations and measurements. The experiment reported in Ref.~\cite{exptl} looks for violation of a simple
operational noncontextuality inequality, not directly related to the traditional Kochen-Specker theorem,
that we will derive here. The solution to the problem of inexact operational equivalences
that will be discussed here is largely due to Matt Pusey who first came up with the idea reported in Ref.~\cite{exptl}.\footnote{See also Ref.~\cite{MattP}.}
The methods that resolve these difficulties can be adapted to the test of any operational noncontextuality inequality.
In particular, these methods also apply to operational noncontextuality inequalities directly inspired by the Kochen-Specker theorem,
such as the one we have already derived in this chapter.

\subsection{Fair coin flip (FCF) inequality}
We will now derive a simple noncontextuality inequality that we call the fair coin flip (FCF) inequality.
\subsubsection{The setup}
Consider a measurement procedure $M_*$ (with outcomes $X\in\{0,1\}$) that is operationally indistinguishable from a fair coin 
flip, that is,

\begin{equation}\label{mnc1opeq}
 p(X=0|M_*,P)=p(X=1|M_*,P)=\frac{1}{2},\forall P\in\mathcal{P}.
\end{equation}
By the assumption of measurement noncontextuality -- namely, if it's operationally indistinguishable from a fair coin flip then it's
also ontologically indistinguishable from a fair coin flip -- we have for the response function
\begin{equation}\label{mnc1}
 \xi(X=0|M_*,\lambda)=\xi(X=1|M_*,\lambda)=\frac{1}{2},\forall \lambda\in\Lambda.
\end{equation}
Consider also three preparation procedures $P_1,P_2,P_3\in\mathcal{P}$ which are operationally indistinguishable, that is,
\begin{equation}\label{pnc1opeq}
 \forall M\in\mathcal{M}: p(X|M,P_1)=p(X|M,P_2)=p(X|M,P_3),\text{ where } X\in\{0,1\}.
\end{equation}
By the assumption of preparation noncontextuality -- namely, if the preparation procedures are operationally indistinguishable
then they are also ontologically indistinguishable -- we have for the associated distributions
\begin{equation}\label{pnc1}
 \forall \lambda\in\Lambda: \mu(\lambda|P_1)=\mu(\lambda|P_2)=\mu(\lambda|P_3).
\end{equation}
Suppose that $M_*$ can be realized as the uniform mixture of three binary-outcome measurements $M_1,M_2,M_3\in\mathcal{M}$:
that is uniformly randomly pick $t\in\{1,2,3\}$, implement $M_t$, and report
the outcome of $M_t$ as the outcome of $M_*$ (coarse graining over $t$).

Suppose also that each preparation procedure $P_t\in\mathcal{P}$ can be realized as the uniformly random 
mixture of two other preparation procedures $P_{t,0},P_{t,1}\in\mathcal{P}$.

Consider an experiment involving a measurement of $M_t$ on $P_{t,b}$ (with $b\in\{0,1\}$). We are interested in the average \emph{degree of correlation}
between the outcome $X$ of $M_t$ and preparation variable $b$ in $P_{t,b}$:
\begin{equation}\label{A'expr}
 A'\equiv\frac{1}{6}\sum_{t\in\{1,2,3\}}\sum_{b\in\{0,1\}}p(X=b|M_t,P_{t,b}).
\end{equation}
We will now show that the assumption of noncontextuality places a nontrivial bound on $A'$, namely, $A'\leq\frac{5}{6}$.

\subsubsection{The FCF inequality}
Let us first see why we can't have $A'=1$ in a noncontextual ontological model. In order to have $A'=1$, we require perfect correlation between $X$ and $b$ for each $M_t$ and $P_{t,b}$, i.e. 
each of the $6$ terms $p(X=b|M_t,P_{t,b})=1$. This means that any ontic state in the support of $P_{t,b}$ assigns deterministic 
outcome to $M_t$, for otherwise one cannot have $p(X=b|M_t,P_{t,b})=1$. We therefore have $p(X=0|M_t,P_{t,0})=p(X=1|M_t,P_{t,1})=1$. 
Since $P_t$ is an equal mixture of $P_{t,0}$ and $P_{t,1}$, it follows that all the ontic states in the support of $P_t$ yield deterministic 
outcomes for $M_t$. The operational equivalence of the three preparation procedures $P_t$ then implies that \emph{all} the ontic states relevant to the experiment 
(ontic support of any $P_t$) assign deterministic outcomes to the three measurements $M_t$. In turn, the operational equivalence between 
$M_*$ and a fair coin flip followed by the assumption of measurement noncontextuality applied to this equivalence requires that 
\begin{equation}\label{mnc11}
\forall \lambda\in\Lambda: \frac{1}{3}\xi(X=0|M_1,\lambda)+\frac{1}{3}\xi(X=0|M_2,\lambda)+\frac{1}{3}\xi(X=0|M_3,\lambda)=\frac{1}{2}.
\end{equation}
Since $M_t$ have deterministic response functions, this equation clearly cannot be satisfied, hence we arrive at a contradiction between $A'=1$ and the assumption of 
noncontextuality. To satisfy the condition from measurement noncontextuality at least one of the three response functions has to be indeterministic and that would reduce 
the degree of correlation to $A'<1$. A noncontextual ontological model of such an experiment therefore necessarily requires $A'<1$.

A formal proof of the noncontextuality inequality, $A'\leq\frac{5}{6}$, follows:
\begin{proof}
In the ontological models framework,
\begin{equation}
 A'=\sum_{\lambda\in\Lambda}\frac{1}{6}\sum_{t\in\{1,2,3\}}\sum_{b\in\{0,1\}}\xi(X=b|M_t,\lambda)\mu(\lambda|P_{t,b}).
\end{equation}

There is an upper bound on each response function that is independent of the value of $b$, namely,
\beq
\xi(X=b|M_t,\lambda)\le \eta(M_t,\lambda),
\eeq
where
\beq
\eta(M_t,\lambda)\equiv \max_{b'\in \{0,1\}} \xi(X=b'|M_t, \lambda).
\eeq
We therefore have
\beq
A'\le \frac{1}{3} \sum_{t\in \{ 1,2,3\}} \sum_{\lambda\in \Lambda}\eta(M_t,\lambda)\left( \frac{1}{2} \sum_{b\in\{0,1\}}  \mu(\lambda|P_{t,b})\right),
\eeq
Recalling that $P_t$ is an equal mixture of $P_{t,0}$ and $P_{t,1}$, so that
\beq
\mu(\lambda|P_{t})= \frac{1}{2} \mu(\lambda|P_{t,0}) + \frac{1}{2} \mu(\lambda|P_{t,1}),
\label{mixmu}
\eeq
we can rewrite the bound as simply
\beq
A'\le \frac{1}{3} \sum_{t\in \{ 1,2,3\}}  \sum_{\lambda\in \Lambda} \eta(M_t,\lambda) \mu(\lambda|P_{t}).
\label{Aexplicit2}
\eeq
But recalling Eq.~(\ref{pnc1}) (preparation noncontextuality),
\beq
\forall \lambda \in \Lambda: \mu(\lambda|P_1)=\mu(\lambda|P_2)=\mu(\lambda|P_3),
\label{NCimpl2supp}
\eeq
we see that the distribution $\mu(\lambda|P_{t})$ is independent of $t$, so we denote it by $\nu(\lambda)$ and rewrite the bound as
\beq
A'\le \sum_{\lambda\in \Lambda} \left( \frac{1}{3} \sum_{t\in \{ 1,2,3\}}   \eta(M_t,\lambda) \right)\nu(\lambda).
\eeq
This last step is the first use of noncontextuality in the proof because Eq.~\eqref{NCimpl2supp} is derived from preparation noncontextuality and the operational equivalence of Eq.~(\ref{pnc1opeq}). 
It then follows that
\beq
A'\le \max_{\lambda\in \Lambda} \left( \frac{1}{3} \sum_{t\in \{ 1,2,3\}}   \eta(M_t,\lambda) \right).
\label{Aexplicit3}
\eeq

Therefore, if we can provide a nontrivial upper bound on $\frac{1}{3} \sum_{t}  \eta(M_t,\lambda)$ for an arbitrary ontic state $\lambda$, we obtain a nontrivial upper bound on $A'$.
We infer constraints on the possibilities for the triple $(\eta(M_1,\lambda),\eta(M_2,\lambda), \eta(M_3,\lambda))$ from constraints on the possibilities for the triple $(\xi(X\text{=}0|M_1,\lambda),\xi(X\text{=}0|M_2,\lambda), \xi(X\text{=}0|M_3,\lambda))$.

The latter triple
is constrained by measurement noncontextuality as
\beq
\frac{1}{3}\sum_{t\in \{1,2,3\}} \xi(X\text{=}0|M_t,\lambda) = \frac{1}{2}.
\label{constraintsupp}
\eeq
This is the second use of noncontextuality in our proof, because Eq.~\eqref{constraintsupp} is derived from the operational 
equivalence of Eq.~(\ref{mnc1opeq}) and the assumption of measurement 
noncontextuality (Eqs.~(\ref{mnc1},\ref{mnc11})).

The fact that the range of each response function is $[0,1]$ implies that the vector
$$(\xi(X\text{=}0|M_1,\lambda),\allowbreak \xi(X\text{=}0|M_2,\lambda), \xi(X\text{=}0|M_3,\lambda))$$ is 
constrained to the unit cube. The linear constraint of Eq.~\eqref{constraintsupp} implies that these vectors are confined 
to a two-dimensional plane. The intersection of the plane and the cube 
defines the polygon depicted in Fig.~\ref{fcfpolytope}.
\begin{figure}
 \centering
 \includegraphics[scale=0.5]{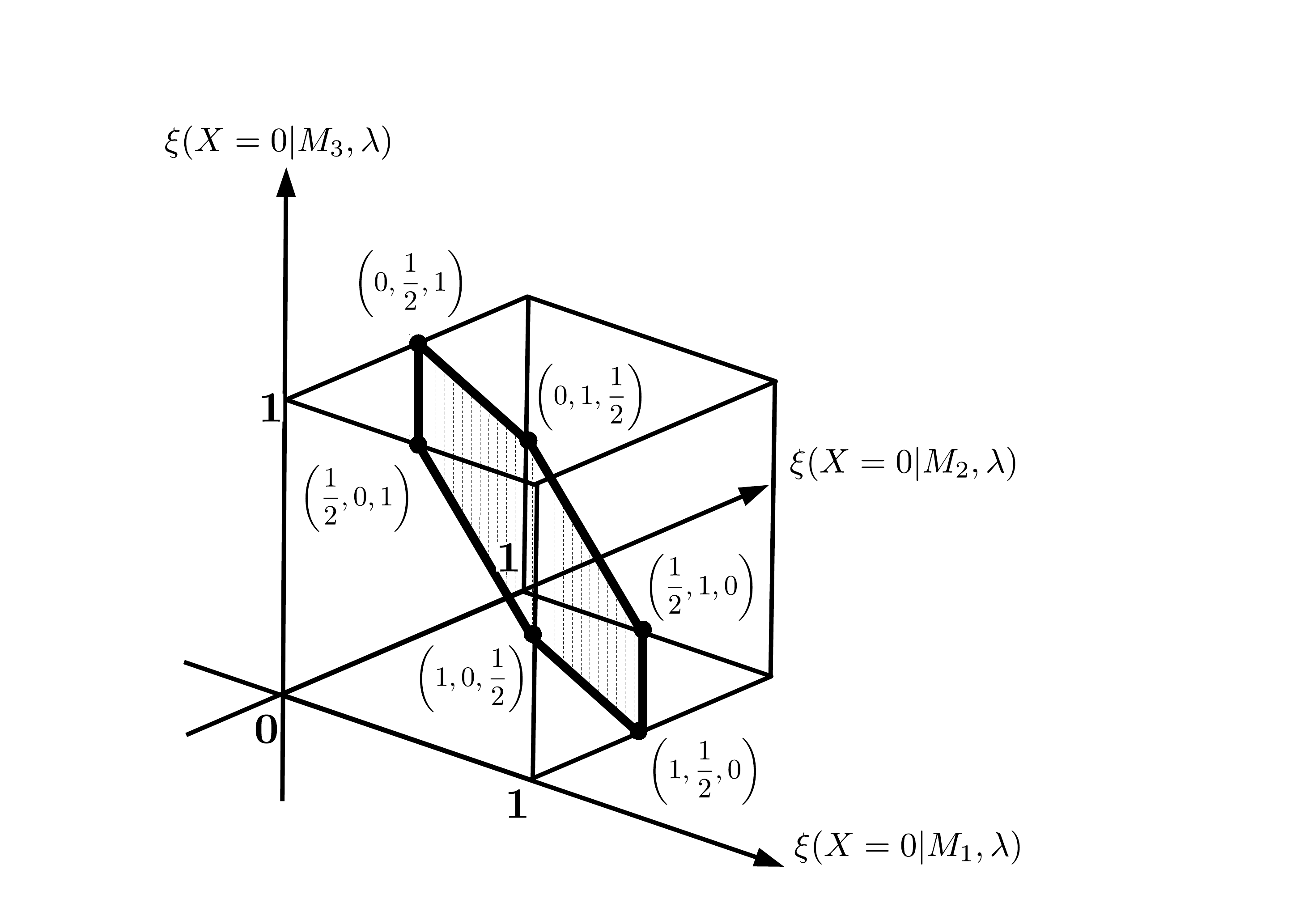}
 \caption{The possible values of $(\xi(X\text{=}0|M_1,\lambda),\xi(X\text{=}0|M_2,\lambda), \xi(X\text{=}0|M_3,\lambda))$.}
\label{fcfpolytope}
\end{figure}
The six vertices of this polygon have coordinates that are a permutation of $(1,\frac{1}{2},0)$.
For every $\lambda$, the vector $$(\xi(X\text{=}0|M_1,\lambda),\xi(X\text{=}0|M_2,\lambda), \xi(X\text{=}0|M_3,\lambda))$$ 
corresponds to a point in the convex hull of these extreme points and given that $\frac{1}{3} \sum_{t}\eta(M_t,\lambda)$
is a convex function of this vector, it suffices to find a bound on the value of this function at the extreme points.
If $\lambda$ corresponds to the extreme point 
$$(\xi(X\text{=}0|M_1,\lambda),\xi(X\text{=}0|M_2,\lambda), \xi(X\text{=}0|M_3,\lambda))=(1,\frac{1}{2},0),$$
then we have $(\eta(M_1,\lambda),\eta(M_2,\lambda), \eta(M_3,\lambda))= (1,\frac{1}{2},1)$, and the other extreme points are 
simply permutations thereof.  It follows that
\beq
\frac{1}{3} \sum_{t}  \eta(M_t,\lambda) \le \frac{5}{6}.
\eeq
Substituting this bound into Eq.~\eqref{Aexplicit3}, we have our result.
\end{proof}

\subsubsection{Tightness of bound: two ontological models}
We now provide an explicit example of a noncontextual ontological model
that saturates our noncontextuality inequality, thus
proving that the noncontextuality inequality is tight, i.e., the upper bound of the
inequality cannot be reduced any further for a noncontextual model.

We also
provide an example of an ontological model that is preparation noncontextual but fails to be measurement noncontextual (i.e. it is measurement {\em contextual}) and that exceeds the bound of 
our noncontextuality inequality.
This makes it clear that preparation noncontextuality alone does not suffice to justify the precise bound in our inequality, the assumption of measurement noncontextuality is a necessary 
ingredient as well. 

Note that there is no point inquiring about the bound for models that are measurement noncontextual but preparation contextual because, as shown in Ref.~\cite{genNC}, quantum theory 
admits of models of this type---the ontological model wherein the pure quantum states are the ontic states (the $\psi$-complete ontological model  in the terminology of Ref.~\cite{harriganspekkens}) is of this sort.

For the two ontological models we present, we begin by specifying the ontic state space $\Lambda$. These are depicted in Figs.~\ref{PNCMNC} and \ref{PNCMC} as pie charts with each slice 
corresponding to a different element of $\Lambda$. We specify the six preparations $P_{t,b}$ by the distributions over $\Lambda$ that they correspond to, denoted $\mu(\lambda|P_{t,b})$ 
(middle left of Figs.~\ref{PNCMNC} and \ref{PNCMC}). We specify the three measurements $M_t$ by the response functions for the $X=0$ outcome, denoted $\xi(0|M_t, \lambda)$ 
(top right of Figs.~\ref{PNCMNC} and \ref{PNCMC}). Finally, we compute the operational probabilities for the various preparation-measurement pairs, using $p(X|M,P)=\sum_{\lambda\in\Lambda}\xi(X|M,\lambda)\mu(\lambda|P)$,
and display the results in 
the $6\times 4$ upper-left-hand corner of Tables~\ref{tablepncmnc} and \ref{tablepncmc}.

\begin{figure}
 \centering
 \includegraphics[scale=0.5]{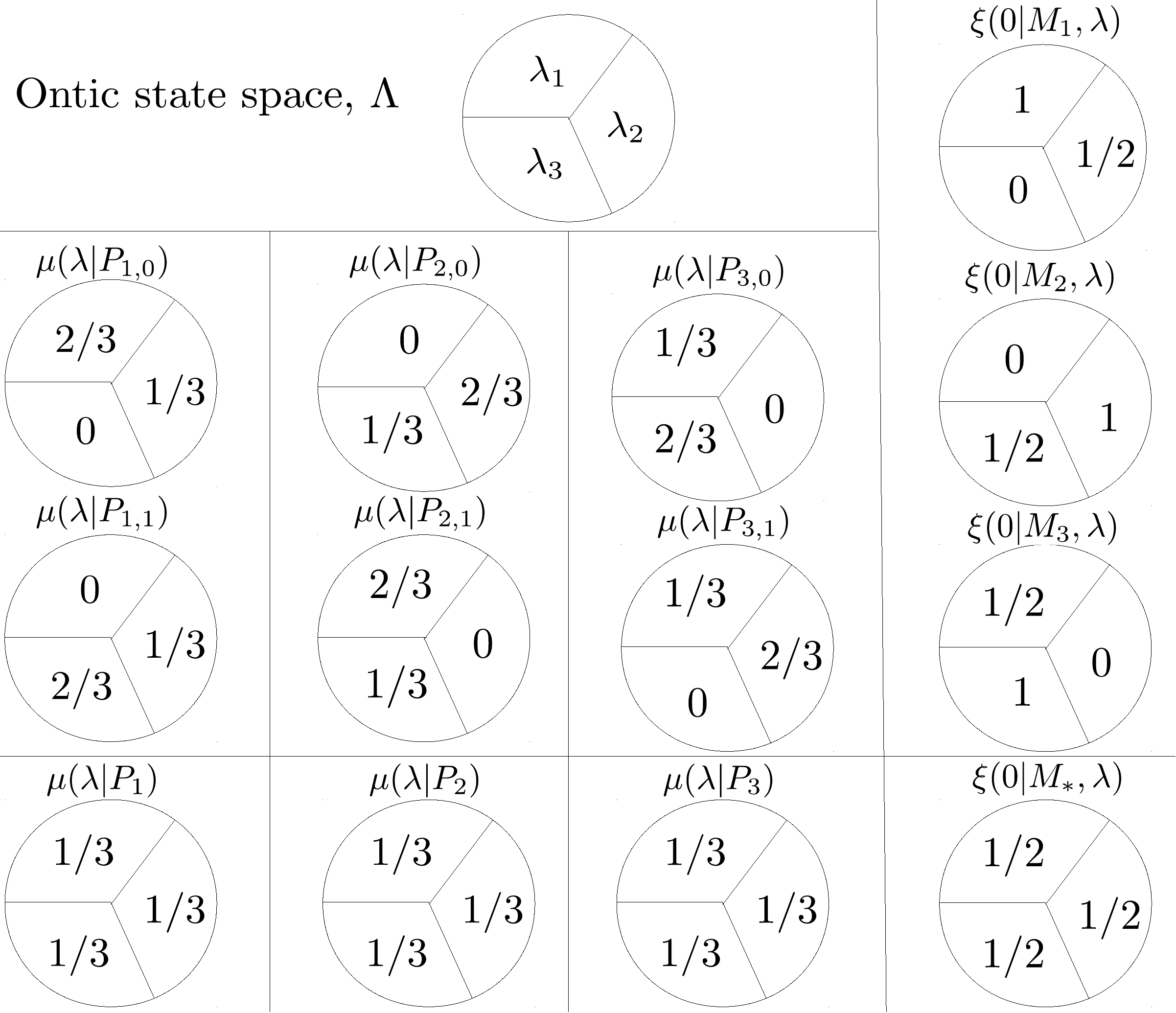}
 \caption{A noncontextual ontological model that saturates the noncontextal bound of our inequality, exhibiting that the bound is tight.}
\label{PNCMNC}
\end{figure}

\begin{table}
\begin{center}
    \begin{tabular}{ | c || c | c | c !{\vrule width 1pt} c | c |}
    \hline
    & $[0|M_1]$ & $[0|M_2]$ & $[0|M_3]$ & $[0|M_*]$ \\ \hline\hline%
    $P_{1,0}$ & \cellcolor{gray} $5/6$ & $1/3$ & $1/3$ & $1/2$ \\ \hline
    $P_{1,1}$ & \cellcolor{gray} $1/6$ & $2/3$ & $2/3$ & $1/2$ \\ \hline

    $P_{2,0}$ & $1/3$ & \cellcolor{gray} $5/6$ & $1/3$ & $1/2$ \\ \hline
    $P_{2,1}$ & $2/3$ & \cellcolor{gray} $1/6$ & $2/3$ & $1/2$ \\ \hline

    $P_{3,0}$ & $1/3$ & $1/3$ & \cellcolor{gray} $5/6$ & $1/2$ \\ \hline
    $P_{3,1}$ & $2/3$ & $2/3$ & \cellcolor{gray} $1/6$ & $1/2$ \\ \hlinewd{1pt}
    $P_1$     & $1/2$ & $1/2$ & $1/2$ & $1/2$ \\ \hline
    $P_2$     & $1/2$ & $1/2$ & $1/2$ & $1/2$ \\ \hline
    $P_3$     & $1/2$ & $1/2$ & $1/2$ & $1/2$ \\ \hline
    \end{tabular}
\end{center}
\caption{Operational statistics from the noncontextual ontological model of Fig.~\ref{PNCMNC}, achieving $A'=5/6$. The shaded
cells correspond to the ones relevant for calculating $A'$.}
\label{tablepncmnc}
\end{table}

\begin{figure}
 \centering
 \includegraphics[scale=0.5]{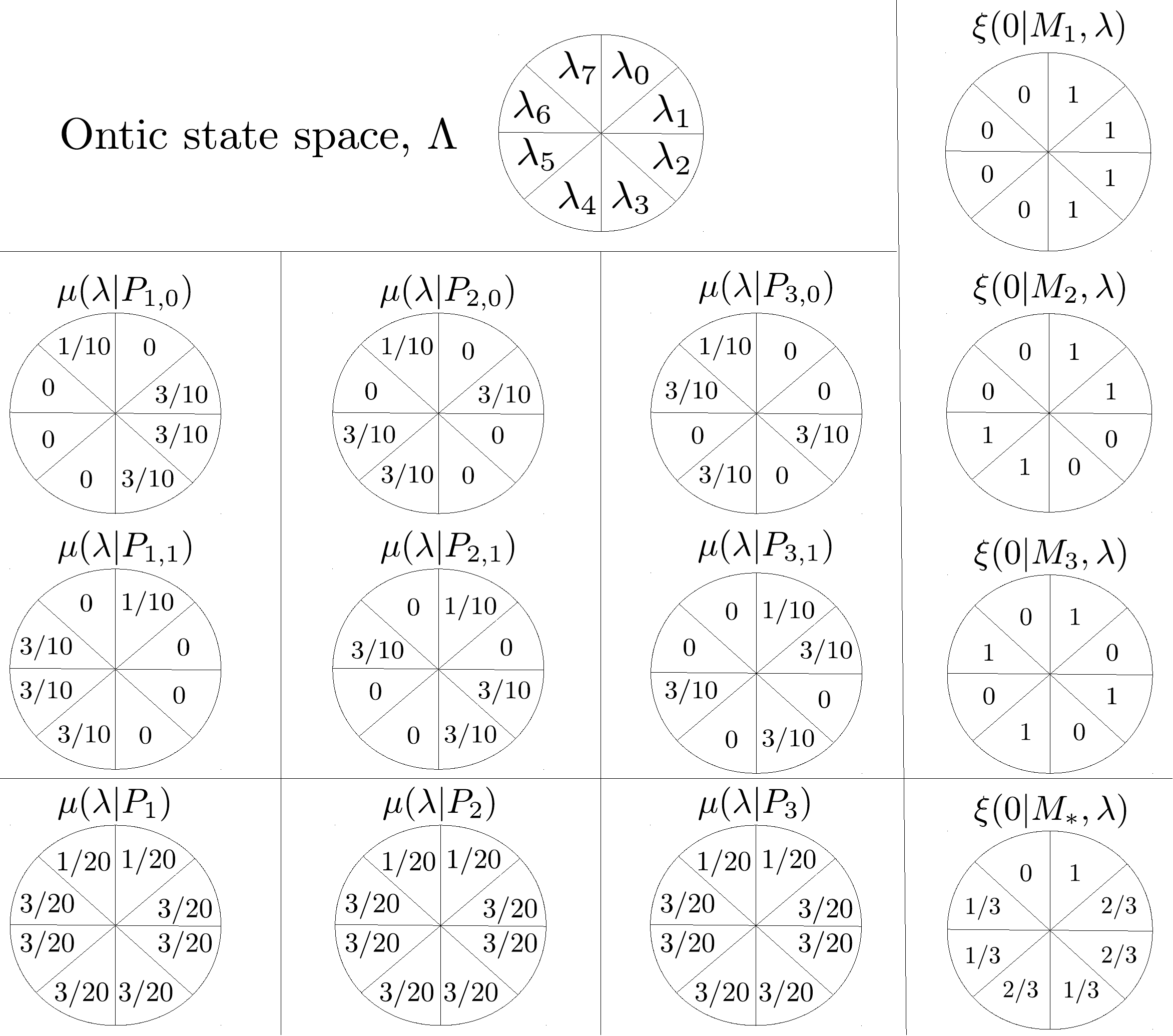}
 \caption{An ontological model that is preparation noncontextual but measurement contextual and that violates our inequality.}
\label{PNCMC}
\end{figure}

\begin{table}
\begin{center}
    \begin{tabular}{ | c || c | c | c !{\vrule width 1pt} c | c |}
    \hline
     & $[0|M_1]$ & $[0|M_2]$ & $[0|M_3]$ & $[0|M_*]$ \\ \hline\hline%
    $P_{1,0}$ & \cellcolor{gray}$9/10$ & $3/10$ & $3/10$ & $1/2$ \\ \hline
    $P_{1,1}$ & \cellcolor{gray}$1/10$ & $7/10$ & $7/10$ & $1/2$ \\ \hline
    $P_{2,0}$ & $3/10$ & \cellcolor{gray}$9/10$ & $3/10$ & $1/2$ \\ \hline
    $P_{2,1}$ & $7/10$ & \cellcolor{gray}$1/10$ & $7/10$ & $1/2$ \\ \hline
    $P_{3,0}$ & $3/10$ & $3/10$ & \cellcolor{gray}$9/10$ & $1/2$ \\ \hline
    $P_{3,1}$ & $7/10$ & $7/10$ & \cellcolor{gray}$1/10$ & $1/2$ \\ \hlinewd{1pt}
    $P_1$     & $1/2$ & $1/2$ & $1/2$ & $1/2$ \\ \hline
    $P_2$     & $1/2$ & $1/2$ & $1/2$ & $1/2$ \\ \hline
    $P_3$     & $1/2$ & $1/2$ & $1/2$ & $1/2$ \\ \hline
    \end{tabular}
\end{center}
\caption{Operational statistics from the preparation noncontextual and measurement contextual ontological model of Fig.~\ref{PNCMC}, achieving $A'=9/10$. The shaded cells
correspond to the ones relevant for calculating $A'$.}
\label{tablepncmc}
\end{table}

In the remainder of each table, we display the operational probabilities for the effective preparations, $P_t$, which are computed from the operational probabilities for the $P_{t,b}$ and the 
fact that $P_t$ is the uniform mixture of $P_{t,0}$ and $P_{t,1}$.  We also display the operational probabilities for the effective measurement $M_*$, which is computed from the operational
probabilities for the $M_t$ and the fact that $M_*$ is a uniform mixture of $M_1$, $M_2$ and $M_3$.

From the tables, we can verify that our two ontological models imply the operational equivalences that we use in the derivation of our noncontextuality inequality.  Specifically, the three 
preparations $P_1$, $P_2$ and $P_3$ yield exactly the same statistics for all of the measurements, and the measurement $M_*$ is indistinguishable from a fair coin flip for all the preparations.

Figs.~\ref{PNCMNC} and \ref{PNCMC} also depict $\mu(\lambda|P_t)$ for $t\in\{1,2,3\}$ for each model (bottom left).  These are determined from the $\mu(\lambda|P_{t,b})$ via Eq.~\eqref{mixmu}.
Similarly, the response function $\xi(0|M_*, \lambda)$, which is determined from $\xi(X=b|M_*,\lambda) = \frac{1}{3}\sum_{t\in \{1,2,3\}} \xi(X=b|M_t,\lambda)$, is displayed in each case 
(bottom right).

Given the operational equivalence of $P_1$, $P_2$ and $P_3$, an ontological model is preparation noncontextual if and only if $\mu(\lambda|P_1)=\mu(\lambda|P_2)=\mu(\lambda|P_3)$ for all 
$\lambda\in \Lambda$. We see, therefore, that both models are preparation noncontextual. Similarly given the operational equivalence of $M_*$ and a fair coin flip, an ontological model is 
measurement noncontextual if and only if $\xi(0|M_*, \lambda)=\frac{1}{2}$ for all $\lambda\in \Lambda$.  We see, therefore, that only the first model is measurement noncontextual.
Note that in the second model, $M_*$ manages to be operationally equivalent to a fair coin flip, despite the fact that when one conditions on a given ontic state $\lambda$, it does not have a
uniformly random response.  This is possible only because the set of distributions is restricted in scope, and the overlaps of these distributions with the response functions always generates 
the uniformly random outcome.  This highlights how an ontological model can do justice to the operational probabilities while failing to be noncontextual.

Finally, using the operational probabilities in the tables, one can compute the value of $A'$ for each model.  It is determined entirely by the operational probabilities in the shaded cells. 
One thereby confirms that $A'=\frac{5}{6}$ in the first model, while $A'=\frac{9}{10}$ in the second model.

\subsubsection{Quantum violation of the noncontextuality inequality}\label{quantumviolation}
Quantum theory predicts that there is a set of preparations and measurements on a qubit 
having the supposed properties and achieving $A'=1$, maximally violating the noncontextuality inequality. 
Take the $M_t$ to be 
represented by the observables $\vec{\sigma} \cdot \hat{n}_t$ where $\vec{\sigma}$ is the vector of Pauli operators and the unit vectors $\{ \hat{n}_1,\hat{n}_2, \hat{n}_3 \}$ are separated by 
120$^{\circ}$ in the $\hat{x}-\hat{z}$ plane of the Bloch sphere of qubit states. The $P_{t,b}$ are  the eigenstates of these observables, where we associate the positive 
eigenstate $ |\text{+}\hat{n}_t\rangle \langle +\hat{n}_t|$ with $b=0$.
To see that the statistical equivalence of Eq.~\eqref{mnc1opeq} is satisfied, it suffices to note that
\beq
 \frac{1}{3} |\text{+}\hat{n}_1\rangle \langle\text{+}\hat{n}_1 |+\frac{1}{3} |\text{+}\hat{n}_2\rangle \langle \text{+}\hat{n}_2 |+\frac{1}{3} |\text{+}\hat{n}_3\rangle \langle\text{+}\hat{n}_3 | =\frac{1}{2}\idn,
 \label{qopequiv1}
\eeq
and to recall that for any density operator $\rho$, ${\rm tr}(\rho \frac{1}{2}\idn)=\frac{1}{2}$.
To see that the statistical equivalence of Eq.~\eqref{pnc1opeq} is satisfied, it suffices to note that for all pairs $t,t'\in\{1,2,3\}$,
\beqa
&  \frac{1}{2} |\text{+}\hat{n}_t\rangle \langle\text{+}\hat{n}_t | +\frac{1}{2} |{-} \hat{n}_t\rangle \langle{-}\hat{n}_t |\nonumber\\
&=\frac{1}{2} |\text{+}\hat{n}_{t'}\rangle \langle\text{+}\hat{n}_{t'} | +\frac{1}{2} |{-}\hat{n}_{t'}\rangle \langle{-}\hat{n}_{t'} | \label{qopequiv2},
\eeqa
which asserts that the average density operator for each value of $t$ is the same, and therefore leads to precisely the same statistics for all measurements.
Finally, it is clear that the outcome of the measurement of $\vec{\sigma} \cdot \hat{n}_t$ is necessarily perfectly correlated with whether the state was $ |\text{+}\hat{n}_t\rangle \langle \text{+}\hat{n}_t |$ or  $|{-}\hat{n}_t\rangle \langle {-}\hat{n}_t |$, so that $A'=1$.

These quantum measurements and preparations are what we seek to implement experimentally, so we refer to them as {\em ideal}, and denote them by $M^{\rm i}_t$ and $P^{\rm i}_{t,b}$.
However, since we want our analysis of the data to be theory-independent, we will have to explicitly verify the relevant operational equivalences instead of taking them 
for granted as we do in quantum theory.

Note that our noncontextuality inequality can accommodate noise in both the measurements and the preparations, up to the point where the average of 
$p(X=b|M_t,P_{t,b})$ drops below $\frac{5}{6}$.  It is in this sense that our inequality does not presume the idealization of noiseless measurements.

\subsection{Tackling the problem of inexact operational equivalences: Secondary procedures}\label{secondarytrick}
Let us recall the problem of inexact operational equivalence: two experimental procedures are seen to be ``close'' to operationally equivalent but not quite exactly equivalent
and the assumption of noncontextuality then cannot, strictly speaking, be applied in the absence of exact operational equivalence. Our general 
solution to this problem with experimental tests of noncontextuality, first implemented explicitly for a particular test in the experiment of 
Ref.~\cite{exptl}, is described in the following steps:\footnote{The procedure outlined here is a direct generalization of the 
specific procedure used in Ref.~\cite{exptl}.}
\begin{enumerate}
 \item {\bf Raw data matrix:} Take the set of experimental procedures (preparations or measurements) that are implemented in the laboratory
and for which the raw data is collected. Tabulate the raw data in a matrix $\mathbb{D}^r$ with entries $\{f_{ij}\}$, where 
$f_{ij}$ is the fraction of $[0|M_i]$ outcomes that occur when binary-outcome measurement procedure $M_i$, with outcome set $\{0,1\}$,
is implemented following preparation procedure $P_j$:

\begin{equation}
 \mathbb{D}^r=\begin{pmatrix}
 f_{11} & f_{12} & \dots & f_{1n}\\
 f_{21} & f_{22} & \dots & f_{2n}\\
 \vdots\\
 f_{m1} & f_{m2} & \dots & f_{mn}
\end{pmatrix},
\end{equation}
where we consider an experiment in which $m$ binary-outcome measurements are implemented on each of $n$ preparations.
Each row denotes a measurement procedure while each column denotes a preparation procedure in this matrix of raw data.
We assume that the primary measurement and preparation procedures implemented in the laboratory to obtain this raw data
arise from a generalized probabilistic theory (GPT) in which $m_t$ ($\leq m$) of the $m$ measurements implemented are 
tomographically complete. This assumption is necessary because verifying the operational equivalence between
two preparations requires that we implement a tomographically complete set of measurements. 
We will obtain estimates of the probabilities associated with the primary procedures by fitting 
the raw data (which is obtained from a large but {\em finite} number of runs of the experiment) to this GPT.

In the experiment of Ref.~\cite{exptl}, we have $m=4$ and $n=8$, and we have assumed $m_t=3$.

\item {\bf Estimating probabilities (fitting raw data to GPT):} Assuming that a finite number, $m_t$, of independent binary-outcome measurements are tomographically complete, 
we fit the raw data in $\mathbb{D}^r$ to a matrix $\mathbb{D}^{\rm p}$ of estimated probabilities arising from a GPT with 
$m_t$ tomographically complete measurements. As with $\mathbb{D}^r$, the rows of $\mathbb{D}^{\rm p}$ denote the primary measurements
and the columns of $\mathbb{D}^{\rm p}$ denote the primary preparations. We will describe the fitting procedure later.
The entries of $\mathbb{D}^{\rm p}$ are given by $p_{ij}\equiv p(0|M_i^{\rm p},P_j^{\rm p})$, i.e. $p_{ij}$ are the estimated probabilities --
given the raw frequencies in $\mathbb{D}^r$ -- of outcome 
$[0|M_i^{\rm p}]$ of measurement procedure $M_i^{\rm p}$ following preparation procedure $P_j^{\rm p}$:
\begin{equation}
 \mathbb{D}^{\rm p}=\begin{pmatrix}
 p_{11} & p_{12} & \dots & p_{1n}\\
 p_{21} & p_{22} & \dots & p_{2n}\\
 \vdots\\
 p_{m1} & p_{m2} & \dots & p_{mn}
\end{pmatrix}.
\end{equation}

\item {\bf Failure of exact operational equivalences:} The primary measurement and preparation procedures will typically 
fail to satisfy -- by however small a margin -- the operational equivalences required in
a particular test of noncontextuality. Can one salvage anything interesting from the primary data despite this failure
of exact operational equivalences? Yes: by inferring secondary procedures.

\item {\bf Secondary procedures (restoring exact operational equivalences):} 
For the set of primary procedures, consider their convex hull, i.e. the set of all procedures that can be obtained from the primary set by 
probabilistically mixing the elements of the primary set. Since all these procedures constitute post-processing 
of the probabilities estimated from the raw data, we can define \emph{secondary procedures} corresponding to the \emph{primary procedures}
such that for each primary procedure relevant to the noncontextuality inequality we have a secondary procedure that is 
as close as possible to the primary procedure but which is chosen so that the operational equivalences required for testing noncontextuality are exactly satisfied for the 
chosen set of secondary procedures. Being able to find such a set of secondary procedures signals a good experiment that can meaningfully test noncontextuality: if such 
secondary procedures cannot be found, then the data will have to be discarded and the experiment repeated more carefully. This needs
to be done only for those procedures (preparations and \slash or measurements) which are required to satisfy some operational equivalences for the particular
test of noncontextuality.

In the case of preparations, the post-processing involves probabilistic mixtures of 
the primary preparation procedures (the columns of $\mathbb{D}^{\rm p}$, denoted by ${\bf P}^{\rm p}_k$ with $k=1,\dots,n$)
to define the secondary preparation procedures (the columns of
secondary data matrix $\mathbb{D}^{\rm s}$, denoted by ${\bf P}^{\rm s}_j$ with $j=1,\dots,n'$, $n'\leq n$):
\begin{equation}
{\bf P}^{\rm s}_j=\sum_{k=1}^n u_{jk}{\bf P}^{\rm p}_k, \text{ where }u_{jk}\geq 0\quad\forall j,k, \text{ and }\sum_ku_{jk}=1 \quad\forall j\in\{1,2,\dots,n'\}.
\end{equation}
To keep the secondary preparation procedures as close as possible to the primary ones, we maximize the function
\begin{equation}
 C_{\rm P}\equiv\frac{1}{n'}\sum_{j=1}^{n'} u_{jj},
\end{equation}
subject to the constraints imposed by the operational equivalences required between the preparation procedures
for the particular test of noncontextuality.

In the case of measurements, extremal\footnote{i.e. those which cannot be obtained from probabilistic
mixing of primary events $[0|M_l^{\rm p}]$.} post-processings
of primary events $[0|M_l^{\rm p}]$ also include: 1) relabelling the outcomes of a 
measurement $M_l^{\rm p}$ to define a flipped-outcome measurement $M_{\lnot l}^{\rm p}$ such that $[X|M_l^{\rm p}]\simeq[1-X|M_{\lnot l}^{\rm p}]$, $X\in\{0,1\}$,
2) the measurement event that is certain to occur (i.e. obtaining outcome `0' for all preparation procedures), denoted by a row vector {\bf1} 
with all probabilities $1$, 3) the measurement event that is certain {\em not} to occur (i.e. not obtaining outcome 
`0' for any of the preparation procedures or, equivalently, obtaining outcome `1' for all preparation procedures),
denoted by a row vector {\bf0} with all probabilities $0$. The secondary measurement procedures (the rows of secondary 
data matrix $\mathbb{D}^{\rm s}$, denoted by ${\bf M}^{\rm s}_i$ with $i=1,\dots,m'$, $m'\leq m$) 
are obtained by probabilistic mixing of all the primary events $[0|M_l^{\rm p}]$ (denoted by row vectors ${\bf M}^{\rm p}_l$ with $l=1,\dots,m$)
with their 
extremal post-processings $[0|M_{\lnot l}^{\rm p}]$ (denoted by row vectors ${\bf 1}-{\bf M}^{\rm p}_l$ with $l=1,\dots,m$), {\bf0}, and {\bf1} events:
\begin{equation}
 {\bf M}^{\rm s}_i=\sum_{l=1}^m v_{il}{\bf M}^{\rm p}_l+v_{i0}{\bf 0}+v_{i1}{\bf 1}+\sum_{l'=1}^m v_{i,\lnot l'}({\bf 1}-{\bf M}^{\rm p}_{l'}),
\end{equation}
where $\sum_{l=1}^m v_{il}+v_{i0}+v_{i1}+\sum_{l'=1}^m v_{i,\lnot l'}=1$ for all $i$ and all the weights $v_{il},v_{i,\lnot l'},v_{i0},v_{i1}$
are nonnegative. Note that the purpose of allowing mixtures of {\bf0}, {\bf1}, and  ${\bf 1}-{\bf M}^{\rm p}_l$ besides the 
original set of ${\bf M}^{\rm p}_l$ is to permit greater freedom in the choice of secondary measurement procedures 
that can satisfy all the required operational equivalences exactly and are also as close as possible to the primary 
set of measurement procedures. The latter requirement is motivated by the fact that an experimenter intends to 
implement primary procedures -- both preparations and measurements -- that maximize the violation of noncontextuality, hence
we want to infer secondary procedures as close to the intended ones as possible while at the same time satisfying the prerequisite
of exact operational equivalences for a test of noncontextuality.

To keep the secondary measurement procedures as close as possible to the primary ones, we maximize the function
\begin{equation}
 C_{\rm M}\equiv\frac{1}{m'}\sum_{i=1}^{m'} v_{ii},
\end{equation}
subject to the constraints imposed by the operational equivalences required between the measurement procedures
for the particular test of noncontextuality.

All in all, the entries $s_{ij}\equiv p(0|M^{\rm s}_i,P^{\rm s}_j)$ of the $m'\times n'$ secondary data matrix $\mathbb{D}^{\rm s}$ are given by:
\begin{equation}
s_{ij}=\sum_{k=1}^{n}u_{jk}\left[\sum_{l=1}^{m}v_{il}p_{lk}+v_{i0}0+v_{i1}1+\sum_{l'=1}^{m}v_{i,\lnot l'}\left(1-p_{l'k}\right)\right]
\end{equation}
These entries belonging to $\mathbb{D}^{\rm s}$ are the ones that are finally used in checking whether a particular noncontextuality inequality
is violated or not, since they satisfy the required operational equivalences exactly.

In the experiment of Ref.~\cite{exptl}, $m'=3$ and $n'=6$ and the relevant operational equivalences that need to be exactly satisfied
are those of Eqs.~(\ref{mnc1opeq}) and (\ref{pnc1opeq}).
\end{enumerate}
Note that in describing this algorithm for handling failure of operational equivalence, we have not restricted ourselves to the 
FCF inequality, although the only known experiment (so far) where this algorithm has been implemented tests the FCF inequality \cite{exptl}. 
This algorithm should work, in principle, for \emph{any} test of a noncontextuality inequality including the one 
inspired by the Kochen-Specker theorem derived earlier in this chapter. It is our hope that describing this algorithm in general
terms will help carry out robust experimental tests of noncontextuality in the spirit of the experiment of Ref.~\cite{exptl}.

\subsubsection{Estimating $\mathbb{D}^{\rm p}$: fitting raw data to GPT}
To fit the raw data matrix $\mathbb{D}^r$ to a matrix $\mathbb{D}^{\rm p}$ arising from a GPT, we first need to characterize the set of 
$\mathbb{D}^{\rm p}$ that can arise from a GPT with a specified number of tomographically complete effects.
The following theorem characterizes such $\mathbb{D}^{\rm p}$:

\begin{theorem}\label{Dptheorem}
 A matrix $\mathbb{D}^{\rm p}$ with rows corresponding to effects $$\left\{[0|M_1],[0|M_2],\dots,[0|M_m]\right\}$$
 can arise from a GPT in which $m_t=|\mathcal{M}_{\rm tomo}|<m$ of these effects, say $$\mathcal{M}_{\rm tomo}\equiv\left\{[0|M_1],[0|M_2],\dots,[0|M_{m_t}]\right\},$$
 are tomographically
 complete if and (with a measure zero set of exceptions) only if 
 \begin{equation}\label{Dp}
 \forall c\in\{m_t+1,m_t+2,\dots,m\}: \sum_{i=1}^{m_t}a_{ci}p_{ij}+a_cp_{cj}-1=0\quad\forall j\in\{1,2,\dots,n\},
 \end{equation}
 for some real constants $\{a_{c1},a_{c2},\dots,a_{cm_t},a_c\}_{c=m_t+1}^m$. This set of $m-m_t$ (one for each value of $c$) linear
 constraints characterize the set of $m\times n$ matrices $\mathbb{D}^{\rm p}$ arising from a GPT with $m_t$ tomographically complete effects.
\end{theorem}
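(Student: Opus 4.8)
The plan is to cast the statement as a fact of linear algebra about a generalized probabilistic theory (GPT) representation of the data. In a GPT each preparation $P_j$ is a state vector $s_j$ in a real vector space, each effect $[0|M_i]$ is a covector $E_i$, the operational probability is the bilinear pairing $p_{ij}=E_i\cdot s_j$, and normalization is carried by a distinguished unit effect $u$ with $u\cdot s_j=1$ for every (normalized) preparation. Such a $u$ always exists, since for any binary measurement $M$ its two effects sum to $u$. The key reformulation I would establish first is that $\{E_1,\dots,E_{m_t}\}$ is tomographically complete if and only if every effect of the theory lies in $\mathrm{span}\{E_1,\dots,E_{m_t},u\}$: knowing $E_i\cdot s$ for $i\le m_t$ together with $u\cdot s=1$ then determines $E\cdot s$ for every effect $E$, which is exactly the operational content of tomographic completeness for binary measurements.

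For the sufficiency direction (Eq.~\eqref{Dp} $\Rightarrow$ realizability), I would build the model explicitly. Place the states in $\mathbb{R}^{m_t+1}$ via $s_j:=(1,p_{1j},\dots,p_{m_t,j})^{\rm T}$, let $u$ be the covector reading off the first coordinate and $E_i$ ($i\le m_t$) the covector reading off the $(i+1)$-th coordinate, so that $u\cdot s_j=1$ and $E_i\cdot s_j=p_{ij}$ hold by construction. For each $c>m_t$, the hypothesized relation $\sum_{i=1}^{m_t}a_{ci}p_{ij}+a_c p_{cj}=1$ (with $a_c\ne 0$, which is the meaningful content of the relation) lets me set $E_c:=a_c^{-1}\bigl(u-\sum_{i=1}^{m_t}a_{ci}E_i\bigr)$ and check $E_c\cdot s_j=p_{cj}$. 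Because $E_1,\dots,E_{m_t},u$ are the coordinate covectors, they span the full dual space, so $\{E_1,\dots,E_{m_t}\}$ is tomographically complete and the model reproduces $\mathbb{D}^{\rm p}$ entry by entry.

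For the necessity direction (realizability $\Rightarrow$ Eq.~\eqref{Dp}), I would start from a realization in which $\{E_1,\dots,E_{m_t}\}$ is tomographically complete and expand each remaining effect in the spanning set, $E_c=\sum_{i=1}^{m_t}b_{ci}E_i+b_c u$. Pairing with $s_j$ yields $p_{cj}=\sum_{i=1}^{m_t}b_{ci}p_{ij}+b_c$ for all $j$; when $b_c\ne 0$ this is precisely Eq.~\eqref{Dp} with $a_c=b_c^{-1}$ and $a_{ci}=-b_{ci}/b_c$. The case $b_c=0$ is exactly where the parenthetical ``measure zero'' qualifier is needed: there $E_c\in\mathrm{span}\{E_1,\dots,E_{m_t}\}$ and the data satisfies only the \emph{homogeneous} relation $p_{cj}=\sum_i b_{ci}p_{ij}$, which cannot be written in the inhomogeneous form of Eq.~\eqref{Dp}. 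I would then argue that an effect lying in $\mathrm{span}\{E_1,\dots,E_{m_t}\}$ with no component along $u$ is a nongeneric coincidence, so the exceptional configurations form a measure-zero set.

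I expect the main obstacle to be not the two implications in isolation but the careful bookkeeping around normalization and the unit effect: proving the equivalence between the operational definition of tomographic completeness (the tomographic statistics infer all other statistics) and the linear condition $E\in\mathrm{span}\{E_1,\dots,E_{m_t},u\}$, and then pinning down precisely which configurations form the measure-zero exceptions, namely the $b_c=0$ locus. A secondary point in the sufficiency construction is to confirm that the objects built above really do constitute a valid GPT, with states and effects completable to consistent convex sets on which all probabilities stay in $[0,1]$; but since the constructed effects already return the in-range entries $p_{ij}\in[0,1]$ on the chosen states, this should be routine.
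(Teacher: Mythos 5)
Your proposal is correct and follows essentially the same route as the paper's proof: the sufficiency direction builds the identical explicit model (states $(1,p_{1j},\dots,p_{m_t j})^{\rm T}$, coordinate covectors for the fiducial effects, and ${\bf r}_c=a_c^{-1}({\bf r}_{\mathbb{I}}-\sum_i a_{ci}{\bf r}_i)$ for the rest), and the necessity direction is the same linear-dependence argument in $\mathbb{R}^{m_t+1}$, with your $b_c=0$ locus being exactly the paper's $e'_c=0$ measure-zero exception. Your explicit flagging that $a_c\neq 0$ is needed in the sufficiency direction is a small point the paper leaves implicit, but otherwise the two arguments coincide.
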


\begin{proof}
 {\bf ``only if'':} We assume that $\mathbb{D}^{\rm p}$ belongs to a GPT with $m_t=|\mathcal{M}_{\rm tomo}|$ tomographically complete measurements.
 Let $\mathcal{M}_{\rm tomo}=\{M_{1}^{\rm p},M_{2}^{\rm p},\dots,M_{m_t}^{\rm p}\}$ be such a set of \emph{fiducial} measurements tomographically complete 
 for a system, so that the state of the system given a preparation procedure $P$ is specified by the column vector
 \begin{equation}
  {\bf p}_P=\begin{pmatrix}
           1\\
           p(0|M_{1}^{\rm p},P)\\
           p(0|M_{2}^{\rm p},P)\\
           \vdots\\
           p(0|M_{m_t}^{\rm p},P)
          \end{pmatrix},
 \end{equation}
where the first entry ensures normalization of the state. As shown in Refs.~\cite{hardy5axioms,barrettgpt}, convexity then requires that
the probability of outcome `0' for any measurement $M\in\mathcal{M}$ is given by ${\bf r}_M.{\bf p}_P$ for some vector ${\bf r}_M$.
Let $\{{\bf r}_1,{\bf r}_2,\dots,{\bf r}_m\}$ denote the vectors corresponding to outcome `0' of 
the measurements $\{M_1^{\rm p},M_2^{\rm p},\dots,M_m^{\rm p}\}$ respectively. The measurement event that \emph{always} occurs (e.g. the event of obtaining
outcome `0' or `1' in any binary-outcome measurement) must be represented by ${\bf r}_{\mathbb{I}}=(1,0,0,\dots,0)\in\mathbb{R}^{m_t+1}$ 
(so that ${\bf r}_{\mathbb{I}}.{\bf p}_P=1$ for 
all $P\in\mathcal{P}$).
 
Note that in order to have an experiment that allows for operational equivalences to be verified, the number of measurements carried
out in the experiment, $m$, must be at least as large as the cardinality of the tomographically complete set of measurements, $m_t$,
i.e. $m\geq m_t$. We will therefore presume that our experimental test of noncontextuality has $m>m_t$ in what follows.
Since $\{\{{\bf r}_i\}_{i=1}^m,{\bf r}_{\mathbb{I}}\}$ is a set of $m+1$ vectors in $\mathbb{R}^{m_t+1}$, any $m_t+2$ element 
subset of them, say $\{{\bf r}_1,\dots,{\bf r}_{m_t},{\bf r}_{\mathbb{I}},{\bf r}_c\}$, must necessarily be linearly dependent:
\begin{equation}
 \sum_{i=1}^{m_t}a'_{ci}{\bf r}_i+a'_c{\bf r}_c+e'_c{\bf r}_{\mathbb{I}}=0,
\end{equation}
with $(a'_{c1},a'_{c2},\dots,a'_{cm_t},a'_c,e'_c)\neq(0,0,\dots,0,0,0)$. The set of $\{\{{\bf r}_i\}_{i=1}^{m_t},{\bf r}_c, {\bf r}_{\mathbb{I}}\}$
for which $e'_c=0$ are those where ${\bf r}_{\mathbb{I}}$ is not in the span of $\{\{{\bf r}_i\}_{i=1}^{m_t},{\bf r}_c\}$, which is a 
set of measure zero. To see this, note that the set of vectors $\{{\bf r}_{\mathbb{I}},{\bf r}_{1},{\bf r}_{2},\dots,{\bf r}_{m_t}\}$ forms an
orthonormal basis for $\mathbb{R}^{m_t+1}$, where ${\bf r}_{i}=(0,\dots,1,\dots,0)$ ($i+1$th entry 1, rest 0)
represents the measurement $M_{i}^{\rm p}$ from the fiducial set. Any ${\bf r}_i$ should be expressible as a linear combination of 
$\{\{{\bf r}_{i}\}_{i=1}^{m_t},{\bf r}_{\mathbb{I}}\}$. Restricting the set $\{\{{\bf r}_i\}_{i=1}^{m_t},{\bf r}_c\}$ to the 
subspace orthogonal to ${\bf r}_{\mathbb{I}}$ --- so that $e'_c=0$ and ${\bf r}_{\mathbb{I}}$ is not in the span of 
$\{\{{\bf r}_i\}_{i=1}^{m_t},{\bf r}_c\}$ --- means that the
$\{\{{\bf r}_i\}_{i=1}^{m_t},{\bf r}_c\}$ lie in a one dimension lower subspace $\mathbb{R}^{m_t}$ of $\mathbb{R}^{m_t+1}$, 
hence they form a measure
zero set. Therefore, we can generically ensure $e'_c\neq0$ and divide throughout by $-e'_c$ to obtain
\begin{equation}
\sum_{i=1}^{m_t}a_{ci}{\bf r}_i+a_c{\bf r}_c-{\bf r}_{\mathbb{I}}=0,\text{ where } a_c=-a'_c/e'_c, a_{ci}=-a'_{ci}/e'_c \quad\forall i.
\end{equation}
Let ${\bf p}_j$ denote the vector representing preparation $P_j^{\rm p}$, then we have $p_{ij}={\bf r}_i.{\bf p}_j$, where 
$p_{ij}=p(0|M^{\rm p}_i,P^{\rm p}_j)$ is the $(i,j)$th entry of $\mathbb{D}^{\rm p}$. Taking dot product with ${\bf p}_j$, we then have
\begin{equation}
 \sum_{i=1}^{m_t}a_{ci}p_{ij}+a_cp_{cj}-1=0
\end{equation}
for some real constants $\{a_{c1},a_{c2},\dots,a_{cm_t},a_c\}$, for every $c\in\{m_t+1,\dots,m\}$. We now prove the converse.

{\bf ``if'': } Since we require some subset of $\{M_1^{\rm p},M_2^{\rm p},\dots,M_m^{\rm p}\}$ to be a fiducial set in order to be able to verify
operational equivalences for a test of noncontextuality (recall that $m>m_t$ is a necessary
condition for this), we take $\{M_1^{\rm p},M_2^{\rm p},\dots,M_{m_t}^{\rm p}\}$ to be a fiducial set of measurements as before,
so that preparation procedure $P_j^{\rm p}$ corresponds to the vector
\begin{equation}
 {\bf p}_j=\begin{pmatrix}
            1\\
            p_{1j}\\
            p_{2j}\\
            \vdots\\
            p_{m_tj}
           \end{pmatrix},
\end{equation}
where $p_{ij}=p(0|M^{\rm p}_i,P^{\rm p}_j)$.

Now we can recover $\mathbb{D}^{\rm p}$ if $M_i^{\rm p}$ ($i=1,\dots,m_t$) is represented by ${\bf r}_i=(0,\dots,0,1,0,\dots,0)$ ($(i+1)$th entry 1 and the rest 0),
so that $p_{ij}={\bf r}_i.{\bf p}_j$ for $i\in\{1,\dots,m_t\}$ and, since we are given $\sum_{i=1}^{m_t}a_{ci}p_{ij}+a_cp_{cj}-1=0$
for every $c\in\{m_t+1,\dots,m\}$, we must have
\begin{equation}
\sum_{i=1}^{m_t}a_{ci}p_{ij}+a_cp_{cj}-1=0\Leftrightarrow\left(\sum_{i=1}^{m_t}a_{ci}{\bf r}_i+a_c{\bf r}_c-{\bf r}_{\mathbb{I}}\right).{\bf p}_j=0.
\end{equation}
That is
\begin{equation}
a_c{\bf r}_c.{\bf p}_j=-\left(\sum_{i=1}^{m_t}a_{ci}{\bf r}_i-{\bf r}_{\mathbb{I}}\right).{\bf p}_j,
\end{equation}
so that we can take
\begin{equation}
 {\bf r}_c=(1/a_c,-a_{c1}/a_c,-a_{c2}/a_c,\dots,-a_{cm_t}/a_c)
\end{equation}
for every $c\in\{m_t+1,\dots,m\}$ and thus reconstruct the $m\times n$ matrix $\mathbb{D}^{\rm p}$.
This proves the converse. We therefore have a characterization of $\mathbb{D}^{\rm p}$.
\end{proof}
Geometrically, this theorem means that the $n$ columns of $\mathbb{D}^{\rm p}$ -- each represented by a vector 
${\bf p}_j\in\mathbb{R}^{m_t+1}$ -- lie in the common intersection of
$m_t$-dimensional hyperplanes ($m-m_t$ of them) defined
by the constants $$\{a_{c1},a_{c2},\dots,a_{cm_t},a_c\}_{c=m_t+1}^m,$$ where $c\in\{m_t+1,\dots,m\}$ can be thought of as 
a label for the hyperplanes.

{\bf Note on the number of tomographically complete preparations and measurements:}
Note that if we supplement $\mathbb{D}^{\rm p}$ with an additional row of all 1s,\footnote{This additional row corresponds to 
the trivial measurement that always yields $p(0|M^{\rm p}_{\mathbb{I}},P)=1$ for any preparation $P\in\mathcal{P}$ and is
represented in the GPT by ${\bf r}_{\mathbb{I}}=(1,0,\dots,0)\in\mathbb{R}^{m_t+1}$.} then we have the rank of the ``new'' 
matrix $$\mathbb{D}^{\rm p}_{\rm new}=\begin{pmatrix}
                                1& 1& \dots &1\\
                                p_{11}& p_{12}&\dots &p_{1n}\\
                                \vdots\\
                                p_{m1}& p_{m2}&\dots &p_{mn}                               
                               \end{pmatrix}$$ given by $m_t+1$.
This is easy to see from Eq.~\eqref{Dp}, which can be rewritten in terms of the rows of $\mathbb{D}^{\rm p}_{\rm new}$ as 
\begin{equation}
\forall c\in\{m_t+1,\dots,m\}: \sum_{i=1}^{m_t}a_{ci}{\bf}R_i+a_cR_c-R_{\mathbb{I}}=0,
\end{equation}
where the first row of $\mathbb{D}^{\rm p}_{\rm new}$ is denoted by 
$R_{\mathbb{I}}=(1,1,\dots,1)=({\bf r}_{\mathbb{I}}.{\bf p}_1,{\bf r}_{\mathbb{I}}.{\bf p}_2,\dots,{\bf r}_{\mathbb{I}}.{\bf p}_n)$,
the next $m_t$ rows are denoted by $R_i=\left(p_{ij}\right)_{j=1}^n=\left({\bf r}_i.{\bf p}_j\right)_{j=1}^n$ for $i\in\{1,2,\dots,m_t\}$,
and the remaining rows are denoted by $R_c=\left(p_{cj}\right)_{j=1}^n=\left({\bf r}_c.{\bf p}_j\right)_{j=1}^n$ for 
$c\in\{m_t+1,m_t+2,\dots,m\}$. That is, any row $R_c$ can be expressed as a linear combination of the rows $\{\{R_i\}_{i=1}^{m_t},R_{\mathbb{I}}\}$.
Hence the row rank of $\mathbb{D}^{\rm p}_{\rm new}$ is no more than $m_t+1$.
We also know from the proof characterizing $\mathbb{D}^{\rm p}$ that $\{\{{\bf r}_i\}_{i=1}^{m_t},{\bf r}_{\mathbb{I}}\}$ are linearly 
independent in $\mathbb{R}^{m_t+1}$, hence the corresponding rows $\{\{R_i\}_{i=1}^{m_t},R_{\mathbb{I}}\}$ in $\mathbb{D}^{\rm p}_{\rm new}$
are also linearly independent. Therefore the rank of $\mathbb{D}^{\rm p}_{\rm new}$ is indeed exactly $m_t+1$.

Since row rank = column rank for any matrix (in particular, $\mathbb{D}^{\rm p}_{\rm new}$),
we have $m_t+1$ preparations that are tomographically complete for the measurements in the GPT 
and the number of nontrivial effects that are tomographically complete is $m_t$ 
(excluding the trivial effect -- corresponding to doing nothing to the system -- represented by ${\bf r}_{\mathbb{I}}$ 
in the GPT 
and corresponding to the row $R_{\mathbb{I}}$). Thus, as we discussed in Chapter 1, for a qubit in quantum theory,
the trivial effect (corresponding to identity) is not counted in the list of tomographically complete effects because the 
quantum state is normalized to trace 1 and only requires three real parameters to be specified. On the other hand, 
a qubit effect
requires $4$ real parameters to be specified.

{\bf Limitations of the tomographic completeness assumption:} 
Ideally, we would like an experiment to involve as few measurements as possible in order to reduce the complexity involved
in carrying it out as well as in the data analysis: the minimum $m$, as we have noted, is $m_t+1$ if we want to be able to 
verify
operational equivalences, besides gathering evidence that $m_t$ effects are indeed tomographically complete.\footnote{Indicated
by a good fit of data from $m>m_t$ effects carried out in an experiment to a GPT with $m_t$ tomographically complete effects.
If we only had data from $m=m_t$ effects, then the fit can always be done and there is no opportunity for the experiment to 
hint at a failure of the assumption of tomographic completeness.}
In the experiment of Ref.~\cite{exptl}, the number of effects $m=4$ and the GPT has $m_t=3$ tomographically complete effects.

At the same time, the more the number of measurements carried out ($m>m_t$), the more are the opportunities 
($m-m_t$ linear constraints)
for a failure of the assumption of tomographic completeness when trying to fit the raw data to GPT. For example, 
for $m=m_t$, such a fit
can always be done for the experimental data, for $m=m_t+1$ it is a matter of experimental data being consistent 
with a GPT with $m_t$
tomographically complete effects (one linear constraint), and for $m>m_t$ the experimental data has to contend with
$m-m_t$ linear constraints.
Inability to find a good fit to GPT could be on account of 
either a bad experiment or a failure of the assumption of tomographic completeness required to perform the fit.
However, inability to find a good fit {\em despite careful and repeated experiments} could well be on account of
a failure of assuming fewer tomographically complete measurements than are necessary to explain the data from the experiment.

Justifying the assumption of tomographic completeness of some finite set of effects is therefore a crucial step towards 
robust experimental tests of noncontextuality. In principle, it is always possible that some measurement that 
we did not do in a noncontextuality experiment would have demonstrated a failure of tomographic completeness: 
there is no way to
{\em definitively} rule out the existence of such a hypothetical measurement on purely operational grounds: 
all we can do is gather 
evidence by fitting as many measurements as possible to the GPT. On the other hand, if we were assuming quantum 
theory,\footnote{In
particular, that the system under investigation is a quantum system of a known dimension $d$.} we would not have to 
worry about this assumption because then the structure of the theory specifies the cardinality of the tomographically 
complete set
and insofar as we believe the experiment is well-modelled by quantum theory, we can definitively rule out a failure of 
tomographic completeness (since that would imply a failure of quantum theory).

We have hinted at ways in which the assumption of tomographic completeness can be seen to fail in 
an experiment but more exhaustive tests of this assumption would be crucial in making tests of noncontextuality as free of 
this ``failure of tomographic completeness'' loophole as possible.

{\bf The fitting procedure: } We have to fit $m-m_t$ hyperplanes
to the $n$ points that make up the columns of 
$\mathbb{D}^r$. Each column of $\mathbb{D}^r$ is then mapped to its closest point on the hyperplanes and these points 
make up the columns of $\mathbb{D}^{\rm p}$.

Following Ref.~\cite{exptl}, we can perform a weighted total least-squares fit as follows:
\begin{enumerate}
 \item Let use denote by $\triangle f_{ij}$ the uncertainty (say, due to statistical errors in the experiment) associated with
 the corresponding element $f_{ij}$ of $\mathbb{D}^r$.
 The weighted distance, $\chi_j$, between the $j$th columns of $\mathbb{D}^r$ and $\mathbb{D}^{\rm p}$ is then defined as 
 \begin{equation}
  \chi_j\equiv\sqrt{\sum_{i=1}^m\frac{\left(f_{ij}-p_{ij}\right)^2}{\left(\Delta f_{ij}\right)^2}}
 \end{equation}
 \item Finding the hyperplanes-of-best-fit is then the following minimization problem
 \begin{equation}
\begin{aligned}
& \underset{\{\{p_{ij},a_{ci}\}_{i=1}^{m_t},p_{cj},a_c\}_{c=m_t+1}^m}{\text{minimize}}
& & \chi^2=\sum_{j=1}^n\chi_j^2 \\
& \text{ subject to }
& &  \forall c\in\{m_t+1,\dots,m\}: \sum_{i=1}^{m_t}a_{ci}p_{ij}+a_cp_{cj}-1=0 \forall j\in\{1,\dots,n\}.
\end{aligned}
\end{equation}
\end{enumerate}

\subsection{Applying the method of inferring secondary procedures to the case of FCF inequality}
Let us illustrate the general procedure outlined in the last subsection with the example of the FCF inequality.
The experimental test of the FCF inequality reported in Ref.~\cite{exptl} implemented this procedure. 
Fig.~\ref{fg:setup} shows a schematic of the experimental setup that was used in Ref.~\cite{exptl}.
We refer to Ref.~\cite{exptl} for details of the experiment -- which uses the polarization of single photons -- 
and focus here solely on the theoretical ideas at play.

\begin{figure}[htb!]
  \centering
  \includegraphics[width=0.9\textwidth]{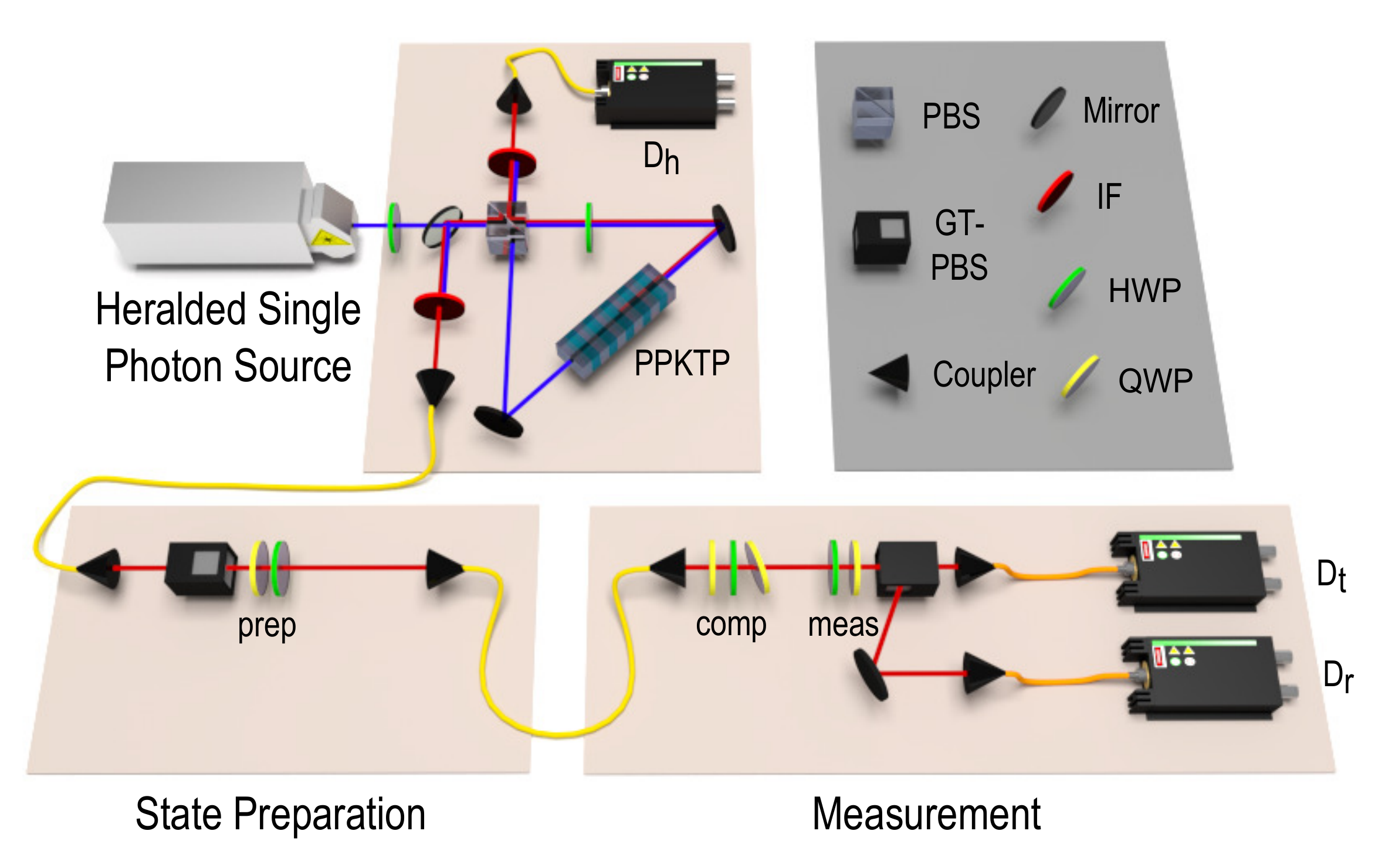}
  \caption{The experimental setup. Polarization-separable photon pairs are created via parametric downconversion, and 
  detection of a photon at $D_h$ heralds the presence of a single photon. The polarization state of this photon is prepared 
  with a polarizer and two waveplates (prep). A single-mode fibre is a spatial filter that decouples beam deflections caused 
  by the state-preparation \text{and measurement} waveplates from the coupling efficiency into the detectors. Three waveplates
  (comp) are set to undo the polarization rotation caused by the fibre. Two waveplates (meas), a polarizing beamsplitter, and 
  detectors $D_r$ and $D_t$ perform a two-outcome measurement on the state. PPKTP, periodically poled potassium titanyl 
  phosphate; PBS, polarizing beamsplitter; GT-PBS, Glan-Taylor polarizing beamsplitter; IF, interference filter; HWP,
  half-waveplate; QWP, quarter-waveplate. Figure credit: Kevin Resch \cite{exptl}.}
  \label{fg:setup}
\end{figure}

\subsubsection{Secondary preparations in quantum theory}
It is easiest to describe the details of our procedure for defining secondary preparations if we make the assumption that 
quantum theory correctly describes the 
experiment.\footnote{Recall that we have identified the {\em ideal} set of quantum states and measurements that lead to maximal
violation of the FCF inequality in Section \ref{quantumviolation}.}
Further on, we will describe the procedure for a generalised probabilistic theory (GPT) for this experiment.

\begin{figure}
  \centering
  \includegraphics[width=1.0\textwidth]{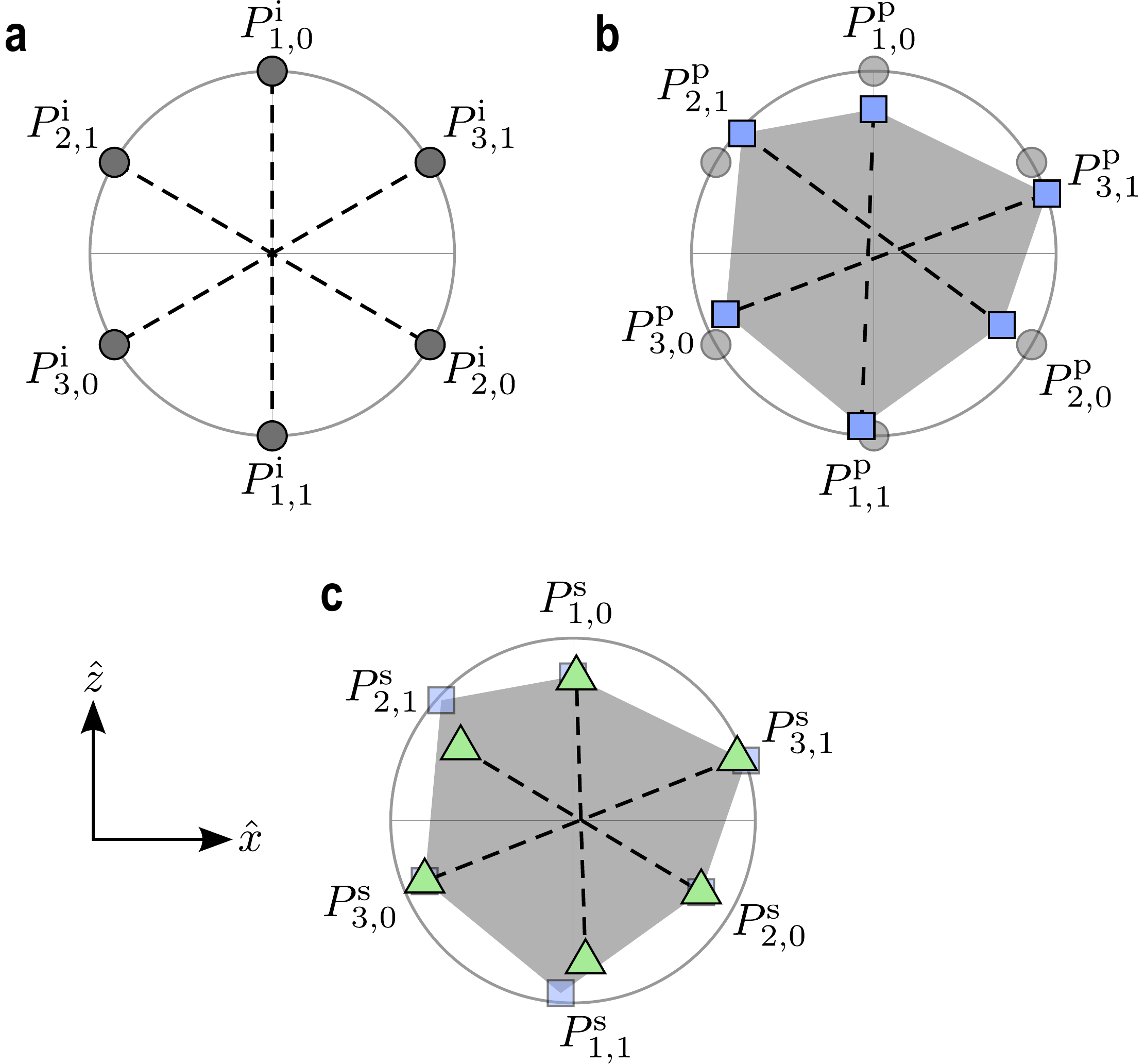}
  \caption{Illustration of our solution to the problem of the failure to achieve strict operational equivalences of preparations
  (under the simplifying assumption that these are confined to the $\hat{x}-\hat{z}$ plane of the Bloch sphere). For a given 
  pair, $P_{t,0}$ and $P_{t,1}$, the midpoint along the line connecting the corresponding points represents their equal mixture, 
  $P_t$. \textbf{a}. The target preparations $P^{\rm i}_{t,b}$, with the coincidence of the midpoints of the three lines 
  illustrating that they satisfy the operational equivalence \eqref{pnc1opeq} exactly. 
  \textbf{b}. Illustration of how errors in the experiment (exaggerated in magnitude) will imply that the realized preparations
  $P^{\rm p}_{t,b}$ (termed primary) will deviate from the ideal. The lines indicate that not only do these preparations fail 
  to satify the operational equivalence \eqref{pnc1opeq}, but since the lines do not meet, no mixtures of the $P^{\rm p}_{t,0}$
  and $P^{\rm p}_{t,1}$ can be found at a single point independent of $t$. The set of preparations corresponding to probabilistic
  mixtures of the $P^{\rm p}_{t,b}$ are depicted by the grey region.
  \textbf{c}. Secondary preparations $P^{\rm s}_{t,b}$ have been chosen from this grey region, with the coincidence of the 
  midpoints of the three lines indicating that the operational equivalence \eqref{pnc1opeq} has been restored. Note that we 
  require only that the mixtures of the three pairs of preparations be the same, not that they correspond to the completely 
  mixed state. Figure credit: Michael Mazurek and Matthew Pusey \cite{exptl}.}
  \label{fg:convex_combs}
\end{figure}

The actual preparations and measurements in the experiment, which we call the {\em primary} procedures and denote by 
$P^{\rm p}_{1,0}$, $P^{\rm p}_{1,1}$, $P^{\rm p}_{2,0}$, $P^{\rm p}_{2,1}$, $P^{\rm p}_{3,0}$, $P^{\rm p}_{3,1}$ and 
$M^{\rm p}_1$, $M^{\rm p}_2$, $M^{\rm p}_3$, necessarily deviate from the ideal versions and consequently their mixtures,
that is, $P^{\rm p}_1$, $P^{\rm p}_2$, $P^{\rm p}_3$ and $M^{\rm p}_*$, fail to achieve strict equality in 
Eqs.~\eqref{mnc1opeq} and \eqref{pnc1opeq}.\footnote{Note that these primary preparation and measurement procedures 
correspond to the quantum states and effects that best fit the raw data collected in the experiment. Of course,
in the actual data analysis reported in the experiment\cite{exptl}, what we fit to the raw data are states and 
effects from a generalized probabilistic theory rather than restricting ourselves to quantum theory.}

We solve this problem as follows. 

\begin{itemize}
 \item 
From the outcome probabilities on the six primary preparations, one can infer the outcome 
probabilities on the entire family of probabilistic mixtures of these. It is possible to find within this family many sets of
six preparations, $P^{\rm s}_{1,0}$, $P^{\rm s}_{1,1}$, $P^{\rm s}_{2,0}$, $P^{\rm s}_{2,1}$, $P^{\rm s}_{3,0}$, 
$P^{\rm s}_{3,1}$, which define mixed preparations $P^{\rm s}_1$, $P^{\rm s}_2$, $P^{\rm s}_3$ that satisfy the operational 
equivalences of Eq.~\eqref{pnc1opeq} \emph{exactly}. We call the $P^{\rm s}_{t,b}$ {\em secondary} preparations.
We can define secondary measurements $M^{\rm s}_1$, $M^{\rm s}_2$, $M^{\rm s}_3$ and their uniform mixture $M^{\rm s}_{*}$ in
a similar fashion. The essence of our approach, then, is to identify such secondary sets of procedures and use {\em these} to 
calculate $A'$.  If quantum theory is correct, then we expect to get a value of $A$ close to 1 if and only if we can find 
suitable secondary procedures that are close to the ideal versions.


\item In Fig.~\ref{fg:convex_combs}, we describe the construction of secondary preparations in a simplified example of six 
density 
operators that deviate from the ideal states only {\em within} the $\hat{x}-\hat{z}$ plane of the Bloch sphere.
In practice, the six density operators realized in the experiment will not quite lie in a plane. We use the same idea to 
contend with this, but with one refinement: we supplement our set of ideal preparations with two additional ones, denoted 
$P^{\rm i}_{4,0}$ and $P^{\rm i}_{4,1}$ corresponding to the two eigenstates of $\vec{\sigma}\cdot \hat{y}$. The two procedures
that are actually realized in the experiment are denoted $P^{\rm p}_{4,0}$ and $P^{\rm p}_{4,1}$ and are considered supplements
to the primary set. We then search for our six secondary preparations among the probabilistic mixtures of this supplemented 
set of primaries rather than among the probabilistic mixtures of the original set. Without this refinement, it can happen that
one cannot find six secondary preparations that are close to the ideal versions. 

\item To see why this refinement is needed, consider the case where the six primary preparations deviate from the ideals within 
the bulk of the Bloch sphere. The fact that our proof only requires that the secondary preparations satisfy 
Eq.~\eqref{qopequiv2}
means that the different pairs, $P^{\rm s}_{t,0}$ and $P^{\rm s}_{t,1}$ for $t\in \{1,2,3\}$, need not all mix to the center of
the Bloch sphere, but only to the {\em same} state. It follows that the three pairs need not be coplanar in the Bloch sphere. 
Note, however, for any {\em two} values, $t$ and $t'$, the four preparations 
$P^{\rm s}_{t,0}, P^{\rm s}_{t,1}, P^{\rm s}_{t',0}, P^{\rm s}_{t',1}$ do need to be coplanar.
Any mixing procedure defines a  map from each of the primary preparations $P^{\rm p}_{t,b}$ to the corresponding secondary 
preparation $P^{\rm s}_{t,b}$, which can be visualized as a motion of
the corresponding point within the Bloch sphere. To ensure that the six secondary preparations approximate well the ideal 
preparations while also defining mixed preparations 
$P^{\rm s}_1$, $P^{\rm s}_2$ and $P^{\rm s}_3$ that satisfy the appropriate operational equivalences, the mixing procedure 
must allow for motion in the $\pm\hat{y}$ direction. Consider what 
happens if one tries to achieve such motion {\em without} supplementing the primary set with the eigenstates 
of $\vec{\sigma}\cdot \hat{y}$.  A given point that is biased towards $-\hat{y}$ can
be moved in the $+\hat{y}$ direction by mixing it with another point that has less bias in the $-\hat{y}$ direction.
However, because the primary preparations are widely separated within the
$\hat{x}-\hat{z}$ plane, achieving a small motion in $+\hat{y}$ direction in
this fashion comes at the price of a large motion within the $\hat{x}-\hat{z}$ plane,
implying a significant motion away from the ideal. This problem is particularly pronounced if the primary
points are very close to coplanar.

\item The best way to move a given point in the $\pm\hat{y}$ direction is to mix it with a point that is at roughly the same location
within the $\hat{x}-\hat{z}$ plane, but displaced in the 
$\pm\hat{y}$ direction. This scheme, however, would require supplementing the primary set with one or two additional 
preparations for every one of its elements. Supplementing the original set 
with just the two  eigenstates of $\vec{\sigma}\cdot \hat{y}$ constitutes a good compromise between keeping the number of 
preparations low and ensuring that the secondary preparations are close
to the ideal. Because the $\vec{\sigma}\cdot \hat{y}$ eigenstates have the greatest possible distance from the 
$\hat{x}-\hat{z}$ plane, they can be used to move any point close to that plane in
the $\pm \hat{y}$ direction while generating only a modest motion within the $\hat{x}-\hat{z}$ plane.
\end{itemize}

\subsubsection{Secondary measurements in quantum theory}
Just as with the case of preparations, we solve the problem of no strict statistical equivalences for measurements by noting
that from the primary set of measurements, $M^{\rm p}_1$, $M^{\rm p}_2$ and $M^{\rm p}_3$, one can infer the statistics of a 
large family of measurements, and one can find three measurements within this family, called the secondary measurements and
denoted $M^{\rm s}_1$, $M^{\rm s}_2$ and $M^{\rm s}_3$, such that their mixture, $M^{\rm s}_*$, satisfies the operational 
equivalence of Eq.~\eqref{mnc1opeq} {\em exactly}. To give the details of our approach, it is again useful to begin with the 
quantum description.

\begin{itemize}
 \item 
Just as a density operator can be written 
$\rho = \frac12(\idn + \vec r \cdot \vec\sigma)$ to define a three-dimensional Bloch vector $\vec r$, an effect can be written 
$E = \frac12(e_0\idn + \vec e\cdot\vec\sigma)$ to define a four-dimensional Bloch-like vector $(e_0, \vec e)$, whose four 
components we will call the $\hat \idn$, $\hat x$, $\hat y$ and $\hat z$ components.  Note that $e_0={\rm tr}(E)$, while 
$e_x = {\rm tr}(\vec{\sigma}\cdot \hat{x} E)$ and so forth. The eigenvalues of $E$ are expressed in terms of these components 
as $\frac{1}{2}(e_o \pm |\vec{e}|)$. Consequently, the constraint that $0 \le E \le \idn$ takes the form of three inequalities
$0 \le e_o \le 2$, $|\vec{e}| \le e_0$ and $|\vec{e}| \le 2-e_0$. This corresponds to the intersection of two cones. For the 
case $e_y=0$, the Bloch representation of the effect space is three-dimensional and is displayed in Fig.~\ref{Blochcone}.
When portraying binary-outcome measurements associated to a POVM $\{ E, \idn - E\}$ in this representation, it is sufficient to
portray the Bloch-like vector $(e_0,\vec{e})$ for outcome $E$ alone, given that the vector for $\idn - E$ is simply 
$(2-e_0, -\vec{e})$. Similarly, to describe any mixture of two such POVMs, it is sufficient to describe the mixture of the 
effects corresponding to the first outcome.

\item
The family of measurements that is defined in terms of the primary set is slightly different than what we had for preparations.  The reason is that each primary measurement on its own generates
a family of measurements by probabilistic post-processing of its outcome. If we denote the outcome of the original measurement by $X$ and that of the processed measurement by $X'$, then the 
probabilistic processing is a conditional probability $p(X'|X)$.  It is sufficient to determine the convexly-extremal post-processings, since all others can be obtained from these by mixing. 
For the case of binary outcome measurements considered here, there are just four extremal post-processings: the identity process, $p(X'|X)=\delta_{X',X}$; the process that flips the outcome,
$p(X'|X)=\delta_{X',X\oplus 1}$; the process that always generates the outcome $X'=0$, $p(X'|X)=\delta_{X',0}$; and the process that always generates the outcome $X'=1$, $p(X'|X)=\delta_{X',1}$.
Applying these to our three primary measurements, we obtain
eight measurements in all: the two that generate a fixed outcome, the three originals, and the three originals with the outcome flipped.   If the set of primary measurements corresponded to the
ideal set, then the eight extremal post-processings would correspond to the observables $0, \idn, \vec{\sigma}\cdot \hat{n_1}, {-}\vec{\sigma}\cdot \hat{n_1}, \vec{\sigma}\cdot \hat{n_2}, {-}\vec{\sigma}\cdot \hat{n_2}, \vec{\sigma}\cdot \hat{n_3}, {-}\vec{\sigma}\cdot \hat{n_3}$.
In practice, the last six measurements will be unsharp.  These eight measurements can then be mixed probabilistically to define the family of measurements from which the secondary measurements 
must be chosen.  We refer to this family as the {\em convex hull of the post-processings} of the primary set.

\item 
Consider a simplified example wherein the primary measurements have Bloch-like vectors with vanishing component
along $\hat{y}$, $e_y = 0$, and unit component along $\idn$, $e_0=1$, so that 
$E = \frac12(\idn + e_x \vec\sigma\cdot \hat{x} + e_z \vec\sigma \cdot \hat{z})$. In this case, the constraint 
$0\le E\le \idn$ reduces to $|\vec{e}|\le 1$, which is the same constraint that applies to density operators confined to the
$\hat{x}-\hat{z}$ plane of the Bloch sphere. Here, the only deviation from the ideal is within this plane, and the construction
is precisely analogous to what is depicted in Fig.~\ref{fg:convex_combs}.

\item
Unlike the case of preparations, however, the primary measurements can deviate from the ideal in the $\hat \idn$ direction, 
that is, $E$ may have a component along $\idn$ that deviates from $1$, which corresponds to introducing a state-independent 
bias on the outcome of the measurement. This is where the extremal post-processings yielding the constant-outcome measurements
corresponding to the observables 0 and $\idn$ come in. They allow one to move in the $\pm\hat \idn$ direction.

Fig.~\ref{Blochcone} presents an example wherein the primary measurements have Bloch-like vectors that deviate from the ideal 
not only within the $\hat{x}-\hat{z}$ plane, but in the $\hat{\idn}$ direction as well (it is still presumed, however, that all
components in the $\hat{y}$ direction are vanishing).

\begin{figure}
 \centering
 \includegraphics[width=0.9\textwidth]{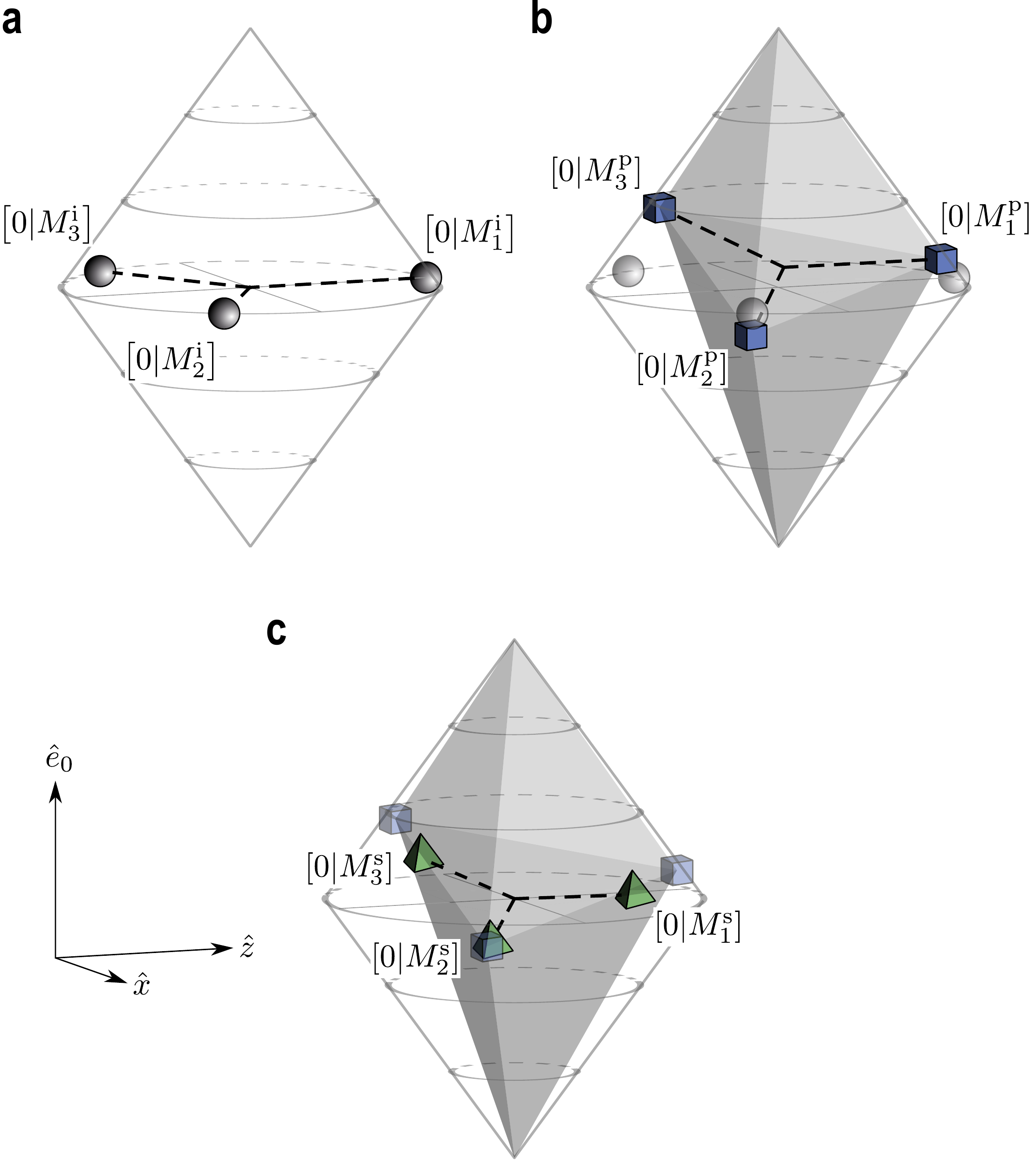}
 \caption{A depiction of the construction of secondary measurements from primary ones in the simplified case where the 
 component along $\hat{y}$ is zero. For each measurement, we specify the point corresponding to the Bloch representation of its
 first outcome. These are labelled $[0|M_1]$, $[0|M_2]$ and $[0|M_3]$. The equal mixture of these three, labelled $[0|M_*]$, 
 is the centroid of these three points, i.e. the point equidistant from all three. \textbf{a}. The ideal measurements 
 $[0|M^{\rm i}_{t}]$ with centroid at $\idn/2$, illustrating that the operational equivalence \eqref{mnc1opeq} is satisfied 
 exactly.
 \textbf{b}. Errors in the experiment (exaggerated) will imply that the realized measurements $[0|M^{\rm p}_{t,}]$ 
 (termed primary) will deviate from the ideal, and their centroid deviates from $\idn/2$. The family of points corresponding to
 probabilistic mixtures of the $[0|M^{\rm p}_{t}]$ and the
 observables $0$ and $\idn$ are depicted by the grey region. (For clarity, we have not depicted the outcome-flipped versions of
 the three primary measurements, and have not included them in the probabilistic mixtures. As we note in the text, such a 
 restriction still allows for a good construction.)
 \textbf{c}. The secondary measurements $M^{\rm s}_{t}$ that have been chosen from this grey region. They are chosen such that
 their centroid is at $\idn/2$, restoring the operational equivalence~\eqref{mnc1opeq}. Figure credit: Michael Mazurek \cite{exptl}.}
\label{Blochcone}
\end{figure}

In practice, of course, the $\hat{y}$ component of our measurements never vanishes precisely either.  We therefore apply the same trick as we did for the preparations.

We supplement the set of primary measurements with an additional measurement, denoted $M^{\rm p}_4$, that ideally corresponds to the observable $\vec{\sigma}\cdot \hat{y}$. The post-processing 
which flips the outcome then corresponds to the observable $-\vec{\sigma}\cdot \hat{y}$.  Mixing the primary measurements with $M^{\rm p}_4$ and its outcome-flipped counterpart
allows motion in the $\pm\hat{y}$ direction within the Bloch cone.

\item
Note that the capacity to move in both the $\hat{y}$ and the $-\hat{y}$ direction is critical for achieving the operational 
equivalence of Eq.~\eqref{mnc1opeq}, because if the secondary measurements had a common bias in the $\hat{y}$ direction, they 
could not mix to the POVM $\{ \idn/2, \idn/2\}$ as Eq.~\eqref{qopequiv1} requires.  For the preparations, by contrast, 
supplementing the primary set by just {\em one} of the eigenstates of $\vec{\sigma}\cdot \hat{y}$ would still work, given that 
the mixed preparations $P^{\rm s}_{t}$ do not need to coincide with the completely mixed state $\idn/2$.

The secondary measurements $M^{\rm s}_1$, $M^{\rm s}_2$ and $M^{\rm s}_3$ are then chosen from the convex hull of the post-processings of the $ M^{\rm p}_1,M^{\rm p}_2,M^{\rm p}_3,M^{\rm p}_4$.
Without this supplementation, it may be impossible to find secondary measurements that define an $M^{\rm s}_*$ that satisfies the operational equivalences while providing a good approximation 
to the ideal measurements.
\item
In all, under the extremal post-processings of the supplemented set of primary measurements, we obtain ten points which ideally
correspond to the observables $0, \idn, \vec{\sigma}\cdot \hat{n_1}, {-}\vec{\sigma}\cdot \hat{n_1}, \vec{\sigma}\cdot \hat{n_2},
{-}\vec{\sigma}\cdot \hat{n_2}, \vec{\sigma}\cdot \hat{n_3}, {-}\vec{\sigma}\cdot \hat{n_3}, \vec{\sigma}\cdot \hat{y}$, and  
${-}\vec{\sigma}\cdot \hat{y}$.
Note that the outcome-flipped versions of the three primary measurements are not critical for defining a good set of secondary
measurements, and indeed we find that we can dispense with them and
still obtain good results. This is illustrated in the example of Fig.~\ref{Blochcone}.
\item
Note that in order to identify which density operators have been realized in an experiment, 
the set of measurements must be complete for state tomography. Similarly, to identify which sets of effects
have been realized, the set of preparations must be complete for measurement tomography. This tomographic completeness
is crucial to be able to explicitly verify the operational equivalences that a test of noncontextuality requires.
However, the original ideal sets fail to be tomographically complete because they are restricted to a plane of the Bloch 
sphere, and an effective way to complete them is to add the observable $\vec{\sigma}\cdot \hat{y}$ to the measurements and its 
eigenstates to the preparations. Therefore, even if we did not already need to supplement these ideal sets for the purpose of 
providing greater leeway in the construction of the secondary procedures, we would be forced to do so in order to ensure that 
one can verify operational equivalences explicitly.
\end{itemize}

\subsubsection{Secondary preparations and measurements in generalized probabilistic theories}
To analyze our data in a manner that does not prejudice which model---noncontextual, quantum, or otherwise---does justice to it, we
must search for representations of the preparations and measurements not amongst density operators and sets of effects, but rather 
their more abstract counterparts in the formalism of generalized probabilistic theories (GPTs) \cite{hardy5axioms,barrettgpt}, 
called generalized states and effects. The assumption that the system is a qubit is replaced by the strictly weaker assumption 
that three two-outcome measurements are tomographically complete. (In generalized probabilistic theories, a set of measurements
is called tomographically complete if their statistics suffice to determine the state.)
We take these states and effects as estimates of our primary preparations and measurements, and we define our estimate of the 
secondary procedures in terms of these, which in turn are used to calculate our estimate for $A'$. We explain how the raw data
is fit to a set of generalized states and effects following the procedure outlined in Section \ref{secondarytrick}.

Since we do not want to presuppose that our experiment is well fit by a quantum description, we work with GPT states and effects
which are inferred from  the matrix $\mathbb{D}^{\rm p}$:
\begin{equation}
\mathbb{D}^{\rm p} =
 \begin{pmatrix}
  p^{1}_{1,0}  & p^{1}_{1,1} & \cdots & p^{1}_{4,0}  & p^{1}_{4,1} \\
  p^{2}_{1,0}  & p^{2}_{1,1} & \cdots & p^{2}_{4,0}  & p^{2}_{4,1} \\
  p^{3}_{1,0}  & p^{3}_{1,1} & \cdots & p^{3}_{4,0}  & p^{3}_{4,1} \\
  p^{4}_{1,0}  & p^{4}_{1,1} & \cdots & p^{4}_{4,0}  & p^{4}_{4,1}
 \end{pmatrix}.
 \label{prim_matrix}
\end{equation}
where
\beq
p^{t'}_{t,b}\equiv p(0|M^{\rm p}_{t'},P^{\rm p}_{t,b})
\eeq
is the probability of obtaining outcome $0$ in the $t'$th measurement that was actually realized in the experiment (recall that
we term this measurement primary and denote it by $M^{\rm p}_{t'}$), when it follows the $(t,b)$th preparation that was actually
realized in the experiment (recall that we term this preparation primary and denote it by $P^{\rm p}_{t,b}$).
These probabilities are estimated by fitting the raw  experimental data (which are merely finite samples of the true probabilities)
to a GPT. We now describe this procedure before moving on to the construction of $\mathbb{D}^{\rm s}$:

{\em Fitting the raw data to a generalized probabilistic theory: }
In our experiment we perform four measurements on each of eight input states. If we define $r^{t'}_{t,b}$ as the fraction of 
`0' outcomes returned by measurement $M_{t'}$ on preparation $P_{t,b}$, the results can be summarized in a $4\times8$ matrix 
of raw data, $\mathbb{D}^{\rm r}$, defined as:
\begin{equation}
\mathbb{D}^{\rm r} =
 \begin{pmatrix}
  r^{1}_{1,0}  & r^{1}_{1,1} & \cdots & r^{1}_{4,0}  & r^{1}_{4,1} \\
  r^{2}_{1,0}  & r^{2}_{1,1} & \cdots & r^{2}_{4,0}  & r^{2}_{4,1} \\
  r^{3}_{1,0}  & r^{3}_{1,1} & \cdots & r^{3}_{4,0}  & r^{3}_{4,1} \\
  r^{4}_{1,0}  & r^{4}_{1,1} & \cdots & r^{4}_{4,0}  & r^{4}_{4,1}
 \end{pmatrix}.
\end{equation}
Each row of $\mathbb{D}^{\rm r}$ corresponds to a measurement, ordered from top to bottom as $M_{1}$, $M_2$, $M_3$, and $M_4$.
Similary, the columns are labelled from left to right as $P_{1,0}$, $P_{1,1}$, $P_{2,0}$, $P_{2,1}$, $P_{3,0}$, $P_{3,1}$,
$P_{4,0}$, and $P_{4,1}$.

In order to test the assumption that three independent binary-outcome measurements are tomographically complete for our system,
we fit the raw data to a matrix, $\mathbb{D}^{\rm p}$, of primary data defined in Eq.~\eqref{prim_matrix}. $\mathbb{D}^{\rm p}$
contains the outcome probabilities of four measurements on eight states in the GPT-of-best-fit to the raw data. We fit to a GPT
in which three 2-outcome measurements are tomographically complete, which we characterize with the following proposition, a 
special case of Theorem \ref{Dptheorem}:

\begin{proposition}\label{rawprop}
A matrix $\mathbb{D}^{\rm p}$ can arise from a GPT in which three two-outcome measurements are tomographically complete if and 
(with a measure zero set of exceptions) only if $a p^1_{t,b}+b p^2_{t,b}+c p^3_{t,b}+ d p^4_{t,b} -1 = 0$ for some real constants
$\{a,b,c,d\}$.\footnote{See Ref.~\cite{exptl} for a proof of this proposition. Our proof of Theorem \ref{Dptheorem} is a generalization of that proof.}
\end{proposition}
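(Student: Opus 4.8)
The plan is to obtain Proposition \ref{rawprop} as the special case $m=4$, $m_t=3$ of Theorem \ref{Dptheorem}, which has already been proved. In that theorem the characterizing relations are indexed by $c\in\{m_t+1,\dots,m\}$; here $m-m_t=1$, so the index $c$ takes the single value $c=4$ and the whole family of constraints collapses to one linear relation. Identifying the constants $(a_{41},a_{42},a_{43},a_4)\equiv(a,b,c,d)$ turns $\sum_{i=1}^{3}a_{4i}p_{ij}+a_4 p_{4j}-1=0$ into the stated relation $a\,p^1_{t,b}+b\,p^2_{t,b}+c\,p^3_{t,b}+d\,p^4_{t,b}-1=0$, with the composite column index $j$ ranging over the eight preparations labelled by $(t,b)$. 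In principle the result is then a one-line specialization, but for a self-contained treatment I would reproduce both implications with the dimensions fixed.

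For the ``only if'' direction I would assume $\mathbb{D}^{\rm p}$ arises from a GPT in which three binary-outcome measurements $\{M^{\rm p}_1,M^{\rm p}_2,M^{\rm p}_3\}$ are tomographically complete. Each preparation $P$ is then represented by the vector ${\bf p}_P=(1,p(0|M^{\rm p}_1,P),p(0|M^{\rm p}_2,P),p(0|M^{\rm p}_3,P))^{T}\in\mathbb{R}^{4}$, with the leading $1$ enforcing normalization, and convexity of the GPT guarantees that every effect acts as ${\bf r}_M\cdot{\bf p}_P$ for some ${\bf r}_M\in\mathbb{R}^4$, the always-occurring effect being ${\bf r}_{\mathbb{I}}=(1,0,0,0)$. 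The four effect vectors ${\bf r}_1,{\bf r}_2,{\bf r}_3,{\bf r}_4$ together with ${\bf r}_{\mathbb{I}}$ are five vectors in $\mathbb{R}^4$ and hence linearly dependent, giving $a'_{41}{\bf r}_1+a'_{42}{\bf r}_2+a'_{43}{\bf r}_3+a'_4{\bf r}_4+e'_4{\bf r}_{\mathbb{I}}=0$. Dividing through by $-e'_4$ and taking the dot product with each preparation vector ${\bf p}_j$ (using ${\bf r}_{\mathbb{I}}\cdot{\bf p}_j=1$) yields the claimed relation.

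For the ``if'' direction I would run this construction in reverse: take $\{M^{\rm p}_1,M^{\rm p}_2,M^{\rm p}_3\}$ as the fiducial set with ${\bf r}_i$ the standard basis vectors of $\mathbb{R}^4$, represent $P^{\rm p}_j$ by ${\bf p}_j=(1,p_{1j},p_{2j},p_{3j})^{T}$, and read off from the hypothesis that setting ${\bf r}_4=(1/d,-a/d,-b/d,-c/d)$ reproduces the fourth row of $\mathbb{D}^{\rm p}$, thereby exhibiting an explicit valid GPT realization.

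The main obstacle --- and the source of the ``with a measure zero set of exceptions'' qualifier --- is the genericity step. In the forward direction I must justify $e'_4\neq 0$; this fails precisely when ${\bf r}_{\mathbb{I}}$ lies outside the span of $\{{\bf r}_1,{\bf r}_2,{\bf r}_3,{\bf r}_4\}$, i.e. when these vectors are confined to the hyperplane orthogonal to ${\bf r}_{\mathbb{I}}$ in the orthonormal basis $\{{\bf r}_{\mathbb{I}},{\bf r}_1,{\bf r}_2,{\bf r}_3\}$ of $\mathbb{R}^4$, a three-dimensional subspace and hence a set of Lebesgue measure zero. In the reverse direction the analogous caveat is $d\neq 0$, which is needed to solve for ${\bf r}_4$. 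I would make this dimension-counting argument explicit, since it is the only place where the ``generically'' qualifier genuinely enters; everything else is routine linear algebra inherited directly from the proof of Theorem \ref{Dptheorem}.
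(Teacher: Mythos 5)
Your proposal is correct and matches the paper's treatment: the paper explicitly presents Theorem \ref{Dptheorem} as the generalization of this proposition, so obtaining Proposition \ref{rawprop} by specializing to $m=4$, $m_t=3$ (with the single constraint index $c=4$ and $(a_{41},a_{42},a_{43},a_4)\equiv(a,b,c,d)$) is exactly the intended route. Your self-contained reproduction of both directions, including the linear-dependence argument among the five effect vectors in $\mathbb{R}^4$ and the measure-zero genericity caveat for $e'_4\neq 0$ (and $d\neq 0$ in the converse), mirrors the paper's own argument.
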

Geometrically, the proposition dictates that the eight columns of $\mathbb{D}^{\rm p}$ lie on the 3-dimensional hyperplane 
defined by the constants $\{a,b,c,d\}$.

To find the GPT-of-best-fit we fit a 3-d hyperplane to the eight 4-dimensional points that make up the columns of 
$\mathbb{D}^{\rm r}$. We then map each column of $\mathbb{D}^{\rm r}$ to its closest point on the hyperplane, and these eight 
points will make up the columns of $\mathbb{D}^{\rm p}$. We use a weighted total least-squares procedure~\cite{krystek07,numrec}
to perform this fit. Each element of $\mathbb{D}^{\rm r}$ has an uncertainty, $\Delta r^{t'}_{t,b}$, which is estimated assuming
the dominant source of error is the statistical error arising from Poissonian counting statistics. We define the 
\emph{weighted distance}, $\chi_{t,b}$, between the $(t,b)$ column of $\mathbb{D}^{\rm r}$ and $\mathbb{D}^{\rm p}$ as 
$\chi_{t,b}=\sqrt{\sum_{t'=1}^4 \left(r^{t'}_{t,b}-p^{t'}_{t,b}\right)^2/\left(\Delta r^{t'}_{t,b}\right)^2}$. Finding the 
best-fitting hyperplane can be summarized as the following minimization problem:
\begin{equation}
 \begin{aligned}\label{eq:chisq_comp}
 & \underset{\{p^i_{t,b},a,b,c,d\}}{\text{minimize}}
 & & \chi^2 = \sum_{t=1}^4\sum_{b=0}^1 \chi_{t,b}^2, \\
 & \text{subject to}
 & & a p^1_{t,b} + b p^2_{t,b} + c p^3_{t,b} + d p^4_{t,b} - 1 = 0 & \\
 & & & \forall \,t=1,\ldots,4, b=0,1.
 \end{aligned}
\end{equation}

The optimization problem as currently phrased is a problem in 36 variables---the 32 elements of $\mathbb{D}^{\rm p}$ together 
with the hyperplane parameters $\{a,b,c,d\}$. We can simplify this by first solving the simpler problem of finding the weighted
distance $\chi_{t,b}$ between the $(t,b)$ column of $\mathbb{D}^{\rm r}$ and the hyperplane $\{a,b,c,d\}$. This can be phrased
as the following 8-variable optimization problem:
\begin{equation}
 \begin{aligned}\label{eq:chi_j}
 & \underset{\{p^1_{t,b},p^2_{t,b},p^3_{t,b},p^4_{t,b}\}}{\text{minimize}} & & \chi_{t,b}^2 = \sum_{t'=1}^4 \frac{(r^{t'}_{t,b}
 -p^{t'}_{t,b})^2}{\left(\Delta r^{t'}_{t,b}\right)^2}, \\
 & \text{subject to} & & a p^1_{t,b} + b p^2_{t,b} + c p^3_{t,b} + d p^4_{t,b} - 1 = 0.
 \end{aligned}
\end{equation}
Using the method of Lagrange multipliers~\cite{krystek07}, we define the Lagrange function
$\Gamma = \chi_{t,b}^2 + \gamma(a p^1_{t,b} + b p^2_{t,b} + c p^3_{t,b} + d p^4_{t,b} - 1)$, where $\gamma$ denotes the Lagrange
multiplier, then simultaneously solve
\begin{equation}
\frac{\partial\Gamma}{\partial\gamma}=0
\end{equation}
and
\begin{equation}
\frac{\partial\Gamma}{\partial p^{t'}_{t,b}}=0, \; t' = 1, \ldots, 4
\end{equation}
for the variables $\gamma$, $p^1_{t,b}$, $p^2_{t,b}$, $p^3_{t,b}$, and $p^4_{t,b}$. Substituting the solutions for 
$p^1_{t,b}$, $p^2_{t,b}$, $p^3_{t,b}$ and $p^4_{t,b}$ into Eq.~\eqref{eq:chi_j} we find
\begin{equation}
\chi_{t,b}^2 = \frac{(a r^1_{t,b} + b r^2_{t,b} + c r^3_{t,b} + d r^4_{t,b} - 1)^2}{\left(a\Delta r^1_{t,b}\right)^2 + 
\left(b\Delta r^2_{t,b}\right)^2 + \left(c\Delta r^3_{t,b}\right)^2 + \left(d\Delta r^4_{t,b}\right)^2},
\end{equation}
which now only contains the variables $a$, $b$, $c$, and $d$.

The hyperplane-finding problem can now be stated as the following four-variable optimization problem:
\begin{equation}
 \begin{aligned}
  \underset{\{a,b,c,d\}}{\text{minimize}}
 & & \chi^2 = \sum_{t=1}^4\sum_{b=0}^1 \chi_{t,b}^2
 \end{aligned}
\end{equation}
which we solve numerically.

The $\chi^2$ parameter returned by the fitting procedure is a measure of the goodness-of-fit of the hyperplane to the data. 
Since we are fitting eight datapoints to a hyperplane defined by four fitting parameters $\{a,b,c,d\}$, we expect the $\chi^2$ 
parameter to be drawn from a $\chi^2$ distribution with four degrees of freedom~\cite{numrec}, which has a mean of 4. 
The experiment was run 100 times and 100 independent $\chi^2$ parameters were obtained; these were found to have a mean of 
$3.9\pm0.3$ \cite{exptl}. In addition a more stringent test of the fit of the model to the data was performed 
by summing the counts from all 100 experimental runs before performing a single fit. 
This fit returns a $\chi^2$ of 4.33, which has a $p$-value of 36\%. 
The outcomes of these tests are consistent with our assumption that the raw data can be explained by a GPT in which three 
2-outcome measurements are tomographically complete and which also exhibits Poissonian counting 
statistics.\footnote{Note that the null hypothesis here is that the model (the GPT hyperplane) fits the data well. Since the $p$-value
is 0.36 (close to 0.5), we have no compelling evidence to reject the null hypothesis.}
Had the fitting 
procedure returned $\chi^2$
values that were much higher,\footnote{And consequently the p-value much lower than 0.5.} this might have indicated that the theoretical
description of the preparation and measurement 
procedures required more than three degrees of freedom. On the other hand, had the fitting returned an average $\chi^2$ much 
lower than 4,\footnote{Consequently a p-value much higher than 0.5} this could have indicated that we had overestimated the amount of uncertainty in our data.

After finding the hyperplane-of-best-fit $\left\{a,b,c,d\right\}$, we find the points on the hyperplane that are closest to 
each column of $\mathbb{D}^{\rm r}$. This is done by numerically solving for $p^1_{t,b}$, $p^2_{t,b}$, $p^3_{t,b}$, and 
$p^4_{t,b}$ in \eqref{eq:chi_j} for each value of $(t,b)$. The point on the hyperplane closest to the $(t,b)$ column of 
$\mathbb{D}^{\rm r}$ becomes the $(t,b)$ column of $\mathbb{D}^{\rm p}$. The matrix $\mathbb{D}^{\rm p}$ is then used to 
find the secondary preparations and measurements.

{\em Inferring the secondary data matrix $\mathbb{D}^{\rm s}$: }The rows of the $\mathbb{D}^{\rm p}$ matrix define the GPT effects. We denote the vector 
defined by the $t$th row, which is 
associated to the measurement event $[0|M_t^{\rm p}]$ (obtaining the 0 
outcome in the primary measurement $M_t^{\rm p}$), by $\mathbf{M}_t^{\rm p}$.
Similarly, the columns of this matrix define the GPT states.  We denote the vector associated to the $(t,b)$th column, which is
associated to the primary preparation $P^{\rm p}_{t,b}$, 
by $\mathbf{P}^{\rm p}_{t,b}$.

As we outlined in the previous section, we define the {\em secondary} preparation $P^{\rm s}_{t,b}$ by a probabilistic mixture of 
the primary preparations.  Thus, the GPT state of the secondary 
preparation is a vector $\mathbf{P}^{\rm s}_{t,b}$ that is a probabilistic mixture of the $\mathbf{P}^{\rm p}_{t,b}$,
\beq
\mathbf{P}^{\rm s}_{t,b} = \sum_{t'=1}^4\sum_{b'=0}^1 u_{t',b'}^{t,b} \mathbf{P}_{t',b'}^{\rm p},
\eeq
where the $u_{t',b'}^{t,b}$ are the weights in the mixture.

A secondary measurement $M_{t'}^{\rm s}$ is obtained from the primary measurements in a similar fashion, but in addition to 
probabilistic mixtures, one must allow certain post-processings of the measurements, in analogy to the quantum case described 
above.
The set of all post-processings of the primary outcome-0 measurement events has extremal elements consisting of the outcome-0 
measurement events themselves together with:
the measurement event that {\em always} occurs (i.e. always obtaining outcome `0'), which is represented by the vector of 
probabilities where every entry is 1, denoted $\mathbf{1}$; the measurement event that {\em never} occurs (i.e. never 
obtaining outcome `0' or always obtaining outcome `1'), which is represented by the vector of probabilities where every 
entry is 0, denoted $\mathbf{0}$; and the outcome-1 measurement events, $[1|M_t^{\rm p}]$,
represented by the vector $\mathbf{1}-\mathbf{M}_t^{\rm p}$.

We can therefore define our three secondary outcome-0 measurement events as probabilistic mixtures of the four primary ones as 
well as the extremal post-processings mentioned above, that is
\beq
\mathbf{M}_t^{\rm s} = \sum_{t'=1}^{4} v_{t'}^t\mathbf{M}_{t'}^{\rm p} + v_{\mathbf{0}}^t \mathbf{0} + v_{\mathbf{1}}^t\mathbf{1}
+ \sum_{t''=1}^{4} v_{\lnot t''}^t(\mathbf{1}-\mathbf{M}_{t''}^{\rm p}),
\eeq
where for each $t$, the vector of weights in the mixture is $(v_1^t,v_2^t,v_3^t,v_4^t,v_{\mathbf{0}}^t,v_{\mathbf{1}}^t,
v_{\lnot 1}^t,v_{\lnot 2}^t,v_{\lnot 3}^t,v_{\lnot 4}^t)$. We see that this is a particular type of linear transformation on 
the rows.

Again, as mentioned in the discussion of the quantum case, we can in fact limit the post-processing to exclude the
outcome-1 measurement events for
$M_1$, $M_2$ and $M_3$, keeping only the outcome-1 event for $M_4$, and still obtain good results. Thus we found it sufficient 
to search for secondary outcome-0 measurement events among those of the form
\beq
\mathbf{M}_t^{\rm s} = \sum_{t'=1}^{4} v_{t'}^t\mathbf{M}_{t'}^{\rm p} + v_{\mathbf{0}}^t \mathbf{0} + v_{\mathbf{1}}^t\mathbf{1}
+ v_{\lnot 4}^t(\mathbf{1}-\mathbf{M}_{4}^{\rm p}),
\eeq
where for each $t$, the vector of weights in the mixture is $(v_1^t,v_2^t,v_3^t,v_4^t,v_{\mathbf{0}}^t,v_{\mathbf{1}}^t,
v_{\lnot 4}^t)$.

Returning to the preparations, we choose the weights $u^{t,b}_{t',b'}$ to maximize the function
\beq
C_{\rm P} \equiv \frac{1}{6}\sum_{t=1}^3\sum_{b=0}^1 u_{t,b}^{t,b}
\eeq
subject to the linear constraint
\beq
\frac{1}{2}\sum_b \mathbf{P}^{\rm s}_{1,b}=\frac{1}{2}\sum_b \mathbf{P}^{\rm s}_{2,b}=\frac{1}{2}\sum_b \mathbf{P}^{\rm s}_{3,b}.
\eeq
This optimization ensures that the secondary preparations are as close as possible to the primary ones while ensuring that they
satisfy the relevant operational equivalences {\em exactly}. For the experiment of Ref.~\cite{exptl}, 
the reported $C_{\rm P}=0.9969\pm0.0001$, indicating that the secondary preparations are indeed very close 
to the primary ones.

The scheme for finding the weights $(v_1^t,v_2^t,v_3^t,v_4^t,v_{\mathbf{0}}^t,v_{\mathbf{1}}^t,v_{\lnot 4}^t)$ that define the
secondary measurements is analogous. Using a linear program, we find the vector of such weights that maximizes the function
\beq
C_{\rm M}\equiv \frac{1}{3}\sum_{t=1}^{3}v_t^t,
\eeq
 subject to the constraint that
 \beq
 \mathbf{M}_*^{\rm s}=\frac{1}{2}\mathbf{1},
 \eeq
 where $\mathbf{M}_*^{\rm s} \equiv \frac{1}{3}\sum_{t=1}^{3}\mathbf{M}_t^{\rm s}$.
 A high value of $C_\mathrm{M}$ signals that each of the three secondary measurements is close to the corresponding primary one.
 The experiment of Ref.~\cite{exptl} reported $C_{\rm M} = 0.9976\pm0.0001$, again indicating the closeness of the secondary 
 measurements to the primary ones.

This optimization defines the precise linear transformation of the rows of $\mathbb{D}^{\rm p}$ and the linear transformation of
the columns  of $\mathbb{D}^{\rm p}$ that serve to define the secondary preparations and measurements. By combining the operations
on the rows and on the columns, we obtain from $\mathbb{D}^{\rm p}$ a $3\times 6$ matrix, denoted $\mathbb{D}^{\rm s}$, whose 
entries $s^{t'}_{t,b}$ are
\begin{equation}
\sum_{\tau=1}^{4} \sum_{\beta=0}^{1}  u^{t,b}_{\tau,\beta} \left[ \sum_{\tau'=1}^{4} v^{t'}_{\tau'} p^{\tau'}_{\tau,\beta} + v^{t'}_{\mathbf{0}} 0 + v^{t'}_{\mathbf{1}} 1 +  v^{t'}_{\lnot4} (1-p^{4}_{\tau,\beta})\right]
\end{equation}
where $t',t\in \{1,2,3\}$, $b\in \{0,1\}$.
This matrix describes the secondary preparations $P_{t,b}^{\rm s}$ and measurements $M_{t'}^{\rm s}$.
The component $s^{t'}_{t,b}$ of this matrix describes the probability of obtaining outcome 0 in measurement $M^{\rm s}_{t'}$ on preparation $P^{\rm s}_{t,b}$, that is,
\beq
s^{t'}_{t,b} \equiv p(0|M^{\rm s}_{t'},P^{\rm s}_{t,b}).
\eeq
These probabilities are the ones that are used to calculate the value of $A'$ via Eq.\eqref{A'expr}.
The value of $A'$ reported in the experiment was $A'=0.99709\pm0.00007$, well above the noncontextual bound of $5/6\approx0.833$ \cite{exptl}.

\section{Chapter summary}
We have shown how to contend with the problem of noisy measurements and inexact operational equivalences in tests of noncontextuality.
As we explained, the methods used to derive the noncontextuality inequality motivated by the 18 ray Kochen-Specker construction can be used to derive 
such tests of noncontextuality from other KS-uncolourable hypergraphs as well. We also reported and 
generalized the methods adopted in the experimental test of the FCF inequality in hopes that such a generalization 
will be useful in future experimental tests of noncontextuality. An open challenge that remains is to put the assumption of 
tomographic completeness of a finite set of preparations and measurements on a surer footing, akin to the assumption of 
no-signalling in Bell tests.

\chapter{Back to Specker's scenario: a theory-independent analysis}
In this chapter we return to the problem of contextuality in Specker's scenario that we first discussed in Chapter 2. 
In Chapter 2, we considered the LSW inequality \cite{LSW} which presumes the validity of quantum theory, even though it does not 
assume outcome determinism for unsharp measurements. Following the development towards theory-independent tests of 
contextuality in the previous chapter, we will carry out this exercise for the case of Specker's scenario\footnote{And generalizations thereof to what will be called 
``$n$-cycle scenarios''.} in this chapter. Indeed, while the previous chapter showed how to derive robust noncontextuality inequalities from so-called ``state-independent''
proofs of the Kochen-Specker theorem (based on KS-uncolourability), the present chapter achieves this for ``state-dependent'' proofs of contextuality.

The operational noncontextuality inequalities derived in this chapter do not rely on the assumption that measurement 
outcomes are fixed deterministically 
by the ontic state of the system. They constitute a proper operational generalization of the LSW inequality discussed in Chapter 2,
explicitly taking into account the lack of perfect predictability of measurement outcomes in realistic experiments.
We construct {\em quantum} violations of these inequalities. 

In deriving these inequalities, no assumption of the validity of quantum theory is made. An {\em experimental} violation 
of them would serve as a genuine test of nonclassicality which any 
operationally motivated theory of physics---in particular, any future modification of quantum theory---would have to 
accommodate. The most basic of these
inequalities applies to the case of Specker's scenario which involves three two-outcome measurements, every pair of 
which is jointly measured. Specker's 
scenario is the minimal scenario in which contextuality with respect to joint measurement contexts can be expected to
manifest itself and our analysis 
provides a robust noncontextuality inequality for this scenario before moving on to more general $n$-cycle scenarios.

This chapter is based on joint work with Rob Spekkens.\footnote{This work is unpublished and we present some preliminary 
results in this chapter. A talk based on an earlier version of this work is available on PIRSA: \url{http://pirsa.org/14010102/}.}

\section{Introduction}

Specker's scenario was first described by Ernst Specker in the form of a parable \cite{Spe60, LSW}, a rendition of which (due to Liang, Spekkens, and Wiseman \cite{LSW})
is reproduced below:
\begin{quotation}
At the Assyrian School of Prophets in
Arba'ilu in the time of King Asarhaddon
[(681-669 BCE)], there taught a seer from
Nineva. He was a distinguished representative
of his faculty (eclipses of the sun and
moon) and aside from the heavenly bodies,
his interest was almost exclusively in his
daughter. His teaching success was limited;
the subject proved to be dry and required
a previous knowledge of mathematics which
was scarcely available. If he did not find the
student interest which he desired in class, he
did find it elsewhere in overwhelming measure.
His daughter had hardly reached a marriageable
age when he was flooded with requests
for her hand from students and young
graduates. And though he did not believe
that he would always have her by his side,
she was in any case still too young and her
suitors in no way worthy. In order that the
suitors might convince themselves of their unworthiness,
he promised them that she would
be wed to the one who could solve a prediction
task that was posed to them.
Each suitor was taken before a table on
which three little boxes stood in a row, [each
of which might or might not contain a gem],
and was asked to predict which of the boxes
contained a gem and which did not. But no
matter how many times they tried, it seemed
impossible to succeed in this task. After each
suitor had made his prediction, he was ordered
by the father to open any two boxes
which he had predicted to be both empty or
any two boxes which he had predicted to be
both full [in accordance with whether he had
predicted there to be at most one gem among
the three boxes, or at least two gems, respectively].
But it always turned out that one
contained a gem and the other one did not,
and furthermore the stone was sometimes in
the first and sometimes in the second of the
boxes that were opened. But how can it be
possible, given three boxes, to neither be able
to pick out two as empty nor two as full?
The daughter would have remained unmarried
until the father's death, if not for
the fact that, after the prediction of the son
of a prophet [whom she fancied], she quickly
opened two boxes herself, one of which had
been indicated to be full and the other empty,
and the suitor's prediction [for these two
boxes] was found, in this case, to be correct.
Following the weak protest of her father that
he had wanted two other boxes opened, she
tried to open the third. But this proved impossible
whereupon the father grudgingly admitted
that the prediction, being unfalsified,
was valid. [The daughter and the suitor were
married and lived happily ever after.]
\end{quotation}
In summary, Specker's scenario involves three boxes,
each of which either contains a gem or is empty, such that any two boxes can be opened together but all three boxes can't be opened at once. Each suitor 
seeking to marry the seer's daughter is asked to predict the occupancy of the three boxes, after which the seer asks him to open two boxes predicted to be 
both empty or both full. Upon opening two such boxes it is always found that one contains a gem and the other does not. Furthermore, the gem is sometimes 
found in the first box and sometimes in the second box that was opened. These correlations were described as the `over-protective seer' (OS) correlations in
Ref.~\cite{LSW}:
\begin{eqnarray}
\forall i\neq j: &&p(X_i=0,X_j=1|M_{ij};P_*)=\frac{1}{2}\nonumber\\
&&p(X_i=1,X_j=0|M_{ij};P_*)=\frac{1}{2},
\end{eqnarray}
where $i,j\in\{1,2,3\}$ label the three boxes, $M_{ij}$ denotes the operation of opening two boxes $i$ and $j$ together while $X_i$ and $X_j$
denote their respective occupancy ($0$ for no gem and $1$ for a gem), and $P_*$ denotes the 
preparation of the three boxes which yields OS correlations. Clearly, it is always the case that upon opening two boxes $i$ and $j$ one contains a gem
and the other does not: $p(X_i\neq X_j|M_{ij};P_*)=1, \forall i\neq j$.

As we did in Chapter 2, we will soon formalize this scenario in terms of performing two-outcome measurements (instead of opening boxes) and observing their outcome (instead of 
occupancy of the opened boxes). For now, note that observing OS correlations is surprising only if, besides the assumption that the probability of occupancy of a given box
is independent of which other box it is opened with (this is essentially \emph{measurement noncontextuality}), one has also made the 
assumption of \emph{outcome-determinism}: that the occupancy of each box is fixed deterministically by some mechanism. On relaxing the assumption of 
outcome-determinism, the OS correlations should not be surprising: they merely say that upon opening any two boxes, a fair coin flip decides 
which of them contains a gem and which one doesn't. This is consistent with the fact that the (marginalized) 
occupancy of each box is uniformly random.
On the other hand, the OS correlations \emph{would} be surprising
if there are operational reasons implying that the occupancy of each box was decided deterministically, or at least with some bias, at the ontological level.
Such operational reasons amount to observing high predictability of the individual measurements for an appropriate set of preparations of the boxes.
This intuitive tradeoff between the observed anticorrelation
and the predictability of occupancy of each box is what we formalize in our noncontextuality inequality for Specker's scenario: a high probability
of anticorrelation implies a low predictability of occupancy.

The modern rendition of this parable in Ref.~\cite{LSW} made it precise in terms of joint measurements of POVMs for which such joint measurability relations
---pairwise joint measurements but no triplewise joint measurements---are possible. The LSW inequality formulated for qubit POVMs in Ref.~\cite{LSW} 
provided a necessary criterion for deciding whether the statistics 
of the pairwise joint measurements of the POVMs is consistent with a noncontextual model of quantum theory. It was conjectured that this is 
always the case \cite{LSW}. However, as we showed in Chapter 2, this is not the case and that there exist qubit POVMs for 
which the LSW inequality can be violated. Such a violation has been claimed to be experimentally demonstrated recently 
in an experiment using the polarization of single photons \cite{Xue}. 

A further feature that was appreciated only after the results of
Refs.~\cite{LSW,KG} is the following: there exist
three two-outcome qubit POVMs such that all three \emph{are} jointly measurable, \emph{yet} they admit pairwise joint measurements which cannot be obtained 
from the marginalization of any triplewise joint measurement. This was first noticed in Ref.~\cite{FYuOh} and later 
analyzed in more detail in Ref.~\cite{RK}.\footnote{Note, however,
that Ref.~\cite{FYuOh} does \emph{not} proceed from the assumption of noncontextuality that we use in our approach, and we refer to \cite{FYuOh} only because, as far as we know,
it first pointed out
this property of the joint measurability of POVMs.}
This means that the narrative of Specker's parable changes in the following way to accomodate this quantum feature: it is no longer necessary that the three
boxes can't be opened together. Rather, even if the three boxes could be opened together,
it may still be that opening the boxes pairwise leads to statistics which cannot arise from opening all three of them together.
In view of this, we will relax the requirement that the three qubit POVMs be triplewise incompatible since it turns out not 
to be necessary to witness contextuality.

Much as Refs.~\cite{LSW,KG,RK} clarified the impossibility of explaining statistics of qubit POVMs in a noncontextual model, 
they left open the question of whether it is possible to make a theory-independent claim regarding 
contextuality in this scenario. We address this deficiency of earlier analyses by deriving noncontextuality inequalities that
apply to 
{\em any} operational theory that might govern the experimental statistics. Notice that Specker's scenario is the \emph{minimal} 
scenario in which contextuality with respect to joint measurement
contexts can be manifested: at least three measurements are needed for distinct joint measurement contexts for a given 
measurement to be defined
and every pair of these measurements has to be jointly measurable for a scenario where contextuality can be expected.
Furthermore, two-outcome measurements are the simplest possible ones.

\section{Specker's scenario}
\subsection{Noncontextuality inequalities for Specker's scenario}
We consider three two-outcome measurements, $\{M_1,M_2,M_3\}$, each $M_i$ with outcomes labelled by $X_i\in \{0,1\}$,
such that every pair,
that is, $\{M_i,M_j\}$ for $(ij)\in\{(12),(23),(31)\}$, admits of a joint measurement, denoted by $M_{ij}$. $M_{ij}$ is a 
measurement procedure---with four outcomes denoted by $(X_i,X_j)$---whose measurement statistics
can be coarse-grained to obtain the measurement statistics of both $M_i$ and $M_j$ for any preparation $P\in\mathcal{P}$:
\begin{eqnarray}
p(X_i|M_i,P)&\equiv&\sum_{X_j}p(X_i,X_j|M_{ij},P),\nonumber\\
p(X_j|M_j,P)&\equiv&\sum_{X_i}p(X_i,X_j|M_{ij},P).
\end{eqnarray}

Denoting by $M_i^{(j)}$ $(M_j^{(i)})$ the coarse-graining over $X_j$ $(X_i)$ of $M_{ij}$,
pairwise joint measurability of $M_1$, $M_2$ and $M_3$ implies these operational equivalences: 
\begin{eqnarray}\label{eq:opeqmmtsspeck}
M_1^{(2)}&\simeq& M_1^{(3)}\simeq M_1,\nonumber\\
M_2^{(1)}&\simeq& M_2^{(3)}\simeq M_2,\nonumber\\
M_3^{(1)}&\simeq& M_3^{(2)}\simeq M_3. 
\end{eqnarray}

We now define a measurement $M_*$ as follows: sample $(ij) \in\{ (12),(23),(31)\}$ with probability 1/3 each
and then implement $M_{ij}$ and record $(X_i,X_j)$. We are interested in the probability of recording anticorrelated outcomes,
\begin{align}
p(\text{anti}|M_*,P) &\equiv \frac{1}{3} \sum_{(ij)} p(X_i\ne X_j|M_{ij},P).
\end{align}
Similarly, we consider another set of measurements, $\{M'_{12},M'_{23},M'_{31}\}$, which also achieve a joint measurement of the respective pairs:
\begin{eqnarray}\label{eq:opeqmmtsspeck2}
M_1^{\prime(2)}&\simeq& M_1^{\prime(3)}\simeq M_1,\nonumber\\
M_2^{\prime(1)}&\simeq& M_2^{\prime(3)}\simeq M_2,\nonumber\\
M_3^{\prime(1)}&\simeq& M_3^{\prime(2)}\simeq M_3. 
\end{eqnarray}
We also define a measurement procedure $M'_*$ implementing $M'_{12}$, $M'_{23}$, or $M'_{31}$ with equal probabilities such that $p(\text{anti}|M'_*,P)$ is the probability of obtaining 
anticorrelated outcomes for $M'_*$. 

We define \emph{predictability of $(M,P)$}:
\begin{align}
\eta(M,P) \equiv 2 \max_{X\in\{0,1\}}p(X|M,P)-1,
\end{align}
where $\eta(M,P)$ is a measure of how \emph{predictable}, or far away from uniformly random,
the distribution over outcomes is for a two-outcome measurement $M$ performed following a preparation $P$ on a system.

Let $P_*$, $P_*^{\perp}$, $P_1$, $P_1^{\perp}$, $P_2$, $P_2^{\perp}$, $P_3$, $P_3^{\perp}$ be preparation procedures, and let 
$P^{\rm (ave)}_x$ be the preparation procedure obtained by implementing $P_x$ with probability $1/2$ and $P_x^{\perp}$ with
probability $1/2$ for $x\in \{1,2,3,*\}$.  We suppose that the following operational equivalences among the preparations hold:
\begin{align}\label{eq:opeqprepsspeck}
P^{\rm (ave)}_* \simeq P^{\rm (ave)}_1 \simeq P^{\rm (ave)}_2 \simeq P^{\rm (ave)}_3.
\end{align} 

We can now state our noncontextuality inequalities for Specker's scenario:
\begin{theorem}\label{speckertheorem}
An operational theory which satisfies the operational equivalences of Eqs.~\eqref{eq:opeqmmtsspeck},
\eqref{eq:opeqmmtsspeck2}, and \eqref{eq:opeqprepsspeck},
and admits a noncontextual ontological model must necessarily satisfy the following noncontextuality inequality in Specker's scenario:
\begin{align}\label{eq:maininequality}
&p({\rm anti}|M_*,P_*)+p({\rm anti}|M'_*,P_*^{\perp})\nonumber \\
&\le 2\left(1-\frac{1}{3}\eta_{\rm ave}\right),
\end{align}
where 
\begin{equation}
\eta_{\rm ave}\equiv \frac{1}{6}\sum_{i=1}^3 \left(\eta(M_{i},P_i)+\eta(M_{i},P^{\perp}_{i})\right).
\end{equation}
On the other hand, using only the operational equivalences of Eqs.~\eqref{eq:opeqmmtsspeck} and \eqref{eq:opeqprepsspeck}, such an operational
theory must also satisfy:
\begin{equation}\label{eq:speckineq1}
p({\rm anti}|M_*,P_*)+p({\rm anti}|M_*,P^{\perp}_*)\leq 2\left(1-\frac{1}{3}\eta_{\rm ave}\right),
\end{equation}
and
\begin{align}\label{eq:speckineq2}
&p({\rm anti}|M_*,P_*)\nonumber \\
&\le \frac{2}{3}\left(2-\eta_{\rm ave}\right).
\end{align}
\end{theorem}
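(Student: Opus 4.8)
The plan is to run the Chapter 6 style argument adapted to Specker's scenario: pass to an arbitrary noncontextual ontological model, derive a single bound on the anticorrelation \emph{per ontic state} in terms of the single-measurement predictabilities, and then average against the preparation distributions using preparation noncontextuality. Throughout I would write $w_i(\lambda)\equiv\xi(0|M_i,\lambda)$, let $\eta(M_i,\lambda)\equiv|2w_i(\lambda)-1|$ be the ontic-state predictability, and set $\bar\eta(\lambda)\equiv\tfrac13\sum_{i=1}^{3}\eta(M_i,\lambda)$. Measurement noncontextuality applied to \eqref{eq:opeqmmtsspeck} forces the joint response function $\xi(X_i,X_j|M_{ij},\lambda)$ to marginalize to the \emph{context-independent} single-measurement functions $\xi(X_i|M_i,\lambda)$ and $\xi(X_j|M_j,\lambda)$; the same holds for the primed measurements via \eqref{eq:opeqmmtsspeck2}, and crucially both families share the same single-measurement marginals, hence the same $\eta(M_i,\lambda)$.

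First I would establish the central ontic inequality
\begin{equation}
 p(\mathrm{anti}|M_*,\lambda)\equiv\tfrac13\sum_{(ij)}\xi(X_i\neq X_j|M_{ij},\lambda)\le 1-\tfrac13\bar\eta(\lambda),
\end{equation}
together with the identical bound for $M'_*$. For a fixed pair, the Fr\'echet bound on a two-by-two joint distribution with prescribed marginals gives $\xi(X_i\neq X_j|M_{ij},\lambda)\le 1-|w_i(\lambda)+w_j(\lambda)-1|=1-\tfrac12|a_i+a_j|$, where $a_i\equiv 2w_i(\lambda)-1$ satisfies $|a_i|=\eta(M_i,\lambda)$. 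Summing over the three pairs reduces the claim to the elementary lemma that for all reals $a_1,a_2,a_3$,
\begin{equation}
 |a_1+a_2|+|a_2+a_3|+|a_3+a_1|\ge\tfrac23\big(|a_1|+|a_2|+|a_3|\big),
\end{equation}
which I would prove by a short case analysis on the sign pattern (the bound is tight, saturated for instance at $a=(1,1,-1)$ up to scaling, which is exactly why the coefficient $\tfrac13$ in the theorem is sharp). This lemma is the real engine of the proof and is where the odd-cycle obstruction characteristic of Specker's scenario enters; I expect confirming its tightness, and checking that it yields precisely the stated constants rather than something weaker, to be the main obstacle.

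Next I would invoke preparation noncontextuality on \eqref{eq:opeqprepsspeck}: there is a single distribution $\nu(\lambda)$ with $\tfrac12[\mu(\lambda|P_x)+\mu(\lambda|P_x^\perp)]=\nu(\lambda)$ for every $x\in\{1,2,3,*\}$. A convexity (triangle-inequality) step then gives $\eta(M_i,P)\le\sum_\lambda\eta(M_i,\lambda)\mu(\lambda|P)$ for any preparation $P$, whence $\sum_\lambda\bar\eta(\lambda)\nu(\lambda)\ge\eta_{\mathrm{ave}}$ after averaging the contributions of $P_i$ and $P_i^\perp$ against $\nu$.

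With these pieces in hand the three inequalities follow by assembly. For \eqref{eq:maininequality} I would pair $P_*$ with the unprimed $M_*$ and $P_*^\perp$ with the primed $M'_*$, apply the common ontic bound, and collapse $\mu(\lambda|P_*)+\mu(\lambda|P_*^\perp)=2\nu(\lambda)$ to obtain $2[1-\tfrac13\sum_\lambda\bar\eta(\lambda)\nu(\lambda)]\le 2(1-\tfrac13\eta_{\mathrm{ave}})$; inequality \eqref{eq:speckineq1} is identical but uses $M_*$ on both preparations. For \eqref{eq:speckineq2}, which bounds the single term $p(\mathrm{anti}|M_*,P_*)$ and so cannot symmetrize into $2\nu$ directly, I would instead substitute $\mu(\lambda|P_*)=2\nu(\lambda)-\mu(\lambda|P_*^\perp)$ and use $\bar\eta(\lambda)\le 1$ together with normalization of $\mu(\cdot|P_*^\perp)$ to get $\sum_\lambda\bar\eta(\lambda)\mu(\lambda|P_*)\ge 2\eta_{\mathrm{ave}}-1$, which yields the $\tfrac23(2-\eta_{\mathrm{ave}})$ bound.
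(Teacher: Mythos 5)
Your proof is correct, and it reaches the stated bounds by a route that differs from the paper's in one substantive place. The overall two-stage architecture is the same as the paper's: first a per-ontic-state bound $\xi(\mathrm{anti}|M_*,\lambda)\le 1-\tfrac{1}{3}\bar\eta(\lambda)$ (holding identically for $M'_*$ because both families of joints share the single-measurement marginals), then preparation noncontextuality applied to Eq.~\eqref{eq:opeqprepsspeck} together with the convexity bound $\eta(M,P)\le\sum_\lambda\eta(M,\lambda)\mu(\lambda|P)$, and finally, for the third inequality, the substitution $\mu(\lambda|P_*)=2\nu(\lambda)-\mu(\lambda|P_*^{\perp})$ with $\bar\eta(\lambda)\le 1$ to get $\sum_\lambda\bar\eta(\lambda)\mu(\lambda|P_*)\ge 2\eta_{\rm ave}-1$ — all of which matches the paper step for step. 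Where you diverge is in the proof of the central ontic lemma. The paper decomposes the joint response functions over the vertices of the no-disturbance ($3$-cycle) polytope, observes that deterministic vertices have max-predictability $1$ but anticorrelation at most $2/3$ while the unique perfectly anticorrelating indeterministic vertex has predictability $1/2$, and combines $\mu(\{\kappa_{\rm d}\}|\lambda)\ge\tfrac{1}{3}\sum_i\eta(M_i,\lambda)$ with $\xi(\mathrm{anti}|M_*,\lambda)\le 1-\tfrac{1}{3}\mu(\{\kappa_{\rm d}\}|\lambda)$. You instead bound each pairwise anticorrelation by the Fr\'echet bound $1-\tfrac{1}{2}|a_i+a_j|$ and close with the elementary inequality $|a_1+a_2|+|a_2+a_3|+|a_3+a_1|\ge\tfrac{2}{3}(|a_1|+|a_2|+|a_3|)$, which is indeed true and tight at $(1,1,-1)$. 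Your derivation is more elementary and self-contained (no vertex enumeration, and it even yields the slightly stronger intermediate bound $1-\tfrac{1}{6}\sum_{(ij)}|a_i+a_j|$ before the lemma is applied); the paper's polytope route is what scales directly to the general $n$-cycle Theorem, where the vertex structure is imported wholesale from the known characterization of the no-disturbance polytope rather than re-derived through a sign-pattern case analysis that would grow with $n$.
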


The proof can be found in the Appendix to this chapter.
The upper bound in Eqs.~(\ref{eq:maininequality}) and (\ref{eq:speckineq1}) is nontrivial for all values of $\eta_{\rm ave}>0$
while the upper bound in Eq.~(\ref{eq:speckineq2}) is nontrivial only for values of $\eta_{\rm ave}>\frac{1}{2}$. 
When $\eta_{\rm ave}>\frac{1}{2}$,
Eq.~(\ref{eq:speckineq2}) requires fewer measurement and preparation procedures than Eqs.~(\ref{eq:maininequality}) 
and (\ref{eq:speckineq1}) in order to refute noncontextuality.
Eq.~(\ref{eq:speckineq1}) in turn requires fewer measurement procedures to be implemented than Eq.~(\ref{eq:maininequality}).

An analysis of Specker's scenario according to KS-noncontextuality would require that the probability of anticorrelation is bounded above by $2/3$ 
for both $M_*$ and $M'_*$, so that $p({\rm anti}|M_*,P)+p({\rm anti}|M'_*,P^{\perp}) \le 4/3$.
Similarly, in such an analysis, $p({\rm anti}|M_*,P_*)+p({\rm anti}|M_*,P^{\perp}_*) \le 4/3$. But from Theorem \ref{speckertheorem} it is clear that
these inequalities are not warranted by the assumption of noncontextuality alone.
The noncontextual bound of $4/3$ will hold if and only if one has verified that $\eta(M_{i},P_i)=\eta(M_{i},P^{\perp}_{i})=1$
for all $M_i, P_i, P^{\perp}_i$, $i\in\{1,2,3\}$. That is, when each measurement $M_i$ produces deterministic outcomes 
on both preparations $P_i$ and $P^{\perp}_i$. In this case, $\eta_{\rm ave}=1$. At the other extreme, if each $M_i$ has no dependence on the corresponding 
preparation procedures $P_i$ and $P_i^{\perp}$, so that $\eta(M_{i},P_i)=\eta(M_{i},P_i^{\perp})=0$, and therefore $\eta_{\rm ave}=0$,
then a noncontextual model can achieve perfect anticorrelation.
{\em A mere observation of perfect anticorrelation on its own, therefore, is not enough to demonstrate contextuality}: one also needs to check that the average predictability
is sufficiently large: $\eta_{\rm ave}>0$ for Eqs.~(\ref{eq:maininequality}), (\ref{eq:speckineq1}), and $\eta_{\rm ave}>\frac{1}{2}$ for Eq.~(\ref{eq:speckineq2}).
Our noncontextuality inequalities in Eqs.~(\ref{eq:maininequality}), (\ref{eq:speckineq1}), and (\ref{eq:speckineq2}) imply 
a quantitative \emph{tradeoff} between operational quantities:
the \emph{anticorrelation} achievable in an operational theory admitting a noncontextual ontological model and the 
\emph{predictabilities} of the measurements
involved with respect to various preparations on which they are carried out. 

We will see that in operational quantum theory the
noncontextuality inequality of Eq.~(\ref{eq:maininequality}) can be violated.
It follows that if operational quantum theory correctly describes our experiments and one can devise an experiment that is 
sufficiently precise that it can approach the violation predicted by quantum theory, then this experiment should yield
a violation of the noncontextuality inequality. Moreover, if such an experiment is performed and the violation is 
observed, then this observation rules out the existence of a noncontextual ontological model {\em regardless of the
validity of operational quantum theory}. Hence the result is theory-independent. 

We leave open the question of whether a quantum violation exists for the noncontextuality inequalities of Eqs.~(\ref{eq:speckineq1}) and (\ref{eq:speckineq2}).
Our construction for the quantum violation of Eq.~(\ref{eq:maininequality}) does not suggest such a violation.
Our noncontextuality inequalities of Eqs.~(\ref{eq:maininequality}), (\ref{eq:speckineq1}), and (\ref{eq:speckineq2}),
are a proper operational generalization of the LSW inequality which was first obtained in Ref.~\cite{LSW} 
and for which a quantum violation was first theoretically predicted in Ref.~\cite{KG} followed by an experimental demonstration of this in Ref.~\cite{Xue}. 
While the LSW inequality holds for noncontextual
models of operational \emph{quantum} theory, where outcome-determinism for projective measurements can be justified from preparation
noncontextuality, our noncontextuality inequalities apply to \emph{arbitrary} operational theories.

\begin{figure}
\centering
\includegraphics[scale=0.5]{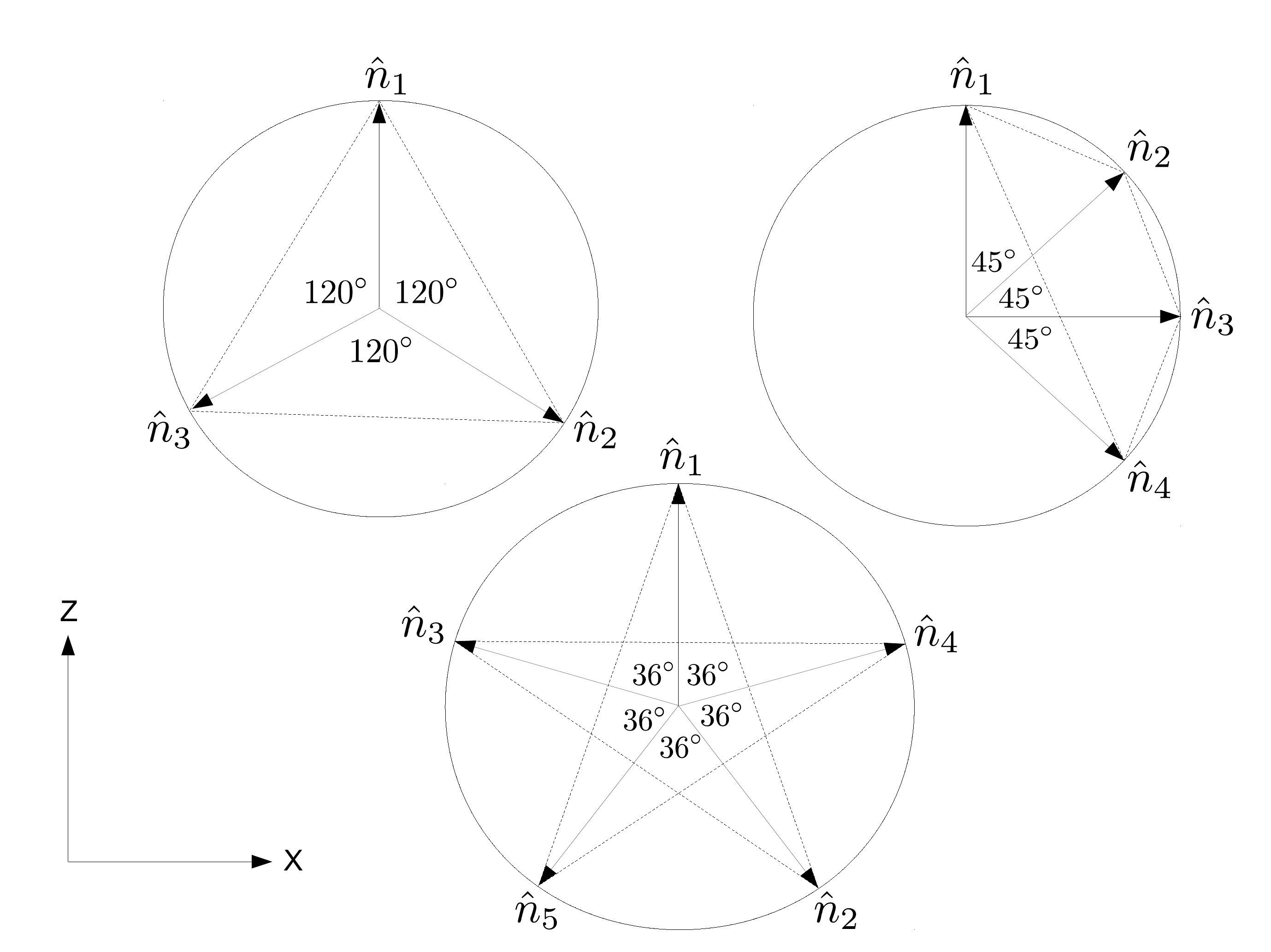}
\caption{Choice of measurement directions for $n=3,4,5$.}
\label{ncyclemmts}
\end{figure}

\begin{figure}
\centering
\includegraphics[scale=0.5]{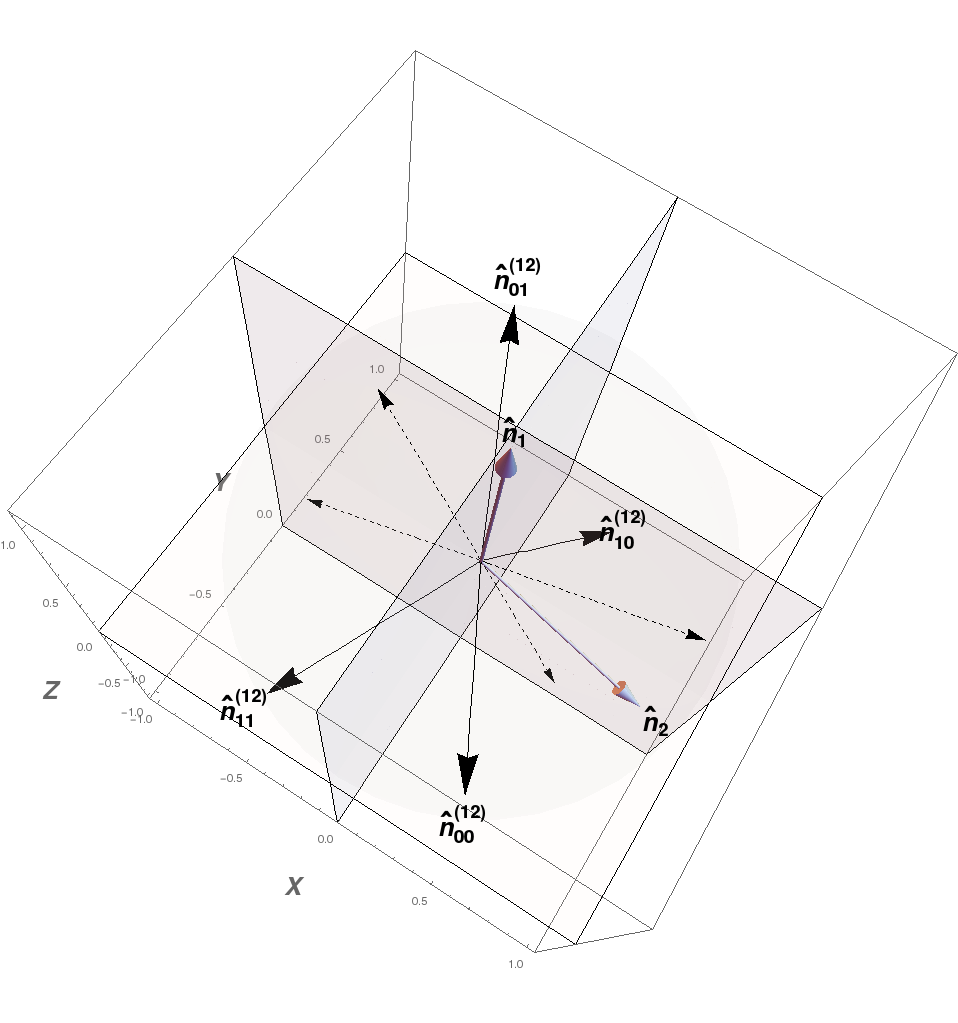}
\caption{Choice of joint POVM directions for $\hat{n}_1$ and $\hat{n}_2$ in Specker's scenario ($n=3$), $\eta_0=0.4566$.}
\label{speckerjoint}
\end{figure}

\subsection{Quantum violation of a noncontextuality inequality in Specker's scenario}
Inspired by the construction we presented in Chapter 2 (following Ref.~\cite{KG}), we show the quantum violation of 
Eq.~(\ref{eq:maininequality}).
We take $\{M_1,M_2,M_3\}$ to be three qubit measurements in an equatorial plane of the Bloch sphere, say the $ZX$ plane,
such that they are pairwise jointly measurable. $M_i$ is associated with the qubit POVM $\{E^{(i)}_0,E^{(i)}_1\}$, given by:
\begin{equation}
E^{(i)}_{X_i}\equiv \frac{1}{2}I+(-1)^{X_i}\frac{1}{2}\eta_{0}\vec{\sigma}\cdot \hat{n}_i,
\end{equation}
where $\hat{n}_i$ is the measurement direction, $\vec{\sigma}\equiv(\sigma_x,\sigma_y,\sigma_z)$ is the vector of qubit Pauli 
matrices, and $I$ is the identity matrix: 
\begin{eqnarray}\nonumber
\sigma_x=\left(\begin{matrix} 0&1\\1&0\end{matrix}\right),
\sigma_y=\left(\begin{matrix} 0&-i\\i&0\end{matrix}\right),
\sigma_z=\left(\begin{matrix} 1&0\\0&-1\end{matrix}\right),
I=\left(\begin{matrix} 1&0\\0&1\end{matrix}\right).
\end{eqnarray}
The necessary and sufficient condition for pairwise joint measurability of $\{M_1,M_2,M_3\}$ \cite{KG,heinosaari,YuOh} is 
\beq\label{etabound}
\eta_0 \le \min_{(i,j)} \frac{1}{\sqrt{1+ \sqrt{ 1- (\hat{n}_i \cdot \hat{n}_j )^2 }}}.
\eeq
We consider two sets of pairwise joint measurements denoted by $\{M_{12},M_{23},M_{31}\}$ and $\{M'_{12},M'_{23},M'_{31}\}$, where 
$M_{ij}$ is associated with the qubit POVM $\{E^{(ij)}_{00},E^{(ij)}_{01},E^{(ij)}_{10},E^{(ij)}_{11}\}$ and $M'_{ij}$ is 
associated with the qubit POVM $\{E^{\prime (ij)}_{00},E^{\prime (ij)}_{01},E^{\prime (ij)}_{10},E^{\prime (ij)}_{11}\}$. These
are given by:
\begin{equation}
 E^{(ij)}_{X_iX_j}\equiv\frac{1}{2}\left(1+(-1)^{X_i+X_j} \eta_{0}^2 \hat{n}_i.\hat{n}_j\right)\Pi_{\hat{n}^{ij}_{X_iX_j}}
\end{equation}
where 
\begin{eqnarray}
&&\Pi_{\hat{n}^{ij}_{X_iX_j}}\equiv\frac{1}{2}(I+\vec{\sigma}.\hat{n}^{ij}_{X_iX_j}),\nonumber\\
&&\hat{n}^{ij}_{X_iX_j}\equiv\frac{\eta_{0}((-1)^{X_i}\hat{n}_i+(-1)^{X_j}\hat{n}_j)-(-1)^{X_i+X_j}\vec{a}_{ij}}{1+(-1)^{X_i+X_j} \eta_{0}^2 \hat{n}_i.\hat{n}_j},\nonumber
\end{eqnarray}
and
\begin{equation}
 E^{\prime(ij)}_{X_iX_j}\equiv\frac{1}{2}\left(1+(-1)^{X_i+X_j} \eta_{0}^2 \hat{n}_i.\hat{n}_j\right)\Pi_{\hat{n}^{\prime ij}_{X_iX_j}}
\end{equation}
where
\begin{eqnarray}
&&\Pi_{\hat{n}^{\prime ij}_{X_iX_j}}\equiv\frac{1}{2}(I+\vec{\sigma}.\hat{n}^{\prime ij}_{X_iX_j}),\nonumber\\
&&\hat{n}^{\prime ij}_{X_iX_j}\equiv\frac{\eta_{0}((-1)^{X_i}\hat{n}_i+(-1)^{X_j}\hat{n}_j)+(-1)^{X_i+X_j}\vec{a}_{ij}}{1+(-1)^{X_i+X_j} \eta_{0}^2 \hat{n}_i.\hat{n}_j},\nonumber
\end{eqnarray}
and
\begin{equation}
\vec{a}_{ij}\equiv(0,\sqrt{1+\eta_{0}^4(\hat{n}_i.\hat{n}_j)^2-2\eta_{0}^2},0).\nonumber
\end{equation}

It is easy to verify that the operational equivalences of Eqs.~(\ref{eq:opeqmmtsspeck}) and (\ref{eq:opeqmmtsspeck2}) hold for this choice of pairwise joint measurements.
Let $P_x$ and $P^{\perp}_x$ be the preparation procedures associated respectively with
$$\rho_x  \equiv \frac{1}{2}I+ \frac{1}{2} \vec{\sigma}\cdot\hat{n}_x=|+\hat{n}_x\rangle\langle+\hat{n}_x|,$$
and $$\rho^{\perp}_x  \equiv \frac{1}{2}I- \frac{1}{2} \vec{\sigma}\cdot \hat{n}_x=|-\hat{n}_x\rangle\langle-\hat{n}_x|,$$
where $x\in \{ *,1,2,3\}$. 
Because $\frac{1}{2}\rho_x+\frac{1}{2}\rho^{\perp}_x = \frac{1}{2}I$ for all $x\in \{ *,1,2,3\}$, 
the preparation procedures $P^{({\rm ave})}_1$, $P^{({\rm ave})}_2$, $P^{({\rm ave})}_3$ and $P^{({\rm ave})}_*$ are all associated to the same mixed state and consequently the 
operational equivalences of Eq.~\eqref{eq:opeqprepsspeck} are satisfied.

The preparation $P_*$ is associated with a vector $\hat{n}_*$ perpendicular to the $ZX$ plane. We choose:
\begin{eqnarray}
\hat{n}_1 &\equiv&(0,0,1),\nonumber\\
\hat{n}_2 &\equiv&(\frac{\sqrt{3}}{2},0,-\frac{1}{2}),\nonumber\\
\hat{n}_3 &\equiv&(-\frac{\sqrt{3}}{2},0,-\frac{1}{2}),\nonumber\\
\hat{n}_* &\equiv& (0,1,0),
\end{eqnarray}
so that $\hat{n}_1,\hat{n}_2,\hat{n}_3,$ lie in the $ZX$ plane and $\hat{n}_*$ is perpendicular to it (Figs. \ref{ncyclemmts} and \ref{speckerjoint}).
Our choice requires $\eta_{0}\in \left[0,\sqrt{3}-1\right]$, as can be inferred from Eq.~(\ref{etabound}).
Clearly, $\forall i\in\{1,2,3\}: \eta(M_{i},P_{i})= \eta(M_{i},P^{\perp}_{i})=\eta_0$. The noncontextual bound in Eq.~\eqref{eq:maininequality} is therefore 
\begin{equation}
2-\frac{2}{3}\eta_0.
\end{equation}
We have:
\begin{align}
&p({\rm anti}|M_*,P_*)+ p({\rm anti}|M'_*,P^{\perp}_*)\nonumber\\
&=1+\frac{\eta_{0}^2}{2}+\sqrt{1+\frac{\eta_{0}^4}{4}-2\eta_{0}^2}.
\end{align}
The difference has the value
\begin{equation}
\left(\sqrt{1+\frac{\eta_{0}^4}{4}-2\eta_{0}^2} + \frac{\eta_{0}^2}{2} + \frac{2}{3}\eta_0 - 1\right).
\end{equation}
The largest violation of the inequality for our choice of preparations and measurements occurs when $\eta_0 \approx 0.4566$ so that the violation is $0.1793$:
in this case the noncontextual bound on the anticorrelation is $1.6956$ and the quantum value is $1.8749$. In the next section,
we generalize our analysis of Specker's scenario to the case of $n$-cycle scenarios.

\section{\lowercase{$n$}-cycle scenarios}
\subsection{Noncontextuality inequalities for $n$-cycle scenarios}
An $n$-cycle scenario consists of $n$ binary-outcome measurements, $\{M_1,\dots,M_n\}$, such that the pairs $\{M_i,M_{i+1\mod n}\}$ are jointly 
measurable for all $i\in \{1,\dots,n\}$. They fall into two categories: even n-cycle scenarios and odd
$n$-cycle scenarios. We will now generalize our operational noncontextuality inequality for Specker's scenario to odd $n$-cycle scenarios for all odd $n\geq3$.
We also prove noncontextuality inequalities for all even $n$-cycle scenarios, $n\geq4$. 
The joint measurability of pairs $\{M_i,M_j\}$, $j=i+1 \mod n$, with joint measurement denoted by $M_{ij}$ requires the operational equivalences: 
\begin{eqnarray}\label{eq:optlequivalencesMgen1}
M_1^{(2)}&\simeq&M_1^{(n)}\simeq M_1,\nonumber\\
M_2^{(1)}&\simeq&M_2^{(3)}\simeq M_2,\nonumber\\
M_3^{(2)}&\simeq&M_3^{(4)}\simeq M_3,\nonumber\\
&\vdots&\nonumber\\
M_n^{(n-1)}&\simeq&M_n^{(1)}\simeq M_n. 
\end{eqnarray}

Similarly, another set of joint measurements, $M'_{ij}$, 
require the operational equivalences:
\begin{eqnarray}\label{eq:optlequivalencesMgen2}
M_1^{\prime(2)}&\simeq&M_1^{\prime(n)}\simeq M_1,\nonumber\\
M_2^{\prime(1)}&\simeq&M_2^{\prime(3)}\simeq M_2,\nonumber\\
M_3^{\prime(2)}&\simeq&M_3^{\prime(4)}\simeq M_3,\nonumber\\
&\vdots&\nonumber\\
M_n^{\prime(n-1)}&\simeq&M_n^{\prime(1)}\simeq M_n. 
\end{eqnarray}

Finally, we define the measurement $M_*$ (respectively, $M_*^{\prime}$): 
sample $(i,j)$, where $i\in\{1,\dots,n\}$ and $j=i+1 \mod n$, with probability $1/n$ each (i.e. uniformly at random) and then implement $M_{ij}$ (respectively, $M'_{ij}$) and record the outcome $(X_i,X_j)$.

\emph{Odd n.}---For odd $n\geq3$, we compute the probability of recording anticorrelated outcomes,
\begin{align}
p(\text{anti}|M_*,P) \equiv& \frac{1}{n} \sum_{i=1}^n p(X_i\ne X_j|M_{ij},P)
\end{align}
where $j=i+1\mod n$ for a given $i$. We are also interested in $p(\text{anti}|M'_*,P)$.

\emph{Even n.}---For even $n\geq4$, we compute the probability of recording positively correlated outcomes for all pairs except $\{M_1,M_n\}$ for which we compute the probability of recording anticorrelated outcomes. This is the pattern of (anti)correlations 
in the ``chained Bell inequalities'' of Ref.~\cite{BCaves}, so we denote it by ``chained'':
\begin{align}
p(\text{chained}|M_*,P) &\equiv \frac{1}{n}\sum_{i=1}^{n-1} p(X_i=X_j|M_{ij},P)\nonumber\\
&+  \frac{1}{n}p(X_n\neq X_1|M_{n1},P).
\end{align}
We are also interested in the corresponding quantity for $M'_*$, $p(\text{chained}|M'_*,P)$.

Furthermore, let $P_*$, $P_*^{\perp}$, $P_i$, $P_i^{\perp}$, $i\in\{1,\dots,n\}$, be preparation procedures and let  $P^{\rm (ave)}_x$ be the preparation procedure obtained by implementing $P_x$ with probability $1/2$ and $P_x^{\perp}$ with probability $1/2$ for $x\in \{1,\dots,n,*\}$.  
We suppose that the following operational equivalences hold:
\begin{align}\label{eq:optlequivalencesPgen}
P^{\rm (ave)}_* \simeq P^{\rm (ave)}_1 \simeq P^{\rm (ave)}_2 \simeq \dots \simeq P^{\rm (ave)}_n.
\end{align}

We can now state our second theorem which recovers Theorem \ref{speckertheorem} as a special case for $n=3$.
\begin{theorem}\label{ncycletheorem}
An operational theory which satisfies the operational equivalences of Eqs.~(\ref{eq:optlequivalencesMgen1}), (\ref{eq:optlequivalencesMgen2}), and (\ref{eq:optlequivalencesPgen}),
and admits a noncontextual ontological model must necessarily satisfy the following noncontextuality inequalities:
\begin{equation}\label{eq:maininequalitygen}
p({\rm anti}|M_*,P_*)+p({\rm anti}|M'_*,P_*^{\perp})\le 2\left(1-\frac{1}{n}\eta_{\rm ave}\right)
\end{equation}
for odd $n\geq3$, and
\begin{equation}\label{eq:maininequality_chsh}
p({\rm chained}|M_*,P_*)+p({\rm chained}|M'_*,P_*^{\perp})\le 2\left(1-\frac{1}{n}\eta_{\rm ave}\right)
\end{equation}
for even $n\geq4$, where
\begin{equation}
\eta_{\rm ave}\equiv\frac{1}{2n}\sum_{i=1}^n \left(\eta(M_{i},P_i)+\eta(M_{i},P^{\perp}_{i})\right).
\end{equation}
On the other hand, using only the operational equivalences of Eqs.~(\ref{eq:optlequivalencesMgen1}) and (\ref{eq:optlequivalencesPgen}), 
such an operational theory must satisfy:
\begin{equation}\label{eq:oddncycleineq1}
p({\rm anti}|M_*,P_*)+p({\rm anti}|M_*,P^{\perp}_*)\leq 2\left(1-\frac{1}{n}\eta_{\rm ave}\right),
\end{equation}
and
\begin{align}\label{eq:oddncycleineq2}
&p({\rm anti}|M_*,P_*)\nonumber \\
&\le \frac{n-1}{n}+2\frac{(1-\eta_{\rm ave})}{n},
\end{align}
for odd $n$-cycle scenarios, besides 
\begin{equation}\label{eq:evenncycleineq1}
p({\rm chained}|M_*,P_*)+p({\rm chained}|M_*,P^{\perp}_*)\leq 2\left(1-\frac{1}{n}\eta_{\rm ave}\right)
\end{equation}
and
\begin{align}\label{eq:evenncycleineq2}
&p({\rm chained}|M_*,P_*)\nonumber \\
&\le \frac{n-1}{n}+2\frac{(1-\eta_{\rm ave})}{n}
\end{align}
for even $n$-cycle scenarios.
\end{theorem}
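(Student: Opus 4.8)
The plan is to reduce every inequality to a bound that holds separately for each ontic state $\lambda$, and then to assemble these $\lambda$-wise bounds using preparation noncontextuality to collapse the relevant distributions, together with a convexity estimate relating the operational predictabilities to ontic ones and a combinatorial lemma about cycles. First I would pass to the ontological model and, for each $\lambda$, write $q_i(\lambda)\equiv\xi(0|M_i,\lambda)$ and $\epsilon_i(\lambda)\equiv 2q_i(\lambda)-1\in[-1,1]$, so the ontic predictability of $M_i$ at $\lambda$ is $|\epsilon_i(\lambda)|$. Measurement noncontextuality applied to the equivalences \eqref{eq:optlequivalencesMgen1} and \eqref{eq:optlequivalencesMgen2} forces the single-measurement responses of both $M_{ij}$ and $M'_{ij}$ to equal this common $q_i(\lambda)$; hence the joint response $\xi(X_i,X_j|M_{ij},\lambda)$ (and its primed version) is a distribution with fixed marginals $q_i,q_j$. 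A one-line optimization over joint distributions with prescribed marginals then yields the sharp per-$\lambda$ bounds $\xi(X_i\neq X_j|M_{ij},\lambda)\le 1-\tfrac12|\epsilon_i+\epsilon_{i+1}|$ and $\xi(X_i=X_{i+1}|M_{ij},\lambda)\le 1-\tfrac12|\epsilon_i-\epsilon_{i+1}|$. Summing over edges gives, for odd $n$, $\xi(\mathrm{anti}|M_*,\lambda)\le 1-\tfrac{1}{2n}\sum_i|\epsilon_i+\epsilon_{i+1}|$, and the analogous chained expression for even $n$.

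The crux is a combinatorial lemma bounding these edge-sums by the vertex-predictabilities. For odd $n$ I would establish the telescoping identity $2\epsilon_j=\sum_{k=1}^n(-1)^{k-1}(\epsilon_{j+k-1}+\epsilon_{j+k})$: every coefficient of $\epsilon_{j+m}$ with $1\le m\le n-1$ cancels, while $\epsilon_j$ survives with coefficient $2$ precisely because $n$ is odd. Taking absolute values gives $2|\epsilon_j|\le\sum_i|\epsilon_i+\epsilon_{i+1}|$ for every $j$, hence $\sum_i|\epsilon_i+\epsilon_{i+1}|\ge\tfrac{2}{n}\sum_i|\epsilon_i|$. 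For even $n$ this same identity degenerates (the surviving coefficient becomes $0$), which is exactly why the chained pattern is needed: setting $d_i\equiv\epsilon_i-\epsilon_{i+1}$ for $i<n$ and $d_n\equiv\epsilon_n+\epsilon_1$, the telescoping $2\epsilon_j=-\sum_{i<j}d_i+\sum_{i\ge j}d_i$ reproduces the same conclusion $\sum_{i<n}|\epsilon_i-\epsilon_{i+1}|+|\epsilon_n+\epsilon_1|\ge\tfrac{2}{n}\sum_i|\epsilon_i|$.

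Finally I would assemble the pieces. For the two-preparation inequalities \eqref{eq:maininequalitygen} and \eqref{eq:maininequality_chsh} I add the $P_*$ and $P_*^\perp$ terms; preparation noncontextuality applied to \eqref{eq:optlequivalencesPgen} gives $\mu(\lambda|P_*)+\mu(\lambda|P_*^\perp)=2\nu(\lambda)$, where $\nu$ is the common distribution of all $P_x^{\rm (ave)}$, so the sum is at most $2-\tfrac{1}{n}\sum_\lambda\nu(\lambda)\sum_i|\epsilon_i+\epsilon_{i+1}|$. The combinatorial lemma lower-bounds the edge-sum by $\tfrac{2}{n}\sum_i|\epsilon_i(\lambda)|$, while the convexity estimate $\eta(M_i,P_i)+\eta(M_i,P_i^\perp)=|\sum_\lambda\epsilon_i\mu(\lambda|P_i)|+|\sum_\lambda\epsilon_i\mu(\lambda|P_i^\perp)|\le 2\sum_\lambda\nu(\lambda)|\epsilon_i(\lambda)|$ shows $\eta_{\rm ave}\le\tfrac{1}{n}\sum_\lambda\nu\sum_i|\epsilon_i|$; chaining these gives exactly $2(1-\tfrac{1}{n}\eta_{\rm ave})$. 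Inequalities \eqref{eq:oddncycleineq1} and \eqref{eq:evenncycleineq1} are proved identically but invoke only the unprimed family, so only \eqref{eq:optlequivalencesMgen1} and \eqref{eq:optlequivalencesPgen} are used. The single-preparation inequalities \eqref{eq:oddncycleineq2} and \eqref{eq:evenncycleineq2} use the same per-$\lambda$ bound in the sharper form $\sum_i|\epsilon_i+\epsilon_{i+1}|\ge 2\max_j|\epsilon_j|$, which recovers the deterministic ceiling $\tfrac{n-1}{n}$ at full predictability; the remaining task is to bound the resulting single-preparation average of the ontic predictability below by $\eta_{\rm ave}$ using the support containment $\mu(\lambda|P_*)\le 2\nu(\lambda)$.

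The main obstacle is the combinatorial lemma, and specifically the observation that its natural proof via an alternating telescoping sum succeeds for odd cycles but fails for even cycles — this dichotomy is what forces the two distinct correlation patterns (anticorrelation versus chained) and is the structural heart of the theorem. A secondary but genuine difficulty lies in the bookkeeping for the single-preparation inequalities, where $\eta_{\rm ave}$ is defined through the pairs $P_i,P_i^\perp$ yet must be made to constrain correlations measured on the distinct preparation $P_*$; this is exactly the step at which preparation noncontextuality, through the common distribution $\nu$, becomes indispensable.
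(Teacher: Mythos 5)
Your proposal is correct, and it reaches the paper's central intermediate bound $\xi(\mathrm{anti}|M_*,\lambda)\le 1-\tfrac{1}{n^2}\sum_i\eta(M_i,\lambda)$ (and its chained analogue) by a genuinely different route. The paper proves this bound by enumerating the vertices of the $n$-cycle no-disturbance polytope: it classifies them into $2^n$ deterministic vertices (at most $n-1$ edges anticorrelated), one distinguished indeterministic vertex achieving perfect (anti)correlation, and the remaining indeterministic vertices (at most $n-2$ edges), and then converts the weight on deterministic vertices into a lower bound $\mu(\{\kappa_{\rm d}\}|\lambda)\ge\max_i\eta(M_i,\lambda)$. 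You instead work edge by edge: the optimal-coupling bound $\xi(X_i\ne X_j|M_{ij},\lambda)\le 1-\tfrac12|\epsilon_i+\epsilon_j|$ for fixed marginals, followed by the alternating telescoping identity around the cycle, yields the same $1-\tfrac1n\max_j|\epsilon_j|$ without ever invoking the polytope's vertex structure. Your route has two advantages: it makes the odd/even dichotomy structurally transparent (the telescoping coefficient of $\epsilon_j$ is $1+(-1)^{n-1}$, which vanishes for even $n$ and forces the chained sign pattern), and it requires no prior knowledge of the vertex enumeration from Ref.~\cite{cabelloncycle}; the paper's route, in exchange, fits the vertex-classification template used for the KS-uncolourable constructions of Chapter 6. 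The assembly steps — preparation noncontextuality giving the common $\nu(\lambda)$, the convexity estimate $\eta(M,P)\le\sum_\lambda\eta(M,\lambda)\mu(\lambda|P)$, and, for the single-preparation inequalities, splitting $2\nu=\mu(\cdot|P_*)+\mu(\cdot|P_*^\perp)$ and crudely bounding the $P_*^\perp$ contribution by $\tfrac12$ to produce the $2\eta_{\rm ave}-1$ term — coincide with the paper's.
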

The proof of these inequalities generalizes the proof of Theorem \ref{speckertheorem} and is provided in the Appendix to this chapter.
These inequalities quantify the tradeoff between
the degree to which a pattern of correlations---``anti'' or ``chained''---is achievable in an operational theory which admits a
noncontextual ontological model versus the operational predictabilities of the measurements involved. We will now 
show a quantum violation of Eqs.~(\ref{eq:maininequalitygen}) and (\ref{eq:maininequality_chsh}), while we leave open the question of whether quantum violation of Eqs.~(\ref{eq:oddncycleineq1}),
(\ref{eq:oddncycleineq2}), (\ref{eq:evenncycleineq1}), (\ref{eq:evenncycleineq2}) is possible.

\subsection{Quantum violation of $n$-cycle noncontextuality inequalities}
\emph{Quantum realization for odd $n\geq3$.}---We can violate the operational inequality of Eq.~\eqref{eq:maininequalitygen} in quantum theory using a 
generalization of the $n=3$ construction in Specker's scenario. The value of $\hat{n}_i.\hat{n}_j$ is given by $\hat{n}_i.\hat{n}_j=\cos \frac{n-1}{n}\pi$, where $i\in\{1,\dots,n\}$
and $j=(i+1)\mod n$. That is, our measurements are in an equatorial plane of the Bloch sphere, say the $ZX$ plane, such that $\hat{n}_i$ and $\hat{n}_j$ 
are at an angle of $\frac{n-1}{n}\pi$ relative to each other: $\hat{n}_k \equiv \left(\sin \frac{(k-1)(n-1)}{n}\pi,0,\cos \frac{(k-1)(n-1)}{n}\pi\right)$,
for all $k\in\{1,2,\dots,n\}$, and, as before, $\hat{n}_* \equiv (0,1,0)$. Our construction of the pairwise joint measurements proceeds exactly as 
in the $n=3$ case described earlier, the joint POVMs given by $M_{ij}=\{E^{(ij)}_{X_iX_j}\}$ and $M^{\prime}_{ij}=\{E^{\prime(ij)}_{X_iX_j}\}$. That this construction leads to
a quantum violation will be shown below.

\emph{Quantum realization for even $n\geq4$.}---We can violate the operational inequality of Eq.~\eqref{eq:maininequality_chsh} in quantum theory
using the following construction: our choice of measurements is given by $\hat{n}_i.\hat{n}_j=\cos \frac{\pi}{n}$, where $i\in\{1,\dots,n-1\}$ and $j=i+1$,
and $\hat{n}_n.\hat{n}_1=\cos \frac{(n-1)\pi}{n}$: $\hat{n}_k \equiv (\sin \frac{(k-1)\pi}{n},0,\cos \frac{(k-1)\pi}{n})$ for all $k\in\{1,2,\dots,n\}$. Also, $\hat{n}_* \equiv (0,1,0)$.
The joint POVMs are given by $M_{ij}=\{F^{(ij)}_{X_iX_j}\}$ and $M^{\prime}_{ij}=\{F^{\prime(ij)}_{X_iX_j}\}$, where $F^{(n1)}_{X_nX_1}=E^{(n1)}_{X_nX_1}$
and $F^{\prime (n1)}_{X_nX_1}=E^{\prime (n1)}_{X_nX_1}$, while for $i\in\{1,\dots,n-1\}, j=i+1$:
\begin{equation}
F^{(ij)}_{X_iX_j}\equiv\frac{1}{2}\left(1-(-1)^{X_i+X_j} \eta_{0}^2 \hat{n}_i.\hat{n}_j\right)\Pi_{\hat{n}^{ij}_{X_iX_j}}
\end{equation}
where 
\begin{eqnarray}
&&\Pi_{\hat{n}^{ij}_{X_iX_j}}\equiv\frac{1}{2}(I+\vec{\sigma}.\hat{n}^{ij}_{X_iX_j}),\nonumber\\
&&\hat{n}^{ij}_{X_iX_j}\equiv\frac{\eta_{0}((-1)^{X_i}\hat{n}_i+(-1)^{X_j}\hat{n}_j)+(-1)^{X_i+X_j}\vec{a}_{ij}}{1-(-1)^{X_i+X_j} \eta_{0}^2 \hat{n}_i.\hat{n}_j},\nonumber
\end{eqnarray}
\begin{equation}
F^{\prime(ij)}_{X_iX_j}\equiv\frac{1}{2}\left(1-(-1)^{X_i+X_j} \eta_{0}^2 \hat{n}_i.\hat{n}_j\right)\Pi_{\hat{n}^{\prime ij}_{X_iX_j}}
\end{equation}
where
\begin{eqnarray}
&&\Pi_{\hat{n}^{\prime ij}_{X_iX_j}}\equiv\frac{1}{2}(I+\vec{\sigma}.\hat{n}^{\prime ij}_{X_iX_j}),\nonumber\\
&&\hat{n}^{\prime ij}_{X_iX_j}\equiv\frac{\eta_{0}((-1)^{X_i}\hat{n}_i+(-1)^{X_j}\hat{n}_j)-(-1)^{X_i+X_j}\vec{a}_{ij}}{1-(-1)^{X_i+X_j} \eta_{0}^2 \hat{n}_i.\hat{n}_j},\nonumber
\end{eqnarray}
and
\begin{equation}
\vec{a}_{ij}\equiv(0,\sqrt{1+\eta_{0}^4(\hat{n}_i.\hat{n}_j)^2-2\eta_{0}^2},0).\nonumber
\end{equation}

\emph{Quantum violation for all $n\geq3$.}---For all $n\geq3$, the noncontextual upper bound is
$$2 - 2\frac{\eta_0}{n}.$$

The quantum value for both odd and even $n$ takes the same form given our construction. For odd $n\geq 3$:
\begin{eqnarray}
&&p({\rm anti}|M_*,P_*)+ p({\rm anti}|M'_*,P^{\perp}_*)\nonumber\\
&=&1+\eta_0^2 \cos \frac{\pi}{n}+\sqrt{1+\eta_{0}^4\left(\cos \frac{\pi}{n}\right)^2-2\eta_{0}^2}.
\end{eqnarray}
For even $n\geq 4$:
\begin{eqnarray}
&&p({\rm chained}|M_*,P_*)+ p({\rm chained}|M'_*,P^{\perp}_*)\nonumber\\
&=&1+\eta_0^2 \cos \frac{\pi}{n}+\sqrt{1+\eta_{0}^4\left(\cos \frac{\pi}{n}\right)^2-2\eta_{0}^2}.
\end{eqnarray}
The quantity on the left-hand-side of the noncontextuality inequality,
$p({\rm anti}|M_*,P_*)+ p({\rm anti}|M'_*,P^{\perp}_*)$ (for odd $n$) or $p({\rm chained}|M_*,P_*)+ p({\rm chained}|M'_*,P^{\perp}_*)$
(for even n), will be called a \emph{contextuality witness}.

The quantum violation is therefore given by
\beq
Q_{\rm viol}\equiv\sqrt{1+\eta_{0}^4\left(\cos \frac{\pi}{n}\right)^2-2\eta_{0}^2} + \eta_0^2 \cos \frac{\pi}{n} + 2\frac{\eta_0}{n} - 1
\eeq
Figures \ref{n3} - \ref{n200} depict the variation in the quantum and classical values as well as their difference 
(marked along the vertical axis) over the whole range of values of
$\eta_0$ (marked along the horizontal axis). For each $n$, beyond some {\em critical} value of $\eta_0$ the quantum violation disappears ($Q_{\rm viol}=0$)
and 
the statistics from the joint measurements then cannot rule out noncontextuality ($Q_{\rm viol}\leq0$). The upper bound on $\eta_0$
is given by the joint measurability condition of Eq.~\eqref{etabound}. Table \ref{qviol} lists the maximum $Q_{\rm viol}$, the corresponding
optimal $\eta_0$, critical $\eta_0$, and the upper bound on $\eta_0$ for a few values of $n$.
Note also that the 
noncontextual bound scales as $\sim 1-\frac{1}{n}$ and the quantum value scales as $\sim 1-\frac{1}{n^2}$ in the limit 
$n\rightarrow\infty$. It is an open question whether this quantum violation is optimal.

\begin{table}[htb!]
\centering
\begin{tabular}{ | c | c | c | c | c |}
 \hline
      $n$ & $Q_{\rm viol}$ & Optimal $\eta_0$ & Critical $\eta_0$ & Upper bound on $\eta_0$\\ \hline
      3 & 0.1793 & 0.4566 & 0.6981 & 0.7320\\ \hline
      4 & 0.1557 & 0.5029 & 0.7369 & 0.7653\\ \hline
      5 & 0.1393 & 0.5412 & 0.7693 & 0.7936\\ \hline
      6 & 0.1266 & 0.5727 & 0.7953 & 0.8164\\ \hline
      7 & 0.1164 & 0.5990 & 0.8163 & 0.8351\\ \hline
      8 & 0.1079 & 0.6213 & 0.8336 & 0.8504\\ \hline
      9 & 0.1007 & 0.6403 & 0.8479 & 0.8632\\ \hline
     10 & 0.0944 & 0.6569 & 0.8601 & 0.8740\\ \hline
     11 & 0.0889 & 0.6715 & 0.8704 & 0.8832\\ \hline
     12 & 0.0841 & 0.6822 & 0.8794 & 0.8912\\ \hline
     13 & 0.0798 & 0.6960 & 0.8872 & 0.8982\\ \hline
     14 & 0.0759 & 0.7064 & 0.8940 & 0.9044\\ \hline
     99 & 0.0160 & 0.8881 & 0.9829 & 0.9845\\ \hline
    100 & 0.0159 & 0.8887 & 0.9831 & 0.9846\\ \hline
    199 & 0.0086 & 0.9211 & 0.9914 & 0.9921\\\hline
    200 & 0.0085 & 0.9213 & 0.9914 & 0.9922\\\hline
\end{tabular}
\caption{Quantum violation for n-cyle scenarios}
\label{qviol}
\end{table}

\begin{figure}[htb!]
\centering
 \includegraphics[scale=0.52]{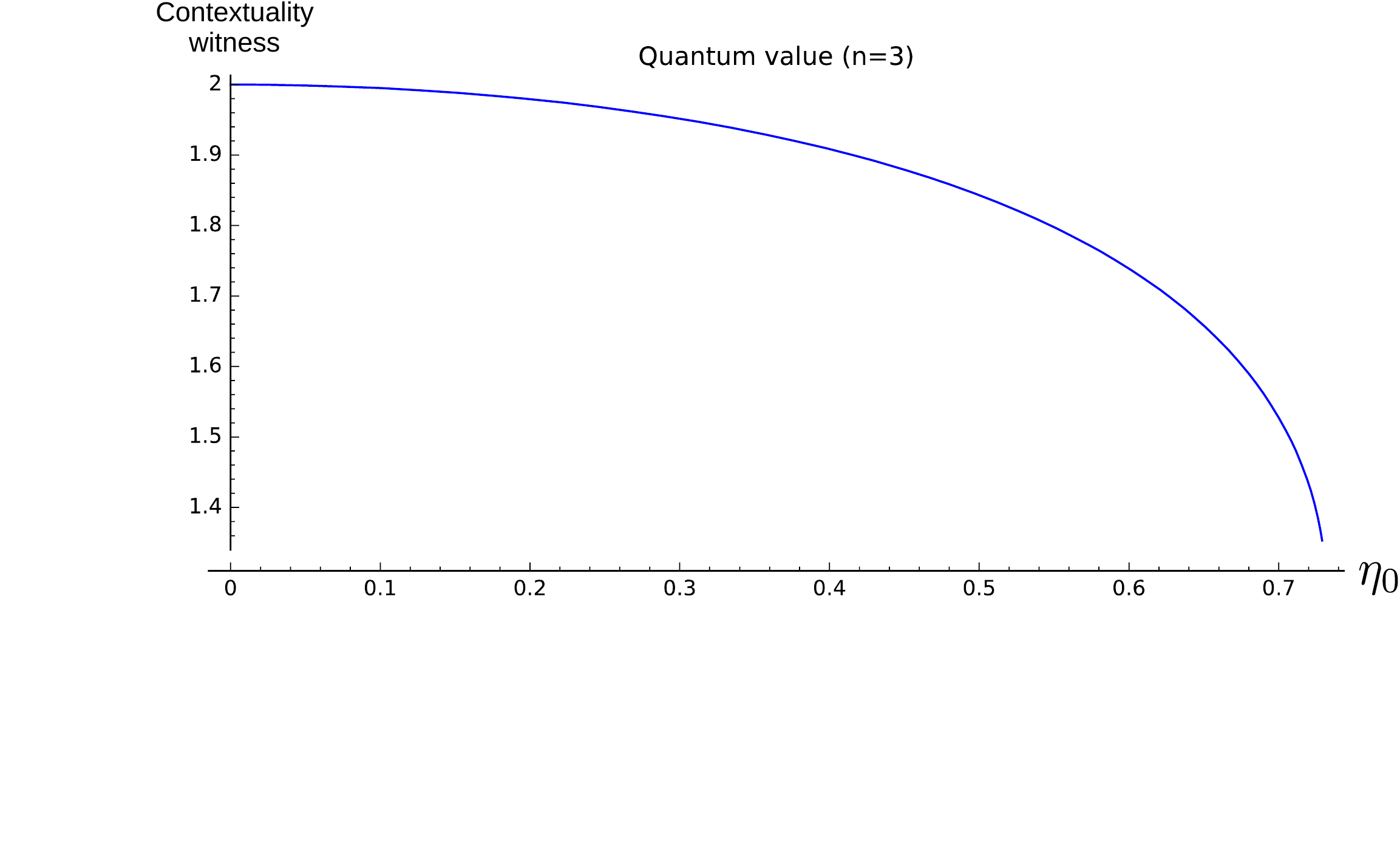}
 \includegraphics[scale=0.52]{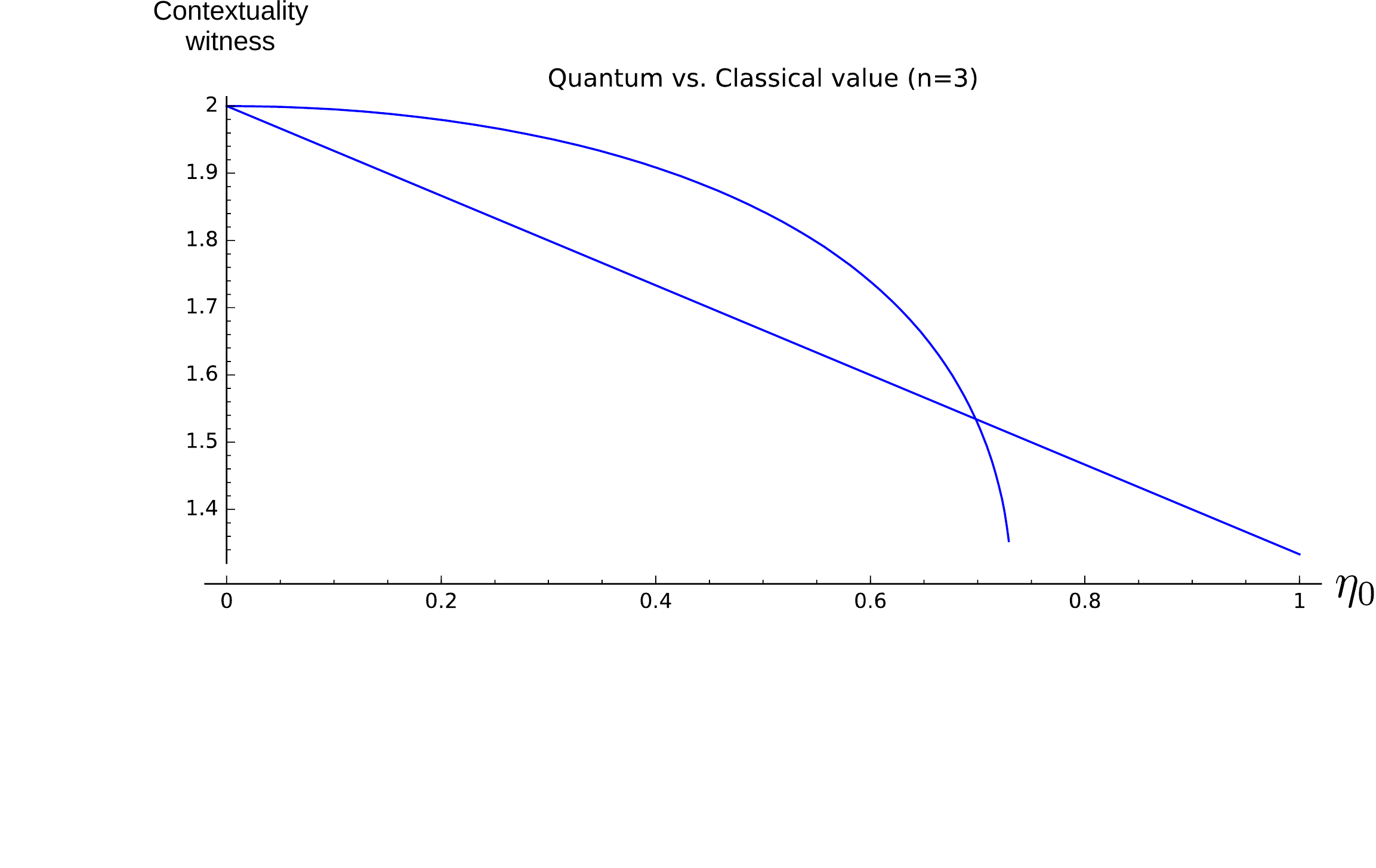}
 \includegraphics[scale=0.52]{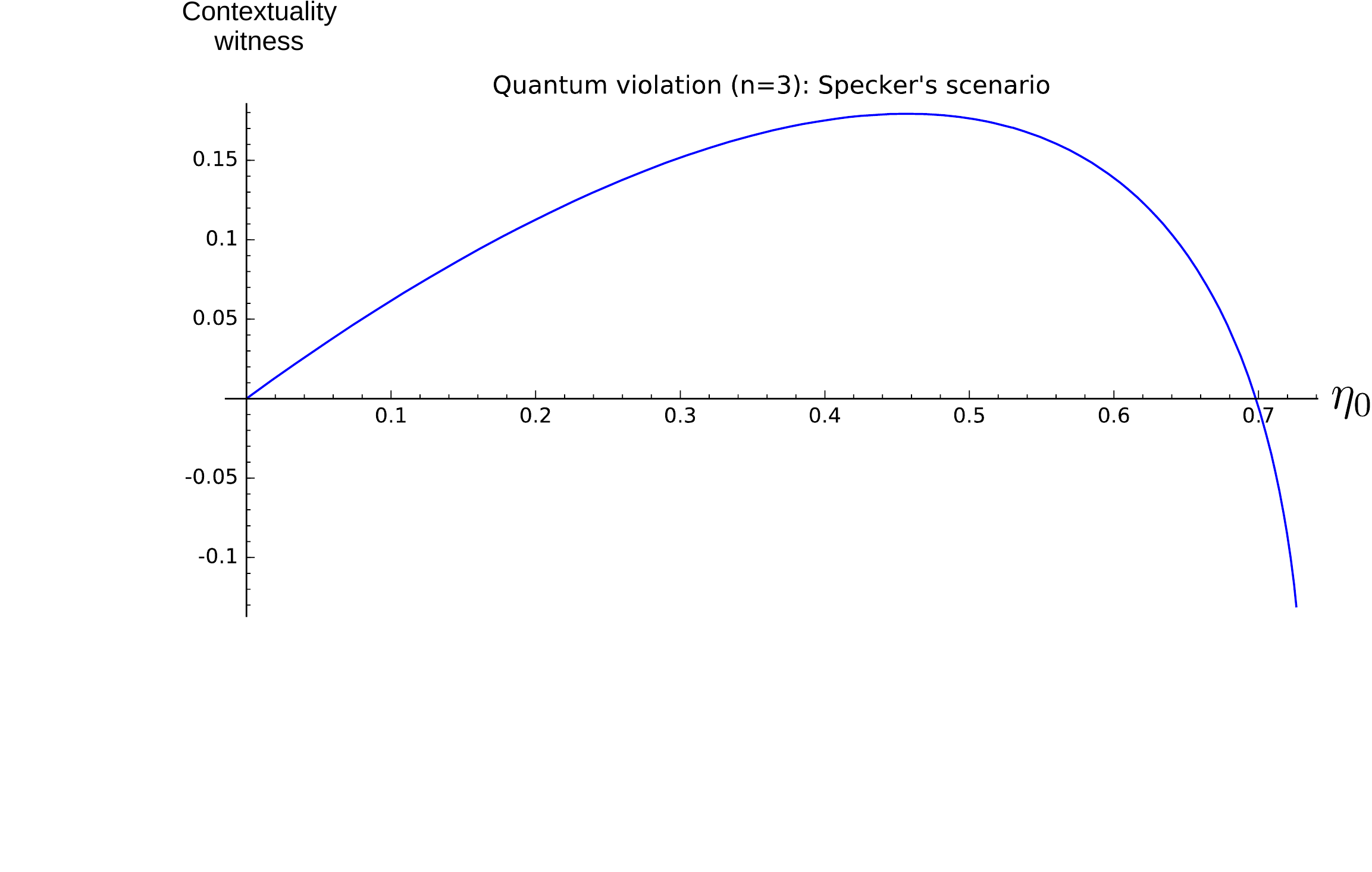}
\caption{Specker's scenario. In the middle figure, straight line denotes classical value.}
\label{n3}
\end{figure}

\begin{figure}
\centering
 \includegraphics[scale=0.52]{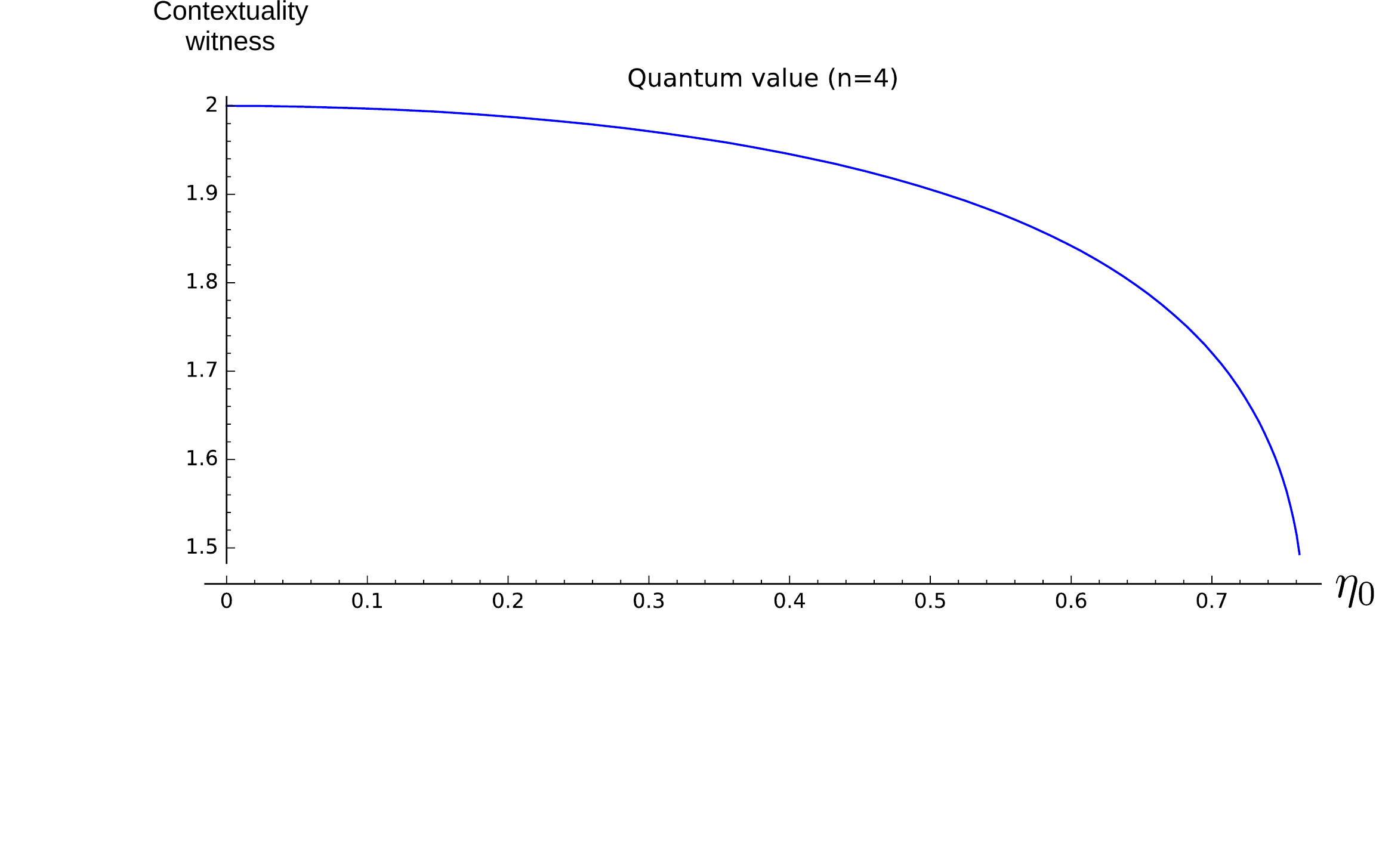}
 \includegraphics[scale=0.52]{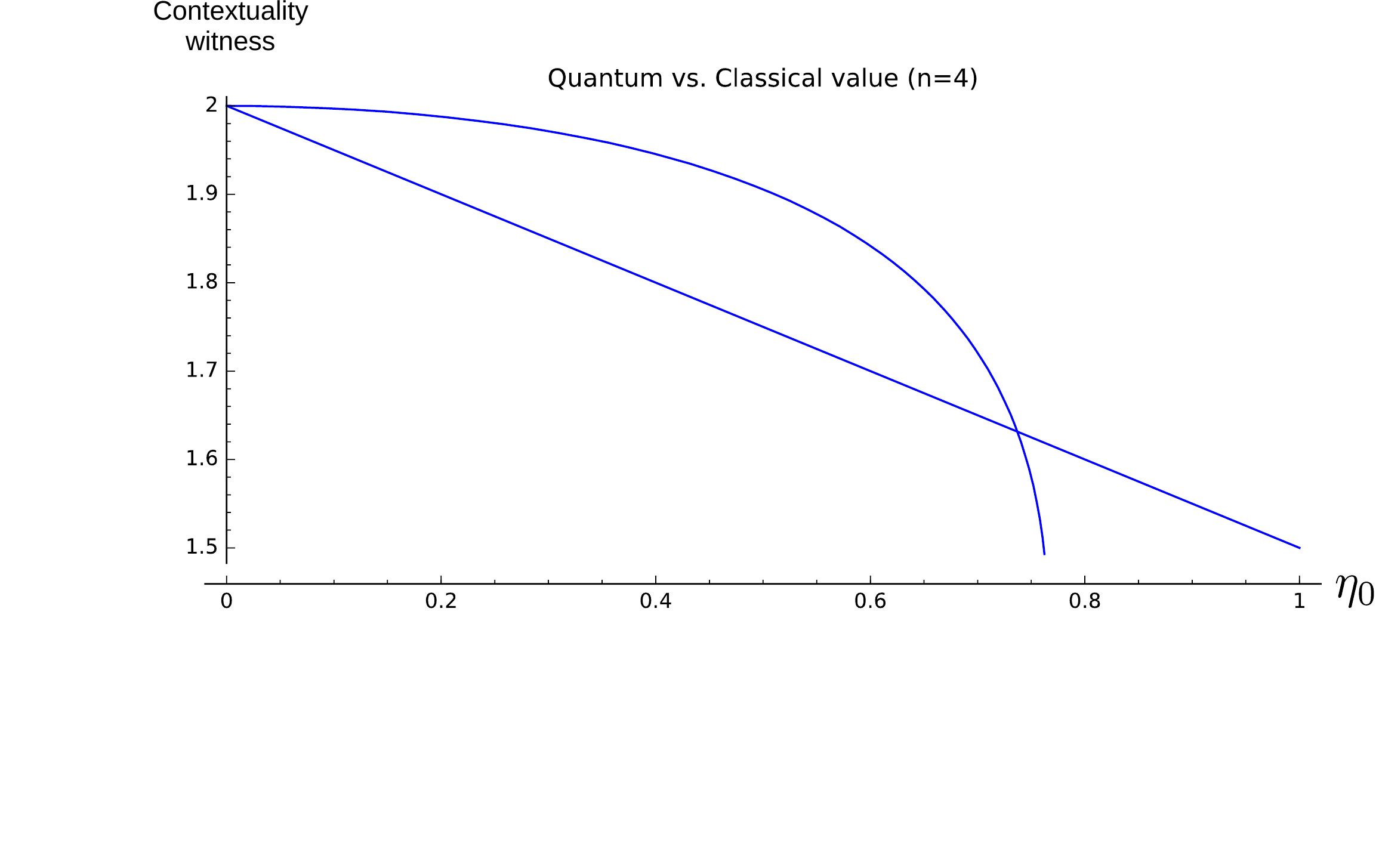}
 \includegraphics[scale=0.52]{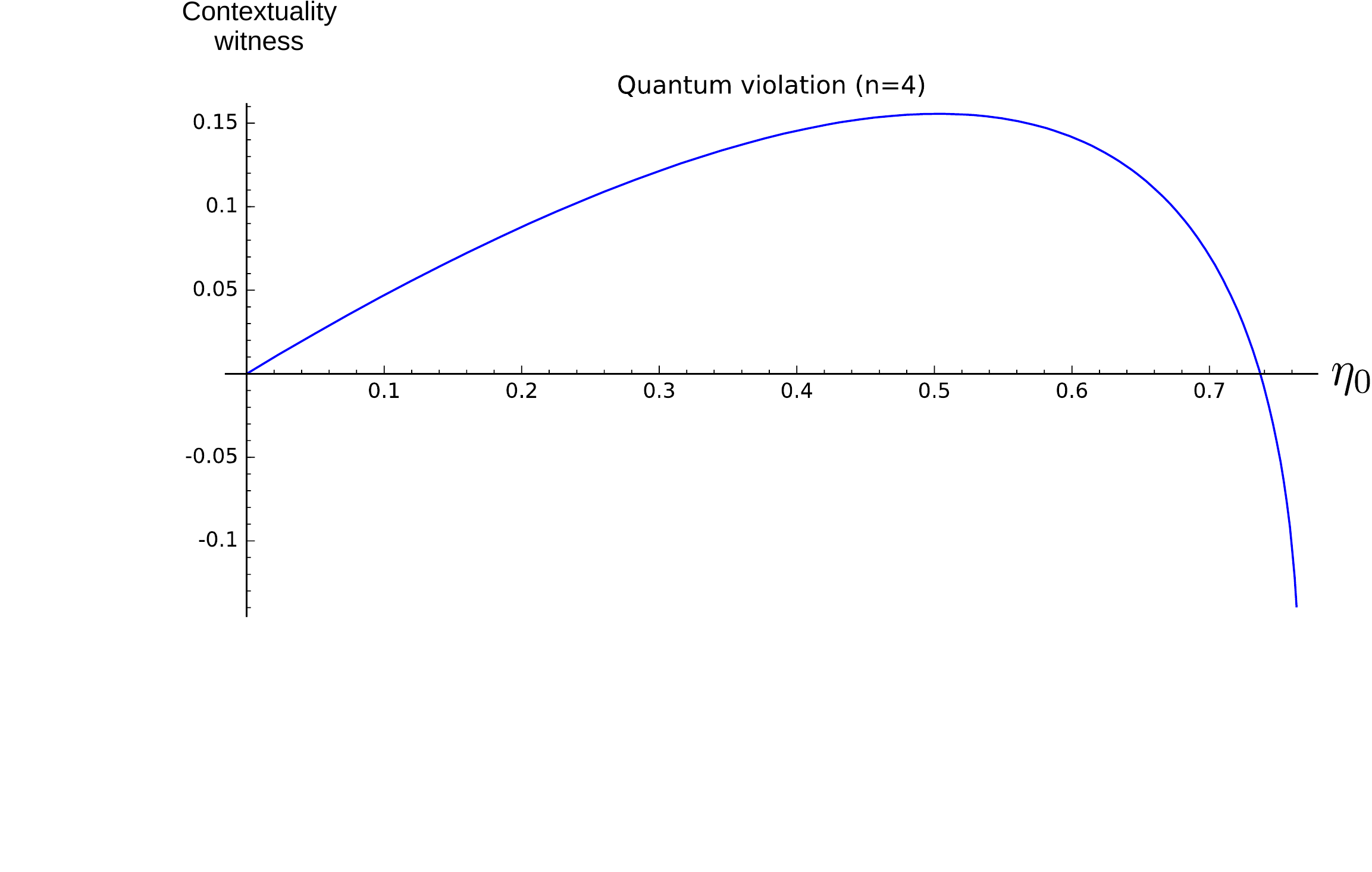}
\caption{n=4 scenario. In the middle figure, straight line denotes classical value.}
\label{n4}
\end{figure}

\begin{figure}
\centering
 \includegraphics[scale=0.52]{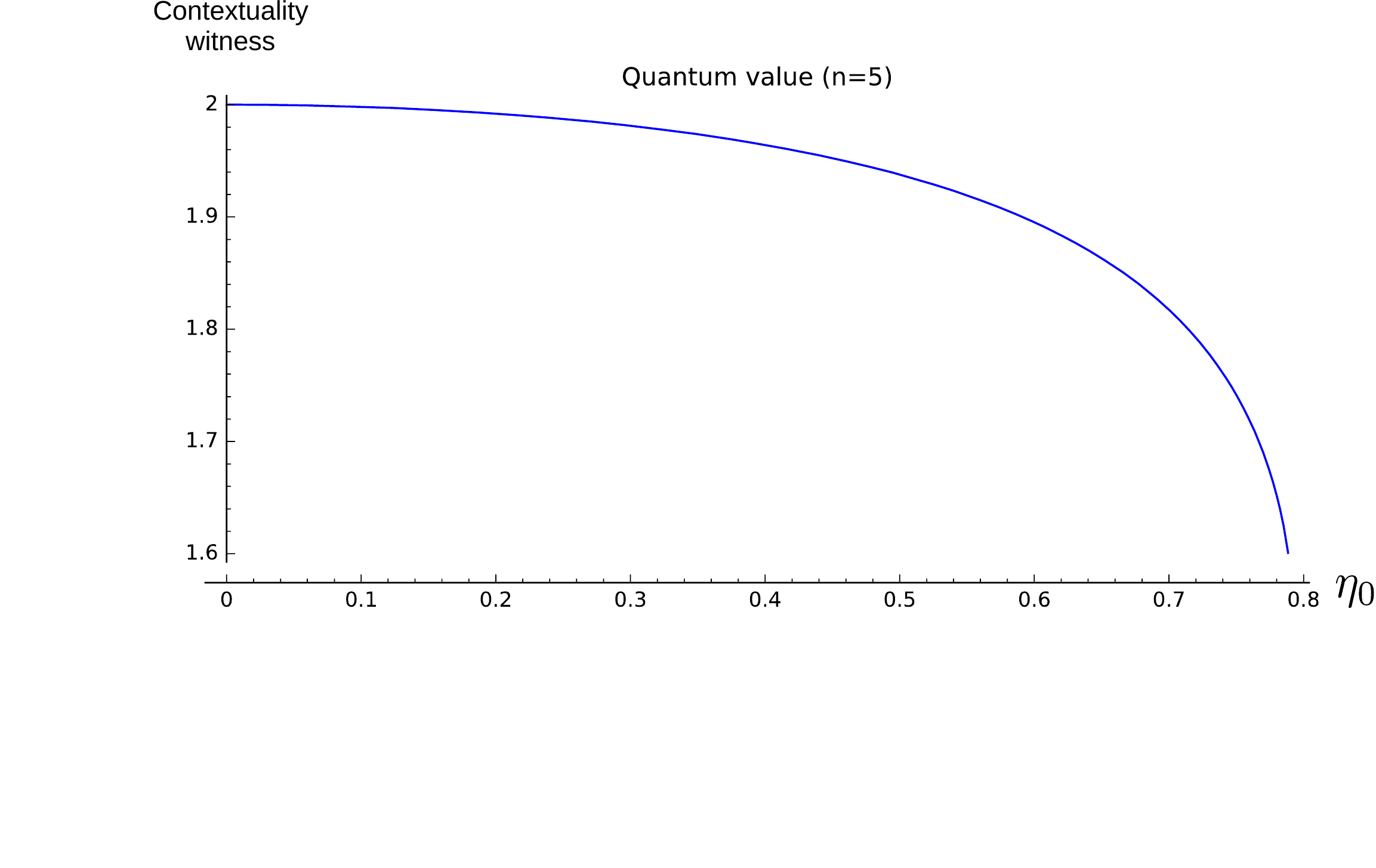}
 \includegraphics[scale=0.52]{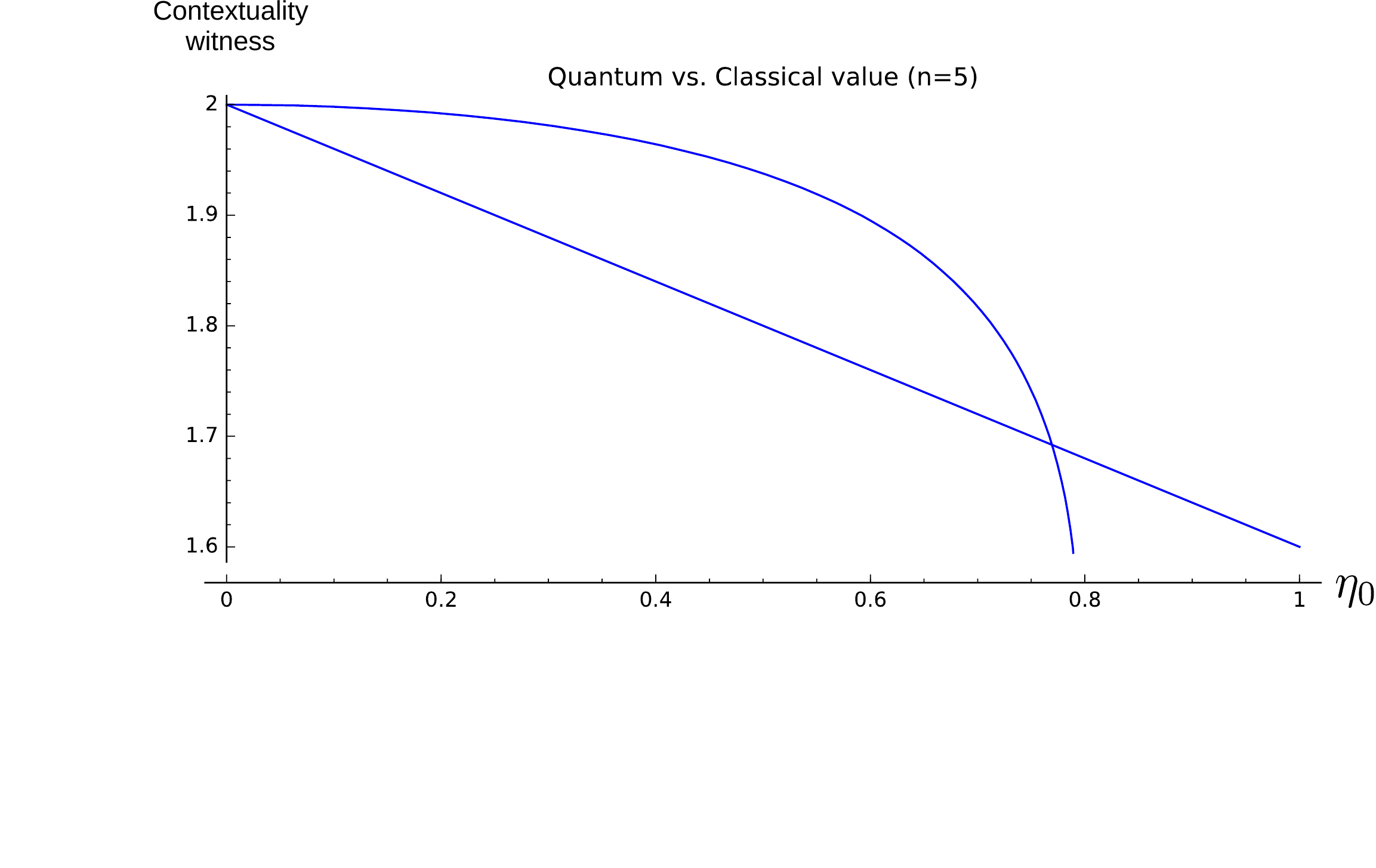}
 \includegraphics[scale=0.52]{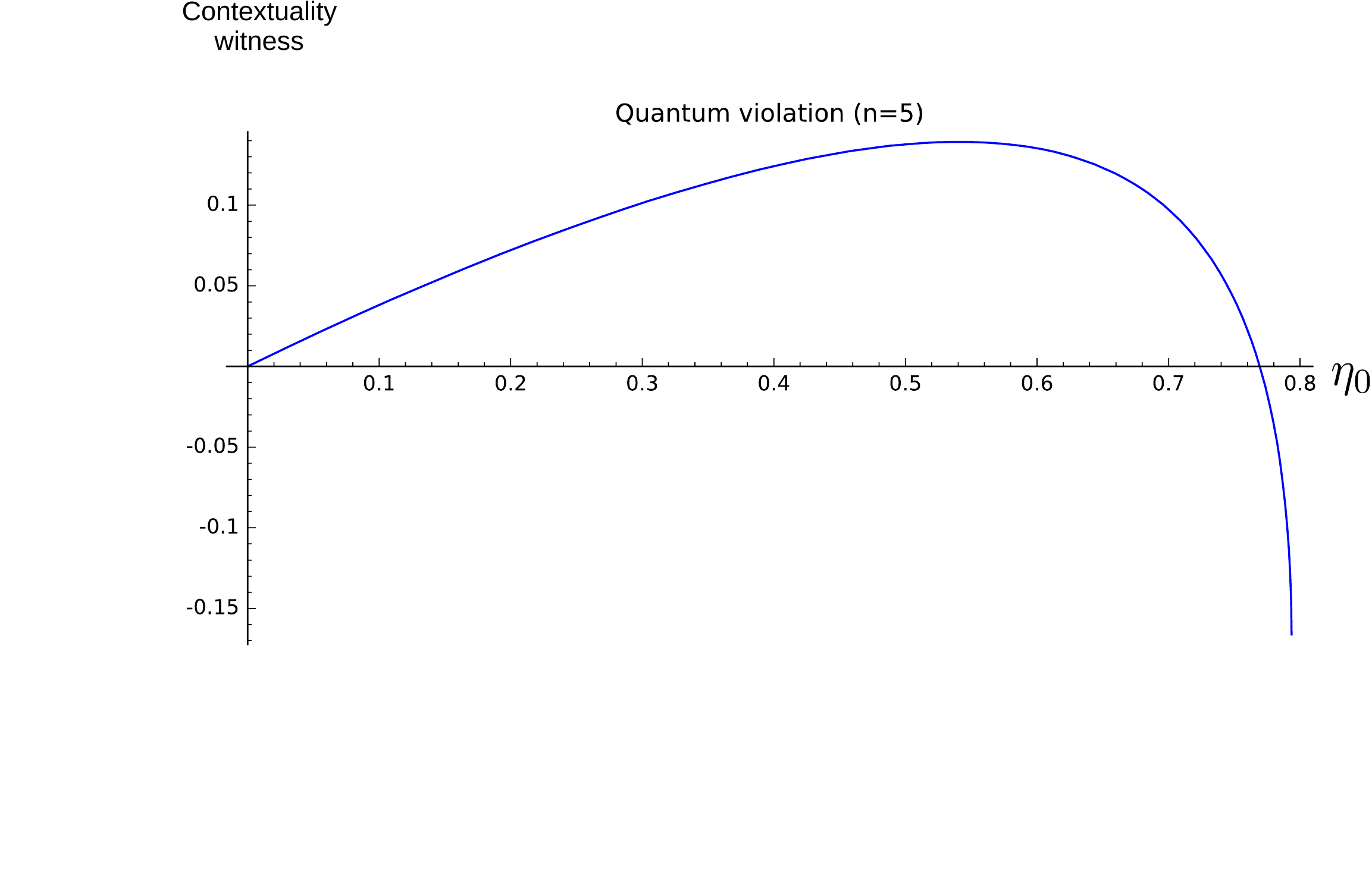}
\caption{n=5 scenario. In the middle figure, straight line denotes classical value.}
\label{n5}
\end{figure}

\begin{figure}
\centering
 \includegraphics[scale=0.52]{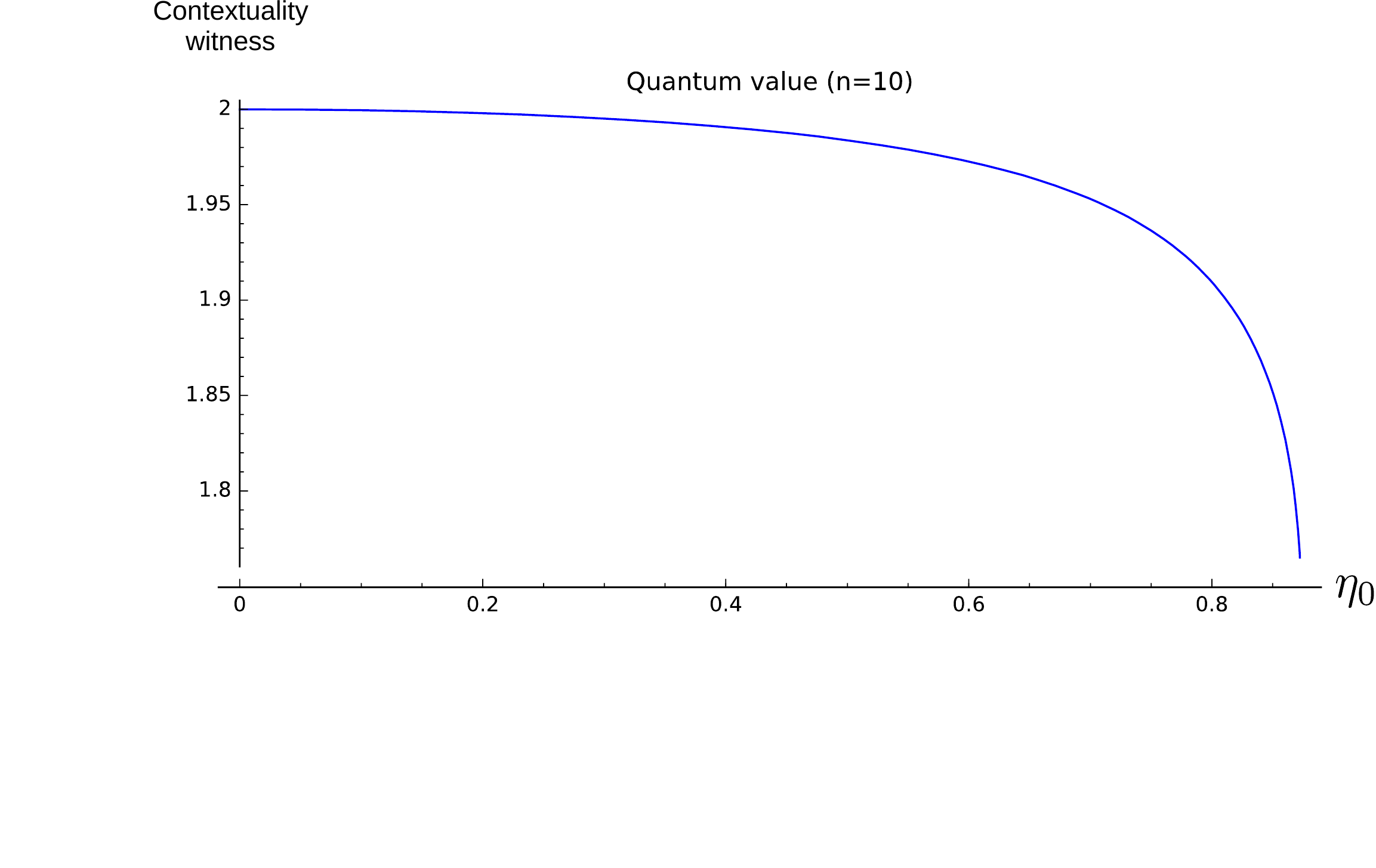}
 \includegraphics[scale=0.52]{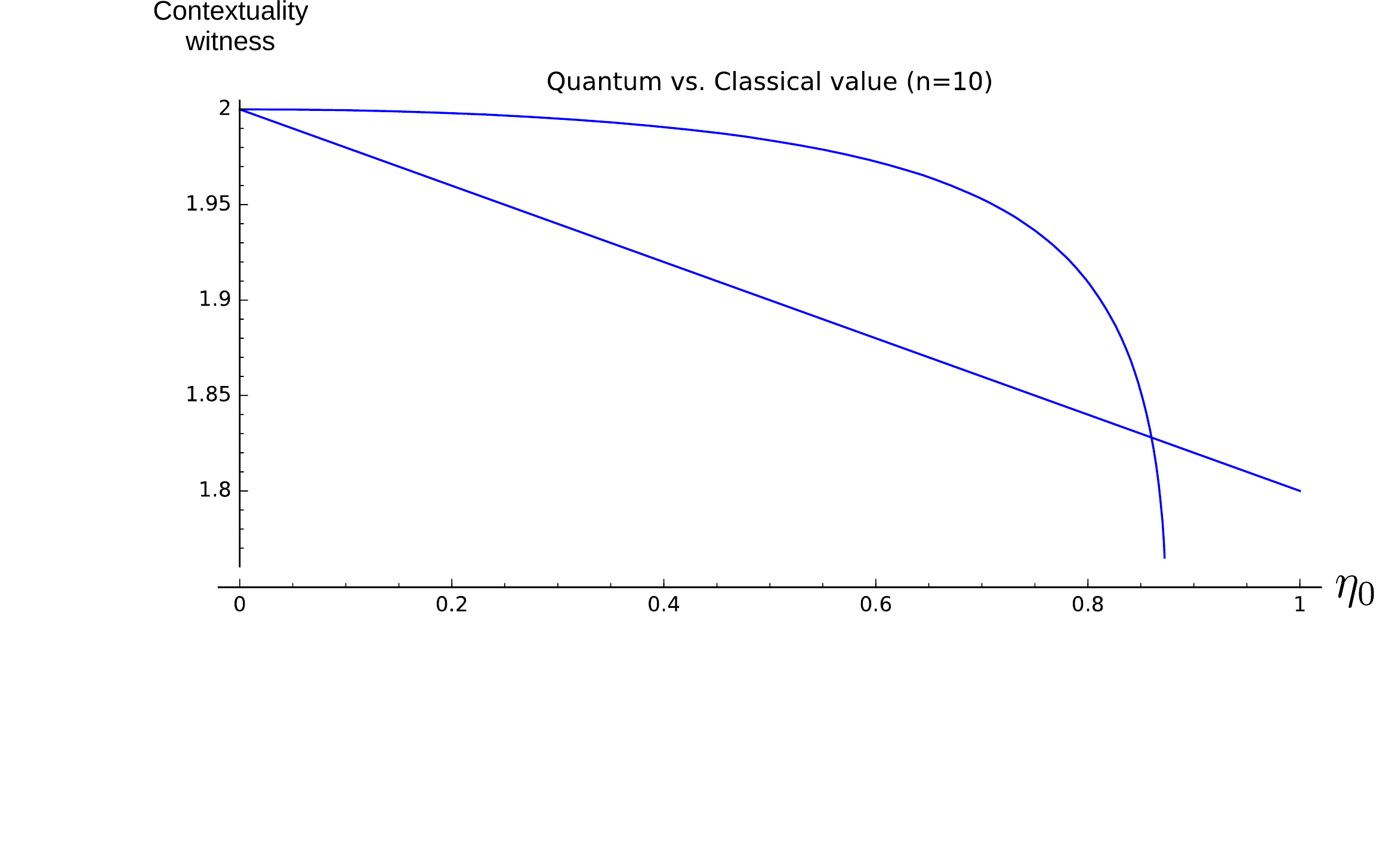}
 \includegraphics[scale=0.52]{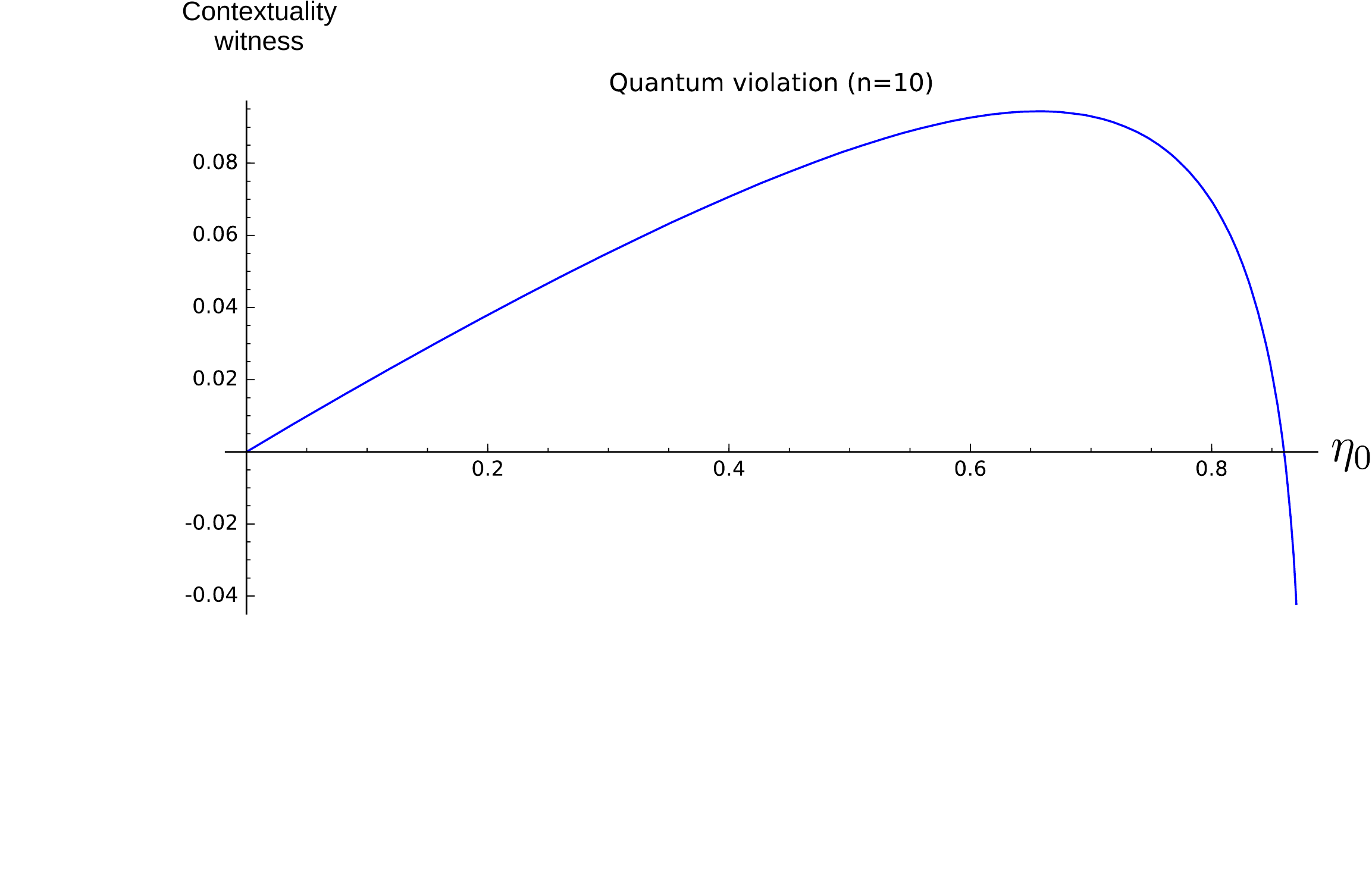}
\caption{n=10 scenario. In the middle figure, straight line denotes classical value.}
\label{n10}
\end{figure}

\begin{figure}
\centering
 \includegraphics[scale=0.52]{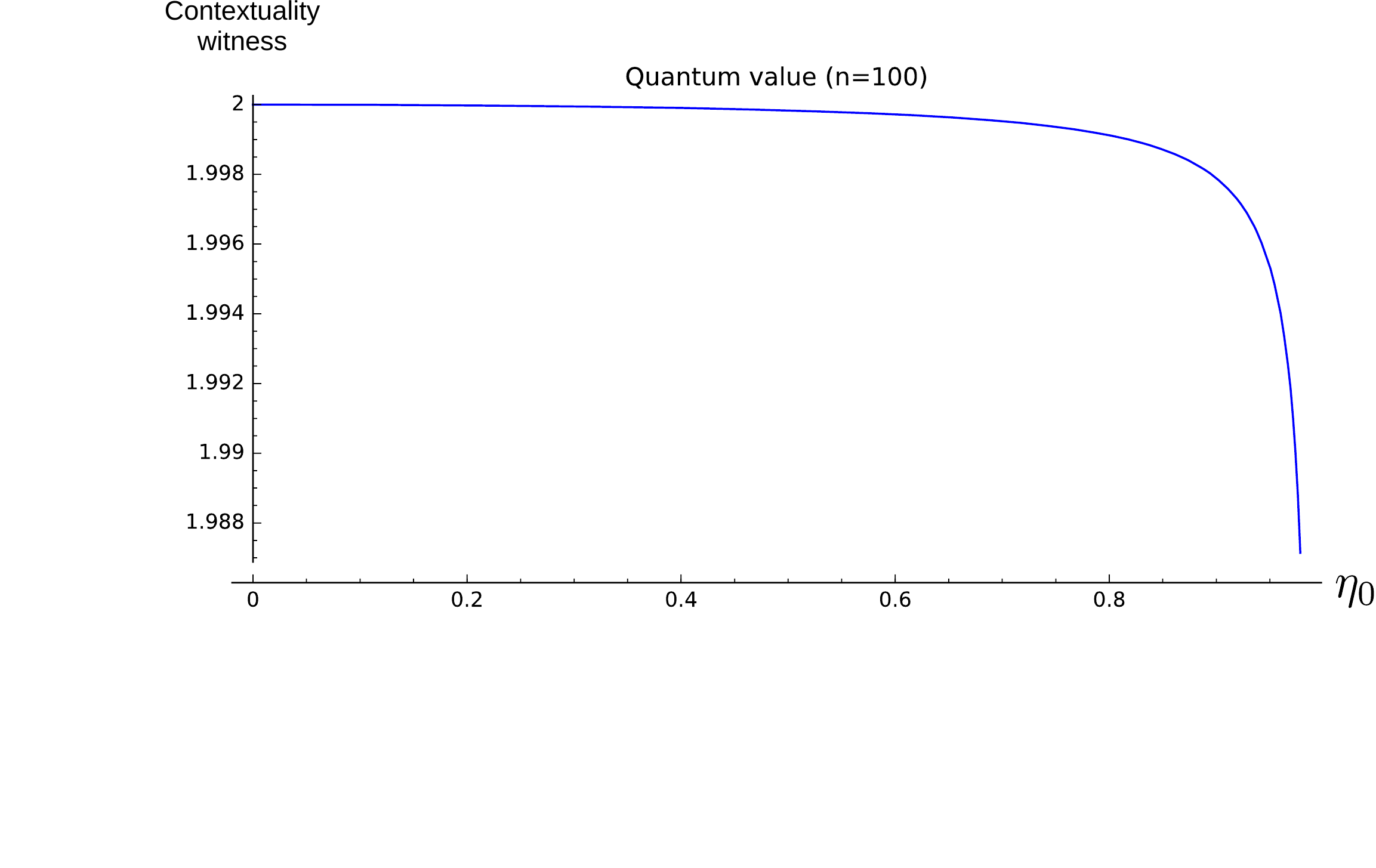}
 \includegraphics[scale=0.52]{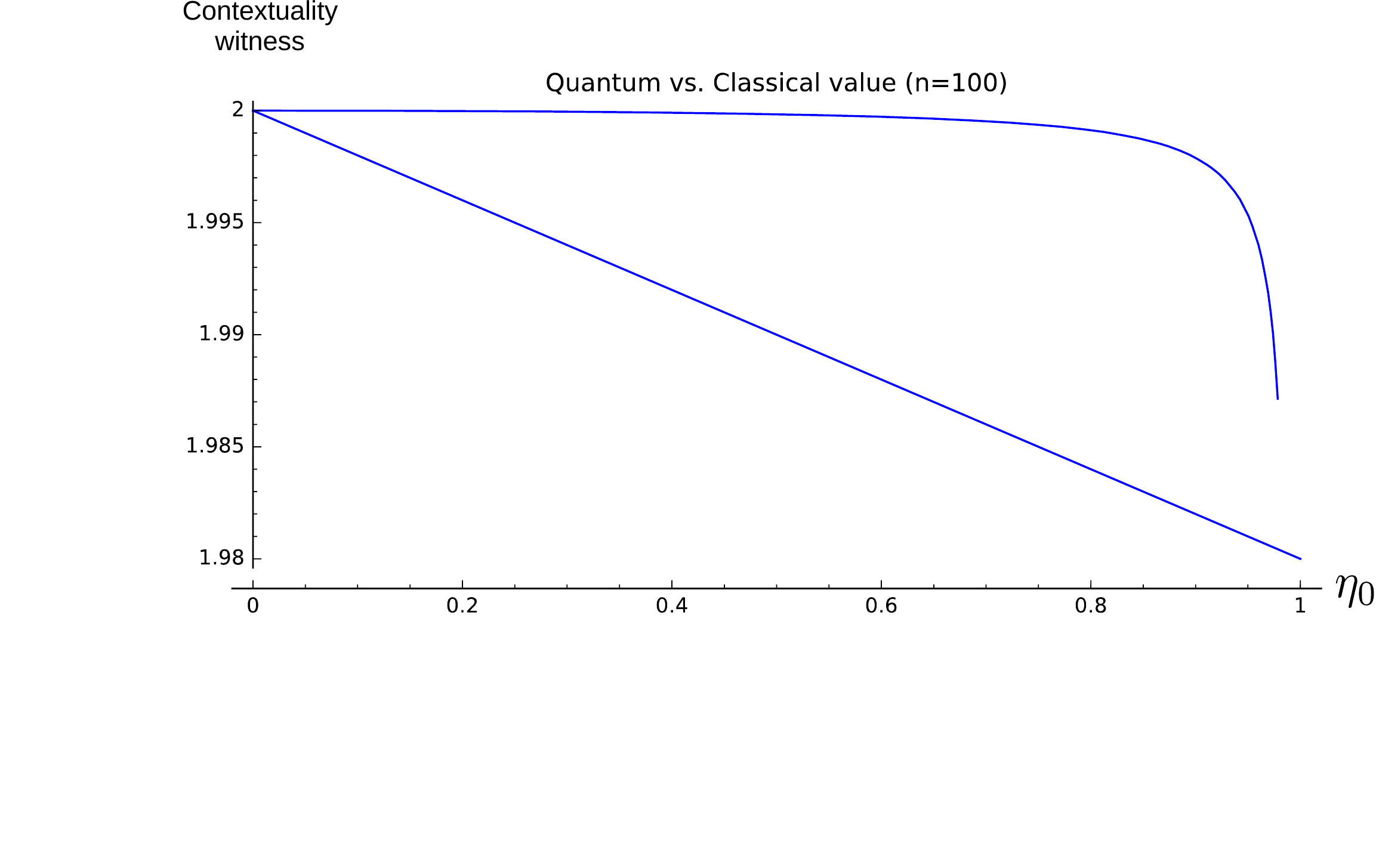}
 \includegraphics[scale=0.52]{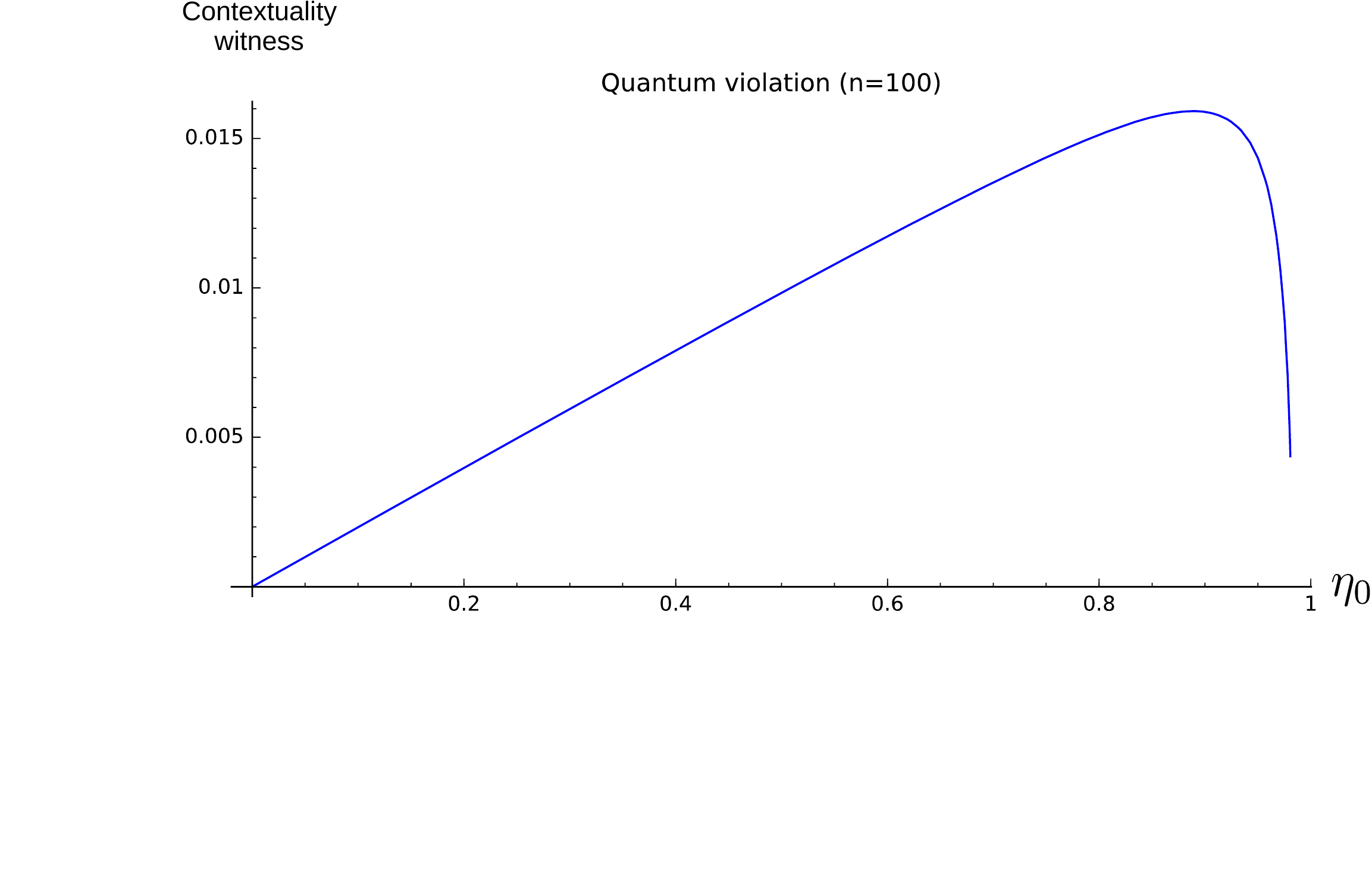}
\caption{n=100 scenario. In the middle figure, straight line denotes classical value.}
\label{n100}
\end{figure}

\begin{figure}
\centering
 \includegraphics[scale=0.52]{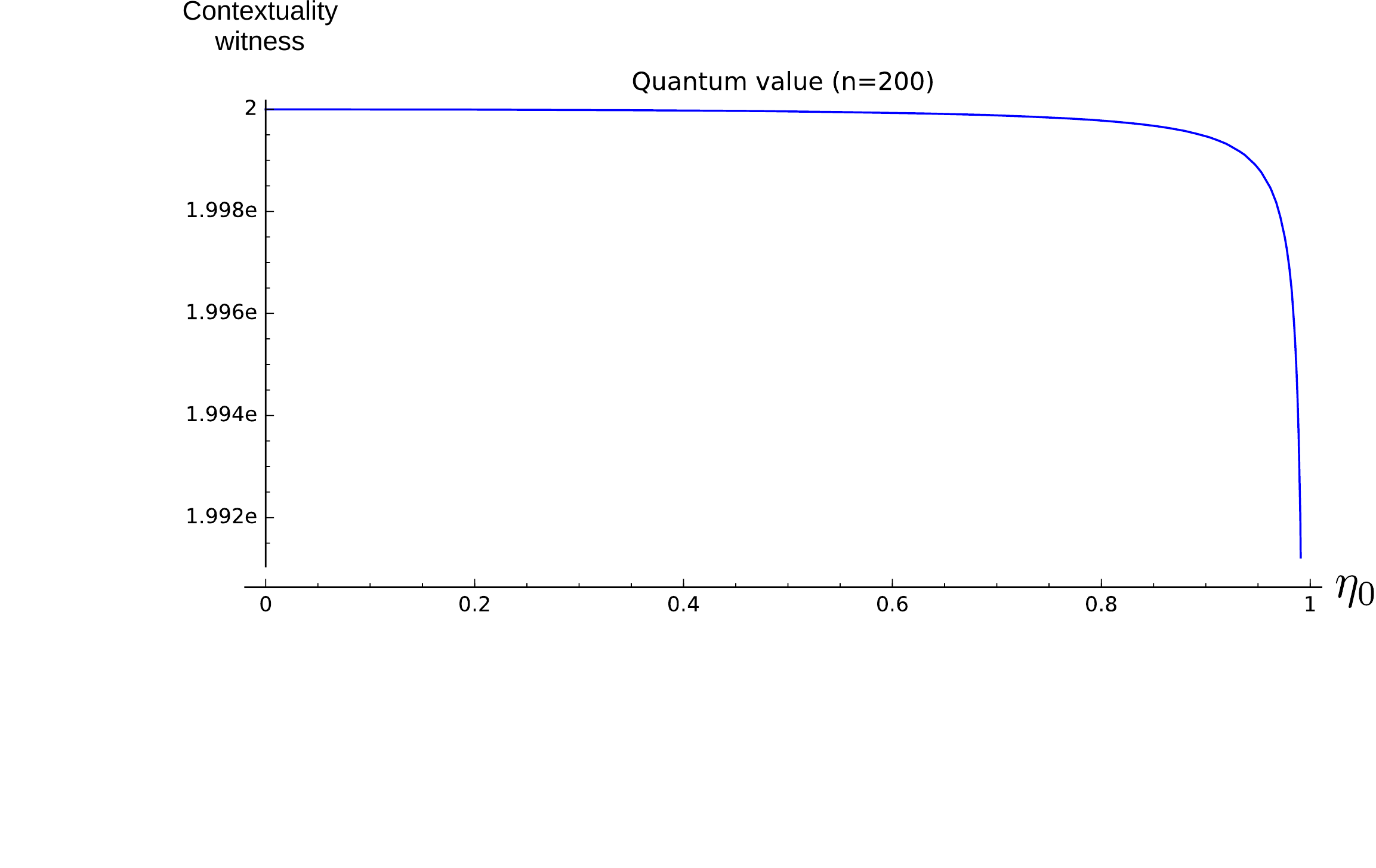}
 \includegraphics[scale=0.52]{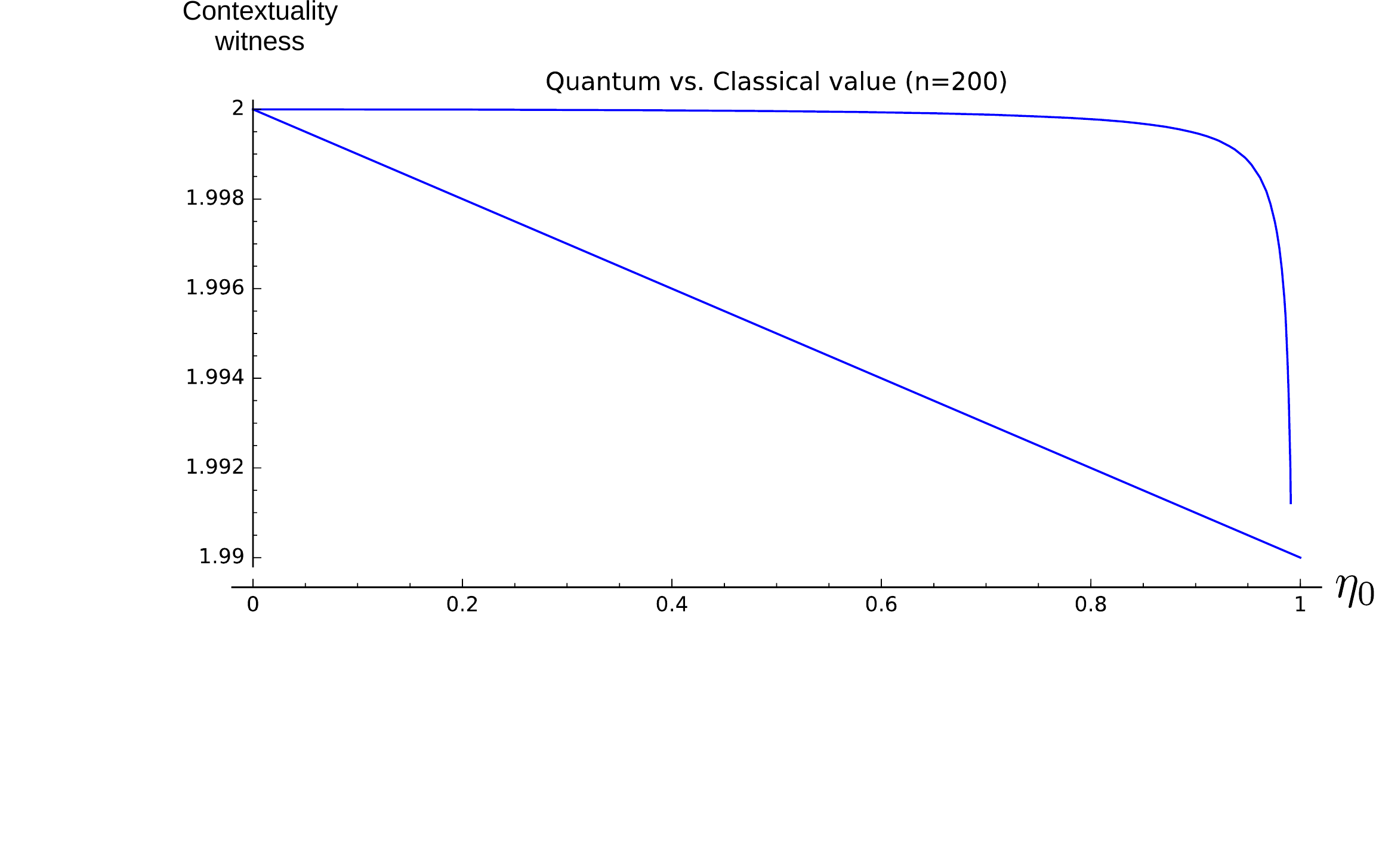}
 \includegraphics[scale=0.52]{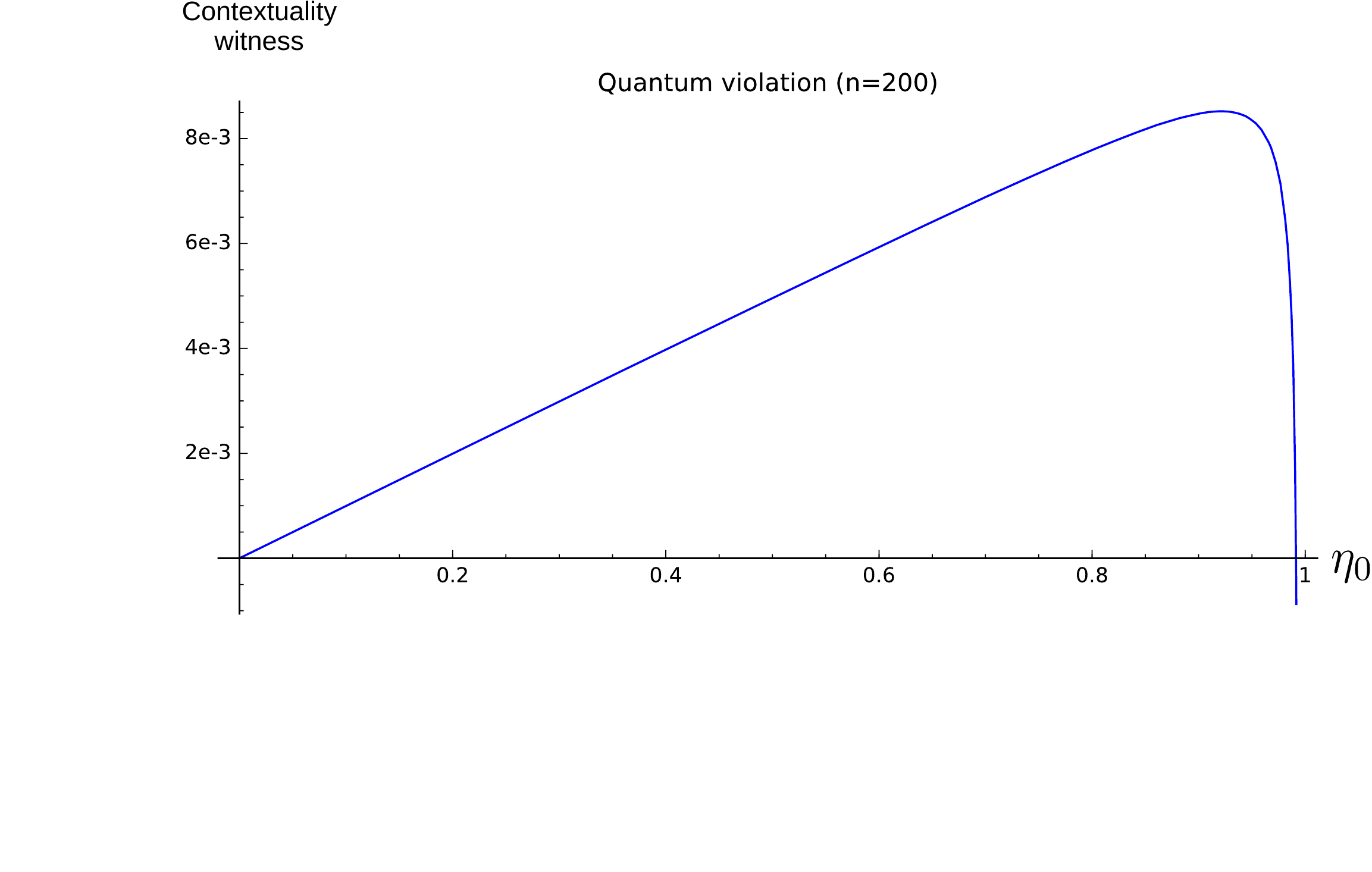}
\caption{n=200 scenario. In the middle figure, straight line denotes classical value.}
\label{n200}
\end{figure}
\clearpage
\section{Chapter summary}
As with the results of the previous chapter, the operational noncontextuality inequalities obtained in this chapter are 
also a significant improvement over previous proposals:
they do not need the assumption of outcome-determinism, only the assumption of noncontextuality for preparations and measurements.
Experimental testability, though, may still be a challenge for the two reasons we outlined in Chapter 6: firstly, the
operational equivalences that we need to verify (in order for the assumption of noncontextuality to have nontrivial consequences) may only be approximately satisfied in a real experiment, 
and secondly, operational equivalence for preparation procedures
is defined relative to \emph{all} measurement procedures and that for measurement procedures relative to \emph{all} preparation procedures,
so an assumption of tomographic completeness is required in order that a finite set of preparations or measurements may be enough to verify the required
operational equivalences. 
Testing the tomographic completeness of a finite set of experimental procedures when quantum theory is not assumed is another challenge.
An approach to handling these issues---deviation from exact operational equivalence and testing the assumption of tomographic completeness---has been demonstrated in 
Ref.~\cite{exptl} (as also illustrated in Chapter 6). The same techniques can potentially
be used to test the noncontextuality inequalities of this chapter in an experimentally robust manner.

In the previous chapter, we demonstrated how the so-called ``state-independent'' proofs of KS-contextuality, based on KS-uncolourability,
can be turned into robust noncontextuality inequalities. In the present chapter we have turned so-called ``state-dependent'' proofs of KS-contextuality into robust noncontextuality inequalities.
Hence, in line with previous work \cite{exptl,KunjSpek}, we have demonstrated that theory-independent tests of noncontextuality are possible, just like tests of 
local causality. The price we need to pay for such a theory-independent test, in the absence of a physical principle like no-faster-than-light-signalling coming to our rescue,
is the requirement that the operational equivalence of experimental procedures must be verified explicitly rather than coming for ``free'' on account of 
spacelike separation (as in Bell experiments).

\section*{APPENDIX}
\subsection*{Constraints from measurement noncontextuality -- the n-cycle polytope}
Once the operational equivalences of Eqs.~(\ref{eq:optlequivalencesMgen1}) and (\ref{eq:optlequivalencesMgen2}) have been verified, the assumption of measurement noncontextuality
places nontrivial bounds on the response functions in an ontological model, characterized by the $n$-cycle polytope (the so-called ``no-disturbance'' polytope of Ref.~\cite{cabelloncycle}).
The $n$-cycle polytope, constrained only by normalization and positivity of probabilities, lives in $\mathbb{R}^{2n}$, has $2^n$ deterministic vertices and $2^{n-1}$ indeterministic vertices 
(see Ref.~\cite{cabelloncycle}, in particular Theorem 2). Relabelling the outcome of each $M_i$ by $S_i\equiv(-1)^{X_i}$, where $X_i\in\{0,1\}$, the $2^n$ deterministic vertices of the 
n-cycle polytope are given by the vectors,
\begin{equation}
 (\langle S_1\rangle,\dots,\langle S_n\rangle,\langle S_1\rangle\langle S_2\rangle,\dots,\langle S_n\rangle\langle S_1\rangle),
\end{equation}
where $\langle S_i\rangle\in\{+1,-1\}$. The $2^{n-1}$ indeterministic vertices are given by the vectors
\begin{equation}
 (0,\dots,0,\langle S_1S_2\rangle,\dots,\langle S_nS_1\rangle),
\end{equation}
where $\langle S_iS_{i+1}\rangle\in\{+1,-1\}$ and the number of entries with $\langle S_iS_{i+1}\rangle=-1$ is odd. (Note that the addition $i+1$ is modulo $n$ so that 
for $i=n$, $n+1=1$.)

We denote the $2^n$ deterministic vertices by 
\begin{eqnarray}
\kappa_{\rm d}\in\big\{(\langle S_1\rangle,\dots,\langle S_n\rangle,\langle S_1\rangle\langle S_2\rangle,\dots,\langle S_n\rangle\langle S_1\rangle)|\nonumber\\
\langle S_i\rangle\in\{-1,+1\}\forall i\in\{1,\dots,n\}\big\},
\end{eqnarray}
and the $2^{n-1}$ indeterministic vertices by 
\begin{eqnarray}
\kappa_{\rm in}\in\{(0,\dots,0,\langle S_1S_2\rangle,\dots,\langle S_nS_1\rangle)|\langle S_iS_{i+1}\rangle\in\{-1,+1\},\nonumber\\
\text{odd number of }\langle S_iS_{i+1}\rangle=-1, i\in\{1,\dots,n\}\}.\nonumber\\
\end{eqnarray}
Also, we let $\kappa$ denote any vertex (deterministic or indeterministic) of the $n$-cycle polytope.

For measurement $M_i$, $i\in \{1,\dots,n\}$, and a given ontic state $\lambda\in \Lambda$, we can define:
\beq
\zeta(M_i,\lambda)\equiv\max_{S_i\in\{+1,-1\}}\xi(S_i|M_i,\lambda).
\eeq
Now:
\beqa
\zeta(M_i,\lambda)&=&\max_{S_i\in\{+1,-1\}}\sum_{\kappa}\xi(S_i|M_i,\kappa)\mu(\kappa|\lambda)\\
&\le& \sum_{\kappa}\max_{S_i\in\{+1,-1\}}\xi(S_i|M_i,\kappa)\mu(\kappa|\lambda)\\
&\equiv& \sum_{\kappa}\zeta(M_i,\kappa)\mu(\kappa|\lambda),
\eeqa
so that $\zeta(M_i,\kappa)=\max_{S_i\in\{+1,-1\}}\xi(S_i|M_i,\kappa)$.

We note that:
\beqa
&&\zeta(M_i,\kappa_{\rm d})=1,\\
&&\zeta(M_i,\kappa_{\rm in})=\frac{1}{2},\\
\eeqa
for all deterministic and indeterministic vertices, respectively.

We then have:
\beqa
&&\zeta(M_i,\lambda)\\
&\le&\frac{1}{2}\mu(\{\kappa_{\rm in}\}|\lambda)+\mu(\{\kappa_{\rm d}\}|\lambda)\\
&=&\frac{1}{2}(1-\mu(\{\kappa_{\rm d}\}|\lambda))+\mu(\{\kappa_{\rm d}\}|\lambda)\\
&=&\frac{1}{2}(1+\mu(\{\kappa_{\rm d}\}|\lambda)),
\eeqa
where $\{\kappa_{\rm in}\}$ denotes the set of all indeterministic vertices and $\{\kappa_{\rm d}\}$ denotes the set of all deterministic vertices.

We have used the fact that $\forall\kappa:\mu(\kappa|\lambda)\geq0,$ $\sum_{\kappa}\mu(\kappa|\lambda)=\mu(\{\kappa_{\rm in}\}|\lambda)+\mu(\{\kappa_{\rm d}\}|\lambda)=1$.
We can rewrite the last inequality on $\zeta(M_i,\lambda)$ as:
\beq
\mu(\{\kappa_{\rm d}\}|\lambda)\ge\eta(M_i,\lambda),
\eeq
where $\eta(M_i,\lambda)\equiv2\zeta(M_i,\lambda)-1$ and $i\in\{1,\dots,n\}$, so that $\mu(\{\kappa_{\rm d}\}|\lambda)\ge\max_i\eta(M_i,\lambda)$. Further:
\beq
\max_{i\in\{1,\dots,n\}} \eta(M_i,\lambda)\ge \frac{1}{n}\sum_{i=1}^n\eta(M_i,\lambda),
\eeq
which implies:
\beq\label{predineq}
\mu(\{\kappa_{\rm d}\}|\lambda)\ge\frac{1}{n}\sum_{i=1}^n\eta(M_i,\lambda).
\eeq

\subsubsection*{Odd n-cycle}
For odd $n$, the quantity of interest is 
\begin{equation}
\xi({\rm anti}|M_*,\lambda)\equiv\frac{1}{n}\sum_{i=1}^n \xi(S_iS_j=-1|M_{ij},\lambda),
\end{equation}
where for each $i$, $j=i+1\mod n$, and, of course,
\begin{eqnarray}
\xi(S_iS_j=-1|M_{ij},\lambda)&\equiv&\xi(S_i=1,S_j=-1|M_{ij},\lambda)\nonumber\\
&+&\xi(S_i=-1,S_j=1|M_{ij},\lambda).\nonumber\\
\end{eqnarray}
It is easy to verify the following: 
\begin{enumerate}
 \item Every deterministic vertex $\kappa_{\rm d}$ is such that $\xi({\rm anti}|M_*,\kappa_{\rm d})\leq\frac{n-1}{n}$.
 \item The unique indeterministic vertex corresponding to perfect anticorrelation, denoted by $\kappa_*\in\{\kappa_{\rm in}\}$, 
 satisfies $\xi({\rm anti}|M_*,\kappa_*)=1$, since $\langle S_iS_j\rangle=-1$ for all $i\in\{1,\dots,n\}$ and $j=i+1\mod n$.
 \item The remaining indeterministic vertices in $\{\kappa_{\rm in}\}\backslash\kappa_*$ satisfy $\xi({\rm anti}|M_*,\kappa_{\rm in})\leq \frac{n-2}{n}$.
 \end{enumerate}

Now,
\begin{equation}
\xi({\rm anti}|M_*,\lambda)=\sum_{\kappa}\xi({\rm anti}|M_*,\kappa)\mu(\kappa|\lambda),
\end{equation}
therefore
\begin{eqnarray}
&&\xi({\rm anti}|M_*,\lambda)\nonumber\\
&\leq&\mu(\kappa_*|\lambda)+\frac{n-1}{n}\mu(\{\kappa_{\rm d}\}|\lambda)+\frac{n-2}{n}\mu(\{\kappa_{\rm in}\}\backslash\kappa_*|\lambda),\nonumber\\
\end{eqnarray}
where $\mu(\kappa_*|\lambda)+\mu(\{\kappa_{\rm d}\}|\lambda)+\mu(\{\kappa_{\rm in}\}\backslash\kappa_*|\lambda)=1$ for each $\lambda\in\Lambda$. Writing 
$\mu(\kappa_*|\lambda)=1-\mu(\{\kappa_{\rm d}\}|\lambda)-\mu(\{\kappa_{\rm in}\}\backslash\kappa_*|\lambda)$, we obtain
\begin{equation}
\xi({\rm anti}|M_*,\lambda)\leq1-\frac{1}{n}\mu(\kappa_{\rm d}|\lambda).
\end{equation}
Using Eq.~(\ref{predineq}), we therefore have
\begin{equation}\label{antibound}
\xi({\rm anti}|M_*,\lambda)\leq1-\frac{1}{n^2}\sum_{i=1}^n\eta(M_i,\lambda).
\end{equation}

\subsubsection*{Even n-cycle}
For even $n$, the quantity of interest is 
\begin{eqnarray}
\xi({\rm chained}|M_*,\lambda)&\equiv&\frac{1}{n}\sum_{i=1}^{n-1} \xi(S_iS_j=1|M_{ij},\lambda)\nonumber\\
&+&\frac{1}{n}\xi(S_nS_1=-1|M_{n1},\lambda),
\end{eqnarray}
where for each $i$, $j=i+1$, and, of course,
\begin{eqnarray}
\xi(S_iS_j=1|M_{ij},\lambda)&\equiv&\xi(S_i=1,S_j=1|M_{ij},\lambda)\nonumber\\
&+&\xi(S_i=-1,S_j=-1|M_{ij},\lambda).\nonumber\\
\end{eqnarray}
It is easy to verify the following: 
\begin{enumerate}
 \item Every deterministic vertex $\kappa_{\rm d}$ is such that $\xi({\rm chained}|M_*,\kappa_{\rm d})\leq\frac{n-1}{n}$.
 \item The unique indeterministic vertex corresponding to perfect chained correlation, denoted by $\kappa_*\in\{\kappa_{\rm in}\}$,
 satisfies $\xi({\rm chained}|M_*,\kappa_*)=1$, since $\langle S_iS_j\rangle=1$ for all $i\in\{1,\dots,n-1\}, j=i+1$ and $\langle S_nS_1\rangle=-1$.
 \item The remaining indeterministic vertices in $\{\kappa_{\rm in}\}\backslash\kappa_*$ satisfy $\xi({\rm chained}|M_*,\kappa_{\rm in})\leq \frac{n-2}{n}$.
 \end{enumerate}
Now,
\begin{equation}
\xi({\rm chained}|M_*,\lambda)=\sum_{\kappa}\xi({\rm chained}|M_*,\kappa)\mu(\kappa|\lambda),
\end{equation}
therefore
\begin{eqnarray}
&&\xi({\rm chained}|M_*,\lambda)\nonumber\\
&\leq&\mu(\kappa_*|\lambda)+\frac{n-1}{n}\mu(\{\kappa_{\rm d}\}|\lambda)+\frac{n-2}{n}\mu(\{\kappa_{\rm in}\}\backslash\kappa_*|\lambda),\nonumber\\
\end{eqnarray}
where $\mu(\kappa_*|\lambda)+\mu(\{\kappa_{\rm d}\}|\lambda)+\mu(\{\kappa_{\rm in}\}\backslash\kappa_*|\lambda)=1$ for each $\lambda\in\Lambda$. Writing 
$\mu(\kappa_*|\lambda)=1-\mu(\{\kappa_{\rm d}\}|\lambda)-\mu(\{\kappa_{\rm in}\}\backslash\kappa_*|\lambda)$, we obtain
\begin{equation}
\xi({\rm chained}|M_*,\lambda)\leq1-\frac{1}{n}\mu(\kappa_{\rm d}|\lambda).
\end{equation}
Using Eq.~(\ref{predineq}), we therefore have
\begin{equation}\label{chainedbound}
\xi({\rm chained}|M_*,\lambda)\leq1-\frac{1}{n^2}\sum_{i=1}^n\eta(M_i,\lambda).
\end{equation}

Based on Eqs.~(\ref{antibound}) and (\ref{chainedbound}), we can now state a lemma---derived, as above, from the assumption of measurement noncontextuality---that will be useful in our proof of the noncontextuality inequalities:
\begin{lemma}\label{lemma:bound}
For odd $n\geq3$,
\begin{equation}
\xi({\rm anti}|M_*,\lambda)\le 1 - \frac{1}{n^2}\sum_{i=1}^n\eta(M_{i},\lambda).
\end{equation}
The same bound holds for $\xi({\rm anti}|M'_*,\lambda)$. Also, for even $n\geq4$,
\begin{equation}
\xi({\rm chained}|M_*,\lambda)\le 1 - \frac{1}{n^2}\sum_{i=1}^n\eta(M_{i},\lambda).
\end{equation}
The same bound holds for $\xi({\rm chained}|M'_*,\lambda)$.
\end{lemma}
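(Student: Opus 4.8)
The plan is to recognize that the two displayed bounds are precisely the inequalities (\ref{antibound}) and (\ref{chainedbound}) assembled in the paragraphs immediately preceding the lemma, so the real work is (i) organizing those steps into a single self-contained argument and (ii) establishing that the primed witnesses $M'_*$ obey the identical bound. First I would invoke measurement noncontextuality together with the operational equivalences of Eqs.~(\ref{eq:optlequivalencesMgen1}) to conclude that, for any fixed $\lambda$, the joint response functions $\xi(\cdot|M_{ij},\lambda)$ constitute a point of the $n$-cycle ``no-disturbance'' polytope, and hence a convex mixture over its $2^n$ deterministic vertices $\kappa_{\rm d}$ and $2^{n-1}$ indeterministic vertices $\kappa_{\rm in}$, with weights $\mu(\kappa|\lambda)$. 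This lets me write $\xi({\rm anti}|M_*,\lambda)=\sum_\kappa \xi({\rm anti}|M_*,\kappa)\mu(\kappa|\lambda)$, and similarly for the chained case, reducing the problem to a finite vertex enumeration.

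Next I would establish the predictability inequality (\ref{predineq}), $\mu(\{\kappa_{\rm d}\}|\lambda)\ge\frac{1}{n}\sum_{i=1}^n\eta(M_i,\lambda)$. This follows from the observation that the single-measurement max-probability satisfies $\zeta(M_i,\kappa_{\rm d})=1$ on every deterministic vertex and $\zeta(M_i,\kappa_{\rm in})=\frac{1}{2}$ on every indeterministic one, so that $\zeta(M_i,\lambda)\le\frac{1}{2}\left(1+\mu(\{\kappa_{\rm d}\}|\lambda)\right)$; rearranging via $\eta=2\zeta-1$ and then maximizing (and bounding the max by the average) over $i$ gives the claim.

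The heart of the argument is the vertex classification. For odd $n$ I would verify the three facts already noted: every deterministic vertex achieves at most $n-1$ anticorrelated edges—because a deterministic $\{+1,-1\}$ assignment around an odd cycle cannot force all $n$ adjacent products to equal $-1$, the frustration of odd cycles—so $\xi({\rm anti}|M_*,\kappa_{\rm d})\le\frac{n-1}{n}$; the single indeterministic vertex $\kappa_*$ with all $\langle S_iS_{i+1}\rangle=-1$ (an odd, hence admissible, number of $-1$ correlators) achieves $\xi({\rm anti}|M_*,\kappa_*)=1$; and the remaining indeterministic vertices have at most $n-2$ anticorrelated edges. Substituting $\mu(\kappa_*|\lambda)=1-\mu(\{\kappa_{\rm d}\}|\lambda)-\mu(\{\kappa_{\rm in}\}\backslash\kappa_*|\lambda)$ into the convex bound collapses it to $\xi({\rm anti}|M_*,\lambda)\le 1-\frac{1}{n}\mu(\{\kappa_{\rm d}\}|\lambda)$, and combining with (\ref{predineq}) yields the stated $1/n^2$ bound. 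For even $n$ the same scheme applies to the chained pattern: the frustrated configuration now demands positive correlation on edges $(i,i{+}1)$ for $i<n$ and anticorrelation on $(n,1)$, whose unique indeterministic realization carries exactly one $-1$ correlator (odd, hence admissible), while deterministic vertices—having an even number of $-1$ products around the even cycle—again meet at most $n-1$ of the $n$ targets.

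Finally, for the primed witnesses I would note that Eqs.~(\ref{eq:optlequivalencesMgen2}) impose the same $n$-cycle operational equivalences as the unprimed ones, so the identical polytope and vertex analysis apply verbatim with $M'_{ij}$ replacing $M_{ij}$; moreover the predictabilities $\eta(M_i,\lambda)$ on the right-hand side are unchanged, since $M_i^{\prime(j)}\simeq M_i^{(j)}\simeq M_i$. This delivers the same bounds for $\xi({\rm anti}|M'_*,\lambda)$ and $\xi({\rm chained}|M'_*,\lambda)$. The main obstacle I anticipate is not any single calculation but the careful combinatorial bookkeeping of the vertex enumeration—correctly counting the maximal number of frustration-compatible edges on each vertex class and confirming that the extremal indeterministic vertex $\kappa_*$ is genuinely admissible under the parity constraint—since an off-by-one there would corrupt the coefficient multiplying $\mu(\{\kappa_{\rm d}\}|\lambda)$ and hence the final $1/n^2$ factor.
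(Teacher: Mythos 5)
Your proposal is correct and follows essentially the same route as the paper's own proof: decomposition of the response functions over the vertices of the $n$-cycle no-disturbance polytope, the vertex-wise bounds $\zeta(M_i,\kappa_{\rm d})=1$ and $\zeta(M_i,\kappa_{\rm in})=\tfrac{1}{2}$ leading to $\mu(\{\kappa_{\rm d}\}|\lambda)\ge\frac{1}{n}\sum_i\eta(M_i,\lambda)$, the classification of vertices by their (anti)correlation value, and the observation that the primed joint measurements generate an identical polytope. The combinatorial bookkeeping you flag as the main risk is exactly the content of the paper's three enumerated vertex facts, and your frustration/parity arguments for them are sound.
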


Remember, there are two $n$-cycle polytopes, one associated with joint measurements $\{M_{ij}\}$ and the other associated with joint measurements $\{M^{\prime}_{ij}\}$. Inequalities analogous to
Eqs.~(\ref{antibound}) and (\ref{chainedbound}), derived for the $\{M_{ij}\}$ polytope, also hold for the polytope associated with primed measurements $\{M^{\prime}_{ij}\}$. This is the content
of Lemma \ref{lemma:bound}. 

Note that if, at a particular value of $\lambda$, all the measurements depend deterministically on $\lambda$, so that 
$\forall i, \eta(M_{i},\lambda)=1$, then we recover the bounds for an outcome-deterministic 
noncontextual ontological model, given by $\xi({\rm anti}|M_*,\lambda) \le \frac{n-1}{n}$ (similarly for $M^{\prime}_*$) and $\xi({\rm chained}|M_*,\lambda) \le \frac{n-1}{n}$ (similarly for $M^{\prime}_*$)
for odd and even $n$, respectively. If, on the other extreme, at a particular value of $\lambda$,
all the measurements are independent of $\lambda$, so that $\forall i, \eta(M_{i},\lambda)=0$, then the bounds become
trivial, $\xi({\rm anti}|M_*,\lambda) \le 1$ and $\xi({\rm chained}|M_*,\lambda) \le 1$ (similarly for $M^{\prime}_*$): there is no obstacle to seeing perfect anticorrelation (for odd $n$) or 
perfect chained correlation (for even $n$) in a measurement noncontextual ontological model in this case.

It now remains to use the assumption of preparation noncontextuality to obtain the operational noncontextuality inequalities of Theorems \ref{speckertheorem} and \ref{ncycletheorem}.

\subsection*{Proof of Theorems \ref{speckertheorem} and \ref{ncycletheorem}}\label{prooftheorem1}
In this appendix we prove the inequalities in Theorem \ref{ncycletheorem}. The inequalities in Theorem \ref{speckertheorem} are a special case of these inequalities when $n=3$.

We define the operational \emph{predictability} of measurement $M$, implemented following a preparation $P$, as
\begin{equation}
\eta(M,P) \equiv 2 \max_{X\in\{0,1\}}p(X|M,P)-1.
\end{equation}
Recall that the ontological predictability, defined earlier, is given by $\eta(M,\lambda)=2 \max_{X\in\{0,1\}}\xi(X|M,P)-1$.

\begin{lemma}\label{lemma:predictability}
The ontological predictability of $M$ given $\lambda$, $\eta(M,\lambda)$, averaged over $\mu(\lambda|P)$, must be at least as great as the operational predictability of $M$ given $P$,
\[
\sum_{\lambda\in\Lambda} \eta(M,\lambda) \mu(\lambda|P) \ge \eta(M,P).
\]
\end{lemma}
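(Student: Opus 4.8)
The plan is to reduce the claim to a single elementary inequality relating the maximum of an average to the average of maxima, and to discharge it by a pointwise bound. First I would expand both sides using the definitions of operational and ontological predictability, namely $\eta(M,P)=2\max_{X}p(X|M,P)-1$ and $\eta(M,\lambda)=2\max_{X}\xi(X|M,\lambda)-1$, together with the empirical adequacy condition $p(X|M,P)=\sum_{\lambda\in\Lambda}\xi(X|M,\lambda)\mu(\lambda|P)$. Because $\mu(\cdot|P)$ is a normalized distribution, $\sum_{\lambda}\mu(\lambda|P)=1$, so the additive constants $-1$ on each side contribute equally and cancel. The claim therefore collapses to showing
\begin{equation}
\sum_{\lambda\in\Lambda}\left(\max_{X}\xi(X|M,\lambda)\right)\mu(\lambda|P)\;\ge\;\max_{X}\sum_{\lambda\in\Lambda}\xi(X|M,\lambda)\mu(\lambda|P).
\end{equation}

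The second step is the core argument. Let $X^{*}$ denote an outcome attaining the maximum on the right-hand side, so that $\max_{X}p(X|M,P)=p(X^{*}|M,P)$. For every $\lambda\in\Lambda$ we trivially have the pointwise bound $\xi(X^{*}|M,\lambda)\le\max_{X}\xi(X|M,\lambda)$, and since $\mu(\lambda|P)\ge 0$, averaging preserves the inequality:
\begin{equation}
\max_{X}\sum_{\lambda}\xi(X|M,\lambda)\mu(\lambda|P)=\sum_{\lambda}\xi(X^{*}|M,\lambda)\mu(\lambda|P)\le\sum_{\lambda}\left(\max_{X}\xi(X|M,\lambda)\right)\mu(\lambda|P).
\end{equation}
Re-inserting the factor of $2$ and subtracting $1$ from both sides then yields precisely $\sum_{\lambda}\eta(M,\lambda)\mu(\lambda|P)\ge\eta(M,P)$, which is the desired conclusion.

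I expect no genuine obstacle here: the result is an instance of the general fact that the maximum commutes with averaging only as an upper bound (the expectation of a maximum dominates the maximum of expectations, a one-line consequence of convexity of the max function). The only point requiring a modicum of care is the order of quantifiers — one must fix the maximizing outcome $X^{*}$ of the \emph{averaged} distribution \emph{before} applying the pointwise bound, rather than attempting to maximize inside the sum simultaneously for all $\lambda$. The interpretational significance, worth stating after the proof, is that coarse-graining over ontic states can only wash out predictability, never manufacture it, so that any operationally observed bias $\eta(M,P)$ lower-bounds the average ontological predictability; this is exactly the ingredient needed to convert the measurement-noncontextual bound of Lemma \ref{lemma:bound} into the preparation-dependent operational inequalities of Theorems \ref{speckertheorem} and \ref{ncycletheorem}.
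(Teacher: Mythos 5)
Your proposal is correct and follows essentially the same route as the paper: expand both predictabilities via the empirical adequacy condition, apply the pointwise bound $\xi(X^{*}|M,\lambda)\le\max_{X}\xi(X|M,\lambda)$ to pass the maximum inside the sum, and use normalization of $\mu(\cdot|P)$ to absorb the constant $-1$. The paper's proof is just the same three-line chain of (in)equalities, so there is nothing to add.
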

\begin{proof}
\begin{eqnarray}
&&\eta(M,P)=2\max_{X\in\{0,1\}}p(X|M,P)-1\nonumber\\
&=&2\max_{X\in\{0,1\}}\sum_{\lambda\in\Lambda} \xi(X|M,\lambda)\mu(\lambda|P)-1\nonumber\\
&\leq&2\sum_{\lambda\in\Lambda}\max_{X\in\{0,1\}}\xi(X|M,\lambda)\mu(\lambda|P)-1\nonumber\\
&=&\sum_{\lambda\in\Lambda}\eta(M,\lambda)\mu(\lambda|P).\nonumber
\end{eqnarray}
\end{proof}
From the requirement that the ontological model reproduces the operational predictions, it follows that:
for odd $n\geq3$
\begin{align}
&p({\rm anti}|M_*,P_*)+p({\rm anti}|M'_*,P^{\perp}_*)\nonumber \\
&=\sum_{\lambda\in\Lambda} \xi({\rm anti}|M_*,\lambda) \mu(\lambda|P_*) + \sum_{\lambda\in\Lambda} \xi({\rm anti}|M'_*,\lambda) \mu(\lambda|P^{\perp}_*) \nonumber \\
&\le 2- \frac{1}{n^2}\sum_{\lambda\in\Lambda}  \sum_{i=1}^n\eta(M_i,\lambda)\left(\mu(\lambda|P_*)+ \mu(\lambda|P^{\perp}_*)\right),
\end{align}
where the last inequality is a consequence of Lemma \ref{lemma:bound}. Similarly, for even $n\geq4$
\begin{align}
&p({\rm chained}|M_*,P_*)+p({\rm chained}|M'_*,P^{\perp}_*)\nonumber \\
&=\sum_{\lambda\in\Lambda} \xi({\rm chained}|M_*,\lambda) \mu(\lambda|P_*) + \sum_{\lambda\in\Lambda} \xi({\rm chained}|M'_*,\lambda) \mu(\lambda|P^{\perp}_*) \nonumber \\
&\le 2- \frac{1}{n^2}\sum_{\lambda\in\Lambda}  \sum_{i=1}^n\eta(M_i,\lambda)\left(\mu(\lambda|P_*)+ \mu(\lambda|P^{\perp}_*)\right),
\end{align}
where we have again used Lemma \ref{lemma:bound}.

{\bf We now make use of the assumption of preparation noncontextuality.}  Recalling the assumed operational equivalences among preparations (which have to be experimentally verified),
Eq.~\eqref{eq:optlequivalencesPgen}, the definition of preparation noncontextuality, and how mixtures of preparation procedures are represented in an 
ontological model, we infer that 
\begin{align}
&\forall i: \frac{1}{2}\mu(\lambda|P_*)+\frac{1}{2} \mu(\lambda|P^{\perp}_*)\nonumber\\
&=\frac{1}{2}\mu(\lambda|P_i)+\frac{1}{2} \mu(\lambda|P^{\perp}_{i}).
\end{align}
It then follows that
\begin{align}
&p({\rm anti}|M_*,P_*)+p({\rm anti}|M'_*,P^{\perp}_*)\nonumber \\
&\le 2-\frac{1}{n^2}\sum_{\lambda\in\Lambda}  \sum_{i=1}^n\eta(M_i,\lambda) (\mu(\lambda|P_i)+ \mu(\lambda|P^{\perp}_{i}))\nonumber\\
\end{align}
for odd $n\geq3$ and 
\begin{align}
&p({\rm chained}|M_*,P_*)+p({\rm chained}|M'_*,P^{\perp}_*)\nonumber \\
&\le 2-\frac{1}{n^2}\sum_{\lambda\in\Lambda}  \sum_{i=1}^n\eta(M_i,\lambda) (\mu(\lambda|P_i)+ \mu(\lambda|P^{\perp}_{i}))\nonumber,\\
\end{align}
for even $n\geq4$.

Finally, {\bf making use of the bound derived in lemma~\ref{lemma:predictability}}, we obtain the operational inequality
\begin{align}
&p({\rm anti}|M_*,P_*)+p({\rm anti}|M'_*,P^{\perp}_*)\nonumber \\
&\le 2- \frac{1}{n^2} \sum_{i=1}^n\left(\eta(M_i,P_i)+\eta(M_i,P^{\perp}_i)\right)\nonumber\\
&=2\left(1-\frac{1}{n}\eta_{\rm ave}\right),
\end{align}
where 
\begin{equation}
\eta_{\rm ave}\equiv \frac{1}{2n}\sum_{i=1}^n \left(\eta(M_{i},P_i)+\eta(M_{i},P^{\perp}_{i})\right).
\end{equation}
which completes the proof of the inequality in Eq.~(\ref{eq:maininequalitygen}) for odd $n\geq3$. For even $n\geq4$
\begin{align}
&p({\rm chained}|M_*,P_*)+p({\rm chained}|M'_*,P^{\perp}_*)\nonumber \\
&\le 2- \frac{1}{n^2} \sum_{i=1}^n\left(\eta(M_i,P_i)+\eta(M_i,P^{\perp}_i)\right)\nonumber\\
&=2\left(1-\frac{1}{n}\eta_{\rm ave}\right),
\end{align}
which is the inequality in Eq.\eqref{eq:maininequality_chsh} for even $n\geq4$.

Now, for the other four inequalities in Eqs.~\eqref{eq:oddncycleineq1}-\eqref{eq:oddncycleineq2} and Eqs.~\eqref{eq:evenncycleineq1}-\eqref{eq:evenncycleineq2}:
\begin{align}
&p({\rm anti}|M_*,P_*)+p({\rm anti}|M_*,P^{\perp}_*)\nonumber\\
&=\sum_{\lambda\in\Lambda} \xi({\rm anti}|M_*,\lambda)(\mu(\lambda|P_*)+\mu(\lambda|P^{\perp}_*))\nonumber\\
&\le\sum_{\lambda\in\Lambda} \left(1 - \frac{1}{n^2}\sum_{i=1}^n\eta(M_{i},\lambda)\right)\left(\mu(\lambda|P_*)+\mu(\lambda|P^{\perp}_*)\right)\nonumber\\
&\le 2- \frac{1}{n^2} \sum_{i=1}^n\left(\eta(M_i,P_i)+\eta(M_i,P^{\perp}_i)\right)\nonumber\\
&=2\left(1-\frac{1}{n}\eta_{\rm ave}\right),
\end{align}
which is the inequality of Eq.\eqref{eq:oddncycleineq1}. Similarly, the inequality of Eq.\eqref{eq:evenncycleineq1} can be shown to hold. To obtain Eqs.\eqref{eq:oddncycleineq2} and \eqref{eq:evenncycleineq2},
note that
\begin{eqnarray}
&&\eta_{\rm ave}=\frac{1}{2n}\sum_{i=1}^n \left(\eta(M_{i},P_i)+\eta(M_{i},P^{\perp}_{i})\right)\nonumber\\
&\leq&\frac{1}{2n}\sum_{i=1}^n\left(\sum_{\lambda\in\Lambda} \eta(M_i,\lambda)\mu(\lambda|P_i)+\sum_{\lambda\in\Lambda} \eta(M_i,\lambda)\mu(\lambda|P^{\perp}_i)\right)\nonumber\\
&&\text{(using lemma }\ref{lemma:predictability}\text{)},\nonumber\\
&=&\frac{1}{2n}\sum_{i=1}^n\sum_{\lambda\in\Lambda} \eta(M_i,\lambda)\left(\mu(\lambda|P_i)+\mu(\lambda|P^{\perp}_i)\right),\nonumber\\
&=&\frac{1}{2n}\sum_{\lambda\in\Lambda} \sum_{i=1}^n\eta(M_i,\lambda)\left(\mu(\lambda|P_*)+\mu(\lambda|P^{\perp}_*)\right)\nonumber\\
&&\text{(using preparation noncontextuality)}\nonumber\\
&\leq&\frac{1}{2n}\sum_{\lambda\in\Lambda} \sum_{i=1}^n\eta(M_i,\lambda)\mu(\lambda|P_*)+\frac{1}{2},\nonumber\\
&&\text{(using the fact that }\sum_{i=1}^n\eta(M_i,\lambda)\leq n \text{ in the second term)},\nonumber
\end{eqnarray}
so that 
\begin{equation}
\frac{1}{n^2}\sum_{\lambda\in\Lambda} \sum_{i=1}^n\eta(M_i,\lambda)\mu(\lambda|P_*)\geq \frac{1}{n}(2\eta_{\rm ave}-1)
\end{equation}
and 
\begin{eqnarray}
p({\rm anti}|M_*,P_*)&=&\sum_{\lambda\in\Lambda} \xi({\rm anti}|M_*,\lambda)\mu(\lambda|P_*)\nonumber\\
&\leq& \sum_{\lambda\in\Lambda} \left(1 - \frac{1}{n^2}\sum_{i=1}^n\eta(M_{i},\lambda)\right)\mu(\lambda|P_*)\nonumber\\
&&\text{(using measurement noncontextuality, lemma \ref{lemma:bound})}\nonumber\\
&\leq&\frac{n-1}{n}+\frac{2}{n}(1-\eta_{\rm ave}),
\end{eqnarray}
which is Eq.~\eqref{eq:oddncycleineq2}. Eq.~\eqref{eq:evenncycleineq2} can be obtained similarly. 

\chapter{Future directions}
In this thesis, I have reported progress towards making contextuality an experimentally testable notion of nonclassicality, based on
work I have jointly carried out with various collaborators. In the first chapter, we introduced the framework of operational
theories and ontological models within which discussions of contextuality are carried out in this thesis. 
In the second chapter, we showed how Specker's scenario admits contextuality with respect to the LSW inequality. 
The third chapter examined the relationship between joint measurability and contextuality in quantum theory. 
The fourth chapter demonstrated how any joint measurability structure is realizable in quantum theory, opening up the 
possibility of contextuality in scenarios not envisaged by the Kochen-Specker theorem. In the fifth chapter, we argued why 
Fine's theorem does not absolve one of the need to justify outcome determinism in ontological models of quantum theory. 
In chapter six we showed how to obtain robust noncontextuality inequalities directly inspired by the Kochen-Specker theorem
in arbitrary operational theories. We also outlined a procedure for handling the problem of inexact operational equivalences 
in tests of noncontextuality, exemplified by the adoption of this procedure in the experiment of Ref.~\cite{exptl}.
In the seventh chapter, we returned to Specker's scenario and provided an analysis independent of quantum theory, unlike the 
analysis in the second chapter. We also generalized the insights from Specker's scenario to arbitrary $n$-cycle scenarios 
and provided noncontextuality inequalities for these scenarios.

All this work opens up the opportunity to pursue a full development of contextuality into a comprehensive tool that can be used to
detect the ``nonclassicality'' or ``quantumness'' of phenomena in a wide variety of physical scenarios. 
An important challenge is to build a quantitative bridge connecting our work to previous work in Kochen-Specker contextuality, 
in particular the graph-theoretic characterization of contextuality scenarios \cite{CSW, AFLS},
and bring this work within the ambit of the operational approach to contextuality due to Spekkens \cite{genNC}. This would 
unify these approaches quantitatively and open the door to harnessing this nonclassicality for concrete information-theoretic tasks
not relying on Bell tests. So far, Bell tests -- experiments requiring (at least) two distant nonsignalling parties -- are the gold standard
for tests of nonclassicality. Work on contextuality along these lines could potentially yield ways of certifying nonclassicality
that can be implemented within the same laboratory, covering a whole range of experimental scenarios that are not of the Bell 
type. Akin to Bell tests, the ``no-go'' conclusions drawn from these experiments would not rely on the validity of quantum 
theory. One also needs to \emph{extend} the scope of the graph-theoretic framework to include
contextuality scenarios that, in quantum theory, are only realizable with nonprojective observables \cite{KHF}.
The precise nature of the relation between Bell scenarios and tests of contextuality, beyond the folklore that
considers Bell nonlocality a special case of contextuality, is worth exploring in view of 
recent results on contextuality \`a la Spekkens \cite{MattP}. Indeed,
following recent progress in understanding Bell experiments in terms of causal structure \cite{HLP}, I would also
like to investigate whether, and to what extent, contextuality could be formulated in an appropriate causal 
framework that generalizes the framework adopted for Bell scenarios \cite{robPIRSA}.

It would also be important to investigate contextuality as a resource in quantum information theory and study the robustness of 
this resource, perhaps in terms of channels that ``break'' it: with an appropriate definition of ``contextuality breaking'' 
channels in hand, one could expect to draw connections with the previously studied cases of channels that break entanglement, 
incompatibility, nonlocality, and nonclassicality\cite{EBC}. 
Indeed, since noise is expected to destroy contextuality, part of the 
motivation here is to also study the robustness of the ``magic for quantum computation'' that contextuality (in the Kochen-Specker
framework) supplies \cite{magic, magic2}. The Spekkens' framework for contextuality allows a robustness analysis of this sort and proving 
noise thresholds beyond which the ``magic'' is lost is a necessary exercise in order to understand the precise relationship between 
contextuality and practical quantum computation. Evidence of the role that (Kochen-Specker) contextuality plays in
measurement based quantum computing \cite{magic2} also needs to be subjected to such a robustness analysis in the Spekkens' framework.

Finally, an interesting avenue is to investigate whether there's a concrete connection between contextuality and other examples
of quantum ``weirdness''\cite{Spek08}. The motivation here is to ask whether contextuality is indeed \emph{the} nonclassical 
feature of quantum theory that implies (or is implied by) all its other ``weird'' features, in particular features responsible
for claimed quantum-over-classical advantages. If so, then we can claim to have found a characterization of nonclassicality that
clearly separates quantum theory from attempts at its classical simulations, particularly in scenarios not of the Bell type. If
not, then the challenge will be to identify what noncontextual part of quantum theory is still nonclassical, where this 
nonclassicality cannot be attributed to a restriction on how much any agent can know about the physical state of a system
(that is, an ``epistemic restriction''). We already know that a lot of the features of quantum theory can be simulated in 
classical theories imposing an epistemic restriction\cite{quasiquant}. These theories, however, are local and noncontextual, 
so they cannot reproduce \emph{all} of quantum theory: they are only ``weakly nonclassical''.

In conclusion, let me again emphasize that the work presented in this thesis does not fall in the same category as much of the work in the 
existing literature on contextuality in the traditional
Kochen-Specker paradigm (such as \cite{CSW, AFLS}). In going beyond the Kochen-Specker theorem, this thesis 
contributes towards the project of developing a strictly operational understanding of contextuality.


\begin{thebibliography}{}
\bibitem{SKS67} S.\ Kochen and E.\ P.\ Specker, The Problem of
Hidden Variables in Quantum Mechanics, J. Math. Mech. {\bf 17},
59 (1967). Available at Indiana University Mathematics Journal 17, 59 (1968), \url{http://dx.doi.org/10.1512/iumj.1968.17.17004}.

\bibitem{SBell64} J.\ S.\ Bell, On the Einstein-Podolsky-Rosen Paradox,
Physics \textbf{1}, 195 (1964). Reprinted in Ref.~\cite{SBellbook}, chap.~2.
\bibitem{SBell66} J.\ S.\ Bell, On the problem of hidden variables in
quantum mechanics, Rev. Mod. Phys. {\bf 38}, 447 (1966);
Reprinted in Ref.~\cite{SBellbook}, chap.~1.
\bibitem{SBellbook} J.\ S.\ Bell, {\it Speakable and unspeakable in quantum
mechanics} (Cambridge University Press, New York, 1987).
\bibitem{SBell76} J.~S. Bell, Epistemological Lett. 9 (1976) (Reproduced in Bell M, Gottfried K and Veltman M (ed) 2001 John S Bell on the Foundations of Quantum Mechanics (Singapore: World Scientific)).
\bibitem{SBellreview} N.~Brunner, D.~Cavalcanti, S.~Pironio, V.~Scarani, and S.~Wehner, Bell nonlocality, Rev. Mod. Phys., {\bf 86}, 2, 419--478 (2014). arXiv:1303.2849 [quant-ph].
\bibitem{SMKC} D. A. Meyer, Finite Precision Measurement Nullifies the
Kochen-Specker Theorem, Phys. Rev. Lett. 83, 3751 (1999), arXiv:quant-ph/9905080;
A. Kent, Noncontextual Hidden Variables and Physical Measurements, Phys. Rev.
Lett. 83, 3755 (1999), arXiv:quant-ph/9906006; R. Clifton and A. Kent, Simulating quantum mechanics by
non-contextual hidden variables,
Proc. R. Soc. Lond. A: 2000 456 2101-2114 (2000), arXiv:quant-ph/9908031; J. Barrett and A. Kent,
Non-contextuality, finite precision measurement and the Kochen-Specker theorem,
Stud. Hist. Philos. Mod. Phys. 35, 151 (2004), arXiv:quant-ph/0309017.
\bibitem{Smagic} M. Howard, J. Wallman, V. Veitch, and J. Emerson, Contextuality supplies the `magic' for quantum computation 
Nature 510, 351--355 (2014), arXiv:1401.4174 [quant-ph]; R. Raussendorf, Contextuality in measurement-based quantum computation, Phys. Rev. A 88, 022322 (2013), arXiv:0907.5449 [quant-ph].
\bibitem{SgenNC} R.\ W.\ Spekkens, Contextuality for preparations,
transformations, and unsharp measurements,
Phys. Rev. A {\bf 71}, 052108 (2005), arXiv:quant-ph/0406166.
\bibitem{SLeibniz} See Section 1 of P.~Forrest, The Identity of Indiscernibles, The Stanford Encyclopedia of Philosophy (2012).
\bibitem{SSpe60} E. P. Specker, Die Logik Nicht Gleichzeitig Entscheidbarer Aussagen, Dialectica {\bf 14}, 239-246
    (1960); English translation: M. P. Seevinck, E.~Specker: ``The logic of non-simultaneously decidable propositions" (1960), \href{http://arxiv.org/abs/1103.4537v3}{arXiv:1103.4537} (2011).
\bibitem{SLSW} Y.C. Liang, and R.W. Spekkens, and H.M. Wiseman, Specker's parable of the overprotective seer: A road to contextuality, nonlocality and complementarity,
Phys. Rep. {\bf 506}, 1 (2011), arXiv:1010.1273 [quant-ph].
\bibitem{SKG} R. Kunjwal and S. Ghosh, Minimal state-dependent proof of measurement contextuality for a qubit, Phys. Rev. A {\bf 89}, 042118 (2014), arXiv:1305.7009 [quant-ph].
\bibitem{SRK} R.\ Kunjwal, A note on the joint measurability of POVMs and its implications for contextuality, arXiv:1403.0470 (2014).
\bibitem{SKHF} R.~Kunjwal, C,~Heunen, and T.~Fritz, Quantum realization of arbitrary joint measurability structures,
Phys. Rev. A 89, 052126 (2014), arXiv:1311.5948 [quant-ph].
\bibitem{SFine} A.~Fine, Hidden Variables, Joint Probability, and the Bell Inequalities,
Phys. Rev. Lett. 48, 291 (1982).
\bibitem{Sfinegen} R.~Kunjwal, Fine's theorem, noncontextuality, and correlations in Specker's scenario,
Phys. Rev. A 91, 022108 (2015), arXiv:1410.7760 [quant-ph].
\bibitem{SKunjSpek}R.~Kunjwal and R.~W.~Spekkens, From the Kochen-Specker Theorem to Noncontextuality Inequalities without Assuming Determinism,
Phys. Rev. Lett. 115, 110403 (2015), arXiv:1506.04150 [quant-ph].
\bibitem{Sexptl} M.~D.~Mazurek, M.~F.~Pusey, R.~Kunjwal, K.~J.~Resch, R.~W.~Spekkens,
An experimental test of noncontextuality without unphysical idealizations, Nat.~Commun.~7:11780 doi: 10.1038/ncomms11780 (2016),
arXiv:1505.06244 [quant-ph].
\bibitem{SCab1} A.\ Cabello, J.\ Estebaranz, and G.\ Garcia-Alcaine,
Bell-Kochen-Specker theorem: A proof with 18 vectors, Physics Letters A {\bf
212}, 183  (1996), arXiv:quant-ph/9706009.
\bibitem{SKCBS} A.~A.~Klyachko, M.~A.~Can,
    S.~Binicio\v{g}lu, and A.~S.~Shumovsky, Simple Test for Hidden Variables in Spin-1 Systems, Phys. Rev. Lett. { \bf 101}, 020403 (2008), arXiv:0706.0126 [quant-ph].
\bibitem{SRKtalk} R.~Kunjwal, Noncontextuality without determinism and admissible (in)compatibility relations: revisiting Specker's parable, Perimeter Quantum Foundations seminar, PIRSA:14010102 (2014),
\url{http://pirsa.org/14010102/}.
\end{thebibliography}

\begin{thebibliography}{}

\bibitem{harriganspekkens} N.~Harrigan and R.~W.~Spekkens, {\em Einstein,
Incompleteness, and the Epistemic View of Quantum States}, \href{http://dx.doi.org/10.1007/s10701-009-9347-0}{Found. Phys. {\bf 40},
125 (2010)}. arXiv version: \href{http://arxiv.org/abs/0706.2661}{arXiv:0706.2661 [quant-ph]}.

\bibitem{Bellreview} N.~Brunner, D.~Cavalcanti, S.~Pironio, V.~Scarani, and S.~Wehner, {\em Bell nonlocality}, \href{http://dx.doi.org/10.1103/RevModPhys.86.419}{Rev. Mod. Phys., {\bf 86}, 2, 419 - 478 (2014)}.
arXiv version: \href{http://arxiv.org/abs/1303.2849}{arXiv:1303.2849 [quant-ph]}.

\bibitem{robsclassification} R.~W.~Spekkens, {\em Nonclassicality as the failure of noncontextuality}, 
\href{http://pirsa.org/15050081/}{PIRSA:15050081 (2015)}.

\bibitem{magic} M.~Howard, J.~Wallman, V.~Veitch, and J.~Emerson, {\em Contextuality supplies the `magic' for quantum computation},
\href{http://dx.doi.org/10.1038/nature13460}{Nature 510, 351-355 (2014)}. arXiv version: \href{http://arxiv.org/abs/1401.4174}{arXiv:1401.4174 [quant-ph]}.

\bibitem{magic2} R.~Raussendorf, {\em Contextuality in measurement-based quantum computation}, 
\href{http://dx.doi.org/10.1103/PhysRevA.88.022322}{Phys. Rev. A 88, 022322 (2013)}. arXiv version: \href{http://arxiv.org/abs/0907.5449}{arXiv:0907.5449 [quant-ph]}.

\bibitem{genNC} R.~W.~Spekkens, {\em Contextuality for preparations,
transformations, and unsharp measurements}, \href{http://dx.doi.org/10.1103/PhysRevA.71.052108}{Phys. Rev. A {\bf 71}, 052108 (2005)}.
arXiv version: \href{http://arxiv.org/abs/quant-ph/0406166}{arXiv:quant-ph/0406166}.

\bibitem{hardyexcess} L.~Hardy, {\em Quantum ontological excess baggage}, \href{http://dx.doi.org/10.1016/j.shpsb.2003.12.001}{Stud. Hist. Philos. Mod. Phys. 35, 267-276 (2004)}.

\bibitem{mattnumopus} M.~S.~Leifer, {\em Is the Quantum State Real? An Extended Review of $\psi$-ontology Theorems},
\href{http://dx.doi.org/10.12743/quanta.v3i1.22}{Quanta 2014; 3:67--155}. arXiv version: \href{http://arxiv.org/abs/1409.1570}{arXiv:1409.1570 [quant-ph]}.

\bibitem{Leibniz} See, for example, Section 1 of P.~Forrest, {\em The Identity of Indiscernibles}, \href{http://plato.stanford.edu/archives/win2012/entries/identity-indiscernible/}{The Stanford Encyclopedia of Philosophy (2012)}.

\bibitem{KS67} S.\ Kochen and E.\ P.\ Specker, {\em The Problem of
Hidden Variables in Quantum Mechanics}, \href{http://dx.doi.org/10.1512/iumj.1968.17.17004}{Indiana University Mathematics Journal,
{\bf 17}, 59 (1967)}.

\bibitem{barrettkent} J. Barrett and A. Kent,
{\em Non-contextuality, finite precision measurement and the Kochen-Specker theorem},
\href{http://dx.doi.org/10.1016/j.shpsb.2003.10.003}{Stud. Hist. Philos. Mod. Phys. 35, 151 (2004)}. arXiv version: \href{http://arxiv.org/abs/quant-ph/0309017}{arXiv:quant-ph/0309017}.

\bibitem{Gleason} A.M. Gleason, {\em Measures on the Closed Subspaces of a Hilbert Space}, 
\href{https://dx.doi.org/10.1512\%2Fiumj.1957.6.56050}{Indiana Univ. Math. J. 6 No. 4, 885-893 (1957).}

\bibitem{Bell66} J.\ S.\ Bell, {\em On the problem of hidden variables in
quantum mechanics}, \href{10.1103/RevModPhys.38.447}{Rev. Mod. Phys. {\bf 38}, 447 (1966)};
Reprinted in Ref.~\cite{Bellbook}, chap.~1.

\bibitem{Cab1} A.\ Cabello, J.\ Estebaranz, and G.\ Garcia-Alcaine,
{\em Bell-Kochen-Specker theorem: A proof with 18 vectors}, \href{http://dx.doi.org/10.1016/0375-9601(96)00134-X}{Physics Letters A {\bf
212}, 183 (1996)}. arXiv version: \href{http://arxiv.org/abs/quant-ph/9706009}{arXiv:quant-ph/9706009}.

\bibitem{leifermaroney} M. S. Leifer and O. J. E. Maroney, {\em Maximally Epistemic Interpretations of the Quantum State and Contextuality},
\href{http://dx.doi.org/10.1103/PhysRevLett.110.120401}{Phys. Rev. Lett. 110, 120401 (2013)}. arXiv version: \href{http://arxiv.org/abs/1208.5132}{arXiv:1208.5132 [quant-ph]}.

\bibitem{EPR} A. Einstein, B. Podolsky, and N. Rosen, {\em Can Quantum-Mechanical
Description of Physical Reality Be Considered Complete?}, \href{http://dx.doi.org/10.1103/PhysRev.47.777}{Phys. Rev. 47, 777
(1935)}.

\bibitem{Bell64} J.\ S.\ Bell, {\em On the Einstein-Podolsky-Rosen Paradox},
Physics \textbf{1}, 195 (1964). Reprinted in Ref.~\cite{Bellbook}, chap.~2.

\bibitem{chsh} J. F. Clauser, M. A. Horne, A. Shimony, and R. A. Holt, {\em Proposed
Experiment to Test Local Hidden-Variable Theories}, \href{http://dx.doi.org/10.1103/PhysRevLett.23.880}{Phys. Rev. Lett. 23, 880
(1969)}.

\bibitem{woodspekkens} C.J. Wood and R.W. Spekkens, {\em The lesson of causal discovery algorithms for quantum correlations:
causal explanations of Bell-inequality violations require fine-tuning}, \href{http://dx.doi.org/10.1088/1367-2630/17/3/033002}{New J. Phys. 17, 033002 (2015)}.
arXiv version: \href{http://arxiv.org/abs/1208.4119}{arXiv:1208.4119 [quant-ph]}.

\bibitem{loopholefree}B.~Hensen et al., {\em Loophole-free Bell Inequality Violation Using Electron Spins Separated by 1.3 Kilometres}, 
\href{http://dx.doi.org/10.1038/nature15759}{Nature 526, 682 (2015)}. arXiv version: \href{https://arxiv.org/abs/1508.05949}{arXiv:1508.05949 [quant-ph]}.

\bibitem{loopholefree2}M.~Giustina et al., {\em Significant-Loophole-Free Test of Bell's Theorem with Entangled Photons}, 
\href{http://dx.doi.org/10.1103/PhysRevLett.115.250401}{Phys. Rev. Lett. 115, 250401 (2015)}. arXiv version: \href{http://arxiv.org/abs/1511.03190}{arXiv:1511.03190 [quant-ph]}.

\bibitem{loopholefree3}L.~K.~Shalm et al., {\em Strong Loophole-Free Test of Local Realism}, 
\href{http://dx.doi.org/10.1103/PhysRevLett.115.250402}{Phys. Rev. Lett. 115, 250402 (2015)}. arXiv version: \href{http://arxiv.org/abs/1511.03189}{arXiv:1511.03189 [quant-ph]}.

\bibitem{Bell76} J.~S. Bell, {\em The Theory of Local Beables}, Epistemological Lett. 9 (1976) 
(Reproduced in Bell M, Gottfried K and Veltman M (ed) 2001 John S Bell on the Foundations of Quantum Mechanics (Singapore: World Scientific)).
Available online at \url{http://cds.cern.ch/record/980036/files/197508125.pdf}.

\bibitem{CF} R. Chaves and T. Fritz, {\em Entropic approach to local realism and noncontextuality}, 
\href{http://dx.doi.org/10.1103/PhysRevA.85.032113}{Phys. Rev. A {\bf 85}, 032113 (2012)}. arXiv version: \href{http://arxiv.org/abs/1201.3340}{arXiv:1201.3340 [quant-ph]}.

\bibitem{CSW} A.\ Cabello, S.\ Severini, and A.\ Winter, {\em Graph-Theoretic Approach to Quantum Correlations},
\href{10.1103/PhysRevLett.112.040401}{Phys. Rev. Lett. {\bf 112}, 040401 (2014)}. arXiv version: \href{http://arxiv.org/abs/1401.7081}{arXiv:1401.7081 [quant-ph]}.

\bibitem{AFLS} A.\ Acin, T.\ Fritz, A.\ Leverrier, and A.\ B.\ Sainz, {\em A Combinatorial Approach to Nonlocality and Contextuality},
\href{http://dx.doi.org/10.1007/s00220-014-2260-1}{Comm. Math. Phys. 334(2), 533-628 (2015)}. arXiv version: \href{http://arxiv.org/abs/1212.4084}{arXiv:1212.4084 [quant-ph]}.

\bibitem{MKC} D. A. Meyer, {\em Finite Precision Measurement Nullifies the
Kochen-Specker Theorem}, \href{http://dx.doi.org/10.1103/PhysRevLett.83.3751}{Phys. Rev. Lett. 83, 3751 (1999)}, \href{http://arxiv.org/abs/quant-ph/9905080}{arXiv:quant-ph/9905080};
A. Kent, {\em Noncontextual Hidden Variables and Physical Measurements}, \href{http://dx.doi.org/10.1103/PhysRevLett.83.3755}{Phys. Rev.
Lett. 83, 3755 (1999)}, \href{http://arxiv.org/abs/quant-ph/9906006}{arXiv:quant-ph/9906006}; R. Clifton and A. Kent, {\em Simulating quantum mechanics by
non-contextual hidden variables}, \href{http://dx.doi.org/10.1098/rspa.2000.0604}{Proc. R. Soc. Lond. A: 2000 456 2101-2114 (2000)}, \href{http://arxiv.org/abs/quant-ph/9908031}{arXiv:quant-ph/9908031}; 
J. Barrett and A. Kent, {\em Non-contextuality, finite precision measurement and the Kochen-Specker theorem},
\href{http://dx.doi.org/10.1016/j.shpsb.2003.10.003}{Stud. Hist. Philos. Mod. Phys. 35, 151 (2004)}, \href{http://arxiv.org/abs/quant-ph/0309017}{arXiv:quant-ph/0309017}.

\bibitem{Spe60} E. P. Specker, {\em Die Logik Nicht Gleichzeitig Entscheidbarer Aussagen}, \href{http://dx.doi.org/10.1111/j.1746-8361.1960.tb00422.x}{Dialectica {\bf 14}, 239-246 (1960)};
English translation: M. P. Seevinck, {\em E. Specker: "The logic of non-simultaneously decidable propositions" (1960)},
\href{http://arxiv.org/abs/1103.4537v3}{arXiv:1103.4537} (2011).
    
\bibitem{LSW} Y.C. Liang, and R.W. Spekkens, and H.M. Wiseman, 
{\em Specker's parable of the overprotective seer: A road to contextuality, nonlocality and complementarity},
\href{http://dx.doi.org/10.1016/j.physrep.2011.05.001}{Phys. Rep. {\bf 506}, 1 (2011)}. arXiv version: \href{http://arxiv.org/abs/1010.1273}{arXiv:1010.1273 [quant-ph]}.

\bibitem{KG} R. Kunjwal and S. Ghosh, {\em Minimal state-dependent proof of measurement contextuality for a qubit},
\href{10.1103/PhysRevA.89.042118}{Phys. Rev. A {\bf 89}, 042118 (2014)}. arXiv version: \href{http://arxiv.org/abs/1305.7009}{arXiv:1305.7009 [quant-ph]}.

\bibitem{Clifton} R. Clifton, {\em Getting contextual and nonlocal elements-of-reality the easy way},
\href{http://dx.doi.org/10.1119/1.17239}{Am. J. Phys. \textbf{61}, 443 (1993)}.

\bibitem{peres} A.~Peres, {\em Two simple proofs of the Kochen-Specker theorem}, \href{http://dx.doi.org/10.1119/1.17239}{J. Phys. A: Math. Gen. {\bf 24} L175 (1991)}.

\bibitem{Mermin2theorems} N.~D.~Mermin, {\em Hidden variables and the two theorems of John Bell}, 
\href{http://dx.doi.org/10.1103/RevModPhys.65.803}{Rev. Mod. Phys., {\bf 65}, 803 (1993)}.

\bibitem{Cab2} A.~Cabello and G.~Garc\'{\i}a-Alcaine, {\em Bell-Kochen-Specker theorem for any finite dimension $n\geq3$}, \href{http://dx.doi.org/10.1088/0305-4470/29/5/016}{J.~Phys. A {\bf 29}, 1025 (1996)}.

\bibitem{KCBS} A.~A.~Klyachko, M.~A.~Can,
    S.~Binicio\v{g}lu, and A.~S.~Shumovsky, {\em Simple Test for Hidden Variables in Spin-1 Systems}, 
    \href{http://dx.doi.org/10.1103/PhysRevLett.101.020403}{Phys. Rev. Lett. { \bf 101}, 020403 (2008)}. arXiv version: \href{http://arxiv.org/abs/0706.0126}{arXiv:0706.0126 [quant-ph]}.

\bibitem{KRK} P.~Kurzy\'nski, R.~Ramanathan, and D.~Kaszlikowski, {\em Entropic Test of Quantum Contextuality}, 
\href{http://dx.doi.org/10.1103/PhysRevLett.109.020404}{Phys. Rev. Lett. {\bf 109}, 020404 (2012)}. arXiv version: \href{http://arxiv.org/abs/1201.2865}{arXiv:1201.2865 [quant-ph]}.

\bibitem{Oh} S.~Yu, and C.~H.~Oh, {\em State-Independent Proof of Kochen-Specker Theorem with 13 Rays}, 
\href{http://dx.doi.org/10.1103/PhysRevLett.108.030402}{Phys. Rev. Lett. {\bf 108}, 030402 (2012)}. arXiv version: \href{http://arxiv.org/abs/1109.4396}{arXiv:1109.4396 [quant-ph]}.

\bibitem{cabmin}  A.~Cabello, {\em State-independent quantum contextuality and maximum nonlocality}, 
\href{http://arxiv.org/abs/1112.5149}{arXiv:1112.5149 [quant-ph]}.

\bibitem{Cab3} A.~Cabello, {\em Kochen-Specker Theorem for a Single Qubit using Positive Operator-Valued Measures}, 
\href{http://dx.doi.org/10.1103/PhysRevLett.90.190401}{Phys. Rev. Lett. {\bf 90}, 190401 (2003)}. arXiv version: \href{http://arxiv.org/abs/quant-ph/0210082}{arXiv:quant-ph/0210082}.

\bibitem{busch} P.~Busch, {\em Quantum States and Generalized Observables: A Simple Proof of Gleason's Theorem}, 
\href{http://dx.doi.org/10.1103/PhysRevLett.91.120403}{Phys. Rev. Lett. {\bf 91}, 120403 (2003)}. arXiv version: \href{http://arxiv.org/abs/quant-ph/9909073}{arXiv:quant-ph/9909073}.

\bibitem{caves} C.~M.~Caves, C.~A.~Fuchs, K.~K.~Manne, and J.~M.~Renes, {\em Gleason-Type Derivations of the Quantum Probability Rule for Generalized Measurements},
\href{http://dx.doi.org/10.1023/B:FOOP.0000019581.00318.a5}{Found. Phys. {\bf 34}, 193 (2004)}.

\bibitem{odum} R.~W.~Spekkens, {\em The Status of Determinism in Proofs of the Impossibility of a Noncontextual Model of Quantum Theory},
\href{http://dx.doi.org/10.1007/s10701-014-9833-x}{Found. Phys. {\bf 44}, 11, 1125 (2014)}. arXiv version: \href{http://arxiv.org/abs/1312.3667}{arXiv:1312.3667 [quant-ph]}.

\bibitem{YuOh} S.~Yu, N.~L.~Liu, L.~Li, and C.~H.~Oh, {\em Joint measurement of two unsharp observables of a qubit},
\href{http://dx.doi.org/10.1103/PhysRevA.81.062116}{Phys. Rev. A 81, 062116 (2010)}. arXiv version: \href{http://arxiv.org/abs/0805.1538}{arXiv:0805.1538 [quant-ph]}.

\bibitem{heinosaari} T.~Heinosaari, D.~Reitzner, and P.~Stano, {\em Notes on Joint Measurability of Quantum Observables}, 
\href{http://dx.doi.org/10.1007/s10701-008-9256-7}{Found. Phys. {\bf 38}, 1133 (2008)}. arXiv version: \href{http://arxiv.org/abs/0811.0783}{arXiv:0811.0783 [quant-ph]}.

\bibitem{FYuOh} S.~Yu, C.~H.~Oh, {\em Quantum contextuality and joint measurement of three observables of a qubit}, \href{http://arxiv.org/abs/1312.6470}{arXiv:1312.6470v1 [quant-ph]}.

\bibitem{RK} R.~Kunjwal, {\em A note on the joint measurability of POVMs and its implications for contextuality}, \href{http://arxiv.org/abs/1403.0470}{arXiv:1403.0470 (2014)}.

\bibitem{KHF} R.~Kunjwal, C,~Heunen, and T.~Fritz, {\em Quantum realization of arbitrary joint measurability structures},
\href{http://dx.doi.org/10.1103/PhysRevA.89.052126}{Phys. Rev. A 89, 052126 (2014)}. arXiv version: \href{https://arxiv.org/abs/1311.5948}{arXiv:1311.5948 [quant-ph]}.

\bibitem{HRF} C.~Heunen, T.~Fritz, and M.~L.~Reyes,{\em Quantum theory realizes all joint measurability graphs},
\href{http://dx.doi.org/10.1103/PhysRevA.89.032121}{Phys. Rev. A {\bf 89}, 032121 (2014)}. arXiv version: \href{https://arxiv.org/abs/1310.3698}{arXiv:1310.3698 [quant-ph]}.

\bibitem{Kraus} K.~Kraus, \textit{States, Effects, and Operations}, Springer, 1983.

\bibitem{SRH} P.~Stano, D.~Reitzner, and T.~Heinosaari, {\em Coexistence of qubit effects},
\href{http://dx.doi.org/10.1103/PhysRevA.78.012315}{Phys. Rev. A 78, 012315 (2008)}. arXiv version: \href{http://arxiv.org/abs/0802.4248}{arXiv:0802.4248 [quant-ph]}.

\bibitem{BS} P.~Busch and H.~J.~Schmidt, {\em } ,\href{http://dx.doi.org/10.1007/s11128-009-0109-x}{Quantum Information Processing 9, 2, 143-169 (2010)}. arXiv version: \href{http://arxiv.org/abs/0802.4167}{arXiv:0802.4167 [quant-ph]}.

\bibitem{andersson} E.~Andersson, S.~M.~Barnett, and A.~Aspect, {\em Joint measurements of spin, operational locality, and uncertainty},
\href{http://dx.doi.org/10.1103/PhysRevA.72.042104}{Phys. Rev. A 72, 042104 (2005)}. arXiv version:
\href{http://arxiv.org/abs/quant-ph/0509152}{arXiv:quant-ph/0509152}.

\bibitem{wolfetal} M.~M.~Wolf, D.~Perez-Garcia, and C.~Fernandez, {\em Measurements Incompatible in Quantum Theory Cannot Be Measured Jointly in Any Other No-Signaling Theory}, \href{http://dx.doi.org/10.1103/PhysRevLett.103.230402}{Phys. Rev. Lett. 103, 230402 (2009)}.
arXiv version: \href{http://arxiv.org/abs/0905.2998}{arXiv:0905.2998 [quant-ph]}.

\bibitem{AB} S.~Abramsky and A.~Brandenburger, {\em The sheaf-theoretic structure of non-locality and contextuality}, \href{http://dx.doi.org/10.1088/1367-2630/13/11/113036}{New J. Phys. 13, 113036 (2011)}. arXiv version: \href{http://arxiv.org/abs/1102.0264}{arXiv:1102.0264 [quant-ph]}.

\bibitem{Lounesto} P.~Lounesto, \textit{Clifford Algebras and Spinors}, London Mathematical Society Lecture Note Series (Book 286).

\bibitem{Fine} A.~Fine, {\em Hidden Variables, Joint Probability, and the Bell Inequalities}, 
\href{http://dx.doi.org/10.1103/PhysRevLett.48.291}{Phys. Rev. Lett. {\bf 48}, 291 (1982)}.

\bibitem{Fine2} A.~Fine, {\em Joint distributions, quantum correlations, and commuting observables}, 
\href{http://dx.doi.org/10.1063/1.525514}{J. Math. Phys. {\bf 23}, 1306 (1982)}.

\bibitem{finegen} R.~Kunjwal, {\em Fine's theorem, noncontextuality, and correlations in Specker's scenario},
\href{http://dx.doi.org/10.1103/PhysRevA.91.022108}{Phys. Rev. A 91, 022108 (2015)}. arXiv version: \href{http://arxiv.org/abs/1410.7760}{arXiv:1410.7760 [quant-ph]}.

\bibitem{Wiseman} H.~M. Wiseman, {\em The Two Bell's theorems of John Bell}, 
\href{http://dx.doi.org/10.1088/1751-8113/47/42/424001}{J. Phys. A: Math. Theor. 47 424001 (2014)}. arXiv version: \href{https://arxiv.org/abs/1402.0351}{arXiv:1402.0351 [quant-ph]}.

\bibitem{nsqkd} J.~Barrett, L.~Hardy, A.~Kent, {\em No Signaling and Quantum Key Distribution}, 
\href{http://dx.doi.org/10.1103/PhysRevLett.95.010503}{Phys. Rev. Lett. 95, 010503 (2005)}. arXiv version: \href{http://arxiv.org/abs/quant-ph/0405101}{arXiv:quant-ph/0405101}.

\bibitem{cabelloncycle} M.~Ara\'ujo, M.~T.~Quintino, C.~Budroni, M.~T.~Cunha, and A.~Cabello,
{\em All noncontextuality inequalities for the n-cycle scenario}, \href{http://dx.doi.org/10.1103/PhysRevA.88.022118}{Phys. Rev. A 88, 022118 (2013)}.
arXiv version: \href{http://arxiv.org/abs/1206.3212}{arXiv:1206.3212 [quant-ph]}.

\bibitem{KunjSpek}R.~Kunjwal and R.~W.~Spekkens, {\em From the Kochen-Specker Theorem to Noncontextuality Inequalities without Assuming Determinism},
\href{http://dx.doi.org/10.1103/PhysRevLett.115.110403}{Phys. Rev. Lett. 115, 110403 (2015)}. arXiv version: \href{http://arxiv.org/abs/1506.04150}{arXiv:1506.04150 [quant-ph]}.

\bibitem{exptl} M.~D.~Mazurek, M.~F.~Pusey, R.~Kunjwal, K.~J.~Resch, R.~W.~Spekkens,
{\em An experimental test of noncontextuality without unphysical idealizations}, 
\href{http://dx.doi.org/10.1038/ncomms11780}{Nat.~Commun.~7:11780 doi: 10.1038/ncomms11780 (2016)}.
arXiv version: \href{http://arxiv.org/abs/1505.06244}{arXiv:1505.06244 [quant-ph]}.

\bibitem{Bellbook} J.~S.~Bell, {\it Speakable and unspeakable in quantum
mechanics} (Cambridge University Press, New York, 1987).

\bibitem{Norsen} T. Norsen, {\em Are there really two different Bell's theorems}, \href{http://arxiv.org/abs/1503.05017}{arXiv:1503.05017 [quant-ph]}.

\bibitem{merminquote} Indeed, in Ref.~\cite{Cabello98}, David Mermin is quoted
as having said: 
``the whole notion of an experimental test of KS misses the point,'' a view
that was held by many researchers at the time.

\bibitem{hardy5axioms} L.~Hardy, {\em Quantum theory from five reasonable axioms}, \href{http://arxiv.org/abs/quant-ph/0101012}{arXiv:quant-ph/0101012 (2001)}.

\bibitem{barrettgpt} J.~Barrett, {\em Information processing in generalized probabilistic theories}, 
\href{http://dx.doi.org/10.1103/PhysRevA.75.032304}{Phys. Rev. A 75, 032304 (2007)}. arXiv version: \href{http://arxiv.org/abs/quant-ph/0508211}{arXiv:quant-ph/0508211}.

\bibitem{Cabello98} A.~Cabello and G.~Garcia-Alcaine, {\em Proposed Experimental
Tests of the Bell-Kochen-Specker Theorem}, \href{http://dx.doi.org/10.1103/PhysRevLett.80.1797}{Phys. Rev. Lett. 80, 1797 (1998)}. 
arXiv version: \href{http://arxiv.org/abs/quant-ph/9709047}{arXiv:quant-ph/9709047}.

\bibitem{Cabelloexpt} A.~Cabello, {\em Experimentally testable state-independent
quantum contextuality}, \href{http://dx.doi.org/10.1103/PhysRevLett.101.210401}{Phys. Rev. Lett. {\bf 101}, 210401 (2008)}. arXiv version: \href{http://arxiv.org/abs/0808.2456}{arXiv:0808.2456 [quant-ph]}.

\bibitem{MattP} M.~F.~Pusey, {\em The robust noncontextuality inequalities in the simplest scenario}, 
\href{http://arxiv.org/abs/1506.04178}{arXiv:1506.04178 [quant-ph]}.

\bibitem{Xue} X.~Zhan, J.~Li, H.~Qin, Z.~Bian, Y.~Zhang, P.~Xue, {\em Violation of a generalized non-contextuality inequality with single photons},
\href{http://arxiv.org/abs/1505.06825}{arXiv:1505.06825 [quant-ph]}.

\bibitem{BCaves} S.~Braunstein and C.~M.~Caves, {\em Wringing out better Bell inequalities}, 
\href{http://dx.doi.org/10.1016/0920-5632(89)90441-6}{Ann. Phys. {\bf 202}, 22 (1990)}.

\bibitem{sagemath} SageMathCloud is available at \url{https://cloud.sagemath.com}

\bibitem{Spe08} R.~W.~Spekkens, {\em Negativity and Contextuality are Equivalent Notions of Nonclassicality}, 
\href{http://dx.doi.org/10.1103/PhysRevLett.101.020401}{Phys. Rev. Lett. 101, 020401 (2008)}. arXiv version: \href{http://arxiv.org/abs/0710.5549}{arXiv:0710.5549 [quant-ph]}.

\bibitem{krystek07} M.\ Krystek and M.\ Anton, {\em A weighted total least-squares algorithm for fitting a straight line},
\href{http://dx.doi.org/10.1088/0957-0233/18/11/025}{Meas. Sci. Technol.,
{\bf 18}, 3438 (2007)}.

\bibitem{numrec} W.\ H.\ Press, S.\ A.\ Teukolsky, W.\ T.\ Vetterling, and B.\ P.\ Flannery, {\em Numerical Recipes: The Art of Scientific Computing},
3rd ed.~(Cambridge University Press, New York, 2007).

\bibitem{HLP} J.~Henson, R.~Lal, M.~F.~Pusey, 
{\em Theory-independent limits on correlations from generalised Bayesian networks}, 
\href{http://dx.doi.org/10.1088/1367-2630/16/11/113043}{New J. Phys. 16, 113043 (2014)}. arXiv version: \href{http://arxiv.org/abs/1405.2572}{arXiv:1405.2572 [quant-ph]}.

\bibitem{robPIRSA} R.~W.~Spekkens, {\em Why I Am Not a Psi-ontologist}, \href{http://pirsa.org/12050021/}{PIRSA:12050021}.

\bibitem{EBC} M.~Horodecki, P.~W.~Shor, and M.~B.~Ruskai, {\em Entanglement breaking channels},
\href{http://dx.doi.org/10.1142/S0129055X03001709}{Reviews in Mathematical Physics, 15(06), 629-641 (2003)}, \href{http://arxiv.org/abs/quant-ph/0302031}{arXiv:quant-ph/0302031};
T.~Heinosaari, J.~Kiukas, D.~Reitzner, J.~Schultz, {\em Incompatibility breaking quantum channels},
\href{http://dx.doi.org/10.1088/1751-8113/48/43/435301}{J.~Phys.~A: Math. Theor. 48 (2015) 435301},
\href{http://arxiv.org/abs/1504.05768}{arXiv:1504.05768 [quant-ph]}; T.~Heinosaari, J.~Kiukas, and J.~Schultz, 
{\em Breaking Gaussian incompatibility on continuous variable quantum systems}, 
\href{http://dx.doi.org/10.1063/1.4928044}{J. Math. Phys. 56, 082202 (2015)}, 
\href{http://arxiv.org/abs/1505.00833}{arXiv:1505.00833 [quant-ph]}; 
R.~Pal and S.~Ghosh, {\em Non-locality breaking qubit channels}, \href{http://arxiv.org/abs/1306.3151}{arXiv:1306.3151 [quant-ph]};
J.~S.~Ivan, K.~K.~Sabapathy, and R.~Simon, {\em Nonclassicality breaking is the same as entanglement breaking for bosonic Gaussian channels},
\href{http://dx.doi.org/10.1103/PhysRevA.88.032302}{Phys. Rev. A 88, 032302 (2013)}, \href{http://arxiv.org/abs/1306.5536}{arXiv:1306.5536 [quant-ph]}.

\bibitem{Spek08}
R.~W.~Spekkens, {\em Negativity and Contextuality are Equivalent Notions of Nonclassicality}, 
\href{http://dx.doi.org/10.1103/PhysRevLett.101.020401}{Phys. Rev. Lett. 101, 020401 (2008)}, \href{http://arxiv.org/abs/0710.5549}{arXiv:0710.5549 [quant-ph]};
M.~F.~Pusey, {\em Anomalous Weak Values Are Proofs of Contextuality}, \href{http://dx.doi.org/10.1103/PhysRevLett.113.200401}{Phys. Rev. Lett. 113, 200401 (2014)},
arXiv version: \href{http://arxiv.org/abs/1409.1535}{arXiv:1409.1535 [quant-ph]}.

\bibitem{quasiquant} R.~W.~Spekkens, {\em Quasi-quantization: classical statistical theories with an epistemic restriction}, 
\href{https://arxiv.org/abs/1409.5041}{arXiv:1409.5041 [quant-ph] (2014)}.
\end{thebibliography}
\end{document}